\newcommand{\TFNP}{\ensuremath{\mathrm{TFNP}}\xspace}
\newcommand{\FIXP}{\ensuremath{\mathrm{FIXP}}\xspace}
\newcommand{\poly}{\ensuremath{\mathrm{poly}}}
\newcommand{\size}{\ensuremath{\mathrm{size}}}
\DeclareMathOperator{\supp}{supp}
\DeclarePairedDelimiter\ceil{\lceil}{\rceil}
\DeclarePairedDelimiter\floor{\lfloor}{\rfloor}
\newcommand{\NOT}{\ensuremath{{\rm \sf NOT}}}
\newcommand{\AND}{\ensuremath{{\rm \sf AND}}}
\newcommand{\OR}{\ensuremath{{\rm \sf OR}}}
\newcommand{\allzeros}{\ensuremath{\mathbf{0}}}
\newcommand{\transpose}{\ensuremath{\mathsf{T}}}
\newcommand{\eps}{\ensuremath{\varepsilon}}
\newcommand{\NN}{\ensuremath{\mathbb{N}}}
\newcommand{\RR}{\ensuremath{\mathbb{R}}}
\newcommand{\RRnn}{\ensuremath{\mathbb{R}_{\geq0}}}
\newcommand{\zo}{\ensuremath{\{0,1\}}}
\DeclareMathOperator*{\argmax}{arg\,max}
\DeclareMathOperator*{\argmin}{arg\,min}
\DeclareMathOperator{\bitextract}{bitExt}
\DeclareMathOperator{\bitmult}{bitMult}
\DeclareMathOperator{\bitval}{bitVal}
\newcommand{\Heaviside}{\ensuremath{\operatorname{H}}}
\newcommand{\calP}{\ensuremath{\mathcal{P}}}
\newcommand{\calC}{\ensuremath{\mathcal{C}}}
\newlist{conditions}{enumerate}{2}
\setlist[conditions]{label={\arabic*.}}
\crefname{conditionsi}{Condition}{Conditions}
\theoremstyle{definition}
\newtheorem{definition}{Definition}[section]
\newtheorem{remark}{Remark}[section]
\newtheorem{example}{Example}[section]
\theoremstyle{plain}
\newtheorem{theorem}{Theorem}[section]
\newtheorem{proposition}{Proposition}[section]
\newtheorem{lemma}{Lemma}[section]
\newtheorem{corollary}{Corollary}[section]
\newtheorem{claim}{Claim}[section]
     \def\eps{\textepsilon}
     \def\varepsilon{\textepsilon}
     \def\linear{PL}
     \def\Linear{PL}     
\newcommand{\fout}{{g^{\text{out}}_f}}
\newcommand{\fin}{{g^{\text{in}}_f}}
\newcommand{\linoptgate}{{linear-OPT-gate}\xspace}
\newcommand{\linoptgates}{{linear-OPT-gates}\xspace}
\newcommand{\circparam}{{gate input}\xspace}
\newcommand{\circparams}{{gate inputs}\xspace}
\newcommand{\Circparams}{{Gate Inputs}\xspace}
\newcommand{\fixpoptgate}{{OPT-gate for FIXP}\xspace}
\newcommand{\fixpoptgates}{{OPT-gates for FIXP}\xspace}
\newcommand{\linear}{PL\xspace}
\newcommand{\Linear}{PL\xspace}
\newcommand{\Sol}{\ensuremath{\operatorname{Sol}}}
\newcommand{\qgraph}{feasibility graph\xspace}
\newcommand{\pseudog}{\linear pseudo-circuit\xspace}
\newcommand{\pseudogs}{\linear pseudo-circuits\xspace}
\newcommand{\Pseudog}{\Linear pseudo-circuit\xspace}
\newcommand{\lbro}{PLBRO\xspace}
\newcommand{\seg}{tranche\xspace}
\newcommand{\segs}{tranches\xspace}
\newcommand{\us}{{\color{black!60!blue}\textbf{[Our Work]}}\xspace}
\title{PPAD-membership for Problems with Exact Rational Solutions: A General Approach via Convex Optimization}
\date{}
\author[1]{Aris Filos-Ratsikas}
\author[2]{Kristoffer Arnsfelt Hansen}
\author[2]{Kasper H{\o}gh}
\author[3]{Alexandros Hollender}
\affil[1]{University of Edinburgh, UK}
\affil[2]{Aarhus University, Denmark}
\affil[3]{University of Oxford, UK}
\begin{document}

\maketitle
\begin{abstract}
We introduce a general technique for proving membership of search problems with \emph{exact rational} solutions in PPAD, one of the most well-known classes containing total search problems with polynomial-time verifiable solutions. In particular, we construct a ``pseudogate'', coined the \emph{\linoptgate}, which can be used as a ``plug-and-play'' component in a piecewise-linear (\linear) arithmetic circuit, as an integral component of the ``Linear-FIXP'' equivalent definition of the class. The \linoptgate can solve several convex optimization programs, including quadratic programs, which often appear organically in the simplest existence proofs for these problems. This effectively transforms existence proofs to PPAD-membership proofs, and consequently establishes the existence of solutions described by rational numbers.

Using the \linoptgate, we are able to significantly simplify and generalize almost all known PPAD-membership proofs for finding exact solutions in the application domains of game theory, competitive markets, auto-bidding auctions, and fair division, as well as to obtain new PPAD-membership results for problems in these domains.
\end{abstract}

\section{Introduction}

Total search problems, i.e., search problems for which a solution is always guaranteed to exist, have been studied extensively over the better part of the last century, in the intersection of mathematics, economics and computer science. Famous examples of such problems are finding Nash equilibria in games \citep{Nash50}, competitive equilibria in markets \citep{arrow1954existence} and envy-free divisions of resources \citep{AMM:Stromquist1980}. While the classic works in mathematics and economics have been primarily concerned with establishing the existence as well as desirable properties of these solutions, the literature of computer science over the past 35 years has been instrumental in formulating and answering questions about the computational complexity of finding them.

More precisely, \citet{megiddo1991total} defined the class TFNP to include all total search problems for which a solution is verifiable in polynomial time. To capture the computational complexity of many problems including the aforementioned ones, several subclasses of TFNP were subsequently defined. Among those, one that has been extremely successful in this regard is the class PPAD of \citet{JCSS:Papadimitriou1994}, which was proven to characterize the complexity of computing Nash equilibria in games \citep{SICOMP:DaskalakisGP2009,chen2009settling}, as well as competitive equilibria for several types of markets \citep{ChenPY17-non-monotone-markets}, among many others. 

In reality, when making statements like the above, i.e., general statements of the form, ``finding a Nash equilibrium is in PPAD'', or similarly for a solution to some other total search problem, it is most often meant that what lies in the class is the problem of finding \emph{approximate} solutions. For strategic games for example, that would mean strategy profiles which are \emph{almost} Nash equilibria, up to some additive parameter $\varepsilon$. This is actually quite often necessary, as it has been shown that for many of these problems, there are cases where all of their solutions can only be described by irrational numbers, and hence we can not hope to compute them exactly on a computer.  

Still, there is a large number of important variants of these domains for which \emph{exact rational} solutions exist. For example, several strategic games always have equilibria in rational numbers, and so do certain markets for their competitive equilibria. There are also examples from fair division where rational partitions of the resources can be achieved. In all of those cases, PPAD-membership results for their approximate versions are unsatisfactory; we would like to place the \emph{exact} problems in PPAD instead.

Indeed, coming up with proofs of existence that also guarantee rationality of solutions has been a topic of interest in the area since the very early days, way before the introduction of the relevant computational complexity classes, e.g., see \citep{eaves1976finite,lemke1964equilibrium,lemke1965bimatrix,howson1972equilibria}. Driven by those classic results, a significant literature in computer science has attempted, and quite often has succeeded in placing the corresponding computational problems in PPAD, for several of the application domains mentioned above, including games \citep{EC:Sorensen12,hansen2018computational,SICOMP:KintaliPRST13,klimm2020complexity,meunier2013lemke}, markets \citep{vazirani2011market,SODA:GargV14,garg2017market,garg2018substitution}, as well as the more recent domain of auto-bidding auctions \citep{chen2021complexity}.

While these PPAD-membership proofs typically do follow one of a few common approaches, in essence they are rather domain-specific and require reconstructing a set of arguments again for each application at hand (see \Cref{sec:Other-Approaches} below for a detailed discussion). Instead, we would like to have \emph{one general technique for proving PPAD-membership of problems with exact solutions}, and ideally one that arises ``organically'' as the computational equivalent of the standard proofs of existence. To do this, a very promising avenue seems to be via a characterization of PPAD, coined \emph{Linear-FIXP}, due to \citet{etessami2010complexity}, which defines the class in terms of fixed points of problems represented by piecewise-linear arithmetic circuits. This is because a standard existence proof, e.g., via the Kakutani fixed point theorem \citep{kakutani1941generalization} or via Brouwer's fixed point theorem \citep{MA:Brouwer1911}, often obtains the solution as a fixed point of a set of local optimization problems, in which each agent or player is independently maximizing a piecewise utility/payoff function. If we could ``embed'' these optimization problems into a piecewise-linear circuit, that would essentially translate the existence proof into a PPAD-membership proof. This is crisply captured in the following quote from \citet{vazirani2011market}, in the context of proving PPAD-membership for competitive equilibria in certain markets: 

\begin{quote}
    \emph{``There are very few ways for showing membership in PPAD. A promising approach
for our case is to use the characterization of PPAD of Etessami and Yannakakis [2010]
as the class of exact fixed-point computation problems for piecewise-linear, polynomial
time computable Brouwer functions. $[\ldots]$ Unfortunately, we do not see how to do this $[\ldots]$ it is not clear how to transfer the piecewise-linearity of
the utility functions to the Brouwer function.'' \citep{vazirani2011market}.}
\end{quote}

\noindent Recently, \citet{SICOMP:Filos-RatsikasH2023} in fact developed a general technique along those lines: they designed an \emph{optimization gate}, which can be used as part of a circuit to substitute the aforementioned optimization problems and obtain membership results. Crucially however, their membership results are \emph{not} for the class PPAD, but rather for the class FIXP \citep{etessami2010complexity}, a superclass of Linear-FIXP in which the main computational device is a (general) arithmetic circuit, not a piecewise-linear one. These circuits are particularly powerful and can capture solutions with irrationalities. Using their ``\fixpoptgate'', \citet{SICOMP:Filos-RatsikasH2023} showed the FIXP-membership of several very general problems related to strategic games, markets and fair division. 

While FIXP is certainly a natural class, it has not enjoyed the same success as PPAD, even in the context of classifying problems with exact solutions. Besides, in the standard (Turing) model of computation, a FIXP-membership result can be interpreted as finding a point that is \emph{close} to a solution (e.g., in the max norm). This is often a stronger guarantee than an approximate solution as described earlier, but it it still very much only an approximation. Again, this is unsatisfactory for those problems with exact rational solutions that should be in PPAD.

Could we hope to use \citeauthor{SICOMP:Filos-RatsikasH2023}'s optimization gate to obtain PPAD-membership? This is actually practically impossible, for reasons which are deeply rooted in the definitions of the classes; we highlight those in \Cref{sec:linopt-vs-opt} below. In short, the power of general arithmetic circuits over piecewise-linear ones lies in their capability to multiply and divide input variables, and this is of vital importance in the design of the \fixpoptgate in \citep{SICOMP:Filos-RatsikasH2023}. What we really need is \emph{a new gate}, one which avoids such multiplications/divisions and hence can be used in a piecewise-linear arithmetic circuit. Designing such a gate poses significant technical challenges, which we highlight in \Cref{sec:linopt-vs-opt} and present in more detail in \Cref{sec:lin-opt-gate}. Additionally, clearly, the gate cannot capture the generality of applications that the \fixpoptgate does, as, as we said earlier, problems with irrational solutions cannot be in PPAD. It should however be general enough to capture any problem for which exact rational solutions are possible.  

This is the main technical contribution of our paper. We introduce the \linoptgate,\footnote{The term ``linear'' here refers to piecewise-linear functions, as in the class Linear-FIXP.} which can be used as a general purpose tool for proving PPAD-membership of problems with exact rational solutions. We demonstrate its strength and generality on a host of different applications in game theory, markets, auctions and fair division. Via its use, we are able to \emph{significantly simplify} or \emph{generalize} virtually all of the PPAD-membership proofs for problems with exact solutions in the literature, as well as to prove \emph{new} membership results for problems for which PPAD-membership was not known; we offer more details in the following subsection.

\subsection{A Powerful Tool for PPAD-membership: The \linoptgate}

We introduce the \linoptgate for proving membership of problems in PPAD. The \linoptgate can be used as a ``plug-and-play'' component in a \linear arithmetic circuit, i.e., similarly to any of the other gates $\{+,-,\max,\min, \times \zeta\}$ of the circuit (see \Cref{def:linear-arithmetic circuit}). The gate is guaranteed to work correctly at a fixed point of the function that the circuit encodes, which, for the purposes of proving PPAD-membership of a problem, is equivalent to a standard gate. 

The \linoptgate allows us to compute solutions to optimization programs of a certain form, like those shown in the left-hand side of \Cref{fig:linoptgate-intro}. In particular, these are optimization programs with a non-empty and bounded feasible domain given by a set of linear inequalities, and the subgradient of the convex objective function (in the variables $x$) is given by a \linear (piecewise-linear) arithmetic circuit. In particular, the \linoptgate can compute the solution to any linear program, but also to more general convex programs, e.g., those with quadratic objective functions. The inherent strength of the technique lies in the fact that these types of optimization programs arise naturally in several of the applications in game theory, competitive markets and fair division. Now, for the purpose of showing membership in PPAD, they may effectively be substituted by \linoptgates.

From the ability of the \linoptgate to solve optimization programs of the form $\mathcal{C}$ of \Cref{fig:linoptgate-intro}, we can also derive \emph{feasibility programs} with \emph{conditional constraints}, like the program $\mathcal{Q}$ on the right-hand side of \Cref{fig:linoptgate-intro}. These feasibility programs also often arise naturally in the context of existence proofs, and can be also thought of as being solved in a black-box manner by a gate, which is constructed using the \linoptgate. \medskip 

\noindent Our \linoptgate has a wealth of applications, which we discuss below.

\begin{figure}
\centering 
\fbox{
\centering
    \begin{minipage}{.45\textwidth}
\begin{center}\underline{Optimization Program $\mathcal{C}$}\end{center}
\begin{equation*}
\begin{split}
\min \quad &f(x;c) \\
\text{ s.t.} \quad & Ax \leq b\\
& x \in [-R,R]^n
\end{split}
\end{equation*}
    \end{minipage}%
    \hfill\vline\hfill
\begin{minipage}{.45\textwidth}
\vspace{-0.5cm}
\begin{center}\underline{Feasibility Program $\mathcal{Q}$}\end{center}
\begin{equation*}
\begin{split}
h_i(y) > 0 \implies a_i^\transpose x \leq b_i\\
x \in [-R,R]^n
\end{split}
\end{equation*}
\end{minipage}
}
\caption{The optimization programs and feasibility programs that can be solved by the \linoptgate.}
\label{fig:linoptgate-intro}
\end{figure}

\subsection{Applications of the \linoptgate}

We apply our \linoptgate to a plethora of different domains, and obtain PPAD-membership for finding solutions in several strategic games, competitive markets, auto-bidding auctions, as well as problems in fair division. We detail those applications in the corresponding sections below. Our results achieve the following three desired objectives simultaneously:
\begin{itemize}
    \item[-] Proofs of existence of solutions.
    \item[-] Proofs of rationality of solutions.
    \item[-] PPAD-membership of the corresponding problems.
\end{itemize}

\noindent For some of these domains, PPAD-membership results for the corresponding problems were known; still, the proofs to establish those were often rather involved. With the employment of our \linoptgate, they become \emph{conceptually and technically significantly simpler}. In essence, the \linoptgate allows us to turn a simple existence proof into a PPAD-membership result.
For some of our applications such simple existence proofs already existed, and are transformed to PPAD-membership proofs via the \linoptgate. For others, developing these simpler existence proofs is also part of our contribution; we provide more details in the sections below. The \linoptgate also allows us to \emph{straightforwardly} obtain generalizations of some of the known PPAD-membership results, to cases beyond what was known in the literature. Finally, we also obtain the PPAD-membership of some problems whose complexity had not been studied in the literature before.  

We summarize our results in \Cref{tab:results}, where we indicate which results were known in the literature before, which are generalizations, and which concern problems for which we did not know any results about their computational complexity.\medskip  

\noindent Before we proceed with the applications, we present the main techniques that have been used previously for proving PPAD-membership results, and highlight the main technical challenges of using those techniques as opposed to the ``plug-and-play'' nature of our \linoptgate.

\subsubsection{Main Previous Approaches}\label{sec:Other-Approaches}

\paragraph{Linear Complementarity Programs and Lemke's Algorithm.} The first main approach for establishing rationality of solutions and PPAD-membership is that of \emph{linear complementarity programs (LCPs)} \citep{cottle1968complementarity,cottle2009linear}. Given an $n \times n$ matrix $\mathbf{M}$ and a vector $\mathbf{q}$, an LCP seeks to find two vectors $\mathbf{y}$ and $\mathbf{v}$ satisfying:
\[
\mathbf{M}\cdot \mathbf{y} + \mathbf{v} = \mathbf{q}, \ \ \mathbf{y} \geq 0, \ \ \mathbf{v} \geq 0, \ \ \text{ and } \ \ \mathbf{y}^\transpose \cdot \mathbf{v} = 0
\]
The term ``complementarity'' stems from the fact that in a solution, we may have either $\mathbf{y}_i > 0$ or $\mathbf{v}_i > 0$, but not both. \citet{lemke1965bimatrix} designed an algorithm (based on the previously designed Lemke-Howson algorithm \citep{lemke1964equilibrium}) to solve LCPs via a series of \emph{complementary pivoting} steps, i.e., steps in which when a variable enters the basis, a complementary variable exits. Interestingly, the algorithm was designed in the context of computing Nash equilibria in bimatrix games, long before the associated computational complexity classes were defined. LCP-based formulations of equilibria and other fixed point problems have in fact been a subject of study in classic works (e.g, see \cite{eaves1976finite,howson1972equilibria}) as a means to obtain existence proofs that guarantee the rationality of solutions. PPAD membership can be obtained by pairing the algorithm with an appropriate local orientation of its complementarity paths \citep{todd1976orientation}. 

Quite importantly, \citeauthor{lemke1965bimatrix}'s algorithm terminates with either finding a solution to the LCP, or without finding a solution, in what is referred to as a \emph{secondary ray}. This feature of the algorithm is well-documented (e.g., see \citep{savani2006finding} for an excellent exposition) and is known as \emph{ray termination}. In terms of proving PPAD-membership, it seems almost inevitable that every PPAD-membership proof that uses this approach has to argue against ray termination. As \citet{SODA:GargV14} pointedly remark, in the context of a succession of papers on equilibrium computation in competitive markets:
\begin{quote}
\emph{``In the progression of these three works, the LCPs have become more involved and proving the lack of secondary rays has become increasingly harder.'' \citep{SODA:GargV14}}.
\end{quote}
This is not particular to markets either. For example, in \citeauthor{hansen2018computational}'s [\citeyear{hansen2018computational}] generalization of the results of \citet{EC:Sorensen12} from bimatrix to polymatrix games, those concerning $\varepsilon$-proper equilibria, a new LCP needs to be devised, together with a new argument against ray termination. Additionally, there are often significant challenges in even appropriately formulating the problems in question as LCPs. In some cases, the naive formulations may lead to inefficient representations, e.g., see \citep{EC:Sorensen12}. In other cases, all known formulations lead to \emph{nonstandard} LCPs, which cannot be handled by the ``vanilla'' version of Lemke's algorithm, and require variants of the algorithm to be devised, e.g., see \citep{garg2018substitution,meunier2013lemke}. Finally in some cases, it is not known if the derived LCPs can be solved via any variant of Lemke's algorithm, thus leading to the development of entirely new pivoting algorithms \citep{klimm2020complexity}. These characteristics of the LCP approach make it somewhat insufficient as a general purpose PPAD-membership technique. 

One advantage of LCP-based approaches is that they have been shown to perform well in practice, e.g., see \citep{garg2018substitution} and references therein. However, for the purpose of proving PPAD-membership, we do not see any general advantage of the LCP method over our \linoptgate. 

\paragraph{Approximation and Rounding.} The second general technique that has been used in several applications to prove the PPAD-membership of exact solutions is that of \emph{approximation and rounding}. This generally consists of the two following main steps:
\begin{itemize}
    \item[-] consider an \emph{approximation} or a \emph{relaxation} of the solution (e.g., $\varepsilon$-approximate equilibria) and prove that the approximate version is in PPAD, and
    \item[-] devise a rounding procedure to transform approximate solutions to exact solutions, while maintaining membership in the class. 
\end{itemize}
This rather indirect approach certainly suffers in terms of elegance. More importantly however, it is very much domain-specific. First, showing the PPAD-membership for the approximate version typically still requires a non-trivial proof, often even a rather involved one, e.g., via some reduction to one of the well-known problems in PPAD, like \textsc{End-of-Line} (see \Cref{def:end-of-line}) or the computational version of Sperner's lemma \citep{sperner1928neuer}. Also, the rounding procedure itself may be rather complicated, and of an ad hoc nature. For certain applications, there is a general linear programming-based technique developed by \citet{etessami2010complexity} to transform $\varepsilon$-approximate solutions to exact ones, for sufficiently small values of $\varepsilon$. Still, this does not apply to all problems, and it may need to be used in conjunction with other tailor-made rounding steps, e.g., see \citep{chen2021complexity,vazirani2011market}. 

\paragraph{The \linoptgate as a ``plug-and-play'' component.} As we will explain in the following, and as it will be evident via inspection of our proofs throughout the paper, the \linoptgate allows us to develop proofs which are very simple and streamlined, essentially mimicking the easiest proofs of existence. Clearly, most of the technical complications are ``hidden'' in the ``inner workings'' of the \linoptgate. This is the advantage of having a ``plug-and-play'' component readily available for the proofs: one does not need to even be concerned about how the \linoptgate works, but only to understand what kind of optimization programs it can solve. We consider this to be a significant advantage over the two aforementioned techniques, which require to devise application-specific arguments (be it arguments about ray termination or appropriate approximation and rounding). These arguments may be of a standard general nature, but they have to be devised anew for each application, as evidenced by all the different PPAD-membership results that employ these techniques. 

\renewcommand{\arraystretch}{1.2}

\begin{table}[]
\centering
\begin{tabular}{|ll|}
\hline
\rowcolor[HTML]{ECF4FF}
\multicolumn{2}{|c|}{Applications to Game Theory} \\[2ex] \hline
Games with \Linear Best Response Oracles (\lbro)    & \us                                                                            \\ \hline
PL Concave Games                      & \us                                                                            \\ \hline
Bilinear Games                                           & \citep{koller1996efficient}, \emph{implicitly}                                                                            \\ \hline
General Threshold Games                                 & \us                                                                            \\ \hline
Bimatrix Games                                           & \begin{tabular}[c]{@{}l@{}}\citep{JCSS:Papadimitriou1994}\\  \citep{cottle1968complementarity}, \emph{implicitly}  \end{tabular} \\
Polymatrix Games                                         &    \citep{howson1972equilibria}, \emph{implicitly}        
                                                       \\ 
\Linear Succinct Games                                    & \us                                                                            \\ \hline

\rowcolor[HTML]{FFECD5} 
\multicolumn{2}{|l|}{Multi-class Congestion games with piecewise-linear latency functions}                                                                             \\ \hline
Non-atomic Network Congestion Games           & \begin{tabular}[c]{@{}l@{}}\us\\ linear latencies {\citep{meunier2013lemke}}\end{tabular}    \\ \hline
Atomic Splittable  Network Congestion Games   & \begin{tabular}[c]{@{}l@{}}\us\\ linear latencies {\citep{klimm2020complexity}}\end{tabular} \\ \hline
Congestion Games with Malicious Players                  & \us                                                                             \\ \hline
\rowcolor[HTML]{FFECD5} 
\multicolumn{2}{|l|}{\cellcolor[HTML]{FFECD5}Other equilibrium notions}                                                                  \\ \hline
$\varepsilon$-proper Equilibria in Bimatrix Games        & \citep{EC:Sorensen12}                                                                      \\
$\varepsilon$-proper Equilibria in Polymatrix Games                  & \citep{hansen2018computational}                                                               \\
$\varepsilon$-proper Equilibria in \Linear Succinct Games             & \us                                                                             \\ \hline
Personalized Equilibria                                  & \citep{SICOMP:KintaliPRST13}                                                                \\ \hline
\rowcolor[HTML]{ECF4FF}
\multicolumn{2}{|c|}{Applications to Competitive Markets} \\[2ex] \hline

Exchange Markets with Linear Utilities        & \citep{eaves1976finite}, \emph{implicitly}                                                                   \\   
Arrow-Debreu Markets with SPLC Utilities             & \citep{garg2015complementary}                                                               \\
Arrow-Debreu Markets with SPLC Utilities/Productions            & \begin{tabular}[c]{@{}l@{}} \citep{vazirani2011market} \\  {\citep{SODA:GargV14}}\end{tabular}    \\
\begin{tabular}[c]{@{}l@{}} Arrow Debreu Markets with Leontief-free \\    Utilities/Productions \citep{garg2018substitution} \end{tabular}            &   \begin{tabular}[c]{@{}l@{}} Arrow-Debreu Markets with \emph{Succinct} SPLC \\Utilities/SPLC Productions     \us  \end{tabular}                                                                \\ \hline

\rowcolor[HTML]{ECF4FF}
\multicolumn{2}{|c|}{Applications to Auto-Bidding Auctions} \\[2ex] \hline
Pacing Equilibria in Second-Price Auctions with Budgets        & \citep{chen2021complexity}  \\ \hline

\rowcolor[HTML]{ECF4FF}
\multicolumn{2}{|c|}{Applications to Fair Division} \\[2ex] \hline
Envy-free Cake Cutting        & \citep{goldberg2020contiguous}, \emph{implicitly}  \\ \hline
Rental Harmony        & \us  \\ \hline

\end{tabular}
\caption{A summary of our PPAD-membership results - for other complementary results please see the respective sections/paragraphs in the introduction. Classes of domains that are within the same frame in the table (i.e., not separated by borders) are of increasing generality from top to bottom. Domains that appear in the same row of a frame are incomparable in terms of their generality. For the applications to game theory, all of the domains are special cases of \linear concave games which in turn are a special case of \lbro games. For those applications, the PPAD-membership extends to \emph{generalized equilibria}. For all of the results in the table, regardless of whether we obtain entirely new results, generalizations, or simply results which were known in the literature, we obtain \emph{significant simplifications} in the proofs.}
\label{tab:results}
\end{table}

\subsubsection{Implicit Functions and Correspondences} As a final remark before we present our applications, we point out that, via machinery that we develop in \Cref{sec:implicit}, our \linoptgate can be used to show the PPAD-membership of problems for which the inputs (e.g., utilities or latency functions) are given \emph{implicitly} in the input. In particular, we show how we can construct \linear arithmetic circuits computing these functions, when those are inputted \emph{succinctly} via Boolean circuits. In terms of the applications, this allows us to effectively consider functions of exponential size (in the size of the circuits), e.g., piecewise-linear utility functions with exponentially-many pieces. We provide details on how this capability of the \linoptgate is used in each application in the corresponding sections below. We present applications for which the aforementioned techniques of \Cref{sec:Other-Approaches} are \emph{inherently insufficient} for obtaining PPAD-membership results for those implicit functions, when these results are in fact enabled by the use of the \linoptgate.  

\subsubsection{PPAD-membership for Strategic Games}

We start our discussion from the applications of the \linoptgate to the problem of computing (exact) equilibria in strategic games. To provide some initial intuition, before the technical sections of the paper, we provide an informal example of the use of the \linoptgate to compute mixed Nash equilibria in bimatrix games; this is exposed in more detail in \Cref{sec:bimatrix-games}. 

\subsubsection*{An Example: Bimatrix Games} A bimatrix game is a game played between two players, in which the payoffs are given by two matrices $\mathbf{A}_1$ and $\mathbf{A}_2$, one for each player, denoting the payoff of the players when they each choose certain actions. Each player chooses a \emph{mixed strategy}, i.e., a probability distribution over actions in the game, aiming to maximize their expected payoff, against the choice of the opponent. A mixed Nash equilibrium is a pair of mixed strategies for which every player is \emph{best responding}, i.e., she is maximizing her payoff, given the strategy of the other player. The existence of mixed Nash equilibria for bimatrix games follows from \citeauthor{Nash50}'s general existence theorem \citep{Nash50}. The proof of the theorem that employs the Kakutani fixed point theorem \citep{kakutani1941generalization} constructs a fixed point of a function $F$ from the domain of mixed strategies to itself, for which each coordinate $F_i$ is a best response for player $i$ in the game. These best responses can be captured by optimization programs of the form $\mathcal{C}$ in \Cref{fig:linoptgate-intro} and in particular for the case of bimatrix games, these are linear programs in which the subgradients of the objective functions are linear functions. The existence proof then immediately yields a PPAD-membership proof if one substitutes those programs with \linoptgates that compute them. \medskip

\noindent We remark that for bimatrix games, the original PPAD-membership proof of \citet{JCSS:Papadimitriou1994} adopts the ``LCP approach'' that we mentioned earlier, i.e., it appeals to an alternative proof of Nash equilibrium existence due to \citet{cottle1968complementarity} (see also \citep{lemke1964equilibrium}) that formulates the problem as an LCP. This is a good example of what we mentioned earlier; the \linoptgate allows us to organically retrieve PPAD-membership from the standard, textbook existence proof of \citet{Nash50}. 

\subsubsection*{Best Response Oracles, \Linear Concave Games and Generalized Equilibria}

\paragraph{\Linear Best Response Oracles.} The approach that we highlighted above is not restricted to bimatrix games, but it actually captures a large class of strategic games. In \Cref{sec:LBRO-games} we provide a technical definition for a very general class of games, in which the best response of each agent is given by an oracle that can be computed by a \linear arithmetic circuit. We refer to these games as \emph{games with \linear best response oracles (\lbro games)}. An equilibrium of any \lbro game can straightforwardly be formulated as a fixed point of a function like the function $F$ above, where each coordinate $F_i$ computes the best response of player $i$ via the oracle. By using \linoptgates as oracles, we immediately obtain PPAD-membership results for a wealth of different games.

\paragraph{\Linear Concave Games.} The class of \emph{concave games} is a very large class of games, studied notably by \citet{rosen1965concave} and \citet{debreu1952social}. These are games with continuous strategy spaces, for which the existence of an equilibrium is guaranteed under certain continuity and concavity assumptions on the utility functions. This was proven by \citet{rosen1965concave} but also earlier independently by \citet{debreu1952social}, \cite{fan1952fixed}, and \citet{glicksberg1952further}, and for that reason the existence result is often referred to as the \citeauthor{debreu1952social}-\citeauthor{fan1952fixed}-\citeauthor{glicksberg1952further} theorem for continuous games. 

In \Cref{sec:generalized-concave-games} we prove that as long as the supergradient of the (concave) utility function can be computed by a \linear arithmetic circuit, concave games are \lbro games, and hence finding an equilibrium is in PPAD. We refer to those games as \linear concave games, and emphasize again that the utility function does not have to be piecewise linear, but only its (super)gradient; in particular, it could for example be a quadratic function. Bimatrix games are \linear concave games, and so are \emph{polymatrix games} \citep{janovskaja1968equilibrium, howson1972equilibria}, \emph{bilinear games} \citep{garg2011bilinear}, as well as generalizations of \emph{(digraph) threshold games} \citep{papadimitriou2021public}, and thus we obtain membership of finding equilibria in all of these games in PPAD. The latter two games have continuous strategy spaces, and thus the equilibria that we compute are pure, whereas for polymatrix games (and as a result, for bimatrix games) we compute equilibria in mixed strategies.

\paragraph{\Linear succinct games.} In fact, we define a large class of games, which generalize polymatrix games, one which we coin \emph{\linear succinct games}. In these games, the expected utility of a player, given a pure strategy $j$ and a mixed strategy $\mathbf{x}_{-i}$ of the other players, can be computed by a \linear arithmetic circuit. These are \linear concave games, and the PPAD-membership of finding their mixed Nash equilibria is a corollary of the results mentioned above. 

 We draw parallels between \linear succinct games and those defined in \citet{daskalakis2006game} and \citet{papadimitriou2008computing}. Those works define classes of succinct games for which there is an oracle for computing the expected utility of the player. In \citep{papadimitriou2008computing}, this oracle is referred to as the \emph{polynomial expectation property} and is used to show that correlated equilibria \citep{aumann1974subjectivity} of games with this property can be computed in polynomial time. In \citep{daskalakis2006game}, it is shown that if the oracle is given by a \emph{bounded division free straight-line program of polynomial length}, then these games are in PPAD. Crucially, this latter result concerns \emph{approximate equilibria}. One could view our result as a complement to those two results, one which concerns \emph{exact} equilibria in \emph{rational} numbers. \medskip

\noindent Our PPAD-membership result for \linear concave games captures the limits of the class of concave games for which rational equilibria exist, and thus membership in PPAD is possible. The only other known complexity results for the general class of concave games are a FIXP-completeness result due to \citet{SICOMP:Filos-RatsikasH2023}, and a very recent PPAD-membership result for \emph{approximate} equilibria due to \cite{PapadimitriouVZ23-kakutani}.  

\paragraph{Generalized Equilibrium.} \citet{debreu1952social} did not only consider concave games, but in fact a more general equilibrium notion, one in which the strategy space of each player is dependent on the set of strategies chosen by the other players. This was coined a ``social equilibrium'' by \citet{debreu1952social} (see also \citet{dasgupta2015debreu}) but over the years has been better known by the term \emph{generalized equilibrium}. For our purposes, the dependence on other strategies can be embedded in the constraints of the optimization programs that we use as oracles in \lbro games, in a way that can be handled by the \linoptgate. As a corollary, we obtain all of the aforementioned PPAD-membership results for generalized equilibria (rather than standard equilibria) as well, see \Cref{sec:generalized-concave-games}. To the best of our knowledge, these are the first PPAD-membership results for generalized equilibria in the literature. 

\subsubsection*{Personalized Equilibria} The notion of \emph{personalized equilibrium} was introduced by \citet{SICOMP:KintaliPRST13} in the context of games played on hypergraphs, with an equivalent strategic form. Intuitively speaking, these equilibria allow players to ``match'' their strategies with those of their opponents, without obeying a product distribution. \citet{SICOMP:KintaliPRST13} showed the PPAD-membership (and as a result, rationality of equilibria) of personalized equilibria via the ``relaxation and rounding approach'' (see \Cref{sec:Other-Approaches}). In particular, they first define an approximate version of the problem (the $\varepsilon$-personalized equilibrium), and reduce that problem to \textsc{End-Of-Line} (see \Cref{def:end-of-line} in \Cref{sec:preliminaries}), via a relatively involved construction. To obtain PPAD-membership for the exact problem (i.e., when $\varepsilon=0$) \citet{SICOMP:KintaliPRST13} construct an elaborate argument that appeals to linear programming compactness, by first showing that for sufficiently small $\varepsilon$, $\varepsilon$-personalized equilibria ``almost satisfy'' the constraints of the linear programs, and then carefully rounding the solution to obtain an exact equilibrium. 

The use of the \linoptgate allows us to obtain the PPAD-membership of the problem via an extremely simple argument. Essentially, each player computes their best response via a linear program which is computed by the \linoptgate, which reduces the problem to finding an equilibrium of an \lbro game, see \Cref{sec:personalized}.

\subsubsection*{$\varepsilon$-proper Equilibria} We also consider an alternative equilibrium notion, that of \emph{$\varepsilon$-proper equilibria}. This notion was introduced by \citet{IJGT:Myerson78} to refine the notion of $\varepsilon$-perfect equilibrium of \citet{selten1975reexamination}, and captures situations in which the players can make small mistakes (``trembles'') in the choice of their mixed strategies. The PPAD-membership of computing $\varepsilon$-proper equilibria was known for bimatrix games due to \citet{EC:Sorensen12} and for polymatrix games due to \citet{hansen2018computational}. Both of these works adopt the LCP approach, which means that they need to go through the hassle of establishing the properties of Lemke's algorithm, as discussed in \Cref{sec:Other-Approaches} above. Additionally, formulating the problem as an LCP in this case is far from trivial, and requires an extended formulation of the generalized permutahedron due to \citet{goemans2015smallest}, to make sure that the LCP has polynomially-many constraints. 

The use of our \linoptgate distinctly avoids all this labor. We formulate the problem of computing a best response for each player (where the best response is defined with respect to the $\varepsilon$-proper equilibrium notion) as a feasibility program of the form $\mathcal{Q}$ in \Cref{fig:linoptgate-intro}, which can be solved by the \linoptgate. This essentially renders the game a \lbro game, and the PPAD-membership follows simply as a corollary of our main theorem for \lbro games.

\subsubsection*{Network Congestion Games}

Our last application in the area of game theory is to multi-class congestion games. In particular, we will consider two models, \emph{non-atomic congestion games} and \emph{atomic splittable congestion games}. In the former case, there is a continuum of players who collectively form a class controlling a certain load allocation to different resources. In the latter case, each class is represented by a single (atomic) player, who controls the load and distributes it to the resources. For both of those settings, we will also consider the subclass of \emph{network congestion games}, where the strategies can be represented more succinctly using flows over a directed network.

The existence of equilibria in those games was established in classic works, e.g., see \citep{schmeidler1973equilibrium} or \citep{milchtaich2000generic}, originally via the employment of the \citeauthor{debreu1952social}-\citeauthor{fan1952fixed}-\citeauthor{glicksberg1952further} theorem [\citeyear{debreu1952social}] for continuous games, assuming that the latencies on the resources are concave functions. Relevant to us are the works on their computational complexity, namely \citep{meunier2013lemke} (for non-atomic network congestion games) and \citep{klimm2020complexity} (for atomic splittable network congestion games). Both papers showed the PPAD-membership of finding pure equilibria in their respective settings, when the latency functions are \emph{linear}. We remark that these games are different from atomic (non-splittable) congestion games, for which finding pure Nash equilibria is known to be in the class PLS defined by \citet{johnson1988easy}.

\citeauthor{meunier2013lemke} obtain their PPAD-membership result via the ``LCP approach'' mentioned in \Cref{sec:Other-Approaches}. Interestingly, their LCP formulation turns out to not be amenable to the use of Lemke's algorithm, so they have to devise a ``Lemke-like'' complementary pivoting algorithm, tailored to their problem. As in the case of Lemke's algorithm, they argue explicitly against ray termination. \citeauthor{klimm2020complexity} note that in their case, the problem of finding an equilibrium can be formulated as an LCP, however, it is not known or clear whether this LCP can be solved using any known variant of Lemke's algorithm. For that, they devise a rather involved proof, based on a new homotopy method, essentially a new pivoting algorithm. Their algorithm solves the problem of finding a Nash equilibrium as a system of linear equations involving notions such as \emph{excess flows}, \emph{vertex potentials} and \emph{block Laplacians}. At a very high level, the authors use the excess and potentials to define an undirected version of the \textsc{End-of-Line} graph (see \Cref{def:end-of-line} in \Cref{sec:preliminaries}), and the determinant of the block Laplacians to define a unique orientiation of the edges, effectively reducing the problem to \textsc{End-of-Line}. 

The \linoptgate allows us to avoid all of the technical complications of the proofs of \citet{meunier2013lemke} and \cite{klimm2020complexity} (which are rather involved, especially the latter), and essentially obtain the PPAD-membership for both of these problems as simple corollaries of our main results for \lbro games or concave games. In fact, we obtain \emph{generalizations} of those PPAD-membership results to games with more general latency functions, notably piecewise-linear latency functions (implicitly or explicitly represented). In exactly the same fashion, we can use the \linoptgate to obtain the PPAD-membership of \emph{congestion games with malicious players}, a setting studied by \citet{babaioff2009congestion}, for which computational complexity results had not been previously proven. \medskip  

\noindent All of our results on congestion games are presented in \Cref{sec:congestion-games}.

\subsubsection{PPAD-membership for Competitive Markets}

We now move on to the application of our technique to the domain of competitive markets. The standard market model in the literature is that of the Arrow-Debreu market \citep{arrow1954existence}, where a set of consumers compete for goods endowed by them and other consumers and goods produced by a set of firms. A \emph{competitive equilibrium} of the market is a set of allocations of goods to the consumers, a set of production quantities and a set of prices, such that at those prices, (a) all consumers maximize their individual utilities, (b) all firms produce optimal amounts, and (c) the market clears, i.e., supply equals demand. The existence of an equilibrium for the general market model was established by \citet{arrow1954existence} via the employment of \citeauthor{debreu1952social}'s social equilibrium theorem [\citeyear{debreu1952social}], under some standard assumptions on the utilities of the consumers and the production sets of the firms. 

\paragraph{Previous results and proofs.} It has been well-known since the early works in the area \citep{eaves1976finite} that in general Arrow-Debreu markets, competitive equilibria may be irrational. A significant literature, starting with the work of \citet{eaves1976finite} aimed at identifying special cases of the Arrow-Debreu market for which \emph{exact rational} solutions are always possible. When computer science took over in this quest, the related question of establishing the PPAD-membership of finding those exact solutions was also brought forward. Most of the PPAD-membership proofs that were developed through the years followed the ``LCP approach'', see \Cref{sec:Other-Approaches}. We present them here in succession:
\begin{itemize}[leftmargin=*]
    \item[-] \citet{eaves1976finite} considered the simplest case of exchange markets (no production) with linear utilities for the consumers and devised an LCP that can be solved by Lemke's algorithm. To establish the latter fact, he argued against ray termination, a characterstic of this approach that we emphasized in \Cref{sec:Other-Approaches}. A PPAD-membership proof is implicit in his result.\footnote{Note that for exchange markets with linear utilities and no production the problem is in fact known to be polynomial-time solvable \citep{Jain2007polynomial}.}
    \item[-] \citet{garg2015complementary} considered exchange markets with \emph{separable piecewise-linear concave (SPLC)} utilities, a generalization of linear utilities in which every agent has a piecewise linear concave utility for the amount of a good $j$ that she receives, and her total utility for her bundle is additive over goods. The authors proved the PPAD-membership of finding competitive equilibria in those markets via devising an LCP that was ``quite complex'' \citep{garg2015complementary}, and naturally had to argue against ray termination, to establish that Lemke's algorithm will terminate on this LCP with a valid solution.
    \item[-] \citet{SODA:GargV14} considered Arrow-Debreu markets with SPLC utilities as well as SPLC production functions. This is in fact the work from which the quote of \Cref{sec:Other-Approaches} is taken. The quote highlights the increasing challenge of developing these LCPs and establishing their successful termination. Indeed, for this LCP, \citet{SODA:GargV14} devise a set of linear programs, and then use the complementary slackness and their feasibility conditions to develop the LCP needed for production. The non-homogeneity of the resulting LCP for the equilibrium problem is dealt with in a manner which is different from previous works \citep{eaves1976finite,garg2015complementary} and, naturally, since the developed LCP is different, \citeauthor{SODA:GargV14} again need to argue against ray termination. 
    \item[-] The most general class of utility/production functions for which a PPAD-membership of exact competitive equilibria was proven is that of \emph{Leontief-free} functions \citep{garg2018substitution}, which generalize SPLC functions. For this, the authors devise yet another LCP formulation, which turns out to be even more complex than those of previous works.  This is because it has to differentiate between ``normal'' and ``abnormal'' variables, the latter preventing the employment of Lemke's algorithm. To circumvent this, they exploit some additional structure of their \emph{nonstandard} LCP, and then they also \emph{modify} Lemke's algorithm, to account for the possibility of abnormal variables becoming zero. Finally, as they devise a new LCP, they also have to argue once again against ray termination. 
\end{itemize}
\noindent Besides those works, the first work in computer science to prove PPAD-membership for markets with SPLC utilities/productions was \citep{vazirani2011market}. The approach in that paper is not the ``LCP approach'' but the ``approximation and rounding approach'' (again, see \Cref{sec:Other-Approaches}). An issue with this method is that very small changes in the prices may result in drastic changes in the optimal bundles of the consumers, which makes the proof quite challenging. To deal with this, \citet{vazirani2011market} devise a set of technical lemmas that allow them to ``force'' certain allocations over others.

\paragraph{Our results.} Our results in this section are twofold. 
\begin{itemize}[leftmargin=*]
\item[-] \textbf{Simplified proofs.} First, we employ the \linoptgate to recover all of the aforementioned PPAD-membership results via proofs which are conceptually and technically quite simpler. In particular, we formulate the optimal consumption and the optimal production as linear programs similar to program $\mathcal{C}$ of \Cref{fig:linoptgate-intro}, which can be effectively substituted by \linoptgates in a \linear arithmetic circuit. We also apply a standard variable change which was first used by \citet{eaves1976finite}, and which we refer to as \emph{Gale's substitution}, see \Cref{rem:gale}. For the prices, we develop a feasibility program, similar to program $\mathcal{Q}$ of \Cref{fig:linoptgate-intro}. In a fixed point of the circuit, the optimality of consumption and production follows almost immediately by design. The main technical challenge of the proofs lies in arguing the market clearing of the outputted prices, which however still requires a relatively short proof. 

To introduce the reader gently to our proof technique, we first apply it to the simple setting of exchange markets with linear utilities in \Cref{sec:exchange-markets-linear}, then to the setting of Arrow-Debreu markets with linear utilities and productions in \Cref{sec:prod-markets-linear}, and finally to the general case of Arrow-Debreu markets with Leontief-free utilities and productions in \Cref{sec:general-markets}. 

\item[-] \textbf{PPAD-membership for \emph{Succinct} SPLC (SSPLC) utilities.} In \Cref{sec:SSPLC-markets} we introduce a new class of utility functions, which we coin \emph{succinct separable piecewise-linear (SSPLC) utilities}. These are SPLC utilities in which the different segments of the utility function need not be given explicitly in the input (as in the case of (explicit) SPLC utilities), but can be accessed implicitly  via a boolean circuit. Effectively, this allows us to \emph{succinctly} represent SPLC functions with exponentially many pieces, where the input size is the size of the given circuits. We remark that the ``LCP-approach'' developed in the aforementioned papers is inherently limited in providing PPAD-membership results for this class. Indeed, one could formulate the problem as a large LCP in exponentially-many variables, and that would establish the existence of rational solutions. However, this formulation would no longer be a polynomial time reduction (since now we do not have explicit input parameters $u_{jk}^i$ for the utility of each piece) and hence it would not imply the PPAD-membership of the problem. In contrast, using our machinery from \Cref{sec:implicit} we can make sure that our \linoptgate can be used to obtain PPAD-membership for markets with SSPLC utilities as well. In our result we also add (explicit) SPLC production, which our technique clearly can handle. We provide a discussion on the challenges of extending our results to also capture SSPLC production functions at the end of \Cref{sec:SSPLC-markets}. Note that the SSPLC functions and the Leontief-free functions are of incomparable generality (and hence they appear on the same line of \Cref{tab:results}). Whether we can prove PPAD-membership for a class of ``succinct Leontief-free functions'', which would generalize both settings, is an interesting technical question.
\end{itemize}

\subsubsection{PPAD-membership for Auto-bidding Auctions}

Our next application is on the domain of auto-bidding auctions, which has received a lot of attention recently, due to its applicability in real-world scenarios \citep{balseiro2021budget, balseiro2021landscape,balseiro2021robust,balseiro2019learning,conitzer2022multiplicative,conitzer2022pacing,li2022auto,chen2021complexity,borgs2007dynamics}. In particular, in \Cref{sec:pacing} we consider the settings studied by \citet{conitzer2022multiplicative,conitzer2022pacing}, \citet{chen2009settling} and \citet{li2022auto}, in which buyers participate in several parallel single-item auctions, via scaling their valuations by a chosen parameter $\alpha$, called the \emph{pacing multiplier}. The buyers do that while facing constraints on their feasible expenditure, typically provided by budgets or return-on-investment (ROI) thresholds. The objective is to find a \emph{pacing equilibrium}, i.e., pacing multipliers and allocations for the buyers that are consistent with the format of the auction run (e.g, first-price or second-price) and satisfy the expenditure constraints of all the buyers simultaneously. Pacing equilibria have a similar flavor to the competitive equilibria discussed earlier, but are sufficiently different, and thus require separate handling.

\paragraph{Our proof vs the previous approach.} We prove that computing pacing equilibria in parallel second-price auctions with budgets is in PPAD. The problem was already known to be in PPAD (in fact, PPAD-complete) by the recent results of \citet{chen2021complexity}, building on the original existence result of \citet{conitzer2022multiplicative}. \citeauthor{chen2021complexity}'s proof rather heavily applies the ``approximation and rounding'' paradigm highlighted in \Cref{sec:Other-Approaches}. In particular, \citeauthor{chen2021complexity} define a $(\delta,\gamma)$-approximate variant of the pacing equilibrium, where $\delta, \gamma >0$ are two approximation parameters. Intuitively, this equilibrium corresponds to an ``almost equilibrium'' (i.e., the expenditure constraints are ``almost'' satisfied) of an ``almost second-price auction'' (i.e., an auction in which the set of winners is those with ``almost'' the highest bid). The authors prove that finding these approximate equilibria is in PPAD, via a reduction to a computational version of Sperner's lemma \citep{sperner1928neuer}, and then devise an intrictate rounding procedure to convert $(\delta,\gamma)$-equilibria into $\gamma$-equilibria. The final step in their proof applies the aforementioned technique of \citet{etessami2010complexity} (see \Cref{sec:Other-Approaches}) to further round these equilibria to pacing equilibria (i.e., where $\gamma = 0$).

Our proof employs the \linoptgate and is conceptually and technically much simpler, without needing to use approximations. Instead, we again apply the standard variable change in Gale's substitution (see \Cref{rem:gale}) which we also used for the case of competitive markets, to work with the expenditures rather than the allocations directly. From there, we can formulate the task of finding the optimal expenditures as a set of linear programs (one for each buyer), and the pacing multipliers will be obtained as a fixed point solution of a single simple equation. These linear programs can be solved by \linoptgates which essentially establishes the PPAD-membership of the problem. The proof is detailed in \Cref{sec:pacing-sp-budgets}.

\paragraph{ROI-constrained buyers.} We observe that the existence proof underlying our PPAD-membership proof in this section can in fact almost straightforwardly be modified to yield the existence of pacing equilibria for a different setting in auto-bidding auctions, that of second-price auctions with average return-on-investment (ROI) constraints, studied by \citet{li2022auto}. \citeauthor{li2022auto} established the existence of pacing equilibria via a rather indirect proof, which first reduces the problem to a somewhat convoluted concave game and applies the \citeauthor{debreu1952social}-\citeauthor{fan1952fixed}-\citeauthor{glicksberg1952further} theorem \citep{debreu1952social} to obtain Nash equilibrium existence, and then recovers a pacing equilibrium as a limit point of such a Nash equilibrium. This proof in fact closely follows the original proof of \citet{conitzer2022multiplicative} for the budgeted setting, and clearly does not have any implications on the computational complexity of the problem. 

Our proof, besides its advantages in terms of simplicity, also allows us for the first time to obtain computational membership results for pacing equilibria in the ROI-constrained buyer case. It turns out that for this setting, all pacing equilibria may be irrational (see \Cref{ex:irrational-roi} in \Cref{sec:RPE-irrational}), and hence membership in PPAD is not possible. Instead, we employ the OPT-gate for FIXP developed by \citet{SICOMP:Filos-RatsikasH2023} to easily transform our existence proof into a FIXP-membership proof.

\subsubsection{PPAD-membership for Fair Division}

The last applications of our \linoptgate are related to the task of fairly partitioning a resource among a set of agents with different preferences over its parts. In particular, we show the PPAD-membership of computing \emph{exact} envy-free solutions in two fundamental problems, namely \emph{envy-free cake cutting} \citep{gamow1958puzzle} and \emph{rental harmony} \citep{AMM:Su1999}, when the preferences of the agents ensure the existence of rational partitions.

\paragraph{Envy-free cake cutting.} The envy-free division of a continuous resource (metaphorically, a ``cake'') is one of the most fundamental and well-studied mathematical problems of the last century. The origins of the theory of the problem can be traced back to the pioneering work of \cite{steinhaus1949division}, with different variants being studied over the years in a large body of literature in mathematics, economics, and computer science; see \citep{brams1996fair,robertson1998cake,procaccia2013cake} for some excellent textbooks on the topic. The existence of an envy-free division was established in \citeyear{AMM:Stromquist1980} independently by \citet{AMM:Stromquist1980}, by \citet{woodall1980dividing}, and by Simmons (credited in \citep{AMM:Su1999}), even when the division is required to be \emph{contiguous}, i.e., when each agent receives a single, connected piece of the resource. These proofs proceed by first establishing the existence of divisions that are \emph{approximately} envy-free and then obtaining exact solutions as limit points of these approximations.

It is known that in general, envy-free divisions might be irrational (e.g., see \citep{bei2012optimal}, or \Cref{ex:cake-irrational} for a simpler example), and hence the problem of computing them cannot be in PPAD. \citet{SICOMP:Filos-RatsikasH2023} showed that envy-free cake cutting is in the class FIXP, which, recall, is appropriate for capturing the complexity of such problems.  Still, there are interesting cases for which rational divisions always exist. This is the case for example when the agents' preferences are captured by piecewise constant density functions \citep{goldberg2020contiguous}, a class of functions which is general enough to capture many problems of interest. A FIXP-membership result for these variants is unsatisfactory, and we would like to obtain a PPAD-membership result instead.

Without the convenience of using our \linoptgate, one can establish such a membership result via the ``approximation and rounding'' technique, see \Cref{sec:Other-Approaches}. \citet{OR:DengQS2012} showed that \emph{approximately} envy-free cake cutting is in PPAD, by transforming Simmons' proof into a computational reduction. \citet{goldberg2020contiguous} showed how to ``round'' the approximate solution to obtain an exact envy-free division for preferences captured by piecewise-constant densities, as long as $\varepsilon$ is sufficiently small. 

Luckily, the \linoptgate allows us to avoid having to do that, and instead directly obtain a PPAD-membership result without any need for approximations. In particular, we revisit the FIXP-membership proof of \citet{SICOMP:Filos-RatsikasH2023}; similarly to our approach in this paper, they essentially first construct an existence proof for the problem, one which involves a pair of optimization programs, and then substitute those programs with their \fixpoptgates. One might wonder if, by simply following the steps of the proof and substituting those programs with \linoptgates instead, we can recover the PPAD-membership of the problem, for those classes of preferences for which it is possible. This is almost true, apart from the fact that there is a step in their proof that cannot be done in a \linear arithmetic circuit. 

Still, we manage to substitute that part by a third optimization program, which is in fact a rather simple linear program, and can effectively be substituted by a \linoptgate. This allows us to obtain the PPAD-membership of the problem for the general class of valuation functions (i.e., functions expressing the preferences via numerical values) that can be computed by a \linear arithmetic circuit, see \Cref{sec:ef-cake-cutting}, capturing the aforementioned case of valuations with piecewise-constant densities. 

\paragraph{Rental harmony.} The rental harmony problem, notably studied by \citet{AMM:Su1999}, is concerned with the partition of rent among a set of tenants which have different preferences over combinations of rooms and rent partitions. In the generality studied by \citet{AMM:Su1999}, this problem is in fact equivalent to that of finding an envy-free division of a \emph{chore} among a set of agents. \citeauthor{AMM:Su1999}'s existence proof is inspired by Simmons' proof for envy-free cake cutting, but employs a ``dual Sperner labelling'' \citep{sperner1928neuer}. Similarly to the proofs for cake-cutting, the proof also appeals to limits of approximate solutions. In contrast to cake-cutting however, computational complexity results about this general version of the problem were not known, not even for approximate partitions. 

In \Cref{sec:rental-harmony}, we prove that the problem of finding a solution to rental harmony is in PPAD, as long as the valuations of the tenants for the rent partition are given by \linear arithmetic circuits. Interestingly, this is established via very much the same approach as the proof for envy-free cake cutting, thus providing for the first time a unified proof of existence for those two problems. If one goes beyond the aforementioned valuation functions, all rental harmony solutions may be irrational, as we show in \Cref{ex:irrational-rental-harmony}. For those cases, we explain how the existence proof can be coupled with the \fixpoptgate of \citet{SICOMP:Filos-RatsikasH2023} to establish the FIXP-membership of the problem. 

\paragraph{Computing Envy-free and Pareto-optimal allocations.} We remark that very recently \citet{caragiannis2023complexity} used our \linoptgates to establish that computing probabilistic envy-free and Pareto-optimal allocations of multiple divisible goods is in PPAD.

\subsection{The \linoptgate vs the \fixpoptgate}\label{sec:linopt-vs-opt}

As we mentioned in the introduction, \citet{SICOMP:Filos-RatsikasH2023} were the first to develop an OPT-gate for the computational class FIXP \citep{etessami2010complexity}. FIXP is the class that captures the complexity of computing a fixed point of an arithmetic circuit, i.e., a circuit over the basis $\{+,-,\max,\min,\div,*\}$ with rational constants, see \Cref{def:arithmetic circuit}. FIXP is a larger class than Linear-FIXP, due to the fact that we can multiply and divide inside the circuit. 

The tools that our \linoptgate provides are conceptually very similar to those of the \fixpoptgate of \citet{SICOMP:Filos-RatsikasH2023}, in that they can substitute convex optimization programs within existence proofs, when constructing a circuit whose fixed points are the solutions that we are looking for. However, the design of the gate itself is much more challenging.

The reason for this is the absence of the general multiplication gate $*$. While we can multiply any two circuit variables in a general arithmetic circuit, we can only multiply variables by constants in a \linear arithmetic circuit. The construction of the \fixpoptgate by \citet{SICOMP:Filos-RatsikasH2023} makes extensive usage of the multiplication gate $*$ and can thus not directly be used for creating the \linoptgate. In our case, the constraint matrix $A$ is fixed (i.e., not an input to the \linoptgate) and this does help to eliminate some of the general multiplication gates, but not all of them. At a high level, the construction of \citet{SICOMP:Filos-RatsikasH2023} ensures that the output $x$ of the gate satisfies
$$\mu_0 \cdot \partial f(x) + A^\transpose \mu = 0$$
where $\mu$ satisfies some standard KKT conditions.
If $x$ is feasible and if $\mu_0 > 0$, then it follows that $x$ is an optimal solution by standard arguments (using the convexity of $f$). The term $\mu_0$ is carefully constructed as a function of $\mu$ and $x$ in order to ensure that $x$ must be feasible and that $\mu_0 > 0$ when $x$ is feasible. However, since both $\mu_0$ and $\partial f(x)$ depend on $x$, in our case we cannot construct the term $\mu_0 \cdot \partial f(x)$, because that would entail multiplying two variables in the circuit. As a result, our construction instead ensures that the output $x$ of the gate satisfies
$$\eps \cdot \partial f(x) + A^\transpose \mu = 0$$
where $\mu$ again satisfies some standard KKT conditions, and where $\eps > 0$ is some sufficiently small constant that is picked when constructing the gate. By standard arguments it still holds that if $x$ is feasible, then it is an optimal solution. The challenge however is to ensure that $x$ is indeed feasible. While the argument is relatively straightforward in the work of \citet{SICOMP:Filos-RatsikasH2023}, because $\mu_0$ can depend on $x$, here $\mu_0$ has been replaced by a constant $\eps$. Our main technical contribution in the construction of the \linoptgate is to show that there exists a sufficiently small $\eps > 0$ that forces $x$ to be feasible, and that such $\eps$ can be constructed efficiently given the parameters of the gate (but, importantly, not its inputs!). As a bonus, our modified construction and analysis allows us to obtain a \linoptgate that does not require any constraint qualification, whereas the construction of \citet{SICOMP:Filos-RatsikasH2023} required an explicit Slater condition (which of course, as they show, is necessary in the case where the matrix $A$ is not fixed).\medskip

\noindent From the standpoint of applications, the \linoptgate can be used in almost the same direct manner as the \fixpoptgate of \citet{SICOMP:Filos-RatsikasH2023}. In some cases, precisely because we cannot multiply within a \linear arithmetic circuit, we may have to apply some standard variable changes, to ``linearize'' certain constraints. Still, the \linoptgate can effectively substitute appropriate optimization programs in the same way that the \fixpoptgate can. 
In a nutshell, one can view the \linoptgate as a more powerful tool for those applications for which rational exact solutions exist.

\subsection{Organization of the Paper}

In \Cref{sec:preliminaries} we provide the main definitions and terminology needed for our paper. In \Cref{sec:lin-opt-gate}, we detail the construction of the \linoptgate, and prove its correctness. In the same section (\Cref{sec:implicit}), we also develop the necessary machinery to show how the \linoptgate can be used to obtain PPAD-membership of problems where certain functions are given implicitly in the input to the problem. In \Cref{sec:games1}, we provide the first applications of the \linoptgate to several important classes of games and to different equilibrium notions, besides Nash equilibria. In \Cref{sec:congestion-games} we explain how to apply the machinery that we develop in \Cref{sec:games1} in order to obtain PPAD-membership results for equilibrium computation in nonatomic and atomic splittable congestion games. In \Cref{sec:markets}, we present the applications of our gate to finding competitive equilibria in Arrow-Debreu markets with different utility and production functions. In \Cref{sec:pacing} we demonstrate the applicability of the \linoptgate to obtain membership results for the auto-bidding auctions with pacing strategies. In \Cref{sec:fair-division} we obtain membership results for the two fundamental fair division problems of envy-free cake cutting and rental harmony. We offer some discussion and some directions for future work in \Cref{sec:conclusion}. 

We would like to emphasize that while our paper is very long, this is almost exclusively due to the fact that it covers so many applications, rather than due to the proofs that we develop for those applications, which in reality range from being very short to relatively short. For each of all of the domains that we consider, (a) we provide the appropriate definition and place the setting in context within the rest of the paper, (b) we discuss the related work and possibly the previous PPAD-membership results (if any), (c) we provide detailed comparisons with those previous proofs to demonstrate the effectiveness of our \linoptgate as a general-purpose proof technique, and finally (d) we develop the proofs themselves. In some cases in fact, we first apply the technique to simpler settings for a gentle introduction, and then move on to study those settings in their full generality. We believe that all of our application sections are largely self-contained, and can be read almost in isolation, even after only reading the introduction of the paper, and by referring only to certain clearly referenced parts in other sections.

\section{Preliminaries} \label{sec:preliminaries}
In this section, we introduce the computational class PPAD, as well as the main machinery that will be used throughout the paper. The details for the specific settings that we will consider in our applications will be defined in the corresponding sections. We start with the definitions of the relevant computational complexity classes. 

\subsection{The class PPAD} \label{sec:ppad}

All of the problems that we will consider in this paper will be \emph{total search problems}. A total search problem is one in which a solution is always guaranteed to exist. For example, finding a Nash equilibrium in a game is a total search problem, by \citeauthor{Nash50}'s theorem \citep{Nash50}. Similarly, competitive equilibria in markets always exist (e.g., see \citep{arrow1954existence}).  The class TFNP \citep{megiddo1991total} contains all total search problems \emph{in NP}, i.e., those for which a candidate solution can be verified in polynomial time. For example, verifying whether a given set of strategies is a Nash equilibrium in a bimatrix game (see \Cref{def:bimatrix-game}) can be done in polynomial time, and hence the problem of finding Nash equilibria in bimatrix games is in TFNP. For a formal definition of the class TFNP, we refer the reader to \citep{JCSS:Papadimitriou1994}.\medskip

\noindent The class PPAD, introduced by \cite{JCSS:Papadimitriou1994}, is defined with respect to its canonical problem, called \textsc{End-of-Line}, see \Cref{def:end-of-line} below. PPAD is the class of all problems in TFNP that are polynomial-time reducible to \textsc{End-of-Line}.

\begin{definition}[\textsc{End-of-Line}]\label{def:end-of-line}
The \textsc{End-of-Line} problem is defined as: given Boolean
circuits $P$ and  $S$ with $n$ input bits and $n$ output bits such that $P(0)=0 \neq S(0)$, find $x$
such that $P(S(x)) \neq x$ or $S(P(x)) \neq x \neq 0$.
\end{definition}

\noindent Intuitively, PPAD captures the following problem. We are given a directed graph in which every node has indegree and outdegree at most $1$ and a source of this graph, and we are asked to find another source or a sink. Such a node exists by the parity argument on the degrees of the nodes, which is the underlying principle of the class PPAD. Importantly, we are not given this graph explicitly in the input (otherwise the problem would be trivially in P), but we can access the predecessor and the successor of a given node via Boolean circuits; these are the circuits $P$ and $S$ of \Cref{def:end-of-line} above. We will be using an alternative definition of the class, via \linear arithmetic circuits, which we will define in \Cref{sec:linear-FIXP} next.

\subsection{The classes FIXP and Linear-FIXP} \label{sec:linear-FIXP}

We start by defining arithmetic circuits and \linear arithmetic circuits.\footnote{Sometimes in the literature, these are also referred to as \emph{algebraic circuits}, e.g., see \citep{SICOMP:Filos-RatsikasH2023}. We use the term ``\linear arithmetic circuit'' for circuits over the basis $\{+,-,\max,\min, \times \zeta\}$ with rational constants. Some recent works call them ``linear'' arithmetic circuits instead~\citep{deligkas2021BU,FearnleyGHS22-gradient}.}

\begin{definition}[Arithmetic Circuit] \label{def:arithmetic circuit}
An \emph{arithmetic circuit} is a circuit using gates in $\{+,-,*,\div,\max,\min\}$ as well as rational constants.
\end{definition}

\noindent A \linear arithmetic circuit is simply an arithmetic circuit were multiplication and division are not allowed.

\begin{definition}[\Linear Arithmetic Circuit] \label{def:linear-arithmetic circuit}
A {\linear arithmetic circuit} is a circuit using gates in $\{+, -, \max, \allowbreak \min, \allowbreak \times \zeta\}$ as well as rational constants, where $\times \zeta$ denotes multiplication by a rational constant.
\end{definition}

\noindent We will use \linear arithmetic circuits to provide an alternative definition of the class PPAD. First, we state and prove the next simple lemma, which will be useful in \Cref{sec:lin-opt-gate}. For a rational number $a$, we let $\size(a)$ denote the number of bits needed to describe $a$ in the standard representation, where $a$ is written as an irreducible fraction, and the numerator and denominator are written in binary. We let $\size(f)$ denote the number of bits needed to describe a \linear arithmetic circuit $f$ (in particular, this includes the length of the description of any constants used in $f$).

\begin{lemma}\label{lem:linear-circuit-bound}
	For any \linear arithmetic circuit $f: \mathbb{R}^n \to \mathbb{R}^m$ and any rational $B \geq 0$ it holds that
	$$\max_{x \in [-B,B]^n} \|f(x)\|_\infty \leq 2^{\poly(\size(B),\size(f))}.$$
\end{lemma}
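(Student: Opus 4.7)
The plan is to proceed by structural induction on the gates of the circuit $f$, bounding, for each gate, the maximum absolute value its output can attain when the inputs range over $[-B,B]^n$. Let $N$ be the total number of gates in $f$, and let $C$ be an upper bound on the absolute value of any rational constant appearing in $f$ (either as an input constant or as the multiplier in a $\times\zeta$ gate). Since every constant is part of the description of $f$, we have $N \leq \size(f)$ and $C \leq 2^{\size(f)}$. Similarly, $B \leq 2^{\size(B)}$.

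For each gate $g$ of the circuit, let $V(g)$ denote the supremum of $|g(x)|$ as $x$ ranges over $[-B,B]^n$. I would establish, gate by gate, the recursion
\[
V(g) \leq (C+1) \cdot \max_{g' \text{ feeds into } g} V(g'),
\]
which is immediate from the following simple bounds: for input gates $V(g) \leq B$; for constant gates $V(g) \leq C$; for $+$ and $-$ gates, $V(g) \leq 2\max V(g')$; for $\max$ and $\min$ gates, $V(g) \leq \max V(g')$; and for $\times \zeta$ gates, $V(g) \leq C \cdot V(g')$. In all cases the multiplicative blowup per gate is at most $C+1 \leq 2^{\size(f)+1}$.

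Iterating this through a topological order of the gates, starting from the leaves, the value at any gate at depth $d$ in the circuit is bounded by $(C+1)^d \cdot \max(B, C)$. Since the depth is at most the number of gates $N \leq \size(f)$, every output of $f$ is bounded in absolute value by
\[
(C+1)^{\size(f)} \cdot \max(B,C) \leq 2^{(\size(f)+1)\size(f)} \cdot 2^{\max(\size(B), \size(f))},
\]
which is of the form $2^{\poly(\size(B), \size(f))}$, as required. Since the $\ell_\infty$ norm is just the maximum over the $m$ output coordinates, the same bound applies to $\|f(x)\|_\infty$.

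There is no real obstacle here; the only thing to keep straight is that constants internal to $f$ can be as large as $2^{\size(f)}$, which is why the per-gate blowup depends on $\size(f)$ rather than being an absolute constant. The inductive argument is completely standard, and the bound is essentially tight up to the polynomial in the exponent (a chain of $\times\zeta$ gates with $\zeta = 2^{\size(f)}$ already produces a doubly-exponential-looking growth in the naive bound that is exactly of this form).
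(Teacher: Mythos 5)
Your proof is correct and takes a more elementary, self-contained route than the paper. The paper's proof is modular: it invokes two facts from prior work — that a \linear arithmetic circuit can be evaluated efficiently, giving $\|f(0)\|_\infty \leq 2^{\poly(\size(f))}$, and that $f$ is $L$-Lipschitz over $\mathbb{R}^n$ with $L = 2^{\poly(\size(f))}$ — and then concludes via $\|f(x)\|_\infty \leq \|f(0)\|_\infty + L\|x\|_\infty$. Your gate-by-gate structural induction essentially unrolls the argument behind those cited lemmas, at the cost of a few more lines but without any external dependency. Both yield the same bound; the paper's version is shorter, yours is more transparent about where the exponential factors come from. One small slip: for a $+$ or binary $-$ gate the multiplicative blowup is $2$, not $C+1$, so the uniform per-gate bound should be $\max(2,C)$ rather than $C+1$ (or you should declare $C \geq 1$ without loss of generality, e.g.\ $C := \max(1,\ \text{largest constant})$); otherwise the claimed recursion fails in the degenerate case that every constant in $f$ has magnitude below $1$. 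Since $\max(2,C) \leq 2^{\size(f)+1}$ still holds, the final bound is unaffected.
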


\begin{proof}
	Since a \linear arithmetic circuit can be evaluated efficiently (see, e.g., \citep[Lemma~3.3]{FearnleyGHS22-gradient}), we have $\|f(0)\|_\infty \leq 2^{\poly(\size(f))}$. Additionally, $f$ is $L$-Lipschitz-continuous over $\mathbb{R}^n$ with Lipschitz constant $L = 2^{\poly(\size(f))}$, see, e.g., \citep[Lemma~A.1]{FearnleyGHS22-gradient}. As a result, for any $x \in [-B,B]^n$
	\[\|f(x)\|_\infty \leq \|f(0)\|_\infty + L \|x\|_\infty \leq 2^{\poly(\size(B),\size(f))}. \qedhere\]
\end{proof}

\noindent We now move on to the definition of the related computational classes, in the context of arithmetic circuits. We mentioned the class FIXP in the introduction; we proceed to formally define it below. The following definitions follow those of \citep{SICOMP:Filos-RatsikasH2023}. \medskip 

\noindent A search problem $\Pi$ with real-valued search space is defined by associating to
any input instance $I$ (encoded as a string over a finite alphabet $\Sigma$)
a search space $D_I\subseteq\RR^{d_I}$ and a set of solutions $\Sol(I)$. 
We assume there is a polynomial time algorithm that given $I$ computes a description of $D_I$.\medskip

\noindent Next, we define \emph{basic \FIXP problems} and 
\emph{basic Linear-\FIXP problems.}

\begin{definition}[Basic (linear)-\FIXP problem]\label{def:basic-fixp}
A search problem $\Pi$ is a basic (Linear)-$\FIXP$ problem
if every instance $I$ describes a nonempty compact convex domain
$D_I$ described by a set of linear inequalities with rational coefficients and a continuous map $F_I\colon D_I\rightarrow D_I$ given
by a (\linear) arithmetic circuit $C_I$, and the solution set is $\Sol(I) = \{x\in D_I\mid
F_I(x)=x\}$.  We assume that $C_I$ well-defined, i.e., it does not divide by zero and that it indeed represents a function $F_I$ with $F_I(D_I) \subseteq D_I$.\footnote{Given an arithmetic circuit, it is not clear how to check whether it does satisfy these properties, so we assume that it does, i.e., we consider promise problems, see also \citep{SICOMP:Filos-RatsikasH2023}. For the case of basic Linear-FIXP problems, the first condition is trivially satisfied (since division is not allowed), but for the second condition we still require a promise. Note that this means that the problem is formally not a TFNP problem, but instead a promise-TFNP problem. This is however not an issue for proving PPAD-membership, since the problem ultimately reduces to the TFNP problem End-of-Line \citep{etessami2010complexity}.}
\end{definition} 

\noindent Next, we define reductions between search problems. Let $\Pi$ and $\Gamma$ be search problems with real-valued search space. A $\emph{many-one reduction}$ 
from $\Pi$ to $\Gamma$ is a pair of maps $(f,g)$. The instance mapping $f$ maps
instances $I$ of $\Pi$ to instances $f(I)$ of $\Gamma$, and for any solution 
$y\in\Sol(f(I))$ the solution mapping $g$ maps the pair $(I,y)$ to a solution $g(I,y)\in\Sol(I)$
of $\Pi$. In order to avoid meaningless reductions, it is required that $\Sol(f(I))\neq\emptyset$
if $\Sol(I)\neq\emptyset$. 
We require that the instance mapping $f$ is computable in polynomial time.
\citet{etessami2010complexity} defined the notion of \emph{SL}-reductions where
the solution mapping $g$ is \emph{separable linear}. This means there 
exists a map $\pi\colon\{1,\dots, d_I\}\rightarrow \{1,\dots, d_{f(I)}\}$
and rational constants $a_i,b_i$, $i=1,\dots, d_I,$ such that for $y\in\Sol(f(I))$
one has that $x=g(I,y)$ is given by $x_i = a_iy_{\pi(i)}+b_i$ for all~$i$. The map $\pi$ and the constants $a_i,b_i$ should be computable from $I$ in polynomial time.\medskip

\noindent We now define the class FIXP.

\begin{definition}[\FIXP]\label{def:FIXP}
The class $\FIXP$ consists of all search problems with real-valued search space that SL-reduce
to a basic $\FIXP$ problem for which the domain $D_I$ is a convex polytope
described by a set of linear inequalities with rational coefficients and the function
$F_I$ is defined by an arithmetic circuit $C_I$. 
\end{definition}

\noindent The class Linear-FIXP is the ``piecewise-linear fragment of FIXP'' \citep{etessami2010complexity}, defined below.

\begin{definition}[Linear-\FIXP]\label{def:linear-FIXP}
The class Linear-\FIXP consists of all search problems with real-valued search space that SL-reduce
to a basic Linear-\FIXP problem for which the domain $D_I$ is a convex polytope
described by a set of linear inequalities with rational coefficients and the function
$F_I$ is defined by a \linear arithmetic circuit $C_I$. 
\end{definition}

\noindent Above we have formally defined the class Linear-\FIXP as a class of search problems with real-valued search space forming a subclass of $\FIXP$. However, Linear-\FIXP also naturally defines a subclass of \TFNP since for any instance $I$ of a basic Linear-\FIXP problem $\Pi$, $\Sol(I)$ always contains rational-valued solutions of polynomial bit-length. Following the convention in the literature, we will denote the class of search problems in \TFNP reducible to the exact fixed-point computation defined by $\Pi$ by Linear-\FIXP as well. \medskip

\noindent With this convention, \citet{etessami2010complexity} showed the following equivalence result:

\begin{theorem}[\citep{etessami2010complexity}]\label{thm:PPAD=linear-FIXP}
\emph{Linear-}\FIXP = \emph{PPAD}.
\end{theorem}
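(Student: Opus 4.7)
The plan is to prove both inclusions. For $\emph{Linear-}\FIXP \subseteq \emph{PPAD}$, I would argue in three steps: $(i)$ show that every instance of a basic Linear-\FIXP problem admits an exact fixed point of polynomial bit-complexity; $(ii)$ show that a sufficiently close approximate fixed point can be rounded to an exact one in polynomial time; and $(iii)$ show that finding an approximate fixed point reduces to \textsc{End-of-Line}. For $(i)$ and $(ii)$, I would exploit the piecewise-affine structure of the map $F_I$ computed by the \linear circuit $C_I$: by recording for each $\max$ or $\min$ gate in $C_I$ which branch wins, we obtain a polyhedral subdivision of $D_I$ on whose full-dimensional cells $F_I$ restricts to an affine map. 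Any fixed point lies in some cell, and inside that cell the fixed point set is the solution set of a linear system whose coefficients have polynomial bit-complexity. Cramer's rule supplies the polynomial bit-length bound, and combining this with the Lipschitz estimate of \Cref{lem:linear-circuit-bound} shows that an $\eps$-approximate fixed point with $\eps = 2^{-\poly(\size(C_I))}$ lies in a cell that contains an exact fixed point, which can then be extracted by linear programming. For step $(iii)$, the standard Sperner-based reduction from Brouwer fixed-point computation for a Lipschitz function (here $F_I$) to \textsc{End-of-Line} applies: triangulate $D_I$ on a sufficiently fine grid, assign Sperner labels based on $\sgn(F_I(v) - v)$ coordinatewise, and follow paths between fully-labelled simplices using Boolean-circuit descriptions of the successor and predecessor functions.

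For the converse direction $\emph{PPAD} \subseteq \emph{Linear-}\FIXP$, I would reduce \textsc{End-of-Line} to a basic Linear-\FIXP problem. Two observations make this possible. First, any Boolean circuit can be simulated by a \linear arithmetic circuit on $\{0,1\}$-valued inputs via $\AND(a,b) = \min(a,b)$, $\OR(a,b) = \max(a,b)$, and $\NOT(a) = 1 - a$, so the circuits $P, S$ from the \textsc{End-of-Line} instance can be rewritten as \linear arithmetic circuits. Second, the path structure on $\{0,1\}^n$ induced by $P$ and $S$ can be embedded into a piecewise-linear displacement map $F$ on a cube $[0,M]^n$ by placing each node at an integer lattice point and defining $F$ on a simplicial refinement so that inside each ``edge corridor'' it nudges points from a predecessor lattice point toward its successor, and acts as the identity at a lattice point only if that point is a genuine source or sink other than $\allzeros$. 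Since each simplex involves only a constant number of such nudges, the whole construction can be wrapped in a \linear arithmetic circuit, and any fixed point of $F$ SL-reduces back to a solution of the original \textsc{End-of-Line} instance.

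The main obstacle is the rounding step in the first direction. It requires uniform quantitative bounds: given an approximate fixed point, one needs to read off which cell of the subdivision contains it, and guarantee that an inverse-exponential precision is small enough both to disambiguate the cell and to ensure that the system $F_I(y) = y$ restricted to that cell has a feasible solution inside the cell. Making this robust to coincidental equalities at gates (where several cells meet) and to degenerate fixed points lying on shared faces requires either a symbolic perturbation argument or a careful choice of $\max$/$\min$ tie-breaking convention. Once this is handled, the remainder of both directions reduces to well-trodden PPAD constructions that adapt easily to \linear arithmetic circuits.
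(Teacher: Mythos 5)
The paper itself does not prove this theorem; it is cited verbatim from \citet{etessami2010complexity}, so there is no in-paper argument to compare against, but your sketch has two steps that do not yet work as stated. For Linear-\FIXP $\subseteq$ PPAD, your rounding step claims that an $\eps$-approximate fixed point with $\eps = 2^{-\poly(\size(C_I))}$ must lie in a cell whose closure contains an exact fixed point, and you justify this by "combining [Cramer] with the Lipschitz estimate of \Cref{lem:linear-circuit-bound}." That lemma only controls how far $F_I$ can move a point; it does not explain why, on a cell $\bar\sigma$ containing \emph{no} exact fixed point, the residual $\|F_I(y)-y\|$ cannot be smaller than your chosen $\eps$. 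The ingredient you actually need is a quantitative infeasibility bound of the LP type: on each closed cell $\bar\sigma$, the quantity $\min_{y\in\bar\sigma}\|F_I(y)-y\|_\infty$ is the optimal value of a linear program whose data have bit-length $\poly(\size(C_I))$, so if it is nonzero it is at least $2^{-\poly(\size(C_I))}$. Only with such a bound does driving $\eps$ below the threshold force the cell read off from the approximate point to support a feasible linear system; the symbolic-perturbation and tie-breaking remedies you mention in your last paragraph are aimed at a secondary issue and do not supply this estimate.

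For PPAD $\subseteq$ Linear-\FIXP, placing the nodes of $\{0,1\}^n$ at lattice corners of an $n$-dimensional cube and drawing "edge corridors" $v\to S(v)$ does not yield a well-defined displacement field, because $S$ and $P$ are arbitrary Boolean circuits: $v$ and $S(v)$ are generally not Hamming neighbours, so distinct corridors $v\to S(v)$ and $w\to S(w)$ can cross in the interior of the cube (already two diagonals of a single $2$-face intersect), leaving $F$ multiply defined at the crossing point. The standard construction avoids this by embedding the \textsc{End-of-Line} graph into a \emph{fixed} low-dimensional cube (e.g.\ $[0,1]^3$ with a $2^{\poly(n)}$-grid), routing each edge as a tube at a dedicated "height" so that no two tubes touch, and setting the displacement to be piecewise constant on grid cells, following tube directions inside tubes and a default direction elsewhere. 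That crossing-free, grid-based displacement is what a \linear arithmetic circuit can compute, using bit-extraction machinery of the kind developed in \Cref{sec:implicit}; your $n$-dimensional corner embedding would need to be replaced by a routing of this sort before the converse direction goes through.
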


\noindent \Cref{thm:PPAD=linear-FIXP} provides an alternative definition of PPAD which we will be using throughout the paper. Roughly speaking, to show that a problem is in PPAD, it suffices to show that it can be reduced to computing a fixed point of a function encoded by a \linear arithmetic circuit. 

\begin{remark}[Linear vs \Linear]\label{rem:linear-vs-linear}
The term ``linear'' in ``linear-FIXP'' might be a bit misleading, as it may suggest that it only refers to linear functions. From \Cref{def:linear-arithmetic circuit}, it should be obvious that they capture piecewise-linear functions instead, and are hence more general. We believe that the term ``linear'' was used in the related literature rather than ``piecewise-linear'' for succinctness and brevity. In this paper, we will call these arithmetic circuits \linear arithmetic circuits (for ``piecewise-linear'') but still refer to the piecewise-linear fragment of \FIXP as Linear-\FIXP for consistency with the literature. We also call our new gate ``\linoptgate'' to remain consistent with Linear-FIXP.
\end{remark}

\noindent We conclude the section with a very useful definition, that of \emph{\pseudogs}.

\begin{definition}[\Pseudog]\label{def:pseudo-circuit}
A \pseudog with $n$ inputs and $m$ outputs is a \linear arithmetic circuit $F: \mathbb{R}^n \times [0,1]^\ell \to \mathbb{R}^m \times [0,1]^\ell$. The output of the \pseudog on input $\mathbf{x}$ is any $\mathbf{y}$ that satisfies $F(\mathbf{x},\mathbf{z}) = (\mathbf{y},\mathbf{z})$ for some  $\mathbf{z} \in [0,1]^\ell$. Note that a \pseudog can have multiple possible outputs.
\end{definition}

\noindent Intuitively, \pseudogs are \linear arithmetic circuits which are only required to work correctly at a fixed point of the encoded function (or, to be more precise, when its ``auxiliary'' variables $\mathbf{z}$ satisfy a fixed point condition). In particular, \linear arithmetic circuits are \pseudogs, and in fact, for the purpose of proving membership in PPAD, those two are equivalent. The \linoptgates that we will define in the next section are in fact \pseudogs that are used as \emph{primitives} or \emph{subroutines} in larger \pseudogs. The following is an example of a simple \pseudog, that was already used as an important primitive by \citet{SICOMP:Filos-RatsikasH2023}.

\begin{example}[\Pseudog computing the Heaviside function]\label{ex:Heaviside}
The Heaviside function is the following correspondence
\begin{equation*}
\Heaviside(x) = \begin{cases}
1 & \text{ if } x>0 \\
[0,1] & \text{ if } x=0 \\
0 & \text{ if } x<0 
\end{cases} \enspace .
\end{equation*}
We can construct a \pseudog $F: \mathbb{R} \times [0,1] \to \mathbb{R} \times [0,1]$ computing $\Heaviside$ by letting
$$F(x,z) := (z, \min\{1, \max\{0,z + x\}\}).$$
It is easy to check that $F$ indeed computes $\Heaviside$, i.e., $F(x,z) = (y,z) \implies y \in \Heaviside(x)$.
\end{example}

Computing a fixed point of a \pseudog corresponds to computing a fixed point of the \linear arithmetic circuit representing it. Thus, a \pseudog is guaranteed to have at least one rational fixed point, and the problem of computing such a fixed point lies in PPAD.

\section{A Powerful Tool for PPAD-membership: The \linoptgate}\label{sec:lin-opt-gate}
In this section, we develop our main tool, the \linoptgate. \medskip

\begin{definition}[\linoptgate]\label{def:linoptgate}
The \linoptgate is a gate which:
\begin{itemize}
\item[-] is parameterized\footnote{The parameters of a gate determine its behavior and must be provided every time a gate of this type is used in a circuit (and thus also count towards the representation size of the circuit). For example, whenever we use a ``multiplication by a constant'' gate $\times \zeta$, we have to specify the constant parameter $\zeta$ of the gate. The same also applies to the \linoptgate, except that it has (many) more parameters.} by $n, m, k \in \mathbb{N}$, a rational matrix $A \in \mathbb{R}^{m \times n}$, and a \linear arithmetic circuit $G_{\partial f} : \mathbb{R}^n \times \mathbb{R}^k \times [0,1]^\ell \to \mathbb{R}^n \times [0,1]^\ell$.
\item[-] takes as input $b \in \mathbb{R}^m$, $c \in \mathbb{R}^k$, and $R \in \mathbb{R}$.
\end{itemize}
The \linoptgate outputs an optimal solution of the following optimization problem (over variables $x \in \mathbb{R}^n$):

\begin{center}\underline{Optimization Program $\mathcal{C}$}\end{center}
\begin{equation}\label{eq:OPT-gate-general}
\begin{split}
\min \quad &f(x;c) \\
\text{ s.t.} \quad & Ax \leq b\\
& x \in [-R,R]^n
\end{split}
\end{equation}
whenever the two following conditions hold for the given inputs $b \in \mathbb{R}^m$, $c \in \mathbb{R}^k$, and $R \in \mathbb{R}$:
\begin{enumerate}
\item \label{enum:optgate-1}The feasible domain $\{x \in [-R,R]^n : Ax \leq b\}$ is not empty.
\item \label{enum:optgate-2}The map $x \mapsto f(x;c)$ is a convex function on the feasible domain and its subgradient is given by the \pseudog $G_{\partial f}$.
\end{enumerate}
If the two conditions are not satisfied then the \linoptgate can have arbitrary output. If the conditions are satisfied and $\mathcal{C}$ has multiple optimal solutions, then any such optimal solution is an acceptable output for the gate.
\end{definition}

The following theorem is our main result. It shows that \linoptgates can be simulated by standard gates by adding some auxiliary inputs and outputs to the circuit.

\begin{theorem}\label{thm:linoptgate}
When constructing a \linear arithmetic circuit for the purpose of proving membership in PPAD, we can also use \linoptgates, in addition to the standard gates. More formally, given a \linear arithmetic circuit\footnote{Here $D$ is a nonempty compact convex domain represented by linear inequalities.} $F: D \to D$ that uses \linoptgates, we can construct in polynomial time a \linear arithmetic circuit $G: D \times [0,1]^t \to D \times [0,1]^t$ that does not use \linoptgates, but which is equivalent to $F$, in the following sense:
$$G(y,\alpha) = (y,\alpha) \implies F(x) = x$$
for all $x \in D$ and $\alpha \in [0,1]^t$.
\end{theorem}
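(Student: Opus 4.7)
The plan is to first show how to simulate a single \linoptgate by a \pseudog built from standard gates (with auxiliary $[0,1]$-valued variables), and then replace every occurrence of a \linoptgate in $F$ by such a \pseudog, gathering all auxiliary variables into the vector $\alpha$. Since, by \Cref{def:pseudo-circuit}, a \pseudog is just a \linear arithmetic circuit whose ``correct'' outputs are those produced at a fixed point of its auxiliary variables, once each \linoptgate-simulation is correct in this sense, a fixed point of the combined circuit $G$ projects to a consistent evaluation of $F$ where every \linoptgate produces a valid optimal output, giving $F(y)=y$.

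To simulate a single \linoptgate with parameters $n,m,k,A,G_{\partial f}$ and inputs $b,c,R$, I would encode the KKT system of program $\mathcal{C}$ inside the \pseudog. Concretely, I would introduce dual multipliers $\mu\in\RRnn^m$ for the constraints $Ax\le b$ and $\nu^+,\nu^-\in\RRnn^n$ for the box constraints, each produced by a Heaviside-style \pseudog of the same flavor as \Cref{ex:Heaviside}: if a constraint is violated, the multiplier is saturated at its maximum; if it is slack, the multiplier equals $0$; if it is tight, the multiplier can take any value in its range. A further \pseudog would enforce the perturbed stationarity condition
\[
\eps\cdot \partial f(x)+A^\transpose\mu+\nu^+-\nu^-=0,
\]
where the subgradient $\partial f(x)$ is computed by the supplied \pseudog $G_{\partial f}$ and $\eps>0$ is a fixed rational chosen at gate-construction time. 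Any \emph{feasible} $x$ that satisfies this system, together with primal/dual feasibility and complementary slackness implicit in the Heaviside gadgets, is by standard convex analysis a global minimizer of $\mathcal{C}$, so the only remaining issue is to guarantee feasibility.

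The heart of the argument, and its main obstacle (as highlighted in \Cref{sec:linopt-vs-opt}), is to choose $\eps$ small enough to force feasibility at every fixed point, while keeping $\eps$ computable from only the \emph{parameters} $n,m,A,G_{\partial f}$ and not the \emph{inputs} $b,c,R$. The idea is a contradiction argument: suppose at a fixed point some constraint is violated, say $A_ix>b_i$. Then the Heaviside-type \pseudog forces $\mu_i$ to its maximum value $M$, and combining this with the stationarity equation would require $\eps\|\partial f(x)\|_\infty$ to dominate an expression of order $M\cdot\|A_i\|$ that cannot be cancelled by the other multipliers without producing an inconsistency with some other Heaviside saturation. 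Using \Cref{lem:linear-circuit-bound}, one controls $\|\partial f(x)\|_\infty\le 2^{\poly(\size(R),\size(c),\size(G_{\partial f}))}$ on the box $[-R,R]^n$; to remove the dependence on $\size(R),\size(b),\size(c)$ from the choice of $\eps$ itself, I would build the input-dependent magnitudes into the multipliers (scaling each Heaviside output by a suitable input-dependent factor computed inside the circuit), so that the inequality that $\eps$ must defeat becomes purely parameter-dependent. A sufficiently small rational $\eps$ derived from $n,m,A,G_{\partial f}$ then suffices, and can be written down as a constant of the circuit. With feasibility in hand, the KKT system plus convexity yields optimality, which completes the simulation of one \linoptgate and hence, by composition, the construction and correctness of $G$. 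The polynomial-time bound on the construction follows because each \linoptgate is replaced by a \pseudog of size polynomial in its parameters (including $\size(A)$ and $\size(G_{\partial f})$), and $F$ itself is of polynomial size.
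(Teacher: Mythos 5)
Your high-level architecture matches the paper's: you replace each \linoptgate by a \pseudog that imposes a perturbed stationarity equation $\eps\cdot\partial f(x)+A^\transpose\mu+\nu^+-\nu^-=0$ with Heaviside-style multipliers, prove that any feasible fixed point is optimal by standard KKT/convexity, and hinge the whole thing on choosing $\eps$ small enough to force feasibility. But the two places where your proposal actually has to do work are precisely where it goes wrong or goes vague, and those are the technical heart of the theorem.

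\textbf{The feasibility argument is not what you say it is.} Your claim that when a constraint $a_i^\transpose x > b_i$ is violated, the term $\mu_i a_i$ ``cannot be cancelled by the other multipliers without producing an inconsistency with some other Heaviside saturation'' is not true as stated, and this is exactly the hard case. Several near-antipodal or otherwise nearly dependent rows of $\widetilde A$ can combine to produce a residual vector $\sum_i \widetilde\mu_i \widetilde a_i$ whose norm is \emph{far smaller} than $\|a_i\|$, so the magnitude that $\eps\cdot\partial f(x)$ must ``dominate'' is not ``of order $M\|A_i\|$''; it can shrink toward zero as the active set varies. What actually rescues the argument in the paper is a geometric step you are missing: let $J$ be the set of constraints with $\widetilde a_i^\transpose x\ge \widetilde b_i$, project $x$ onto $D_J=\{y:\widetilde A_J y\le\widetilde b_J\}$ to get $z^*\ne x$, let $I$ be the constraints tight at $z^*$, and show via Farkas that the unit direction $u=(x-z^*)/\|x-z^*\|_2$ lies in the nonnegative cone of $\{\widetilde a_i\}_{i\in I}$. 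Pairing the stationarity equation with $z^*-x$ then produces the inequality $\sum_{i\in I_1}\widetilde a_i^\transpose(x-z^*)\le\eps\|v\|_2\|x-z^*\|_2$, where $I_1$ is the nonempty set of strictly violated constraints in $I$ (on which $\mu_i=1$). The contradiction comes from a separate quantitative lemma: for the \emph{fixed} matrix $A$, one precomputes in polynomial time a rational $\gamma^*>0$ that strictly lower-bounds the optimal value of a family of LPs over all choices of $I,I_0,I_1$ (see \cref{lem:choice-of-gamma}), so that $\sum_{i\in I_1}\widetilde a_i^\transpose u>\gamma^*$ whenever $u$ is feasible for that LP. Only then does setting $\eps:=\gamma^*/K$ (with $K$ the bound on $\|G_{\partial f}\|$ from \cref{lem:linear-circuit-bound}) close the argument. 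Without this projection step and the $\gamma^*$-lemma you have no contradiction in the case where violated and tight constraints partially cancel.

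\textbf{The proposed ``scale the multipliers'' fix is both unnecessary and unimplementable.} You try to make $\eps$ depend only on $(n,m,A,G_{\partial f})$ by scaling the Heaviside outputs by an input-dependent factor. In a \linear arithmetic circuit you cannot multiply one circuit value by another; the only escape hatch, \cref{lem:mult-by-Heaviside}, itself invokes the \linoptgate, so using it inside the construction of the \linoptgate is circular. Moreover the paper shows this is not needed: the definition is relaxed in \cref{prop:lin-opt-gate-formal} so that the gadget \emph{does} receive fixed rational bounds $R$ and $C$ as construction-time parameters, and these are legitimate because \cref{lem:linear-circuit-bound} gives a priori, polynomial-size upper bounds on every internal value of a \linear circuit over a bounded domain. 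So $\eps=\gamma^*/K$ is allowed to depend on $R$ and $C$, as long as these can be derived from the ambient circuit — which they can. The claimed input-independence of $\eps$ is a false target, and the mechanism you propose to reach it does not compile down to the permitted gate set.
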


\begin{remark}\label{rem:linoptgate}
Note that $G$ must have a fixed point by Brouwer's fixed point theorem and the problem of computing such a fixed point lies in PPAD by \cref{thm:PPAD=linear-FIXP}. By \cref{thm:linoptgate} it follows that $F$ must also have a fixed point and computing one also lies in PPAD. The expression ``$F(x) = x$'' in the theorem should be understood as: ``On input $x$, there exists a valid assignment to all the gates of $F$ such that the output of $F$ is $x$''. The value assigned to a standard gate is fully specified by its inputs. However, for a \linoptgate there might be multiple acceptable assignments (e.g., if it solves an LP that has multiple solutions), and the expression ``$F(x) = x$'' states that there exists at least one acceptable assignment such that the output of $F$ is $x$. It does not say that any acceptable assignment will work, but merely that there exists one that does.
\end{remark}

The proof of \cref{thm:linoptgate} can be found in \cref{sec:linoptgate-proof}. We continue with some additional remarks about the \linoptgate.

\begin{remark}
Note that the objective function $f$ does not need to be computable by a \linear arithmetic circuit. We only require that its subgradient (on the feasible domain) is given as a \pseudog. In particular, $f$ can be a (convex) quadratic polynomial.
\end{remark}

\noindent In most of our applications, it will suffice for optimization program $\mathcal{C}$ to be a linear program, i.e., for $f$ to be linear function. We will use $\mathcal{P}$ to refer to the general form of this linear program, and we will reference that in our applications.
\begin{center}\underline{Linear Program $\mathcal{P}$}\end{center}
\begin{equation}\label{eq:OPT-gate-linear}
\begin{split}
\min \quad &c^\transpose x\\
\text{ s.t.} \quad & Ax \leq b\\
& x \in [-R,R]^n
\end{split}
\end{equation}
Note that here $f(x;c) = c^\transpose x$, and its subgradient is simply computed by the \pseudog $G_{\partial f} : \mathbb{R}^n \times \mathbb{R}^k \to \mathbb{R}^n$, $(x;c) \mapsto c$. The \pseudog $G_{\partial f}$ is in fact just a normal \linear arithmetic circuit here, i.e., $\ell = 0$, since no auxiliary fixed point variables are needed to compute the subgradient.

\begin{remark}\label{rem:subgradient-supergradient}
The \linoptgate can of course also solve maximization problems $\max f(x)$ where the objective function $f$ is concave, since this is equivalent to the problem $\min -f(x)$. In that case we have to provide a \pseudog computing the supergradient of $f$, or equivalently the subgradient of $-f$.
\end{remark}

\begin{remark}
Note that we require that the constraint matrix $A$ be fixed, whereas the right-hand side of the constraints $b$ can be given as an input to the gate. This is in fact necessary, as the following example shows. If the \linoptgate could solve the LP
\begin{equation*}
\begin{split}
\min \quad &x\\
\text{ s.t.} \quad & a \cdot x \geq 1\\
& x \in [-2,2]
\end{split}
\end{equation*}
where $a$ is not fixed, then we would obtain a \pseudog computing $1/a$ for $a \in [1,2]$. But then, we would be able to construct a \pseudog $F: [1,2] \to [1,2]$ computing $y \mapsto \min\{2, \max\{1, 2/y\}\}$. The only fixed point of $F$ is at $y = \sqrt{2}$, which is a contradiction, since \pseudogs always have at least one rational fixed point.
\end{remark}

\subsection{Feasibility Program with Conditional Constraints}

Using the \linoptgate, we can also solve feasibility programs, which will be very useful throughout our applications. In particular, when constructing a \linear arithmetic circuit \emph{for the purpose of proving membership in PPAD}, we can assume without loss of generality that we have access to an additional gate solving feasibility programs with conditional constraints, which:
\begin{itemize}
	\item[-] is parameterized by $n, m, k \in \mathbb{N}$, a rational matrix $A \in \mathbb{R}^{m \times n}$, and \linear arithmetic circuits $h_i : \mathbb{R}^k \to \mathbb{R}$ for $i = 1, \dots, m$.
	\item[-] takes as input $b \in \mathbb{R}^m$, $y \in \mathbb{R}^k$, and $R \in \mathbb{R}$.
\end{itemize}
The gate outputs a feasible solution of the following feasibility problem (over variables $x \in \mathbb{R}^n$):
\begin{center}\underline{Feasibility Program $\mathcal{Q}$}\end{center}
\begin{equation}\label{eq:feasibility-general}
\begin{split}
h_i(y) > 0 \implies a_i^\transpose x \leq b_i\\
x \in [-R,R]^n
\end{split}
\end{equation}
whenever it is feasible. Note that we may add \emph{unconditional} constraints to feasibility program $\mathcal{Q}$ above by simply setting $h_i(y) = 1$ in the conditional constraints above.

\paragraph{\bf Construction.}
We can solve the feasibility problem $\mathcal{Q}$ in \eqref{eq:feasibility-general} by solving the following optimization problem (over variables $x \in \mathbb{R}^n$)
\begin{equation}\label{eq:feasibility-as-OPT}
\begin{split}
\min \quad &\sum_{i=1}^m \max\{0, h_i(y)\} \cdot \max\{0, a_i^\transpose x-b_i\} \\
\text{ s.t.} \quad & x \in [-R,R]^n
\end{split}
\end{equation}
Note that if $x$ is an optimal solution for \eqref{eq:feasibility-as-OPT} with objective function value $0$, then $x$ is feasible for $\mathcal{Q}$ in \eqref{eq:feasibility-general}. Furthermore, if $\mathcal{Q}$ is feasible, then the optimal value of \eqref{eq:feasibility-as-OPT} is $0$. Thus, if $\mathcal{Q}$ is feasible, then any optimal solution to \eqref{eq:feasibility-as-OPT} will also be a feasible solution to $\mathcal{Q}$ in \eqref{eq:feasibility-general}.

As a result, it suffices to show that we can use the \linoptgate to solve \eqref{eq:feasibility-as-OPT}. Clearly, the feasible domain of \eqref{eq:feasibility-as-OPT} is nonempty. Thus, it remains to show that we can construct a \pseudog computing the subgradient of $x \mapsto f(x;y,b)$, where
$$f(x;y,b) = \sum_{i=1}^m \max\{0, h_i(y)\} \cdot \max\{0, a_i^\transpose x-b_i\}.$$
Note that this function is indeed convex in $x$.

The subgradient of $x \mapsto \max\{0, a_i^\transpose x-b_i\}$ can be expressed as $\Heaviside(a_i^\transpose x-b_i) \cdot a_i$, where we recall that $\Heaviside$ is the Heaviside function defined as 

\begin{equation*}
\Heaviside(z) = \begin{cases}
1 & \text{ if } z>0 \\
[0,1] & \text{ if } z=0 \\
0 & \text{ if } z<0 
\end{cases} \enspace .
\end{equation*}

\noindent Thus, the subgradient of $x \mapsto f(x;y,b)$ can be written as $\sum_{i=1}^m \max\{0, h_i(y)\} \cdot \Heaviside(a_i^\transpose x-b_i) \cdot a_i$. For each $i \in [m]$, we can compute the term $\max\{0, h_i(y)\} \cdot \Heaviside(a_i^\transpose x-b_i)$ by using \cref{lem:mult-by-Heaviside} below. Since the vectors $a_i$ are fixed, we can then compute the product $\max\{0, h_i(y)\} \cdot \Heaviside(a_i^\transpose x-b_i) \cdot a_i$. Doing this for every $i \in [m]$ and then summing up yields an element in the subgradient of $x \mapsto f(x;y,b)$. Thus, we have successfully constructed a \pseudog computing this subgradient. It remains to prove the following lemma we used, which will also be useful later.

\begin{lemma}\label{lem:mult-by-Heaviside}
For the purpose of proving PPAD-membership, we can construct a \pseudog computing $(x,y) \mapsto H(x) \cdot y$.
\end{lemma}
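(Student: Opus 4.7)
The plan is to reuse the Heaviside pseudo-gate from Example~\ref{ex:Heaviside} and then combine its output with a clipping operation to implement the product. Concretely, I would first invoke that pseudo-gate to produce an auxiliary signal $h\in H(x)\subseteq[0,1]$. The remaining challenge, and the only nontrivial part of the construction, is to form the product $h\cdot y$ without a general multiplication gate, which is unavailable in a \linear arithmetic circuit.

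The key idea is to replace the forbidden variable-times-variable product by a constant-times-variable product followed by a clip. Because the lemma is stated ``for the purpose of proving PPAD-membership,'' the pseudo-gate sits inside a larger \linear arithmetic circuit whose signals are a priori bounded by some rational $B$ computable from the circuit description (Lemma~\ref{lem:linear-circuit-bound}); I may therefore assume $|y|\le B$. Writing $y^+=\max\{0,y\}$ and $y^-=\max\{0,-y\}$, I propose the output
\[
w \;=\; \min\{y^+, B h\}\;-\;\min\{y^-, B h\}.
\]
Every symbol here uses only the permitted gate set $\{+,-,\max,\min,\times\zeta\}$ (the constant $B$ enters through a $\times\zeta$ gate), and the only fixed-point auxiliary variable needed is the single $z\in[0,1]$ already present inside the Heaviside pseudo-gate.

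The correctness verification is a three-way case split on the sign of $x$. When $x>0$ the Heaviside pseudo-gate pins $h=1$, so $Bh=B\ge|y|$ saturates both clips and $w = y^+-y^- = y = H(x)\cdot y$. When $x<0$ it pins $h=0$, both minima vanish, and $w=0=H(x)\cdot y$. When $x=0$ the pseudo-gate allows any $h\in[0,1]$; since one of $y^+$, $y^-$ is always $0$, exactly one of the two minimum terms contributes, and the contributing term lies in $[0,|y|]$, placing $w$ in $[\min(0,y),\max(0,y)] = H(0)\cdot y$, which is exactly the correspondence value.

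In summary, the main technical point is simply the use of the a priori bound $B$ supplied by Lemma~\ref{lem:linear-circuit-bound}; it enables the reduction from a variable-times-variable multiplication to a constant-times-variable clip, which is the crux of why a \pseudog (rather than a plain \linear arithmetic circuit) is needed at all. Once $B$ is fixed, the rest is a routine case analysis of the three possible values of the Heaviside auxiliary at a fixed point.
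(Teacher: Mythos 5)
Your construction is correct, but it takes a genuinely different route from the paper. The paper proves the lemma by reducing to $H(x)\cdot\max\{0,y\}$ and then invoking the \linoptgate on the linear program $\max\, v\cdot x$ subject to $0\le v\le\max\{0,y\}$; the optimality conditions of that LP encode exactly the Heaviside correspondence. You instead bypass the \linoptgate entirely: you take the elementary Heaviside \pseudog from Example~\ref{ex:Heaviside} to produce an auxiliary $h\in H(x)$, pull an a priori rational bound $B\ge|y|$ from Lemma~\ref{lem:linear-circuit-bound} (legitimate here because the lemma is scoped ``for the purpose of proving PPAD-membership'', the same device the paper itself uses when fixing $R'$ and $C$ for the \linoptgate), and then set $w=\min\{y^+,Bh\}-\min\{y^-,Bh\}$. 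The three-way case check is right: for $x>0$ saturation gives $w=y$, for $x<0$ both clips vanish, and for $x=0$ the single nonzero clip yields a value in $H(0)\cdot y$, which is all the \pseudog definition requires. The trade-off is that your construction is self-contained and uses only one auxiliary fixed-point variable (the single Heaviside $z$) with the standard gate set, at the cost of making the bound $B$ explicit in the gadget; the paper's version delegates the bound bookkeeping to the \linoptgate machinery and so reads more uniformly alongside the rest of Section~3, but carries the full weight of Proposition~\ref{prop:lin-opt-gate-formal}. Both correctly exploit the same degree of freedom at $x=0$. One small presentational point: it is worth noting explicitly that $Bh$ is formed by a $\times\zeta$ gate with $\zeta=B$, since $B$ is a fixed rational computed before the circuit is built, not a circuit variable — that is precisely what makes the product legal in a \linear arithmetic circuit.
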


\begin{proof}
Note that $H(x) \cdot y$ can be obtained by computing $H(x) \cdot \max\{0,y\} - H(x) \cdot \max\{0,-y\}$. Thus, it suffices to prove that we can compute $H(x) \cdot \max\{0,y\}$. This can indeed be achieved by using the \linoptgate to solve the following LP (in variable $v \in \mathbb{R}$):
\begin{equation*}
\begin{split}
\max \quad &v \cdot x \\
\text{ s.t.} \quad & 0 \leq v \leq \max\{0,y\} \\
\end{split}
\end{equation*}
Note that the feasible domain is nonempty, and the gradient of the objective function is $x$, which can trivially be computed by a \linear arithmetic circuit. It is straightforward to verify that any optimal solution $v$ satisfies $v \in H(x) \cdot \max\{0,y\}$, as desired.
\end{proof}

\subsection{Using the \linoptgate in applications}\label{sec:linopt-applications}

In our applications in subsequent sections we will be constructing \linear arithmetic circuits containing several \linoptgates, corresponding to multiple optimization programs like the program $\mathcal{C}$ above, as well as feasibility programs like the program $\mathcal{Q}$. It will be helpful to be able to reference the inputs to those \linoptgates as opposed to the variables of the corresponding programs, particularly because variables for one program would be inputs to the \linoptgate corresponding to another program and vice versa. We will use the term \emph{\circparams} to refer to those inputs.

\begin{definition}[\Circparams]\label{def:circuit-parameters}
Consider an optimization program in the form of $\mathcal{C}$ or a feasibility program in the form of $\mathcal{Q}$ and let $C$ be its corresponding \linoptgate. We will refer to the inputs of $C$ as \emph{\circparams} of the program $\mathcal{C}$ or $\mathcal{Q}$.
\end{definition}

\noindent Using this terminology, we can argue that a specific program can be solved by the \linoptgate as follows.\medskip

\noindent \textbf{For optimization programs of the form $\mathcal{C}$} we need to argue
    \begin{itemize}
        \item[-] Conditions~\ref{enum:optgate-1} and \ref{enum:optgate-2} in the definition of the \linoptgate for $\mathcal{C}$ above, namely that the domain is non-empty and that the subgradient of the convex objective function is given by a \pseudog $G_{\partial f}$,
        \item[-] that \circparams appear only on the right-hand side of the constraints, but not on the left-hand side.
    \end{itemize}

\noindent \textbf{For feasibility programs of the form $\mathcal{Q}$} we need to argue that
    \begin{itemize}
        \item[-] the feasibility program is \emph{solvable} (i.e., feasible),
        \item[-] the \circparams appear only on the right-hand side of the constraints $a_i^\transpose x \leq b_i$,
        \item[-] only \circparams appear on the left-hand side of the conditional constraints, i.e., in the function $h_i(y) >0$.
    \end{itemize}
The conditions above are obviously equivalent to the optimization and feasibility programs having the form of \eqref{eq:OPT-gate-general} and \eqref{eq:feasibility-general}, since the \circparams are the inputs to the \linoptgate. Here the conditions are simply ``spelled-out'', because it is easier to refer to them in subsequent sections. What is not obvious is how one may argue about the solvability of a feasibility program $\mathcal{Q}$. 

\paragraph{Solvability of $\mathcal{Q}$.} The feasibility programs that will appear in most of our applications will have the following same general form; it will be easy to argue the solvability of those that do not.

\begin{center}\underline{Feasibility Program $\mathcal{Q}_\text{app}$}\end{center}
\begin{align*}
\begin{split}
&h_k(y) - h_{k'}(y) > 0 \implies w_k \leq \rho \cdot w_{k'} \  \text{ for all } k,k' \in [m]\\
&\sum_{j=1}^m w_j = 1,  \ \ \ 
w_i \geq \frac{\rho^m}{m}, \  \text{ for all } i \in [m]
\end{split}
\end{align*}
\noindent for some $0 <  \rho \leq 1$, where $w \in \mathbb{R}^m$ are the variables. For this type of feasibility program $\mathcal{Q}_\text{app}$, we can define the notion of a \emph{\qgraph}.

\begin{definition}[The \qgraph $G_\mathcal{Q}$]\label{def:feasibility-graph-q}
Consider a feasibility program of the form $\mathcal{Q}_\text{app}$. Let $G_\mathcal{Q_\text{app}}$ be the graph that has nodes for each $k \in [m]$, and a directed edge $(k,k')$ if and only if $h_k(y) - h_{k'}(y) > 0$.  We will refer to $G_\mathcal{Q_\text{app}}$ as the \emph{\qgraph} of $\mathcal{Q}_\text{app}$.
\end{definition}

\noindent The following lemma provides a general condition for solvability of $\mathcal{Q}_\text{app}$.

\begin{lemma}[Solvability of $\mathcal{Q}_\text{app}$]\label{lem:feasibility-of-q-graph}
A feasibility program of the form $\mathcal{Q}_\text{app}$ is solvable as long as its \qgraph $G_{\mathcal{Q}_{\text{app}}}$ is acyclic. 
\end{lemma}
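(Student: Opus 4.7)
The plan is to exhibit an explicit feasible $w$. Since $G_{\mathcal{Q}_\text{app}}$ is acyclic, I would take a topological ordering $\pi : [m] \to [m]$, i.e., a bijection such that every edge $(k,k')$ of $G_{\mathcal{Q}_\text{app}}$ satisfies $\pi(k) < \pi(k')$. Intuitively, an edge $(k,k')$ in $G_{\mathcal{Q}_\text{app}}$ corresponds to the constraint $w_k \leq \rho \cdot w_{k'}$, which (since $\rho \leq 1$) forces $w_{k'}$ to be at least as large as $w_k$, so in the topological order we want $w$ to be (weakly) increasing.

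Concretely, define unnormalized weights $\tilde w_k := \rho^{\,m-\pi(k)}$ and then normalize by $w_k := \tilde w_k / S$ where $S := \sum_{j=1}^m \tilde w_j$. I would then verify the three requirements of $\mathcal{Q}_\text{app}$ in turn. First, $\sum_k w_k = 1$ by construction. Second, for any edge $(k,k')$ of $G_{\mathcal{Q}_\text{app}}$ we have $\pi(k') \geq \pi(k)+1$, hence $m-\pi(k) \geq (m-\pi(k'))+1$, so $\tilde w_k = \rho^{m-\pi(k)} \leq \rho \cdot \rho^{m-\pi(k')} = \rho \cdot \tilde w_{k'}$; dividing by $S$ preserves this, yielding $w_k \leq \rho \cdot w_{k'}$. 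Third, since $\pi(k) \geq 1$ we have $\tilde w_k \geq \rho^{m-1}$, and since every $\tilde w_j \leq 1$ we have $S \leq m$; hence
\[
w_k \;=\; \frac{\tilde w_k}{S} \;\geq\; \frac{\rho^{m-1}}{m} \;\geq\; \frac{\rho^{m}}{m},
\]
using $\rho \leq 1$ in the last step.

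There is essentially no technical obstacle here: acyclicity gives the topological order for free, and the geometric choice $\rho^{m-\pi(k)}$ is exactly tuned so that a single step along any edge costs at most a factor of $\rho$, while the longest chain has length at most $m-1$, which matches the required lower bound $\rho^m/m$ after normalization. The only mild subtlety is checking that normalization does not violate the lower bound, which is handled by the crude estimate $S \leq m$.
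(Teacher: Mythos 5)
Your proof is correct and takes essentially the same approach as the paper: both construct an explicit feasible point of the form $w_k = \rho^{h(k)} / \sum_j \rho^{h(j)}$ for a nonnegative ``height'' function $h$ that decreases by at least $1$ along every edge of the \qgraph and is bounded above by $m$. The only difference is the choice of $h$: the paper uses $h(k) = d_k$, the length of the longest path from $k$ to a sink, whereas you use $h(k) = m - \pi(k)$ for a topological order $\pi$; both are immediate from acyclicity and yield the same bounds.
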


\begin{proof}
Assume that $G_{\mathcal{Q}_\text{app}}$ is acyclic, and let $d_k$ be the length of the \emph{longest} path from node $k$ to a sink node in $\mathcal{Q}_\text{app}$. Let 
\begin{equation*}
w_k = \frac{\rho^{d_k}}{\sum_{j=1}^m \rho^{d_j}}, \ \text{ for all } k \in [m].
\end{equation*}
We will argue that these values of $w_k$, for $k \in [m]$, satisfy the constraints of the feasibility program $\mathcal{Q}_\text{app}$. Obviously, $\sum_{k=1}^m w_k =1$ by definition. Since the graph has $m$ nodes, it holds that $d_k \leq m$. Since $\rho \leq 1$, this implies that $\rho^{d_k} \geq \rho^m$ and that $\sum_{j=1}^n \rho^{d_j} \leq m$. Therefore, we obtain that $w_k \geq \frac{\rho^m}{m}$. It remains to show these values of $w_k$ satisfy the conditional constraints. Indeed, consider an edge $(k,k')$ in the \qgraph $\mathcal{Q}_\text{app}$, which, recall, corresponds to a constraint where $h_k(y) - h_{k'}(y) > 0$. Since $(k,k')$ is an edge in $\mathcal{Q}_\text{app}$, we have that $d_k \geq d_{k'}+1$, as there is a path from $k$ to a sink of $\mathcal{Q}_\text{app}$ that starts with the edge $(k,k')$. This implies that
\begin{equation*}
w_k = \frac{\rho^{d_k}}{\sum_{j=1}^m \rho^{d_j}} \leq \frac{\rho^{d_k'+1}}{\sum_{j=1}^m \rho^{d_j}} = \frac{\rho \cdot \rho^{d_k'}}{\sum_{j=1}^m \rho^{d_j}} = \rho \cdot w_{k'},
\end{equation*}
and hence the corresponding conditional constraint is satisfied.
\end{proof}

\noindent Thus, in our applications in which the feasibility programs that we construct are of the form $\mathcal{Q}_\text{app}$ above, it suffices to show that their corresponding \qgraph $\mathcal{Q}_\text{app}$ is acyclic, in order to establish their solvability by \Cref{lem:feasibility-of-q-graph}.

\subsection{Construction and proof for the \linoptgate}\label{sec:linoptgate-proof}

In this section, we prove our main result stated earlier, \cref{thm:linoptgate}. The theorem follows from the following proposition, which proves that a single \linoptgate can be simulated by a standard \linear arithmetic circuit.

\begin{proposition}\label{prop:lin-opt-gate-formal}
Given $n, m, k \in \mathbb{N}$, a rational matrix $A \in \mathbb{R}^{m \times n}$, rational bounds $R > 0$ and $C > 0$, as well as a \linear arithmetic circuit $G_{\partial f} : \mathbb{R}^n \times \mathbb{R}^k \times [0,1]^\ell \to \mathbb{R}^n \times [0,1]^\ell$, we can construct a \linear arithmetic circuit $F: \mathbb{R}^m \times \mathbb{R}^k \times [0,1]^t \to [-R,R]^n \times [0,1]^t$ in time
$$\poly(n, m, k, \size(A), \size(R), \size(C), \size(G_{\partial f}))$$
which satisfies
$$F(b,c,\alpha) = (x,\alpha) \implies x \text{ is an optimal solution to optimization problem $\mathcal{C}$ in \eqref{eq:OPT-gate-general} at } (b,c)$$
whenever $b$ and $c$ satisfy the following three conditions:
\begin{enumerate}
	\item The feasible domain $\{x \in [-R,R]^n : Ax \leq b\}$ is not empty.
	\item The map $x \mapsto f(x;c)$ is a convex function on the feasible domain and its subgradient is given by the \pseudog $G_{\partial f}$.
	\item $\|c\|_\infty \leq C$.
\end{enumerate}
\end{proposition}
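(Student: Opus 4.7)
The approach is to realise $F$ as a \pseudog whose fixed points encode primal-dual KKT solutions of an $\eps$-perturbed version of $\mathcal{C}$, for a rational $\eps > 0$ chosen in advance from the parameters of the gate. I would introduce a fresh dual variable $\mu \in [0,M]^m$ for the constraints $Ax \leq b$ (with $M > 0$ another rational constant), keep $x \in [-R,R]^n$ as the primal, and implement the two coupled fixed-point updates
\begin{align*}
x_j &= \max\bigl\{-R,\ \min\bigl\{R,\ x_j - \eps\, g_j - (A^\transpose \mu)_j\bigr\}\bigr\},\\
\mu_i &= \max\bigl\{0,\ \min\bigl\{M,\ \mu_i + (a_i^\transpose x - b_i)\bigr\}\bigr\},
\end{align*}
where $g = G_{\partial f}(x;c)$ is obtained from the given \pseudog. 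All ingredients---coordinate-wise clipping to fixed rational intervals, inner products with the fixed rows of $A$, scalar multiplication by $\eps$, and the call to $G_{\partial f}$---use only standard \linear gates, so $F$ is a polynomial-size \linear arithmetic circuit whose fixed-point auxiliaries are those of $G_{\partial f}$ together with a rescaled copy of $\mu$.

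A case analysis of the two clipping operations at any fixed point yields the standard KKT conditions on $[-R,R]^n \times [0,M]^m$ for the Lagrangian $L(x,\mu) = \eps f(x;c) + \mu^\transpose(Ax-b)$: the vector $\eps g(x) + A^\transpose \mu$ lies in the outward normal cone of $[-R,R]^n$ at $x$, $\mu \geq 0$, and in each coordinate one of $\mu_i = 0$, $\mu_i = M$, or $a_i^\transpose x = b_i$ holds. As soon as $x$ is feasible, full complementary slackness follows, and combining this with Condition~2 (convexity of $f(\cdot;c)$ and the subgradient given by $G_{\partial f}$), the standard convex KKT sufficient conditions certify that $x$ minimises $f(\cdot;c)$ over $\{x' \in [-R,R]^n : Ax' \leq b\}$, which is the output required by \Cref{def:linoptgate}.

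The main obstacle, and the heart of the construction, is therefore to choose $\eps$ and $M$ so that feasibility $Ax \leq b$ is automatic at every fixed point---and, as \Cref{sec:linopt-vs-opt} stresses, to do this without resorting to a Slater condition. Testing primal optimality of $L(\cdot,\mu)$ against any feasible $x^\star \in \{x' \in [-R,R]^n : Ax' \leq b\}$ (which exists by Condition~1) yields the saddle-point inequality
\begin{equation*}
\mu^\transpose(Ax - b) \leq \eps \bigl(f(x^\star;c) - f(x;c)\bigr) \leq \eps \cdot B,
\end{equation*}
where $B = 2^{\poly(\size(R),\size(C),\size(G_{\partial f}))}$ bounds the oscillation of the Lipschitz function $f(\cdot;c)$ on $[-R,R]^n$ via the subgradient bound from \Cref{lem:linear-circuit-bound}. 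Since $\mu_i = M$ at every coordinate with $a_i^\transpose x > b_i$, this reads $M \sum_{i:\, a_i^\transpose x > b_i}(a_i^\transpose x - b_i) \leq \eps B$, giving only near-feasibility. Upgrading to exact feasibility is the technical crux: I would exploit that every fixed point of $F$ is a rational vector determined by a linear system whose coefficients have bit-length polynomial in $\size(A), \size(R), \size(M), \size(\eps)$, so that any non-zero slack $a_i^\transpose x - b_i > 0$ appearing at a fixed point is bounded below by a rational $\delta > 0$ with a similar polynomial bit-length. Picking $M$ and $\eps$ (of polynomial bit-length in the parameters) so that $\eps B < M \delta$ then contradicts any infeasibility, forcing $Ax \leq b$ at every fixed point and concluding the proof. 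Since $\eps$, $M$, $B$ and $\delta$ all admit polynomial-bit-length representations computable from the gate parameters alone---and crucially independent of the runtime inputs $b$ and $c$---the entire construction of $F$ runs in the claimed polynomial time.
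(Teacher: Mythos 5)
Your construction of $F$ is essentially the paper's (the paper uses the Heaviside \pseudog to implement $\mu_i \in \Heaviside(a_i^\transpose x - b_i)$, i.e., effectively $M=1$, while you clip $\mu$ to $[0,M]$; this is only a cosmetic difference), and the saddle-point step yielding near-feasibility $M\sum_{i:\,a_i^\transpose x > b_i}(a_i^\transpose x - b_i) \leq \eps B$ is correct. But the proposed upgrade from near-feasibility to exact feasibility is where the argument breaks down, and this is precisely the heart of the result.

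You want to lower-bound any nonzero slack $a_i^\transpose x - b_i$ at a fixed point by a rational $\delta > 0$ ``with polynomial bit-length in $\size(A),\size(R),\size(M),\size(\eps)$'' and then pick $\eps,M$ with $\eps B < M\delta$. This cannot work. The fixed-point equations of $F$ involve $b$ (in the $\mu$-update) and $c$ (through $G_{\partial f}$), which are \emph{runtime inputs}, not parameters of the gate, so the fixed point $x$ --- and hence the slack $a_i^\transpose x - b_i$ --- depends on $b$ and $c$ and has no uniform lower bound over all admissible inputs. As $b$ varies continuously, the slack of a hypothetical infeasible fixed point varies continuously and can be made arbitrarily small yet positive; no single $\delta>0$ computable from the gate parameters alone bounds it away from zero. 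But the whole point of the proposition (emphasised in \Cref{sec:linopt-vs-opt}) is that $\eps$ must be fixed when constructing the gate, before $b$ or $c$ is seen, so an $\eps$ that depends on the bit-length of the fixed point is not available. There is also an implicit circularity: $\delta$ depends on $\eps$ through $F$, so ``pick $\eps$ so that $\eps B < M\delta(\eps)$'' is not a well-posed construction.

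What the paper does instead is to avoid any quantity that depends on $b$ or on the magnitude of the slacks, by working with a \emph{normalised} geometric invariant of $A$ alone. \Cref{lem:choice-of-gamma} produces $\gamma^* > 0$, computable from $A$ (via the augmented $\widetilde{A}$), that lower-bounds $\sum_{i \in I_1} \widetilde{a}_i^\transpose u$ over all relevant index sets $I_0,I_1$ and all \emph{unit} vectors $u$ lying in the cone of the corresponding rows; the constraint $\|u\|_2 = 1$ is exactly what removes the dependence on how small the slacks are. Setting $\eps := \gamma^*/K$ (with $K$ from \Cref{lem:linear-circuit-bound}) and, given an infeasible fixed point $x$, taking $u = (x - z^*)/\|x - z^*\|_2$ with $z^*$ the projection of $x$ onto the polytope cut out by the violated/tight constraints, one checks via Farkas' lemma that $u$ is feasible for the lemma's program, obtaining $\sum_{i \in I_1} a_i^\transpose(x - z^*) > \gamma^* \|x - z^*\|_2$, which contradicts the near-feasibility bound $\sum_{i \in I_1} a_i^\transpose(x - z^*) \leq \eps K \|x - z^*\|_2 = \gamma^* \|x - z^*\|_2$. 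This lemma and the projection argument are the missing ingredient in your proposal; without them, the final step of the feasibility analysis does not go through.
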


\cref{thm:linoptgate} follows from \cref{prop:lin-opt-gate-formal} by simply repeatedly replacing every \linoptgate by its corresponding standard \linear arithmetic circuit with auxiliary inputs and outputs, until no \linoptgates are left in the circuit.

The attentive reader might have noticed that the statement of \cref{prop:lin-opt-gate-formal} does not completely correspond to what was claimed in the definition of the \linoptgate (\cref{def:linoptgate}). Namely, the proposition assumes that we are given $R$ as a parameter when we construct the gadget, whereas the original definition allowed $R$ to be an input to the \linoptgate. Furthermore, the proposition also asks for an upper bound $C$ on the length of $c$ to be given as a parameter when constructing the \linoptgate, whereas no such $C$ was mentioned earlier. This is without loss of generality, as we argue next.

Instead of assuming that $R$ is a fixed parameter of the \linoptgate, we can assume that $R$ is an input, but that we are also given an upper bound $R'$ on $R$. This can easily be achieved by applying \cref{prop:lin-opt-gate-formal} using $R'$ as the value of the parameter $R$, and explicitly adding constraints $x_i \leq R$, $-x_i \leq R$ (where $R$ can now indeed be given as an input, since it only appears on the right hand side of constraints). Note that as long as we indeed have $R \leq R'$, the new optimization problem is equivalent to the previous one, and the \linoptgate will correctly solve it.

We still have to provide upper bounds $R'$ and $C$ when constructing the \linoptgate. The crucial observation here is that the statement of \cref{thm:linoptgate} explicitly mentions that \linoptgates can only be used \emph{for the purpose of proving membership in PPAD}. More formally, this should be interpreted as saying: as long as we are ultimately only using the \linoptgates inside a \linear arithmetic circuit with bounded domain (which is the case in \cref{thm:linoptgate}), we can assume that we do not need to explicitly provide upper bounds $R'$ and $C$. The reason for this stems from \cref{lem:linear-circuit-bound}, which states that we can bound the magnitude of any value inside a \linear arithmetic circuit with bounded domain, in terms of the size of the description of the circuit, and the bound on the domain. Thus, whenever we use a \linoptgate in a \linear arithmetic circuit with bounded domain, since the inputs $R$ and $c$ are computed by a \linear arithmetic circuit with bounded domain, we can compute corresponding upper bounds $R'$ and $C$ for the actual construction of the \linoptgate gadget according to \cref{prop:lin-opt-gate-formal}. An important but subtle point is that the \linoptgate is always guaranteed to output an element in $[-R',R']^n$. As a result, even in a \linear arithmetic circuit that uses multiple \linoptgates, we can efficiently compute upper bounds on the magnitudes of numbers given as input to the \linoptgate, even before replacing the \linoptgates by the actual gadgets implementing them (namely the construction of \cref{prop:lin-opt-gate-formal}).

\subsubsection{Construction of the \linear arithmetic circuit \texorpdfstring{$F$}{F}}

We begin with the description of how the circuit $F$ is constructed. For this we use a sufficiently small value $\eps > 0$, which we construct in the next section. We let $t := n + \ell + m$ and write $[0,1]^t = [0,1]^n \times [0,1]^{\ell + m}$.\medskip

\noindent On input $(b,c,\alpha, \beta) \in \mathbb{R}^m \times \mathbb{R}^k \times [0,1]^n \times [0,1]^{\ell + m}$, the circuit $F$ proceeds as follows:
\begin{enumerate}
\item Compute $x := 2R\alpha-R$. In other words, we scale $\alpha \in [0,1]^n$ into a point $x \in [-R,R]^n$.
\item Compute $(v,\overline{\beta}_1,\dots,\overline{\beta}_\ell) := G_{\partial f}(x,c,\beta_1,\dots,\beta_\ell)$.
\item Compute $\mu_i \in \Heaviside(a_i^\transpose  x - b_i)$ for $i = 1, \dots, m$, using auxiliary variables $\beta_{\ell+1}, \dots, \beta_{\ell+m}$ and corresponding outputs $\overline{\beta}_{\ell+1}, \dots, \overline{\beta}_{\ell+m}$. More formally, compute $\overline{\beta}_{\ell+i} := \min\{1,\max\{0,\beta_{\ell+i} + a_i^\transpose  x - b_i\}\}$ and $\mu_i := \beta_{\ell+i}$ for $i = 1, \dots, m$.
\item Compute
$$\overline{x} := \Pi_R \left( x - \eps \cdot v - A^\transpose \mu  \right)$$
where $\Pi_R$ denotes projection to $[-R,R]^n$.
\item Compute $\overline{\alpha} := (\overline{x} + R)/2R$ (i.e., scale $\overline{x} \in [-R,R]^n$ into a point $\overline{\alpha} \in [0,1]^n$).
\item Output $(\overline{x},\overline{\alpha},\overline{\beta})$.
\end{enumerate}
Here $a_i \in \mathbb{R}^n$ denotes the $i$th row of matrix $A$.\medskip 

\noindent Note that we can construct a \linear arithmetic circuit computing $F$ in time 
\[\poly(n, m, k, \size(A), \size(R), \size(C), \size(G_{\partial f})),\] assuming that $\eps$ can computed in polynomial time in these quantities (which we argue in the section). In particular, the projection $\Pi_R(y)$ of some vector $y$ can be obtained by computing $\min\{R,\max\{-R,y_i\}\}$ for each coordinate of $y$.  Furthermore, note that the first output $\overline{x}$ of $F$ always satisfies $\overline{x} \in [-R,R]^n$, even when the fixed point constraints $\alpha = \overline{\alpha}$ and $\beta = \overline{\beta}$ are not satisfied.

\subsubsection{Construction of \eps}

We now describe how $\eps$ is constructed. Since $G_{\partial f}$ is a \linear arithmetic circuit, by \Cref{lem:linear-circuit-bound} we can compute in time $\poly(\size(R), \size(C), \size(G_{\partial f}))$ a rational $K > 0$ such that
\begin{equation}\label{eq:def-of-K}
\max_{(x,c,\beta) \in [-R,R]^n \times [-C,C]^n \times [0,1]^\ell} \|G_{\partial f}(x,c,\beta)\|_2 \leq K.
\end{equation}
For the construction of $\eps$ we will also require the following notation. We define $\widetilde{A} \in \mathbb{R}^{(m+2n) \times n}$ and $\widetilde{b} \in \mathbb{R}^{m+2n}$ such that the system ``$\widetilde{A}x \leq \widetilde{b}$'' corresponds to the system ``$Ax \leq b$'' with the additional constraints ``$x_i \leq R$'' and ``$-x_i \leq R$'' for $i = 1, \dots, n$. In particular, the first $m$ rows of $\widetilde{A}$ correspond to $A$ and the first $m$ entries in $\widetilde{b}$ correspond to $b$. We use $\widetilde{a}_i$ to denote the $i$th row of $\widetilde{A}$.\medskip

\noindent We set $\eps := \gamma^*/K$, where $\gamma^* > 0$ is computed in time $\poly(\size(A))$ using the following lemma. 

\begin{lemma}\label{lem:choice-of-gamma}
Given $A \in \mathbb{R}^{m \times n}$, we can construct in polynomial time a sufficiently small number $\gamma^* > 0$ such that for any nonempty $I \subseteq [m+2n]$ and for every partition of $I$ into $I_0$ and $I_1$, such that $I_1 \neq \emptyset$, if the following optimization problem (in variables $u \in \mathbb{R}^n$ and $\lambda \in \mathbb{R}^{|I|}$)
\begin{equation}\label{eq:OPT-for-gamma}
\begin{alignedat}{2}
\min \quad &\sum_{i \in I_1} \widetilde{a}_i^\transpose u && \\
\text{ s.t.} \quad & \widetilde{a}_i^\transpose u = 0 &&\forall i \in I_0\\
& \widetilde{a}_i^\transpose u \geq 0 &&\forall i \in I_1\\
& \|u\|_2 = 1 && \\
& u = \sum_{i \in I} \lambda_i \widetilde{a}_i && \\
& \lambda_i \geq 0 &&\forall i \in I
\end{alignedat}
\end{equation}
is feasible, then its optimal value $\gamma$ satisfies $\gamma > \gamma^*$.
\end{lemma}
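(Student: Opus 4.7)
My plan is to set $\gamma^* := 2^{-p(\size(A))}$ for a polynomial $p$ that can be constructed explicitly from $A$. The first observation is that the constraints of \eqref{eq:OPT-for-gamma}, apart from the nonlinear normalization $\|u\|_2 = 1$, are homogeneous in $(u, \lambda)$. I will therefore work with the polyhedral cone
\[
P_{I_0, I_1} := \bigl\{(u, \lambda) \in \mathbb{R}^n \times \mathbb{R}^{|I|} : u = \textstyle\sum_i \lambda_i \widetilde{a}_i,\ \lambda \geq 0,\ \widetilde{a}_i^\transpose u = 0 \ \forall i \in I_0,\ \widetilde{a}_i^\transpose u \geq 0 \ \forall i \in I_1\bigr\}
\]
and rephrase $\gamma$ by homogeneity as $\gamma = \inf\{(\sum_{i \in I_1} \widetilde{a}_i^\transpose u)/\|u\|_2 : (u, \lambda) \in P_{I_0, I_1},\ u \neq 0\}$.

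The key identity is that for any $(u, \lambda) \in P_{I_0, I_1}$, using $u = \sum_i \lambda_i \widetilde{a}_i$ and $\widetilde{a}_i^\transpose u = 0$ for $i \in I_0$,
\[
\|u\|_2^2 = u^\transpose u = \sum_{i \in I} \lambda_i (\widetilde{a}_i^\transpose u) = \sum_{i \in I_1} \lambda_i (\widetilde{a}_i^\transpose u).
\]
If $u \neq 0$ the right-hand side is strictly positive, and since every term is nonnegative, some $i \in I_1$ must have $\widetilde{a}_i^\transpose u > 0$. Consequently $\sum_{i \in I_1} \widetilde{a}_i^\transpose u > 0$, so $\gamma > 0$.

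Next I would turn the qualitative bound into a quantitative one. Let $\{(v_j, \mu_j)\}_j$ denote the extreme rays of $P_{I_0, I_1}$. Any $(u, \lambda) \in P_{I_0, I_1}$ decomposes as $\sum_j \alpha_j (v_j, \mu_j)$ with $\alpha_j \geq 0$, so the triangle inequality $\|u\|_2 \leq \sum_j \alpha_j \|v_j\|_2$ combined with the standard ``weighted mean is at least the minimum'' inequality yields
\[
\frac{\sum_{i \in I_1} \widetilde{a}_i^\transpose u}{\|u\|_2} \geq \min_{j : v_j \neq 0} \frac{\sum_{i \in I_1} \widetilde{a}_i^\transpose v_j}{\|v_j\|_2}.
\]
Applied to each such extreme ray, the identity above shows the numerator on the right is strictly positive.

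It remains to lower-bound the single-extreme-ray ratio uniformly in $(I, I_0, I_1)$, which is the main obstacle. Each extreme ray is a basic feasible solution to the linear system defining $P_{I_0, I_1}$; its coefficients come from $\widetilde{A}$, which is $A$ augmented by $\pm$ identity rows. By Cramer's rule and Hadamard's inequality I can rescale $(v_j, \mu_j)$ so that its entries are rationals of bit complexity $\poly(\size(A))$, with the polynomial depending only on $\widetilde{A}$ and not on the particular choice of tight constraints determining the extreme ray. Consequently the positive rational $\sum_{i \in I_1} \widetilde{a}_i^\transpose v_j$ has bit complexity $\poly(\size(A))$, so it is at least $2^{-q_1(\size(A))}$, and $\|v_j\|_2 \leq 2^{q_2(\size(A))}$ by the same token. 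Setting $p := q_1 + q_2$ yields a single polynomial, computable in polynomial time from $A$, with $\gamma \geq 2^{-p(\size(A))}$ uniformly over all (exponentially many) triples, as required.
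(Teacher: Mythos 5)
Your argument is correct and shares the same key algebraic ingredient as the paper's proof — the identity $\|u\|_2^2 = u^\transpose u = \sum_{i \in I} \lambda_i\, \widetilde{a}_i^\transpose u = \sum_{i \in I_1} \lambda_i\, \widetilde{a}_i^\transpose u$, which forces $\sum_{i \in I_1} \widetilde{a}_i^\transpose u > 0$ whenever $u \neq 0$ — but you take a genuinely different route to the quantitative bound. The paper relaxes $\|u\|_2 = 1$ to $\|u\|_\infty = 1$ (paying a factor $\sqrt{n}$), then disjuncts over which coordinate $u_k = \pm 1$ realizes the sup-norm, thereby replacing the nonlinear normalization by a finite family of honest LPs whose optimal values inherit polynomially bounded bit complexity from $A$. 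You instead drop the normalization entirely, observe that the remaining feasible set is a polyhedral cone, decompose into extreme rays, lower-bound the ratio $\bigl(\sum_{i \in I_1}\widetilde{a}_i^\transpose u\bigr)/\|u\|_2$ by its minimum over the nonzero extreme rays, and apply Cramer/Hadamard to those rays directly. Both routes are correct and yield a $\gamma^*$ computable in polynomial time without enumerating the exponentially many triples $(I, I_0, I_1)$; yours avoids the $\ell_\infty$ detour and the $2n$-fold coordinate disjunction at the cost of invoking the Minkowski--Weyl decomposition. You should verify pointedness of the cone explicitly, since otherwise the conic decomposition requires lineality directions as well: pointedness holds here because $\lambda \geq 0$ and $-\lambda \geq 0$ force $\lambda = 0$ and hence $u = \sum_i \lambda_i \widetilde{a}_i = 0$ on any line through the origin. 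Two minor points: (i) your write-up should make explicit that the key identity applied to each extreme ray $(v_j, \mu_j)$ with $v_j \neq 0$ gives $\sum_{i\in I_1} \widetilde{a}_i^\transpose v_j > 0$, otherwise the per-ray minimum could be zero (you gesture at this but it deserves a sentence); and (ii) to obtain the strict inequality $\gamma > \gamma^*$ demanded by the statement rather than $\gamma \geq 2^{-p(\size(A))}$, set $\gamma^*$ strictly below the bound you derive, e.g.\ $\gamma^* = 2^{-p(\size(A))-1}$.
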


\begin{proof}
Let $I$, $I_0$, and $I_1$ be as specified in the statement of the lemma, and such that optimization problem \eqref{eq:OPT-for-gamma} is feasible. Letting $\gamma$ denote the optimal value of \eqref{eq:OPT-for-gamma}, note that $\gamma \geq \gamma'/\sqrt{n}$, where $\gamma'$ is the optimal value of the same optimization problem, except that we replace the constraint ``$\|u\|_2 = 1$'' by ``$\|u\|_\infty = 1$'', namely:
\begin{equation}\label{eq:OPT-for-gamma-infty}
\begin{alignedat}{2}
\min \quad &\sum_{i \in I_1} \widetilde{a}_i^\transpose u && \\
\text{ s.t.} \quad & \widetilde{a}_i^\transpose u = 0 &&\forall i \in I_0\\
& \widetilde{a}_i^\transpose u \geq 0 &&\forall i \in I_1\\
& \|u\|_\infty = 1 && \\
& u = \sum_{i \in I} \lambda_i \widetilde{a}_i && \\
& \lambda_i \geq 0 &&\forall i \in I
\end{alignedat}
\end{equation}
This is due to the fact that all other constraints are invariant to scaling of $(u,\lambda)$. In particular, note that \eqref{eq:OPT-for-gamma-infty} is also feasible.

The optimal value $\gamma'$ of \eqref{eq:OPT-for-gamma-infty} satisfies $\gamma' = \min_{k \in I, s \in \{+1,-1\}} \gamma'_{k,s}$, where $\gamma'_{k,s}$ is the optimal value of the following LP
\begin{equation*}
\begin{alignedat}{2}
\min \quad &\sum_{i \in I_1} \widetilde{a}_i^\transpose u && \\
\text{ s.t.} \quad & \widetilde{a}_i^\transpose u = 0 &&\forall i \in I_0\\
& \widetilde{a}_i^\transpose u \geq 0 &&\forall i \in I_1\\
& -1 \leq u_j \leq 1 \quad &&\forall j \in [n] \\
& u_k = s && \\
& u = \sum_{i \in I} \lambda_i \widetilde{a}_i && \\
& \lambda_i \geq 0 &&\forall i \in I
\end{alignedat}
\end{equation*}
if it is feasible, and $\gamma'_{k,s} := + \infty$ otherwise.

Next, we show that $\gamma'_{k,s} > 0$. This clearly holds if the LP is infeasible. Assume towards a contradiction that the LP is feasible, but $\gamma'_{k,s} \leq 0$. Then, $\gamma'_{k,s} = 0$ and it is achieved by a feasible $u$ with $\widetilde{a}_i^\transpose u = 0$ for all $i \in I$. Since $u$ is feasible, there exists $\lambda \geq 0$ with $u = \sum_{i \in I} \lambda_i \widetilde{a}_i$. But then $\|u\|_2^2 = u^\transpose u = \sum_{i \in I} \lambda_i \widetilde{a}_i^\transpose u = 0$, which contradicts $|u_k| = 1$.

Finally, note that the bit-complexity of the optimal value of any LP of this form is polynomially bounded by the bit-representation of matrix $A$. Indeed, the number of bits needed to write down such an LP for any choice of $k \in I, s \in \{+1,-1\}$, and for any $I$, $I_0$, and $I_1$ is bounded by some polynomial quantity in $\size(A)$. In particular, given $A$, we can compute in polynomial time a rational value $\gamma^* > 0$ such that $\gamma^* < \gamma'_{k,s}/\sqrt{n}$ for all $k \in I, s \in \{+1,-1\}$, and for all possible $I$, $I_0$, and $I_1$. By the arguments above, it then follows that $\gamma^* < \gamma$.
\end{proof}

\subsubsection{Analysis: Fixed point constraints}

In this section, we state and prove some simple properties that follow from the fixed point constraints $(\alpha, \beta) = (\overline{\alpha}, \overline{\beta})$. Recall that the system ``$\widetilde{A}x \leq \widetilde{b}$'' corresponds to the system ``$Ax \leq b$'' with the additional constraints ``$x_i \leq R$'' and ``$-x_i \leq R$'' for $i = 1, \dots, n$. In particular, the first $m$ rows of $\widetilde{A}$ correspond to $A$ and the first $m$ entries in $\widetilde{b}$ correspond to $b$.\medskip

\noindent The circuit $F$ has been constructed in order to ensure that the following properties hold.

\begin{claim}\label{clm:OPT-gate-FP-constraints}
If $(\alpha, \beta) = (\overline{\alpha}, \overline{\beta})$, then:
\begin{enumerate}
\item $\mu_i \in \Heaviside(a_i^\transpose x - b_i)$ for $i = 1, \dots, m$.
\item $\|v\|_2 \leq K$ and, if $Ax \leq b$, then $v \in \partial f(x; c)$.
\item $x = \overline{x}$ and
\begin{equation}\label{eq:KKT-in-claim}
\eps \cdot v + \widetilde{A}^\transpose \widetilde{\mu} = 0
\end{equation}
where $\widetilde{\mu} \in \mathbb{R}^{m+2n}$ satisfies $\widetilde{\mu} \geq 0$, as well as
$$\widetilde{\mu}_i > 0 \implies \widetilde{a}_i^\transpose x \geq \widetilde{b}_i$$
for all $i = 1, \dots, m+2n$. Furthermore, for $i \in [m]$ we also have $\widetilde{\mu}_i = \mu_i$.
\end{enumerate}
\end{claim}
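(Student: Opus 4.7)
The plan is to verify each item by unfolding the definitions in the construction of $F$ and using the fixed point constraints on $(\alpha, \beta)$.

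For item 1, the fixed point condition on $\beta_{\ell+i}$ gives $\beta_{\ell+i} = \min\{1, \max\{0, \beta_{\ell+i} + a_i^\transpose x - b_i\}\}$, which is exactly the Heaviside \pseudog from \Cref{ex:Heaviside}. Thus $\mu_i = \beta_{\ell+i} \in \Heaviside(a_i^\transpose x - b_i)$, as required.

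For item 2, recall $x = 2R\alpha - R \in [-R,R]^n$ and $\|c\|_\infty \leq C$ by the assumption of \Cref{prop:lin-opt-gate-formal}. The bound $\|v\|_2 \leq K$ then follows directly from the definition of $K$ in \eqref{eq:def-of-K}, since $v$ is the first output of $G_{\partial f}(x,c,\beta)$. If additionally $Ax \leq b$, then $x$ is in the feasible domain, so by condition~\ref{enum:optgate-2} of the \linoptgate the function $x \mapsto f(x;c)$ is convex and its subgradient is given by the \pseudog $G_{\partial f}$. The fixed point constraints on $\beta_1, \dots, \beta_\ell$ are precisely what is needed for $G_{\partial f}$ to output a valid subgradient, yielding $v \in \partial f(x;c)$.

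Item 3 is the one that requires a little unpacking. The equality $x = \overline{x}$ is immediate: the fixed point constraint $\alpha = \overline{\alpha}$ together with the invertible affine scaling $x = 2R\alpha - R$, $\overline{\alpha} = (\overline{x}+R)/2R$ gives $x = \overline{x}$. To construct $\widetilde{\mu}$, I would write down the projection identity $x = \Pi_R(x - \eps v - A^\transpose \mu)$ coordinate-by-coordinate and split it using
\[
\lambda^+_i := \max\{0, (x - \eps v - A^\transpose \mu)_i - x_i\}, \quad \lambda^-_i := \max\{0, x_i - (x - \eps v - A^\transpose \mu)_i\},
\]
so that $-\eps v_i - (A^\transpose \mu)_i = \lambda^+_i - \lambda^-_i$. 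Defining $\widetilde{\mu} \in \mathbb{R}^{m+2n}$ by $\widetilde{\mu}_i = \mu_i$ for $i \in [m]$, $\widetilde{\mu}_{m+i} = \lambda^+_i$ for the rows encoding $x_i \leq R$, and $\widetilde{\mu}_{m+n+i} = \lambda^-_i$ for the rows encoding $-x_i \leq R$, one checks $\widetilde{A}^\transpose \widetilde{\mu} = A^\transpose \mu + (\lambda^+ - \lambda^-) = -\eps v$, which is \eqref{eq:KKT-in-claim}. Nonnegativity $\widetilde{\mu} \geq 0$ follows from item~1 and from the definitions of $\lambda^+, \lambda^-$.

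The complementarity statement $\widetilde{\mu}_i > 0 \Rightarrow \widetilde{a}_i^\transpose x \geq \widetilde{b}_i$ splits into three cases: for $i \in [m]$ it follows from item~1 (if $\mu_i > 0$ then $a_i^\transpose x - b_i \geq 0$, as $\Heaviside$ vanishes on negatives); for the $m+i$ rows, $\lambda^+_i > 0$ forces the unprojected coordinate to exceed $R$, so the projection pins $x_i = R$; and symmetrically for the $m+n+i$ rows, $\lambda^-_i > 0$ forces $x_i = -R$. No step here is difficult; the only thing to be careful about is the bookkeeping between the projection operator and the two sets of sign constraints on $\widetilde{\mu}$. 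The genuinely hard part of the overall analysis—showing that the chosen $\eps$ forces feasibility $Ax \leq b$ at the fixed point, and hence optimality—lies outside this claim and will be handled in the subsequent lemmas using \Cref{lem:choice-of-gamma}.
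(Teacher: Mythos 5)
Your proof is correct and follows the same approach as the paper's own proof: items 1 and 2 are immediate from the construction of $F$ and the hypotheses, and for item 3 both you and the paper decompose the projection identity $x = \Pi_R(x - \eps v - A^\transpose \mu)$ into the nonnegative slack pair $\lambda^+, \lambda^-$ with the associated complementarity conditions, and assemble $\widetilde{\mu} = (\mu, \lambda^+, \lambda^-)$. The only difference is cosmetic: you write $\lambda^\pm_i$ explicitly as $\max\{0, \pm(y_i - x_i)\}$ whereas the paper simply asserts their existence from the projection structure.
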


\begin{proof}
The first statement follows from the construction of the Heaviside \pseudog. The fact that $\|v\|_2 \leq K$ follows from the definition of $K$ (see~\eqref{eq:def-of-K}) and the assumption that $\|c\|_\infty \leq C$. The fact that $v \in \partial f(x; c)$ when $Ax \leq b$ follows from the assumption that $G_{\partial f}$ is a \pseudog computing $\partial f(x; c)$, whenever $x$ lies in feasible domain.

Since $\alpha = \overline{\alpha}$, it follows that $x = \overline{x}$, i.e., $x = \Pi_R(x - \eps \cdot v - A^\transpose \mu)$. This implies that
$$x - \eps \cdot v - A^\transpose \mu - x = I_n \lambda^+ - I_n \lambda^-$$
where $\lambda^+, \lambda^- \in \mathbb{R}^n$ are nonnegative, and additionally satisfy $\lambda^+_j > 0 \implies x_j = R$, and $\lambda^-_j > 0 \implies x_j = -R$. Here $I_n \in \mathbb{R}^{n \times n}$ denotes the identity matrix. Finally, noting that
$$\widetilde{A} = \begin{bmatrix}
A\\
I_n\\
-I_n
\end{bmatrix}$$
we can rewrite the equation as
$$\eps \cdot v + \widetilde{A}^\transpose \widetilde{\mu} = 0$$
where we let
$$\widetilde{\mu} = \begin{bmatrix}
\mu\\
\lambda^+\\
\lambda^-
\end{bmatrix}$$
Note that we indeed have $\widetilde{\mu} \geq 0$ and $\widetilde{\mu}_i > 0 \implies \widetilde{a}_i^\transpose x \geq \widetilde{b}_i$. In particular, for $i \in [m]$ this follows from $\widetilde{\mu}_i = \mu_i \in \Heaviside(a_i^\transpose x - b_i)$.
\end{proof}

\noindent In the remainder of the proof, we show that $x$ must necessarily be an optimal solution to the optimization problem. We first show that if $x$ is feasible, it is necessarily optimal. We complete the proof by proving that $x$ must necessarily be feasible.

\subsubsection{Analysis: Feasibility implies optimality}

Consider the case where $x$ is feasible, i.e., $Ax \leq b$. We will show that $x$ must be an optimal solution. Since $x \in [-R,R]^n$ and $Ax \leq b$, it follows that $\widetilde{A}x \leq \widetilde{b}$. Thus, the third statement in \Cref{clm:OPT-gate-FP-constraints} yields that
\begin{equation}\label{eq:KKT}
\eps \cdot v + \widetilde{A}^\transpose \widetilde{\mu} = 0
\end{equation}
where $\widetilde\mu \geq 0$ and where
$$\widetilde{\mu}_i > 0 \implies \widetilde{a}_i^\transpose x \geq \widetilde{b}_i \implies \widetilde{a}_i^\transpose x = \widetilde{b}_i.$$
In other words, $\widetilde{\mu}_i$ can only be strictly positive if the $i$th constraint is tight. Furthermore, since $x$ is feasible, the second statement in \Cref{clm:OPT-gate-FP-constraints} yields that $v \in \partial f(x;c)$. As a result, given that $\eps > 0$, the equality \eqref{eq:KKT} can be interpreted as saying that the Karush-Kuhn-Tucker (KKT) conditions hold at point $x$ for the following constrained minimization problem
\begin{equation*}
\begin{split}
\min \quad &f(x;c)\\
\text{ s.t.} \quad & \widetilde{A}x \leq \widetilde{b}\\
\end{split}
\end{equation*}
which is the same as our optimization problem $\mathcal{C}$ in \eqref{eq:OPT-gate-general}. Since, by assumption, $f$ is convex on the feasible domain, the KKT conditions are also sufficient for optimality, and thus $x$ is an optimal solution.

Formally, consider any feasible point $z$. We will show that $f(z) \geq f(x)$. Taking the inner product of \eqref{eq:KKT} with $z-x$ yields
$$\eps \cdot v^\transpose (z-x) = (\widetilde{A}^\transpose \widetilde{\mu})^\transpose (x-z) = \widetilde{\mu}^\transpose \widetilde{A}(x-z) \geq 0$$
since $\widetilde{\mu} \geq 0$ and $\widetilde{\mu}_i > 0 \implies \widetilde{a}_i^\transpose x = \widetilde{b}_i \geq \widetilde{a}_i^\transpose z$. Given that $\eps > 0$, it follows that $v^\transpose (z-x) \geq 0$. Finally, using the fact that $v \in \partial f(x;c)$, i.e., $v$ is a subgradient of $f(\cdot;c)$ at $x$, together with the definition of subgradients, we obtain
$$f(z) \geq f(x) + v^\transpose (z-x) \geq f(x)$$
i.e., $x$ is a global minimum of $f$ on the feasible domain.

\subsubsection{Analysis: Feasibility}

We show that $x$ must necessarily be feasible, i.e., $Ax \leq b$. Assume towards a contradiction that $x$ is not feasible. 

Let $J := \{i \in [m+2n]: \widetilde{a}_i^\transpose x \geq \widetilde{b}_i\}$ and note that $J$ is nonempty, since $x$ is infeasible. By \Cref{clm:OPT-gate-FP-constraints} we have $\widetilde{\mu}_i > 0 \implies \widetilde{a}_i^\transpose x \geq \widetilde{b}_i \implies i \in J$, and thus $\widetilde{\mu}_i = 0$ for all $i \in [m+2n] \setminus J$. In particular, $\widetilde{A}^\transpose \widetilde{\mu} = \widetilde{A}_J^\transpose \widetilde{\mu}_J$, where $\widetilde{A}_J$ denotes the restriction of $\widetilde{A}$ to the subset of rows $J$, and similarly for $\widetilde{\mu}_J$. As a result, we can rewrite \eqref{eq:KKT-in-claim} from \Cref{clm:OPT-gate-FP-constraints} as
\begin{equation}\label{eq:OPT-gate-analysis-feas}
\eps \cdot v + \widetilde{A}_J^\transpose \widetilde{\mu}_J = 0.
\end{equation}

Let $z^*$ be the projection of $x$ onto the convex set $D_J := \{y \in \mathbb{R}^n: \widetilde{A}_J y \leq \widetilde{b}_J\}$, which is nonempty, since the feasible region $D := \{y \in \mathbb{R}^n: \widetilde{A} y \leq \widetilde{b}\}$ is nonempty by assumption. Note that $z^* \neq x$, because if $x \in D_J$, then we would also have $x \in D$ by the definition of $J$.

Let $I := \{i \in J: \widetilde{a}_i^\transpose z^* = \widetilde{b}_i\}$, i.e., the set of all constraints in $J$ that are tight at $z^*$. Note that $z^*$, which is the projection of $x$ onto $D_J$, is also the projection of $x$ onto $D_I := \{y \in \mathbb{R}^n: \widetilde{A}_I y \leq \widetilde{b}_I\}$. Indeed, assume that $z^*$ is not the projection of $x$ onto $D_I$. Then there exists $y \in D_I$ with $(x - z^*)^\transpose (y - z^*) > 0$. But since all constraints in $J \setminus I$ are strictly satisfied at $z^*$, there exists a point $y^*$ on the segment $[z^*,y]$ that satisfies $y^* \in D_J$ and $(x - z^*)^\transpose (y^* - z^*) > 0$. This is a contradiction to the fact that $z^*$ is the projection of $x$ onto $D_J$. Thus, $z^*$ is indeed the projection of $x$ onto $D_I$.

Next, we partition $I$ into two sets: $I_0 := \{i \in I: \widetilde{a}_i^\transpose x = \widetilde{b}_i\}$ and $I_1 := \{i \in I: \widetilde{a}_i^\transpose x > \widetilde{b}_i\}$. Note that by construction $I_0 \cap I_1 = \emptyset$ and $I = I_0 \cup I_1$, since $I \subseteq J$. Furthermore, $I_1 \subseteq [m]$, because we have $x \in [-R,R]^n$, which means that $\widetilde{a}_i^\transpose x \leq \widetilde{b}_i$ for all $i \in [m+2n] \setminus [m]$. Finally, $I_1 \neq \emptyset$. Indeed, if $I_1 = \emptyset$, then $x \in D_I$, which contradicts $z^* \neq x$.

Taking the inner product of \eqref{eq:OPT-gate-analysis-feas} with $z^*-x$ yields
\begin{equation*}
\eps \cdot v^\transpose (z^*-x) = (\widetilde{A}_J^\transpose \widetilde{\mu}_J)^\transpose (x - z^*) = \widetilde{\mu}_J^\transpose \widetilde{A}_J(x - z^*) \geq \widetilde{\mu}_{I_1}^\transpose \widetilde{A}_{I_1}(x - z^*) = \mu_{I_1}^\transpose A_{I_1}(x - z^*) = \sum_{i \in I_1} a_i^\transpose (x - z^*)
\end{equation*}
where we used $\mu_i \geq 0$ and $\widetilde{a}_i^\transpose x \geq \widetilde{b}_i \geq \widetilde{a}_i^\transpose z^*$ for all $i \in J$, $\widetilde{\mu}_{I_1} = \mu_{I_1}$, $\widetilde{A}_{I_1} = A_{I_1}$, and $\mu_i = 1$ for all $i \in I_1 \subseteq [m]$, since $\mu_i \in \Heaviside(a_i^\transpose x - b_i)$. As a result
\begin{equation}\label{eq:OPT-gate-contradiction-key-lemma}
\sum_{i \in I} a_i^\transpose (x - z^*) \leq \eps \cdot v^\transpose (z^*-x) \leq \eps \|v\|_2 \|x - z^*\|_2 \leq \eps K \|x - z^*\|_2 \leq \gamma^* \|x - z^*\|_2
\end{equation}
where we used $\|v\|_2 \leq K$ (\Cref{clm:OPT-gate-FP-constraints}) and $\eps = \gamma^*/K$.

We now show that this is a contradiction to our choice of $\gamma^*$. Let $u := (x - z^*)/\|x-z^*\|_2$ and note that $u$ is well defined, since $z^* \neq x$.  Assume, for now, that $u$ is feasible for optimization problem \eqref{eq:OPT-for-gamma} from \Cref{lem:choice-of-gamma}, which we repeat here for convenience:
\begin{equation*}
\begin{alignedat}{2}
\min \quad &\sum_{i \in I_1} \widetilde{a}_i^\transpose u && \\
\text{ s.t.} \quad & \widetilde{a}_i^\transpose u = 0 &&\forall i \in I_0\\
& \widetilde{a}_i^\transpose u \geq 0 &&\forall i \in I_1\\
& \|u\|_2 = 1 && \\
& u = \sum_{i \in I} \lambda_i \widetilde{a}_i && \\
& \lambda_i \geq 0 &&\forall i \in I
\end{alignedat}
\end{equation*}
In particular, the optimization problem is feasible, and thus by \Cref{lem:choice-of-gamma} its optimal value is (strictly) lower bounded by $\gamma^*$. As a result, $\sum_{i \in I_1} \widetilde{a}_i^\transpose (x-z^*)/\|x-z^*\|_2 > \gamma^*$, which yields $\sum_{i \in I_1} \widetilde{a}_i^\transpose (x-z^*) > \gamma^* \|x-z^*\|_2$. But this is a contradiction to \eqref{eq:OPT-gate-contradiction-key-lemma}.

It remains to show that $u = (x - z^*)/\|x-z^*\|_2$ is indeed feasible for optimization problem \eqref{eq:OPT-for-gamma}. The first two constraints are satisfied, because $\widetilde{a}_i^\transpose x = \widetilde{b}_i = \widetilde{a}_i^\transpose z^*$ for all $i \in I_0$, and $\widetilde{a}_i^\transpose x > \widetilde{b}_i = \widetilde{a}_i^\transpose z^*$ for all $i \in I_1$. Clearly, $\|u\|_2 = 1$. Thus, it remains to prove that there exists $\lambda \geq 0$ such that $x-z^* = \widetilde{A}_I^\transpose \lambda$. Since $z^*$ is the projection of $x$ onto $D_I = \{y \in \mathbb{R}^n: \widetilde{A}_I y \leq \widetilde{b}_I\}$, it follows that $(x-z^*)^\transpose (y-z^*) \leq 0$ for all $y \in D_I$. Given that $\widetilde{A}_I z^* = \widetilde{b}_I$, we thus obtain that $(x-z^*)^\transpose y \leq 0$ for all $y$ with $\widetilde{A}_I y \leq 0$. However, by Farkas' lemma \citep{Boyd2004convex}, we know that exactly one of the following two statements holds:
\begin{enumerate}
	\item $\exists y \in \mathbb{R}^n: \quad \widetilde{A}_I y \leq 0$ and $(x-z^*)^\transpose y > 0$
	\item $\exists \lambda \in \mathbb{R}^{|I|}: \quad \widetilde{A}_I^\transpose \lambda= x-z^*$ and $\lambda \geq 0$
\end{enumerate}
Since we have shown that the first statement does not hold, we deduce that there exists $\lambda \geq 0$ such that $\widetilde{A}_I^\transpose \lambda = x-z^*$, as desired. As a result, $u = (x - z^*)/\|x-z^*\|_2$ is indeed feasible for optimization problem \eqref{eq:OPT-for-gamma}, and the proof is complete.

\subsection{Implicit Functions and Correspondences}\label{sec:implicit}

In this section we will construct \linear arithmetic circuits computing univariate piecewise-linear functions and piecewise-constant correspondences, based on a succinct representation given by a Boolean circuit, built from Boolean \AND, \OR\ and \NOT\ gates. We remark that conceptually, the ideas around the use of the triangle-wave function (see \Cref{def:36-H}) and the bit multiplication (see \Cref{def:bit-mult} and the subsequent discussion) are attributed to \citet{FearnleyGHS23-quadratic}, who developed them, albeit for a different setting.

By interpreting non-negative integers by their binary representation, a Boolean function may be viewed as computing an integer-valued function. For $b=(b_{n-1},\dots,b_0) \in \{0,1\}^n$, we denote by $\bitval_n(b)$ the number $\bitval_n(b)=(b_{n-1},\dots,b_0)_2=\sum_{i=0}^{n-1}b_i2^i$ encoded by $b$ in binary.
\begin{definition}
Let $C$ be a Boolean circuit with $n$ inputs and $n$ outputs, thereby computing a function $C \colon \zo^n \to \zo^n$, and let $1 \leq N < 2^n$ be an integer. We index the inputs and outputs of $C$ with $\{0,\dots,n-1\}$ and we consider $b \in \zo^n$ to represent the integer $\bitval_n(b)$.
The integer-valued function represented by $(C,N)$ is the function $f \colon \{0,1,\dots,N\} \to \{0,1,\dots,N\}$ given by 
 \[
f(\bitval_n(b)) = \min\{N, \bitval_n(C(b))\} = \min\{N, (C(b)_{n-1},\dots,C(x)_0)_2\} \enspace ,
 \]
 for $b \in \zo^n$ such that $\bitval_n(b)\leq N$.
\end{definition}
A natural piecewise-linear function is then given by linear interpolation between function values.
\begin{definition}\label{def:implicit-piecewise-linear}
Let $C$ be a Boolean circuit with $n$ inputs and $n$ outputs, and let $1 \leq N < 2^n$ be an integer. The piecewise-linear function represented by $(C,N)$ is the function $g \colon \RR \to \RR$ given by
\[
g(x) = 
\begin{cases}
f(0) & \text{for } x < 0\\
(1- (x-\floor{x}) ) f(\floor{x}) + ( x-\floor{x} ) f(\floor{x}+1) & \text{for } 0 \leq x < N\\
f(N) & \text{for } N \leq x
\end{cases} \enspace ,
\]
where $f$ is the integer-valued function represented by $(C,N)$.
\end{definition}
In a similar way, a natural piecewise-constant correspondence is given by extending function values to the right, until the next function value is defined.
\begin{definition}\label{def:implicit-piecewise-constant}
  Let $C$ be a Boolean circuit with $n$ inputs and $n$ outputs, and let $1 \leq N < 2^n$ be an integer. The piecewise-constant correspondence represented by $(C,N)$ is the correspondence $g \colon \RR \rightrightarrows \RR$ given by
\[
g(x) = 
\begin{cases}
f(0) & \text{for } x \leq 0\\
[\min(f(x-1),f(x)),\max(f(x-1),f(x))] & \text{for } x \in \{1,2,\dots,N\}\\
f(\floor{x}) & \text{for } x \in [0,N] \setminus \{0,1,\dots,N\}\\
f(N) & \text{for } N < x
\end{cases} \enspace ,
\]
where $f$ is the integer-valued function represented by $(C,N)$.
\end{definition}
An example illustrating \cref{def:implicit-piecewise-linear} and \cref{def:implicit-piecewise-constant} is given in \cref{fig:implicit-examples}. 
\begin{figure}[h]
\begin{subfigure}[b]{0.45\textwidth}
\centering
\begin{tikzpicture}[scale=0.5]
  \draw[->] (-1.5,0) -- (10.5,0) node[right] {};
  \draw[->] (0,0) -- (0,4.5) node[above] {};
  \foreach \x in {-1,0,...,10}
    \draw (\x,1pt) -- (\x,-3pt)
    node[anchor=north] {\x};
  \foreach \y in {1,...,4}
    \draw (1pt,\y) -- (-3pt,\y) 
    node[anchor=east] {\y}; 
  \def\fval{1,3,2,0,1,3,1,2,3,2}
  \foreach \y [count=\x from 0] in \fval{
    \fill (\x,\y) circle (4pt);
  }
  \draw[blue, line width=1pt] (-1.5,1) \foreach \y [count=\x from 0] in \fval {
-- (\x,\y)
} -- (10.5,2);
\end{tikzpicture}
\end{subfigure}
\begin{subfigure}[b]{.45\textwidth}
\centering
\begin{tikzpicture}[scale=0.5]
  \draw[->] (-1.5,0) -- (10.5,0) node[right] {};
  \draw[->] (0,0) -- (0,4.5) node[above] {};  
  \foreach \x in {-1,0,...,10}
    \draw (\x,1pt) -- (\x,-3pt)
    node[anchor=north] {\x};
  \foreach \y in {1,...,4}
    \draw (1pt,\y) -- (-3pt,\y) 
    node[anchor=east] {\y}; 
  \def\fval{1,3,2,0,1,3,1,2,3,2}
  \foreach \y [count=\x from 0] in \fval{
    \fill (\x,\y) circle (4pt);
  }
  \draw[blue, line width=1pt] (-1.5,1) \foreach \y [count=\x from 0] in \fval {
-- (\x,\y) -- (\x+1,\y)
} -- (10.5,2);
\end{tikzpicture}
  \end{subfigure}
\caption{The piecewise-linear function (left) and piecewise-constant correspondence (right) defined implicitly by a Boolean circuit $C$ and $N=9$ computing an integer valued function $f$ as shown with dots.}
\label{fig:implicit-examples}
\end{figure}
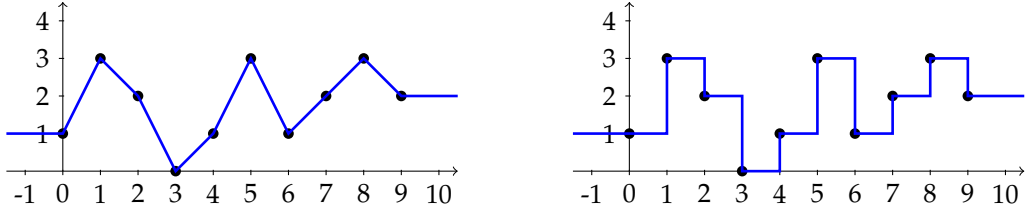

A function that will be used in both constructions is the familiar square-wave function, restricted to a bounded number of periods, as illustrated in \cref{fig:square-wave}.
\begin{definition}\label{def:36-H}
  For a non-negative integer $n$, the correspondence
  $S_n \colon \RR^2 \rightrightarrows [0,1]$, consisting of $2^n+1/2$
  periods of length~$p$, in the interval $[0,(2^n+1/2)p]$, is defined as $\Heaviside(T_n(x,p))$, where $T_n$ is the triangle-wave function defined
  inductively as follows.  For $n=0$, we define
  \[
    T_0(x,p) = \max(\min(x,p/2-x),\min(x-p,3p/2-x)) \enspace ,
  \]
  and for $n>0$ we define $T_n(x,p)=T_{n-1}(\min(x,(2^n+1/2)p-x),p)$.
\end{definition}

\begin{figure}[h]
 \begin{subfigure}[b]{.45\textwidth}
  \centering
\begin{tikzpicture}[scale=1]
  \draw[->] (-0.5,0) -- (3.5,0) node[right] {};
  \draw[->] (0,0) -- (0,0.75) node[above] {};

  \foreach \lab [count=\x from 0] in {$\vphantom{\frac{1}{2}}0$,$\frac{1}{2}p$,$\vphantom{\frac{1}{2}}p$,$\frac{3}{2}p$}
    \draw ({\x},1pt) -- ({\x},-3pt)
    node[anchor=north] {\lab};
    
  \foreach \y in {0.5}
    \draw (1pt,\y) -- (-3pt,\y) 
    node[anchor=east] {$\frac{1}{4}p$};
  
  \draw[blue, line width=1pt] (-0.5,-0.5)--(0.5,0.5)--(1.5,-0.5)--(2.5,0.5)--(3.5,-0.5);
\end{tikzpicture} 
\end{subfigure}
\begin{subfigure}[b]{.45\textwidth}
  \centering
\begin{tikzpicture}[scale=0.5]
  \draw[->] (-1.5,0) -- (10.5,0) node[right] {};
  \draw[->] (0,-0.5) -- (0,1.5) node[above] {};
  \foreach \x in {1,2,...,4}
    \draw ({2*\x},1pt) -- ({2*\x},-3pt)
    node[anchor=north] {\x};
  \foreach \y in {1}
    \draw (1pt,\y) -- (-3pt,\y) 
    node[anchor=east] {\y};
  
  \draw[blue, line width=1pt] (-1.5,0) \foreach \x in {0,1,...,8} {--({\x},{mod(\x,2)})--({\x},{mod(\x+1,2)})--({\x+1},{mod(\x+1,2)})} -- (9,0) -- (10.5,0);
\end{tikzpicture} 
\end{subfigure}
\caption{The function $T_0(x,p)$ used to define $S_0(x,p)$ (left)
  and the function $S_2(x,1)$ (right) giving a square wave with 4.5 periods of length~$1$ in the interval $[0,4.5]$.}
\label{fig:square-wave}
\end{figure}
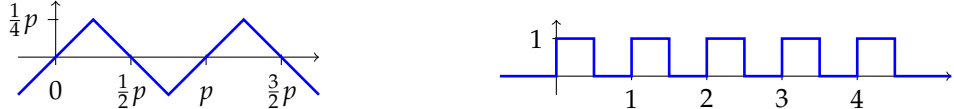

Boolean operations may be viewed as special cases of \linear functions, namely the Boolean functions $\AND(a,b)$, $\OR(a,b)$, and
$\NOT(a)$ correspond to the \linear functions $\min(a,b)$,
$\max(a,b)$, and $1-a$, for $a,b \in \zo$. The following is thus
immediate.
\begin{lemma}\label{lem:boolean-circuit-to-linear-circuit}
Let $C$ be a Boolean circuit with $n$ inputs and $m$ outputs. Then there is a \linear arithmetic circuit with $n$ inputs and $m$ outputs $C'$ with the same number of gates as $C$ such that $C(b)=C'(b)$ for all $b \in \zo^n$.
\end{lemma}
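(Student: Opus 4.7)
The plan is to build $C'$ by a direct gate-by-gate translation and then verify correctness by induction on the structure of the circuit. Concretely, I would fix a topological ordering $g_1, g_2, \dots, g_s$ of the gates of $C$ and define $C'$ by replacing each gate $g_i$ according to the correspondence highlighted just before the lemma: $\AND(a,b) \mapsto \min(a,b)$, $\OR(a,b) \mapsto \max(a,b)$, and $\NOT(a) \mapsto 1 - a$. Since the replacement is one-to-one on gates, $C'$ has exactly the same number of gates as $C$, and $\min$, $\max$, and the affine map $a \mapsto 1 - a$ are all permitted in a \linear arithmetic circuit (the latter uses the constant $1$, a subtraction, and a multiplication by $\zeta = -1$; alternatively, $1-a$ is directly expressible as an affine combination allowed in a \linear arithmetic circuit).

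The correctness claim $C(b) = C'(b)$ for all $b \in \zo^n$ then follows from an induction on $i$ establishing the invariant: \emph{for every Boolean input $b \in \zo^n$, the value computed at gate $g_i$ of $C'$ equals the Boolean value computed at gate $g_i$ of $C$, and in particular it lies in $\zo$}. The base case is that the inputs of the circuit are themselves in $\zo$ by hypothesis. For the inductive step, assume both gate-inputs to $g_i$ already take the same $\zo$-valued assignments in $C$ and in $C'$. Then a case split on the type of $g_i$ gives $\min(a,b) = ab = \AND(a,b)$, $\max(a,b) = a + b - ab = \OR(a,b)$, and $1 - a = \NOT(a)$ whenever $a, b \in \zo$; in all three cases the output lies in $\zo$, so the invariant is preserved. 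Applying the invariant to the $m$ output gates of $C$ yields $C(b) = C'(b)$, completing the proof.

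The statement is essentially a bookkeeping lemma, and I do not expect any genuine obstacle: the only thing to be careful about is that the invariant ``all intermediate values stay in $\zo$'' holds, since $\min$, $\max$, and $1 - (\cdot)$ behave as Boolean connectives \emph{only} on $\zo$-valued inputs. Once that invariant is set up as part of the induction hypothesis, the argument is immediate and the gate-count preservation is automatic from the one-to-one translation.
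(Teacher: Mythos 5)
Your proposal is correct and takes the same approach as the paper, which simply observes the gate-by-gate correspondence $\AND\mapsto\min$, $\OR\mapsto\max$, $\NOT\mapsto 1-(\cdot)$ and declares the lemma immediate. You merely make explicit the straightforward induction (with the invariant that intermediate values remain in $\zo$) that the paper leaves implicit.
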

In fact, $\AND(a,b)$ and thus $\min(a,b)$ may also be viewed as the
multiplication operation for $a,b \in \zo$. Even more generally,
$\min(a,b)$ may be viewed as a multiplication when just $b \in
\zo$ and $a \in [0,1]$. As long as we ultimately only use this tool to prove PPAD-membership, we can make use of the \linoptgate to construct a stronger multiplication gadget that does not require an a priori bound on the operand~$a$.

\medskip

\begin{definition}\label{def:bit-mult}
  Define the \pseudog $\bitmult$ with two inputs
  and one output by $\bitmult(a,b) = \Heaviside(b-\frac{1}{2})a$.
\end{definition}
Note that $\bitmult$ can indeed be computed by a \pseudog by \cref{lem:mult-by-Heaviside} (which uses the \linoptgate internally). From the definition of the Heaviside function
$\Heaviside$ the following is then immediate.
\begin{lemma}\label{lem:bit-mult}
  We have $\bitmult(a,b)=ab$, for any $a \in \mathbb{R}$ and any $b \in \{0,1\}$.
\end{lemma}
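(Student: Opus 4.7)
The plan is a simple case analysis on the value of $b \in \{0,1\}$, directly unpacking the definition $\bitmult(a,b) = \Heaviside(b - \tfrac{1}{2})\,a$ and using the definition of the Heaviside correspondence. Since the hypothesis restricts $b$ to the two-element set $\{0,1\}$, the argument $b - \tfrac{1}{2}$ can only take the two values $-\tfrac{1}{2}$ and $+\tfrac{1}{2}$, both of which are strictly away from $0$, so the Heaviside correspondence is single-valued in both cases and no subtlety about the multivalued branch at $0$ arises.

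Concretely, if $b = 0$ then $b - \tfrac{1}{2} = -\tfrac{1}{2} < 0$, and by the definition of $\Heaviside$ in Example~\ref{ex:Heaviside}, $\Heaviside(-\tfrac{1}{2}) = 0$. Hence $\bitmult(a,0) = 0 \cdot a = 0 = a \cdot 0$. If instead $b = 1$, then $b - \tfrac{1}{2} = \tfrac{1}{2} > 0$, so $\Heaviside(\tfrac{1}{2}) = 1$, and therefore $\bitmult(a,1) = 1 \cdot a = a = a \cdot 1$. In both cases $\bitmult(a,b) = ab$, which completes the argument.

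There is no real obstacle here: the lemma is essentially a sanity check that the \pseudog from Definition~\ref{def:bit-mult} does the right thing on Boolean inputs. The only thing to be careful about is that $\bitmult$ is computed by a \emph{\pseudog}, so a priori its output could be any acceptable fixed-point assignment; but because $b \in \{0,1\}$ pushes the Heaviside argument strictly away from zero, the acceptable assignment is forced to be unique and equal to $ab$, so nothing further needs to be said.
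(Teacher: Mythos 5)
Your proof is correct and matches the paper's approach: the paper simply remarks that the claim is ``immediate from the definition of the Heaviside function,'' and your two-case unpacking is exactly the verification being alluded to, including the key observation that $b\in\{0,1\}$ keeps the Heaviside argument strictly away from $0$ so the pseudo-circuit's output is forced.
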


We will show below that also the piecewise-linear functions and
piecewise-constant correspondences represented by Boolean circuits may
be computed by \pseudogs. In fact, for the case of
piecewise-constant correspondences we will show a considerably
stronger statement allowing the scaling of the input and output by a
variable.

An important component of both constructions is \emph{bit
  extraction}. Informally, bit extraction of a non-negative $x \geq 0$
refers to computing the binary expansion of $\floor{x}$. This is
clearly an inherently discontinuous operation, but can nevertheless be
circumvented in our applications. The bit extraction procedure we
define below will actually take 2~inputs $x$ and $y$, and the
intention is to perform bit extraction on the number $x/y$, when
$0 < y \leq 1$.

\begin{definition}\label{def:bitextract}
  For a positive integer $n$, we inductively define a \linear pseudo-circuit $\bitextract_n$ with two inputs and $n$ outputs as
  follows.  For $n=1$, we define
  $\bitextract_1(x,y) = \Heaviside(x-y)$, and for $n>1$ we define
  $\bitextract_n(x,y) = (b_{n-1},\dots,b_0)$ where
  $b_{n-1} = \Heaviside(x-2^{n-1}y)$ and
  $(b_{n-2}, \dots, b_0) = \bitextract_{n-1}(x-\bitmult(2^{n-1}y,b_{n-1}), y)$.
\end{definition}
\begin{lemma}\label{lem:bit-extraction}
  Let $n\geq 1$, assume $0<y\leq 1$, and let
  $\bitextract_n(x,y)=b=(b_{n-1},\dots,b_0)$. Then, whenever
  $x/y \in [0,2^n] \setminus \{1,2,3,\dots,2^n\}$ we have
  $b_i \in \zo$, for all~$i$ and $\bitval_n(b)=\sum_{i=0}^{n-1}b_i 2^i = \floor{x/y}$.
\end{lemma}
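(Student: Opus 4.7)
My plan is to prove the lemma by induction on $n$, the key observation being that the exclusion of integer values of $x/y$ guarantees that every invocation of $\Heaviside$ in the recursive unfolding of $\bitextract_n$ avoids its ambiguous zero input, so each extracted bit is forced to lie in $\{0,1\}$ rather than in the multi-valued interval $[0,1]$.

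For the base case $n=1$, assume $x/y \in [0,2] \setminus \{1,2\}$. Then $x - y \neq 0$, so $b_0 = \Heaviside(x-y) \in \{0,1\}$ is single-valued. If $x/y < 1$ then $b_0 = 0 = \lfloor x/y \rfloor$, and if $1 < x/y < 2$ then $b_0 = 1 = \lfloor x/y \rfloor$, so $\bitval_1(b) = \lfloor x/y\rfloor$ as required.

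For the inductive step, suppose the statement holds for $n-1$ and assume $x/y \in [0,2^n] \setminus \{1,2,\dots,2^n\}$. In particular $x/y \neq 2^{n-1}$, so $x - 2^{n-1}y \neq 0$ and hence $b_{n-1} = \Heaviside(x - 2^{n-1}y) \in \{0,1\}$. Since $b_{n-1} \in \{0,1\}$, \Cref{lem:bit-mult} gives $\bitmult(2^{n-1}y, b_{n-1}) = 2^{n-1} y \cdot b_{n-1}$ exactly. The argument then splits according to $b_{n-1}$. If $x/y < 2^{n-1}$ then $b_{n-1} = 0$ and the recursive call is on $(x,y)$ with $x/y \in [0, 2^{n-1}) \setminus \{1,\dots,2^{n-1}-1\} \subseteq [0,2^{n-1}]\setminus \{1,\dots,2^{n-1}\}$, so the induction hypothesis yields $(b_{n-2},\dots,b_0) \in \zo^{n-1}$ with $\bitval_{n-1}(b_{n-2},\dots,b_0) = \lfloor x/y \rfloor$, and $b_{n-1}=0$ makes $\bitval_n(b) = \lfloor x/y \rfloor$. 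If instead $x/y > 2^{n-1}$ then $b_{n-1} = 1$ and the recursive call is on $(x - 2^{n-1}y, y)$, whose ratio $x/y - 2^{n-1}$ lies in $(0, 2^{n-1}] \setminus \{1,\dots,2^{n-1}\}$, hence within the range permitted by the induction hypothesis. We conclude $(b_{n-2},\dots,b_0) \in \zo^{n-1}$ with $\bitval_{n-1}(b_{n-2},\dots,b_0) = \lfloor x/y \rfloor - 2^{n-1}$, and combining with the leading $b_{n-1}=1$ gives $\bitval_n(b) = \lfloor x/y \rfloor$.

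The mild subtlety, and the only place where the exclusion of integer values is actually used, is verifying that every $\Heaviside$ evaluated along the recursive unfolding receives a strictly non-zero argument. This is guaranteed precisely because at level $n$ we exclude $x/y = 2^{n-1}$, and the recursive call at level $n-1$ is applied to a ratio $(x - 2^{n-1} y b_{n-1})/y$ which inherits the integer-exclusion property from the hypothesis on $x/y$, as verified in the two cases above. With this invariant maintained, the induction goes through and the conclusion $\bitval_n(b) = \lfloor x/y \rfloor$ holds, completing the proof.
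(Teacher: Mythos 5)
Your proof is correct and follows essentially the same inductive argument as the paper: exclude the ambiguous integer inputs to $\Heaviside$, determine $b_{n-1}$, and apply the induction hypothesis to the recursive call. The only (purely stylistic) difference is that you split the inductive step into two explicit cases according to the value of $b_{n-1}$, whereas the paper handles both uniformly via the identity $\floor{x/y - b_{n-1}2^{n-1}} = \floor{x/y} - b_{n-1}2^{n-1}$ for $b_{n-1}\in\zo$.
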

\begin{proof}
  Assume $x/y \in [0,2^n] \setminus \{1,2,3,\dots,2^n\}$. Since in
  particular $x/y \neq 2^{n-1}$, by definition of the Heaviside
  function $\Heaviside$ we have that $b_{n-1}=0$ when $x/y < 2^{n-1}$ and
  $b_{n-1}=1$ when $x/y>2^{n-1}$. This shows the case of $n=1$. For
  $n>1$ we have
  $x/y - b_{n-1}2^{n-1} \in [0,2^{n-1}] \setminus
  \{1,2,\dots,2^{n-1}\}$, and by induction we have
  $\floor{x/y - b_{n-1}2^{n-1}} = \sum_{i=0}^{n-2}b_i2^i$. The result
  now follows since
  $\floor{x/y - b_{n-1}2^{n-1}} = \floor{x/y} - b_{n-1}2^{n-1}$, for
  $b_{n-1} \in \zo$.
\end{proof}
\begin{remark}
  The output of $\bitextract_n$ is of course well-defined for any value
  of $x$ and $y$. In particular we can observe that
  $(\bitextract_n(x,0))_i=\Heaviside(x)$, for all $i$.
\end{remark}

As an extension of \Cref{def:bit-mult} and
\Cref{lem:bit-mult} we may in a \linear arithmetic circuit \emph{multiply}
a number given in binary encoding by variables with another
variable. This means that we may consider the output of a \linear circuit, obtained from a Boolean circuit by
\Cref{lem:boolean-circuit-to-linear-circuit}, as representing a
number given in binary encoding and then multiply that by a variable.
\begin{definition}\label{def:bit-mult2}
  For an integer~$n>0$, define the \linear pseudo-circuit
  $\bitmult_n$ having~$n+1$ inputs and one output by
  $\bitmult_n(a,b_{n-1},\dots,b_0) =
  \sum_{i=0}^{n-1}\bitmult(a,b_i)2^i$.
\end{definition}
Using \Cref{lem:bit-mult} we then have the following.
\begin{lemma}\label{lem:bit-mult2}
  For any positive integer~$n$, $b\in \zo^n$, and any $a \in \mathbb{R}$ we have
  $\bitmult_n(a,b_{n-1},\dots,b_0)=a\cdot\bitval_n(b)$.
\end{lemma}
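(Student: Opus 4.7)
The plan is to prove this by direct unfolding of the definition and a single appeal to the preceding \Cref{lem:bit-mult}. The statement is essentially a linear-combination extension of the scalar case, and no additional technical machinery should be needed beyond what is already in place.

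First I would substitute the definition of $\bitmult_n$ from \Cref{def:bit-mult2}, obtaining
\[
\bitmult_n(a,b_{n-1},\dots,b_0) \;=\; \sum_{i=0}^{n-1} \bitmult(a,b_i)\,2^i.
\]
Next, since the hypothesis gives $b_i \in \{0,1\}$ for every $i \in \{0,\dots,n-1\}$, \Cref{lem:bit-mult} applies coordinatewise and yields $\bitmult(a,b_i) = a\, b_i$ for each~$i$. Substituting and pulling the factor $a$ outside the sum gives
\[
\sum_{i=0}^{n-1} \bitmult(a,b_i)\,2^i \;=\; \sum_{i=0}^{n-1} a\, b_i\, 2^i \;=\; a \sum_{i=0}^{n-1} b_i\, 2^i \;=\; a \cdot \bitval_n(b),
\]
where the final equality is just the definition of $\bitval_n$.

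There is no real obstacle here: the whole content is the preceding scalar identity $\bitmult(a,b) = ab$ for $b \in \{0,1\}$, and the linearity of the weighted sum $\sum_i 2^i (\cdot)$. The only thing worth emphasizing in the written proof is that we can apply \Cref{lem:bit-mult} for \emph{arbitrary} $a \in \mathbb{R}$ (not merely $a \in [0,1]$), which is precisely the strengthening that motivated the introduction of $\bitmult$ via the \linoptgate as opposed to using $\min(a,b)$.
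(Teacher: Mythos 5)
Your proof is correct and takes the same route the paper intends: the paper's proof is the one-liner ``Using \Cref{lem:bit-mult} we then have the following,'' and your argument simply fills in the obvious unfolding of \Cref{def:bit-mult2}, the coordinatewise application of \Cref{lem:bit-mult}, and the definition of $\bitval_n$. Nothing is missing.
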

For $b \in \zo^n$ we shall for brevity also simply write
$\bitmult(a,b) = \bitmult_n(a,b_{n-1},\dots,b_0)$.

\subsubsection{Piecewise-linear functions}
We will now develop a construction of a \linear pseudo-circuit
computing the piecewise-linear function $g$ represented by $(C,N)$, as
defined in \Cref{def:implicit-piecewise-linear}, for a
circuit $C$ with $n$ inputs and $n$ outputs and an integer $N$
satisfying $1 \leq N < 2^n$. The general idea is to perform
bit extraction, evaluate the integer-valued function $f$ represented
by $(C,N)$ on two inputs, and then perform linear interpolation. The
immediate obstacle is that bit extraction applied to $x \in [0,2^n]$
\emph{fails}, i.e., does not guarantee a Boolean output, when
$x \in \{1,\dots,2^n\}$. To overcome this obstacle we perform this
procedure on two different translations of the input, and use the
square wave function to switch between the two, thereby masking out
erroneous values. Let in the following
$G=[0,2^n]\setminus \{1,\dots,2^n\}$ be the set of points where bit
extraction succeeds.

For simplicity, we first give a description of the computation without
referring to Boolean circuits.  Suppose $x \in [0,N]$.  We consider
the following two cases, each with a special case in an endpoint of
the interval~$[0,N]$.
\begin{enumerate}
\item Assume $x \in [\floor{x},\floor{x}+\frac{1}{2}]$ and assume
  first also $x\leq N -\frac{1}{2}$. Let $w_1 = x + \frac{1}{4}$ and
  $w_2=x+\frac{5}{4}$. Note that $w_1,w_2 \in G$ and that
  $\floor{w_1}=\floor{x}$ and $\floor{w_2}=\floor{x}+1$. It thus
  follows that $g(x)=g_1(x)$, where
  \[
    g_1(x) = (\floor{w_2}-x)f(\floor{w_1})+(x-\floor{w_1})f(\floor{w_2}) \enspace .
  \]
  To also handle the special case of $x=N$, we proceed as
  follows. First, let $\delta_1 = \Heaviside(x-(N-\frac{1}{4}))$ and
  define $w'_1 = x+\frac{1}{4} - \delta_1$ and
  $w'_2 = x+\frac{5}{4} - \delta_1$. In case $x=N$ we have
  $\delta_1 = 1$ and thus $w'_1 = x - \frac{3}{4}$,
  $w'_2 = w_1 = x + \frac{1}{4}$, and $w'_1,w'_2 \in G$. We then have
  $g(x)=g'_1(x)$, where
  \[
    g_1'(x) = (\floor{w'_2}-x)f(\floor{w_1})+(x-\floor{w'_1})f(\floor{w'_2}) \enspace .
  \]
  Note that $f$ is evaluated on $\floor{w_1}$ and $\floor{w'_2}$ (and
  not $\floor{w'_1}$ and $\floor{w'_2}$), which in the case of $x=N$
  both equal $N$, thereby satisfying
  \Cref{def:implicit-piecewise-linear}.

  Finally, note that in the general case, when
  $x \leq N - \frac{1}{2}$, we have $\delta_1=0$, and thus $w'_1=w_1$
  and $w'_2=w_2$ which means that $g'_1(x)=g_1(x)$.
\item Assume $x \in [\ceil{x}-\frac{1}{2},\ceil{x}]$ and assume first also
  $x \geq \frac{1}{2}$. Let $w_3=x-\frac{1}{4}$ and
  $w_4 = x+\frac{3}{4}$. Note that $w_3,w_4 \in G$. We now have two
  sub-cases. First, assume that $x < \ceil{x}$. Then
  $\floor{w_3}=\floor{x}$ and $\floor{w_4}=\floor{x}+1$. Next assume
  that $x=\ceil{x}$. Then $\floor{w_3}=\floor{x}-1$ and $\floor{w_4}=\floor{x}$.
  In both sub-cases it thus follows that $g(x)=g_2(x)$, where
  \[
    g_2(x) = (\floor{w_4}-x)f(\floor{w_3})+(x-\floor{w_3})f(\floor{w_4}) \enspace .
  \]
  To also handle the special case of $x=0$, we proceed in a similar
  way as the case of $x=N$ above. First, let
  $\delta_2=\Heaviside(\frac{1}{4}-x)$ and define
  $w'_3=x-\frac{1}{4}+\delta_2$ and $w'_4 = x+\frac{3}{4} + \delta_2$. In case
  $x=0$ we have $\delta_2=1$ and thus $w'_3=w_4=x+\frac{3}{4}$,
  $w'_4=x+\frac{7}{4}$ and $w'_3,w'_4 \in G$. We then have $g(x)=g'_2(x)$, where
  \[
    g'_2(x) = (\floor{w'_4}-x)f(\floor{w'_3})+(x-\floor{w'_3})f(\floor{w_4}) \enspace .
  \]
  Note that $f$ is evaluated on $\floor{w'_3}$ and $\floor{w_4}$ (and
  not $\floor{w'_3}$ and $\floor{w'_4}$), which in the case of $x=0$
  both equals $0$, thereby satisfying
  \cref{def:implicit-piecewise-linear}.

  Finally, note that in the general case, when
  $x \geq \frac{1}{2}$, we have $\delta_2=0$, and thus $w'_3=w_3$
  and $w'_4=w_4$ which means that $g'_2(x)=g_2(x)$.
\end{enumerate}

We will now simply use the square wave function to select between the
two cases.  Namely, from the definition of $S_n$ we have that
$g(x) = g'_2(x) + S_n(x,1)(g'_1(x)-g'_2(x))$. Since $S_n(x,1)$ is defined
as the composition $H(T_n(x,1))$ of the Heaviside function with the
\linear function $T_n(x,1)$, we can compute the term
$S_n(x,1)(g'_1(x)-g'_2(x))$ using the \linoptgate by \cref{lem:mult-by-Heaviside}. We now present the
general construction of the \linear pseudo-circuit.
\begin{proposition}\label{prop:implicit-function}
  Let $C$ be a Boolean circuit with $n$ inputs and $n$ outputs, and
  let $N<2^n$. Then, for the purpose of proving PPAD-membership, we can construct a \pseudog computing the
  piecewise-linear function $g$ represented by $(C,N)$.
\end{proposition}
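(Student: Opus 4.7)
The plan is to directly formalize the informal two-case construction sketched just before the proposition statement into a \linear pseudo-circuit. First I would handle the fact that the input $x$ should be clamped to $[0,N]$: using $\max$ and $\min$ gates I replace $x$ by $x' = \min(N, \max(0,x))$, which matches \Cref{def:implicit-piecewise-linear} on the two flat tails. From now on assume $x \in [0,N]$.

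The core construction computes four translated points $w_1' = x + 1/4 - \delta_1$, $w_2' = x + 5/4 - \delta_1$, $w_3' = x - 1/4 + \delta_2$, and $w_4' = x + 3/4 + \delta_2$, where $\delta_1 = \Heaviside(x - (N - 1/4))$ and $\delta_2 = \Heaviside(1/4 - x)$, and then performs bit extraction on each. Concretely, for each of $w_1', w_2', w_3', w_4'$ I apply $\bitextract_n(\cdot, 1)$ (\Cref{def:bitextract}) to obtain an $n$-bit vector; by \Cref{lem:bit-extraction}, whenever the argument lies in $G = [0,2^n] \setminus \{1,\ldots,2^n\}$, this vector is a genuine Boolean representation of the floor. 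Convert $C$ into a \linear arithmetic circuit $C'$ via \Cref{lem:boolean-circuit-to-linear-circuit} and feed each bit vector into $C'$; the resulting outputs are $n$-bit vectors that I interpret as integers via $\bitval_n$ when the input is Boolean. To implement ``$\min\{N, \bitval_n(\cdot)\}$'' on these outputs, I use $\min$ with the constant $N$, but since the output is an $n$-bit vector I first assemble the integer using $\bitmult_n$ against the constant $1$ (\Cref{def:bit-mult2}, \Cref{lem:bit-mult2}). Call the four resulting integer values $F_1,F_2,F_3,F_4$. Then form
\begin{align*}
g_1'(x) &= (\lfloor w_2' \rfloor - x) F_1 + (x - \lfloor w_1' \rfloor) F_2, \\
g_2'(x) &= (\lfloor w_4' \rfloor - x) F_3 + (x - \lfloor w_3' \rfloor) F_4,
\end{align*}
where each $\lfloor w \rfloor$ is also assembled from the bit-extraction output via $\bitmult_n(\cdot,1)$. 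Since each coefficient $(\lfloor w_j' \rfloor - x)$ is a variable and each $F_i$ is also a variable, I multiply them using the \linoptgate via \Cref{lem:mult-by-Heaviside} applied through the $\bitmult$ primitive on the individual bits of $F_i$, i.e., $\bitmult_n(\alpha, F_i\text{-bits})$ computes $\alpha \cdot \bitval_n(F_i\text{-bits})$ for any real~$\alpha$. Finally, select between the two cases using the square wave: the output of the pseudo-circuit is $g_2'(x) + \bitmult(g_1'(x) - g_2'(x), S_n(x,1))$, where the multiplication by the Heaviside-valued $S_n(x,1) = \Heaviside(T_n(x,1))$ is handled by \Cref{lem:mult-by-Heaviside}.

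To verify correctness, I would check the two cases described in the sketch. If $x \in [\lfloor x \rfloor, \lfloor x \rfloor + 1/2]$ then by definition of the triangle wave $T_n$ we have $T_n(x,1) \geq 0$, so $S_n(x,1)$ can take the value~$1$; in this regime I argue that $w_1', w_2' \in G$ (so $F_1 = f(\lfloor w_1' \rfloor)$, $F_2 = f(\lfloor w_2' \rfloor)$ are the correct integer values by \Cref{lem:bit-extraction}) and that $\lfloor w_1' \rfloor, \lfloor w_2' \rfloor$ straddle $x$ correctly, with the boundary case $x = N$ handled by the $\delta_1$ shift (so that $F_1$ and $F_2$ both evaluate to $f(N)$). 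Symmetrically for $x \in [\lceil x \rceil - 1/2, \lceil x \rceil]$ we use $g_2'$, with $\delta_2$ handling $x=0$. On the overlap points $x = \lfloor x \rfloor + 1/2$ and on the integers in the interior, both $g_1'$ and $g_2'$ evaluate to the same value, so the choice of the Heaviside branch is immaterial, and the pseudo-circuit output is well-defined as an element of the correspondence in the sense of \Cref{def:pseudo-circuit}.

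The main obstacle is that bit extraction is fundamentally discontinuous, so a single application cannot give a continuous piecewise-linear output, and also multiplying a variable by the integer decoded from a Boolean string is not a priori something a \linear circuit can do. Both issues are resolved by the tools built up earlier: the first via the two-translation trick masked by $S_n(x,1)$, which ensures at any single $x$ at least one of the two translations lies in the ``safe'' set $G$ where bit extraction yields Booleans; and the second via \Cref{lem:bit-mult} and its extension \Cref{lem:bit-mult2}, which use the \linoptgate (through \Cref{lem:mult-by-Heaviside}) to multiply an arbitrary real by a Boolean value. Since all ingredients --- $\bitextract_n$, the converted circuit $C'$, $\bitmult_n$, $S_n$, and $\Heaviside$-multiplication --- are \linear pseudo-circuits of size polynomial in $\size(C)$ and $n$, the entire construction is a \linear pseudo-circuit of polynomial size, completing the proof.
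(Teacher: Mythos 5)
The student's overall approach matches the paper's: clamp $x$ to $[0,N]$, compute the $\delta$-shifted translations, bit-extract, feed through the converted circuit $C'$, linearly interpolate, and mask the two branches with the square wave via Heaviside multiplication.

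There is, however, a genuine gap in how the $\min\{N,\cdot\}$ truncation is combined with the variable-by-integer multiplication. You first assemble $\bitval_n(C'(\cdot))$ into a scalar via $\bitmult_n$ against the constant $1$, clamp with $\min$ to obtain $F_i$, and then propose to compute $\alpha \cdot F_i$ via ``$\bitmult_n(\alpha, F_i\text{-bits})$''. But at that point $F_i$ is a real-valued scalar, not a bit vector; its binary expansion is no longer available inside the circuit, and it cannot be recovered by another application of $\bitextract_n$, since $F_i$ is an integer and bit extraction is exactly unreliable at integer arguments. So the proposed multiplication step cannot actually be formed. The paper's construction avoids this by never collapsing the $C'$ output to a scalar before multiplying: $\bitmult$ is applied directly to the variable coefficient and the $n$ output wires of $C'$, yielding $\alpha \cdot \bitval_n(C'(\cdot))$, and the $\min\{N,\cdot\}$ truncation is absorbed into $C$ itself --- without loss of generality one may compose $C$ with a polynomial-size Boolean circuit that clamps its output to at most $N$. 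Adopting this order of operations fixes the gap.

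A minor and benign deviation worth flagging: you feed $C'$ the \emph{shifted} bit vectors at all four points, whereas the paper deliberately uses the \emph{unshifted} bit vectors $\mathbf{b}_1$ and $\mathbf{b}_4$ in the first term of $g'_1$ and the second term of $g'_2$, respectively, to match the informal sketch's endpoint handling at $x=N$ and $x=0$. Your variant is in fact also correct --- at $x=N$ the coefficient $\bitval(\mathbf{b}'_2)-x'$ of the first term is exactly $0$, so it is immaterial whether $f$ is evaluated at $N$ or at $N-1$ there, and symmetrically at $x=0$ --- but following the paper's choice is cleaner.
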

\begin{proof}
  Let $C'$ be the \linear arithmetic circuit obtained from $C$ by
  \cref{lem:boolean-circuit-to-linear-circuit}. We construct a
  \linear pseudo-circuit computing $g(x)$, operating as follows.
\begin{enumerate}
\item Compute $x' = \max(0,\min(N,x))$. Thus $x' \in [0,N]$ and $g(x)=g(x')$.
\item Compute $\delta_1=H(x'-(N-\frac{1}{4}))$ and $\delta_2=\Heaviside(\frac{1}{4}-x')$ using \cref{lem:mult-by-Heaviside}.
\item Compute $w_1=x'+\frac{1}{4}$, $w'_1=x'+\frac{1}{4}-\delta_1$, and $w'_2=x'+\frac{5}{4}-\delta_1$.
\item Compute $\mathbf{b}_1 = \bitextract_n(w_1,1)$, $\mathbf{b}'_1 = \bitextract_n(w'_1,1)$, and $\mathbf{b}'_2 = \bitextract_n(w'_2,1)$.
\item Compute $g'_1 = \bitmult(\bitval(\mathbf{b}'_2)-x',C'(\mathbf{b}_1)) + \bitmult(x'-\bitval(\mathbf{b}'_1),C'(\mathbf{b}'_2))$.
\item Compute $w_4=x'+\frac{3}{4}$, $w'_3=x'-\frac{1}{4}+\delta_2$, and $w'_4=x'+\frac{3}{4}+\delta_2$.
\item Compute $\mathbf{b}_4 = \bitextract_n(w_4,1)$, $\mathbf{b}'_3 = \bitextract_n(w'_3,1)$, and $\mathbf{b}'_4 = \bitextract_n(w'_4,1)$.
\item Compute $g'_2 = \bitmult(\bitval(\mathbf{b}'_4)-x',C'(\mathbf{b}'_3)) + \bitmult(x'-\bitval(\mathbf{b}'_3),C'(\mathbf{b}_4))$.
\item Output $g'_2 + S_n(x',1)(g'_1-g'_2)$, where the last term is computed using \cref{lem:mult-by-Heaviside}.
\end{enumerate}
The correctness of the circuit follows from the preceding case analysis.
\end{proof}

\subsubsection{Piecewise-linear correspondences}
We will now, in a similar way, develop a construction of a \linear pseudocircuit computing the piecewise-constant correspondence $g$
represented by $(C,N)$, as defined in
\cref{def:implicit-piecewise-constant}, for a circuit $C$
with $n$ inputs and $n$ outputs and an integer $N$ satisfying
$1 \leq N < 2^n$. Again the general idea is to perform bit
extraction and evaluate the integer-valued function $f$ represented by
$(C,N)$ on two inputs. Since the function we compute is piecewise-constant, 
we are now able to also scale the output by a variable. Let
again $G=[0,2^n]\setminus \{1,\dots,2^n\}$ be the set of points where
bit extraction succeeds.

We will be able to compute the correspondence $z \cdot g(x/y)$, but for
simplicity we first describe the computation of $g(x)$, also without
referring to Boolean circuits. Thus, suppose now $x \in [0,N]$.  We
consider the same cases as above.
\begin{enumerate}
\item Assume $x \in [\floor{x},\floor{x}+\frac{1}{2}]$ . Let
  $w_1 = x + \frac{1}{4}$, and note that $w_1 \in G$ and that
  $\floor{w_1}=\floor{x}$. Let $g_1(x)=f(\floor{w_1})$. It follows
  that $g_1(x)=f(\floor{x})$.
\item Assume $x \in [\ceil{x}-\frac{1}{2},\ceil{x}]$ and assume first
  also $x \geq \frac{1}{2}$. Let $w_2=x-\frac{1}{4}$, and note that
  $w_2 \in G$. Let $g_2(x)=f(\floor{w_2})$.  We now have two
  subcases. First, assume that $x < \ceil{x}$. Then
  $\floor{w_2}=\floor{x}$ and thus $g_2(x)=f(\floor{x})$. Next assume
  that $x=\ceil{x}$. Then $\floor{w_2}=\floor{x}-1$ and thus
  $g_2(x)=f(\floor{x}-1)$.  To also handle the special case of $x=0$,
  we proceed as follows. First, let $\delta=\Heaviside(\frac{1}{4}-x)$
  and define $w'_2=x-\frac{1}{4}+\delta$. Let
  $g'_2(x)=f(\floor{w'_2})$.  In case $x=0$ we have $\delta=1$ and
  thus $w'_2 = x + \frac{3}{4}$ and $w'_2 \in G$. We thus have
  $\floor{w'_2}=\floor{x}$ and thus $g'_2(x)=f(\floor{x})$.

  Finally, note that in the general case, when $x \geq \frac{1}{2}$,
  we have $\delta=0$, and thus $w'_2=w_2$ which means that $g'_2(x)=g_2(x)$.
\end{enumerate}
We will now simply use the square wave function to select between the
two cases, letting $g(x) = g'_2(x) + S_n(x,1)(g_1(x)-g'_2(x))$.

This will also be what gives the correct result in case
$x \in \{1,\dots,N\}$ where we have $g_1(x)=f(x)$ and $g'_2(x)=f(x-1)$
and selecting using the square-wave function yields gives any convex
combination of these values. For the special case of $x=0$ we also
obtain the correct result, since $g_1(0)=g'_2(0)=f(0)$.

Since $S_n(x,1)$ is defined as the composition $H(T_n(x,1))$ of the
Heaviside function with the \linear function $T_n(x,1)$, we can
compute the term $S_n(x,1)(g_1(x)-g'_2(x))$ using \cref{lem:mult-by-Heaviside}. We
now present the general construction of the \pseudog.

\begin{proposition}\label{prop:implicit-correspondence}
  Let $C$ be a Boolean circuit with $n$ inputs and $n$ outputs, and
  let $N<2^n$. Then, for the purpose of proving PPAD-membership, we can construct a \pseudog computing a function
  $h(x,y,z)$ such that when $0<y\leq 1$ we have $h(x,y,z) = z \cdot g(x/y)$, where
  $g$ is the piecewise-constant function represented by $(C,N)$.
\end{proposition}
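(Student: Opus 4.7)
My plan is to extend the construction used for \cref{prop:implicit-function} to the piecewise-constant, scaled setting, with three key modifications. First, since $\bitextract_n(w,y)$ now recovers the bits of $\lfloor w/y \rfloor$ (rather than $\lfloor w \rfloor$) when $0<y\leq 1$ and $w/y$ avoids the bad set $\{1,\dots,2^n\}$, every constant offset of the form $\pm 1/4$, $\pm 1/2$, etc.\ appearing in the case analysis preceding \cref{prop:implicit-function} must be replaced by $\pm y/4$, $\pm y/2$, etc., so that once we divide by $y$ we land in the good set $G=[0,2^n]\setminus\{1,\dots,2^n\}$. Second, the piecewise-constant character of $g$ eliminates the need for linear interpolation: each of the two cases simply evaluates $f$ at a single integer recovered via bit extraction. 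Third, to output $z\cdot g(x/y)$ we apply $\bitmult_n(z,\cdot)$ to the binary output of $C'$, invoking \cref{lem:bit-mult2} to obtain $\bitmult_n(z,C'(\mathbf{b})) = z\cdot\bitval_n(C'(\mathbf{b}))$.

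Concretely, the \pseudog first clips to $x' = \max(0,\min(Ny,x))$ so that $x'/y\in[0,N]$ whenever $0<y\leq 1$, and then mirrors the two-case scheme of the preceding discussion, rescaled by $y$. For case~1 (where $x'/y \in [\lfloor x'/y\rfloor,\lfloor x'/y\rfloor+\tfrac{1}{2}]$) set $w_1 = x' + y/4$; for case~2 (where $x'/y\in[\lceil x'/y\rceil-\tfrac{1}{2},\lceil x'/y\rceil]$) set $w_2' = x' - y/4 + \Heaviside(y/4 - x')\cdot y$, where the product $\Heaviside(y/4-x')\cdot y$ is produced by \cref{lem:mult-by-Heaviside} and handles the endpoint $x'/y = 0$ exactly as $\delta_2$ did in the proof of \cref{prop:implicit-function}. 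Compute $\mathbf{b}_1 = \bitextract_n(w_1,y)$ and $\mathbf{b}_2' = \bitextract_n(w_2',y)$, feed both through the \linear arithmetic circuit $C'$ obtained from $C$ via \cref{lem:boolean-circuit-to-linear-circuit}, and form $z_1 = \bitmult_n(z,C'(\mathbf{b}_1))$ and $z_2 = \bitmult_n(z,C'(\mathbf{b}_2'))$. Finally, output
\[
z_2 + S_n(x',y)\cdot(z_1-z_2),
\]
where the multiplication by $S_n(x',y) = \Heaviside(T_n(x',y))$ is again performed by \cref{lem:mult-by-Heaviside}. Using $S_n(x',y)$ (period $y$) in place of $S_n(x',1)$ is essential, since the two branches now cover half-unit intervals of $x'/y$, i.e., intervals of length $y/2$ in $x'$.

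Correctness mirrors the pre-\cref{prop:implicit-function} case analysis, transported to $x'/y\in[0,N]$: on $[\lfloor x'/y\rfloor,\lfloor x'/y\rfloor+\tfrac{1}{2})$ the wave $S_n(x',y)$ equals $1$ and case~1 returns $z\cdot f(\lfloor x'/y\rfloor)$; on $(\lceil x'/y\rceil-\tfrac{1}{2},\lceil x'/y\rceil)$ it equals $0$ and case~2 returns the same value; at integer crossovers $x'/y\in\{1,\dots,N\}$ the wave is the full interval $[0,1]$, so the output sweeps all convex combinations of $z\cdot f(x'/y-1)$ and $z\cdot f(x'/y)$, matching the correspondence prescribed by \cref{def:implicit-piecewise-constant}. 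The main obstacle I expect is verifying, uniformly across $x'/y\in[0,N]$, that both $w_1/y$ and $w_2'/y$ always land in the good set $G$ so that \cref{lem:bit-extraction} yields genuinely Boolean outputs; the $\delta$-correction covers the left endpoint $x'/y=0$, while the clipping $x'\leq Ny$ together with the $\pm y/4$ translations keeps both quantities strictly below $N+1$ on the right, and the half-unit spacing guarantees that whenever one of the two translations hits an integer the other is safely in the interior of a unit interval.
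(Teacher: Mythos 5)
Your construction matches the paper's proof essentially step for step: clip to $x' = \max(0,\min(Ny,x))$, shift by $\pm y/4$, apply the $\delta = \Heaviside(\tfrac{1}{4}y - x')\,y$ correction at the left endpoint, bit-extract with divisor $y$, scale the binary outputs by $z$ via $\bitmult$, and select between the two branches with the square wave $S_n(x',y)$ multiplied in via \cref{lem:mult-by-Heaviside}. The correctness reasoning (halves of unit intervals in $x'/y$, full convex hull at integer crossovers, and the safety of both translates) is likewise the same as the case analysis the paper presents just before the proposition; only the variable names differ.
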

\begin{proof}
  Let $C'$ be the \linear arithmetic circuit obtained from $C$ by
  \cref{lem:boolean-circuit-to-linear-circuit}. We construct a
  \linear pseudo-circuit computing $h(x,y,z)$, operating as follows.
  \begin{enumerate}
\item Compute $x' = \max(0,\min(yN,x))$. Thus when $y>0$ we have $\frac{x'}{y} \in [0,N]$ and $g(\frac{x}{y})=g(\frac{x'}{y})$.
\item Compute $\delta=\Heaviside(\frac{1}{4}y-x')y$ using \cref{lem:mult-by-Heaviside}.
\item Compute $w_1=x'+\frac{1}{4}y$ and $\mathbf{b}_1 = \bitextract_n(w_1,y)$.
\item Compute $g_1 = \bitmult(z,C'(\mathbf{b}_1))$.
\item Compute $w'_2=x'-\frac{1}{4}y+\delta$ and $\mathbf{b}'_2 = \bitextract_n(w'_2,y)$.
\item Compute $g'_2 = \bitmult(z,C'(\mathbf{b}'_2))$.
\item Output $g'_2 + S_n(x',y)(g_1-g'_2)$, where the last term is computed using \cref{lem:mult-by-Heaviside}.
\end{enumerate}
The correctness of the circuit follows from the preceding case
analysis, using $\frac{x}{y}$ in place of $x$ in the analysis.
\end{proof}

\section{Concave Games, Nash Equilibria and Other Equilibrium Notions} \label{sec:games1}

The first application of our \linoptgate will be to the domain of strategic games for computing Nash equilibria, as well as other related equilibrium notions. To demonstrate the use of our technique, we first ``warm up'' with two basic applications, namely (a) computing Nash equilibria in \emph{bimatrix games} (\Cref{sec:bimatrix-games}) and (b) computing fixed point solutions of \emph{digraph threshold games} (\Cref{sec:threshold-games}). These two applications will be used as examples to demonstrate the use of our \linoptgate via linear programs, similar to linear program $\mathcal{P}$ (see Program~\eqref{eq:OPT-gate-linear} in \Cref{sec:lin-opt-gate}) or via feasibility programs, similar to feasibility program $\mathcal{Q}$ (see Program~\eqref{eq:feasibility-general} in \Cref{sec:lin-opt-gate}) respectively. 

Then in \Cref{sec:LBRO-games} we define a very large class of games, which we refer to as \emph{games with \linear best response oracles (\lbro games)}. This class has a technical definition, and essentially captures every game in which the best response of the agent can be computed by a \linear arithmetic circuit (or, to be more precise, a \pseudog). The PPAD-membership of \lbro games is then immediate from our results in \Cref{sec:lin-opt-gate} via the use of our \linoptgate. This provides us with a very strong unified tool to show the PPAD-membership of a very large class of games, by merely showing that these games are \lbro games.

More precisely, in \Cref{sec:generalized-concave-games} we consider \emph{generalized equilibria} in \emph{concave games}. Our PPAD-membership proof will capture a large class of concave games that subsumes bimatrix games, polymatrix games and more generally a class of games that we call \linear succinct games (recall \Cref{rem:linear-vs-linear} for the interpretation of the term \linear), as well as bilinear games, general threshold games, and essentially all concave games for which Nash equilibria in rational numbers exist. In \Cref{sec:personalized}, we consider an alternative equilibrium notion, due to \citet{SICOMP:KintaliPRST13}, called \emph{personalized equilibrium}, and in \Cref{sec:proper} we consider $\varepsilon$-proper equilibria in \linear succinct games (including polymatrix games), a well-known equilibrium notion due to \citet{IJGT:Myerson78}. 

With the \linoptgate at hand, the proofs that we will develop in this section will be conceptually very simple, basically mimicking the simplest proofs of existence for these settings. More precisely, all of our PPAD-membership results for this section are obtained as special cases of our main theorem for \lbro games (see \Cref{thm:LBRO-equilibria}). In fact, we also employ the same theorem to obtain PPAD-membership results for congestion games in \Cref{sec:congestion-games}.

\subsection{Warm-up: Nash Equilibria in Bimatrix Games via Linear Programs}\label{sec:bimatrix-games}

We define bimatrix games below.

\begin{definition}[Bimatrix Game]\label{def:bimatrix-game}
A \emph{bimatrix game} is a strategic (normal form) game played between two players. Each player $i \in \{1,2\}$ has a finite set of pure strategies $S_i = \{1, \ldots, m_i\}$, and a payoff function $u_i = S_1 \times S_2 \rightarrow \mathbb{R}$, mapping pairs of pure strategies to payoffs. The players can also randomize over their pure strategies. A \emph{mixed strategy} $\mathbf{x}_i$ of player $i$ is a probability distribution over pure strategies in $S_i$. The domain $\Sigma_i$ of mixed strategies for player $i$ is the $(m_i-1)$-dimensional unit simplex, i.e., $\Sigma_i:=\{y \in \mathbb{R}_{\geq 0}^{m_i}\colon \sum_{j=1}^{m_i} y_j = 1\}$. Given a mixed strategy $\mathbf{x}_i$ for player $i$, the $j$th coordinate $x_{i,j}$ denotes the probability that player $i$ is playing pure strategy $j$.
The expected payoff $\tilde{u}_i(\mathbf{x}_1,\mathbf{x}_2)$ of player $i$ given a pair of mixed strategies $(\mathbf{x}_1,\mathbf{x}_2)$ is defined as:
\begin{equation}\label{eq:bimatrix-payoff}
\tilde{u}_i(\mathbf{x}_1,\mathbf{x}_2) = \sum_{j_1=1}^{m_1}\sum_{j_2=1}^{m_2}x_{1,j_1}\cdot x_{2,j_2}\cdot u_i(j_1,j_2)
\end{equation}
\end{definition}

\noindent The name ``bimatrix games'' comes from the fact that for $2$ players, the payoffs can be expressed via two $m_1 \times m_2$ matrices, one for each player. 

\begin{definition}[Nash equilibrium]\label{def:Nash-eq-bimatrix}
A pair of mixed strategies $(\mathbf{x}_1,\mathbf{x}_2) \in \Sigma_1 \times \Sigma_2$ is a (mixed) \emph{Nash equilibrium} of a bimatrix game if for any player $i \in \{1,2\}$ and any mixed strategy $\mathbf{x}'_i \in \Sigma_i$ of that player, it holds that $\tilde{u}_i(\mathbf{x}_i,\mathbf{x}_{3-i}) \geq \tilde{u}_i(\mathbf{x}'_i,\mathbf{x}_{3-i})$, i.e., no player can improve their expected payoff by unilaterally deviating to any other mixed strategy.
\end{definition}

\paragraph{Finding Nash equilibria.} We are interested in finding a Nash equilibrium of a bimatrix game, when the payoff functions $u_i$ for each player $i \in \{1,2\}$ are given explicitly as rational numbers. This is a total search problem, since a solution is always guaranteed to exist by Nash's Theorem \citep{Nash50}, the proof of which employs the Kakutani fixed point theorem \citep{kakutani1941generalization}. \citet{JCSS:Papadimitriou1994} showed that the problem lies in PPAD, which also implies the existence of rational solutions, i.e., Nash equilibria in which the mixed strategies of the players are rational numbers. Note that \citeauthor{JCSS:Papadimitriou1994}'s proof essentially appeals to an alternative proof of Nash equilibrium existence due to \citet{cottle1968complementarity} that formulates the problem as an LCP (see also \citep{lemke1964equilibrium}); we discussed the general challenges of using this approach in \Cref{sec:Other-Approaches}. Indeed, even in the simple case of bimatrix games, an argument is required against ray termination.\footnote{Such an argument can be found in \citep{cottle2009linear} and also implicitly in \citep{cottle1968complementarity} and \citep{lemke1965bimatrix}.} In contrast, our \linoptgate allows us to organically obtain PPAD-membership from the standard, textbook existence proof of \citet{Nash50}, without any further arguments. 

In a celebrated paper, \citet{chen2009settling} showed that the problem is in fact PPAD-complete (and the hardness holds even when one aims to find approximate equilibria). Of course, \citeauthor{Nash50}'s theorem applies to strategic games beyond bimatrix games, i.e., games with more than $2$ players. In that case however, it was known already from \citeauthor{Nash50}'s original work that with $3$ or more players, there are games for which all Nash equilibria are irrational. For those games, it has been shown that the problem of computing exact Nash equilibria is FIXP-complete \citep{etessami2010complexity}.

 \medskip

\noindent We will demonstrate how our \linoptgate, in particular through its capability of solving linear programs of the form $\mathcal{P}$ (see Program~\eqref{eq:OPT-gate-linear} in \Cref{sec:lin-opt-gate}), can be used to rather straightforwardly show that computing Nash equilibria in bimatrix games is in PPAD. For ease of notation, let $-i=3-i$. We start by observing that given a mixed strategy $\mathbf{x}_{-i}$ for the other player, player $i$ can find an optimal (i.e., an expected payoff-maximizing) mixed strategy $\mathbf{y}_i$ via the solution to the following linear program:

\begin{center}\underline{Linear Program $\mathcal{P}$}
\begin{equation*}
\begin{aligned}
\mbox{maximize}\quad & \tilde{u}_i(\mathbf{y}_i,\mathbf{x}_{-i})\\
\mbox{subject to}\quad & \sum_{j=1}^{m_i} y_{ij} = 1\\
& y_{ij} \geq 0 \qquad j=1,\dots,m_i
\end{aligned}
\end{equation*}
\end{center}
``Spelling out'' the objective function by substituting \Cref{eq:bimatrix-payoff} into it, it becomes:
\begin{equation}\label{eq:bimatrix-payoff-with-y}
\tilde{u}_i(\mathbf{y}_{i},\mathbf{x}_{-i}) = \sum_{j_i=1}^{m_i}\sum_{j_{-i}=1}^{m_{-i}}y_{i,j_i}\cdot x_{-i,j_{-i}}\cdot u_i(j_i,j_{-i})
\end{equation}

Notice that \Cref{eq:bimatrix-payoff-with-y} is linear in $\mathbf{y}_i$, and hence $\mathcal{P}$ is indeed a linear program. \medskip

\noindent Now, to prove membership of finding Nash equilibria in bimatrix games in PPAD, all we need to do is to construct a function whose fixed points will be the Nash equilibria of the bimatrix game, and argue that this function can be computed by a \linear arithmetic circuit containing \linoptgates. This is essentially the straightforward translation of the proof via the Kakutani fixed point theorem to a PPAD-membership proof.

\begin{theorem}\label{thm:bimatrix-nash-ppad}
Computing a Nash equilibrium of a bimatrix game is in PPAD.
\end{theorem}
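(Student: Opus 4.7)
The plan is to construct a \linear arithmetic circuit $F \colon \Sigma_1 \times \Sigma_2 \to \Sigma_1 \times \Sigma_2$, built out of two \linoptgates, whose fixed points are exactly the Nash equilibria of the given bimatrix game. Given input $(\mathbf{x}_1, \mathbf{x}_2)$, the $i$th \linoptgate will receive the opponent's mixed strategy $\mathbf{x}_{-i}$ as a \circparam and output some optimal solution $\mathbf{y}_i$ of the best-response linear program $\mathcal{P}$. The circuit then outputs $F(\mathbf{x}_1, \mathbf{x}_2) = (\mathbf{y}_1, \mathbf{y}_2)$, so any fixed point is a profile in which each $\mathbf{x}_i$ is a best response to $\mathbf{x}_{-i}$, i.e., a Nash equilibrium by \Cref{def:Nash-eq-bimatrix}. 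This is just the standard Kakutani-style existence proof, with the best-response correspondences implemented by \linoptgates.

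Next, I verify the conditions of \Cref{sec:linopt-applications} for each \linoptgate. The constraint matrix encoding $\sum_j y_{ij} = 1$ and $y_{ij} \geq 0$, together with its right-hand side, consists only of fixed rational constants; in particular no \circparams appear on the left-hand side of any constraint, and $b$ is a constant (no input needed). The feasible domain $\Sigma_i$ is obviously non-empty. The objective $\mathbf{y}_i \mapsto \tilde{u}_i(\mathbf{y}_i, \mathbf{x}_{-i})$ is linear in $\mathbf{y}_i$, hence both convex and concave; by \Cref{rem:subgradient-supergradient}, since we maximize, it suffices to supply its supergradient, which is the vector $c$ with $c_{j_i} = \sum_{j_{-i}} x_{-i,j_{-i}} u_i(j_i, j_{-i})$. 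This is a linear function of $\mathbf{x}_{-i}$ with fixed coefficients $u_i(j_i, j_{-i})$ and is therefore trivially computable by a \linear arithmetic circuit using only $+$ and $\times \zeta$ gates; no auxiliary fixed-point variables are required, so the relevant \pseudog is in fact a plain \linear arithmetic circuit ($\ell = 0$).

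With the conditions verified, I invoke \Cref{thm:linoptgate} to replace each \linoptgate by an equivalent standard gadget, producing a \linear arithmetic circuit on the compact convex polytopal domain $\Sigma_1 \times \Sigma_2$ whose fixed points, in the sense of \Cref{rem:linoptgate}, are in correspondence with fixed points of $F$ and hence with Nash equilibria. The problem therefore lies in Linear-\FIXP, and by \Cref{thm:PPAD=linear-FIXP} it lies in PPAD.

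I do not anticipate any real obstacle; the only point that needs a small remark is that the best-response LP may admit multiple optima, so each \linoptgate is a multi-valued pseudo-gate rather than a single-valued function. This is exactly the situation handled by \Cref{rem:linoptgate}: a fixed point only requires the existence of an acceptable assignment to the gates making the overall output equal the input, and every acceptable assignment of the $i$th \linoptgate returns a genuine best response of player $i$, so the correspondence with Nash equilibria is preserved. This warm-up is essentially a direct transcription of Nash's existence proof into the \linoptgate framework and will serve as a template for the more involved applications later in the section.
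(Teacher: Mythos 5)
Your proof is correct and follows the same route as the paper: construct $F$ from two \linoptgates implementing the best-response LP $\mathcal{P}$, verify the gate conditions (fixed constraints, linear objective with $\mathbf{x}_{-i}$ as \circparams, non-empty bounded domain), and conclude via \Cref{thm:linoptgate}, \Cref{rem:linoptgate}, and \Cref{thm:PPAD=linear-FIXP}. Your added remark about multi-valued gate outputs is a nice explicit acknowledgment of a subtlety that the paper delegates to \Cref{rem:linoptgate} without further comment.
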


\begin{proof}
We construct a \linear arithmetic circuit with \linoptgates $F: \Sigma_1 \times \Sigma_2 \rightarrow \Sigma_1 \times \Sigma_2$, in which $F_i(\mathbf{x}_1,\mathbf{x}_2)$ is an optimal solution of linear program $\mathcal{P}$ for player $i$. Since each linear program computes a best-response mixed strategy for the corresponding player, the resulting pair of mixed strategies at a fixed point of $F$ is necessarily a Nash equilibrium. It remains to show that the linear program $\mathcal{P}$ can be computed by our \linoptgate. Indeed, all the constraints are linear functions of the variables $\mathbf{y}_i$, and the \circparams $\mathbf{x}_{-i}$ do not appear in the constraints. The feasible domain is non-empty and bounded, and the gradient of the objective function (with respect to the variables $\mathbf{y}_i$) is linear in the variables $\mathbf{y}_i$ and \circparams $\mathbf{x}_{-i}$ and hence can be computed by a \linear arithmetic circuit. Thus, by \cref{thm:linoptgate} there is an equivalent \linear arithmetic circuit that does not use \linoptgates and the problem lies in PPAD by \cref{rem:linoptgate}.
\end{proof}
\noindent We remark that in the proofs of our results from now on, we will not explicitly reference \Cref{thm:linoptgate} and \Cref{rem:linoptgate}, as their application will be straightforward. \medskip

\noindent Before we conclude the section, we remark that for (general) games of $3$ or more players, the objective function would be a polynomial of degree at least $3$ in the variables and the \circparams, and its gradient could not be given by a \pseudog. This is of course not a coincidence, as, as we said earlier, games with $3$ or more players may only have irrational Nash equilibria.

\subsection{Warm-up: Equilibria of Digraph Threshold Games via Feasibility Programs}\label{sec:threshold-games}

Our next ``warm-up'' application is that of finding equilibria in \emph{digraph threshold games}. A digraph threshold game is played on a directed graph between $n$ players that choose a real number in $[0,1]$, and their payoffs depend on the chosen numbers of their incoming neighbors and a threshold. Digraph Threshold games were defined by \citet{papadimitriou2021public}\footnote{\citet{papadimitriou2021public} used the term ``threshold games'' to refer to these games. However, this term is ambiguous, as it has also been used to refer to a specific class of congestion games, see \citep{ackermann2008impact}. In fact, both variants have been used in relation to the general class of public goods games \citep{papadimitriou2021public,klimm2023complexity}, which adds to the potential for confusion. For this reason, we have added the term ``digraph'' in front of the name for distinction.} as a tool for proving PPAD-hardness of \emph{public goods games} on directed graphs, and were later on used to establish the PPAD-hardness of other problems, e.g., see \citep{chen2021throttling,chen2022computational}\footnote{To be precise, what is used for PPAD-hardness is their approximate version, which was shown to be PPAD-complete in \citep{papadimitriou2021public}. Here we show the membership in PPAD for the exact version, strengthening the membership result.}.

\begin{definition}[Digraph Threshold Game]\label{def:threshold-game} A digraph threshold game $\mathcal{G}(V,E,t)$ is defined on a directed graph $G = (V,E)$ with $|V|=n$, and $t \in (0,1)$ is a threshold. The nodes of $G$ correspond to players, and each player $i \in V$ chooses a strategy $x_i \in [0,1]$. Let $\mathbf{x} = (x_1, \ldots, x_n)$ be the vector of (pure) strategies that we will refer to as a (pure) strategy profile. Let $N(i)$ denote the (in-)\emph{neighborhood} of player $i \in V$, i.e., the set consisting of all the players that have arcs to player $i$, i.e., $N(i) = \{j: (j,i) \in E\}$. A strategy profile $\mathbf{x}$ is an equilibrium if it satisfies
\begin{equation}\label{eq:threshold-game}
x_i =  \begin{cases} 
      0 & \text{if } \sum_{j \in N(i)}x_j > t \\
      1 & \text{if } \sum_{j \in N(i)}x_j < t \\
      [0,1] & \text{if } \sum_{j \in N(i)}x_j = t
   \end{cases}
\end{equation}
where in the above expression we have abused notation, using $x_i = [0,1]$ to denote $x_i \in [0,1]$.
\end{definition}

\noindent While we did use the terminology of games, digraph threshold games can alternatively be viewed as finding fixed point solutions to a set of constraints. This is why we chose this application to demonstrate the use of our feasibility programs $\mathcal{Q}$ (see Program~\eqref{eq:feasibility-general} in \Cref{sec:lin-opt-gate}) which we can obtain from our \linoptgate. Indeed, \Cref{eq:threshold-game} can straightforwardly be written as a feasibility program $\mathcal{Q}$ as follows:\footnote{The perceptive reader might observe that digraph threshold games look a lot like the Heaviside function introduced in \cref{ex:Heaviside}. Indeed, one could obtain the PPAD-membership of digraph threshold games as fixed points of functions containing only Heaviside \pseudogs rather than \linoptgates. We elected to use the feasibility program formulation instead, to introduce the reader to its use in light of further applications to come later. Also note that the feasibility program $\mathcal{Q}$ contains equalities; these can easily be modified to inequalities by splitting the constraints into two.}

\begin{center}\underline{Feasibility Program $\mathcal{Q}$}\end{center}
\begin{equation*}
\begin{aligned}
\sum_{j \in N(i)}x_j > t \Rightarrow y_i = 0 \\
\sum_{j \in N(i)}x_j < t \Rightarrow y_i = 1 \\
0 \leq y_i \leq 1
\end{aligned}
\end{equation*}

\noindent Note that $y_i$ is the only variable of this feasibility program, while the $x_j$, $j \in N(i)$, are \circparams.

In turn, membership in PPAD follows rather easily by constructing an appropriate function whose fixed point coordinates are the outcomes of these feasibility programs (one for each player), and arguing that it can be computed by a \linear arithmetic circuit containing \linoptgates. We have the following theorem.

\begin{theorem}\label{thm:threshold-games-PPAD}
Computing an equilibrium of a digraph threshold game is in PPAD.
\end{theorem}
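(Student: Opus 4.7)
My plan is to mimic the strategy used for bimatrix games in Theorem 4.1: construct a self-map $F \colon [0,1]^n \to [0,1]^n$ whose fixed points are exactly the equilibria of the digraph threshold game, where each coordinate $F_i$ is the output of the feasibility program $\mathcal{Q}$ displayed above. Then invoke \cref{thm:linoptgate} and \cref{rem:linoptgate} to conclude that a fixed point can be computed in PPAD.

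For each player $i$, the feasibility program has a single variable (the candidate output $y_i \in [0,1]$), and the \circparams are the strategies $\{x_j\}_{j \in N(i)}$ of its in-neighbors. The two conditional implications
\[
    \textstyle\sum_{j \in N(i)} x_j > t \;\Longrightarrow\; y_i \leq 0, \qquad \textstyle\sum_{j \in N(i)} x_j < t \;\Longrightarrow\; -y_i \leq -1,
\]
together with the unconditional constraints $0 \leq y_i \leq 1$, fit the syntactic form of $\mathcal{Q}$ in \eqref{eq:feasibility-general}: the \circparams appear only inside the linear expressions $h(\cdot)$ on the left-hand sides of the conditionals, and the right-hand sides of the constraints on $y_i$ are constants. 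Thus on the level of syntax the feasibility program is one the \linoptgate is allowed to handle.

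The one nontrivial thing to verify is solvability of $\mathcal{Q}$ for every possible input profile $(x_j)_{j \in N(i)} \in [0,1]^{|N(i)|}$, which is what lets the \linoptgate produce a well-defined output. This is a short case check: if $\sum_{j \in N(i)} x_j > t$ take $y_i = 0$; if $\sum_{j \in N(i)} x_j < t$ take $y_i = 1$; if $\sum_{j \in N(i)} x_j = t$ neither implication fires so any $y_i \in [0,1]$ works. Note that this is simpler than the general setup of \cref{lem:feasibility-of-q-graph}, because here there is only one variable per program and no combinatorial interaction between constraints, so no \qgraph analysis is needed.

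With solvability in hand, we assemble $F(\mathbf{x}) := (F_1(\mathbf{x}), \dots, F_n(\mathbf{x}))$ as a \linear arithmetic circuit that uses one \linoptgate per player for the feasibility program described above. At any fixed point $\mathbf{x}^\ast$ of $F$, the value $x_i^\ast = F_i(\mathbf{x}^\ast)$ is a valid output of $\mathcal{Q}$ at $(x_j^\ast)_{j \in N(i)}$, which by inspection is exactly the equilibrium condition \eqref{eq:threshold-game}. Applying \cref{thm:linoptgate} to eliminate the \linoptgates in favor of ordinary gates with auxiliary fixed-point variables, and then \cref{thm:PPAD=linear-FIXP}, places the problem of computing an equilibrium in PPAD. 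I do not anticipate any real obstacle: the main point of the proof is to illustrate that the ``plug-and-play'' feasibility-gate machinery immediately applies once one has written the equilibrium conditions in the conditional-constraint form of $\mathcal{Q}$.
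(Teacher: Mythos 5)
Your proof is correct and takes essentially the same route as the paper: build $F \colon [0,1]^n \to [0,1]^n$ with one feasibility-program gate per player, observe that the \circparams appear only in the left-hand conditions $h(\cdot)$, check (trivial) solvability, and conclude via \cref{thm:linoptgate} and \cref{thm:PPAD=linear-FIXP}. Your explicit case analysis for solvability and your remark that the \qgraph machinery of \cref{lem:feasibility-of-q-graph} is unnecessary here simply spell out what the paper compresses into ``$\mathcal{Q}$ is trivially always feasible.''
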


\begin{proof}
We construct a function $F: [0,1]^n \rightarrow [0,1]^n$ in which for a vector $\mathbf{x}=(x_1,\ldots,x_n)$, $F_i(\mathbf{x})$ is the outcome of the feasibility program $\mathcal{Q}$ for player $i \in N$. By definition, $\mathbf{x} = F(\mathbf{x})$ satisfies the constraints of \Cref{eq:threshold-game} for every player, and is thus an equilibrium of the digraph threshold game. Clearly, the \circparams only appear on the left-hand side of the conditional constraints, and $\mathcal{Q}$ is trivially always feasible. 
\end{proof}

\subsection{PPAD-membership via \Linear Best Response Oracles}\label{sec:LBRO-games}

In the applications that we presented in \cref{sec:bimatrix-games,sec:threshold-games}, what we essentially did was construct a function $F:\times_{i \in N} D_i \rightarrow \times_{i \in N} D_i$, where $D_i$ is the strategy space of player $i$, such that coordinate $F_i(\mathbf{x})$ is a best response of player $i$ to $\mathbf{x}_{-i}$. It is then clear that any fixed point of $F$ must be an equilibrium. This is in fact the application of the Kakutani fixed point theorem to show equilibrium existence. PPAD-membership then followed from the fact that the best responses could be computed via \linoptgates.

In this section we will formulate the aforementioned principle as a general theorem, which will have \cref{thm:bimatrix-nash-ppad,thm:threshold-games-PPAD}, as well as several other theorems that we will prove later in this section and in \cref{sec:congestion-games}, as corollaries. Below we define the very general notion of a game with \linear best response oracles, when the strategy spaces $D_i$ are convex polytopes.

\begin{definition}[Game with \Linear Best Response Oracles (\lbro Game)]\label{def:LBRO-game}
A \emph{game with \linear best response oracles (\lbro)} is a game in which a best response of each player $i \in N$ is outputted by an oracle $\mathcal{C}_i: D_1 \ldots \times \ldots D_{i-1} \times D_{i+1} \times \ldots \times D_n \rightarrow D_i$, which is given by a \pseudog.
\end{definition}

\noindent An equilibrium of a \lbro game is a vector of strategies $(\mathbf{x}_1,\ldots,\mathbf{x}_{n})$, one for each player, such that each player is choosing a best response. As we will see in the applications later in the section, the notion of ``best response'' as well as the notion of ``equilibrium'' can vary, depending on the application at hand. Using the \linoptgate the following theorem becomes almost trivial.

\begin{theorem}\label{thm:LBRO-equilibria}
Computing an equilibrium of a \lbro Game is in PPAD.
\end{theorem}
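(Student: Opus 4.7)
The plan is to exhibit the problem of finding an equilibrium of an \lbro game as a basic Linear-\FIXP problem and then invoke \Cref{thm:PPAD=linear-FIXP}. Concretely, I would define the map
\[
F \colon D_1 \times \cdots \times D_n \to D_1 \times \cdots \times D_n, \qquad F(\mathbf{x})_i := \mathcal{C}_i(\mathbf{x}_{-i}),
\]
so that the $i$-th coordinate of $F(\mathbf{x})$ is simply the best response produced by the \linear best response oracle $\mathcal{C}_i$ on input $\mathbf{x}_{-i}$. By definition of an equilibrium, $\mathbf{x}$ is an equilibrium of the game if and only if $\mathbf{x}_i$ is a best response to $\mathbf{x}_{-i}$ for every $i \in N$, which is precisely the fixed point condition $\mathbf{x} = F(\mathbf{x})$.

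Next I would argue that $F$ can be realised as a single \pseudog. Each oracle $\mathcal{C}_i$ is, by \Cref{def:LBRO-game}, given by a \pseudog, i.e., a \linear arithmetic circuit with its own auxiliary fixed point variables (and possibly containing \linoptgates, which are themselves \pseudogs by \Cref{thm:linoptgate}). Wiring the $n$ oracles in parallel, sharing only the strategy inputs $\mathbf{x}_1,\dots,\mathbf{x}_n$ and concatenating the auxiliary variables, yields a single \pseudog computing $F$, whose domain $D := D_1 \times \cdots \times D_n$ is a nonempty compact convex polytope described by linear inequalities with rational coefficients (as each $D_i$ is).

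Applying \Cref{thm:linoptgate} to eliminate any remaining \linoptgates, we obtain an equivalent standard \linear arithmetic circuit $G \colon D \times [0,1]^t \to D \times [0,1]^t$ whose fixed points project to fixed points of $F$. By \Cref{rem:linoptgate} this reduces the task of finding an equilibrium to finding a fixed point of a basic Linear-\FIXP instance, which lies in PPAD by \Cref{thm:PPAD=linear-FIXP}. There is essentially no technical obstacle here: the entire work of handling convex optimization and arbitrary best-response computations has already been encapsulated in the definition of the \pseudog oracle and in the correctness of the \linoptgate. The only mild subtlety worth a sentence of commentary is that \Cref{def:LBRO-game} promises $\mathcal{C}_i$ lands in $D_i$, so $F(D) \subseteq D$ holds by construction, and hence the promise required by \Cref{def:basic-fixp} is satisfied.
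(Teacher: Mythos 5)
Your proposal is correct and takes essentially the same approach as the paper: define $F$ by evaluating each best-response oracle $\mathcal{C}_i$ in parallel, observe that fixed points of $F$ are equilibria, and appeal to \Cref{thm:linoptgate} and \Cref{rem:linoptgate} (via \Cref{thm:PPAD=linear-FIXP}) to conclude PPAD-membership. The paper's proof is essentially a two-sentence compression of what you have written out more carefully.
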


\begin{proof}
We construct a function $F: \times_{i \in N} D_i \rightarrow \times_{i \in N} D_i$ in which for a vector $\mathbf{x}=(x_1,\ldots,x_n)$, $F_i(\mathbf{x})$ is the outcome of the oracle $\mathcal{C}_i(\mathbf{x}_{-i})$ for player $i \in N$. By definition, when $\mathbf{x} = F(\mathbf{x})$, $\mathbf{x}$ is an equilibrium.
\end{proof}

\noindent In \Cref{thm:bimatrix-nash-ppad}, $\mathcal{C}_i$ was given by a \linoptgate (which is, by definition, a \pseudog) computing solutions of linear program $\mathcal{P}$, whereas in \Cref{thm:threshold-games-PPAD}, $\mathcal{C}_i$ was given by a \pseudog computing solutions of feasibility program $\mathcal{Q}$. In the next subsections, we present several other applications of well-known games that we prove to be \lbro games, thus establishing their PPAD-membership.

\begin{remark}\label{rem:rationality-LBRO}
We remark that even if one is not necessarily interested in PPAD-membership, our technique also provides an easy approach for establishing the rationality of equilibria, as this is also implied by \Cref{thm:LBRO-equilibria}. 
\end{remark}

\begin{remark}\label{rem:lbro-own-input}
In \Cref{def:LBRO-game}, we defined the input domain of the oracle $\mathcal{C}_i$ to not include $D_i$, i.e., the domain of strategies of player $i$, whose best-response the oracle is calculating. This is natural, since a best-response oracle can intuitively be best understood as a device which computes the best response of a player against the chosen strategies of her opponents only. Still, in our applications in \Cref{sec:congestion-games}, it will be useful to extend the definition of the oracle to be a function $\mathcal{C}_i: \times_j D_j \rightarrow D_i$, in order to be able to apply \Cref{thm:LBRO-equilibria} above, which still applies for this more general setting.
\end{remark}

\subsubsection{Concave Games and Generalized Equilibria}\label{sec:generalized-concave-games}

In this section we will generalize the PPAD-membership of Nash equilibria in the previous two sections to a class of more general games, and to a more general equilibrium notion. These will be special cases of \emph{concave games} \citep{rosen1965concave} and \emph{generalized equilibria} \citep{debreu1952social} that admit equilibria in rational numbers. 

\paragraph{Concave Games.} Concave games are generalizations of strategic games in which there is a set $N$ of $n$ players, each of which has a strategy space $\Sigma_i \in \mathbb{R}^{m_i}$ which is compact and convex. Let $\Sigma = \Sigma_1 \times \ldots \times \Sigma_n$ be the space of strategy vectors; we will refer to an element $\mathbf{x}=(\mathbf{x}_1, \ldots , \mathbf{x}_n)$ of $\Sigma$ as a \emph{strategy profile}. Each player $i$ also has a payoff function $u_i: \Sigma \rightarrow \mathbb{R}$, which is continuous in $\mathbf{x}$ and concave in $\mathbf{x}_i$, assuming that the rest of the strategy profile $\mathbf{x}_{-i}$ is fixed.

\paragraph{Generalized Equilibrium.} An \emph{equilibrium} of a concave game is a strategy profile $\mathbf{x}$ in which every player chooses a payoff-maximizing element in her strategy space, i.e., $u_i(\mathbf{x}) = \max_{\mathbf{y}_i \in \Sigma_i}u_i(\mathbf{y}_i,\mathbf{x}_{-i})$ for every player $i \in N$. \citet{debreu1952social} defined the notion of a \emph{generalized equilibrium}, in which a strategy profile $\mathbf{x}$ also restricts the choices in the strategy space of the players. In particular, letting $\Sigma_{-i} = \Sigma_1 \times \ldots \Sigma_{i-1} \times \Sigma_{i+1} \ldots \Sigma_n$, there is a correspondence $\gamma_i:\Sigma_{-i} \rightrightarrows \Sigma_i$, which specifies the set $\gamma_i(\mathbf{x}_{-i}) \subseteq \Sigma_i$ that player $i$ is ``allowed'' to use. A generalized equilibrium of the game is a strategy profile $\mathbf{x}$ in which every player chooses a payoff-maximizing element in $\gamma_i(\mathbf{x}_{-i})$, i.e., $u_i(\mathbf{x}) = \max_{\mathbf{y}_i \in \gamma_i(\mathbf{x}_{-i})}u_i(\mathbf{y}_i,\mathbf{x}_{-i})$. The existence of a generalized equilibrium was established by \citet{debreu1952social} in concave games, when $\gamma_i$ is upper and lower hemicontinuous, convex-valued and non-empty valued. Similarly to \Cref{sec:bimatrix-games}, given a preference profile $\mathbf{x}_{-i}$ of the other players, the set of payoff-maximizing (allowable) strategies for player $i$ is given by the optimal solutions to the following convex program:

\begin{center}\underline{Convex Program $\mathcal{C}_\text{gen}$}\end{center}
\begin{equation*}
\begin{aligned}
\mbox{maximize}\quad & u_i(\mathbf{y}_i,\mathbf{x}_{-i})\\
\mbox{subject to}\quad & \mathbf{y}_i \in \gamma_i(\mathbf{x}_{-i})
\end{aligned}
\end{equation*}

\begin{remark}[Concave Games and Generalized Equilibria]
Concave games were studied by \citet{debreu1952social} (in fact, ``quasi-concave games'', where the payoff functions can be quasi-concave) in the context of his generalized equilibrium result (coined a ``Social Equilibrium'' there, see also \citep{dasgupta2015debreu}). Interestingly, \citet{rosen1965concave} also studied concave games independently, without referencing \citeauthor{debreu1952social}'s work. Seemingly the only difference between \citeauthor{rosen1965concave}'s and \citeauthor{debreu1952social}'s setting is that the former does not require the strategy profile space $\Sigma$ to necessarily be the product space of the players' strategy spaces $\Sigma_i$, for $i \in N$.

For the standard notion of equilibrium (rather than generalized equilibrium) \citeauthor{debreu1952social}'s theorem was concurrently and independently proven by \citet{fan1952fixed} and \citet{glicksberg1952further}, and for this reason it is often referred to as \emph{the \citeauthor{debreu1952social}-\citeauthor{fan1952fixed}-\citeauthor{glicksberg1952further} theorem for continuous games}. In fact, we will be referring to this theorem again throughout the paper, as it has been used in previous existence proofs in the applications that we consider. 
\end{remark}

\noindent It is not hard to see that concave games generalize $n$-player strategic games: the set of mixed strategies (i.e., randomizations over the finite set of pure strategies) is compact and convex and the payoff function is continuous in $\mathbf{x}$ and linear in $\mathbf{x}_i$. Thus a mixed Nash equilibrium in a strategic game is a special case of an equilibrium (and hence of a generalized equilibrium) in a concave game.

From the above discussion, it should be obvious that (generalized) equilibria of concave games are not guaranteed to be rational (indeed, the 3-player example of irrationality of \citet{Nash50} still applies, for example). We will consider subclasses of these games (based on the structure of the strategy spaces and the form of the payoff functions) and these equilibria (based on the structure of the sets induced by the functions $\gamma_i$) for which rational equilibria always exist; in fact, our proof of PPAD membership will also provide the certificate of rationality. We remark that for general concave games (under the necessary assumptions for the OPT gate for FIXP to work), \citet{SICOMP:Filos-RatsikasH2023} proved membership in FIXP; their theorems only apply to equilibria rather than generalized equilibria, but the extension in their setting is almost immediate. 

In our setting, we need to consider special cases of convex program $\mathcal{C}_\text{gen}$ which are amenable to the use of the \linoptgate. Specifically, the objective function will be such that its supergradient\footnote{We consider the supergradient here rather than the subgradient, since $u_i$ is concave rather than convex, see also \Cref{rem:subgradient-supergradient}.} with respect to $\mathbf{y}_i$ can be given by a \pseudog, and the constraints will be linear inequalities with the \circparams appearing only on the right-hand side of the constraints. We will also impose a bound on the domain of the variables $\mathbf{y}_i$. With this we have the following convex program:

\begin{center}\underline{Convex Program $\mathcal{C}$}\end{center}
\begin{equation*}
\begin{aligned}
\mbox{maximize}\quad & u_i(\mathbf{y}_i,\mathbf{x}_{-i})\\
\mbox{subject to}\quad & A_i \cdot \mathbf{y}_i \leq b_i(\mathbf{x}_{-i})\\
& \mathbf{y}_{i} \in [-R_i,R_i]^{m_i} 
\end{aligned}
\end{equation*}
\medskip

\noindent For the payoff function $u_i$, we do not require access to the function itself but rather to its supergradient with respect to $\mathbf{y}_i$, which is provided in the input as a \pseudog. Here, the strategy spaces of the players are given by linear inequalities which can depend on the strategies of the other players, and generally lie within a bounded domain $[-R_i,R_i]^{m_i}$, for some $R_i >0$. The functions $b_i(\cdot)$ are also given as \pseudogs. The following theorem is a corollary of \Cref{thm:LBRO-equilibria}.

\begin{theorem}\label{thm:generalized-eq-concave}
Computing a generalized equilibrium of a concave game, in which the strategy spaces are given as in the constraints of convex program $\mathcal{C}$ and the supergradients of the payoff functions of the players are given by \pseudogs is in PPAD. 
\end{theorem}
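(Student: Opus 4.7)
The plan is to apply Theorem~\ref{thm:LBRO-equilibria} by exhibiting the game in question as an \lbro game, where the best-response oracle for each player is realized by a \linoptgate solving convex program~$\mathcal{C}$. Concretely, for every player $i \in N$ I will define an oracle $\mathcal{C}_i: \Sigma_{-i} \to \Sigma_i$ that, on input $\mathbf{x}_{-i}$, returns an optimal solution to the instance of $\mathcal{C}$ with right-hand side $b_i(\mathbf{x}_{-i})$. By \Cref{def:Nash-eq-bimatrix}-style reasoning (now for the generalized setting), a strategy profile $\mathbf{x}$ satisfying $\mathbf{x}_i = \mathcal{C}_i(\mathbf{x}_{-i})$ for every~$i$ maximizes $u_i(\cdot,\mathbf{x}_{-i})$ over $\gamma_i(\mathbf{x}_{-i})$, and is therefore a generalized equilibrium.

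The bulk of the work is to verify that $\mathcal{C}_i$ can be implemented by a \pseudog, i.e., that convex program $\mathcal{C}$ is of the form admissible for the \linoptgate of \Cref{def:linoptgate}. I would check, in order: (i) the objective $\mathbf{y}_i \mapsto -u_i(\mathbf{y}_i,\mathbf{x}_{-i})$ is convex on the feasible domain, which is immediate from the concavity of $u_i$ in its own coordinate, and its subgradient is computable by a \pseudog since, by hypothesis, the supergradient of $u_i$ is given by such a gadget (recall \Cref{rem:subgradient-supergradient}); (ii) the constraint matrix $A_i$ is fixed, while the \circparams $\mathbf{x}_{-i}$ enter only through $b_i(\mathbf{x}_{-i})$ on the right-hand side, with $b_i$ itself a \pseudog that we compose ahead of the \linoptgate; (iii) the box $[-R_i,R_i]^{m_i}$ is an explicit bounded domain. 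Composing these pieces, $\mathcal{C}_i$ becomes a \pseudog as required by \Cref{def:LBRO-game}.

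The step I expect to need the most care is verifying the non-emptiness condition (Condition~\ref{enum:optgate-1}) of the \linoptgate at every $\mathbf{x}_{-i}$ that can arise during the fixed-point search. The natural hypothesis on the correspondences $\gamma_i$ inherited from the \citeauthor{debreu1952social}-\citeauthor{fan1952fixed}-\citeauthor{glicksberg1952further} framework guarantees that $\gamma_i(\mathbf{x}_{-i})$ is non-empty and convex; under the tacit assumption that $\gamma_i(\mathbf{x}_{-i})=\{\mathbf{y}_i\in[-R_i,R_i]^{m_i}: A_i\mathbf{y}_i\leq b_i(\mathbf{x}_{-i})\}$ is realized by the linear system in $\mathcal{C}$, the feasible domain is non-empty throughout. (If it were empty for some $\mathbf{x}_{-i}$, the \linoptgate would return an arbitrary point, and the resulting fixed point would not correspond to a generalized equilibrium; this is why the hypothesis is essential and should be stated as part of the input promise.)

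Having done all this, the wrapping is short: construct $F:\Sigma\to\Sigma$ by $F_i(\mathbf{x})=\mathcal{C}_i(\mathbf{x}_{-i})$, which is a \linear arithmetic circuit with \linoptgates, and invoke \Cref{thm:LBRO-equilibria} (equivalently \Cref{thm:linoptgate} together with \Cref{rem:linoptgate}) to conclude that finding a fixed point of $F$, and hence computing a generalized equilibrium, lies in PPAD. As a byproduct (cf.\ \Cref{rem:rationality-LBRO}), this also certifies the existence of a rational generalized equilibrium in this class of concave games.
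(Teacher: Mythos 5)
Your proposal is correct and takes essentially the same route as the paper: the paper's proof likewise observes that convex program $\mathcal{C}$ is exactly of the form admitted by the \linoptgate (supergradient given by a \pseudog, \circparams only on the right-hand side, bounded non-empty feasible region), uses the resulting \linoptgate as the oracle $\mathcal{C}_i$ for each player, and concludes by \Cref{thm:LBRO-equilibria}. Your write-up merely spells out the three admissibility checks and the non-emptiness promise a bit more explicitly than the paper does, which compresses these into the phrase ``by the discussion above.''
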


\begin{proof}
By the discussion above, it follows directly that convex program $\mathcal{C}$ can be computed by a \linoptgate, which can be used as the oracle $\mathcal{C}_i$ for each player $i \in N$ in the corresponding \lbro game.
\end{proof}

\noindent The class of concave games is very general, and hence captures several games of interest. We provide some examples of the type of equilibrium results that are captured by \Cref{thm:generalized-eq-concave} below. We will state the theorems for the computation of generalized equilibria, noting that they hold under the constraints for $b_i(\mathbf{x}_{-i})$ required for the \linoptgate to work. 

\paragraph{Bimatrix and Polymatrix Games.}The theorem establishes that \emph{generalized} Nash equilibria in bimatrix games exist, are rational, and are in PPAD, thus generalizing \Cref{thm:bimatrix-nash-ppad}. It also establishes the same properties for a natural generalization of bimatrix games, called \emph{polymatrix games} \citep{janovskaja1968equilibrium,howson1972equilibria}. These are $n$-player games in which the players' payoffs are additive over several $2$-player games.

\begin{definition}[Polymatrix game]\label{def:polymatrix}
A \emph{polymatrix game} consists of a set $N$ of $n$ players, each with a set $S_i$ of pure strategies $S_i = \{1,\ldots,m_i\}$ and a domain of mixed strategies $\Sigma_i:=\{y \in \mathbb{R}_{\geq 0}^{m_i} \colon \sum_{j=1}^{m_i} y_j = 1\}$. Given a mixed strategy $\mathbf{x}_i$ for player $i$, the $j$th coordinate $x_{i,j}$ denotes the probability that player $i$ is playing the pure strategy $j$. For every pair of players $i$ and $i'$ with $i \neq i'$, there is an associated $(m_i \times m_{i'})$-dimensional \emph{payoff matrix} $\mathbf{A}_{i,i'} \in \mathbb{R}^{m_i \times m_{i'}}$, which determines the payoffs of player $i$ in the bimatrix game against player $i'$. Given a mixed strategy profile $\mathbf{x} = (\mathbf{x}_1, \ldots \mathbf{x}_n)$, the expected payoff of player $i$ is defined as $\tilde{u}_i(\mathbf{x}) = \sum_{i' \in N, i' \neq i} (\mathbf{x}_i)^\transpose\cdot\mathbf{A}_{i,i'} \cdot \mathbf{x}_{i'}$.
\end{definition}

\noindent Since the gradients of the payoff functions in polymatrix games are linear functions, \Cref{thm:generalized-eq-concave} has the following corollary.  

\begin{corollary}
Computing a generalized mixed Nash equilibrium of a polymatrix game is in PPAD.
\end{corollary}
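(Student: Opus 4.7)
The plan is to derive the corollary as a direct instantiation of \Cref{thm:generalized-eq-concave}. I need to verify three things about polymatrix games: (i) the strategy spaces fit the form required by the convex program $\mathcal{C}$, (ii) the payoff function $\tilde{u}_i(\mathbf{y}_i,\mathbf{x}_{-i})$ is concave in $\mathbf{y}_i$, and (iii) its supergradient with respect to $\mathbf{y}_i$ can be computed by a \pseudog.

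First, for (i), the strategy space of player $i$ is the simplex $\Sigma_i = \{\mathbf{y}_i \in \mathbb{R}^{m_i}_{\geq 0} : \sum_j y_{i,j} = 1\}$, which lies inside $[-1,1]^{m_i}$ and is described by linear inequalities (the equality $\sum_j y_{i,j}=1$ is represented by two opposite inequalities, together with $y_{i,j} \geq 0$). These constraints do not depend on $\mathbf{x}_{-i}$, so the right-hand side $b_i$ is a constant vector (trivially computable by a \pseudog) and $R_i = 1$ suffices. For generalized equilibria, any additional constraints of the form $A_i \mathbf{y}_i \leq b_i(\mathbf{x}_{-i})$ coming from the correspondence $\gamma_i$ can be handled as well, as long as $b_i$ is computable by a \pseudog.

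Second, for (ii) and (iii), observe that the expected payoff
\[
\tilde{u}_i(\mathbf{y}_i, \mathbf{x}_{-i}) = \sum_{i' \neq i} \mathbf{y}_i^{\transpose} \mathbf{A}_{i,i'} \mathbf{x}_{i'}
\]
is linear (and therefore concave) in $\mathbf{y}_i$ for any fixed $\mathbf{x}_{-i}$. Its (super)gradient with respect to $\mathbf{y}_i$ is
\[
\nabla_{\mathbf{y}_i} \tilde{u}_i(\mathbf{y}_i, \mathbf{x}_{-i}) = \sum_{i' \neq i} \mathbf{A}_{i,i'} \mathbf{x}_{i'},
\]
which is a linear function of the \circparams $\mathbf{x}_{-i}$ with constant coefficients (the fixed payoff matrices). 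Such a function is trivially computed by a \linear arithmetic circuit using only the gates $\{+, \times \zeta\}$, and in particular by a \pseudog.

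With all three conditions in place, \Cref{thm:generalized-eq-concave} applies directly and yields PPAD-membership for computing a generalized mixed Nash equilibrium of a polymatrix game. There is no real obstacle here; the entire content of the corollary is the observation that polymatrix payoffs are bilinear, so everything needed by the \linoptgate reduces to constant-coefficient linear expressions in the opponents' mixed strategies.
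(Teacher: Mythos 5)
Your proof is correct and follows the same route as the paper: the paper derives this corollary directly from \Cref{thm:generalized-eq-concave} by noting that the gradients of polymatrix payoffs are linear functions of $\mathbf{x}_{-i}$. You have simply spelled out the verification of the three hypotheses (bounded polytopal strategy space, concavity, and supergradient computable by a \pseudog) that the paper leaves implicit.
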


\paragraph{\Linear Succinct Games.} We observe that the reason which allowed us to use the \linoptgate to prove PPAD-membership for bimatrix and polymatrix games above is that we were always able to construct a \pseudog computing the following quantity:
\begin{equation}\label{eq:linear-succinct-utility}
    \tilde{u}_{i}(j,\mathbf{x}_{-i}) = \mathbb{E}_{\mathbf{s}_{-i} \sim \mathbf{x}_{-i}}u_{i}(j,\mathbf{s}_{-i}),
\end{equation}
where $\mathbf{s}_{-i}$ denotes a vector of pure strategies (i.e., a pure strategy profile) of all players besides $i$. The quantity above is the expected payoff of player $i$ when using pure strategy $j \in S_i$ against the mixed strategy $\mathbf{x}_{-i}$ of the other players. As long as we have a \pseudog for computing $\tilde{u}_{i}(j,\mathbf{x}_{-i})$, \Cref{thm:generalized-eq-concave} goes through. We will use the term \linear succinct games to refer to those games.

\begin{definition}[\Linear Succinct Game]\label{def:linear-succinct-games}
A game is a \linear succinct game if for any player $i$ and for any pure strategy $j \in S_i$, there exists a \pseudog computing the expected utility $\tilde{u}_{i}(j,\mathbf{x}_{-i})$, for any profile of mixed strategies $\mathbf{x}_{-i}$ of the other players.
\end{definition}

\noindent We have the following corollary of \Cref{thm:generalized-eq-concave}.

\begin{corollary}\label{thm:linear-succinct}
Computing a generalized mixed Nash equilibrium of a \linear succinct game is in PPAD.
\end{corollary}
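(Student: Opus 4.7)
The plan is to derive Corollary~\ref{thm:linear-succinct} directly from Theorem~\ref{thm:generalized-eq-concave}. To do so, I need to verify two things: (i) a \linear succinct game is a concave game whose strategy spaces and constraints fit the form required in the convex program $\mathcal{C}$ preceding Theorem~\ref{thm:generalized-eq-concave}; and (ii) the supergradient of each player's payoff function (in her own mixed strategy) can be computed by a \pseudog from the oracles provided by \Cref{def:linear-succinct-games}.

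First I would argue concavity and the shape of the strategy space. The strategy space of player $i$ is the simplex $\Sigma_i = \{\mathbf{y}_i \in \mathbb{R}_{\geq 0}^{m_i} : \sum_{j=1}^{m_i} y_{i,j} = 1\}$, which is compact and convex, and is described by linear (in)equalities with constants on the right-hand side, so it trivially fits the constraint pattern required by the \linoptgate (in particular, the right-hand-side of the constraints does not depend on $\mathbf{x}_{-i}$ at all). The expected payoff of player $i$ can be written as
\[
\tilde{u}_i(\mathbf{y}_i,\mathbf{x}_{-i}) \;=\; \sum_{j=1}^{m_i} y_{i,j}\, \tilde{u}_i(j,\mathbf{x}_{-i}),
\]
which is linear, hence concave, in $\mathbf{y}_i$ for any fixed $\mathbf{x}_{-i}$, and continuous in the full profile since it is a multilinear-like expression combining mixed strategies with expected utilities. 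Thus the game is a concave game in the sense required by Theorem~\ref{thm:generalized-eq-concave}, with $b_i$ constant (so the trivial \pseudogs suffice there), and one may optionally set $R_i = 1$ for the box constraint.

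Second I would verify the supergradient condition. Since $\tilde{u}_i(\mathbf{y}_i, \mathbf{x}_{-i})$ is linear in $\mathbf{y}_i$, its (super)gradient with respect to $\mathbf{y}_i$ is exactly the vector
\[
\nabla_{\mathbf{y}_i}\tilde{u}_i(\mathbf{y}_i,\mathbf{x}_{-i}) \;=\; \bigl(\tilde{u}_i(1,\mathbf{x}_{-i}),\, \tilde{u}_i(2,\mathbf{x}_{-i}),\, \ldots,\, \tilde{u}_i(m_i,\mathbf{x}_{-i})\bigr).
\]
By \Cref{def:linear-succinct-games}, each coordinate $\tilde{u}_i(j,\mathbf{x}_{-i})$ is computed by a \pseudog. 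Stacking these $m_i$ \pseudogs in parallel (sharing their auxiliary $\mathbf{z}$ variables via concatenation) yields a single \pseudog computing the entire supergradient vector from $\mathbf{x}_{-i}$, as required by the hypothesis of Theorem~\ref{thm:generalized-eq-concave}. Note that the supergradient does not actually depend on $\mathbf{y}_i$ here, which only makes the construction easier.

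With both ingredients in place, Theorem~\ref{thm:generalized-eq-concave} applies and gives PPAD-membership of computing a generalized mixed Nash equilibrium. I do not expect any real obstacle: the only subtlety worth spelling out is that the ``succinct'' access to the expected utilities through \pseudogs combines correctly into a single \pseudog for the gradient, but this is a routine parallel composition of \linear arithmetic circuits and their auxiliary variables.
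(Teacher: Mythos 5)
Your approach is essentially the paper's: identify a \linear succinct game as a concave game with linear payoff in one's own mixed strategy, note that the $j$-th coordinate of the supergradient is exactly $\tilde{u}_i(j,\mathbf{x}_{-i})$, stack the \pseudogs supplied by \Cref{def:linear-succinct-games} in parallel to get a \pseudog for the supergradient vector, and invoke \Cref{thm:generalized-eq-concave}. This is correct.

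One small point is worth flagging. You explicitly restrict to $b_i$ constant (``the right-hand side of the constraints does not depend on $\mathbf{x}_{-i}$ at all''). This only delivers the standard mixed Nash equilibrium result, whereas the corollary asserts PPAD-membership for \emph{generalized} mixed Nash equilibria, where the feasible strategy set of player $i$ is allowed to depend on $\mathbf{x}_{-i}$ through $b_i(\mathbf{x}_{-i})$. The fix is trivial: you should not fix $b_i$, but merely require it to be computable by a \pseudog (and to keep the feasible region of $\mathcal{C}$ non-empty), exactly as in the hypothesis of \Cref{thm:generalized-eq-concave} and in the paper's remark preceding it that the results ``hold under the constraints for $b_i(\mathbf{x}_{-i})$ required for the \linoptgate to work.'' With that one-line change your argument gives the full generalized statement.
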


\noindent We draw parallels between our application and that of \citet{daskalakis2006game} and \citet{papadimitriou2008computing}. Those works define classes of succinct games for which there is an oracle for computing the expected utility of the player. In \citep{papadimitriou2008computing}, this oracle is referred to as the \emph{polynomial expectation property} and is used to show that correlated equilibria \citep{aumann1974subjectivity} of games with this property can be computed in polynomial time. In \citep{daskalakis2006game}, it is shown that if the oracle is given by a \emph{bounded division free straight-line program of polynomial length}, then these games are in PPAD. Crucially, this latter result concerns \emph{approximate equilibria}. One could view our result as a complement to those two results, one which concerns \emph{exact} equilibria in \emph{rational} numbers.

\paragraph{General Threshold Games.} \Cref{thm:generalized-eq-concave} also establishes the PPAD-membership of \emph{generalized} equilibria in digraph threshold games (see \Cref{sec:threshold-games}). To see this, we may redefine a player's payoff in a digraph threshold game (\Cref{eq:threshold-game}) as:
\begin{equation}\label{eq:threshold-games-alternative}
u_i(x_i,\mathbf{x}_{-i}) =  x_i \cdot (t - \sum_{j \in N_i}x_j),
\end{equation}
where $\mathbf{x}_{-i}=(x_1,\ldots,x_{i-1},x_{i+1},\ldots,x_n)$. Indeed, when $\sum_{j \in N_i}x_j > t$, the payoff is maximized when $x_i=1$, when $\sum_{j \in N_i}x_j < t$, the payoff is maximized when $x_i = 0$, and when $\sum_{j \in N_i}x_j = t$, any choice of $x_i \in [0,1]$ maximizes the payoff. Therefore the equilibria of the game under this utility function are the same as those under the original definition. Additionally, the derivative of the utility function in \Cref{eq:threshold-games-alternative} w.r.t.\ $x_i$ is linear and $x_i \in [0,1]$ for every player $i \in V$. The digraph threshold game is then a concave game satisfying the conditions of the statement of \Cref{thm:generalized-eq-concave}. One may also add constraints on the allowable values of $x_i$ chosen by player $i$, given the chosen strategies $\mathbf{x}_{-i}$ of the other players, as those of the constraints in convex program $\mathcal{C}$. This gives us the following corollary, which generalizes \Cref{thm:threshold-games-PPAD}.

\begin{corollary}
Computing a generalized equilibrium of a digraph threshold game is in PPAD. 
\end{corollary}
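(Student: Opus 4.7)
The plan is to derive this corollary directly from \Cref{thm:generalized-eq-concave} by exhibiting the digraph threshold game (with generalized strategy correspondences $\gamma_i$) as a concave game fitting the form of convex program $\mathcal{C}$. The main work is purely verificational: check that the reformulated utility has the right structure, that the equilibrium notions coincide, and that the generalized constraints can be cast in the required linear form.

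First I would fix the utility function $u_i(x_i, \mathbf{x}_{-i}) = x_i \cdot (t - \sum_{j \in N(i)} x_j)$ from \Cref{eq:threshold-games-alternative} and argue that the equilibria of the resulting concave game coincide with the digraph-threshold equilibria specified in \Cref{eq:threshold-game}. This reduces to a three-case analysis on the sign of $t - \sum_{j \in N(i)} x_j$: when it is positive, the unique maximizer over $x_i \in [0,1]$ is $x_i = 1$; when it is negative, the unique maximizer is $x_i = 0$; and when it is zero, every $x_i \in [0,1]$ is a maximizer. These are exactly the three cases of \Cref{eq:threshold-game}, so the two equilibrium notions agree.

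Next I would verify the structural requirements of convex program $\mathcal{C}$ in \Cref{thm:generalized-eq-concave}. Fixing $\mathbf{x}_{-i}$, the function $x_i \mapsto u_i(x_i, \mathbf{x}_{-i})$ is \emph{linear} (thus concave) in $x_i$, and its supergradient $\partial_{x_i} u_i = t - \sum_{j \in N(i)} x_j$ is a trivial affine function of the \circparams $\mathbf{x}_{-i}$, so it is computable by a \linear arithmetic circuit, and in particular by a \pseudog. The base strategy space $[0,1]$ is a bounded box of the required form $[-R_i, R_i]^{m_i}$ with $m_i = 1$ and $R_i = 1$ (after the trivial affine shift). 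For the generalized-equilibrium setting, we allow each player's feasible set to be carved out by the linear inequalities $A_i \mathbf{y}_i \leq b_i(\mathbf{x}_{-i})$ of $\mathcal{C}$, where the constraint matrix $A_i$ is fixed and $b_i$ is given by a \pseudog; the \circparams $\mathbf{x}_{-i}$ therefore appear only on the right-hand side as required.

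With these observations in place, \Cref{thm:generalized-eq-concave} applies and yields PPAD-membership. The only mildly delicate step is the equivalence of the two equilibrium notions, and even this is essentially a one-line argument by maximization of a linear function on $[0,1]$; everything else is a direct instantiation of the general framework already developed for \lbro games and concave games. No separate construction, feasibility analysis, or rounding is needed.
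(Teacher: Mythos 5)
Your proof is correct and follows exactly the paper's own route: rewrite the payoff as $u_i(x_i,\mathbf{x}_{-i}) = x_i\cdot(t - \sum_{j\in N(i)}x_j)$, check that maximizers over $[0,1]$ reproduce the three cases of \Cref{eq:threshold-game}, observe that the supergradient is an affine function of $\mathbf{x}_{-i}$ and so computable by a \pseudog, and invoke \Cref{thm:generalized-eq-concave}. (As a side remark, your three-case maximization is stated correctly; the paper's ``Indeed, \ldots'' sentence transposes $x_i=0$ and $x_i=1$, though the intended claim is clearly the one you give.)
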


\noindent Our approach can actually capture the complexity of a larger class of threshold games, which we call \emph{general threshold games}.

\begin{definition}[General Threshold Game]A general threshold game $\mathcal{G}(g_1,\ldots,g_n)$ is given by functions $g_i: [0,1]^{n-1} \to \mathbb{R}$ for $i \in N$. A strategy profile $\mathbf{x} \in [0,1]^n$ is an equilibrium if for every agent $i$ it satisfies
\begin{equation}\label{eq:extended-threshold-game}
x_i =  \begin{cases} 
      0 & \text{if } g_i(\mathbf{x}_{-i}) > 0 \\
      1 & \text{if } g_i(\mathbf{x}_{-i}) < 0 \\
      [0,1] & \text{if } g_i(\mathbf{x}_{-i}) = 0 
   \end{cases}
\end{equation}
where in the above expression we have abused notation using $x_i = [0,1]$ to denote $x_i \in [0,1]$, and where $\mathbf{x}_{-i}=(x_1,\ldots,x_{i-1},x_{i+1},\ldots,x_n)$. 
\end{definition}

\noindent It is not hard to see that general threshold games are a generalization of digraph threshold games. General threshold games are also concave games, as, similarly to \Cref{eq:threshold-games-alternative}, we can write the payoff functions as 
\begin{equation}\label{eq:threshold-games-alternative-general}
u_i(x_i,\mathbf{x}_{-i}) =  x_i \cdot (- g_i(\mathbf{x}_{-i})).
\end{equation}
If the functions $g_i(\mathbf{x}_{-i})$ can be given by \pseudogs, then \Cref{thm:generalized-eq-concave} applies.

\begin{theorem}\label{thm:generalized-extended-threshold-games}
Computing a generalized equilibrium of a general threshold game is in PPAD, as long as the functions $g_i(\mathbf{x}_{-i})$ are given by \pseudogs.
\end{theorem}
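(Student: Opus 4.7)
The plan is to deduce this theorem as a direct corollary of \Cref{thm:generalized-eq-concave}, by showing that a general threshold game (with the suitable strategy-space correspondences for the generalized equilibrium notion) fits the hypothesis of that theorem. The reformulation of payoffs given in \eqref{eq:threshold-games-alternative-general}, namely $u_i(x_i,\mathbf{x}_{-i}) = x_i \cdot (-g_i(\mathbf{x}_{-i}))$, is the key bridge: an equilibrium in the sense of \eqref{eq:extended-threshold-game} is precisely a strategy profile in which each $x_i \in [0,1]$ maximizes $u_i(\,\cdot\,,\mathbf{x}_{-i})$, by exactly the same case analysis as was done for digraph threshold games (the maximum is achieved at $x_i = 1$ if $g_i(\mathbf{x}_{-i}) < 0$, at $x_i = 0$ if $g_i(\mathbf{x}_{-i}) > 0$, and at any $x_i \in [0,1]$ if $g_i(\mathbf{x}_{-i}) = 0$).

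First, I would verify that this defines a concave game in the sense required by \Cref{thm:generalized-eq-concave}: the strategy space of each player is the interval $[0,1]$, which is compact and convex, and the payoff $u_i(\,\cdot\,,\mathbf{x}_{-i})$ is linear (hence concave) in $x_i$ once $\mathbf{x}_{-i}$ is fixed. Next, I would exhibit a \pseudog computing the supergradient of $u_i$ with respect to $x_i$. Differentiating $x_i \cdot (-g_i(\mathbf{x}_{-i}))$ with respect to $x_i$ yields simply $-g_i(\mathbf{x}_{-i})$, which does not depend on $x_i$ at all; by the hypothesis of the theorem $g_i$ is given by a \pseudog, so negating its output yields the required \pseudog for the supergradient.

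For the generalized equilibrium notion, I would observe that the correspondences $\gamma_i(\mathbf{x}_{-i})$ restricting the strategy space of player $i$ must, in order to match the hypothesis of \Cref{thm:generalized-eq-concave}, be representable by linear inequalities $A_i \cdot y_i \leq b_i(\mathbf{x}_{-i})$ with $b_i$ computed by a \pseudog and the feasible region contained in $[-R_i,R_i]$ for some $R_i$ (in our case $R_i = 1$ suffices). Under these standard assumptions on $\gamma_i$, which mirror those imposed throughout the section, the convex program $\mathcal{C}$ of \Cref{sec:generalized-concave-games} takes the form required by a \linoptgate, and \Cref{thm:generalized-eq-concave} applies.

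The main ``obstacle'' is essentially no obstacle at all: the only subtle point is that the payoff reformulation in \eqref{eq:threshold-games-alternative-general} produces a supergradient in $x_i$ which is independent of $x_i$ itself, and is hence trivially computable by composing the given \pseudog for $g_i$ with a single $\times(-1)$ gate. Applying \Cref{thm:generalized-eq-concave} then yields the desired PPAD-membership, and \Cref{thm:threshold-games-PPAD} is recovered as the special case in which each $g_i(\mathbf{x}_{-i}) = \sum_{j \in N(i)} x_j - t$ and each $\gamma_i(\mathbf{x}_{-i}) = [0,1]$ is independent of $\mathbf{x}_{-i}$.
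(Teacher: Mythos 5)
Your proposal matches the paper's argument exactly: the paper also rewrites the payoff as $u_i(x_i,\mathbf{x}_{-i}) = x_i \cdot (-g_i(\mathbf{x}_{-i}))$, observes that the game is then a concave game whose payoff supergradient in $x_i$ is $-g_i(\mathbf{x}_{-i})$ (computable by a \pseudog once $g_i$ is), and invokes \Cref{thm:generalized-eq-concave}. You merely spell out the verification of the hypotheses in more detail than the paper does, which is harmless.
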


\noindent Some natural general threshold games for which \Cref{thm:generalized-extended-threshold-games} applies can be defined as follows:
\begin{itemize}
    \item[-] digraph threshold games (\Cref{def:threshold-game}), for which $g_i(\mathbf{x}_{-i}) = (\sum_{j \in N_i} x_j) - t$.
    \item[-] \emph{max digraph threshold games}, for which $g_i(\mathbf{x}_{-i}) = (\max_{j \in N_i} x_j) - t$. In these games, player $i$ chooses $x_i=0$ if the maximum of the strategies of her incoming neighbors is above a threshold~$t$, $x_i=1$ if it is below, and any value in $[0,1]$ otherwise.
    \item[-] \emph{min digraph threshold games}, for which $g_i(\mathbf{x}_{-i}) = (\min_{j \in N_i} x_j) - t$. In these games, player $i$ chooses $x_i=0$ if the minimum of the strategies of her incoming neighbors is above a threshold~$t$, $x_i=1$ if it is below, and any value in $[0,1]$ otherwise.
\end{itemize}

\paragraph{Bilinear Games.} \Cref{thm:generalized-eq-concave} also establishes the rationality and PPAD membership of Nash equilibria in \emph{bilinear games}, introduced by \citet{garg2011bilinear}. Intuitively, these generalize bimatrix games in the sense that the strategy space does not have to be the unit simplex $\Sigma_i = \{y \in \mathbb{R}_{\geq 0}^{m_i} \colon \sum_{j=1}^{m_i}y_j =1\}$, as in \Cref{def:bimatrix-game}, but it can be any compact polytope. 

\begin{definition}[Bilinear Games \citep{garg2011bilinear}]\label{def:bilinear-games}
A \emph{bilinear game} is a two-player game represented by two $m_1 \times m_2$ payoff matrices $\mathbf{A}_1$ and $\mathbf{A}_2$, one for each player, and two compact polytopal strategy sets $X_1$ and $X_2$. Let $\mathbf{E}_1 \in \mathbb{R}^{k_1 \times m_1}$ and $\mathbf{E}_2 \in \mathbb{R}^{k_2 \times m_2}$ and let $\mathbf{e}_1 \in \mathbb{R}^{k_1}$ and $\mathbf{e}_2 \in \mathbb{R}^{k_2}$ be two vectors. The strategy spaces of players $1$ and $2$ respectively are defined as 
\[
X_1 = \{ \mathbf{x} \in \mathbb{R}^{m_1} \colon \mathbf{E}_1 \cdot \mathbf{x} = \mathbf{e}_1, \mathbf{x}\geq 0\}, \text{ and }
X_2 = \{ \mathbf{x} \in \mathbb{R}^{m_2} \colon \mathbf{E}_2 \cdot \mathbf{x} = \mathbf{e}_2, \mathbf{x}\geq 0\}.
\]
Given a pair of strategies $(\mathbf{x}_1,\mathbf{x}_2) \in X_1 \times X_2$, the payoff of player $1$ is $\mathbf{x}_1^{\transpose}\cdot \mathbf{A}_1 \cdot \mathbf{x}_2$ and the payoff of player $2$ is $\mathbf{x}_1^{\transpose}\cdot \mathbf{A}_2 \cdot \mathbf{x}_2$.
\end{definition}

\noindent A Nash equilibrium of the game is defined analogously to \Cref{def:Nash-eq-bimatrix}. From \Cref{def:bilinear-games}, it follows that bilinear games are concave games in which the gradients of the payoffs of the players are linear functions. Thus \Cref{thm:generalized-eq-concave} has the following corollary.

\begin{corollary}
Computing a generalized Nash equilibrium of a bilinear game is in PPAD.
\end{corollary}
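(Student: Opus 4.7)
The plan is to obtain the result as a direct specialization of Theorem~\ref{thm:generalized-eq-concave}, so the work amounts to verifying the three hypotheses of that theorem for a bilinear game and then invoking it. First, I would recast each player's strategy space in the form prescribed by convex program~$\mathcal{C}$. Since $X_i = \{\mathbf{x}\in\mathbb{R}^{m_i}: \mathbf{E}_i\mathbf{x}=\mathbf{e}_i,\ \mathbf{x}\ge 0\}$ is a compact polytope given by rational linear (in)equalities, writing each equality as two inequalities yields a constant rational matrix $A_i$ and a constant right-hand side $b_i$ (so $b_i(\mathbf{x}_{-i})$ is trivially computable by a \linear arithmetic circuit, and in fact does not depend on $\mathbf{x}_{-i}$ for the plain-equilibrium case). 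Compactness also lets me pick some rational $R_i>0$ with $X_i\subseteq[-R_i,R_i]^{m_i}$ when constructing the gate.

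Next, I would verify the objective. Player $i$'s payoff $u_i(\mathbf{y}_i,\mathbf{x}_{-i})=\mathbf{y}_i^\transpose \mathbf{A}_i\mathbf{x}_{-i}$ is linear, hence trivially concave, in the maximization variable $\mathbf{y}_i$ with the other player's strategy fixed. Its supergradient with respect to $\mathbf{y}_i$ is simply $\mathbf{A}_i\mathbf{x}_{-i}$, which depends linearly on the \circparam $\mathbf{x}_{-i}$. Using only $+$, $-$, and multiplication-by-the-(rational)-entries of $\mathbf{A}_i$ gates, this supergradient is computed by a standard \linear arithmetic circuit, which is a \pseudog in the trivial sense (no auxiliary fixed-point variables are needed). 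Thus both hypotheses~(ii) of Theorem~\ref{thm:generalized-eq-concave} are met.

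With all three hypotheses verified, Theorem~\ref{thm:generalized-eq-concave} applies directly and yields PPAD-membership of the generalized-equilibrium problem. Internally, what is happening is that the per-player best-response LP $\max_{\mathbf{y}_i\in X_i}\mathbf{y}_i^\transpose \mathbf{A}_i\mathbf{x}_{-i}$ fits the form of linear program~$\mathcal{P}$ and is solved by one \linoptgate per player; the \lbro framework of Theorem~\ref{thm:LBRO-equilibria} then stitches the two gates into a \linear arithmetic circuit $F$ whose fixed points are exactly the Nash (and, with nontrivial $b_i(\mathbf{x}_{-i})$, the generalized) equilibria.

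I do not expect any genuine obstacle: every ingredient needed for the \linoptgate, namely linearity of constraints in $\mathbf{y}_i$, appearance of \circparams only on the right-hand side, non-emptiness of the feasible domain (guaranteed by the assumption that $X_i$ is a non-empty compact polytope), concavity of the objective in $\mathbf{y}_i$, and a \pseudog for the supergradient, is immediate from the definition of a bilinear game. The only place some care is required is in the generalized-equilibrium extension, where one must ensure that the constraint system encoding $\gamma_i(\mathbf{x}_{-i})$ still places the dependence on $\mathbf{x}_{-i}$ on the right-hand side only; this is exactly the shape that the $b_i(\mathbf{x}_{-i})$ term in program~$\mathcal{C}$ is designed to accommodate, so the reduction goes through unchanged.
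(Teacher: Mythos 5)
Your proposal is correct and follows essentially the same route as the paper: observe that a bilinear game is a concave game with linear constraints and a linear supergradient, and invoke \cref{thm:generalized-eq-concave}. The paper states this in one line; you have simply spelled out the verification of the hypotheses (the one tiny imprecision is that player~$2$'s supergradient is $\mathbf{A}_2^\transpose \mathbf{x}_1$ rather than $\mathbf{A}_2 \mathbf{x}_1$, which does not affect the argument).
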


\noindent We remark that the PPAD-membership of finding equilibria in bilinear games had not explicitly been proven before our work, but it can be recovered rather implicitly via observing that these games are very much related to the \emph{sequence form} of \citet{koller1996efficient}, for which the authors devise an LCP.

\paragraph{The \citeauthor{debreu1952social}-\citeauthor{fan1952fixed}-\citeauthor{glicksberg1952further} theorem [\citeyear{debreu1952social}] and \citeauthor{rosen1965concave}'s theorem [\citeyear{rosen1965concave}].} As we mentioned earlier, the \citeauthor{debreu1952social}-\citeauthor{fan1952fixed}-\citeauthor{glicksberg1952further} theorem for continuous games (or equivalently, \citeauthor{rosen1965concave}'s theorem) is often used in the literature to prove existence of equilibria for various games. It is in a sense stronger than Brouwer's fixed point theorem \citep{MA:Brouwer1911}, as it can be used to prove the existence of equilibria in more general games. Our PPAD-membership of this section also provides a tool for proving PPAD-membership of these other problems where the \citeauthor{debreu1952social}-\citeauthor{fan1952fixed}-\citeauthor{glicksberg1952further} theorem applies, as long as the games that they consider satisfy the conditions required for \Cref{thm:generalized-eq-concave}. Generally speaking, if the problem in question can be reduced to that of finding equilibria of a concave game in which the supergradients of the payoff functions can be computed by \linear arithmetic circuits, then \Cref{thm:generalized-eq-concave} implies its membership in PPAD. 

The existence proofs of several of our applications in this section and in \Cref{sec:congestion-games} are in fact established via the \citeauthor{debreu1952social}-\citeauthor{fan1952fixed}-\citeauthor{glicksberg1952further} theorem. One could transform those proofs into computational reductions (and then \Cref{thm:generalized-eq-concave} would apply immediately); instead, we obtain the PPAD-membership of those problems directly from the more general \Cref{thm:LBRO-equilibria}. 

\subsubsection{Personalized Equilibria}\label{sec:personalized}

Our next application of \Cref{thm:LBRO-equilibria} is to finding \emph{personalized equilibria} in graphical games, a notion introduced by \citet{SICOMP:KintaliPRST13}. Intuitively speaking, these equilibria allow players to ``match'' their strategies with those of their opponents, without obeying a product distribution. 

\begin{definition}[Hypergraph Game \citep{SICOMP:KintaliPRST13}]\label{def:kintali-hypergraph}
Consider a game $\mathcal{G}$ played on a hypergraph $G=(V,E)$ among $n$ different players of a set $N$. Each player $i \in N$ has a finite strategy set $S_i$, and $|S_i \cap S_J| = \emptyset$ for all $i,j \in N$ such that $i \neq j$. Let $V = \bigcup_{i \in N} S_i$. For each player $i \in N$, we have a set $E_i \subseteq E$ of hyperedges, which satisfy the following two conditions:
\begin{itemize}
    \item[-] if $e \in E_i$ then $|e \cap S_j| \leq 1$ for any $j \in N$,
    \item[-] if $e, e' \in E_i$ are distinct, then $e \not\subset e'$.
\end{itemize}
Lastly, each player $i \in N$ has a utility function $u_i: E_i \rightarrow \mathbb{R}$. A \emph{mixed strategy} $\mathbf{x}_i$ for player $i \in N$ is a probability distribution over $S_i$, and a \emph{weight assignment} $\mathbf{w}_i$ is a probability distribution over $E_i$.
\end{definition}

\noindent One way to interpret \Cref{def:kintali-hypergraph} above in relation to the ``standard'' games that we have seen so far, is that the hyperedges $e$ correspond to pure strategy profiles, interpreted as sets of pure strategies, at most one for each player, and each player can choose how much weight $w_i(e)$ to assign to each such profile. A personalized equilibrium of $\mathcal{G}$ is defined as follows:

\begin{definition}[Personalized Equilibrium \citep{SICOMP:KintaliPRST13}]\label{def:kintali-personalized-equilibrium}
A personalized equilibrium of $\mathcal{G}$ consists of a vector of mixed strategies $\{\mathbf{x}_1,\ldots,\mathbf{x}_n\}$ and a vector of weight assignments $\{\mathbf{w}_1,\ldots,\mathbf{w}_n\}$, such that for every player $i \in N$, $(\mathbf{x}_i,\mathbf{w}_i)$ is a solution to the following linear program (with variables $\mathbf{x}_i$ and $\mathbf{w}_i$):

\begin{equation}
\begin{aligned}
&\textrm{maximize}  &\mathclap{ \sum_{e \in E_i} w_i(e)u_i(e) } \\
&\textrm{subject to} & \sum_{e : s \in e} w_i(e) &\leq x_j(s) && \forall s \in S_j, \forall j\neq i\\
                    && \sum_{e : s \in e} w_i(e) & = x_i(s)   && \forall s \in S_i\\
                    && \sum_{e \in E_i} w_i(e) &= 1\\
                    && \sum_{s \in S_i} x_i(s) &= 1\\
                    && w_i(e) & \geq 0                        && \forall e \in E_i\\
                    && x_i(s) & \geq 0                        && \forall s \in S_i
\end{aligned}
\label{LP:PersonalizedEquilibrium}
\end{equation}
We say that $\mathcal{G}$ is \emph{well-behaved} if this LP is feasible for any player $i$ and for any choice of strategies $\mathbf{x}_j$ for the players $j \neq i$.
\end{definition}

\paragraph{Features of our proof and \citeauthor{SICOMP:KintaliPRST13}'s PPAD-membership result.}
To establish existence of a personalized equilibrium in every well-behaved hypergraph game $\mathcal{G}$ as defined above, \citet{SICOMP:KintaliPRST13} essentially reduce the game to a concave game, and then invoke the \citeauthor{debreu1952social}-\citeauthor{fan1952fixed}-\citeauthor{glicksberg1952further} theorem (\citeyear{debreu1952social}). This already hints at the fact that we could obtain PPAD-membership of the problem as a corollary of \Cref{thm:generalized-eq-concave}. We will instead obtain it as corollary of the more general \Cref{thm:LBRO-equilibria} rather straightforwardly; we do that in \Cref{thm:personalized-equilibrium} below. Additionally, to obtain PPAD-membership (and as a result, rationality of equilibria), \citet{SICOMP:KintaliPRST13} first define an approximate version of the problem (the $\varepsilon$-personalized equilibrium), and reduce that problem to \textsc{End-Of-Line} (see \Cref{def:end-of-line}), via a relatively involved construction. To obtain PPAD-membership for the exact problem (i.e., when $\varepsilon=0$) \citet{SICOMP:KintaliPRST13} construct an elaborate argument that appeals to linear programming compactness, by first showing that for sufficiently small $\varepsilon$, $\varepsilon$-personalized equilibria ``almost satisfy'' the constraints of the linear programs, and then carefully rounding the solution to obtain an exact equilibrium. Our technique allows us to reduce this whole argument to a few lines.

\begin{theorem}\label{thm:personalized-equilibrium}
Computing a personalized equilibrium of a well-behaved hypergraph game (\Cref{def:kintali-hypergraph}) is in PPAD.
\end{theorem}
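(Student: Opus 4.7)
The plan is to derive \Cref{thm:personalized-equilibrium} as a direct corollary of \Cref{thm:LBRO-equilibria}, by showing that a well-behaved hypergraph game is a \lbro game whose best-response oracle for each player is precisely the linear program~\eqref{LP:PersonalizedEquilibrium}. Concretely, I would let the strategy space of player $i$ be $D_i = \Sigma_i \times W_i$, where $\Sigma_i$ is the simplex over $S_i$ and $W_i$ is the simplex over $E_i$. Note that $D_i$ is a nonempty convex polytope contained in $[0,1]^{|S_i| + |E_i|}$. The \emph{best response} of player $i$ to a strategy profile $(\mathbf{x}_j)_{j \neq i}$ of the other players is then, by \Cref{def:kintali-personalized-equilibrium}, any optimal solution $(\mathbf{x}_i, \mathbf{w}_i)$ of~\eqref{LP:PersonalizedEquilibrium}. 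An equilibrium of this \lbro game (in the sense of \Cref{def:LBRO-game}) coincides exactly with a personalized equilibrium.

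The main technical step is then to verify that the best-response LP of each player $i$ fits the template of program $\mathcal{P}$ and can therefore be solved by a \linoptgate. I would check the conditions listed in \Cref{sec:linopt-applications}: (i) the objective $\sum_{e \in E_i} w_i(e) u_i(e)$ is linear in the variables $\mathbf{w}_i$, so its gradient is the constant vector $(u_i(e))_{e \in E_i}$, trivially computable by a \linear arithmetic circuit; (ii) the variables $\mathbf{x}_i, \mathbf{w}_i$ lie in $[0,1]^{|S_i|+|E_i|}$, a bounded domain; (iii) the \circparams, which are precisely the quantities $x_j(s)$ for $j \neq i$, appear only on the right-hand side of the constraints $\sum_{e : s \in e} w_i(e) \leq x_j(s)$, while the left-hand side is linear in the player's own variables; all other constraints (the equalities $\sum_e w_i(e) = 1$, $\sum_s x_i(s) = 1$, and $\sum_{e: s \in e} w_i(e) = x_i(s)$, as well as the nonnegativity constraints) involve only player $i$'s own variables with fixed coefficients; (iv) the feasible domain is nonempty by the \emph{well-behavedness} assumption of \Cref{def:kintali-personalized-equilibrium}.

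Since all four conditions are met, the LP can be replaced in a \linear arithmetic circuit by a single \linoptgate, giving a \pseudog implementing the best-response oracle $\mathcal{C}_i$ of \Cref{def:LBRO-game}. Applying \Cref{thm:LBRO-equilibria} then yields PPAD-membership of the problem.

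I do not foresee a hard step: essentially every subtlety has been isolated inside the \linoptgate machinery, and the only nontrivial substantive condition to check is feasibility of the LP, which is supplied as a hypothesis (well-behavedness). This is precisely the point the paper is advertising: whereas \citet{SICOMP:KintaliPRST13} had to go through an $\varepsilon$-approximate formulation, a reduction to \textsc{End-of-Line}, and a delicate LP-compactness rounding argument to round approximate equilibria to exact ones, the \linoptgate absorbs all of this, reducing the membership proof to a near-trivial verification of the format of~\eqref{LP:PersonalizedEquilibrium}.
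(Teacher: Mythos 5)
Your proposal is correct and takes essentially the same route as the paper: formulate the personalized equilibrium LP~\eqref{LP:PersonalizedEquilibrium} as a best-response oracle solvable by a \linoptgate, invoke well-behavedness for feasibility, and apply \Cref{thm:LBRO-equilibria}. You are more explicit than the paper in spelling out the four conditions of \cref{sec:linopt-applications}, and in taking $D_i = \Sigma_i \times W_i$ so that the weight assignment $\mathbf{w}_i$ is part of the oracle's output (the paper keeps $\mathbf{w}_i$ implicit as internal LP variables); either bookkeeping choice is sound.
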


\begin{proof}
We observe that the linear program~\eqref{LP:PersonalizedEquilibrium} computes a best response $\mathbf{x}_i$ for player $i$, given the other players' strategies $\mathbf{x}_j$. The \lbro game oracle $\mathcal{C}_i$ for player $i \in N$ will be given by a \linoptgate which takes as input the vector $\mathbf{x}_{-i}$ of strategies of the other players and outputs a best-response strategy $\mathbf{x}_i$ of player $i$ by solving the LP~\eqref{LP:PersonalizedEquilibrium}. The LP will always be feasible since the game is well-behaved. By \Cref{thm:LBRO-equilibria}, the theorem follows.
\end{proof}

\subsubsection{\eps-proper Equilibria in \Linear Succinct Games}\label{sec:proper}

In this section we will consider an alternative equilibrium notion, that of $\varepsilon$-proper equilibrium \citep{IJGT:Myerson78}, which refines another well-known notion, that of $\varepsilon$-perfect equilibrium \citep{selten1975reexamination}. Both of these notions allow the players to make small mistakes (``trembles'') when choosing their optimal mixed strategies, but ensure that these mistakes happen with small probability (related to the parameter $\varepsilon$). Recall the definition of \linear succinct games (\Cref{def:linear-succinct-games}); we will show that finding $\varepsilon$-proper equilibria of those games is in PPAD.

First, to demonstrate the main ideas, we will consider bimatrix games, which were studied in the past in the context of TFNP and $\varepsilon$-proper equilibria by \cite{EC:Sorensen12}. Recall that bimatrix games can be expressed by two $m_1 \times m_2$ matrices, which we will henceforth denote by $\mathbf{A}_1$ and $\mathbf{A}_2$. We provide the definition of an $\varepsilon$-proper equilibrium below.

\begin{definition}[$\varepsilon$-proper equilibrium in a bimatrix game \citep{IJGT:Myerson78}] \label{def:eps-proper-equilibrium}
Let $\varepsilon > 0$. A pair of mixed strategies $(\mathbf{x}_1,\mathbf{x}_2)$ is an $\varepsilon$-proper equilibrium of a bimatrix game if $\mathbf{x}_1$ and $\mathbf{x}_2$ are \emph{fully mixed} (i.e., $x_{i,j} >0$ for any pure strategy $j \in S_i$, for $i \in \{1,2\}$), and 
$$(\mathbf{A}_1 \cdot \mathbf{x}_2)_j < (\mathbf{A}_1 \cdot \mathbf{x}_2)_{j'} \Rightarrow x_{1,j} \leq \varepsilon \cdot x_{1,j'} \text{ for all } j, j' \in S_1,$$
$$(\mathbf{x}_1^\transpose \cdot \mathbf{A}_2)_j < (\mathbf{x}_1^\transpose \cdot \mathbf{A}_2)_{j'} \Rightarrow x_{2,j} \leq \varepsilon \cdot x_{2,j'} \text{ for all } j, j' \in S_2.$$

\noindent We remark that the \emph{$\varepsilon$-perfect equilibrium} mentioned earlier is defined analogously, with the only difference being that $\varepsilon$ is not multiplied by the mixed strategy $x_{i,j'}$ on the right-hand side of the constraints.
\end{definition}

\noindent We will obtain that computing $\varepsilon$-proper equilibria of bimatrix games is in PPAD, essentially as a corollary of \Cref{thm:LBRO-equilibria}. Before we present the theorem and the proof, we discuss the idea and its advantages over the previous PPAD-membership proof due to \citet{EC:Sorensen12}.

\paragraph{Features of our proof and the previous PPAD-membership result.} 

\citet{EC:Sorensen12} first provided a PPAD-membership result for computing $\varepsilon$-proper equilibria. His proof proceeds by showing that an $\varepsilon$-proper equilibrium can be recovered as a solution to an LCP, and thus can be found by Lemke's algorithm \citep{lemke1965bimatrix}. As we highlighted in \Cref{sec:Other-Approaches}, this approach already introduces complications, mainly arguing against ray termination, which is also explicitly done in \citep{EC:Sorensen12}. Besides that, to make sure that the constructed LCP has polynomial size, \citeauthor{EC:Sorensen12} employs an \emph{extended formulation of the generalized permutahedron} due to \citet{goemans2015smallest}. 

Our proof is conceptually much simpler: it suffices to embed the conditions of \Cref{def:eps-proper-equilibrium} in a set of feasibility programs $\mathcal{Q}$ (see Program~\eqref{eq:feasibility-general} in \Cref{sec:lin-opt-gate}), one for each player $i \in \{1,2\}$, see \Cref{fig:feasibility-eps-proper}. Each of those feasibility programs will compute the best response of the corresponding player, where a ``best response'' here is a mixed strategy $\mathbf{x}_i$ that satisfies the conditions of \Cref{def:eps-proper-equilibrium}, i.e., when a pure strategy yields smaller expected payoff, then it is played with probability which is smaller by an $\varepsilon$ multiplicative factor. Using these feasibility programs as the oracles $\mathcal{C}_1$ and $\mathcal{C}_2$ of an \lbro game, we obtain the proof of \Cref{thm:eps-proper-equilibrium} as a corollary of \Cref{thm:LBRO-equilibria}. We state the theorem next.

\begin{theorem}\label{thm:eps-proper-equilibrium}
Computing an $\varepsilon$-proper equilibrium of a bimatrix game is in PPAD.
\end{theorem}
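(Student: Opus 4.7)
The plan is to derive the result as a corollary of Theorem~\ref{thm:LBRO-equilibria} by exhibiting, for each player $i \in \{1,2\}$, a best-response oracle $\mathcal{C}_i$ such that the fixed points of the induced \lbro game are precisely the $\varepsilon$-proper equilibria of the bimatrix game. Each $\mathcal{C}_i$ will take the opponent's mixed strategy $\mathbf{x}_{-i}$ as input and return a mixed strategy $\mathbf{x}_i$ that satisfies the conditions of Definition~\ref{def:eps-proper-equilibrium} against $\mathbf{x}_{-i}$.

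The oracle $\mathcal{C}_i$ will be implemented by a feasibility program in the form of $\mathcal{Q}_\text{app}$ from Section~\ref{sec:linopt-applications}, with variable vector $w := \mathbf{x}_i$ and \circparam vector $y := \mathbf{x}_{-i}$, taking $\rho := \varepsilon$ and letting $h_k(\mathbf{x}_{-i})$ be defined as the negated expected payoff of pure strategy $k$ for player $i$ against $\mathbf{x}_{-i}$ (so $h_k$ is a linear function of the \circparams, trivially computable by a \linear arithmetic circuit). The sign choice ensures that $h_k - h_{k'} > 0$ holds iff pure strategy $k$ yields strictly smaller expected payoff than $k'$, so the conditional constraint $h_k(y) - h_{k'}(y) > 0 \implies w_k \leq \varepsilon w_{k'}$ coincides with the inequality in Definition~\ref{def:eps-proper-equilibrium}. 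The unconditional constraints $\sum_j w_j = 1$ and $w_k \geq \varepsilon^{m_i}/m_i > 0$ enforce that $\mathbf{x}_i$ is a valid fully mixed distribution, as required by the definition. The coefficients multiplying the variables $w$ on the left-hand sides of all constraints are the fixed constants $\{0, 1, -\varepsilon\}$, so the program fits the form required by the \linoptgate.

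Solvability of this feasibility program is handled by Lemma~\ref{lem:feasibility-of-q-graph}: it suffices to verify that the \qgraph $G_{\mathcal{Q}_\text{app}}$ is acyclic, which is immediate because an edge $(k,k')$ corresponds to $h_k > h_{k'}$ and the strict order on real numbers contains no cycles. Hence $\mathcal{C}_i$ is well-defined, and at any fixed point of the resulting \lbro game the two outputs are simultaneously fully mixed and satisfy the proper-equilibrium comparison inequalities against each other, and therefore form an $\varepsilon$-proper equilibrium. PPAD-membership then follows from Theorem~\ref{thm:LBRO-equilibria}.

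In contrast to the LCP-based argument of \citet{EC:Sorensen12}, the only nontrivial design choice in this plan is picking the sign of $h_k$ so that the direction of the implication in Definition~\ref{def:eps-proper-equilibrium} (worse strategies receive much smaller probability) aligns with the direction of the implication in $\mathcal{Q}_\text{app}$; once this is fixed, no argument against ray termination and no extended polytope formulation is required, which is the main source of simplification.
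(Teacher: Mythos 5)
Your proposal is correct and takes essentially the same approach as the paper: both cast each player's best response as a feasibility program of the form $\mathcal{Q}_\text{app}$ with the antecedents depending only on the opponent's mixed strategy, verify solvability via the acyclicity of the \qgraph (which holds because the strict order on expected utilities is acyclic), and conclude by \Cref{thm:LBRO-equilibria}. The only cosmetic difference is that the paper writes out the two feasibility programs explicitly in \Cref{fig:feasibility-eps-proper}, whereas you identify the needed $h_k$ (the negated expected payoff of pure strategy $k$) to slot the definition into the generic $\mathcal{Q}_\text{app}$ template.
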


\begin{proof}
Consider the feasibility programs $\mathcal{Q}_1$ and $\mathcal{Q}_2$ of \Cref{fig:feasibility-eps-proper}, for players $1$ and $2$ respectively. By \Cref{def:eps-proper-equilibrium}, a solution to $\mathcal{Q}_i$, for $i \in \{1,2\}$ is a best response of player $i$ in the bimatrix game. Note that the antecedents (namely, the constraints on the left-hand side of the implications) only contain \circparams. Hence, these can be computed by \linoptgates, so they can be used as oracles $\mathcal{C}_1$ and $\mathcal{C}_2$ in the corresponding \lbro game (\Cref{def:LBRO-game}) as long as they are solvable. Then, the theorem follows from \Cref{thm:LBRO-equilibria}. 
Solvability of $\mathcal{Q}_i$ for $i \in \{1,2\}$ is easy to see, by observing that it is of the form $\mathcal{Q}_\text{app}$, presented in \Cref{sec:linopt-applications}. By \Cref{lem:feasibility-of-q-graph}, it suffices to argue that the \qgraph is acyclic. The \qgraph $G_{\mathcal{Q}_i}$ consists of vertices corresponding to the different pure strategies of player $i$, and an edge $(j,j')$ is only present if the expected utility of player $i$ from $j$ is strictly lower than the expected utility for $j'$. It is straightforward to verify that $G_{\mathcal{Q}_i}$ is acyclic. 
\end{proof}

\noindent \Cref{thm:eps-proper-equilibrium} establishes the PPAD-membership of computing $\varepsilon$-proper equilibria in bimatrix games. In the following we will show that the proof can straightforwardly be extended to larger classes of games. We first offer the following remark.

\begin{remark}[Proper and Trembling hand perfect equilibria]
\citet{IJGT:Myerson78} defined a \emph{proper equilibrium} to be a limit point of $\varepsilon$-proper equilibria for $\varepsilon \rightarrow^{+} 0$. Similarly, limits points of $\varepsilon$-perfect equilibria are trembling hand perfect equilibria \citep{selten1975reexamination}. With this definition, proper equilibria are \emph{refinements} of  trembling hand perfect equilibria, which, in turn, are refinements of Nash equilibria. Intuitively, a refinement is a special class of equilibria characterized by a set of principles that make them more ``plausible''. Note that $\varepsilon$-proper equilibria are not refinements of (exact) Nash equilibria, and this is why we refer to them as ``alternative equilibrium notions''. We believe they are still natural, as they can be interpreted as a model of \emph{limited rationality}\footnote{\citet{selten1975reexamination} in fact introduces the notion of trembling hand perfect equilibria using a similar narrative.}, with the players being ``imperfectly rational'' in their decisions. 

In this context, \cite{EC:Sorensen12} showed a stronger result, namely that finding \emph{symbolic} $\varepsilon$-proper equilibria is in PPAD. Interestingly, the problem of finding a proper equilibrium is unlikely to be in TFNP \citep{hansen2018computational}, and hence \citeauthor{EC:Sorensen12}'s result does not provide a PPAD-membership result for all proper equilibria. We refer the reader to \citep{EC:Sorensen12} for the appropriate definitions and the details.
\end{remark}

\begin{figure}
\centering 
\fbox{
\centering
    \begin{minipage}{.45\textwidth}
        \begin{center}\underline{Feasibility Program $\mathcal{Q}_1$}\end{center}
\begin{equation*}
\begin{aligned}
(\mathbf{A}_1 \cdot \mathbf{x}_2)_j < (\mathbf{A}_1 \cdot \mathbf{x}_2)_{j'} \Rightarrow x_{1,j} \leq \varepsilon \cdot x_{1,j'} \\ \text{ for all }j, j' \in S_1\\
x_{1,j} \geq \frac{\varepsilon^{m_1}}{m_1}, \text{ for all }j \in S_1, \ \ \ \ \sum_{j=1}^{m_1} x_{1,j} = 1 
\end{aligned}
\end{equation*}
    \end{minipage}%
    \hfill\vline\hfill

\begin{minipage}{.45\textwidth}
   \begin{center}\underline{Feasibility Program $\mathcal{Q}_2$}\end{center}
\begin{equation*}
\begin{aligned}
(\mathbf{x}_1^\transpose \cdot \mathbf{A}_2)_j < (\mathbf{x}_1^\transpose \cdot \mathbf{A}_2)_{j'} \Rightarrow x_{2,j} \leq \varepsilon \cdot x_{2,j'} \\ \text{ for all }j, j' \in S_2\\
x_{2,j} \geq \frac{\varepsilon^{m_2}}{m_2}, \text{ for all }j \in S_2, \ \ \ \ \sum_{j=1}^{m_2} x_{2,j} = 1
\end{aligned}
\end{equation*}
\end{minipage}
    }
\caption{The feasibility programs $\mathcal{Q}_1$ and $\mathcal{Q}_2$ used in the proof of \Cref{thm:eps-proper-equilibrium}.}
\label{fig:feasibility-eps-proper}
\end{figure}

\paragraph{\Linear succinct games.}

We now explain how \Cref{thm:eps-proper-equilibrium} extends rather straightforwardly to \linear succinct games  (\Cref{def:linear-succinct-games}). The definition of $\varepsilon$-proper equilibria in such games generalizes that of \Cref{def:eps-proper-equilibrium} straightforwardly. Recall the definition of the expected utility of player $i$ when playing the pure strategy $j$ against the mixed strategy profile $\mathbf{x}_{-i}$ of the other players (\Cref{eq:linear-succinct-utility}).

\begin{definition}[$\varepsilon$-proper equilibrium (general games) \citep{IJGT:Myerson78}] \label{def:eps-proper-equilibrium-general}
Let $\varepsilon > 0$. A mixed strategy profile $\mathbf{x} = (\mathbf{x}_1,\ldots,\mathbf{x}_n)$ is an $\varepsilon$-proper equilibrium of a strategic game if for any player $i \in N$, we have that $\mathbf{x}_i$ is \emph{fully mixed} (i.e., $x_{i,j} >0$ for any pure strategy $j \in S_i$), and for any player $i \in N$
\begin{equation}\label{eq:feasibility-updated-constraints-succints-eps-proper}
 \tilde{u}_{i}(j,\mathbf{x}_{-i}) < \tilde{u}_{i}(j',\mathbf{x}_{-i}) \Rightarrow x_{i,j} \leq \varepsilon \cdot x_{i,j'} \text{ for all } j, j' \in S_i.
\end{equation}
\end{definition}

\noindent To extend the feasibility programs $\mathcal{Q}_i$, for $i \in N$, to general games, one only has to substitute the corresponding constraints for each player $i \in N$ with those in \Cref{def:eps-proper-equilibrium-general} above. In \linear succinct games, the quantities $\tilde{u}_{i}(j,\mathbf{x}_{-i})$ for any player $i$ and pure strategy $j$ can be computed by a \pseudog. The proof of the following theorem is then very similar to that of \Cref{thm:eps-proper-equilibrium}.

\begin{theorem}\label{thm:eps-proper-succinct}
Computing an $\varepsilon$-proper equilibrium of a \linear succinct game is in PPAD.
\end{theorem}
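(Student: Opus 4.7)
The plan is to follow the same blueprint as \Cref{thm:eps-proper-equilibrium}, replacing the bimatrix-specific expected utilities by the ones provided by the \linear succinct game. Concretely, for each player $i \in N$ I would set up a feasibility program $\mathcal{Q}_i$ with variables $\mathbf{x}_i \in \mathbb{R}^{m_i}$ that directly encodes the $\varepsilon$-proper best-response conditions of \Cref{def:eps-proper-equilibrium-general}:
\begin{equation*}
\begin{aligned}
\tilde{u}_i(j,\mathbf{x}_{-i}) - \tilde{u}_i(j',\mathbf{x}_{-i}) < 0 &\implies x_{i,j} \leq \varepsilon \cdot x_{i,j'} \quad \text{for all } j,j' \in S_i,\\
x_{i,j} \geq \tfrac{\varepsilon^{m_i}}{m_i} \ \text{ for all } j \in S_i, \qquad \sum_{j=1}^{m_i} x_{i,j} &= 1.
\end{aligned}
\end{equation*}
Any solution is by construction a best response of player $i$ in the $\varepsilon$-proper sense against $\mathbf{x}_{-i}$.

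The main step is to verify that $\mathcal{Q}_i$ has the form required for a \linoptgate. The constraint matrix acting on the variables $\mathbf{x}_i$ is fixed (it expresses the constraints $x_{i,j} - \varepsilon x_{i,j'} \leq 0$ together with the simplex constraints and the $\varepsilon^{m_i}/m_i$ lower bounds), and the right-hand sides are constants, so the \circparams $\mathbf{x}_{-i}$ appear only in the antecedents of the conditional constraints. Crucially, by \Cref{def:linear-succinct-games} each value $\tilde{u}_i(j,\mathbf{x}_{-i})$ is computed by a \pseudog in the \circparams, so the antecedent expression $\tilde{u}_i(j,\mathbf{x}_{-i}) - \tilde{u}_i(j',\mathbf{x}_{-i})$ is itself produced by a \pseudog that depends only on the \circparams. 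This is precisely the structure needed by the conditional-feasibility gadget built from the \linoptgate in \Cref{sec:lin-opt-gate}.

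Next I would argue solvability. Each $\mathcal{Q}_i$ is a special case of $\mathcal{Q}_\text{app}$ from \Cref{sec:linopt-applications}, with the role of $h_k(y) - h_{k'}(y)$ played by $\tilde{u}_i(j',\mathbf{x}_{-i}) - \tilde{u}_i(j,\mathbf{x}_{-i})$. Hence by \Cref{lem:feasibility-of-q-graph} it is enough to check that the \qgraph $G_{\mathcal{Q}_i}$ is acyclic. But $G_{\mathcal{Q}_i}$ has an edge $(j,j')$ exactly when $\tilde{u}_i(j,\mathbf{x}_{-i}) < \tilde{u}_i(j',\mathbf{x}_{-i})$, and this relation is a strict order on the pure strategies of player $i$ for any fixed $\mathbf{x}_{-i}$, so it is trivially acyclic. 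Using these $\mathcal{Q}_i$ as the oracles $\mathcal{C}_i$ turns the game into a \lbro game, and \Cref{thm:LBRO-equilibria} then gives PPAD-membership.

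The only mildly subtle point, which I expect to be the main (and only) obstacle, is that the feasibility-gate definition was stated for antecedents given by plain \linear arithmetic circuits, whereas here they are differences of \pseudogs computing $\tilde{u}_i$. This is resolved by composition: \pseudogs can be nested by concatenating their auxiliary fixed-point variables, so the entire best-response oracle for player $i$ (the $\tilde{u}_i$ \pseudogs, the feasibility-gate implementation from \Cref{sec:lin-opt-gate}, and the Heaviside primitives used internally) assembles into a single enclosing \pseudog whose output at a fixed point is a valid solution of $\mathcal{Q}_i$. With this observation, the reduction to \lbro games goes through unchanged and the theorem follows as a direct corollary of \Cref{thm:LBRO-equilibria}.
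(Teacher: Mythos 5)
Your proposal is correct and follows essentially the same route as the paper: set up the feasibility program $\mathcal{Q}_i$ encoding \Cref{def:eps-proper-equilibrium-general}, verify solvability via acyclicity of the \qgraph as in \Cref{thm:eps-proper-equilibrium}, and conclude from \Cref{thm:LBRO-equilibria}. Your closing remark on composing \pseudogs (since the antecedents are \pseudogs computing $\tilde{u}_i$ rather than plain \linear arithmetic circuits) makes explicit a point the paper leaves implicit, and your resolution by concatenating auxiliary variables is exactly right.
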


\begin{proof}
Again, feasibility program $\mathcal{Q}_i$, for $i \in N$, captures the best response of player $i \in N$. Each $\mathcal{Q}_i$ can be computed by a \linoptgate, and is solvable by the same very simple argument used in the proof of \Cref{thm:eps-proper-equilibrium}. Thus $\mathcal{Q}_i$ can be used as the oracle $\mathcal{C}_i$ in the corresponding \lbro game (\Cref{def:LBRO-game}), and the theorem follows from \Cref{thm:LBRO-equilibria}.
\end{proof}

\noindent Since polymatrix games (\Cref{def:polymatrix}) are \linear succinct games, we obtain the following corollary.

\begin{corollary}\label{thm:eps-proper-polymatrix}
Computing an $\varepsilon$-proper equilibrium of a polymatrix game is in PPAD.
\end{corollary}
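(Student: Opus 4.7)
The plan is to derive the corollary as a direct consequence of Theorem~\ref{thm:eps-proper-succinct}, by verifying that every polymatrix game is a \linear succinct game in the sense of \Cref{def:linear-succinct-games}. This only requires checking one condition: that for each player $i \in N$ and each pure strategy $j \in S_i$, the expected utility $\tilde{u}_i(j, \mathbf{x}_{-i})$ from playing $j$ against a mixed profile $\mathbf{x}_{-i}$ of the other players admits a \pseudog.

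Concretely, I would unfold the definition of the polymatrix payoff from \Cref{def:polymatrix}. Writing $e_j \in \mathbb{R}^{m_i}$ for the pure strategy $j$ as a unit vector, we have
\[
\tilde{u}_i(j,\mathbf{x}_{-i}) \;=\; \sum_{i' \in N,\, i' \neq i} e_j^\transpose \mathbf{A}_{i,i'} \mathbf{x}_{i'} \;=\; \sum_{i' \neq i} \sum_{k \in S_{i'}} (\mathbf{A}_{i,i'})_{j,k} \cdot x_{i',k}.
\]
This expression is a linear function of the entries of $\mathbf{x}_{-i}$ with rational coefficients $(\mathbf{A}_{i,i'})_{j,k}$ that are fixed by the game description. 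It can therefore be evaluated by a trivial \linear arithmetic circuit using only the gates $\{+, \times \zeta\}$, and in particular it is a (degenerate) \pseudog with no auxiliary fixed-point variables. Hence the condition of \Cref{def:linear-succinct-games} is satisfied, and polymatrix games form a subclass of \linear succinct games.

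With this observation in hand, the corollary is immediate: applying Theorem~\ref{thm:eps-proper-succinct} to the polymatrix game yields PPAD-membership of computing an $\varepsilon$-proper equilibrium. I do not anticipate any real obstacle here; the only small thing to be careful about is ensuring that the linearity is in the variables $\mathbf{x}_{-i}$ and that all matrix entries appear as constants baked into the circuit (which they do, since the matrices $\mathbf{A}_{i,i'}$ are part of the game input rather than circuit inputs). No new feasibility argument is required, because the solvability of the per-player feasibility programs $\mathcal{Q}_i$ was already handled inside the proof of Theorem~\ref{thm:eps-proper-succinct} via the acyclicity of the associated \qgraph, which depends only on the structure of the $\varepsilon$-proper constraints in \eqref{eq:feasibility-updated-constraints-succints-eps-proper} and not on the specific form of $\tilde{u}_i(j,\mathbf{x}_{-i})$.
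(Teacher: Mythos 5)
Your proof is correct and follows essentially the same route as the paper: observe that polymatrix games are \linear succinct games because $\tilde{u}_i(j,\mathbf{x}_{-i})$ is a linear function of $\mathbf{x}_{-i}$ with fixed rational coefficients (hence a trivial \pseudog), and then invoke Theorem~\ref{thm:eps-proper-succinct}. The paper states this even more tersely, but the content is identical, and your remark that the solvability of the feasibility programs $\mathcal{Q}_i$ was already handled in the proof of Theorem~\ref{thm:eps-proper-succinct} is exactly right.
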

\noindent A proof of \Cref{thm:eps-proper-polymatrix} was first provided by \citet{hansen2018computational}. Their proof essentially redoes all of the steps of the proof of \citet{EC:Sorensen12} for bimatrix games, extending them to the more general case, stating and proving corresponding lemmas etc. Clearly, our \linoptgate allows us to avoid all this labor and extend the PPAD-membership from bimatrix games to polymatrix games, and even beyond, rather straightforwardly. We conclude our discussion on $\varepsilon$-proper equilibria with the following remark. 

\begin{remark}[$\varepsilon$-proper equilibria can be irrational]
For general multiplayer games (not necessarily \linear succinct), \citet{SICOMP:Filos-RatsikasH2023} showed a FIXP-membership result via a system of conditional convex constraints, which generalizes the feasibility program $\mathcal{Q}_i$. FIXP is the right class for these games, because it has been shown (see Footnote 1 in \citep{etessami2014complexity}) that there are games in which all $\varepsilon$-perfect equilibria (and hence all $\varepsilon$-proper equilibria) are irrational.
\end{remark}

\section{Congestion Games}\label{sec:congestion-games}

In this section we consider models of congestion
games~\citep{Wardrop1952, Rosenthal1973}, where players compete for
resources. More precisely, we focus our study on \emph{non-atomic} and
\emph{atomic splittable} congestion games, and for both we consider in
particular the important subclasses of network congestion games. In particular, 
via the employment of our \linoptgate we will obtain PPAD-membership results 
for three domains of congestion games, namely for
\begin{itemize}
    \item[-] finding equilibria in congestion games with rational and malicious players, studied by \cite{babaioff2009congestion}, see \Cref{sec:congestion-games-with-malicious-players},
    \item[-] finding Wardrop equilibria in multi-class non-atomic network congestion games, studied by \cite{meunier2013lemke}, see \Cref{sec:nonatomic-network-congestion-ppad},
    \item[-] finding Nash equilibria multi-class atomic splittable network congestion games, studied by \citep{klimm2020complexity}, see \Cref{sec:atomic-network-congestion-ppad}.
\end{itemize}
All of our PPAD-membership results hold when the latency functions (an in some cases possibly their subgradients) are given by \pseudogs (recall \Cref{def:pseudo-circuit}) or simply, by \linear arithmetic circuits. In particular, they apply for example to all \emph{piecewise-linear} latency functions that are given explicitly as part of the input. Via the machinery that we develop in \Cref{sec:implicit}, we can in fact obtain the PPAD-membership for the more general case where the latency functions and their subgradients are provided in the input \emph{implicitly}, via \pseudogs.
As such, our results are simultaneously \emph{significiant simplifications} over the existing results in the literature, as well as \emph{generalizations}, since the only known PPAD-membership results so far were for \emph{linear} latency functions.\footnote{In the context of congestion games, the term ``affine'' is often used instead of ``linear''. \citet{Boyd2004convex} define a linear function as of the form $f(x)=c^\transpose x$, and an affine function as the sum of a linear function and a constant. We use the term ``linear'' to refer to all affine functions.} In fact, for games with malicious players, complexity results had not been proven before our work. 

We remark that all the games mentioned above fall into one of two main categories, either \emph{non-atomic congestion games} or \emph{atomic splittable congestion games}. As such, finding their equilibria does not simply fall in the class PLS \citep{johnson1988easy}, which is known to capture the complexity of atomic (non-splittable) congestion games. In fact, finding equilibria for some of the games studied in this section has been proven to be PPAD-complete \citep{klimm2020complexity}. \medskip

\noindent Before we proceed with the technical parts of the section, we present the previous works on these domains and compare the previous approaches to our proofs which use the \linoptgate. 

\paragraph{Congestion games with malicious players.} The study of congestion games with malicious players was initiated by \citet{karakostas2007equilibria}. Here, we consider the model studied by \citet{babaioff2009congestion}, in which there is a continuum of rational players, who are trying to minimize their total load, and a single malicious player, who is trying to maximize the load experienced by the rational players. \citeauthor{babaioff2009congestion} proved the existence of equilibria in such games with non-decreasing latency functions. They also prove that when the latency functions are concave, these games have pure equilibria, by appealing to the \citeauthor{debreu1952social}-\citeauthor{fan1952fixed}-\citeauthor{glicksberg1952further} theorem [\citeyear{debreu1952social}]. This already hints that PPAD-membership can be recovered as a corollary of \Cref{thm:generalized-eq-concave}; we will instead obtain it as a corollary of the more general \Cref{thm:LBRO-equilibria}. We remark that for these games, a PPAD-membership result was not previously known.\footnote{\citet{babaioff2009congestion} mention in the conclusion of their work that their existence proofs establish membership of the problem in PPAD. This claim is however not straightforward, and there is no proof to support it. In any case, such a claim would most certainly refer only to \emph{approximate} equilibria, as these games do not always have rational equilibria for any concave latency function.}

\paragraph{Multi-class non-atomic network congestion games.} In multi-class non-atomic network congestion games, there is a continuum of players divided into different classes. Existence of Wardrop equilibria in this setting was first established by \citet{schmeidler1973equilibrium}, via an application of the \citeauthor{debreu1952social}-\citeauthor{fan1952fixed}-\citeauthor{glicksberg1952further} theorem [\citeyear{debreu1952social}]. A different proof of existence was provided by \citet{milchtaich2000generic}, who also studied the equilibrium uniqueness. The first computational complexity results on the problem were provided by \citet{meunier2013lemke}, who proved that the problem of finding a Wardrop equilibrium lies in PPAD, when the latency functions on the resources of the game are linear. Their proof goes via the ``LCP approach'' (see \Cref{sec:Other-Approaches}). It turns out that their LCP formulation cannot be solved by the vanilla version of Lemke's algorithm, and so they devise a similar pivoting algorithm, tailored to their problem. As in the case of Lemke's algorithm, they argue explicitly against ray termination. 

\paragraph{Multi-class atomic splittable network congestion games.} The existence of Nash equilibria in atomic splittable congestion games follows from the \citeauthor{debreu1952social}-\citeauthor{fan1952fixed}-\citeauthor{glicksberg1952further} theorem [\citeyear{debreu1952social}], see \citep{klimm2020complexity}. The computational complexity of the problem with player-specific linear latency functions was studied by \citet{klimm2020complexity}, who proved its PPAD-completeness. What is important for us is the most challenging part of that result, which is the PPAD-membership. \citeauthor{klimm2020complexity}'s proof is rather involved, and goes via the development of a homotopy method for tracing an equilibrium given the demand rates of the players. This gives rise to a new pivoting algorithm, similar in spirit to Lemke's algorithm, or more precisely, the Lemke-Howson algorithm (see \Cref{sec:Other-Approaches}). The method solves the problem of finding a Nash equilibria as a system of linear equations involving \emph{excess flows}, \emph{vertex potentials} and \emph{block Laplacians}. At a very high level, the authors use the excess and potentials to define an undirected version of the \textsc{End-of-Line} graph (see \Cref{def:end-of-line}), and the determinant of the block Laplacians to define a unique orientiation of the edges, effectively reducing the problem to \textsc{End-of-Line}. It is interesting to mention that \citeauthor{klimm2020complexity} do mention that the Nash equilibrium problem can be formulated as an LCP, but it is unclear whether Lemke's algorithm can solve it, motivating the development of their new algorithm. 

\paragraph{Our results and proofs.} The \linoptgate allows us to avoid any of the technical complications of the proofs of \citet{meunier2013lemke} and \cite{klimm2020complexity} (which are rather involved, especially the latter), and essentially obtain the PPAD-membership for all of these problems as simple corollaries of \Cref{thm:LBRO-equilibria} (or even \Cref{thm:generalized-eq-concave}, as the games in this section are \linear concave). In fact, as we mentioned earlier, we obtain generalizations of those results, from linear latency functions to any latency function that is given (in some cases together with its subgradient) by a \linear arithmetic circuit. In particular, they capture all piecewise-linear latency functions that are given explicitly as part of the input, as well as those where the latencies and their subgradients are given implicitly by the aforementioned circuits (see \Cref{sec:implicit}). For congestion games with malicious players, our PPAD-membership result for this same class of latency functions is the first such complexity result for any version of the problem. 

\subsection{Multi-class Congestion Games}

In the models we consider, the players are divided into a finite
number $k$ of \emph{classes}, and we refer to these as \emph{multi-class
congestion games}. The main difference between the different models we consider
arise from how the classes of players are interpreted. Apart from
this, the models have many commonalities, and we start by describing
these.

\begin{remark}[Notation]
  In this section we shall for convenience make use of the (common)
  shorthand notation of writing a function argument as a
  subscript. That is, we may write $f_a$ in place of $f(a)$, for a
  given function $f \colon A \to B$.
\end{remark}

\begin{definition}[Multi-class Congestion Games]\label{def:multi-class-congestion}
A multi-class congestion game with a finite number of $k$ classes is
given by a finite set $E$ of \emph{resources} and, for each
class~$i \in [k]$, a set of \emph{pure strategies}
$\Sigma_i \subseteq 2^E$, consisting of subsets of resources.  Each
resource is equipped with class-dependent continuous \emph{latency
  functions}, denoted by $\ell^i_e \colon \RRnn \to \RRnn$ for
$e \in E$ and $i \in [k]$. Note that we do not assume that the latency
functions are non-decreasing. 
A \emph{load} on resources is a function $x \colon E \to \RRnn$. The
latency functions extend additively to latency functions for pure
strategies that are functions of loads by letting
$\ell^i_S(x) = \sum_{e \in S}\ell^i_e(x_e)$.

We associate to each class~$i \in [k]$ a positive \emph{weight demand}
$d_i$ and let $d=d_1+\dots+d_k$ be the total weight demand. A
\emph{load allocation} for class~$i\in [k]$ is a function
$f^i \colon \Sigma_i \to \RRnn$ such that
$\sum_{S \in \Sigma_i} f^i_S=d_i$. The load allocation $f^i$ induces a
\emph{load} $x^i \colon E \to \RRnn$ for class~$i \in [k]$ and a
resulting total load $\overline{x} \colon E \to \RRnn$ given as follows.
\begin{equation}
  \begin{aligned}
    x^i_e & = \sum_{\substack{S \in \Sigma_i \\ e \in S}} f^i_S \enspace,&&  e \in E, i=1,\dots,k\enspace .\\
    \overline{x}_e &= \sum_{i=1}^k x^i_e \enspace,&&  e \in E\enspace .
  \end{aligned}
  \label{EQ:load-from-load-allocation}
\end{equation}
\end{definition}

\subsubsection{Non-atomic congestion games}\label{sec:non-atomic-congestion-games}

In the \emph{non-atomic} model, each class of players represents a continuum
of players. Formally we consider the set of all players to be given by
a bounded real interval $I$ of length $d$, partitioned into $k$
measurable sets $I_1,\dots,I_k$ with respect to the Lebesgue measure
$\mu$, such that $\mu(I_i)=d_i$, for $i \in [k]$, thereby defining the $k$ classes of players. A strategy profile
is a measurable function $\rho : I \rightarrow 2^E$ such that
$\rho(I_i) \subseteq \Sigma_i$ for every $i\in [k]$. The restriction
$\rho_i = \rho_{\restriction I_i}$ of $\rho$ to the subset $I_i$ yields the
strategy profile $\rho_i \colon I_i \to \Sigma_i$ of the players of
class~$i$. The strategy profile induces a load allocation profile
$f=(f^1,\dots,f^k)$ by letting $f^i_S = \mu(\rho^{-1}(S))$ for
$S \in \Sigma_i$ and $i \in [k]$, which in turn defines a load profile
$x=(x^1,\dots,x^k)$ and the total load $x$ on resources.\medskip

\noindent
The central equilibrium notion for non-atomic congestion games is the
\emph{Wardrop equilibrium}~\citep{Wardrop1952}, here defined for multi-class
congestion games. The notion is very similar to the Nash equilibrium that we used 
in the previous sections. In fact, the Wardrop equilibrium coincides with the notion of Nash
equilibrium for this formulation in terms of a continuity of players.
We will use the term Wardrop equilibrium to refer
to equilibria in non-atomic games, and the term \emph{Nash equilibrium} or simply
\emph{equilibrium} to refer to atomic splittable games, see \Cref{sec:atomic-splittable-congestion-games}.

\begin{definition}[Wardrop equilibrium]\label{def:wardrop-equilibrium}
  A strategy profile $\rho$ is a \emph{Wardrop equilibrium} if for all
  $i \in [k]$ it holds that
  $\rho(I_i) \subseteq \argmin_{S \in \Sigma_i} \ell^i_S(\overline{x})$, where
  $\overline{x}$ is the total load induced by $\rho$.
\end{definition}

\noindent We shall also say that the load allocation profile induced by a
Wardrop equilibrium $\rho$ \emph{itself} is a Wardrop equilibrium. Note that
a different strategy profile $\rho'$ may induce the same load
allocation as profile $\rho$ without being a Wardrop
equilibrium. In that case, however, there would exist another Wardrop
equilibrium $\rho''$ such that $\rho'$ and $\rho''$ differ only on a
null set with respect to $\mu$.

\begin{definition}[Wardrop equilibrium --- load allocation formulation]
  A load allocation profile $f=(f^1,\dots,f^k)$ is a Wardrop
  equilibrium if $\ell^i_S(\overline{x})= \min_{S' \in \Sigma_i} \ell^i_{S'}(\overline{x})$,
  whenever $f^i_S>0$, for all $i$, where $\overline{x}$ is the total load defined
  by $(f^1,\dots,f^k)$.
\end{definition}

\noindent Before presenting a fixed point formulation of Wardrop equilibria, we
introduce some further notation, which will be used again in the
setting of atomic splittable congestion games in \Cref{sec:atomic-splittable-congestion-games} below.\medskip

\noindent Given a load allocation $f=(f^1,\dots,f^k)$ we let
$f^{-i}=(f^1,\dots,f^{i-1},f^{i+1},f^k)$ denote the load allocation
profile for classes different from~$i$. For $i \in [k]$ and
$S \in \Sigma_i$, we define the function $L^i_S$ as follows. For 
a load allocation profile $f$ and a load allocation $g^i$ for class~$i$, we let
\begin{equation}\label{eq:non-atomic-congestion-games-latency-LS}
  L^i_S(g^i,f^{-i}) = \sum_{e \in S} \ell^i_e\left(\sum_{\substack{T \in \Sigma_i \\ e \in T}} g^i_T + \sum_{j \neq i}  \sum_{\substack{T \in \Sigma_j \\ e \in T}} f^j_T\right)
\end{equation}

\noindent Note that if $x^{-i}=(x^1,\dots,x^{i-1},x^{i+1},x^k)$ are the induced
loads by $f^{-i}$, and $y^i$ is the induced load by $g$, then
$L^i_S(g^i,f^{-i}) = \sum_{e \in S} \ell^i_e(y^i_e+\sum_{j\neq i}
x^j_e)$. In particular, if $\overline{x}$ is the total load induced by $f$, then
$L^i_S(f) = \ell^i_S(\overline{x})$.\medskip

\noindent Let $i \in [k]$ and consider the following linear program in variables
$g^i_S$.

\begin{center}\underline{Linear Program $\mathcal{P}_\mathrm{WE}$}\end{center}
\begin{equation*}
\begin{aligned}
\mbox{minimize}\quad & \sum_{S \in \Sigma_i} L^i_S(f) g^i_S\\
\mbox{subject to}\quad 
& \sum_{S \in \Sigma_i} g^i_S = d_i\\
& g^i_S \geq 0, \ \text{ for all } S \in \Sigma_i
\end{aligned}
\end{equation*}

\noindent An optimal solution $g^i = (g^i_S)_{S \in \Sigma_i}$ must satisfy that
$L^i_S(f) = \min_{S' \in \Sigma_i} L^i_{S'}(f)$ whenever $g^i_S >
0$. This gives rise to a fixed point characterization of Wardrop
equilibrium.

\begin{proposition}\label{prop:wardrop-fixed-point-characterization}
  A load allocation profile $f=(f^1,\dots,f^k)$ is a Wardrop
  equilibrium if and only if $f^i$ is an optimal solution to the
  linear program $\mathcal{P}_\mathrm{WE}$ for all $i \in [k]$.
\end{proposition}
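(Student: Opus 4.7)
The plan is to prove both directions of the equivalence by exploiting the simple observation that the linear program $\mathcal{P}_\mathrm{WE}$ for class $i$ is essentially an assignment problem: we distribute a total weight demand $d_i$ across the pure strategies $S \in \Sigma_i$ with fixed costs $L^i_S(f) = \ell^i_S(\overline{x})$, so any optimal allocation must be supported on the minimum-cost strategies. Throughout, I will use that $L^i_S(f) = \ell^i_S(\overline{x})$ (as noted right after \eqref{eq:non-atomic-congestion-games-latency-LS}), and define $m_i := \min_{S' \in \Sigma_i} \ell^i_{S'}(\overline{x})$. Since any feasible $g^i$ satisfies $\sum_{S \in \Sigma_i} g^i_S = d_i$ with $g^i_S \geq 0$, the objective of $\mathcal{P}_\mathrm{WE}$ is lower bounded by $m_i \cdot d_i$, and this bound is attained precisely by allocations supported on $\argmin_{S' \in \Sigma_i} \ell^i_{S'}(\overline{x})$.

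For the forward direction, I would assume that $f = (f^1,\dots,f^k)$ is a Wardrop equilibrium. Fix $i \in [k]$. By the load-allocation formulation of Wardrop equilibrium, $f^i_S > 0$ implies $\ell^i_S(\overline{x}) = m_i$. Combined with the feasibility condition $\sum_{S \in \Sigma_i} f^i_S = d_i$, this gives
\[
\sum_{S \in \Sigma_i} L^i_S(f) f^i_S = \sum_{S \in \Sigma_i} \ell^i_S(\overline{x}) f^i_S = m_i \sum_{S \in \Sigma_i} f^i_S = m_i \cdot d_i,
\]
which matches the lower bound, showing that $f^i$ is an optimal solution to $\mathcal{P}_\mathrm{WE}$.

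For the converse, I would assume that $f^i$ is an optimal solution to $\mathcal{P}_\mathrm{WE}$ for every $i$. Since the optimal value equals $m_i \cdot d_i$, we obtain
\[
\sum_{S \in \Sigma_i} \bigl(\ell^i_S(\overline{x}) - m_i\bigr) f^i_S = \sum_{S \in \Sigma_i} \ell^i_S(\overline{x}) f^i_S - m_i \sum_{S \in \Sigma_i} f^i_S = m_i d_i - m_i d_i = 0.
\]
Every term in the left-hand sum is nonnegative because $\ell^i_S(\overline{x}) \geq m_i$ and $f^i_S \geq 0$, so each term must vanish. Hence $f^i_S > 0$ forces $\ell^i_S(\overline{x}) = m_i$, which is exactly the Wardrop condition.

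I do not expect any real obstacle: the proof is a direct unpacking of the definitions together with the basic LP observation about mass concentrating on minimizers. The only subtle point worth being careful about is that the fixed ``costs'' $L^i_S(f)$ in $\mathcal{P}_\mathrm{WE}$ depend on the entire profile $f$ (including $f^i$ itself through $\overline{x}$), so the statement must be read as a fixed-point condition rather than a genuine best-response LP against frozen opponents; this does not affect the argument above because both directions reason about the same total load $\overline{x}$ induced by $f$.
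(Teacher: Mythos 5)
Your proof is correct and takes exactly the approach the paper gestures at: the paper states, immediately before the proposition, that an optimal $g^i$ for $\mathcal{P}_\mathrm{WE}$ must place positive weight only on strategies $S$ with $L^i_S(f)=\min_{S'}L^i_{S'}(f)$, and leaves the two-way equivalence to the reader. Your argument just formalizes this via the standard lower bound $m_i\cdot d_i$ and the nonnegativity of $\sum_S(\ell^i_S(\overline{x})-m_i)f^i_S$, which is indeed the intended and correct proof.
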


\noindent Looking ahead, the characterization of \Cref{prop:wardrop-fixed-point-characterization} via linear program $\mathcal{P}_\text{WE}$ will be employed to prove PPAD-membership. In particular, each $\mathcal{P}_\text{WE}$ will capture the best response of each player, and a \linoptgate will serve as the oracle for computing this best response. This effectively will render the game a \lbro game (\Cref{def:LBRO-game}), and PPAD-membership will follow from \Cref{thm:LBRO-equilibria}. We present a concrete application of this in \Cref{sec:congestion-games-with-malicious-players} below. First, we define the setting of atomic splittable congestion games. 

\subsubsection{Atomic splittable congestion games}\label{sec:atomic-splittable-congestion-games}

In the atomic splittable model each class is representing a single
player controlling the entire weight demand of the class. We shall
thus refer to class~$i$ simply as \emph{player}~$i$, and the strategy
space of player~$i$ is the set of load allocations of class~$i$. Given
a load allocation profile $f$, player~$i$ experiences latency
$\sum_{S \in \Sigma_i}L^i_S(f)f^i_S$ which becomes the cost function
$C^i$ the player wishes to minimize. We thus define
$C^i(g^i,f^{-i}) = \sum_{S \in \Sigma_i}L^i_S(g^i,f^{-i})g^i_S$.\medskip 

\noindent The
central equilibrium notion for atomic splittable congestion games we consider is
the \emph{Nash equilibrium} with respect to these cost functions.

\begin{definition}[Nash Equilibrium --- load allocation formulation]
  A load allocation profile $f=(f^1,\dots,f^k)$ is a \emph{Nash
  equilibrium} if and only if $C^i(f) \leq C^i(g^i,f^{-i})$ for all~$i \in [k]$.
\end{definition}

\noindent If one imposes an additional convexity assumption involving the latency functions, then the game essentially becomes a concave game, and thus possesses a Nash equilibrium by the \citeauthor{debreu1952social}-\citeauthor{fan1952fixed}-\citeauthor{glicksberg1952further} theorem [\citeyear{debreu1952social}]. The following
lemma follows immediately from definitions.
\begin{lemma}[Convexity assumption for latency functions]
  Suppose that for all $i \in [k]$ and $e \in E$ the latency function
  $\ell^i_e$ satisfies that the function $x \mapsto \ell^i_e(x+c)x$ is
  convex for all $c \geq 0$. Then, for all $i \in [k]$ and all load
  allocation profiles $f$ the cost function
  $g^i \mapsto C^i(g^i,f^{-i})$ is convex.
  \label{LEM:Latency-convexity-assumption}
\end{lemma}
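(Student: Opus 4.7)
The plan is to rewrite the cost function $C^i(g^i,f^{-i})$ in a form that exposes a sum of convex functions of $g^i$. First, I would swap the order of summation. Starting from
\begin{equation*}
C^i(g^i,f^{-i}) = \sum_{S \in \Sigma_i} g^i_S \sum_{e \in S} \ell^i_e\Big(\,\sum_{\substack{T \in \Sigma_i \\ e \in T}} g^i_T + \sum_{j\neq i}\sum_{\substack{T \in \Sigma_j \\ e \in T}} f^j_T\Big),
\end{equation*}
I would interchange the sums over $S$ and $e$, collect the inner factor $\sum_{S \in \Sigma_i, e \in S} g^i_S$ into the load variable $y^i_e := \sum_{T \in \Sigma_i, e \in T} g^i_T$, and define the constant (in $g^i$) $c_e := \sum_{j \neq i} \sum_{T \in \Sigma_j, e \in T} f^j_T \geq 0$, which depends only on $f^{-i}$. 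This gives the clean expression
\begin{equation*}
C^i(g^i,f^{-i}) = \sum_{e \in E} \ell^i_e(y^i_e + c_e)\, y^i_e.
\end{equation*}

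The next step is to apply the hypothesis. By assumption, for every $e \in E$ and every $c \geq 0$, the single-variable map $y \mapsto \ell^i_e(y + c)\, y$ is convex on $\RRnn$. Since $c_e \geq 0$, each term $y^i_e \mapsto \ell^i_e(y^i_e + c_e)\, y^i_e$ is a convex function of the scalar $y^i_e$. Finally, $y^i_e$ is a linear (in particular affine) function of $g^i$, so composition with an affine map preserves convexity, and a nonnegative sum of convex functions is convex. Hence $g^i \mapsto C^i(g^i,f^{-i})$ is convex, as required.

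There is no real obstacle here; the only thing to be careful about is the algebraic reshuffling of indices to expose the load variables $y^i_e$ and to isolate the constants $c_e$, after which the result is immediate from standard convexity-preservation rules (composition with an affine map, nonnegative sums). The proof can therefore be written in just a few lines.
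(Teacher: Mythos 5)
Your proof is correct and fills in exactly the details the paper elides; the paper only remarks that the lemma ``follows immediately from definitions,'' and your rewriting $C^i(g^i,f^{-i}) = \sum_{e \in E} \ell^i_e(y^i_e + c_e)\, y^i_e$ by swapping the sums over $S$ and $e$ is the natural calculation that makes the claim visible. After that reduction, the conclusion does indeed follow from the stated hypothesis together with the standard facts that composition of a convex function with an affine map is convex and that a sum of convex functions is convex, so there is nothing missing.
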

\begin{remark}
  In the literature on congestion games, a common assumption of a
  latency function $\ell$ is that the function $x \mapsto \ell(x)x$ is
  convex. This is in general not sufficient to guarantee that the
  function $x \mapsto \ell(x+c)x$ for $c \geq 0$ is convex, as
  required in the model we consider. It is however sufficient if one
  additionally assumes that $\ell$ is non-decreasing.
\end{remark}

\noindent With the assumptions of \cref{LEM:Latency-convexity-assumption}
we consider the following convex program in variables $g^i_S$.
\begin{center}\underline{Convex Program $\mathcal{C}_\mathrm{NE}$}
\begin{equation*}
\begin{aligned}
\mbox{minimize}\quad & \sum_{S \in \Sigma_i} L^i_S(g^i,f^{-i}) g^i_S\\
\mbox{subject to}\quad 
& \sum_{S \in \Sigma_i} g^i_S = d_i\\
& g^i_S \geq 0, \ \text{ for all } S \in \Sigma_i
\end{aligned}
\end{equation*}
\end{center}
Directly from definitions we then have the following.
\begin{proposition}
  In a congestion game with latency functions satisfying the
  assumptions of \cref{LEM:Latency-convexity-assumption}, a load
  allocation profile $f=(f^1,\dots,f^k)$ is a Nash equilibrium if and
  only if $f^i$ is an optimal solution to the convex program
  $\mathcal{C}_\mathrm{NE}$ for all $i \in [k]$.
\end{proposition}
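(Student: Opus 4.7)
The plan is to observe that this characterization is essentially a definitional tautology: once one matches up the objective and feasible region of $\mathcal{C}_\mathrm{NE}$ with the quantities appearing in the definition of Nash equilibrium, the two conditions become literally the same statement. So the argument will be a short unfolding rather than anything substantive; the real content already lives in the definitions of $L^i_S(\cdot,\cdot)$ and $C^i(\cdot,\cdot)$ and in \Cref{LEM:Latency-convexity-assumption}, which is needed only to ensure that the program is genuinely convex (and hence that ``optimal solution'' is an unambiguous notion suitable for later use inside a \linoptgate).

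First I would match the feasible region of $\mathcal{C}_\mathrm{NE}$ with the strategy space of player $i$. By \Cref{def:multi-class-congestion}, a load allocation for class $i$ is precisely a function $g^i:\Sigma_i\to\RRnn$ with $\sum_{S\in\Sigma_i}g^i_S=d_i$, which is exactly the constraint set of $\mathcal{C}_\mathrm{NE}$. So the decision variable $g^i$ of the program ranges over the entire strategy space of player $i$ in the atomic splittable game, and over nothing else.

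Next I would match objectives. Since $C^i(g^i,f^{-i})$ was \emph{defined} as $\sum_{S\in\Sigma_i}L^i_S(g^i,f^{-i})\,g^i_S$, the objective function of $\mathcal{C}_\mathrm{NE}$ coincides pointwise with the cost $g^i\mapsto C^i(g^i,f^{-i})$ that player $i$ seeks to minimize when the other players' load allocations are fixed at $f^{-i}$. Combining the two observations, $f^i$ is an optimal solution of $\mathcal{C}_\mathrm{NE}$ at parameter $f^{-i}$ if and only if
\[
C^i(f^i,f^{-i}) \;\leq\; C^i(g^i,f^{-i})
\]
for every load allocation $g^i$ of class $i$, which is exactly the Nash equilibrium condition for player $i$. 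Requiring this simultaneously for all $i\in[k]$ then gives the two directions of the equivalence.

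Since the content is purely a matter of unfolding definitions, I do not anticipate any technical obstacle. The only subtle point worth flagging is that \Cref{LEM:Latency-convexity-assumption} is invoked in the background to ensure that $g^i\mapsto C^i(g^i,f^{-i})$ is convex, so $\mathcal{C}_\mathrm{NE}$ is indeed a convex program; this is irrelevant for the equivalence itself, but it is what will later make $\mathcal{C}_\mathrm{NE}$ amenable to being solved by a \linoptgate in the downstream PPAD-membership argument for atomic splittable games.
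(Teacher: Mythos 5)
Your proposal is correct and matches the paper exactly: the paper gives no proof beyond the remark ``Directly from definitions we then have the following,'' which is precisely the unfolding you carry out (feasible region of $\mathcal{C}_\mathrm{NE}$ equals the strategy space, objective equals $C^i(\cdot,f^{-i})$, hence optimality is literally the best-response condition), and you correctly flag that \Cref{LEM:Latency-convexity-assumption} serves only to certify convexity of the program rather than being needed for the equivalence itself.
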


\noindent Again, convex program $\mathcal{C}_{\text{NE}}$ essentially calculates the best response for each player, and hence will be computed by an oracle given by a \linoptgate. We will apply this concretely in \Cref{sec:nonatomic-network-congestion-ppad} in the context of network congestion games, but after we will have reformulated $\mathcal{C}_{\text{NE}}$ via a more concise description of the strategies in terms of flows on resources rather than load allocations. 

\subsubsection{PPAD-membership of congestion games with malicious players}
\label{sec:congestion-games-with-malicious-players}

A model of non-atomic congestion games with a \emph{malicious} player
was introduced by \citep{babaioff2009congestion}. In this model there
are two classes of players, the \emph{rational} players and the
\emph{malicious} player. We shall use the index set $\{R,M\}$ for
these classes rather than $\{1,2\}$. The class of rational players
represents a continuum of players. As in the basic case of non-atomic
congestion games as given in
\cref{sec:non-atomic-congestion-games}, these are given by a
bounded real interval $I$ of length $d_R$ and a strategy of the
rational players is a measureable function
$\rho_R \colon I \to \Sigma_R$. The strategy $\rho$ induces a load
allocation $f^R$, and we will use that to re-express the equilibrium
condition in terms of $f^R$.

The class of the malicious player represents a single player that
controls a weight demand $d_M$, as in the case of atomic splittable
games as given in
\cref{sec:atomic-splittable-congestion-games}. However,
unlike the basic model of atomic splittable congestion games,
the malicious player can use a \emph{mixed} strategy, i.e., a
probability distribution $\rho_M$ over load allocations $f^M$.
Since the malicious player is using a mixed strategy, what the
rational players aim to minimize is the expected load of their chosen
strategy. The malicious player on the other hand aims to maximize the
expected total load experienced by the rational players. We define an equilibrium of this game explicitly below.

\begin{definition}\label{def:equilibrium-malicious}
  The strategy profile $(\rho_R,\rho_M)$ is an \emph{equilibrium} if the
  following conditions hold, where $x^R$ is the load induced by
  $\rho^R$ and $x^M$ is the load induced by $f^M$.
  \begin{enumerate}
  \item $\rho_R \subseteq \argmin_{S \in \Sigma_R} \mathbb{E}_{f^M \sim \rho_M}[\ell^R_S(x^R+x^M)]$.
  \item $\supp(\rho_M) \subseteq \argmax_{f^M} \sum_{e \in E} \ell^R_S(x^R_e+x^M_e)x^R_e$.
  \end{enumerate}
\end{definition}

\noindent We reformulate \Cref{def:equilibrium-malicious} in terms of load allocations for the rational
players, similarly to the case of Wardrop equilibria, and thus view the
pair $(f^R,\rho_M)$ as a strategy profile. For convenience we use the
function $L^i_S$ defined in
\cref{sec:non-atomic-congestion-games}.

\begin{definition}\label{def:equilibrium-malicious-load}
  The strategy profile $(f^R,\rho_M)$ is an equilibrium if 
  \begin{enumerate}
  \item $\mathbb{E}_{f^M \sim \rho_M} L_S^R(f^R,f^M) = \min_{S' \in \Sigma_R} \mathbb{E}_{f^M \sim \rho_M} L_{S'}^R(f^R,f^M)$ whenever $f^R_S>0$.
  \item $\supp(\rho_M) \subseteq \argmax_{f^M} \sum_{S \in \Sigma_R} L^R_S(f^R,f^M)f^R_S$.
  \end{enumerate}
\end{definition}

\noindent \citet{babaioff2009congestion} proved that every congestion game with a malicious player with
non-decreasing latency functions has an equilibrium. They also proved, employing the 
\citeauthor{debreu1952social}-\citeauthor{fan1952fixed}-\citeauthor{glicksberg1952further} theorem,
that when the latency functions are concave, such a game has a pure
equilibrium, i.e. an equilibrium of the form $(f^R,f^M)$. We will obtain a PPAD-membership result for this case.
The result will be based on the following fixed formulation of the problem, very similar to that of linear program
$\mathcal{P}_\text{WE}$ presented in \Cref{sec:non-atomic-congestion-games}, and convex program $\mathcal{C}_\text{NE}$
presented in \Cref{sec:atomic-splittable-congestion-games}, refined for the problem at hand. 

\begin{figure}
\centering 
\fbox{
\centering
    \begin{minipage}{.45\textwidth}
        \centering
        \begin{center}\underline{Linear Program $\mathcal{P}^R_\mathrm{WE}$}\end{center}
    \begin{equation*}
    \begin{aligned}
    \mbox{minimize}\quad & \sum_{S \in \Sigma_R} L^R_S(f^R,f^M) g^R_S\\
    \mbox{subject to}\quad 
    & \sum_{S \in \Sigma_R} g^R_S = d_R\\
    & g^R_S \geq 0, \ \text{ for all } S \in \Sigma_i
    \end{aligned}
    \end{equation*}
    \end{minipage}%
    \hfill\vline\hfill
    \begin{minipage}{0.45\textwidth}
        \centering
        \begin{center}\underline{Convex Program $\mathcal{C}^M_\mathrm{NE}$}
        \begin{equation*}
        \begin{aligned}
        \mbox{maximize}\quad & \sum_{S \in \Sigma_R} L^R_S(f^R,g^M) f^R_S\\
        \mbox{subject to}\quad 
        & \sum_{S \in \Sigma_M} g^M_S = d_M\\
        & g^M_S \geq 0, \ \text{ for all } S \in \Sigma_i
        \end{aligned}
        \end{equation*}
        \end{center}
    \end{minipage}
    }
\caption{The linear program $\mathcal{P}^R_\mathrm{WE}$ for the rational players (left) and the convex program $\mathcal{C}^M_\mathrm{NE}$ for the malicious player (right), in the fixed point formulation of the equilibrium.}
\label{fig:malicious}
\end{figure}

For the rational players, best responses are expressed by linear program $\mathcal{P}^R_\mathrm{WE}$ in variables $g^R$ of the left-hand side of \Cref{fig:malicious}.. For the malicious player, the best response is expressed by convex program $\mathcal{C}^M_\mathrm{NE}$ in variables $g^M$ of the right-hand side of \Cref{fig:malicious}.\medskip

\noindent From the fixed point formulation in terms of the programs of \Cref{fig:malicious}, PPAD-membership is almost immediate via arguing that those best response can be computed by \linoptgates, as long as the subgradients of the objective functions can be computed by a \pseudog. Hence we have the following theorem.

\begin{theorem}\label{thm:congestion-games-malicious}
Computing an equilibrium of a congestion game with a malicious player in which the latency functions are given by \pseudogs and their subgradients are given by \pseudogs computing piecewise-constant functions is in PPAD.
\end{theorem}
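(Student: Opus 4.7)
The plan is to obtain this as a direct corollary of our main \lbro result, \Cref{thm:LBRO-equilibria}, by viewing the congestion game as a two-``player'' \lbro game: one player is the rational continuum with strategy $f^R$, and the other is the malicious player with strategy $f^M$. For each we will construct a best-response oracle realised by a single \linoptgate, using the fixed-point characterisation given by the programs $\mathcal{P}^R_\mathrm{WE}$ and $\mathcal{C}^M_\mathrm{NE}$ in \Cref{fig:malicious}. Both strategy spaces are (scaled) simplices, and the fixed points of the two oracles are exactly the equilibria of \Cref{def:equilibrium-malicious-load}; so once both oracles are in place the theorem is immediate.

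For the rational oracle, $\mathcal{P}^R_\mathrm{WE}$ is an LP in $g^R$ with constraints $\sum_S g^R_S = d_R$ and $g^R_S \geq 0$, neither of which involves the other player. The only quantities depending on $(f^R,f^M)$ are the objective coefficients $L^R_S(f^R,f^M) = \sum_{e \in S} \ell^R_e(\overline{x}_e)$, which will be fed in as \circparams: they are produced outside the \linoptgate by a small \linear arithmetic circuit that assembles $\overline{x}_e$ linearly from $(f^R,f^M)$ and then passes each $\overline{x}_e$ through the latency \pseudog. Since the subgradient of a linear objective is just its coefficient vector, this is trivially a \pseudog, and the \linoptgate can solve $\mathcal{P}^R_\mathrm{WE}$.

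The malicious oracle is where the real work sits. The objective of $\mathcal{C}^M_\mathrm{NE}$ rewrites as $\sum_e \ell^R_e(\overline{x}_e)\, x^R_e$ with $x^R_e = \sum_{S\ni e} f^R_S$ a \circparam; it is concave in $g^M$ (a non-negatively weighted sum of compositions of the concave $\ell^R_e$ with the linear map $g^M \mapsto \overline{x}_e$, concavity being precisely what makes the input ``subgradient'' a piecewise-constant supergradient). By the chain rule, a supergradient is
\[
\partial_{g^M_T}\sum_e \ell^R_e(\overline{x}_e)\,x^R_e \;=\; \sum_{e \in T}\,(\ell^R_e)'(\overline{x}_e)\,x^R_e,
\]
where $(\ell^R_e)'$ is a selection from the input subgradient \pseudog. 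Building a \pseudog for this expression reduces to computing each product $(\ell^R_e)'(\overline{x}_e)\cdot x^R_e$, i.e.\ a piecewise-constant function of the decision variable times a \circparam. Writing the piecewise-constant function as a finite linear combination of Heaviside terms $\Heaviside(\overline{x}_e - \theta)$ and applying \Cref{lem:mult-by-Heaviside} once per term expresses each such product as a \pseudog; summing over $e \in T$ yields a \pseudog for the full supergradient, so the \linoptgate can solve $\mathcal{C}^M_\mathrm{NE}$.

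The main technical obstacle is precisely this last multiplication of a decision-variable-dependent, piecewise-constant quantity by a \circparam: ordinary \linear arithmetic circuits cannot multiply two variables, and it is the piecewise-constant assumption on the latency subgradients that saves us, because it lets us reduce every such product to a Heaviside-times-variable product that \Cref{lem:mult-by-Heaviside} can handle. With the rational and malicious \linoptgate oracles in place, the congestion game with a malicious player is an \lbro game whose equilibria coincide with the equilibria of \Cref{def:equilibrium-malicious-load}, and \Cref{thm:LBRO-equilibria} yields PPAD-membership.
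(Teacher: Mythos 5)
Your proposal matches the paper's proof in structure and substance: two \lbro oracles (one per player) realised by \linoptgates solving $\mathcal{P}^R_\mathrm{WE}$ and $\mathcal{C}^M_\mathrm{NE}$, with PPAD-membership following from \Cref{thm:LBRO-equilibria} (the paper also invokes \Cref{rem:lbro-own-input} since the malicious oracle reads the player's own strategy). The one place you diverge is the treatment of the product $(\ell^R_e)'(\overline{x}_e)\cdot x^R_e$: you decompose the piecewise-constant function explicitly as a finite linear combination of Heaviside terms and apply \Cref{lem:mult-by-Heaviside} term by term, whereas the paper cites ``the machinery developed in \Cref{sec:implicit}''. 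These are not quite interchangeable. Your decomposition is correct and self-contained when the breakpoints and levels of $(\ell^R_e)'$ are explicitly given and polynomially many — which covers the main corollary the paper highlights (explicit piecewise-linear latencies) — but the theorem's hypothesis only says the subgradient is ``given by a \pseudog computing a piecewise-constant function''. Such a \pseudog need not expose its breakpoints, and it may encode exponentially many of them, in which case a literal term-by-term Heaviside expansion is not available in polynomial time; this is exactly the regime that \Cref{prop:implicit-correspondence} is designed to handle (the product $z\cdot g(x/y)$ for a piecewise-constant $g$ given via a Boolean circuit). So your argument is cleaner and more transparent for the explicit case, but you should state that the Heaviside expansion requires the piecewise-constant pieces to be explicitly available, or else fall back on \Cref{prop:implicit-correspondence} for the succinct-representation case, which is what the paper's phrasing is gesturing at.
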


\begin{proof}
Linear program $\mathcal{P}^R_\mathrm{WE}$ of \Cref{fig:malicious} computes the best response of the class of rational players, and $\mathcal{C}^M_\mathrm{NE}$ computes the best response of the malicious players. As long as those optimization programs can be computed by \linoptgates, then the game is a \lbro game (\Cref{def:LBRO-game}, noting \Cref{rem:lbro-own-input} as the oracles also input a players' own strategy) and PPAD-membership follows from \Cref{thm:LBRO-equilibria}. Indeed, for both programs the feasible domain is non-empty and bounded with no \circparams appearing in the left-hand side of the constraints. By the statement of the theorem, the latency functions can be computed by \pseudogs. In the case of linear program $\mathcal{P}^R_\mathrm{WE}$, the subgradient is a sum of latency functions (one for each $S \in \Sigma_R$), which by extension can also be computed by a \pseudog. For convex program $\mathcal{C}^M_\mathrm{NE}$, by the definition of the latency function $L_S^R(f^R,g^M)$ in \Cref{eq:non-atomic-congestion-games-latency-LS}, we can use the \pseudogs that compute the latency functions, as well as those that compute their subgradients. Those subgradients are piecewise-constant functions, which we then can multiply by the variables $g^M$ by the machinery developed in \Cref{sec:implicit}. In the end, we obtain a function which can be computed by a \pseudog, and the whole linear program can be computed by a \linoptgate.
\end{proof}

\noindent A direct corollary of \Cref{thm:congestion-games-malicious} is to the case when the latency functions are piecewise-linear functions given explicitly as part of the input. Indeed, in that case the objective functions are piecewise-quadratic and their subgradients are piecewise-linear functions, hence computable by a \pseudog. \medskip

\noindent We complement \Cref{thm:congestion-games-malicious} above with an example showing the tightness of the class of latency functions considered in the statement of theorem. Indeed, if one moves to more general concave latency functions, then the only equilibria of the game may be irrational.

\begin{example}[Only irrational equilibria] \label{rem:only-irrational-malicious}
Consider a game with only rational players (the game with a malicious player is a generalization). Define the function $f \colon \RRnn \to \RRnn$ by
\[
f(x) = \begin{cases} 3x-x^2 & \text{for } x \leq 1\\ 1+x & \text{for } x > 1\end{cases} \enspace .
\]
The function $f$ passes through the origin and is continuous, differentiable, increasing, and concave. Consider the non-atomic congestion game with two resources $1$ and $2$, and latency functions $l_1(x)=f(x)$ and $l_2(x)=2f(x)$. If $(x,1-x)$ is an equilibrium, then it must hold that $3x-x^2=2(3(1-x)-(1-x)^2$ which as $x \in [0,1]$ implies that $x=(\sqrt{41}-5)/2$.
\end{example}

\noindent We remark that for games with general concave latency functions, a FIXP-membership result is implied by (although not explicitly stated in) \citep{SICOMP:Filos-RatsikasH2023}, as these games are concave games.

\subsection{Network Congestion Games}

We now define an interesting subclass of congestion games, called \emph{network congestion games}.
Network congestion games allow the players' strategies to be succinctly described by a flow network.

\paragraph{Multi-commodity flow networks}

To be able to define multi-class network congestion games we consider a
\emph{multi-commodity flow model}. More precisely, consider networks given by
a directed graph $G=(V,E)$ on which we consider routing of $k$
commodities. For each commodity $i \in [k]$ we are given a source node
$s_i \in N$, a target node $t_i \in N$, and a flow demand $d_i>0$. A
cost function $c$ is in general a function $c \colon E \to \RR$, but
we will however only consider non-negative cost functions. We assume
separate cost functions $c^i$ for each commodity~$i \in [k]$.

A flow $x$ is a function $x \colon E \to \RR$. To be feasible, the
flow $x$ is required to be non-negative. The balance $b(x)$ of $x$ is
the function given by
\[b(x)_u=\sum_{uv \in E} x_{uv}-\sum_{vu \in E} x_{vu} \text{ for } u \in
N.\]
\noindent For every $i \in [k]$, define the vector $b^i$ by
\[b^i_{s_i} = d_i, \ b^i_{t_i}=-d_i, \text{ and } b^i_u=0, \text{ for every }
u \in N \setminus \{s_i,t_i\}.\]
\noindent The cost of $x$ with respect to
commodity $i$ is $\sum_{uv \in E} c^i_{uv} x_{uv}$.\medskip

\noindent A multi-commodity flow in $G$ consists of a \emph{flow profile}
$x=(x^1,\dots,x^k)$ of flows for each commodity $i \in [k]$. The
individual flow $x^i$ is \emph{feasible} if $x^i$ is nonnegative and
satisfies the balance constraint $b(x^i)=b^i$. We will say that the flow profile $x$ is feasible if
$x^i$ is feasible for all~$i$. The total flow $\overline{x}$ induced
by such a multi-commodity flow $x=(x^1,\dots,x^k)$ is the sum
$\overline{x} = \sum_{i=1}^k x^i$.\medskip

\noindent For a directed path $P$ in $G$, we denote by $\gamma^P$ the unit path
flow given by 
\[\gamma^P_{uv}=1 \text{ for } uv \in P \text{ and } \gamma^P_{uv}=0
\text{ for } uv \notin P.\]
\noindent Similarly, for a directed cycle $C$ in $G$, we
denote by $\gamma^C$ the unit cycle flow given by 
\[\gamma^C_{uv}=1
\text{ for } uv \in C \text{ and } \gamma^C_{uv}=0 \text{ for } uv \notin C.\]  

\noindent The costs
$c^i(P)$ and $c^i(C)$ of $P$ and $C$ with respect to commodity $i$ are
given as 

\[c^i(P)=\sum_{uv \in P} c^i_{uv} \ \ \text{ and } \ \ c^i(C)=\sum_{uv \in C} c^i_{uv}.\]

 \noindent Note that $c^i(P) = c^i(\gamma^P)$ and
$c^i(C)=c^i(\gamma^C)$. Let $\calP_i$ denote the set of all directed
$(s_i,t_i)$ paths in $G$ and let $\calP = \cup_{i=1}^k \calP_i$. Let
$\calC$ denote the set of all directed cycles in $G$.

A \emph{routing} of commodity~$i$ in $G$ is a flow $x^i$ of the form
$x^i=\sum_{P \in \calP_i} \alpha_P \gamma^P$, where
$\sum_{P \in \calP_i} \alpha_P = d_i$ and $\alpha_P \geq 0$ for every
$P \in \calP_i$. Note that $x^i$ is feasible by definition. A
\emph{multi-commodity routing} is a multi-commodity flow
$(x^1,\dots,x^k)$ such that $x^i$ is a routing of commodity $i$ for
every $i \in [k]$. The network model we consider is \emph{uncapacitated}, i.e., it
does not place any upper limits on the flow on arcs. This means that a
minimum cost routing must place non-zero flow only on minimum cost paths.

\begin{lemma}
  \label{LEM:Min-Cost-Routing-Uses-Min-Weight-Paths}
  Suppose that $x^i=\sum_{P \in \calP_i} \alpha_P \gamma^P$ is a
  routing of commodity~$i$ minimizing the cost $c^i(x^i)$. Then
  $c^i(P) = \min_{P' \in \calP_i} c^i(P')$ whenever $\alpha_P > 0$.
\end{lemma}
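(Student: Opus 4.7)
The plan is to argue by contradiction, exploiting the fact that the cost of a routing depends linearly on the path coefficients, so that if any used path is suboptimal we can reroute some of its flow to a cheaper path and strictly decrease the total cost.

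More precisely, I would first suppose towards a contradiction that there exists some $P \in \calP_i$ with $\alpha_P > 0$ and $c^i(P) > c^* := \min_{P' \in \calP_i} c^i(P')$. Fix any $P^* \in \calP_i$ with $c^i(P^*) = c^*$, and let $\delta := \alpha_P > 0$. Define the modified routing
\[
\tilde{x}^i := x^i - \delta \gamma^P + \delta \gamma^{P^*} = \sum_{P' \in \calP_i} \tilde{\alpha}_{P'} \gamma^{P'},
\]
where $\tilde{\alpha}_P = 0$, $\tilde{\alpha}_{P^*} = \alpha_{P^*} + \delta$, and $\tilde{\alpha}_{P'} = \alpha_{P'}$ otherwise.

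Next, I would verify that $\tilde{x}^i$ is indeed a valid routing of commodity~$i$. All coefficients $\tilde{\alpha}_{P'}$ are non-negative by construction, and they still sum to $d_i$ since we removed $\delta$ from one and added $\delta$ to another. Feasibility of the underlying flow is automatic from the fact that $\tilde{x}^i$ is a non-negative combination of unit $(s_i,t_i)$-path flows $\gamma^{P'}$, each of which has balance vector $b^i$ scaled appropriately; the overall balance is thus $b^i$, as required.

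Finally, I would compute the cost difference using linearity of $c^i$:
\[
c^i(\tilde{x}^i) - c^i(x^i) = -\delta\, c^i(P) + \delta\, c^i(P^*) = \delta\bigl(c^i(P^*) - c^i(P)\bigr) < 0,
\]
since $c^i(P^*) = c^* < c^i(P)$ and $\delta > 0$. This strictly decreases the cost, contradicting the assumption that $x^i$ minimizes $c^i$ over all routings of commodity~$i$. Hence every $P$ with $\alpha_P > 0$ must satisfy $c^i(P) = c^*$, as claimed. There is no real obstacle; the only thing to be careful about is checking that the rerouted flow remains a valid routing (non-negative coefficients summing to $d_i$), which is immediate from the choice $\delta = \alpha_P$.
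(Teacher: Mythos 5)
Your proof is correct. The paper actually omits the proof of this lemma entirely, presenting it as an immediate consequence of the uncapacitated model (the preceding sentence reads: ``The network model we consider is uncapacitated \ldots This means that a minimum cost routing must place non-zero flow only on minimum cost paths.''). Your rerouting/exchange argument is precisely the standard justification for that claim: the cost is linear in the path coefficients, the routing constraints (nonnegativity, coefficients summing to $d_i$) are preserved when shifting mass from a costlier used path to a cheapest path, and the choice $\delta = \alpha_P$ keeps all coefficients nonnegative, yielding a strict cost decrease and hence a contradiction. Nothing is missing.
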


\noindent It is well known that any flow may be decomposed path and cycle
flows. We shall make use of the following special case.

\begin{lemma}
  Let $x^i$ be a feasible flow for commodity~$i$. Then there are
  $\alpha_P \geq 0$ for $P \in \calP_i$ and $\beta_C \geq 0$ for
  $C \in \calC$ such that
  $x^i = \sum_{P \in \calP_i} \alpha_P \gamma^P + \sum_{C \in \calC}
  \beta_C \gamma^C$.
  \label{LEM:Path-Cycle-Decomposition}
\end{lemma}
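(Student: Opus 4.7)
This is a classical flow decomposition result, and the natural approach is induction on the size of the support $\supp(x^i) = \{e \in E : x^i_e > 0\}$. I will strengthen the inductive hypothesis slightly, stating it for any nonnegative flow $y$ satisfying $b(y)_{s_i} = D \geq 0$, $b(y)_{t_i} = -D$, and $b(y)_u = 0$ for $u \in V \setminus \{s_i, t_i\}$; the lemma is the special case $D = d_i$ and $y = x^i$. The base case $\supp(y) = \emptyset$ is immediate, as then $y = 0$ and the empty sum of path and cycle flows represents it (and necessarily $D = 0$ by balance).

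For the inductive step, assume $y \neq 0$. The first task is to locate either a directed $(s_i,t_i)$-path or a directed cycle lying entirely in $\supp(y)$. I would split into two cases based on whether $D > 0$ or $D = 0$. If $D > 0$, the net outflow at $s_i$ is strictly positive, so some arc leaving $s_i$ carries positive flow; I would then trace a forward walk in $\supp(y)$, using flow balance at each internal node visited (positive inflow along the traced arc forces positive outflow along some outgoing arc) to guarantee the walk can always be extended until it either reaches $t_i$ or revisits a previously seen vertex. In the former case this yields a path $P \in \calP_i$; in the latter the closed sub-walk between the two visits yields a cycle $C \in \calC$. If $D = 0$, then $y$ is a circulation and there exists some vertex with positive outflow; the same forward-walking argument, together with finiteness of $V$, forces a repeated vertex and hence a cycle $C \in \calC$ in the support.

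Once a path $P$ (respectively a cycle $C$) in $\supp(y)$ is obtained, I would set $\alpha := \min_{uv \in P} y_{uv} > 0$ (respectively $\beta := \min_{uv \in C} y_{uv} > 0$) and define $y' := y - \alpha \gamma^P$ (respectively $y' := y - \beta \gamma^C$). By the choice of the bottleneck, $y' \geq 0$ and at least one arc in $P$ (resp.\ $C$) that previously had positive flow now has zero flow, so $|\supp(y')| < |\supp(y)|$. Since $\gamma^P$ has balance $+1$ at $s_i$ and $-1$ at $t_i$ and $0$ elsewhere, while $\gamma^C$ has balance $0$ everywhere, $y'$ still satisfies the inductive hypothesis (with $D$ replaced by $D - \alpha$ in the path case and unchanged in the cycle case). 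Apply induction to $y'$ to obtain a decomposition into paths and cycles, and prepend the coefficient $\alpha$ of $\gamma^P$ (or $\beta$ of $\gamma^C$) to conclude.

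\textbf{Main obstacle.} There is no deep obstacle here; the entire argument is elementary once the correct induction measure is chosen. The only subtlety to be careful about is ensuring the forward-walking construction in the support graph really terminates at $t_i$ or revisits a vertex, which requires invoking flow conservation correctly at every non-terminal vertex encountered and using finiteness of $V$. Care must also be taken so that in the cycle-extraction case the walk need not start at $s_i$ (any vertex with positive outflow in $\supp(y)$ suffices), since by that stage the source may be balanced.
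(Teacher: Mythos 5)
The paper does not prove this lemma; it states it as a ``well known'' flow-decomposition fact and moves on. Your argument is the standard textbook proof of the path--cycle decomposition theorem (induction on the support size, extracting an $(s_i,t_i)$-path or cycle by forward-walking in the support graph via flow conservation, and peeling off the bottleneck value), and it is correct — including the care taken to generalize the demand to an arbitrary $D \geq 0$, and the observation that in the circulation case the walk may begin at any vertex with positive outflow rather than $s_i$.
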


\noindent If $b(x^i)=b^i$ the cost of $x^i$ with respect to commodity $i$ is
thus given as
$c(x^i) = \sum_{P \in \calP_i} \alpha_P c^i(P) + \sum_{C \in \calC}
\beta_C c^i(C)$. We have the following direct consequence.

\begin{corollary}
  \label{COR:Min-Cost-Flow-Is-Routing}
  Let $x^i$ be a feasible flow for commodity~$i$ minimizing the cost
  $c^i(x^i)$. If $c^i(C)>0$ for any cycle $C$ for which there exist
  $uv \in E$ such that $x^i_{uv}>0$, then $x^i$ is a routing of commodity~$i$.
\end{corollary}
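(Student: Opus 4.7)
The plan is to prove the corollary by contradiction, leveraging the path--cycle decomposition (\cref{LEM:Path-Cycle-Decomposition}). First I would write $x^i = \sum_{P \in \calP_i} \alpha_P \gamma^P + \sum_{C \in \calC} \beta_C \gamma^C$ with nonnegative coefficients. Assume toward contradiction that $x^i$ is not a routing of commodity $i$; then there must exist some cycle $C$ with $\beta_C > 0$ in the decomposition. The first key observation is that $\beta_C > 0$ forces $x^i_{uv} \geq \beta_C > 0$ on every arc $uv \in C$, so $C$ is a cycle supported inside the positive flow of $x^i$, and hence satisfies the hypothesis of the corollary. Therefore $c^i(C) > 0$.

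Next I would define the modified flow $x' := x^i - \beta_C \gamma^C$ and verify it is still feasible: by linearity of the balance operator, $b(x') = b(x^i) - \beta_C b(\gamma^C) = b^i - 0 = b^i$, and nonnegativity holds precisely because $x^i_{uv} \geq \beta_C$ on every arc of $C$ and $\gamma^C$ has support contained in $C$. The cost of the modified flow is $c^i(x') = c^i(x^i) - \beta_C c^i(C) < c^i(x^i)$, contradicting the minimality of $x^i$. Consequently every $\beta_C$ must vanish and $x^i = \sum_{P \in \calP_i} \alpha_P \gamma^P$. To conclude that this is indeed a routing, I would note that summing the balance constraint $b(x^i) = b^i$ at the source $s_i$ (and using that cycles would have contributed nothing to this sum anyway) gives $\sum_{P \in \calP_i} \alpha_P = d_i$.

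There is no real technical obstacle here; the only subtle point is parsing the hypothesis ``for any cycle $C$ for which there exist $uv \in E$ such that $x^i_{uv}>0$'', which I would interpret in the natural way that makes the statement nontrivial and useful in the flow decomposition context: namely, for any cycle $C$ whose arcs all lie in the support of $x^i$. This is exactly the class of cycles that can be cancelled against $x^i$ while preserving feasibility, and it is precisely what emerges from the path--cycle decomposition argument above.
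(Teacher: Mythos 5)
Your proposal is correct and takes essentially the same approach the paper has in mind: the paper gives no explicit proof, merely stating the corollary as a direct consequence of \cref{LEM:Path-Cycle-Decomposition} and the cost identity $c(x^i) = \sum_{P} \alpha_P c^i(P) + \sum_{C} \beta_C c^i(C)$, and your argument simply fills in the obvious contradiction: if some $\beta_C > 0$, then $C$'s arcs lie in the support of $x^i$, so $c^i(C)>0$, and cancelling that cycle yields a cheaper feasible flow. Your parenthetical reading of the hypothesis (cycles touching the support of $x^i$) matches the analogous phrasing in \cref{def:no-free-cycles} and is all the argument needs.
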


\paragraph{Multi-class network congestion games.}
From a multi-commodity flow network $G=(V,E)$ we naturally obtain a
multi-class congestion game. The set of resources is simply the set
$E$ of arcs in the network and the set $\Sigma_i$ of pure strategies
of class~$i$ is equal to the set $\calP_i$ of directed $(s_i,t_i)$
paths in $G$. This gives a very succinct description of the strategies
that allows the relevant optimization problems to be formulated with
variables corresponding to resources rather than strategies. Note that
a load allocation of class~$i$ in the congestion game corresponds
exactly to a routing of commodity~$i$ in $G$.\medskip 

\noindent For the case of Wardrop equilibrium we consider the following linear
program in variables $y^i_{uv}$, where $y^i = (y^i_{uv})_{uv \in E}$, $x$ is a given non-negative
multi-commodity flow and $\overline{x}$ the resulting total flow.

\begin{center}\underline{Linear Program $\mathcal{P}^G_\mathrm{WE}$}\end{center}
\begin{equation*}
\begin{aligned}
\mbox{minimize}\quad & \sum_{uv \in E} \ell^i_{uv} (\overline{x}_{uv}) y^i_{uv}\\
\mbox{subject to}\quad 
& b(y^i) = b^i\\
& y^i\geq 0
\end{aligned}
\end{equation*}

\noindent This linear program is simply expressing minimum cost feasible flows
for commodity~$i$ with costs given by the latency functions and the
total flow $\overline{x}$. In order to identify these minimum cost
flows with minimum cost routing , using
\cref{COR:Min-Cost-Flow-Is-Routing}, we need to impose a very
mild restriction on the latency functions.
\begin{definition}\label{def:no-free-cycles}
  We say that the the latency functions $l^i_{uv}$ are without
  \emph{free cycles} if for every $i \in [k]$, every $C \in \calC$ and
  every feasible flow $x$, for which $x_{uv}>0$ for some $uv \in C$, we have
  $\sum_{uv \in C} \ell^i(x_{uv}) > 0$.
\end{definition}

\noindent We have the following fixed point characterization of Wardrop
equilibrium in network congestion games, using
\cref{LEM:Min-Cost-Routing-Uses-Min-Weight-Paths} and
\cref{COR:Min-Cost-Flow-Is-Routing}.
\begin{proposition}\label{prop:network-congestion-wardrop}
  In a network congestion games with latency functions without free
  cycles, a flow $x=(x^1,\dots,x^k)$ is a Wardrop equilibrium if and
  only $x^i$ is an optimal solution to the linear program
  $\mathcal{P}^G_\mathrm{WE}$ for all $i \in [k]$.
\end{proposition}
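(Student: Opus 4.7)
The plan is to establish the equivalence by transferring \Cref{prop:wardrop-fixed-point-characterization} (which characterizes Wardrop equilibria via the load-allocation LP $\mathcal{P}_\mathrm{WE}$) to the arc-based LP $\mathcal{P}^G_\mathrm{WE}$, using a path/cycle decomposition as the bridge. The key observation is that, once the total flow $\overline{x}$ is frozen, both programs are minimum-cost $(s_i,t_i)$-routing problems for commodity $i$ with respect to the latencies $\ell^i_{uv}(\overline{x}_{uv})$, so both have optimal value $d_i \cdot \min_{P \in \calP_i} c^i(P)$, where $c^i(P) = \sum_{uv \in P} \ell^i_{uv}(\overline{x}_{uv})$.

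For the forward direction, assume $x$ is a Wardrop equilibrium, so that the induced load allocation $f^i$ is optimal for $\mathcal{P}_\mathrm{WE}$ by \Cref{prop:wardrop-fixed-point-characterization}. Then $f^i_P > 0$ only for paths $P$ attaining $\min_{P'} c^i(P')$, and $x^i = \sum_{P \in \calP_i} f^i_P \gamma^P$ is feasible for $\mathcal{P}^G_\mathrm{WE}$ with cost $\sum_P f^i_P c^i(P) = d_i \cdot \min_{P'} c^i(P')$. Since any feasible $y^i$ for $\mathcal{P}^G_\mathrm{WE}$ admits a decomposition into path and cycle flows by \Cref{LEM:Path-Cycle-Decomposition}, and its cost equals $\sum_P \alpha_P c^i(P) + \sum_C \beta_C c^i(C)$ with all $\alpha_P, \beta_C \geq 0$ and each $c^i(P) \geq \min_{P'} c^i(P')$, this cost is bounded below by $d_i \cdot \min_{P'} c^i(P')$. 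Hence $x^i$ is optimal for $\mathcal{P}^G_\mathrm{WE}$.

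For the reverse direction, assume $x^i$ is optimal for $\mathcal{P}^G_\mathrm{WE}$. This is where the no-free-cycles hypothesis (\Cref{def:no-free-cycles}) is essential: by \Cref{COR:Min-Cost-Flow-Is-Routing}, since any cycle $C$ carried by $x^i$ would satisfy $c^i(C) = \sum_{uv \in C} \ell^i_{uv}(\overline{x}_{uv}) > 0$, $x^i$ must itself be a routing, i.e., $x^i = \sum_{P \in \calP_i} \alpha_P \gamma^P$ with $\sum_P \alpha_P = d_i$ and $\alpha_P \geq 0$. Applying \Cref{LEM:Min-Cost-Routing-Uses-Min-Weight-Paths} yields $\alpha_P > 0 \implies c^i(P) = \min_{P'} c^i(P')$. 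Setting $f^i_P := \alpha_P$ produces a load allocation satisfying the equilibrium condition $L^i_P(f) = \min_{P'} L^i_{P'}(f)$ whenever $f^i_P > 0$, which by \Cref{prop:wardrop-fixed-point-characterization} means $x$ is a Wardrop equilibrium.

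The main subtlety I expect is the reverse direction: without the no-free-cycles assumption, an optimal flow for $\mathcal{P}^G_\mathrm{WE}$ might include zero-cost cycles and fail to decompose into a routing, breaking the correspondence with load allocations. The forward direction and the value-matching computation are essentially bookkeeping, and the whole proof reduces to correctly chaining \Cref{LEM:Min-Cost-Routing-Uses-Min-Weight-Paths}, \Cref{COR:Min-Cost-Flow-Is-Routing}, and \Cref{prop:wardrop-fixed-point-characterization}.
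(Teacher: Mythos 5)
Your proof is correct and follows exactly the route the paper indicates: the paper gives no explicit proof of this proposition, stating only that it follows "using \cref{LEM:Min-Cost-Routing-Uses-Min-Weight-Paths} and \cref{COR:Min-Cost-Flow-Is-Routing}," and you chain precisely those ingredients together with \cref{LEM:Path-Cycle-Decomposition} and \cref{prop:wardrop-fixed-point-characterization} in the expected way. The only cosmetic nit is the phrase "the induced load allocation $f^i$" in the forward direction — the load allocation induces the flow, not the other way around, so it is cleaner to say "a load allocation $f^i$ inducing $x^i$"; you should also state explicitly that $\sum_P \alpha_P = d_i$ for any feasible $y^i$ (which follows from $b(y^i)=b^i$ because cycle flows have zero balance) and that $c^i(C)\geq 0$ (non-negativity of latencies), both of which you use implicitly in the lower-bound computation.
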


\noindent For the case of atomic splittable congestion games and with the
assumptions of \cref{LEM:Latency-convexity-assumption} we
consider the following convex program in variables in variables $y^i_{uv}$, where $y^i = (y^i_{uv})_{uv \in E}$.

\begin{center}\underline{Convex Program $\mathcal{C}^G_\mathrm{NE}$}
\begin{equation*}
\begin{aligned}
\mbox{minimize}\quad & \sum_{uv \in E} \ell^i_{uv} (y^i_{uv} + \sum_{j\neq i} x^j_{uv}) y^i_{uv}\\
\mbox{subject to}\quad 
& b(y^i) = b^i\\
& y^i \geq 0
\end{aligned}
\end{equation*}
\end{center}

\noindent Analogously to the case of the linear program
$\mathcal{P}^G_\mathrm{WE}$ we would like to ensure that optimal
solutions to $\mathcal{C}^G_\mathrm{NE}$ are routings of
commodity~$i$. Here however, the cost functions of the flow depend on
the flow variables, and we thus require a stronger assumption on the
latency functions.

\begin{proposition}\label{prop:network-congestion-nash}
  In a network congestion game with non-decreasing latency functions
  without free cycles and satisfying the assumptions of
  \cref{LEM:Latency-convexity-assumption}, a flow
  $x=(x^1,\dots,x^k)$ is a Nash equilibrium if and only if $x^i$ is an
  optimal solution to the convex program $\mathcal{C}^G_\mathrm{NE}$
  for all $i \in [k]$.
\end{proposition}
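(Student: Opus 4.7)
The plan is to reduce this statement to the load-allocation formulation already developed in \Cref{sec:atomic-splittable-congestion-games}, which characterizes Nash equilibria via optimal solutions to the convex program $\mathcal{C}_\mathrm{NE}$. The key observations are that (i) load allocations for class $i$ correspond exactly to routings of commodity $i$ in $G$, and (ii) under this correspondence, the objective of $\mathcal{C}^G_\mathrm{NE}$ evaluated at a routing equals the cost function $C^i(g^i, f^{-i})$ used in $\mathcal{C}_\mathrm{NE}$. Indeed, if $g^i$ is a routing inducing arc-loads $y^i_{uv}$, then rearranging the double sum $\sum_{S \in \Sigma_i} g^i_S \sum_{e \in S} \ell^i_e(y^i_e + \sum_{j \neq i} x^j_e) = \sum_{uv \in E} \ell^i_{uv}(y^i_{uv} + \sum_{j\neq i} x^j_{uv}) y^i_{uv}$ confirms the identification.

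The main obstacle is that $\mathcal{C}^G_\mathrm{NE}$ ranges over \emph{all} feasible flows for commodity $i$, not just routings, so an optimum a priori could contain cycle flow. I will handle this by arguing that under the stated assumptions, any optimal $y^i$ must actually be a routing. Using \Cref{LEM:Path-Cycle-Decomposition}, decompose $y^i = r^i + z^i$, where $r^i$ is a routing of commodity $i$ and $z^i = \sum_{C \in \calC} \beta_C \gamma^C$ is a nonnegative cycle flow. Since $y^i \geq r^i$ coordinatewise and the latencies $\ell^i_{uv}$ are non-decreasing, the per-arc contribution $\ell^i_{uv}(y^i_{uv} + \sum_{j\neq i} x^j_{uv}) y^i_{uv} \geq \ell^i_{uv}(r^i_{uv} + \sum_{j\neq i} x^j_{uv}) r^i_{uv}$, so the objective at $r^i$ is no worse than at $y^i$. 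To get strict improvement when $z^i \neq 0$, pick a cycle $C$ with $\beta_C > 0$ and consider the perturbation $y^i(\alpha) := y^i - \alpha \gamma^C$ for $\alpha \in [0, \beta_C]$; feasibility is preserved since $b(\gamma^C) = 0$. Differentiating the objective with respect to $\alpha$ at $\alpha = 0$ yields, to first order, $-\sum_{uv \in C} \ell^i_{uv}(y^i_{uv} + \sum_{j\neq i} x^j_{uv})$ (the terms involving derivatives of $\ell$ multiplied by $y^i_{uv}$ cancel against themselves as the total flow on each arc of $C$ decreases together with $y^i_{uv}$; I will write this out carefully). The no-free-cycles assumption (\Cref{def:no-free-cycles}), applied to the total flow on the relevant arcs of $C$, which is strictly positive since $y^i_{uv} \geq z^i_{uv} > 0$ for some $uv \in C$, ensures this derivative is strictly negative. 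This contradicts optimality, so $z^i = 0$ and $y^i = r^i$ is a routing.

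Given this reduction, both directions follow easily. For the ``only if'' direction, suppose $x$ is a Nash equilibrium. Then by the load-allocation characterization, each $x^i$ is (viewed as a routing) optimal in $\mathcal{C}_\mathrm{NE}$, and since the objectives coincide on routings and any feasible flow in $\mathcal{C}^G_\mathrm{NE}$ has cost at least that of its routing part, $x^i$ is optimal in $\mathcal{C}^G_\mathrm{NE}$ as well. For the ``if'' direction, suppose $x^i$ is optimal in $\mathcal{C}^G_\mathrm{NE}$ for every $i$. By the argument above, $x^i$ is a routing, and hence corresponds to a load allocation $f^i$. Since the objective of $\mathcal{C}^G_\mathrm{NE}$ at $x^i$ equals $C^i(f^i, f^{-i})$, and $f^i$ minimizes this over all feasible routings (hence over all load allocations of class $i$), $f = (f^1,\dots,f^k)$ is optimal in $\mathcal{C}_\mathrm{NE}$ for every $i$, and therefore a Nash equilibrium. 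The convexity assumption of \Cref{LEM:Latency-convexity-assumption} guarantees throughout that $\mathcal{C}^G_\mathrm{NE}$ is a genuine convex program, so its first-order and optimality conditions are well-behaved.
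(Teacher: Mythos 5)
The overall structure of your proof matches the paper's: both reduce to showing that any optimal solution of $\mathcal{C}^G_\mathrm{NE}$ must be a routing, and then invoke the load-allocation characterization via $\mathcal{C}_\mathrm{NE}$. Your first paragraph (objectives coincide on routings) and third paragraph (wrap-up) are fine. The gap is in the strictness step.

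You already have the discrete inequality $\ell^i_{uv}(y^i_{uv}+\sum_{j\neq i}x^j_{uv})\, y^i_{uv} \geq \ell^i_{uv}(r^i_{uv}+\sum_{j\neq i}x^j_{uv})\, r^i_{uv}$ for every arc, which shows the objective at the routing part is no worse. But you then switch to a derivative-based argument to obtain strictness, and this does not work under the stated hypotheses. The latency functions are only assumed continuous, non-decreasing, and satisfying the convexity assumption of \cref{LEM:Latency-convexity-assumption}; they need not be differentiable, and in the intended applications (piecewise-linear latencies, Theorem 4.9) they typically are not. Moreover, even granting differentiability, the ``cancellation'' you invoke is spurious: for $uv \in C$ one has
\begin{equation*}
\frac{d}{d\alpha}\Bigl[\ell^i_{uv}\bigl(y^i_{uv}-\alpha+\textstyle\sum_{j\neq i}x^j_{uv}\bigr)\,(y^i_{uv}-\alpha)\Bigr]\Big|_{\alpha=0}
= -\,(\ell^i_{uv})'\bigl(\cdot\bigr)\,y^i_{uv} \;-\; \ell^i_{uv}\bigl(\cdot\bigr),
\end{equation*}
and the first term does not cancel against anything; it simply happens to have the same (nonpositive) sign as the second. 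So the stated first-order expansion is wrong as written, even though the sign conclusion would survive.

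The fix is to continue the discrete comparison you already set up, which is what the paper does. Set $\hat y^i := y^i - \beta_C\gamma^C$ for a cycle $C$ with $\beta_C>0$. Since $\hat y^i\leq y^i$ coordinatewise and the latencies are non-decreasing, the per-arc contributions satisfy $\ell^i_{uv}(\hat y^i_{uv}+\sum_{j\neq i}x^j_{uv})\,\hat y^i_{uv}\leq \ell^i_{uv}(y^i_{uv}+\sum_{j\neq i}x^j_{uv})\,\hat y^i_{uv}\leq \ell^i_{uv}(y^i_{uv}+\sum_{j\neq i}x^j_{uv})\,y^i_{uv}$ everywhere. By the no-free-cycles condition applied to the total flow (which is strictly positive on all arcs of $C$ since $y^i_{uv}\geq\beta_C>0$ there), some arc $uv\in C$ has $\ell^i_{uv}(y^i_{uv}+\sum_{j\neq i}x^j_{uv})>0$, and on that arc $\hat y^i_{uv}<y^i_{uv}$ makes the second inequality strict. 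This gives strict improvement with no regularity on $\ell^i$ beyond continuity and monotonicity.
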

\begin{proof}
  We need to prove that each $x^i$ that is an optimal solution to
  $\mathcal{C}^G_\mathrm{NE}$ is a routing of commodity~$i$. By
  \cref{LEM:Path-Cycle-Decomposition} we have $\alpha_P \geq 0$
  for $P \in \calP_i$ and $\beta_C \geq 0$ for $C \in \calC$ such that
  $x^i = \sum_{P \in \calP_i} \alpha_P \gamma^P + \sum_{C \in \calC}
  \beta_C \gamma^C$. Suppose for contradiction that there exists
  $C \in \calC$ such that $\gamma_C>0$ and define the new flow
  $\hat{y}^i = y^i-\beta_C \gamma^C$. Since $\gamma^C$ is a cycle flow
  and $\beta_C \gamma^C \leq y^i$ it follows that $\hat{y}^i$ is
  feasible. Since the latency functions are non-decreasing it follows
  that
  $\ell^i_{uv}(\hat{y}^i_{uv}+\sum_{j\neq i}x^j_{uv}) \leq
  \ell^i_{uv}(\hat{y}^i_{uv}+\sum_{j\neq i}x^j_{uv})$ for all
  $uv \in E$, and thus also
  $\ell^i_{uv}(\hat{y}^i_{uv}+\sum_{j\neq i}x^j_{uv})\hat{y}^i_{uv}
  \leq \ell^i_{uv}(\hat{y}^i_{uv}+\sum_{j\neq i}x^j_{uv})y^i_{uv}$ for
  all $uv \in E$. Now, since the latency functions are without free
  cycles there exist $uv \in C$ such that
  $\ell^i(y^i_{uv}+\sum_{j\neq i}x^j_{uv})>0$. It follows that
  $\ell^i(\hat{y}^i_{uv}+\sum_{j\neq i}x^j_{uv})\hat{y}^i_{uv} \leq
  \ell^i(y^i_{uv}+\sum_{j\neq i}x^j_{uv})\hat{y}^i_{uv} <
  \ell^i(y^i_{uv}+\sum_{j\neq i}x^j_{uv})y^i_{uv}$. Taken together
  this contradicts the optimality of $y^i$.
\end{proof}

\subsubsection{PPAD-membership for non-atomic network congestion games}\label{sec:nonatomic-network-congestion-ppad}

The PPAD-membership result for finding Wardrop equilibria in multi-class network non-atomic congestion games follows almost immediately from the fixed point characterization of equilibria in \Cref{prop:network-congestion-wardrop}. The following result generalizes that of \citet{meunier2013lemke}.

\begin{theorem}\label{thm:PPAD-wardrop-network-congestion}
Finding a Wardrop equilibrium of a multi-class network non-atomic congestion game when the latency functions are without free cycles and they are given by \pseudogs is in PPAD.
\end{theorem}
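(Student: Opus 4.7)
The plan is to derive this as a direct corollary of Theorem~\ref{thm:LBRO-equilibria} via the fixed point characterization in Proposition~\ref{prop:network-congestion-wardrop}. I view each commodity $i \in [k]$ as a ``player'' in a \lbro game whose strategy space is the (truncated) flow polytope $D_i = \{y^i \in \RRnn^{|E|} : b(y^i) = b^i,\ y^i \leq R \cdot \allones\}$, where $R$ is a rational a~priori bound I will fix below. The best-response oracle $\mathcal{C}_i$ takes a profile $x = (x^1,\dots,x^k)$ and returns an optimal solution of the linear program $\mathcal{P}^G_{\mathrm{WE}}$ with coefficients $c^i_{uv} := \ell^i_{uv}(\overline{x}_{uv})$ computed from the total flow $\overline{x} = \sum_j x^j$. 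By Proposition~\ref{prop:network-congestion-wardrop}, any fixed point of the induced profile-update map is exactly a Wardrop equilibrium, so it will suffice to realize each $\mathcal{C}_i$ by a \linoptgate and invoke Theorem~\ref{thm:LBRO-equilibria} (using \Cref{rem:lbro-own-input}, since $\mathcal{C}_i$ reads $\overline{x}$ and hence also $x^i$).

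Next I would verify the preconditions of Definition~\ref{def:linoptgate} for $\mathcal{P}^G_{\mathrm{WE}}$. For the bound I would choose $R := \sum_j d_j$. This does not exclude any acyclic feasible flow of commodity~$i$ (each such flow satisfies $y^i_{uv} \leq d_i \leq R$ edge-wise), and the no-free-cycles assumption forces every optimal solution of the unbounded LP to be acyclic: if an optimum $y^{i,\star}$ contained a directed cycle $C$ with $y^{i,\star}_{uv} > 0$ on $C$, then $\overline{x}_{uv} > 0$ on $C$, so by Definition~\ref{def:no-free-cycles} we would have $\sum_{uv \in C} \ell^i_{uv}(\overline{x}_{uv}) > 0$, and routing a positive amount of flow off of $C$ would strictly decrease the objective, a contradiction. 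Hence adding the box constraints $0 \leq y^i_{uv} \leq R$ leaves the optimal set unchanged. The feasible domain is nonempty (route $d_i$ units along any $s_i$-$t_i$ path). The constraint matrix consists of the fixed node-arc incidence matrix together with the fixed box constraints; neither depends on the \circparams, and the right-hand side $b^i$ is also a fixed parameter. The objective is linear in $y^i$, so its subgradient is the coefficient vector $(\ell^i_{uv}(\overline{x}_{uv}))_{uv \in E}$; since $\overline{x}$ is a \linear function of the \circparam $x$ and each $\ell^i_{uv}$ is given by a \pseudog, this subgradient is itself given by a \pseudog, as required.

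With all preconditions verified, Theorem~\ref{thm:linoptgate} lets me replace each $\mathcal{C}_i$ by an equivalent standard \linear arithmetic circuit with auxiliary inputs, producing a \linear arithmetic circuit $F: D \to D$ on the bounded polytope $D = \prod_i D_i$ whose fixed points are, by Proposition~\ref{prop:network-congestion-wardrop}, exactly the Wardrop equilibria of the game. PPAD-membership then follows from \Cref{thm:PPAD=linear-FIXP} (or equivalently the packaged statement in Theorem~\ref{thm:LBRO-equilibria}). The only genuinely non-mechanical step is the bounding argument: one must check that truncating the flow polytope by $R$ preserves the set of optimal solutions, and it is precisely here that the no-free-cycles hypothesis of the theorem is used; the remaining verifications are structural consequences of the form of $\mathcal{P}^G_{\mathrm{WE}}$ and of the assumption that the $\ell^i_{uv}$ are given by \pseudogs.
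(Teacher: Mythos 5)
Your overall architecture --- express the best-response oracle of each commodity as a \linoptgate solving $\mathcal{P}^G_{\mathrm{WE}}$, appeal to Proposition~\ref{prop:network-congestion-wardrop} for the fixed-point characterization, and conclude via Theorem~\ref{thm:LBRO-equilibria} --- is exactly the route the paper takes, and you also address a point the paper's terse proof glosses over: the feasible region $\{y^i \geq 0 : b(y^i) = b^i\}$ is unbounded whenever $G$ contains a directed cycle, so a box bound must be introduced and shown harmless. That is a sound instinct (the paper's assertion that the feasible domain is ``obviously \dots bounded'' is not literally true).

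However, the justification you give for the harmlessness of truncation contains an error. You assert that every optimal solution $y^{i,\star}$ of the unbounded LP must be acyclic, reasoning that if $y^{i,\star}_{uv} > 0$ on a cycle $C$ then $\overline{x}_{uv} > 0$ on $C$, and invoking Definition~\ref{def:no-free-cycles}. But $\overline{x}$ is the total flow induced by the \emph{input} profile $x=(x^1,\dots,x^k)$, whereas $y^{i,\star}$ is the candidate best response the gate is about to output; for a general oracle query these are different flows, so $y^{i,\star}_{uv}>0$ carries no information about $\overline{x}_{uv}$. The implication only becomes valid at a fixed point, where $y^{i,\star}=x^i\le\overline{x}$ componentwise, but you state it for an arbitrary query, and as stated it is false.

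The conclusion is still correct for a simpler reason that does not invoke the no-free-cycles hypothesis at all: since $\ell^i_{uv}:\RRnn\to\RRnn$, the cost vector of the LP is nonnegative, so every directed cycle has nonnegative cost, and by Lemma~\ref{LEM:Path-Cycle-Decomposition} removing the cycle portion of any optimal flow cannot increase the objective. Hence the unbounded LP always admits an acyclic optimal solution; an acyclic feasible flow for commodity $i$ has $y^i_{uv}\le d_i\le R$ on every arc, so truncated and untruncated LPs share the same optimal value, and any truncated-optimal point is therefore also untruncated-optimal. (It need not exhaust the untruncated optimal set, but that is all one needs to apply Proposition~\ref{prop:network-congestion-wardrop}.) The no-free-cycles hypothesis is genuinely needed, but in Proposition~\ref{prop:network-congestion-wardrop} itself, where the candidate optimum $x^i$ \emph{does} contribute to $\overline{x}$ and the implication you want is valid, forcing every optimal solution at a putative equilibrium to be a routing. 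With that repair your proof is complete.
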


\begin{proof}
Linear program $\mathcal{P}_\text{WE}^G$ can be interpreted as computing the best response of a player corresponding to class $i$, for the equilibrium concept of Wardrop equilibrium. $\mathcal{P}_\text{WE}^G$ obviously has a non-empty and bounded feasible domain. Since the latency functions $\ell_{uv}^i$ can be computed by \pseudogs, the subgradient of the objective function can be computed by a \pseudog, and hence the whole linear program can be computed by a \linoptgate. By using this \linoptgate as the oracle, the game becomes a \lbro game (\Cref{def:LBRO-game}), and the theorem follows as a corollary of \Cref{thm:LBRO-equilibria}.
\end{proof}

\noindent Again, a direct corollary of \Cref{thm:PPAD-wardrop-network-congestion} is to the case when the latency functions are piecewise-linear functions given explicitly as part of the input. Indeed, in that case the objective functions are piecewise-quadratic and their subgradients are piecewise-linear functions, hence computable by a \pseudog.

\begin{remark}[No free cycles]\label{rem:no-free-cycles-ppad}
As we mentioned earlier, the no free cycles condition (see \Cref{def:no-free-cycles} is rather mild one. It is in fact a milder condition that the usual condition in this literature, which requires that if $l_e(x)=0$, then $x=0$, i.e., the latencies are $0$ only on $0$ points. The works on linear latency functions implicitly make this assumption, and we obtain strict generalizations of those via considering more general latency functions. 
\end{remark}

\noindent For completeness, we also provide a simple example in which, if one goes beyond the latency functions captured by \Cref{thm:PPAD-wardrop-network-congestion}, e.g., to quadratic latency functions, then the game may possess only irrational Wardrop equilibria. 

\begin{example}[Only irrational Wardrop equilibria]\label{rem:wardrop-only-irrational}
Consider a very simple nonatomic network congestion game where there is only one class given by the interval $[0,1]$. The node set $V$ of the
network $G$ consists of two nodes $s$ and $t$. For the arc set $E$, there are two arcs $e_1$ and $e_2$ from $s$ to $t$ for the class, with latency functions $l_{e_1}(x) = x^2$ and $l_{e_2}(x) = 2x^2$. Now if $(x,1-x)$ is an equilibrium, then it must hold that $x^2 = 2\cdot (1-x)^2$ which as $x\in [0,1]$ implies that $x = 2-\sqrt{2}$.
\end{example}

\noindent \Cref{rem:wardrop-only-irrational} establishes that the class captured by our \Cref{thm:PPAD-wardrop-network-congestion} is tight, as if one moves to larger classes of latency functions, Wardrop equilibria may be irrational. We remark that for these cases, a FIXP-membership result follows from the results of \citep{SICOMP:Filos-RatsikasH2023} (althought not explicitly stated there), since multi-class non-atomic network congestion games are concave games. 

\subsubsection{PPAD-membership for atomic splittable network congestion games}\label{sec:atomic-network-congestion-ppad}

In the same fashion as in the previous section, the PPAD-membership result for finding Nash equilibria in multi-class network atomic congestion games follows almost immediately from the fixed point characterization of equilibria in \Cref{prop:network-congestion-nash}. The following result generalizes that of \citet{klimm2020complexity}. We also again note the remarkable simplification of the proof that the \linoptgate allows us to achieve, compared to that of \citet{klimm2020complexity}.

\begin{theorem}\label{thm:PPAD-Nash-network-congestion}
Finding a Nash equilibrium of a multi-class network atomic splittable congestion game when the latency functions are without free cycles, they are given by \pseudogs and their subgradients are given by \pseudogs computing piecewise-constant functions is in PPAD.
\end{theorem}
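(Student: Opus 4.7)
The plan is to follow essentially the same template as \Cref{thm:PPAD-wardrop-network-congestion} and (closer in spirit) \Cref{thm:congestion-games-malicious}, exploiting the fixed point characterization from \Cref{prop:network-congestion-nash}: a flow profile $x = (x^1,\dots,x^k)$ is a Nash equilibrium if and only if each $x^i$ is an optimal solution of the convex program $\mathcal{C}^G_\mathrm{NE}$, with the other players' flows $x^{-i}$ treated as parameters. So it suffices to show that $\mathcal{C}^G_\mathrm{NE}$ can be computed by a \linoptgate in the sense of \Cref{def:linoptgate}, and then use that \linoptgate as the best-response oracle of player $i$. The game is then an \lbro game (allowing the oracle to depend on $y^i$ as well, per \Cref{rem:lbro-own-input}), and PPAD-membership follows from \Cref{thm:LBRO-equilibria}.

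First I would verify the easy conditions for the \linoptgate. The feasible domain $\{y^i \geq 0 : b(y^i) = b^i\}$ is nonempty (take any routing of commodity $i$) and can be bounded in the box $[0, R]^{|E|}$ for an $R$ that is polynomial in the input size, since we may add redundant box constraints whose right-hand side can be computed by a \linear arithmetic circuit; moreover the \circparams $x^{-i}$ appear only inside the objective and not on the left-hand side of any constraint. Convexity of $y^i \mapsto \sum_{uv} \ell^i_{uv}(y^i_{uv} + \sum_{j \ne i} x^j_{uv}) \, y^i_{uv}$ in $y^i$ (with $x^{-i}$ fixed and non-negative) is exactly the statement of \Cref{LEM:Latency-convexity-assumption}, which is implied here because the latency functions are non-decreasing and additionally satisfy the hypotheses of that lemma (the piecewise-constant subgradient being non-negative corresponds to the monotonicity of the latency, and the convexity of $x \mapsto \ell^i_{uv}(x+c)x$ for all $c \geq 0$ can be read off from the fact that the subgradient is piecewise-constant and non-decreasing along each coordinate direction).

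The main obstacle, and the real work of the proof, is to build a \pseudog that computes a subgradient of the objective with respect to $y^i$. Formally, a subgradient coordinate at arc $uv$ is
\[
\ell^i_{uv}\!\left(y^i_{uv} + \sum_{j \ne i} x^j_{uv}\right) \;+\; s^i_{uv}\!\left(y^i_{uv} + \sum_{j \ne i} x^j_{uv}\right) \cdot y^i_{uv},
\]
where $s^i_{uv}$ denotes the piecewise-constant subgradient of $\ell^i_{uv}$ supplied by the input. The first summand is immediately computed by the given \pseudog for $\ell^i_{uv}$ on the \linear argument $y^i_{uv} + \sum_{j\neq i} x^j_{uv}$. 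The second summand is the delicate one: it is the product of the variable $y^i_{uv}$ with a piecewise-constant function evaluated at a variable point, and a \linear arithmetic circuit cannot perform such variable-by-variable multiplication directly. This is precisely the setting handled by the implicit-correspondence machinery of \Cref{sec:implicit}: \Cref{prop:implicit-correspondence} (in combination with \Cref{lem:mult-by-Heaviside} for the general multiplication-by-Heaviside step used inside that construction) shows that for a piecewise-constant correspondence given succinctly, we can build a \pseudog for $z \cdot g(x/y)$; the same pattern, applied coordinate-wise to the \pseudog representing $s^i_{uv}$, lets us compute $y^i_{uv} \cdot s^i_{uv}(\cdot)$ at a fixed point. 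Summing the two summands yields a \pseudog for the full subgradient.

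Once this \pseudog is in hand, Conditions~\ref{enum:optgate-1} and \ref{enum:optgate-2} of \Cref{def:linoptgate} are both met and $\mathcal{C}^G_\mathrm{NE}$ is solved by a single \linoptgate parameterized in $x^{-i}$. Instantiating this oracle for every class $i \in [k]$ makes the game an \lbro game whose equilibria coincide with Nash equilibria of the original game by \Cref{prop:network-congestion-nash}, and \Cref{thm:LBRO-equilibria} concludes the PPAD-membership. The no-free-cycle hypothesis enters exactly as in the proof of \Cref{prop:network-congestion-nash}, ensuring that an optimal solution of $\mathcal{C}^G_\mathrm{NE}$ is a bona fide routing and not merely a feasible flow that wastes mass on a zero-cost cycle.
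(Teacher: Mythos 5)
Your proposal is correct and follows essentially the same route as the paper: interpret $\mathcal{C}^G_{\mathrm{NE}}$ as the best-response oracle, build a \pseudog for its subgradient by combining the given latency \pseudogs with the piecewise-constant-times-variable construction from \Cref{sec:implicit}, and conclude via \Cref{thm:LBRO-equilibria}. The only minor blemish is the parenthetical claiming the convexity hypotheses of \Cref{LEM:Latency-convexity-assumption} ``can be read off'' from piecewise-constancy of the subgradient — in fact those hypotheses (non-decreasing latencies, convexity of $x \mapsto \ell(x+c)x$) are standing assumptions inherited from \Cref{prop:network-congestion-nash} rather than consequences of the representation, and your proof correctly invokes that proposition anyway; you also do not actually need \Cref{rem:lbro-own-input} here, since the oracle input in the atomic case is $\sum_{j\neq i} x^j$ rather than the total load.
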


\begin{proof}
The proof is very similar to that of \Cref{thm:PPAD-wardrop-network-congestion} above. Convex program $\mathcal{C}_\text{NE}^G$ can be interpreted as computing the best response of a player corresponding to class $i$. $\mathcal{C}_\text{NE}^G$ obviously has a non-empty and bounded feasible domain. To compute the subgradient of the objective function, we can use the \pseudogs that compute the latency functions, as well as those that compute their subgradients. Those subgradients are piecewise-constant functions, which we then can multiply by the variables $y_{uv}$ by the machinery developed in \Cref{sec:implicit}. In the end, we obtain a function which can be computed by a \pseudog, and the whole linear program can be computed by a \linoptgate. By using this \linoptgate as the oracle, the game becomes a \lbro game (\Cref{def:LBRO-game} and the theorem follows as a corollary of \Cref{thm:LBRO-equilibria}.
\end{proof}

\noindent Once gain, we obtain the PPAD-membership of multi-class network atomic splittable congestion game when the latency functions are piecewise-linear functions given explicitly as part of the input as a direct corollary of \Cref{thm:PPAD-Nash-network-congestion}. The result is however more general, and also captures cases where the latency functions and their subgradients are accessed implicitly via their corresponding \pseudogs.
Additionally, similarly to the statement of \Cref{thm:PPAD-wardrop-network-congestion}, we remark that the no free cycles condition is implicitly made in the case of linear latencies, and hence we obtain a strict generalization of the previous results.  \medskip

\noindent Again, for completeness, we provide a simple example of a game that only has irrational equilibria, if the latency functions go beyond those captured by \Cref{thm:PPAD-Nash-network-congestion}.

\begin{example}[Only irrational Nash equilibria]\label{rem:nash-only-irrational}
The example is the same as the one in \Cref{rem:wardrop-only-irrational}. We consider a network with a single player and two arcs $e_1$ and $e_2$ from $s$ to $t$, with the same latency functions as before. Here, since we have an atomic splittable game, the single player has to minimize the latency $x\cdot l_1(x)+(1-x)\cdot l_2(1-x) = x^3+2\cdot (1-x)^3$ which for $x\in [0,1]$ is minimized in $x=2-\sqrt{2}$. 
\end{example}

\noindent Again,  \Cref{rem:nash-only-irrational} establishes that the class captured by our \Cref{thm:PPAD-Nash-network-congestion} is tight, as if one moves to larger classes of latency functions, Nash equilibria may be irrational. Again, for similar reasons as before, a FIXP-membership for general concave latency functions follows from the results of \citep{SICOMP:Filos-RatsikasH2023}.

\section{Competitive Equilibria in Arrow-Debreu Markets}\label{sec:markets}

In this section, we show how our \linoptgate can be used to show the PPAD-membership of finding competitive equilibria in Arrow-Debreu markets. We will consider such markets in which the utility and production functions have a certain form which allows equilibrium points to be rational numbers. For a gentle introduction, we first apply the technique to the simpler cases of exchange markets (i.e., no production) with linear utilities (\Cref{sec:exchange-markets-linear}) and of markets with linear utilities and productions (\Cref{sec:prod-markets-linear}), before we move to our most advanced applications.

The most general known class of utilities and productions for which rational competitive equilibria are possible are those of \emph{Leontief-free utilities and productions}, introduced by \citet{garg2018substitution}, who also proved the PPAD-membership of finding equilibria in those markets. We provide an alternative proof via the employment of the \linoptgate, which is \emph{conceptually} and \emph{technically} much simpler in \Cref{sec:general-markets}. 

The Leontief-free class generalizes the well-known class of \emph{separable piecewise linear concave (SPLC)} utilities, studied by several works, e.g., see \citep{vazirani2011market,garg2015complementary,SODA:GargV14}. SPLC utilities is in turn a generalization of linear utilities in which every agent has a piecewise linear concave utility for the amount of a good $j$ that she receives, and her total utility for her bundle is additive over goods.

\paragraph{SSPLC, a new class of utilities.} In \Cref{sec:SSPLC-markets}, we define a new class of utility functions, coined \emph{succinct separable piecewise linear concave (SSPLC) utilities}, which generalize the SPLC class in a different way than the Leontief-free class. Intuitively, these are functions which can have exponentially many pieces, but those are accessed \emph{implicitly}, via access to boolean circuits computing their slopes. In this sense, the class differs from SPLC utilities not in the definition of the utility function, but in the way that it is inputted in the computational problem.
As such, the existence of rational solutions for this class already implicitly follows from previous work \citep{SODA:GargV14} for SSPLC utilities. As we explain below however, the ``LCP-approach'' is inherently limited in its ability to establish computational results for this class, and in particular PPAD-membership. In contrast, by taking advantage of the machinery that we developed in \Cref{sec:implicit}, we show that our \linoptgate can be used in very much the same fashion as in all the other proofs of the section, to prove that computing competitive equilibria for SSPLC utilities is in PPAD. Our proof also incorporates (explicit) SPLC production functions; we provide a discussion on the challenges of extending our technique to the case of SSPLC production functions as well in \Cref{rem:ssplc-production-challenge}.

\subsubsection*{Features of our Proof vs Previous Approaches}

Before we proceed, we first discuss the previous results that appeared in a succession of important works in the area, and highlight the advantages of our approach over those. 

\paragraph{Previous work and previous proofs.} The origins of the literature that deals with the PPAD-membership of finding \emph{exact} and \emph{rational} competitive equilibria for markets for which this is possible can be traced back to the work of \citet{eaves1976finite}. \citeauthor{eaves1976finite} studied the exchange market model with linear utilities (the same one we present first in \Cref{sec:exchange-markets-linear}) and proved that its competitive equilibria can be found by Lemke's algorithm \citep{lemke1965bimatrix}. Referencing our discussion in \Cref{sec:Other-Approaches}, \citeauthor{eaves1976finite}' approach, similarly to most subsequent approaches, formulates the problem as an LCP. Clearly, the class PPAD had not been defined then, but as we explained in \Cref{sec:Other-Approaches}, the PPAD-membership of the problem is implied by his proof. 

\citet{vazirani2011market} considered the case of SPLC utilities. Their proof does not employ the ``LCP approach'', but rather the ``approximation and rounding approach'', see \Cref{sec:Other-Approaches}. An issue with this method is that very small changes in the prices may result in drastic changes in the optimal bundles of the consumers, and hence \citet{vazirani2011market} devise a set of technical lemmas that allow them to ``force'' certain allocations over others. Subsequently, \citet{garg2015complementary} proved the PPAD-membership of equilibria for the same class of utility functions, via instead using the ``LCP approach''. \citeauthor{garg2015complementary} point out that the LCP that they derive does not fall into any of the classes known to avoid secondary rays, and hence they necessarily devise an argument against ray termination (see \Cref{sec:Other-Approaches}). Besides that, the formulation of the LCP itself \emph{``turns out to be quite complex''} \citep{garg2015complementary}. An integral part of the proof, which is also useful in our regime, is the definition of the ``desire'' for a good $j$, to be the total amount of the good represented by its non-zero utility segments in the SPLC function.

For markets with production, the aforementioned work of \citet{vazirani2011market} also provides a PPAD-membership result for markets with SPLC utilities and SPLC productions. The same PPAD-membership result was later recovered by \citep{SODA:GargV14} via the ``LCP approach''; this is in fact the paper from which the quote of \Cref{sec:Other-Approaches} is taken. The quote highlights the increasing challenge of developing these LCPs and establishing their successful termination. Indeed, the LCP of \citep{SODA:GargV14} naturally differs from that of \citep{garg2015complementary}, in that it needs to account for production. This is done via devising a set of linear programs, and then using complementary slackness and their feasibility conditions to develop the LCP needed for production. The non-homogeneity of the resulting LCP for the equilibrium problem is dealt with in a manner which is different from previous works \citep{eaves1976finite,garg2015complementary} and, naturally, since the developed LCP is different, \citeauthor{SODA:GargV14} again need to argue against ray termination. 

The last of the works in this sequence regarding competitive markets for goods is by \citet{garg2018substitution}. \citeauthor{garg2018substitution} define a new class of utilities/productions, coined ``Leontief-free utilities/productions'', which generalize the SPLC functions mentioned earlier (see \Cref{sec:general-markets} for a precise definition). What is interesting about this class is that it contains functions which are not separable, but yet they admit rational solutions, hence dispelling a potential suspicion that non-separability is what essentially leads to irrationality. \citet{garg2018substitution} obtain the PPAD-membership of finding competitive equilibria in those markets via, again, employing the ``LCP approach''. Their LCP formulation turns out to be even more complex than those of previous works, as it has to differentiate between ``normal'' and ``abnormal'' variables, the latter preventing the employment of Lemke's algorithm. To circumvent this, they exploit some additional structure of their \emph{nonstandard} LCP, and then they also \emph{modify} Lemke's algorithm, to account for the possibility of abnormal variables becoming zero. Finally, as they devise a new LCP, they also have to argue once again against ray termination.

\paragraph{Our proofs.} We employ the \linoptgate to provide proofs which are conceptually and technically simpler than the ones discussed above. In particular, we formulate the optimal consumption (i.e., the consumers' utility maximization) and the optimal production (i.e, the firms' profit maximization) as linear programs (e.g., see \Cref{fig:production-markets-P1-and-P2}), which can be effectively substituted by \linoptgates in a \linear arithmetic circuit. Intuitively, one could view these as the ``parallel'' of best responses that we used in previous sections, in the domain of markets. What differentiates the proofs of this section is that these consumptions/productions need to be supported at a set of market-clearing prices. 

To make sure that we can work with linear constraints (in the inputs to \linear arithmetic circuit), we apply a standard variable change which was first used by \citet{eaves1976finite}, and then subsequently on all of the works that apply the ``LCP approach'': instead of sets of goods for consumption and production, we use the \emph{expenditures} of the consumers to buy goods in their bundles, and of the firms to produce goods using other goods as raw material. Interestingly, \citet{eaves1976finite} attributes this idea to Gale, and hence we refer to it as \emph{Gale's substitution}, see \Cref{rem:gale}. For the prices, we develop a feasibility program (e.g., see \Cref{fig:production-markets-Q}), which at a high level establishes that if certain goods (or segments) have smaller expenditures, then they are sufficiently cheaper. This is the main property that we use to argue that the prices computed by the feasibility program at a fixed point are market-clearing. 

For our results in \Cref{sec:SSPLC-markets}, we take advantage of the capability of the \linoptgate to compute implicit functions and correspondences, as detailed in \Cref{sec:implicit}. In contrast, the ``LCP approach'' would need to have explicit variables for each segment, resulting in an LCP of exponential size in the size of the input (which is the size of the given input circuits). This is sufficient to prove rationality of solutions, but not to construct the computational reduction needed for a PPAD-membership. 

Compared to the applications in other sections of our paper, the proofs that we provide in this section are probably the most technically involved. The main complication lies in arguing the market clearing of the outputted prices, which still requires a relatively short proof. We remark that most of the space in the following sections is used for introducing the corresponding settings and putting our technique in context, rather than the proofs themselves.

Finally, we emphasize that, as we mentioned also in \Cref{sec:Other-Approaches}, the ``LCP approach'' satisfies some other desirable properties, e.g., see \citep{garg2018substitution} for a discussion. As our focus here is the PPAD-membership of the problems, we do not discuss them further.

\paragraph{Roadmap.} We provide a brief roadmap of the section. In \Cref{sec:exchange-markets-linear} we introduce our technique in the simple setting of exchange markets with linear utilities. To gently incorporate production, in \Cref{sec:prod-markets-linear} we consider markets with linear utilities and linear productions. In \Cref{sec:general-markets} we consider the case of Leontief-free utilities and productions. Finally in \Cref{sec:SSPLC-markets} we prove the PPAD-membership of finding competitive equilibria in markets with SSPLC utilities and SPLC productions.

\subsection{Exchange Markets with Linear Utilities}\label{sec:exchange-markets-linear}

First, we explain how our \linoptgate can be used to show the PPAD-membership of finding competitive equilibria in a simple but fundamental variant of the Arrow-Debreu market model, that of \emph{exchange markets} with \emph{linear utilities}. An exchange market is one in which there is no production, and each consumer brings an endowment to the market. The PPAD-membership for these markets demonstrates the strength of the technique for a simple model, before moving on to apply it to more general market settings. \medskip

\noindent Our main theorem in this section is the following: 

\begin{theorem}\label{thm:exchange-market-PPAD}
    Computing a competitive equilibrium in an exchange market with linear utilities is in PPAD.
\end{theorem}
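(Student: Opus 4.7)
The plan is to construct a \linear arithmetic circuit with \linoptgates whose fixed points encode competitive equilibria and then invoke \Cref{thm:linoptgate} together with \Cref{rem:linoptgate}. The key idea is Gale's substitution (\Cref{rem:gale}): take the unknowns to be the prices $p\in\mathbb{R}^m$ together with expenditures $b_{ij}=p_j x_{ij}$ instead of the allocations themselves, so that every constraint matrix is fixed and all price-dependence is pushed into right-hand sides that are linear in $p$.

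For each consumer $i$ I would capture utility maximization by a feasibility program in the shape of $\mathcal{Q}$. At prices $p$, a bundle is optimal iff the entire budget $e_i(p)=\sum_j p_j w_{ij}$ is placed on goods maximizing the bang-per-buck ratio $u_{ij}/p_j$; clearing denominators, this becomes
\[
u_{ik}p_j-u_{ij}p_k>0 \;\implies\; b_{ij}\leq 0 \quad (\forall\, j,k),\qquad b_{ij}\geq 0,\qquad \textstyle\sum_j b_{ij}=e_i(p).
\]
Here the \circparams $(p_j,p_k)$ appear only inside the antecedents; the consequents and the budget-exhaustion equality have fixed coefficients on the variables and right-hand sides linear in $p$; and the program is trivially solvable (place all of $e_i(p)$ on any maximizer of $u_{ij}/p_j$). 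Hence \Cref{thm:linoptgate} lets us realize it by a \linoptgate.

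I would then iterate on the product $\Delta\times[0,1]^{nm}$, where $W_j=\sum_i w_{ij}$ and $\Delta=\{p\geq 0:\sum_j p_j W_j=1\}$ is the weighted price simplex, via the \linear arithmetic circuit $F(p,b)=(p',b')$ that lets $b'_i$ be the output of consumer $i$'s \linoptgate on input $p$ and sets
\[
p'_j=\textstyle\sum_i b'_{ij}/W_j,
\]
which is \linear because each $W_j$ is an input constant. Using $\sum_j b'_{ij}=e_i(p)$, summing yields $\sum_j p'_j W_j=\sum_i e_i(p)=\sum_j p_j W_j=1$, so $F$ sends $\Delta$ into itself, and $b'_{ij}\leq e_i(p)\leq 1$ keeps the expenditure box invariant. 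At any fixed point $(p,b)$ of $F$ the \linoptgates certify that each consumer is demand-optimal at $p$, while $p=p'$ reads as $p_j W_j=\sum_i b_{ij}$, which is exactly market clearing once one recovers $x_{ij}=b_{ij}/p_j$; thus $(p,x)$ is a competitive equilibrium and PPAD-membership follows from \Cref{rem:linoptgate}.

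The main technical subtlety I foresee is the degenerate case in which a coordinate of $p$ collapses to zero at a fixed point: the bang-per-buck ratios become ill-defined and allocations $x_{ij}=b_{ij}/p_j$ undefined. Under standard positivity hypotheses on $(u,w)$ guaranteeing strictly positive equilibrium prices this does not arise; otherwise a routine perturbation of the antecedents together with a small price floor handles it without altering the equilibrium structure, and this is precisely the same obstruction (more intricately) that the subsequent applications in \Cref{sec:prod-markets-linear,sec:general-markets,sec:SSPLC-markets} will have to manage carefully, so I would first present the warm-up in the clean positive-prices regime.
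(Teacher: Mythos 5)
Your architecture is genuinely different from the paper's. You encode consumer optimality as a feasibility program (conditional BPB comparisons) rather than as the paper's linear program with objective $\sum_{i,j} \frac{p_j}{u_{ij}} q_{ij}$, and you update prices by a direct t\^atonnement rule $p'_j = \sum_i b'_{ij}/W_j$ rather than via the paper's feasibility program $\mathcal{Q}$. The trade-off is exactly as you would expect: your t\^atonnement gives market clearing \emph{for free} at a fixed point, whereas in the paper market clearing is the hard part (Lemma~\ref{lem:exchange-markets-correctness-lemma}, a non-trivial contradiction argument using non-reducibility). Conversely, the paper gets positive prices for free by building the floor $p_j \geq \varepsilon^m/m$ into $\mathcal{Q}$ as a hard constraint, whereas your t\^atonnement has nothing forcing prices away from zero.

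This is where your argument has a real gap, not just a ``routine'' wrinkle. Brouwer applied to your $F$ on $\Delta$ can terminate at a boundary fixed point with $p_j = 0$; ``strictly positive equilibrium prices exist'' (which does follow from the sufficiency conditions) does not imply that \emph{the} fixed point you get has this property, since you have not shown that every fixed point of $F$ is a competitive equilibrium. And your proposed repair is self-defeating: if you impose a price floor and project, the update becomes $p'_j = \max\{\delta, \sum_i b'_{ij}/W_j\}$, and the fixed-point identity then only yields $p_j W_j \geq \sum_i b_{ij}$ whenever the floor is active — you lose the ``clearing for free'' property that was the whole point of the t\^atonnement. To rule out the floor ever being active at a fixed point, you would in fact need essentially the same non-reducibility/excess-expenditure argument the paper uses in Lemma~\ref{lem:exchange-markets-correctness-lemma}; it is not a perturbation remark but the heart of the proof. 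A secondary consequence of zero prices is that your antecedents $u_{ik}p_j - u_{ij}p_k > 0$ stop encoding BPB comparisons: when $p_j = 0$ the antecedent is never satisfied, so good $j$ receives no ``do not spend here'' constraint even if it is worthless, and the recovered $x_{ij} = b_{ij}/p_j$ is undefined anyway.

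In short: the decomposition you chose (feasibility gate for consumers, closed-form price update) is a legitimate alternative and would be worth spelling out, but the paper's split (LP for consumers, feasibility gate $\mathcal{Q}$ with an explicit price floor, and a dedicated clearing lemma) exists precisely because the zero-price boundary is where the difficulty lives; it cannot be outsourced to a ``routine perturbation.''
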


\paragraph{Exchange Markets.} In an exchange market $\mathcal{M}$, we have a set $N$ of consumers and a set $G$ of infinitely divisible goods. Let $n=|N|$ and $m=|G|$. We will typically use index $i$ to refer to consumers and $j$ or $g$ to refer to goods. Each consumer brings an endowment $w_i = (w_{i1}, \ldots, w_{im})$ to the market, with $w_{ij} \geq 0$ for all $i \in N$ and $j \in G$. We may assume without loss of generality that for every good $j$, we have $\sum_{i \in N} w_{ij} =1$, i.e., that the total endowment of each good is $1$. We will use $\mathbf{x}_i = (x_{i1}, \ldots, x_{im})$ to denote the vector of quantities of goods allocated to consumer $i \in N$ in $\mathcal{M}$, and we will call it the \emph{bundle} of consumer $i$. Let $\mathbf{x} = (\mathbf{x}_1, \ldots, \mathbf{x}_n)$ be the vector of such bundles. We will use $\mathbf{p} = (p_1, \ldots, p_m)$ to denote the vector of \emph{prices} in $\mathcal{M}$, one for each good $j \in G$. Prices are non-negative, so $p_j \geq 0$ for all $j \in G$. Given a vector of prices $\mathbf{p}$ the \emph{budget} of consumer $i \in N$ is defined as $\sum_{j \in G}w_{ij}p_j$; intuitively, this is the amount of money that the consumer acquires by selling her endowment at prices $\mathbf{p}$.  

\paragraph{Utility Functions.} Every consumer has a utility function $u_i : \mathbb{R}_{\geq 0}^m \rightarrow \mathbb{R}_{\geq 0}$ mapping a bundle $\mathbf{x}_i$ to a non-negative real number. In this section, these utilities are \emph{linear}, i.e., every consumer $i \in N$ has a utility $u_{ij}$ for every good $j \in G$, and her utility for the bundle $\mathbf{x}_i$ is $u_i(\mathbf{x}_i) = \sum_{j \in G} u_{ij}x_{ij}$, where $u_{ij}$ is the value of consumer $i$ for good $j$.  

\paragraph{Competitive Equilibrium.} Next, we provide the definition of a competitive equilibrium (often referred to as a \emph{market} or a \emph{Walrasian} equilibrium).

\begin{definition}[Competitive equilibrium]\label{def:exchange-markets-competitive-equilibrium}
A competitive equilibrium of an exchange market $\mathcal{M}$ is a pair $(\mathbf{p^*}, \mathbf{x^*})$ consisting of a prices $\mathbf{p^*}$ and and bundles $\mathbf{x^*} = (\mathbf{x}_1^*, \ldots, \mathbf{x}_n^*)$ such that 
\begin{enumerate}
    \item $u_i(x_i^*) = \sum_{j \in G}x_{ij}^* u_{ij}$ is maximized, subject to \\$\sum_{j \in G} x_{ij}^{*} p_j^{*} \leq \sum_{j \in G} w_{ij}p_j^{*}$ and $x_{ij}^{*} \geq 0$, for any consumer $i \in N$. \hfill \textbf{\emph{(bundle optimality)}} \label{con1:optimality}
    \item $\sum_{i \in N}x_{ij}^*=1$, for any good $j \in G$. \hfill \textbf{\emph{(market clearing)}} \label{con2:clearing}
\end{enumerate}
\end{definition}

\noindent Condition~\ref{con1:optimality} above guarantees that at the equilibrium prices $\mathbf{p}^*$, every consumer receives the best possible bundle that they can buy given their budget and that each good is allocated in a non-negative amount. Condition~\ref{con2:clearing} guarantees that the market clears, i.e., that the total quantity of goods sold is equal to the total endowment $\sum_{i \in N} w_{ij}$, which, recall, we have assumed without loss of generality that is equal to $1$. 

\paragraph{Sufficiency conditions.} A competitive equilibrium as defined above exists for every exchange market $\mathcal{M}$, under some \emph{sufficiency conditions}. Following \citet{vazirani2011market}, we will use the sufficiency conditions used by \citet{eaves1976finite}, namely that:
\begin{enumerate}
    \item For every good $j \in G$, there exists some consumer $i \in N$ that values the good positively, i.e., $u_{ij} >0$. \label{suff:1}
    \item For every consumer $i \in N$, there exists some good $j \in G$ that the consumer endows in a positive amount, i.e., $w_{ij} >0$.\label{suff:2}
    \item The market is \emph{not reducible}. A market is reducible where there exists a proper subset $N' \subset N$ of the consumers and a subset $G' \subseteq G$ of the goods such that
    \begin{enumerate}
        \item all of the goods in $G'$ are entirely endowed by consumers in $N'$, i.e., $\sum_{i \in N'}w_{ij}=1$ for all $j \in G$,
        \item all of the consumers in $N'$ only value positively goods in $G'$, i.e., for all $i \in N'$ and $g \in G\setminus G'$, it holds that $u_{ig}=0$. 
    \end{enumerate}
    Reducibility intuitively captures the fact that if a strict subset of consumers completely endow a subset of the goods and only value those goods positively, then they could form their own market $\mathcal{M'}$ and exchange between them. Looking ahead, the non-reducibility of the market is a special case of a property called \emph{strong connectivity of the economy graph} \citep{maxfield1997general}, which we will make use of for the more general market settings that we consider later on. \label{suff:3}
\end{enumerate}

\paragraph{Optimality and bang-per-buck (BPB).} The bundle optimality condition of the competitive equilibrium stipulates that each consumer buys the best possible bundle at the given prices. For linear markets, these optimal bundles have a crisp characterization, in terms of the \emph{bang-per-buck (BPB)}. Given consumer $i \in N$ and prices $\mathbf{p}$, the BPB of a good $j \in G$ is defined as $\text{BPB}_i(j)=\frac{u_{ij}}{p_j}$. An optimal bundle only contains non-zero quantities of goods for which the BPB is maximum, i.e., $j \in \arg\max_{j \in G} \text{BPB}_i(j)$. \medskip 

\noindent \emph{Bounds on the prices.} We may without loss of generality assume that all the prices are strictly positive, i.e., $p_j>0$ for all $j\in G$. Indeed, if $p_j=0$, then there would be infinite demand for good $j$, contradicting market clearing. As such, in any equilibrium, the prices are strictly positive. Note that by this assumption, the quantity $\text{BPB}_i(j)$ is well-defined for every $j \in G$.

\begin{remark}[Normalized Prices]\label{rem:exchange-markets-normalized-prices}
Given that $p_j > 0$ for all $j \in G$, we can normalize the prices to sum to $1$ without loss of generality, i.e., we may assume that $\sum_{j \in G} p_j = 1$. 
\end{remark}

We will use a parameter $\varepsilon$ to capture the fact that if the price $p_j$ for a good $j \in G$ is sufficiently smaller than the price $p_{j'}$ for a good $j' \in G$, then $\text{BPB}(j) > \text{BPB}(j')$. This will allow us to ``control'' which goods certain consumers buy in their optimal bundles. Specifically, we can compute $\varepsilon > 0$ such that
\[
\text{If } p_j \leq \varepsilon \cdot p_{j'} \text{ and } u_{ij} > 0 \text{ then } \text{BPB}_i(j) > \text{BPB}_i(j').
\]
Additionally, we can pick $\varepsilon$ to be sufficiently small such that $\varepsilon < \frac{w_{ij}}{m}$ for all $i \in N$ and all $j \in G$ with $w_{ij}>0$.

Given $\varepsilon$, we will impose a stricter lower bound on the prices, which will be useful later on: in particular, we will assume that for all $j \in G$, $p_j \geq \frac{\varepsilon^m}{m}$. Our PPAD-membership proof will also establish that a competitive equilibrium still always exists even under this additional restriction.

\subsubsection{Preprocessing}

The high-level overview of the proof will be the following. We will define a function $F$ from a convex compact domain $D$ to itself, in a way that ensures that a competitive equilibrium of $\mathcal{M}$ can be recovered from a fixed point of $F$. The input and output variables of the function will be outputs of a linear program $\mathcal{P}$ (see Program~\eqref{eq:OPT-gate-linear} in \Cref{sec:lin-opt-gate}) and a feasibility program $\mathcal{Q}$ (see Program~\eqref{eq:feasibility-general} in \Cref{sec:lin-opt-gate}). Intuitively, $\mathcal{P}$ will be used to compute optimal bundles $\mathbf{x}_i$ for the consumers, and $\mathcal{Q}$ will be used to compute equilibrium prices $\mathbf{p}$. These programs will be of a particular form amenable to the use of our \linoptgate. As such, we will be be able to encode $F$ via a \linear arithmetic circuit containing \linoptgate.

\paragraph{Straightforward choice: $\mathbf{x}$ and $\mathbf{p}$.} A straightforward choice would be to have the pair $(\mathbf{x}, \mathbf{p})$ as the input/output to the function $F$. In that case, the linear program $P$ would be the following:
\begin{align}\label{LP-markets-straightforward}
	\mbox{maximize } & \sum_{i \in N}\sum_{j \in M} u_{ij}x_{ij}  \nonumber\\
	\mbox{subject to }& \sum_{j \in G} x_{ij}p_j \leq \sum_{j \in G}w_{ij}p_j,\;\forall i \in N \\
        & x_{ij} \geq 0, \;\forall i \in N, j \in G \nonumber
\end{align}

\noindent While the variables of this linear program are the variables $x_{ij}$, the prices $p_j$ are \circparams (see \Cref{def:circuit-parameters}), and the budget constraints of the form $\sum_{j \in G} x_{ij}p_j \leq \sum_{j \in G}w_{ij}p_j$ cannot be present in a linear program that we would like to compute by the \linoptgate.

\paragraph{Change of variables: Expenditure.} To circumvent this obstacle, we follow an idea used by \citet{eaves1976finite} that ``linearizes'' the constraints via an appropriate change of variables. \citet{eaves1976finite} attributes this idea to Gale, via private communication. Thus, we refer to this change of variables as \emph{Gale's substitution}.

\begin{remark}[Gale's substitution]\label{rem:gale}
Let $q_{ij} = x_{ij}p_j$ be the \emph{expenditure} of consumer $i$ on good $j$, i.e., how much money the consumer spends on the good given an allocation $x_{ij}$ and a price $p_j$. This change of variables is by now very much standard in the literature, used among others by \citet{eaves1976finite,SODA:GargV14,garg2018substitution,chaudhury2022complementary}.
\end{remark}

\noindent With this at hand, the budget constraints now become 
\[\sum_{j \in G} q_{ij} \leq \sum_{j \in G} w_{ij} p_j.\] 
These are clearly linear in the variables $q_{ij}$ with no \circparams appearing in the expression. In turn, the objective function of $\mathcal{P}$ becomes
\[
\sum_{i \in N} \sum_{j \in M} \frac{u_{ij}}{p_j}q_{ij}.
\]
Interestingly, the coefficient of each variable $q_{ij}$ is the bang-per-buck $\text{BPB}_i(j)$ of good $j$ for consumer $i$. In an optimal solution to $\mathcal{P}$, the consumer will only spend money on goods with maximum BPB. That being said, the format of the objective function (with $p_j$ appearing in the denominator) is not one that can be handled by the \linoptgate, as its subgradient cannot be computed by a \pseudog. For this reason, we transform the objective function to 
\[
\sum_{i \in N} \sum_{j \in M} \frac{p_j}{u_{ij}}q_{ij}.
\]
and the maximization in linear program~(\ref{LP-markets-straightforward}) to minimization. Clearly, in an optimal solution, the consumer is again only spending on goods with minimum ratio $\frac{p_j}{u_{ij}}$, i.e., with maximum BPB, therefore purchasing an optimal bundle. This objective function form is of a form amenable to the use of the \linoptgate, as its subgradient is a linear function.

However, there is a potential issue that arises when converting the maximization objective to minimization. If we keep the budget constraint $\sum_{j \in G} q_{ij} \leq \sum_{j \in G} w_{ij} p_j$ intact, then setting $q_{ij} = 0$ for all $i \in N$ and $j \in G$ is an optimal solution to the linear program, while it is clearly not an expenditure consistent with a competitive equilibrium. To handle this, we observe that in an optimal solution to linear program~(\ref{LP-markets-straightforward}), where the objective is maximization, we may assume that all the budget constraints have zero slack, i.e., that $\sum_{j \in G} x_{ij}^*p_j^* = \sum_{j \in G}w_{ij}p_j^*$. Indeed, if one could increase some variable $x_{ij}$ without violating the budget constraint for some consumer $i \in N$, then the consumer would certainly not receive a lower utility in the objective function. Correspondingly, the constraints $\sum_{j \in G} q_{ij} \leq \sum_{j \in G} w_{ij} p_j$ can be converted to equality constraints, i.e., $\sum_{j \in G} q_{ij} = \sum_{j \in G} w_{ij} p_j$. In turn, these constrain the expenditure variables to receive positive values in an optimal solution, ensuring that each consumer spends their entire budget. \medskip

\noindent Summarizing, the linear program $\mathcal{P}$ can be seen on the left-hand side of \Cref{fig:exchange-markets-P-and-Q}.

\begin{figure}
\centering 
\fbox{
\centering
    \begin{minipage}{.45\textwidth}
        \centering
        \underline{Linear Program $\mathcal{P}$}
        \begin{align*}%
	\mbox{minimize } & \sum_{i \in N}\sum_{j \in G} \frac{p_j}{u_{ij}}q_{ij}  \\
	\mbox{subject to }& \sum_{j \in G} q_{ij} = \sum_{j \in G}w_{ij}p_j \\
        & q_{ij} \geq 0, \;\forall j \in G 
    \end{align*}
    \end{minipage}%
    \hfill\vline\hfill
    \begin{minipage}{0.45\textwidth}
        \centering
        \underline{Feasibility Program $\mathcal{Q}$}
        \begin{align*}
	& e_j<e_{j'}\Rightarrow p_j\leq \varepsilon \cdot p_{j'},\;\forall j,j' \in G \nonumber \\
	& p_j\geq \frac{\varepsilon^m}{m},\;\forall j \in G\\
	& \sum_{j \in G}p_j = 1\nonumber
        \end{align*}
    \end{minipage}
    }
\caption{The linear program $\mathcal{P}$ used to recover optimal bundles (left) and the feasibility program $\mathcal{Q}$ used to find market-clearing prices (right).}
\label{fig:exchange-markets-P-and-Q}
\end{figure}

\paragraph{Excess expenditure and the program $\mathcal{Q}$.} While linear program $\mathcal{P}$ will be used to obtain the optimal expenditures $q_{ij}$ (and as consequence the optimal bundles $x_{ij}$), as we mentioned earlier, the equilibrium prices will be obtained via a feasibility program $\mathcal{Q}$. Before we define the program, we introduce the notion of excess expenditure, which will be useful later on. The \emph{excess expenditure} $e_j$ of a good $j \in G$ is defined as the difference between the total expenditure of all consumers $i \in N$ for that good and the price of the good, i.e., 
\[
e_j = \sum_{i \in N}q_{ij} - p_j.
\]
Then, at equilibrium prices $\mathbf{p}^*$ we will have market clearing, i.e., $e_j^* = 0$ for all $j\in G$. \medskip

\noindent We are now ready to define our feasibility program $\mathcal{Q}$; see the right-hand side of \Cref{fig:exchange-markets-P-and-Q}. The equilibrium prices will be obtained as the output $\mathbf{p}^{*}$ of $\mathcal{Q}$. 

\subsubsection{Membership in PPAD: The proof of \crtCref{thm:exchange-market-PPAD}.}

\noindent We will develop the proof in three steps, namely (a) construction of the function $F$ and arguing that it can be represented by a \linear arithmetic circuit containing \linoptgates, (b) showing that the \linoptgate can compute all the necessary components, and (c) arguing that a competitive equilibrium can be recovered from a fixed point of $F$. 

\paragraph{The function $F$.} Given the above, we will define $F: D \rightarrow D$ with domain 
\[D = \left\{p \in \Delta^{m-1}: \forall j \in G, p_j \geq \frac{\varepsilon^m}{m}\right\} \times [0,1]^{nm}.\]
An input to $F$ is a pair $(\bar{\mathbf{p}}, \bar{\mathbf{q}})$ of prices and expenditures, where $\bar{\mathbf{q}} = (\bar{q}_{ij})_{i \in N, j \in G}$ and the output is another such pair $(\mathbf{p}^*, \mathbf{q}^*)$. The domain $D$ is the one above since $\sum_{j \in G} \bar{p}_j = 1$ (see \Cref{rem:exchange-markets-normalized-prices}), and by the fact that in any feasible set of expenditures, $\sum_{j \in G} \bar{q}_{ij} \leq \sum_{j \in G} w_{ij} \bar{p}_j \leq 1$.

\begin{remark}\label{rem:input-variables-lhs-feasibility}
Note that while the linear program $\mathcal{P}$ and the feasibility program $\mathcal{Q}$ were written in a general form in \Cref{fig:exchange-markets-P-and-Q}, when using $(\bar{\mathbf{p}}, \bar{\mathbf{q}})$ as the input, we would have to substitute $p_{j}$ with $\bar{p}_{j}$ in $\mathcal{P}$ (with the variables being $q_{ij}$) and $q_{ij}$ and $p_{ij}$ with $\bar{q}_{ij}$ and $\bar{p}_{j}$ respectively in the expression for $e_j$ in $\mathcal{Q}$ (with the variables being $p_j$). While we do not write this out explicitly in all of the applications in our paper, here we thought it would be important to mention, in order to avoid any confusion with $p_j$ potentially being on the left-hand side of the conditional constraints in $\mathcal{Q}$. It is $\bar{p}_j$ that appears on the left-hand side, which is a \circparam rather than a variable, and thus is in accordance with the definition of feasibility programs in \eqref{eq:feasibility-general} of \Cref{sec:lin-opt-gate}.
\end{remark}

\paragraph{Computation by the \linoptgate.} In the next lemma, we argue formally what we have alluded to earlier, namely that the solutions to the linear program $\mathcal{P}$ and the feasibility program $\mathcal{Q}$ of \Cref{fig:exchange-markets-P-and-Q} can be computed by our \linoptgate.

\begin{lemma}\label{lem:ExchangeMarkets-OPT-gate-can-compute}
Consider the linear program $\mathcal{P}$ and the feasibility program $\mathcal{Q}$ of \Cref{fig:exchange-markets-P-and-Q}. An optimal solution to $\mathcal{P}$ can be computed by the \linoptgate. $\mathcal{Q}$ is solvable, and a solution can be computed by the \linoptgate. 
\end{lemma}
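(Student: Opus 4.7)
The plan is to verify the two parts separately, reducing each to the abstract templates introduced in \Cref{sec:linopt-applications}, so that the conclusion is immediate from \Cref{thm:linoptgate} and \Cref{lem:feasibility-of-q-graph}.

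\textbf{Part 1 ($\mathcal{P}$).} I would check the three requirements spelled out in \Cref{sec:linopt-applications} for the \linoptgate to solve an optimization program of the form \eqref{eq:OPT-gate-linear}. The variables are $q_{ij}$ and the \circparam is $\bar{\mathbf p}$. (a) The feasible domain is described by the budget equality $\sum_j q_{ij} = \sum_j w_{ij}\bar p_j$ and $q_{ij}\geq 0$; since $\bar p_j \geq \varepsilon^m/m > 0$, this is a nonempty simplex of total mass at most~$1$, so it is also bounded and we may take $R=1$. Equalities are converted to the template $Ax\leq b$ by splitting into two inequalities. (b) The objective is linear in $q$, with coefficient $\bar p_j/u_{ij}$ on $q_{ij}$. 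Because $u_{ij}$ is a \emph{fixed} market parameter, we can hardwire the rational constant $1/u_{ij}$ and express the coefficient as a $\times \zeta$ gate applied to the \circparam $\bar p_j$; the (sub)gradient of the objective is then trivially produced by a \linear arithmetic circuit. (c) The \circparams appear only on the right-hand side, both in the budget equality and in the nonnegativity constraints.

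A minor bookkeeping point is how to handle pairs $(i,j)$ with $u_{ij}=0$, where $\bar p_j/u_{ij}$ is undefined. The cleanest fix is to add the a priori constraint $q_{ij}\leq 0$ (so $q_{ij}=0$) and drop the corresponding term from the objective; under the standing sufficiency conditions (which ensure that each consumer in the reduced market values at least one good in her market), the remaining program is still feasible, and its optimum still concentrates mass on maximum–BPB goods.

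\textbf{Part 2 ($\mathcal{Q}$).} I would match $\mathcal{Q}$ against the template $\mathcal{Q}_\text{app}$ from \Cref{sec:linopt-applications} by the identifications $w_j := p_j$, $\rho := \varepsilon$, and $h_j(\bar{\mathbf q},\bar{\mathbf p}) := -e_j$, so that the antecedent $h_j - h_{j'} > 0$ is equivalent to $e_j < e_{j'}$. The expression $e_j = \sum_i \bar q_{ij} - \bar p_j$ is a linear function of the \circparams alone, the right-hand sides $\varepsilon p_{j'}$ of the conditional constraints are linear in the variables $p_j$, and the unconditional constraints $\sum_j p_j = 1$ and $p_j\geq \varepsilon^m/m$ fit the form \eqref{eq:feasibility-general} (as discussed in \Cref{sec:linopt-applications}, unconditional constraints are just conditional constraints with $h_i(y)=1$). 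Hence $\mathcal{Q}$ can be solved by the \linoptgate-based feasibility gadget, \emph{provided it is feasible}. Feasibility is immediate from \Cref{lem:feasibility-of-q-graph}: the \qgraph $G_{\mathcal{Q}}$ has a directed edge $(j,j')$ exactly when $e_j < e_{j'}$, and any directed cycle would force the strict chain $e_{j_1} < e_{j_2} < \dots < e_{j_1}$, a contradiction. So $G_{\mathcal{Q}}$ is acyclic and the explicit witness constructed in \Cref{lem:feasibility-of-q-graph} solves $\mathcal{Q}$.

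The whole argument is really a pattern match against the two abstract forms the \linoptgate was designed for; the only slightly nontrivial point is the $u_{ij}=0$ bookkeeping in Part~1, and the only genuine content is the one-line acyclicity observation for the \qgraph in Part~2.
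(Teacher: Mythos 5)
Your proof is correct and follows essentially the same route as the paper's: pattern-match $\mathcal{P}$ against the \linoptgate template (nonempty bounded domain, \circparams only on the right-hand side, linear subgradient) and pattern-match $\mathcal{Q}$ against $\mathcal{Q}_\text{app}$ and invoke \Cref{lem:feasibility-of-q-graph} via acyclicity of the \qgraph. Your extra bookkeeping remark about pairs with $u_{ij}=0$ is a reasonable and slightly more careful point that the paper glosses over; it does not change the argument, and the fix you propose (forcing $q_{ij}=0$ and dropping the term) is exactly what one would do.
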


\begin{proof}
    For the linear program $\mathcal{P}$, we observe that the feasible domain $[0,1]^n$ is non-empty and bounded. The \circparams $p_j$ appear only on the right-hand side of the constraints, and the subgradient of the objective function is linear, and hence can be given by a \pseudog. For the feasibility program $\mathcal{Q}$, there are no variables appearing in the left-hand side of the conditional constraints (see also \Cref{rem:input-variables-lhs-feasibility}), and no \circparams appearing on the right-hand side. This means that the \linoptgate correctly computes the outcome of the program, assuming that the program is solvable. To argue solvability, notice that $\mathcal{Q}$ is of the form $\mathcal{Q}_{\text{app}}$ as defined in \Cref{sec:linopt-applications}, and hence by \Cref{lem:feasibility-of-q-graph} it suffices to argue that its \qgraph $G_\mathcal{Q}$ is acyclic. In our case here, the vertices $V_{\mathcal{Q}}$ of $G_{\mathcal{Q}}$ are the goods, and an edge $(j, j')$ exists if and only if $e_j < e_{j'}$. Clearly, $G_{\mathcal{Q}}$ is acyclic. 
\end{proof}

\paragraph{Arguing optimality and market clearing.} To conclude the proof, what is left to show is that a fixed point $(\mathbf{p}^*, \mathbf{q}^*)$ of $F$ indeed corresponds to a competitive equilibrium. We argue that in the following lemma.

\begin{lemma}\label{lem:exchange-markets-correctness-lemma}
Let $(\mathbf{p}^*, \mathbf{q}^*)$ be a fixed point of $F$. Then the pair $(\mathbf{p}^*, \mathbf{x}^*)$, where $x_{ij}^* = q_{ij}^*/p_j^*$ is a competitive equilibrium of $\mathcal{M}$.
\end{lemma}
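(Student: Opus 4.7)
The plan is to verify separately the two conditions of the competitive equilibrium in \Cref{def:exchange-markets-competitive-equilibrium}: bundle optimality and market clearing.

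Bundle optimality is essentially bookkeeping around Gale's substitution. At the fixed point $(\mathbf{p}^*,\mathbf{q}^*)$, the vector $\mathbf{q}^*$ is optimal for $\mathcal{P}$ evaluated at prices $\mathbf{p}^*$: the equality constraint $\sum_j q_{ij} = \sum_j w_{ij} p_j^*$ forces each consumer to spend her entire budget, while minimizing $\sum_j (p_j^*/u_{ij})q_{ij}$ concentrates spending on goods of maximum bang-per-buck $u_{ij}/p_j^*$. Since $p_j^* \geq \varepsilon^m/m > 0$, the substitution $x_{ij}^* := q_{ij}^*/p_j^*$ is well defined, and the resulting bundle maximizes $\sum_j u_{ij} x_{ij}$ subject to $\sum_j p_j^* x_{ij} \leq \sum_j w_{ij} p_j^*$ and $x_{ij}\geq 0$.

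For market clearing, it suffices to show that $e_j^* := \sum_i q_{ij}^* - p_j^* = 0$ for every $j$. Summing the budget equalities gives Walras' law $\sum_j e_j^* = 0$. Assume for contradiction that some $e_j^* \neq 0$, so both $J_+ := \{j : e_j^* > 0\}$ and $J_- := \{j : e_j^* < 0\}$ are nonempty (with $J_0$ the remaining goods). The conditional constraint of $\mathcal{Q}$ forces $p_j^* \leq \varepsilon p_{j'}^*$ whenever $j \in J_-$ and $j' \in J_+\cup J_0$; by the choice of $\varepsilon$, any consumer $i$ with $u_{ij} > 0$ for some $j \in J_-$ therefore has strictly higher BPB on $j$ than on any good in $J_+\cup J_0$, and hence spends her whole budget on $J_-$ at the fixed point. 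Let $N^-$ denote the set of such consumers and $N^+ := N\setminus N^-$; by sufficiency condition~1 (every good is valued by someone) together with $J_-\neq\emptyset$ one gets $N^-\neq\emptyset$, while by construction $N^+$-consumers value only goods in $J_+\cup J_0$. Now $(N^+,J_+\cup J_0)$ has $N^+\subsetneq N$, $J_+\cup J_0\subsetneq G$, and satisfies condition~(b) of reducibility, so by non-reducibility (sufficiency condition~3) condition~(a) must fail: there exist $i^-\in N^-$ and $j_0\in J_+\cup J_0$ with $w_{i^- j_0} > 0$. On the one hand the total budget of $N^-$ is at least $w_{i^- j_0} p_{j_0}^*$; on the other hand, since $N^-$ spends exclusively on $J_-$, this budget equals $\sum_{j\in J_-}(p_j^*+e_j^*) < \sum_{j\in J_-} p_j^* \leq m\varepsilon p_{j_0}^*$, using $p_j^*\leq\varepsilon p_{j_0}^*$ for $j\in J_-$. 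Combining gives $w_{i^- j_0} < m\varepsilon$, contradicting the choice $\varepsilon < w_{i^- j_0}/m$. Hence $e_j^* = 0$ for every $j$.

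The hard part is the market-clearing argument. It requires weaving together four ingredients precisely: the $\varepsilon$-based price separation enforced by $\mathcal{Q}$, Walras' law coming from the budget equalities in $\mathcal{P}$, the BPB-concentrated spending pattern induced by the optimality of $\mathbf{q}^*$, and the non-reducibility hypothesis. The two quantitative calibrations, $\varepsilon < w_{ij}/m$ for every $i,j$ with $w_{ij}>0$ and $p_j^*\geq\varepsilon^m/m$, are exactly what provide the slack needed to close the final contradiction.
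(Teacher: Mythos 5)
Your proof is correct and rests on the same four ingredients as the paper's (Walras' law from the budget equalities, the $\varepsilon$-price separation enforced by $\mathcal{Q}$, the BPB-concentrated spending forced by $\mathcal{P}$, and non-reducibility), but it reaches the final contradiction along a genuinely different route. The paper partitions by taking $J$ to be the set of goods attaining the \emph{minimum} excess expenditure and shows that a single consumer $i_0$ (endowing some $j'\notin J$ and valuing some $j\in J$) has enough budget to buy all of $J$ and spends it entirely there, forcing some $q_{i_0 j_0}^*\geq p_{j_0}^*$ and hence $e_{j_0}^*\geq 0$, contradicting $e_j^*<0$ on $J$. You instead partition by the \emph{sign} of the excess expenditure into $J_-$, $J_0$, $J_+$ and close the argument by aggregating: the total budget of the consumer group $N^-$ is bounded below by $w_{i^-j_0}p_{j_0}^*$ (one positive endowment) and above by the total spending on $J_-$, which is $\sum_{j\in J_-}(p_j^*+e_j^*)<\sum_{j\in J_-}p_j^*\leq m\varepsilon p_{j_0}^*$, contradicting $\varepsilon<w_{i^-j_0}/m$. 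Your version avoids reasoning about ties in the minimum excess, but requires one extra observation that the paper's local argument does not: that $N^+$-consumers spend nothing on $J_-$ (true, since they have $u_{ij}=0$ there), so that the aggregate spending of $N^-$ on $J_-$ equals $\sum_{j\in J_-}\sum_{i\in N}q_{ij}^*$. Both arguments make the same use of the two quantitative calibrations of $\varepsilon$, just at different points. Either route would serve the paper equally well.
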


\begin{proof}
By the form of the objective function of the linear program $\mathcal{P}$ of \Cref{fig:exchange-markets-P-and-Q}, each consumer spends money only on goods with maximum BPB and therefore the allocation that she receives is only one of goods for which she has maximum BPB. Therefore, the consumer receives an optimal bundle which satisfies Condition~\ref{con1:optimality} of the competitive equilibrium definition in \Cref{def:exchange-markets-competitive-equilibrium}. The allocation quantities $x_{ij}^*$ can straightforwardly be recovered from the values of $q_{ij}^*$. 

What remains to show is that $\mathbf{p^*}$ is a vector of market-clearing prices. By the definition of $e_j$, this is equivalent to arguing that for all $j \in G$, we have that $e_j^*=0$. We first argue that $\sum_{j \in G}e_j^*=0$. Indeed:
\begin{align*}
\sum_{j \in G} e_j^* &= \sum_{j \in G}\left(\sum_{i \in N}q_{ij}^* - p_j^*\right) = \sum_{j\in G}\sum_{i \in N}q_{ij}^* - \sum_{j \in G} p_j^* = \sum_{j \in G}\sum_{i \in N} w_{ij}p_j^* - 1 \\
&= \sum_{j \in G} p_j^* \cdot \sum_{i \in N} w_{ij} - 1 = \sum_{j \in G}p_j^* - 1 = 0,
\end{align*}
where in the calculation above we used:
\begin{itemize}
    \item[-] in Equations 3 and 6, that $\sum_{j \in G}p_j^* = 1$, which is without loss of generality (see \Cref{rem:exchange-markets-normalized-prices}),
    \item[-] in Equation 5, that $\sum_{i \in N}w_{ij}=1$, which is without loss of generality, and
    \item[-] in Equation 3, that $\sum_{j \in G} \sum_{i \in N}q_{ij}^* = \sum_{j \in G} \sum_{i \in N}w_{ij}p_j^{*}$, which follows from the constraints of linear program $\mathcal{P}$ of \Cref{fig:exchange-markets-P-and-Q}. 
\end{itemize}

\noindent Define the set $J$ to be the set of goods with minimum excess expenditure, i.e., 
\[
J = \{j \in G: e_j^* \leq e_{j'}^*, \text{ for all } j' \in G\}.
\]
It now suffices to show that $e_j^*\geq 0$ for all $j\in J$. Assume by contradiction that there exists some $j_1 \in J$ such that $e_{j_1}^* <0$. By definition of the set $J$, we have that
\begin{equation}\label{eq:min-expenditure}
\text{for all } j \in J, \text{ we have } e_j^* < 0.
\end{equation}

\noindent Also, by the fact that $\sum_{j \in G}e_j^*=0$ which we established above, it must be the case that there also exists $j_2 \in G\setminus J$ such that $e_{j_2}^* > 0$. In particular $G\setminus J\neq\emptyset$. Define
\[
N' = \{i \in N: \text { there exists } j\in G\setminus J \text{ such that } w_{ij} > 0\}.
\]

\noindent In words, $N'$ is the set of consumers with strictly positive endowment for some good which is not in the set $J$. We will argue that
\[
\text{there exists } i_0 \in N' \text{ and } j \in J \text{ such that } u_{i_0 j} > 0.
\]
To see this, assume by contradiction that for all $i \in N'$ and $j \in J$, it was the case that $u_{ij}=0$. Note also that for any good $j' \in G\setminus J$, it holds that $w_{ij'} >0$ only for consumers $i \in N'$. That would imply that $N'$ and $G\setminus J$ would constitute a reduced market, contradicting Feasibility Condition~\ref{suff:3}. 

Consider any good $j \in J$ and any good $j' \in G \setminus J$ such that $w_{i_0j'>0}$. By the definition of $J$, we have that $e_j^* < e_{j'}^*$. By the fact that $\mathbf{p}^*$ is a solution to the feasibility program $\mathcal{Q}$ of \Cref{fig:exchange-markets-P-and-Q}, we have that $p_{j}^* \leq \varepsilon \cdot p_{j'}^*$.  We have the following inequalities:
\begin{equation}\label{eq:two-inequalities}
\sum_{g \in G}w_{i_0 g}\cdot p_{g}^* \geq w_{i_0 j'}\cdot p_{j'}^* \ \ \  \text{ and } \ \ \ p_{j'}^* \geq \frac{m \cdot p_j^*}{w_{i_0 j'}}.
\end{equation}
The first inequality above is trivial, since $j' \in G$. The second inequality follows since by the choice of $\varepsilon$,
\[
p_{j'}^* \geq \frac{p_j^*}{\varepsilon} \geq \frac{p_j^*}{w_{i_0j'}/m} = \frac{m \cdot p_j^*}{w_{i_0j'}}.
\]
Since $j$ was chosen to be any good in $J$, the inequalities in (\ref{eq:two-inequalities}) hold for $j$ such that $p_{j}^*$ is the maximum price amongst all goods in $J$. Therefore in (\ref{eq:two-inequalities}) we can substitute $p_j^*$ by $\max_{g \in J}p_g^*$ and we have
\begin{equation}\label{eq:enough-budget-to-buy-J}
\sum_{g \in G}w_{i_0 g}\cdot p_{g}^* \geq w_{i_0 j'}\cdot p_{j'}^* \geq m \cdot \max_{g \in J} p_{g}^* \geq \sum_{g \in J} p_g^*.
\end{equation}
Inequality~\ref{eq:enough-budget-to-buy-J} implies that consumer $i_0$ has enough budget to buy all the goods in $J$. At the same time, by the choice of $\varepsilon$, consumer $i_0$ prefers to buy quantities of good $j$ rather than $j'$, i.e., $\text{BPB}_{i_0}(j) > \text{BPB}_{i_0}(j')$. Since consumer $i_0$ only spends money on goods with maximum BPB, we have that 
\begin{equation}\label{eq:only-expenditure-on-J}
\sum_{g \in G}q_{i_0 g}^* = \sum_{g \in J}q_{i_0 g}^*. 
\end{equation}
From Inequality~\ref{eq:enough-budget-to-buy-J}, Equation~\ref{eq:only-expenditure-on-J}, and since $\mathbf{q^*}$ is a feasible solution to linear program $\mathcal{P}$ of \Cref{fig:exchange-markets-P-and-Q} (and hence satisfies the constraints $\sum_{g \in G} q_{i_0g}^* = \sum_{g \in G}w_{i_0 g}\cdot p_g^*$), we have that 
\[
\sum_{g \in J}q_{i_0 g}^* \geq \sum_{g \in J}p_g^*.
\]
This last inequality however implies that there must exist $j_0 \in J$ for which $q_{i_0 j_0}^* \geq p_{j_0}^*$ and therefore $e_{j_0}^* = \sum_{i \in N} q_{ij_0}^* - p_{j_0}^* \geq 0$, contradicting (\ref{eq:min-expenditure}). This completes the proof.
\end{proof}

\subsection{Arrow-Debreu Markets with Linear Utilities and Productions}\label{sec:prod-markets-linear}

We now move on to the next fundamental variant of the Arrow-Debreu market model, that of markets with \emph{linear utilities} as well as \emph{linear productions}. Our results about exchange markets in \Cref{sec:exchange-markets-linear} were used to demonstrate the application of our technique to one of the simplest variants of the main market model. The results of this section can be seen as extending this exposition to the case where the markets also have linear production functions. This is rather informative for the reader, as it illustrates the approach that will be used for the most general market setting that we prove PPAD-membership for, which we present in \Cref{sec:general-markets}. \medskip

\noindent Our main theorem in this section is the following: 

\begin{theorem}\label{thm:production-market-PPAD}
    Computing a competitive equilibrium in an Arrow-Debreu market with linear utilities and linear productions is in PPAD.
\end{theorem}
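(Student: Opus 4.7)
The plan is to closely follow the strategy used in \Cref{sec:exchange-markets-linear} for exchange markets, extending the construction to accommodate firms with linear production technologies. The model will include, in addition to the consumers of \Cref{sec:exchange-markets-linear}, a set $F$ of firms, where each firm $f$ has a linear production technology specified by a finite set of activities $k$, each described by a net production vector $\mathbf{a}_k \in \mathbb{R}^m$ (with negative entries indicating inputs). Consumers own shares $\theta_{if} \geq 0$ of the firms' profits, with $\sum_{i \in N} \theta_{if} = 1$. A competitive equilibrium is now a triple $(\mathbf{p}^*, \mathbf{x}^*, \mathbf{y}^*)$ of prices, consumption bundles and activity levels, satisfying: consumer utility maximization subject to the augmented budget (endowment value plus share of firm profits); firm profit maximization over the production cone, which for a linear technology forces $\mathbf{p}^* \cdot \mathbf{a}_k \leq 0$ for all $k$, with equality whenever $y_k^* > 0$; and market clearing, where demand equals endowment plus net production for every good.

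As in \Cref{rem:gale}, we apply Gale's substitution $q_{ij} = x_{ij} p_j$ on the consumption side, and analogously introduce expenditure variables for the production side so that the value $\mathbf{p} \cdot \mathbf{a}_k$ becomes linear in the \circparams once an appropriate change of variables is performed. The function $F$ encoded by the \linear arithmetic circuit will have three components, giving rise to two optimization programs $\mathcal{P}_1$ and $\mathcal{P}_2$ and one feasibility program $\mathcal{Q}$. The program $\mathcal{P}_1$ is essentially the consumer linear program from \Cref{fig:exchange-markets-P-and-Q}, modified so that the right-hand side of the budget constraint includes the consumer's share of aggregate firm revenue (which reduces to the endowment value once the zero-profit condition holds at a fixed point). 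The program $\mathcal{P}_2$ computes optimal activity levels for each firm at given prices, minimizing $-\mathbf{p}^\transpose A \mathbf{y}$ over $\mathbf{y} \in [0,R]^K$ for a bound $R$ chosen polynomially in the endowments, which is sufficient because any market-clearing production is bounded in terms of the total endowment. The program $\mathcal{Q}$ constructs prices from an extended notion of excess expenditure $e_j = \sum_{i} q_{ij} - p_j - (\text{net value produced of good } j)$, and is of the form $\mathcal{Q}_\text{app}$ of \Cref{sec:linopt-applications}, so that \Cref{lem:feasibility-of-q-graph} applies; we verify that each of $\mathcal{P}_1$, $\mathcal{P}_2$, $\mathcal{Q}$ has the form required by the \linoptgate, exactly as in \Cref{lem:ExchangeMarkets-OPT-gate-can-compute}.

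The main obstacle will be the correctness argument, where we must show that a fixed point $(\mathbf{p}^*, \mathbf{q}^*, \mathbf{y}^*)$ of $F$ simultaneously delivers market clearing on goods and the zero-profit characterization on production. Bundle optimality at the consumer level follows as in \Cref{lem:exchange-markets-correctness-lemma} directly from the bang-per-buck structure of $\mathcal{P}_1$. For market clearing and zero profit, the plan is to first combine the budget equality from $\mathcal{P}_1$ with the definition of $e_j^*$ to obtain an aggregate identity $\sum_j e_j^* = 0$, now incorporating production revenues; then use the $\varepsilon$-separation of prices enforced by $\mathcal{Q}$ together with a strong connectivity / non-reducibility assumption on the extended economy of consumers and firms to argue, following the structure of \Cref{lem:exchange-markets-correctness-lemma}, that no good can have strictly negative $e_j^*$. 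A symmetric argument on the production side, comparing activities whose profit at $\mathbf{p}^*$ differs, will force $\mathbf{p}^* \cdot \mathbf{a}_k = 0$ for every active $k$ and $\mathbf{p}^* \cdot \mathbf{a}_k \leq 0$ for every inactive $k$. Once both are established, the triple $(\mathbf{p}^*, \mathbf{x}^*, \mathbf{y}^*)$ with $x_{ij}^* = q_{ij}^*/p_j^*$ is a competitive equilibrium, and PPAD-membership follows from \Cref{thm:linoptgate} and \Cref{rem:linoptgate}.
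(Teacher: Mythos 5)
Your proposal follows the paper's broad template (Gale substitution, two optimization programs and one price feasibility program, excess-expenditure accounting $\sum_j e_j^* = 0$, then $e_j^* \geq 0$ via an $\varepsilon$-separation and strong-connectivity argument), but it has two genuine gaps, and you leave the single hardest step of the correctness argument unaddressed.

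First, you have not actually resolved the circuit-representation obstacle on the production side. You say you will ``analogously introduce expenditure variables for the production side so that the value $\mathbf{p}\cdot\mathbf{a}_k$ becomes linear,'' yet your $\mathcal{P}_2$ then minimizes $-\mathbf{p}^\transpose A\mathbf{y}$ over activity levels $\mathbf{y}$ --- those \emph{are} the natural units, not money units, so no substitution has occurred. The term that must feed into the consumer budget constraint is $\sum_f \theta_{if}\,\mathbf{p}^\transpose A^f\mathbf{y}^f$; with $\mathbf{y}$ and $\mathbf{p}$ both \circparams, this is a product of two circuit inputs and cannot be written in a \linear arithmetic circuit. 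Noting that profits are zero at equilibrium does not help: the circuit must be a valid \linear circuit at \emph{all} inputs, and at a fixed point you have not yet established that profits vanish --- proving that is precisely part of the correctness argument. The paper sidesteps this by making the firm's LP variable itself a money quantity $s_f = p_{\fin} y_f$, bounding it by $L\cdot p_{\fin}$ (so $p_{\fin}$ appears only as a coefficient and on the right-hand side, never multiplied against a variable), and then recovering the revenue as $r_f = s_f + \max\{0, c_f p_{\fout} - p_{\fin}\}\cdot L$, which is linear in $(s_f,\mathbf{p})$. That derivation of $r_f$ from $s_f$ is the key device; your proposal omits it, and without it the budget constraint of $\mathcal{P}_1$ is not expressible.

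Second, the claim ``a symmetric argument on the production side \dots will force $\mathbf{p}^*\cdot\mathbf{a}_k = 0$'' underestimates what is required. The paper's Claim~\ref{claim:prod-claim2} is not a mirror of the consumer argument: it requires the \emph{no production out of nothing and no vacuous production} condition, an a~priori global upper bound on equilibrium production derived from it (\Cref{rem:bounds-on-production}), and an induction along chains of goods $j_0,j_1,\dots,j_k$ where each $j_{\ell-1}$ is produced from $j_\ell$, with $L$ chosen large enough relative to the chain length $\leq m$ to close a contradiction at the end of the chain. None of this appears in your proposal. You also never mention the no-production-out-of-nothing condition or the economy-graph strong-connectivity condition for the production setting, both of which are used explicitly in the paper's \Cref{lem:production-markets-correctness-lemma}. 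Moreover, by modelling production through arbitrary net-production vectors $\mathbf{a}_k\in\mathbb{R}^m$ you generalize beyond the paper's single-input/single-output firms (a WLOG reduction for \emph{separable} linear technologies); this makes the chain argument, which relies on following one raw good to one produced good, harder rather than easier, and you do not indicate how you would control production loops or unbounded production cones in your more general model. In short: the high-level plan is the right one, but the specific mechanisms that make the budget constraint \linear-representable and that make the market-clearing proof close --- $r_f$ from $s_f$, the $L$-bounded expenditure, and the chain induction under no-production-out-of-nothing --- are all missing.
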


\noindent We provide the main definitions for Arrow-Debreu markets with linear utilities and linear productions below. To ensure that the section is self-contained, we have elected to fully define the setting, rather than to only highlight the changes to the definitions of exchange markets of \Cref{sec:exchange-markets-linear}. Still, to avoid unnecessary repetition, in the preprocessing and setup steps for the proof, we do refer to the appropriate arguments presented in \Cref{sec:exchange-markets-linear}.

\paragraph{Markets with Production.} In an Arrow-Debreu market with production $\mathcal{M}$, we have a set $N$ of consumers, a set $G$ of infinitely divisible goods, and a set $F$ of firms. Let $n=|N|$, $m=|G|$, and $\ell = |F|$. We will typically use index $i$ to refer to consumers, $j$ or $g$ to refer to goods and $f$ to refer to firms. Each consumer brings an endowment $w_i = (w_{i1}, \ldots, w_{im})$ to the market, with $w_{ij} \geq 0$ for all $i \in N$ and $j \in G$. We may assume without loss of generality that for every good $j$, we have $\sum_{i \in N} w_{ij} =1$, i.e., that the total endowment of each good is $1$. We will use $\mathbf{x}_i = (x_{i1}, \ldots, x_{im})$ to denote the vector of quantities of goods allocated to consumer $i \in N$ in $\mathcal{M}$, and we will call it the \emph{bundle} of consumer $i$. Let $\mathbf{x} = (\mathbf{x}_1, \ldots, \mathbf{x}_n)$ be the vector of such bundles. We will use $\mathbf{p} = (p_1, \ldots, p_m)$ to denote the vector of \emph{prices} in $\mathcal{M}$, one for each good $j \in G$. Prices are non-negative, so $p_j \geq 0$ for all $j \in G$. Given a vector of prices $\mathbf{p}$, the \emph{budget} of consumer $i \in N$ is defined as $\sum_{j \in G}w_{ij}p_j$; intuitively, this is the amount of money that the consumer acquires by selling her endowment at prices $\mathbf{p}$.  

\paragraph{Utility Functions.} Every consumer has a utility function $u_i : \mathbb{R}_{\geq 0}^m \rightarrow \mathbb{R}_{\geq 0}$ mapping a bundle $\mathbf{x}_i$ to a non-negative real number. In this section, these utilities are \emph{linear}, i.e., every consumer $i \in N$ has a utility $u_{ij}$ for every good $j \in G$, and her utility for the bundle $\mathbf{x}_i$ is $u_i(\mathbf{x}_i) = \sum_{j \in G} u_{ij}x_{ij}$, where $u_{ij}$ is the value of consumer $i$ for good $j$.  

\paragraph{Firm Shares.} Each consumer $i \in N$ has a share $\theta_{if} \in [0,1]$ of the profit of each firm $f \in F$. We assume that the profits are entirely shared among the consumers, i.e., for every firm $f \in F$, we have that $\sum_{i \in N}\theta_{if}=1$. 

\paragraph{Production Functions and Conversion Rates.} Each firm $f \in F$ produces a set of goods using a set of goods as raw material. Following \citet{SODA:GargV14}, for simplicity we will assume without loss of generality that each firm produces a single good. We may also assume without loss of generality that each firm uses a single good as raw material for the production. This because the production functions that we will consider are separable, and hence a firm using/producing multiple goods can be split into multiple firms using/producing single goods, with the shares of the agents' being duplicated (see also \citep{SODA:GargV14}). Given the above we will let:
\begin{itemize}
    \item[-] $\fout$ be the good produced by firm $f \in F$; we will refer to this as the \textbf{\emph{output good}},
    \item[-] $\fin$ be the good used as raw material by firm $f \in F$; we will refer to this as the \textbf{\emph{input good}}.
\end{itemize}

\noindent For every firm $f \in F$ there is a function $P_{f}$ which determines the firm's ability to produce units of the output good $\fout$ as a function of quantities of the input good $\fin$. In this section, we will assume that these functions are linear, of the form $P_{f}(y) = c_f \cdot y$, where $c_f \geq 0$ is a fixed \emph{conversion rate} for firm $f$, specifying that the firm can use $y$ units of good $\fin$ to produce $c_f \cdot y$ units of good $\fout$. Given a conversion rate $c_f$, and prices $p_{\fout}$ and $p_{\fin}$ for the goods $\fout$ and $\fin$ respectively, the \emph{profit} of $f$ from using $y$ units of $\fin$ to produce $c_f \cdot y$ units of $\fout$ is defined as $p_{\fout} \cdot c_f \cdot y - p_{\fin}\cdot y$.

\paragraph{Competitive Equilibrium.} We are now ready to define the notion of a competitive equilibrium in markets with production.

\begin{definition}[Competitive Equilibrium - Markets with Production]\label{def:prod-markets-competitive-equilibrium}
A competitive equilibrium of an Arrow-Debreu market with linear utilities and linear production functions is a triple $(\mathbf{p}^*,\mathbf{x^*}, \mathbf{y^*})$ consisting of non-negative prices $\mathbf{p}^*$, non-negative bundles $\mathbf{x}^* = (\mathbf{x}_1^*,\ldots, \mathbf{x}_n^*)$ and non-negative amounts of input goods $\mathbf{y}^* = (y_1^*, \ldots, y_\ell^*)$, such that
\begin{enumerate}
    \item $p^*_{\fout} \cdot c_f \cdot y_f^* - p^*_{\fin}\cdot y_f^*$ is maximized, for any firm $f \in F$. \hfill \textbf{\emph{(firm profit maximization)}} \label{con1-prod:firm-profit}
   \item $u_i(\mathbf{x}_i)$ is maximized for every consumer $i \in N$, subject to \\
   $\sum_{j \in G}p_j^*\cdot x_{ij}\leq \sum_{j \in G}p_j^*\cdot w_{ij}+\sum_{f \in F} \theta_{if}\cdot (p^*_{\fout} \cdot c_f \cdot y_f^* - p^*_{\fin}\cdot y_f^*)$.\hfill	\textbf{\emph{(bundle optimality)}} \label{con2-prod:optimality}
   \item $z_j^*\leq 0$, and $z_j^* p_j^* =0$, where $z_j^* = \sum_{i \in N} x_{ij}^*+\sum_{f \in F \colon \fin = j}y_{f}^*-\sum_{f \in F \colon \fout = j} c_f\cdot y_{f}^*-1$, \\for every good $j \in G$. \hfill	\textbf{\emph{(market clearing)}} \label{con4-prod:market-clearing}
\end{enumerate}
\end{definition}

\noindent Condition~\ref{con1-prod:firm-profit} requires that at the chosen set of prices $\mathbf{p}^*$, each firm maximizes its profit, given its production functions. Condition~\ref{con2-prod:optimality} requires that at the chosen set of prices $\mathbf{p}^*$, each consumer maximizes her utility subject to their budget constraints, where the budget consists of the amount earned from selling all the consumer's endowments $\sum_{j \in G}p_j^*w_{ij}$ and the profit share $\sum_{f \in F} \theta_{if}\cdot (p^*_{\fout} \cdot c_f \cdot y_f^* - p^*_{\fin}\cdot y_f^*)$ of the consumer from the production of the firms. Finally, Condition~\ref{con4-prod:market-clearing} is the market clearing condition, which requires that the total consumption of each good is at most the total production plus the total endowment of the consumers, and supply equals demand for all goods which are not priced at $0$. As we detail later on in the section, we may in fact assume without loss of generality that in any competitive equilibrium \emph{all the prices are positive}, and hence Condition~\ref{con4-prod:market-clearing} reduces to $z_j^* =0$ for all $j \in G$. Note that in Condition~\ref{con4-prod:market-clearing} we have used that $\sum_{\in N}w_{ij}=1$ for each good $j \in G$.

\paragraph{Sufficiency Conditions.} A competitive equilibrium as defined above exists for every market $\mathcal{M}$, under some sufficiency conditions. The weakest known such conditions are the ones provided by \citet{maxfield1997general} (see also \citep{SODA:GargV14}), which generalize the sufficiency conditions of \citet{eaves1976finite} that we used in \Cref{sec:exchange-markets-linear}. We use the following conditions, which are very close to that of \citet{maxfield1997general}.
\begin{enumerate}
    \item For every good $j \in G$, there exists some consumer $i \in N$ that values the good positively, i.e., $u_{ij} > 0$.
    \item For every consumer $i \in N$, there exists some good $j \in G$ that the consumer endows in a positive amount, i.e., $w_{ij} > 0$.
    \item Consider a graph $\mathcal{G}_F(\mathcal{M})$ in which the nodes are the goods, and an edge $(j,j')$ has weight 
    \[
    \alpha_{jj'}=\max_{f \in F\colon \fout=j', \fin = j}c_f,
    \]
    i.e., $j'$ can be produced from $j$ at conversion rate $\alpha_{jj'}$ by some firm $f \in F$. Then, for any cycle $C=(g_0, g_1), (g_1,g_2)\ldots, (g_{k-2},g_{k-1})$ of $G_F(\mathcal{M})$ the product of the weights of the edges is less than $1$, i.e., $\prod_{e \in C} \alpha_e < 1$.

    This condition is known as the \textbf{\emph{no production out of nothing and no vacuous production}} condition. Indeed, if $\prod_{e \in C} \alpha_e > 1$, then it would be possible to increase the quantity of some good, without decreasing the quantity of any other good. The case of $\prod_{e \in C} \alpha_e = 1$ refers to the case of vacuous production, which is also disallowed in our model.\footnote{\citet{maxfield1997general} technically allows for vacuous production. Vacuous production is disallowed in the conditions imposed in the original setting of \citet{arrow1954existence} and is also disallowed by \citet{SODA:GargV14}.} \label{cond3-prod-markets-no-prod-out-of-nothing}

    \item Consider the \emph{economy graph} $\mathcal{G}_E(\mathcal{M})$ of the market $\mathcal{M}$ in which the nodes are the consumers and the firms, and
    \begin{itemize}
        \item[-] there is an edge $(i,i')$ between consumer/firm-node $i$ and consumer-node $i'$ if $i$ endows/produces a good $j$ for which $u_{i'j}>0$,
        \item[-] there is an edge $(i,f)$ between consumer/firm-node $i$ and firm-node $f$, if $i$ endows/produces the raw good $\fin$ that $f$ uses for production. 
    \end{itemize}
    Then, $\mathcal{G}_E(\mathcal{M})$ contains a strongly connected component containing all the consumer-nodes. This condition generalizes that of the market non-reducibility condition that we used in \Cref{sec:exchange-markets-linear}.
\end{enumerate}

\begin{remark}[Bounds on Production]\label{rem:bounds-on-production}
Condition~\ref{cond3-prod-markets-no-prod-out-of-nothing} above imposes a bound on the total amount of a production of any firm $f \in F$ in a competitive equilibrium. Note that the production starts from finite endowments $\sum_{i \in N}w_{ij}=1$ for all $j \in G$. Since no firm operates at a loss at an equilibrium, any cycle of production would violate Condition~\Cref{cond3-prod-markets-no-prod-out-of-nothing}. Since there are no such cycles, production can take place along chains. The longest such chain is obviously bounded by $m$, and the maximum production of any firm can be bounded by some sufficienctly large global constant $L^c=$ (e.g., some constant such that $L^c \geq m^m (\max_{f}c_f+1)^n$. See also \citep{SODA:GargV14} for a very similar argument. Looking ahead, this will allow us to impose ``loose'' upper bounds on the production and consumption in the linear programs that we will devise, without compromising the existence of a competitive equilibrium.
\end{remark}

\paragraph{Optimality and bang-per-buck (BPB).}

Similarly to \Cref{sec:exchange-markets-linear}, the optimal bundles of Condition~\ref{con2-prod:optimality} in \Cref{def:prod-markets-competitive-equilibrium} are characterized by their \emph{bang-per-buck (BPB)}. Given consumer $i \in N$ and prices $\mathbf{p}$, the BPB of a good $j \in G$ is defined as $\text{BPB}_i(j)=\frac{u_{ij}}{p_j}$. An optimal bundle only contains non-zero quantities of goods for which the BPB is maximum, i.e., $j \in \arg\max_{j \in G} \text{BPB}_i(j)$. \medskip 

\noindent \emph{Bounds on the prices.} Again, similarly to \Cref{sec:exchange-markets-linear}, we may assume without loss of generality that all the prices are strictly positive, i.e., $p_j > 0$ for all goods $j \in G$. Indeed, not that for any good $j$, there is some consumer $i$ with $u_{ij} >0$. Then its price $p_j$ cannot be $0$ in any equilibrium. If $j$ is not being produced, then the demand of consumer $i$ cannot be satisfied. If its being produced, it has to be produced using an input good of $0$ price as well, for the production to be profitable. This in fact implies that it has to be produced along a chain of goods with $0$ price, contradicting the market clearing condition. Very similar arguments have been made in the related literature, e.g., see \citep{SODA:GargV14,garg2018substitution}. Note that by these assumptions, the quantity $\text{BPB}_i(j)$ is well defined for every $j \in G$. 

\begin{remark}[Normalized Prices]\label{rem:prod-markets-normalized-prices}
Given that $p_j > 0$ for all $j \in G$, we can normalize the prices to sum to $1$ without loss of generality, i.e., we may assume that for every good $j \in G$, we have that $\sum_{j \in G} p_j = 1$. 
\end{remark}

\noindent Again, in a similar manner to \Cref{sec:exchange-markets-linear}, we will use a parameter $\varepsilon$ to capture the fact that if the price $p_j$ for a good $j \in G$ is sufficiently smaller than the price $p_{j'}$ for a good $j' \in G$, then $\text{BPB}(j) > \text{BPB}(j')$. Specifically, we can compute $\varepsilon > 0$ such that
\[
\text{If } p_j \leq \varepsilon \cdot p_{j'} \text{ and } u_{ij} > 0 \text{ then } \text{BPB}_i(j) > \text{BPB}_i(j').
\]
Additionally, we can pick $\varepsilon$ to be sufficiently small such that $\varepsilon < \frac{w_{ij}}{m}$ for all $i \in N$ and all $j \in G$. Given $\varepsilon$, we will impose a stricter lower bound on the prices, which will be useful later on: in particular, we will assume that for all $j \in G$, $p_j \geq \frac{\varepsilon^m}{m}$.

\subsubsection{Preprocessing}

The high-level approach will again be to define a function $F$ from a convex compact domain $D$ to itself, and show that a competitive equilibrium of $\mathcal{M}$ can be recovered from a fixed point of $F$. We will need to show that $F$ can be computed by a \linear arithmetic circuit and hence, similarly to \Cref{sec:exchange-markets-linear}, we cannot work directly with the allocations $\mathbf{x}$ and the quantities of input goods $\mathbf{y}$ as in \Cref{def:prod-markets-competitive-equilibrium}, as those would need to be multiplied with the the prices $\mathbf{p}$, which will also be inputs to the circuit. To circumvent this, we will again apply a change of variables very similar to \emph{Gale's substitution} (see \Cref{rem:gale}) and work with the expenditure variables as we did in \Cref{sec:exchange-markets-linear}. In more detail, we will need such variables for the expenditure of the consumers as before, but we will also need variables for the expenditure and the revenue of the firms in production. The same variable change was used by \citet{SODA:GargV14}. 

We will then make use of our \linoptgate to obtain the equilibrium quantities as outcomes of linear programs and a feasibility program. In particular, linear programs $\mathcal{P}_1$ and $\mathcal{P}_1$ will be used to compute the optimal consumer and firm expenditures respectively, and from those the optimal bundles $\mathbf{x}$ and optimal quantities of input goods $\mathbf{y}$ will be recovered. 
The prices will be the outcome of a feasibility program $\mathcal{Q}$. 

\begin{figure}
\centering 
\fbox{
\centering
    \centering
    \begin{minipage}{.5\textwidth}
        \centering
        \underline{Linear Program $\mathcal{P}_1$}
        \begin{align*}%
	\mbox{minimize} & \sum_{i \in N}\sum_{j \in G} \frac{p_j}{u_{ij}}q_{ij}  \\
	\mbox{subject to}& \sum_{j \in G} q_{ij} = \sum_{j \in G}w_{ij}p_j + \sum_{f \in F} \theta_{if}\cdot (r_f-s_f) \\
        & 0 \leq q_{ij} \leq (C \cdot L + 1), \;\forall j \in G 
    \end{align*}
    \end{minipage}%
    \hfill\vline\hfill
    \begin{minipage}{.4\textwidth}
        \centering
        \underline{Linear Program $\mathcal{P}_2$}
        \begin{align*}%
	\mbox{maximize} \ \ \ \ & (c_f \cdot p_{\fout} - p_{\fin}) \cdot s_f  \\
	\mbox{subject to} \ \ \ \ & 0 \leq s_f \leq L\cdot p_{\fin} \\
    \hfill \\
    \hfill 
    \end{align*}
    \end{minipage}%
    }
\caption{The linear programs $\mathcal{P}_1$ and $\mathcal{P}_2$ used to recover optimal bundles (left), and optimal production (right) respectively.}
\label{fig:production-markets-P1-and-P2}
\end{figure}

\paragraph{Consumer expenditure.}
For the consumers, we will use the standard change of variables that we used in \Cref{sec:exchange-markets-linear} for exchange markets. Namely, we will let $q_{ij} = x_{ij} p_j$ be the \emph{expenditure} of consumer $i$ on good $j$, i.e., how much money the consumer spends on the good given an allocation $x_{ij}$ and a price $p_j$.

\paragraph{Firm expenditure.} For the firms, we will use a similar substitution for the case of production. Namely, we will let 
\begin{itemize}
    \item[-] $s_f = p_{\fin} \cdot y_f$ be the \emph{expenditure} of firm $f$ on its input good $\fin$, i.e., the amount of money spend on $y_f$ units of the input good. 
    \item[-] $r_f = p_{\fout}\cdot c_f \cdot y_f$ be the \emph{revenue} of firm $f$ from its output good $\fout$, i.e., the amount of money earned from $c_f\cdot y_f$ units of the output good.\footnote{We remark that \citet{SODA:GargV14} used $r_f$ to refer to the expenditure (spending) and $s_f$ to refer to the revenue. We have switched those so that the variables names are more indicative of the quantities that they represent.} 
\end{itemize}
\noindent By definition, the profit of firm $f \in F$ is then exactly $r_f - s_f$.

\paragraph{The program $\mathcal{P}_1$ for optimal consumer expenditure.} 

We can now write the linear program $\mathcal{P}_1$, the solution of which will give us the optimal expenditures $q_{ij}^*$ for the consumers; see the left-hand side of \Cref{fig:production-markets-P1-and-P2}. It now becomes evident why we need to apply the aforementioned substitution: without it the quantity $\sum_{f \in F} \theta_{if}\cdot (p_{\fout} \cdot c_f \cdot y_f - p_{\fin}\cdot y_f)$ would appear in the constraints of the linear program $\mathcal{P}_1$. The constants $p_\fin$, $p_\fout$ and $y_f$ are \circparams for the linear program, and such a constraint cannot be handled by our \linoptgate. With the \circparams $r_f$ and $s_f$ instead, the constraint becomes linear and thus respects the conditions of the \linoptgate. Note that the linear program $\mathcal{P}_1$ is parameterized by two constants $L$ and $C$. Intuitively, $C\cdot L + 1$ is an upper bound on the amount of money that consumer $i$ can spend on good $j$. Recall that there is a global upper bound $L^c$ on the total possible production, and as a result there are global bounds on the total possible expenditure for both production and consumption. We will set $L$ and $C$ to be sufficiently large to not constrain these global upper bound; the precise bounds are established in the proofs of \Cref{claim:prod-claim1} and \Cref{claim:prod-claim2}. 

Similarly to \Cref{sec:exchange-markets-linear}, in an optimal solution, the consumer is again only spending on goods with minimum ratio $\frac{p_j}{u_{ij}}$, i.e., with maximum BPB, therefore purchasing an optimal bundle. At the same time, the subgradient of this objective function can be computed by a \pseudog. 

\paragraph{The program $\mathcal{P}_2$ for optimal firm expenditure.} We are now ready to define the linear program for each firm $f \in F$, the optimal solution of which will be the optimal expenditure $s_f^*$ of firm $f$ on the amount $y_f^*$ of the input good that the firm uses for production; see the right-hand side of \Cref{fig:production-markets-P1-and-P2}. The linear program $\mathcal{P}_2$ is also parameterized by the constant $L$, which we mentioned above; intuitively, $L$ imposes an upper bound on the quantity of the input good that firm $f$ can use, and will be large enough to not constrain the global upper bound $L^c$ on production. 

\begin{remark}[$\mathcal{P}_2$ ensures optimal production]\label{rem:optimal-production}
At first glance, the objective function of linear program $\mathcal{P}_2$ might seem a bit unintuitive: why are we multiplying the expenditure with the profit? How do we guarantee that in the competitive equilibrium, each firm maximizes its profit? To provide some intuition, we remark that linear program $\mathcal{P}_2$ could equivalently be substituted by the following feasibility program $\mathcal{Q}(\mathcal{P}_2)$:

\begin{center}\underline{Feasibility Program $\mathcal{Q}(\mathcal{P}_2)$}\end{center}
\vspace{-5mm}
\begin{align*}
        & c_f \cdot p_{\fout} - p_{\fin} > 0 \Rightarrow s_f = L \cdot p_{\fin} \nonumber \\
	& c_f \cdot p_{\fout} - p_{\fin} < 0 \Rightarrow s_f = 0 \nonumber \\
	& 0 \leq s_f \leq L \cdot p_{\fin} \nonumber
\end{align*}
Indeed, linear program $\mathcal{P}_2$ has a very simple form, and hence in an optimal solution
\begin{itemize}
    \item[-] the firm spends as much as possible on the input good (i.e., $s_f$ is maximum), when it is profitable to produce, (i.e., when $c_f \cdot p_{\fout} - p_{\fin} > 0$),
    \item[-] the firm spends as little as possible on the input good (i.e., $s_f$ is $0$), when it is not profitable to produce, (i.e., when $c_f \cdot p_{\fout} - p_{\fin} < 0$),
    \item[-] the firm spends any feasible amount when producing or not producing yield the same profit (i.e., when $c_f \cdot p_{\fout} - p_{\fin} = 0$).
\end{itemize}
We could have in fact used $\mathcal{Q}(\mathcal{P}_2)$ rather than $\mathcal{P}_2$, as it is a valid feasibility program that can be solved using the \linoptgate. However, we elected to go the the linear program $\mathcal{P}_2$ instead, as it sets up the machinery that we will use in \Cref{sec:general-markets}, where the corresponding linear program will be more involved and cannot simply be substituted by a feasibility program as above.  

We remark that in a competitive equilibrium, we will establish that $c_f \cdot p_{\fout} - p_{\fin} \leq 0$, and hence the firm will never be required to use exactly $L$ units of $\fin$ for production. 
\end{remark}

\begin{remark}[Calculating $r_f$ from $s_f$]\label{rem:r_f-from-s_f}
Looking ahead, the terms $r_f$ which appear in the constraints of linear program $\mathcal{P}_1$ (and also in the feasibility program $\mathcal{Q}$, see below), will not be inputs to the function $F$ that we will construct to compute a competitive equilibrium. These will need to be recovered from the values of $s_f$, or more precisely, the terms themselves will need to be substituted with expressions of the term $s_f$. This can be done as follows:
\begin{equation}\label{eq:prod:rf-sf}
    r_f = s_f + \max\{0, c_f \cdot p_{\fout} - p_{\fin}\}\cdot L
\end{equation}
As we will only be concerned with the fixed point behavior of our function $F$, this equation needs to correctly capture the revenue of firm $f$ at a competitive equilibrium. Indeed, given our discussion above, we have that:
\begin{itemize}
    \item[-] When $c_f \cdot p_{\fout} - p_{\fin} < 0$, the firm does not spend any amount on producing and as a result it does not produce anything. In that case $r_f^* = s_f^* = 0$.
    \item[-] When $c_f \cdot p_{\fout} - p_{\fin} > 0$, the firm spends an amount of $s_f^* = L \cdot p_{\fin}$ for $L$ units of the input good and obtains a revenue of $r_f^* = L \cdot c_f \cdot p_{\fout}$. 
    \item[-] When $c_f \cdot p_{\fout} - p_{\fin} = 0$, the firm can spend any amount $s_f^*$ for production, and we have $r_f^* = s_f^*$ by definition. 
\end{itemize}
In each case, \Cref{eq:prod:rf-sf} correctly computes the value of $r_f^*$ from $s_f^*$ at a competitive equilibrium.
\end{remark}

\paragraph{Excess expenditure and the program $\mathcal{Q}$.}

Similarly to \Cref{sec:exchange-markets-linear}, the equilibrium prices $\mathbf{p}^*$ will be obtained via a feasibility program $\mathcal{Q}$. Before we define the program, we introduce the notion of excess expenditure; the definition is very similar to the one we used in \Cref{sec:exchange-markets-linear}, except that it now takes into account the expenditure/revenue due to production. The \emph{excess expenditure} $e_j$ of a good $j \in G$ is defined as the difference between the total expenditure of all consumers $i \in N$ and firms $f \in F$ for that good and the price of the good plus the total revenue of firms $f \in F$ from that good, i.e., 
\[
e_j = \sum_{i \in N}q_{ij} +\sum_{f \in F \colon \fin=j}s_f - p_j - \sum_{f \in F \colon \fout=j}r_f.
\]
Then, at equilibrium prices $p_j^*$ we will have market clearing, i.e., $e_j^* = 0$. \medskip

\noindent We are now ready to define our feasibility program $\mathcal{Q}$; see \Cref{fig:production-markets-Q}. The equilibrium prices will be obtained as the output $\mathbf{p}^{*}$ of $\mathcal{Q}$. 

\begin{figure}
\centering 
\fbox{
\begin{minipage}{0.5\textwidth}
\centering
        \underline{Feasibility Program $\mathcal{Q}$}
        \begin{align*}
	& e_j<e_{j'}\Rightarrow p_j\leq \varepsilon \cdot p_{j'},\;\forall j,j' \in G \nonumber \\
	& p_j\geq \frac{\varepsilon^m}{m},\;\forall j \in G\\
	& \sum_{j \in G}p_j = 1\nonumber
        \end{align*}
\end{minipage}
}
\caption{The feasibility program $\mathcal{Q}$ used to find market-clearing prices.}
\label{fig:production-markets-Q}
\end{figure}

\subsubsection{Membership in PPAD: The proof of \crtCref{thm:production-market-PPAD}.}

\noindent We will again develop the proof in three steps, namely (a) construction of the function $F$ and arguing that it can be represented by a \linear arithmetic circuit containing \linoptgates, (b) showing that the \linoptgate can compute all the necessary components, and (c) arguing that a competitive equilibrium can be recovered from a fixed point of $F$. 

\paragraph{The function $F$.} Given the above, we will define $F: D \rightarrow D$ with domain 
\[D = \left\{p \in \Delta^{m-1}: \forall j \in G, p_j \geq \frac{\varepsilon^m}{m}\right\} \times [0,C\cdot L + 1]^{nm} \times [0,L]^{\ell}.\]
An input to $F$ is a triple $(\bar{\mathbf{p}}, \bar{\mathbf{q}}, \bar{\mathbf{s}})$ of prices, consumer expenditures and firm expenditures, where $\bar{\mathbf{q}} = (\bar{q}_{ij})_{i \in N, j \in G}$ and $\bar{\mathbf{s}}=(\bar{s}_1,\ldots \bar{s}_\ell)$, and the output is another such triple $(\mathbf{p}^*, \mathbf{q}^*, \mathbf{s}^*)$. The domain $D$ is the one above since $\sum_{j \in G} \bar{p}_j = 1$ (see \Cref{rem:prod-markets-normalized-prices}), and by the upper bounds of $C\cdot L +1$ and $L$ set on the consumer expenditures in linear program $\mathcal{P}_1$ and the firm expenditures in linear program $\mathcal{P}_2$ respectively. 

\paragraph{Computation by the \linoptgate.} Next, we argue that the solutions to the linear programs $\mathcal{P}_1$ and $\mathcal{P}_2$ and the feasibility program $\mathcal{Q}$ of \Cref{fig:exchange-markets-P-and-Q} can be computed by our \linoptgate.

\begin{lemma}\label{lem:productionMarkets-OPT-gate-can-compute}
Consider the linear programs $\mathcal{P}_1$, $\mathcal{P}_2$ of \Cref{fig:production-markets-P1-and-P2}, and the feasibility program $\mathcal{Q}$ of \Cref{fig:production-markets-Q}. An optimal solution to $\mathcal{P}_1$ and $\mathcal{P}_2$ can be computed by the \linoptgate. $\mathcal{Q}$ is solvable, and a solution can be computed by the \linoptgate. 
\end{lemma}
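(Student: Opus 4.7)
The plan is to verify, for each of the three programs, the conditions laid out in \Cref{sec:linopt-applications} under which the \linoptgate can solve them. For the optimization programs $\mathcal{P}_1$ and $\mathcal{P}_2$ I will check non-emptiness of the feasible domain, convexity of the objective together with a subgradient computable by a \pseudog, and that \circparams appear only on the right-hand side of the constraints. For the feasibility program $\mathcal{Q}$ I will verify that only \circparams appear on the left-hand sides of the conditional constraints (and only in appropriate positions elsewhere), and I will prove solvability through the \qgraph framework of \Cref{lem:feasibility-of-q-graph}.

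For $\mathcal{P}_1$ the variables are the $q_{ij}$, while $\bar{p}_j$ and $\bar{s}_f$ are \circparams and $\bar{r}_f$ is a \linear arithmetic circuit in those \circparams via Equation~\eqref{eq:prod:rf-sf}. The objective coefficient $\bar{p}_j/u_{ij}$ of the variable $q_{ij}$ is linear in the \circparam $\bar{p}_j$ (with fixed multiplier $1/u_{ij}$), so the subgradient is computable by a \linear arithmetic circuit; the objective is convex since it is linear in the variables. All \circparams appear only on the right-hand side of the equality constraint. To argue non-emptiness of the feasible domain I will observe that $\bar{r}_f - \bar{s}_f = \max\{0, c_f \bar{p}_{\fout} - \bar{p}_{\fin}\} \cdot L \geq 0$ directly from Equation~\eqref{eq:prod:rf-sf}, and that $\sum_j w_{ij} \bar{p}_j \geq 0$, so the right-hand side of the equality constraint is non-negative; using the global production bound from \Cref{rem:bounds-on-production}, the parameters $C$ and $L$ can be fixed once at circuit-construction time large enough that this right-hand side is also at most $m(CL+1)$ across the entire input domain $D$. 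One subtlety is that $\bar{p}_j/u_{ij}$ is undefined when $u_{ij}=0$; I will handle this by wiring in a sufficiently large constant as the coefficient in that case, so that $q_{ij}=0$ in every optimal solution, preserving the bang-per-buck interpretation.

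For $\mathcal{P}_2$ the sole variable is $s_f$; the subgradient of the linear objective with respect to $s_f$ is the constant expression $c_f \bar{p}_{\fout} - \bar{p}_{\fin}$, trivially a \linear arithmetic circuit in the \circparams; the feasible domain $[0, L \bar{p}_{\fin}]$ is non-empty (it contains $0$); and the only non-trivial constraint has the \circparam $\bar{p}_{\fin}$ only on its right-hand side. For $\mathcal{Q}$ the variables are the prices $p_j$, while the excess expenditures $\bar{e}_j$ are computable from the \circparams $\bar{q}_{ij}, \bar{s}_f, \bar{r}_f, \bar{p}_j$, so the antecedents $\bar{e}_j < \bar{e}_{j'}$ involve only \circparams; the consequents $p_j \leq \varepsilon p_{j'}$ rewrite as linear constraints $p_j - \varepsilon p_{j'} \leq 0$ in the variables alone. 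Setting $h_k(y) := -\bar{e}_k$ and $\rho := \varepsilon$, the program fits the form $\mathcal{Q}_\text{app}$; its \qgraph has an edge $(j,j')$ exactly when $\bar{e}_j < \bar{e}_{j'}$, which is a strict linear order on real values and therefore acyclic, so \Cref{lem:feasibility-of-q-graph} yields solvability.

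The least routine step in this plan is the feasibility check for $\mathcal{P}_1$: I must simultaneously establish non-negativity of the right-hand side of the equality constraint (which critically relies on the substitution~\eqref{eq:prod:rf-sf} rather than on any fixed point property) and pick $C$ and $L$ at circuit-construction time to accommodate the worst case of that right-hand side over the full bounded input domain~$D$. The remaining checks --- linearity of the various subgradients, correct placement of \circparams versus variables, and acyclicity of the \qgraph --- reduce to direct inspection.
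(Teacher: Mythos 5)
Your proposal is correct and takes essentially the same approach as the paper's (terse) proof: verify the \linoptgate conditions for $\mathcal{P}_1$ and $\mathcal{P}_2$, and for $\mathcal{Q}$ reduce to the $\mathcal{Q}_\text{app}$ form and invoke \Cref{lem:feasibility-of-q-graph} via acyclicity of the \qgraph, exactly as in \Cref{lem:ExchangeMarkets-OPT-gate-can-compute}. You go somewhat beyond the paper on two points it glosses over: you explicitly justify non-emptiness of $\mathcal{P}_1$'s feasible domain by observing $\bar{r}_f - \bar{s}_f = \max\{0,c_f\bar p_{\fout}-\bar p_{\fin}\}\cdot L\geq 0$ and by fixing $C,L$ at construction time so the budget never exceeds $m(CL+1)$ (the paper merely asserts non-emptiness, and even mislabels the domain as $[0,1]^n$), and you flag and patch the $u_{ij}=0$ degenerate coefficient, which the paper silently assumes away.
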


\begin{proof}
For the linear program $\mathcal{P}_1$, the feasible domain $[0,1]^n$ is non-empty and bounded. The \circparams $p_j$ and $s_f$ appear only on the right-hand side of the constraints, and the subgradient of the objective function is linear, and hence can be given by a \pseudog. For the linear program $\mathcal{P}_2$, the feasible domain $[0,L]$ is non-empty and bounded, and the \circparams $p_{\fin}$ only appear on the right-hand side of the constraints. The subgradient of the objective function is linear, and hence can be given by a \pseudog. For the feasibility program $\mathcal{Q}$, the arguments that it is of the correct form and that it is solvable are identical to those of \Cref{lem:ExchangeMarkets-OPT-gate-can-compute}, noting the updated definition of the expenditure $e_j$. 
\end{proof}

\paragraph{Arguing optimality and market clearing.} To conclude the proof, what is left to show is that a fixed point $(\mathbf{p}^*, \mathbf{q}^*, \mathbf{s}^*)$ of $F$ indeed corresponds to a competitive equilibrium. We argue that in the following lemma.

\begin{lemma}\label{lem:production-markets-correctness-lemma}
Let $(\mathbf{p}^*, \mathbf{q}^*, \mathbf{s^*})$ be a fixed point of $F$. Then the triple $(\mathbf{p}^*, \mathbf{x}^*,\mathbf{y}^*)$, where $x_{ij}^* = q_{ij}^*/p_j^*$ and $y_f^* = s_f^*/p_{\fin}^*$ is a competitive equilibrium of $\mathcal{M}$.
\end{lemma}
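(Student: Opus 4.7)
The plan is to verify the three conditions of \Cref{def:prod-markets-competitive-equilibrium} at a fixed point $(\mathbf{p}^*, \mathbf{q}^*, \mathbf{s}^*)$, with the market-clearing condition being the main technical step. The firm profit maximization and bundle optimality conditions both follow essentially from the construction of the linear programs of \Cref{fig:production-markets-P1-and-P2}, once I have established a ``zero-profit lemma'' stating that $c_f \cdot p_{\fout}^* \leq p_{\fin}^*$ at the fixed point: then the objective of $\mathcal{P}_2$ and the identity of \Cref{rem:r_f-from-s_f} force $r_f^* = s_f^*$ and $r_f^* = c_f \cdot p_{\fout}^* \cdot y_f^*$, after which firm profit maximization is immediate (see \Cref{rem:optimal-production}); and the objective of $\mathcal{P}_1$ makes each consumer spend only on goods of maximum bang-per-buck, with equality in the budget constraint, reducing bundle optimality to the same argument as in \Cref{lem:exchange-markets-correctness-lemma}.

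I would prove the zero-profit lemma by contradiction: suppose $c_f \cdot p_{\fout}^* > p_{\fin}^*$ for some firm $f$. Then $\mathcal{P}_2$ forces $s_f^* = L \cdot p_{\fin}^*$ and hence $r_f^* = L \cdot c_f \cdot p_{\fout}^*$. With $L$ chosen much larger than the global production bound $L^c$ of \Cref{rem:bounds-on-production} and using $p_j^* \geq \varepsilon^m/m$, this makes $e_{\fin}^*$ extremely positive and $e_{\fout}^*$ extremely negative, so in particular $e_{\fout}^* < e_{\fin}^*$. The feasibility program $\mathcal{Q}$ then yields $p_{\fout}^* \leq \varepsilon \cdot p_{\fin}^*$, and choosing $\varepsilon$ smaller than $1/\max_f c_f$ gives $c_f \cdot p_{\fout}^* \leq c_f \varepsilon \cdot p_{\fin}^* < p_{\fin}^*$, a contradiction.

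The main step is market clearing: $e_j^* = 0$ for every $j \in G$, which (together with $r_f^* = c_f \cdot p_{\fout}^* \cdot y_f^*$) is equivalent to $z_j^* = 0$. Aggregating the equality budget constraints of $\mathcal{P}_1$ over $i$, using $\sum_i w_{ij} = 1$, $\sum_i \theta_{if} = 1$, zero firm profits, and $\sum_j p_j^* = 1$, gives the identity $\sum_j e_j^* = 0$. Assume for contradiction that $J := \arg\min_j e_j^* \subsetneq G$ with $e_j^* < 0$ on $J$. The zero-profit lemma, combined with $\mathcal{Q}$'s price separation, has three consequences for firms: no firm has $\fin \in J, \fout \in G \setminus J$, because $p_{\fin}^* \leq \varepsilon \cdot p_{\fout}^*$ would make $c_f \cdot p_{\fout}^* - p_{\fin}^* \geq (c_f - \varepsilon) \cdot p_{\fout}^* > 0$ and break zero profit; firms with $\fin \in G \setminus J, \fout \in J$ are strictly unprofitable and so $s_f^* = r_f^* = 0$; and firms entirely inside $J$ contribute $s_f^* - r_f^* = 0$. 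All firm terms thus cancel in $\sum_{j \in J} e_j^*$, which reduces to $\sum_{i \in N_J} \sum_j w_{ij} p_j^* - \sum_{j \in J} p_j^*$, where $N_J$ is the set of consumers with positive utility for some good in $J$ (consumers outside $N_J$ spend nothing on $J$).

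The final ingredient is to exhibit a consumer $i_0 \in N_J$ with $w_{i_0 j'} > 0$ for some $j' \in G \setminus J$, which is guaranteed by the strong connectivity of the economy graph $\mathcal{G}_E(\mathcal{M})$: if every consumer in $N_J$ endowed only goods in $J$, then $N_J$ together with the firms operating only inside $J$ would form a proper closed subeconomy (the relevant firms contribute no edges leaving this set by the firm case-analysis above), contradicting the strong-connectivity sufficiency condition. Once such $i_0$ is identified, the choice $\varepsilon < w_{i_0 j'}/m$ together with $p_j^* \leq \varepsilon \cdot p_{j'}^*$ for every $j \in J$ yields $w_{i_0 j'} \cdot p_{j'}^* \geq m \cdot \max_{j \in J} p_j^* \geq \sum_{j \in J} p_j^*$, so $\sum_{i \in N_J} \sum_j w_{ij} p_j^* > \sum_{j \in J} p_j^*$, contradicting $\sum_{j \in J} e_j^* < 0$. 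The main obstacle is precisely this market-clearing step: the presence of firms complicates the bookkeeping relative to \Cref{lem:exchange-markets-correctness-lemma}, and the key new insight is that the zero-profit lemma, combined with the $\mathcal{Q}$-induced price separation, makes all firm contributions cancel on $J$, reducing the argument to a ``virtual exchange market'' on $J$ where the strong-connectivity argument can be applied in the spirit of the non-reducibility argument of \Cref{lem:exchange-markets-correctness-lemma}.
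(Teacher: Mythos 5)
Your proposal restructures the proof around a \emph{global zero-profit lemma} ($c_f p_{\fout}^* \leq p_{\fin}^*$ for every firm at the fixed point), proved first, and then uses it to cancel all firm contributions in $\sum_{j \in J} e_j^*$. The paper, by contrast, proves no such global lemma: it establishes two localized structural claims (\Cref{claim:prod-claim1}, that consumers in $N_J$ endow nothing outside $J$, and \Cref{claim:prod-claim2}, that firms with input in $J$ also have output in $J$) and derives a contradiction directly from strong connectivity. Once the zero-profit lemma is granted, the remainder of your argument (the $\mathcal{Q}$-induced price-separation case analysis for firms, the cancellation on $J$, and exhibiting $i_0$ via strong connectivity) is structurally correct and arguably a cleaner bookkeeping than the paper's two-claim path.

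The gap is precisely in your proof of the zero-profit lemma. You argue that if $c_f p_{\fout}^* > p_{\fin}^*$ then $s_f^* = L p_{\fin}^*$ and $r_f^* = L c_f p_{\fout}^*$ force $e_{\fin}^*$ to be ``extremely positive'' and $e_{\fout}^*$ ``extremely negative.'' This does not follow by scale alone: other firms can also be operating at $\Theta(L)$ scale and contribute $\Theta(L)$ terms of the opposite sign to $e_{\fin}^*$ (namely $-r_{f'}^*$ for firms $f'$ producing $\fin$) and to $e_{\fout}^*$ (namely $+s_{f''}^*$ for firms $f''$ consuming $\fout$). Since $p_j^*$ can be as small as $\varepsilon^m/m$ while other prices are $\Theta(1)$, the leading coefficient of $L$ in $e_{\fin}^* - e_{\fout}^*$ has no definite sign without controlling the entire production chain. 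Moreover, invoking the ``global production bound $L^c$'' is circular here: $L^c$ is the bound that holds \emph{at a competitive equilibrium}, but at this stage we do not yet know the fixed point is one; that bound becomes available only after market clearing is proved. In other words, the cascade of $L$-scale production along profitable firms is exactly the difficulty, and your zero-profit lemma has simply relocated it rather than dissolved it. The paper resolves this with the inductive chain argument in the proof of \Cref{claim:prod-claim2}, which tracks the $L$-scale terms hop by hop (bounding the chain length by $m$ via the no-production-out-of-nothing condition) and closes with an $e_{j_k}^* \geq 0$ contradiction; your lemma would require essentially the same cascade to prove, so the claimed shortcut does not stand as written.
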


\begin{proof}
By the form of the objective function of the linear program $\mathcal{P}_1$ of \Cref{fig:production-markets-P1-and-P2}, each consumer spends money only on goods with maximum BPB and therefore the allocation that she receives is only one of goods for which she has maximum BPB. Therefore, the consumer receives an optimal bundle which satisfies Condition~\ref{con2-prod:optimality} of definition in \Cref{def:prod-markets-competitive-equilibrium}. The allocation quantities $x_{ij}^*$ can straightforwardly be recovered from the values of $q_{ij}^*$. As we discussed earlier in \Cref{rem:optimal-production}, an optimal solution to linear program $\mathcal{P}_2$ also results in the optimal expenditure for the firms.

What remains to show is that $\mathbf{p^*}$ is a vector of market-clearing prices. By the definition of $e_j$, this is equivalent to arguing that for all $j \in G$, we have that $e_j^* = 0$. We first argue that $\sum_{j \in G}e_j^*=0$. Indeed:
\begin{align*}
\sum_{j \in G} e_j^* &= \sum_{j \in G} \left(\sum_{i \in N} q_{ij}^* - p_j^* +\sum_{f \in F \colon \fin=j}s_f^* - \sum_{f \in F \colon \fout=j}r_f^* \right) \\ 
&= \sum_{i \in N} \sum_{j \in G} q_{ij}^* - \sum_{j \in G}p_j^* + \sum_{f \in F}(s_f^*-r_f^*)  \\ 
&= \sum_{j \in G} \sum_{i \in N} w_{ij}p_j^* - 1 + \sum_{i \in N}\sum_{f \in F}\theta_{if}\cdot (r_f^* - s_f^*) + \sum_{f \in F}(s_f^*-r_f^*) \\ 
& = \sum_{j \in G}p_j^*\cdot \sum_{i \in N}w_{ij} - 1 + \sum_{f \in F}(r_f^* - s_f^*)\cdot \sum_{i \in N}\theta_{if}\cdot  \sum_{f \in F}(s_f^*-r_f^*) \\
& = \sum_{i\in N}w_{ij}-1 + \sum_{f \in F}(r_f^* - s_f^*) + \sum_{f \in F}(s_f^* - r_f^*) \\
& = 0,
\end{align*}
where in the calculations above we used:
\begin{itemize}
    \item[-] in Equation 2, that $\sum_{j \in G}\sum_{f \in F \colon \fin = j}s_f^* = \sum_{f \in F}s_j^*$ and that $\sum_{j \in G}\sum_{f \in F \colon \fout = j}r_f^* = \sum_{f \in F}r_f^*$,
    \item[-] in Equation 3, that $\sum_{i \in N} \sum_{j \in G}q_{ij}^* = \sum_{i\in N}\sum_{j \in G}w_{ij}p_j^* + \sum_{i \in N} \sum_{f \in F}\theta_{if}(r_f^*-s_f^*)$, which follows from the constraints of linear program $\mathcal{P}_2$ of \Cref{fig:production-markets-P1-and-P2},
    \item[-] in Equations 3 and 5, that $\sum_{j \in G}p_j^* =1$, which is without loss of generality, see \Cref{rem:prod-markets-normalized-prices},
    \item[-] in Equation 5, that $\sum_{i \in N}\theta_{if} = 1$ for every $f \in F$, by definition, and
    \item[-] in Equation 6, that $\sum_{i \in N}w_{ij} = 1$, which is without loss of generality.
\end{itemize}
\noindent From the above, it suffices to prove that $e_j^* \geq 0$, for all $j \in G$. Assume by contradiction that there exists some $j_1 \in G$ such that $e_{j_1}^* < 0$; we will obtain a contradiction to the strong connectivity of the economy graph of the market. We define the following three sets:
\begin{itemize}
    \item[-] $J = \{j \in G \colon e_j^* \leq e_{j'}^*, \text{ for all } j' \in G\}$. In other words, $J$ is the set of goods with minimum excess expenditure.
    \item[-] $N_J = \{i \in N \colon \text{ there exists } j \in J \text{ such that } u_{ij} > 0\}$. In other words, $N_J$ contains the set of consumers that value at least one item in $J$ positively.
    \item[-] $F_J=\{f \in F \colon \fin \in J\}$. In other words, $F_J$ contains the set of firms for which the input good is in $J$. 
\end{itemize}
Since $e_{j_1}^* < 0$, by the definition of $J$ we know that
\begin{equation}\label{eq:prod-min-expenditure}
\text{for all } j \in J, \text{ we have } e_j^* < 0.
\end{equation}
Also, since $e_{j_1}^* < 0$, and $\sum_{j \in G}e_j^*=0$, there must exist some other good $j_2$ such that $e_{j_2}^* > 0$. In particular, that implies that $J$ is a strict subset of $G$, i.e., $J \subsetneq G$.
We state and prove the following claim.

\begin{claim}\label{claim:prod-claim1}
For all $i \in N_J$ and all $j' \in G\setminus J$, it holds that $w_{ij'}=0$.
\end{claim}

\begin{proof}
Assume by contradiction that there exists some consumer $i_0 \in N_J$ and some good $j' \in G\setminus J$ such that $w_{i_0j'}>0$. We will show that this implies that there exists some $j_\ell \in J$ with $e_{j_\ell}^* \geq 0$, contradicting Statement~\eqref{eq:prod-min-expenditure}. Since $\mathbf{p}^*$ is a solution to feasibility program $\mathcal{Q}$ of \Cref{fig:production-markets-Q}, we have that $p_j^* \leq \varepsilon \cdot p_{j'}^*$, for every $j \in J$. Hence, by choosing $\varepsilon$ to be sufficiently small, we may lower-bound the budget of consumer $i$ by
\begin{equation*}
w_{i_0j'} \cdot p_{j'}^* \geq \frac{w_{i_0j'}}{\varepsilon \cdot m} \cdot \sum_{j \in J}p_j^* \geq (C\cdot L+1)\sum_{j \in J} p_j^*
\end{equation*}
Since $i_0 \in N_J$, by the choice of $\varepsilon$, there is at least one good $g \in J$ such that $\text{BPB}_{i_0}(g) > \text{BPB}_{i_0}(g')$, for all $g' \in G\setminus J$, i.e., the consumer prefers to buy quantities of good $g$ rather than any good which is not in the set $J$. Let $j \in J$ be a good with maximum BPB for consumer $i_0$. Since $q_{i_0j}^*$ is an optimal solution to linear program $\mathcal{P}_1$ of \Cref{fig:production-markets-P1-and-P2}, it should be that $q_{i_0j}^* = C \cdot L +1$, which implies that $q_{i_0j}^* \geq (C \cdot L+1)p_j^*$. By substituting that into the definition of the excess expenditure $e_j^*$, we obtain:
\begin{align*}
e_j^* &= \sum_{i \in N}q_{ij}^* +\sum_{f \in F \colon \fin=j}s_f^* - p_j^* - \sum_{f \in F \colon \fout=j}r_f^* \\
&\geq q_{i_0j}^* +\sum_{f \in F \colon \fin=j}s_f^* - p_j^* - \sum_{f \in F \colon \fout=j}r_f^* \\
&\geq (C\cdot L)p_j^* - \sum_{f \in F \colon \fout=j}r_f^*
\end{align*}
Therefore, it suffices to show that $\sum_{f \in F \colon \fout=j}r_f^* \leq (C\cdot L)p_j^*$; in that case we will have $e_j^* \geq 0$ and will obtain a contradiction, by setting $j = j_\ell$. Now consider any firm $f \in F$. Referencing also \Cref{rem:optimal-production}, note that \begin{itemize}
    \item[-] if $c_f \cdot p_{\fout}^* < p_{\fin}^*$, then the firm does not produce anything, and we have $r_f^*=0$,
    \item[-] if $c_f \cdot p_{\fout}^* > p_{\fin}^*$, the firm produces $L$ units of $\fout$ and we have $r_f^* = L \cdot c_f \cdot p_{\fout}$,
    \item[-] if $c_f \cdot p_{\fout}^* > p_{\fin}^*$, the firm may produce any amount and we have $r_f^* = s_f^* \leq L \cdot c_f \cdot p_{\fout}$, since $s_f^*$ is a feasible solution to linear program $\mathcal{P}_2$ of \Cref{fig:production-markets-P1-and-P2}.
\end{itemize} 
\noindent By letting $C \geq |F| \cdot \max_{f \in F} c_f$, it follows that $\sum_{f \in F \colon \fout=j}r_f^* \leq (C\cdot L)p_j^*$ and we obtain a contradiction.
\end{proof}

\noindent \Cref{claim:prod-claim1} establishes that a consumer in the set $N_J$ cannot endow any good that is not in the set $J$ in a positive quantity. Since $J \subsetneq G$, this implies that $N_J \subsetneq N$, since every good is positively endowed by some consumer $i \in N$. Now consider any consumer $i \in N_J$ and any consumer $i' \in N\setminus N_J$ or any firm $f \in F\setminus F_J$. \Cref{claim:prod-claim1} implies that there cannot be an edge $(i,i')$ or an edge $(i,f)$ in the economy graph of the market. Indeed, for $(i,i')$, notice that by the claim consumer $i$ only endows positively goods that are in $J$, but consumer $i'$ has value $0$ for them by virtue of being in $N\setminus N_J$. Similarly, for $(i,f)$, $f$ uses good $\fin \in G\setminus J$ by definition, but consumer $i$ only positively endows goods in $J$. \medskip

\noindent Next, we state and prove the following claim.

\begin{claim}\label{claim:prod-claim2}
For any $f \in F_J$, it holds that $\fout \in J$.
\end{claim}

\begin{proof}
Assume by contradiction that there exists some firm $f_0 \in F_J$ with $g_{f_0}^{\text{out}} \notin J$. For convenience, let $j = g_{f_0}^{\text{out}}$ be that good, and let $j_0 = g_{f_0}^{\text{in}}$ be the firm's input good. Note that by definition of $F_J$, we have that $j_0 \in J$. By Statement~\eqref{eq:prod-min-expenditure}, we have that $e_{j_0}^* < 0$. Since $\mathbf{p}^*$ is a solution to feasibility program $\mathcal{Q}$ of \Cref{fig:production-markets-Q}, we have that $p_{j_0}^* \leq \varepsilon \cdot p_{j}^*$. The value of $\varepsilon$ can be chosen sufficiently small for this to imply that $c_{f_0} \cdot p_{j}^* - p_{j_0}^* >0$. Since this is the multiplier of the variable $s_{f_0}$ in the objective function of linear program $\mathcal{P}_2$, the optimality of $s_{f_0}^*$ as a solution to the linear program implies that $s_{f_0}^* = L \cdot p_{j_0}^*$. Using this we may bound the expenditure $e_{j_0}^*$ as follows:
\begin{align*}
    e_{j_0}^* &\geq L\cdot p_{j_0}^* - \sum_{f \in F \colon \fout=j_0}r_f^* - p_{j_0}^* \\
    &= (L-1)p_{j_0}^* - \sum_{f \in F \colon \fout=j_0}r_f^*
\end{align*}
where in the calculations above we used the obvious fact that $\sum_{f \in F \colon \fin=j}s_f^* \geq s_{f_0}$. 

Obviously, if good $j_0$ is not being produced by any firm $f \in F$, we can simply set $L \geq 1$ and obtain that $e_{j_0}^* \geq 0$, a contradiction, since $r_f^* = 0$ for all firms that have $j_0$ as their output good. Now consider the sequence of goods $j_0,j_1,\ldots,j_k$, where good $j_{\ell-1}$ is being produced (in positive quantity) by some firm $f_{\ell}$ using $j_{\ell}$ as the input good. Note that by the discussion above, we may assume that $k \geq 1$. Additionally, by the ``no production out of nothing and no vacuous production'' feasibility condition of the market $\mathcal{M}$, it also holds that $k \leq m$, as otherwise the graph $G_F(\mathcal{M})$ would have a cycle in which the product of the weight of the edges would be at least $1$. 

Finally, we remark that for each $\ell \in \{0,1,\ldots,k\}$, we have that $j_{\ell} \in J$. To see this, observe that from the conditional constraints of feasibility program $\mathcal{Q}$ of \Cref{fig:production-markets-Q}, the price for any good in $J$ is at least $\varepsilon$ times smaller than the price of any good in $G\setminus J$. By taking $\varepsilon$ to be small enough, we can guarantee that if a firm $f$ is producing a good $\fout \in J$, then it must use a good $\fin \in J$ as input, otherwise $s_f^*$ would not be an optimal expenditure. Since $j_0 \in J$ by the definition of $F_J$, it must hold that $j_\ell \in J$ for all $\ell \in \{0,1,\ldots,k\}$.  \medskip

\noindent We will argue by induction that for $1 \leq \ell' \leq k$, we have:
\begin{equation}\label{eq:prod-markets-induction}
r_{f_\ell}^* \geq \left(\frac{L-1}{|F|^{\ell} \cdot \prod_{i=1}^{\ell-1}c_{f_i}} - \sum_{i=1}^{\ell-1} \frac{1}{|F|^i \cdot \prod_{j = \ell - i +1}^{\ell-1}c_{f_{j}}}\right) \cdot p_{j_{\ell-1}}^*
\end{equation}

\noindent With this at hand, at step $k$ we will have
\[
e_{j_k}^* \geq \left(\frac{L-1}{c_{f_k}\cdot|F|^{k} \cdot \prod_{i=1}^{k-1}c_{f_i}} - \sum_{i=1}^{k-1} \frac{1}{c_{f_k}\cdot |F|^i \cdot \prod_{j = k - i +1}^{k-1}c_{f_{j}}} -1\right)p_{j_k}^* - \sum_{f \in F \colon \fout=j_k}r_f^*
\]
Since $k$ is the last index in the sequence, i.e, good $j_{k-1}$ is the last to be produced, we know that $r_f^* =0$ for any firm $f$ that has $j_k$ as its output good. From the above, we have that  
\[
e_{j_k}^* \geq \left(\frac{L-1}{c_{f_k}\cdot|F|^{k} \cdot \prod_{i=1}^{k-1}c_{f_i}} - \sum_{i=1}^{k-1} \frac{1}{c_{f_k}\cdot |F|^i \cdot \prod_{j = k - i +1}^{k-1}c_{f_{j}}} -1\right)p_{j_k}^*
\]
By picking $L$ to be sufficiently large, e.g., large enough for the following to hold
\[
\frac{L}{(|F|\cdot \max_{f}c_f)^m} - \frac{m}{(\min_{f}c_f)^m} \geq 0,
\]
we obtain a contradiction. We compute the proof below by proving Inequality~\eqref{eq:prod-markets-induction}.\medskip 

\noindent \textbf{Base Case:} Let $\ell =1$, and consider firm $f_1$ that produces good $j_0$ from good $j_1$. It follows that in the objective function of linear program $\mathcal{P}_2$ of \Cref{fig:production-markets-P1-and-P2} corresponding to firm $f_1$, the multiplier of the expenditure $s_{f_1}$ is non-negative, i.e., $c_{f_1}\cdot p_{j_0}^* - p_{j_{1}^*}  \geq 0$. There are two cases:
\begin{itemize}
    \item[-] If $c_{f_1}\cdot p_{j_0}^* - p_{j_{1}^*}  > 0$, then $s_{f_1}^* = L \cdot p_{j_1}^*$.
    \item[-] If $c_{f_1}\cdot p_{j_0}^* - p_{j_{1}^*}  =0$, then $s_{f_1}^* = r_{f_1}^* \geq \frac{L-1}{|F|}\cdot p_{j_0}^*$
\end{itemize}
where the last inequality follows from the fact that $e_{j_0} < 0$, by Statement~\eqref{eq:prod-min-expenditure}. In either case, in the second case using the fact that $p_{j_0}^* \geq p_{j_1}^*/c_{f_1}$, we can bound the expenditure $e_{j_1}^*$ as follows:
\begin{equation*}
    0 > e_{j_1}^* \geq \left(\frac{L-1}{|F|\cdot c_{f_1}}-1\right)\cdot p_{j_1}^* - \sum_{f \in F \colon \fout=j_1}r_f^*
\end{equation*}
From this, we get that
\[
r_{f_2}^* \geq \left(\frac{L-1}{|F|^2\cdot c_{f_1}}-\frac{1}{|F|}\right)\cdot p_{j_1}^*
\]

\noindent \textbf{Induction Step:} Consider some step $\ell$ and consider firm $f_{\ell}$ that produces good $j_{\ell-1}$ from good $j_{\ell}$. It follows that in the objective function of linear program $\mathcal{P}_2$ of \Cref{fig:production-markets-P1-and-P2} corresponding to firm $f_{\ell}$, the multiplier of the expenditure $s_{f_\ell}$ is non-negative, i.e., $c_{f_\ell}\cdot p_{j_{\ell-1}}^* - p_{j_{\ell}^*}  \geq 0$. There are two cases:
\begin{itemize}
    \item[-] If $c_{f_\ell}\cdot p_{j_{\ell-1}}^* - p_{j_{\ell}^*}  > 0$, then $s_{f_\ell}^* = L \cdot p_{j_\ell}^*$.
    \item[-] If $c_{f_\ell}\cdot p_{j_{\ell-1}}^* - p_{j_{\ell}^*}  =0$, then \[s_{f_\ell}^* = r_{f_\ell}^* \geq \left(\frac{L-1}{|F|^{\ell} \cdot \prod_{i=1}^{\ell-1}c_{f_i}} - \sum_{i=1}^{\ell-1} \frac{1}{|F|^i \cdot \prod_{j = \ell - i +1}^{\ell-1}c_{f_{j}}}\right) \cdot p_{j_{\ell-1}}^*\]
\end{itemize}
where the inequality in the second case follows from the induction hypothesis. In either case, in the second case using the fact that $p_{j_{\ell-1}}^* \geq p_{j_\ell}^*/c_{f_\ell}$, we can bound the expenditure $e_{j_\ell}^*$ as follows:
\begin{equation*}
    0 > e_{j_\ell}^* \geq \left(\frac{L-1}{|F|^{\ell} \cdot \prod_{i=1}^{\ell}c_{f_i}} - \sum_{i=1}^{\ell-1} \frac{1}{|F|^i\cdot \prod_{j = \ell - i +1}^{\ell}c_{f_{j}}}-1\right)\cdot p_{j_\ell}^* - \sum_{f \in F \colon \fout=j_1}r_f^*
\end{equation*}
From this, we obtain that 
\begin{equation*}
r_{f_{\ell+1}}^* \geq \left(\frac{L-1}{|F|^{\ell+1} \cdot \prod_{i=1}^{\ell}c_{f_i}} - \sum_{i=1}^{\ell} \frac{1}{|F|^i \cdot \prod_{j = \ell - i +1}^{\ell}c_{f_{j}}}\right) \cdot p_{j_{\ell}}^*
\end{equation*}
\end{proof}

\noindent \Cref{claim:prod-claim2} implies that for any firm $f \in F_J$, there cannot be an edge to any consumer $i \in N \setminus N_J$ or any firm $f' \in F \setminus F_J$ in the economy graph. Indeed, for an edge $(f,i)$, the claim asserts that the production good $\fout$ of $f$ will be in the set $J$, whereas consumer $i$ only values goods in the set $G \setminus J$ positively, by virtue of being in $N \setminus N_J$. Similarly, for an edge $(f,f')$, firm $f$ produces the good $\fout \in J$, whereas the input good $g_{f'}^{\text{in}}$ of firm $f'$ is in $G \setminus J$, by virtue of $f'$ being in $F\setminus F_J$. \medskip

\noindent From the two paragraphs succeeding \Cref{claim:prod-claim1} and \Cref{claim:prod-claim2}, $N_J$ and $N\setminus N_J$ are two strongly connected components in the economy graph of the market, contradicting the sufficiency condition requiring that in the economy graph there is a strongly connected component containing all the consumer-nodes. This completes the proof.
\end{proof}

\subsection{Markets with Leontief-free Utilities and Productions}\label{sec:general-markets}

In this section we show how our \linoptgate can be used to obtain the PPAD-membership of finding competitive equilibria in Arrow-Debreu markets with Leontief-free utilities and productions. Recall that, as we mentioned in the beginning of the section, this is the largest class of functions for which membership in PPAD (and hence rationality of solutions) has been proven \citep{garg2018substitution}. In the following section we define another class of functions which generalizes all previous ones, but its generality is incomparable to the Leontief-free case. Our main theorem of the section is the following.

\begin{theorem}\label{thm:Leontief-free}
Computing a competitive equilibrium of an Arrow-Debreu market with Leontief-free utilities and productions is in PPAD.
\end{theorem}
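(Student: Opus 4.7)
The plan is to mirror the three-program template developed in \Cref{sec:prod-markets-linear} for linear utilities and production, extended to handle the richer segment-based structure of Leontief-free utilities and productions. Recall that a Leontief-free utility/production function is built from a finite collection of \emph{segments}, each of which specifies a fixed rate of utility (respectively, of output) per unit of expenditure on each input good, together with a capacity beyond which the segment is saturated. The key feature, which makes rational equilibria possible and makes the \linoptgate applicable, is that within a fixed choice of active segment per consumer/firm, the optimization reduces to a linear program. I will apply Gale's substitution (\Cref{rem:gale}) and work throughout with expenditures $q^i_{jk}$ of consumer $i$ on segment $k$ of good $j$, firm expenditures $s^f_{jk}$ on segment $k$ of input good $\fin$, and firm revenues $r^f_k$ on segment $k$ of output good $\fout$; this linearizes the budget and production constraints at the cost of introducing a bang-per-buck ratio $p_j/u^i_{jk}$ in the objective of the consumer program.

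First I would define the consumer program $\mathcal{P}_1$ for each consumer $i$: minimize $\sum_{j,k} (p_j/u^i_{jk})\, q^i_{jk}$ subject to $\sum_{j,k} q^i_{jk} = \sum_{j} w_{ij} p_j + \sum_{f} \theta_{if}(r_f - s_f)$, the segment capacity constraints (rewritten in expenditure form, so $q^i_{jk} \leq c^i_{jk}\,p_j$), and $0 \leq q^i_{jk} \leq CL+1$. At optimum, the consumer exhausts her budget on segments of maximum bang-per-buck, so the resulting allocation $x^i_{jk} = q^i_{jk}/p_j$ is utility-maximizing. Second, I would define a firm program $\mathcal{P}_2$ for each firm $f$ that maximizes $\sum_k (c_{fk} p_{\fout} - p_{\fin})\, s^f_k$ subject to segment capacity constraints and $0 \leq s^f_k \leq L\, p_{\fin}$; by the analogue of \Cref{rem:optimal-production}, at optimum each firm saturates profitable segments, leaves unprofitable ones at zero, and is indifferent on zero-profit segments. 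The revenues $r_f$ are then recovered from $s_f$ by the natural extension of \Cref{eq:prod:rf-sf} summed over segments. Third, I would reuse the feasibility program $\mathcal{Q}$ from \Cref{fig:production-markets-Q} verbatim, with the excess expenditure $e_j$ redefined to aggregate over all segments:
\[
e_j \;=\; \sum_{i} \sum_{k} q^i_{jk} \;+\; \sum_{f:\,\fin = j} \sum_k s^f_k \;-\; p_j \;-\; \sum_{f:\,\fout = j} \sum_k r^f_k.
\]
Each of $\mathcal{P}_1$, $\mathcal{P}_2$ satisfies the conditions of \Cref{def:linoptgate} (bounded nonempty feasible domain, linear subgradient, parameters appearing only on the right-hand side), and $\mathcal{Q}$ is of the form $\mathcal{Q}_{\mathrm{app}}$ with acyclic \qgraph, so all three can be realized by \linoptgates by \Cref{thm:linoptgate} and \Cref{lem:feasibility-of-q-graph}.

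The fixed-point function $F$ on the domain of (prices, consumer expenditures, firm expenditures) then has a fixed point in PPAD, and it remains to argue that any fixed point is a competitive equilibrium. Optimality of consumption and production is immediate from the objectives of $\mathcal{P}_1$ and $\mathcal{P}_2$. The identity $\sum_j e_j = 0$ follows by the same telescoping calculation as in \Cref{lem:production-markets-correctness-lemma}, using that all consumer budgets are spent (equality in $\mathcal{P}_1$), that profits $r_f - s_f$ are fully distributed via the shares $\theta_{if}$, and that $\sum_j p_j = 1$ and $\sum_i w_{ij} = 1$. The main obstacle, exactly as in \Cref{sec:prod-markets-linear}, is ruling out the case where the set $J$ of goods with strictly minimum $e_j^*$ is nonempty. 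The plan is to run the two-claim strong-connectivity argument: first show (\Cref{claim:prod-claim1}-style) that no consumer in $N_J := \{i : \exists j \in J,\ \exists k,\ u^i_{jk} > 0\}$ endows any good outside $J$, by choosing $\varepsilon$ small enough that any such endowment would buy out all of $J$ through a maximum-BPB segment in $J$ and force some $e_{j_0}^* \geq 0$; then show (\Cref{claim:prod-claim2}-style) that no firm whose input good is in $J$ can have its output good outside $J$, by propagating along a production chain of length at most $m$ (bounded by the no-production-out-of-nothing condition), using that at a positive-profit segment the firm saturates at $L\,p_{\fin}^*$. Both arguments go through essentially unchanged, the only new bookkeeping being that sums over goods are replaced by sums over segments of goods and that the segment capacities must be chosen loose enough not to bind in the contradiction-deriving direction; the constants $L$ and $C$ can be picked as global polynomial bounds exactly as in \Cref{rem:bounds-on-production}. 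The two claims together exhibit a proper nonempty subset $N_J \subsetneq N$ of consumers with no outgoing edges in the economy graph to $(N \setminus N_J) \cup (F \setminus F_J)$, contradicting strong connectivity and completing the proof.
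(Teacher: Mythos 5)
Your overall strategy is right, and matches the paper's: three \linoptgate programs (consumer expenditure, firm expenditure, a feasibility program for prices), $\sum_j e_j = 0$ via telescoping, and a two-claim strong-connectivity contradiction. But the concrete LPs you write down implement the \emph{SPLC} model, not the Leontief-free one, and this is where the actual technical content lives.

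Two specific gaps. First, the consumer segment cap. You write the capacity constraint in expenditure form as a per-(good,\,segment) bound ``$q^i_{jk} \leq c^i_{jk}\,p_j$''. In the Leontief-free model there is no per-good capacity; the cap $L^i_k$ bounds the \emph{aggregate} utility on \seg $k$ across \emph{all} goods, i.e.\ $\sum_j u^i_{jk} x^i_{jk} \leq L^i_k$. Under Gale's substitution this becomes $\sum_j (u^i_{jk}/p_j) q^i_{jk} \leq L^i_k$, whose left-hand-side coefficients depend on the prices $p_j$ — but the \linoptgate requires the constraint matrix $A$ to be \emph{fixed}, with gate inputs appearing only on the right-hand side. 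The paper resolves this with the reformulation $\sum_j q^i_{jk} \leq \min\{L^i_k, CL+1\}\cdot \min_j(p_j/u^i_{jk})$ (valid because at an optimum the consumer spends only at the minimum $p_j/u^i_{jk}$ on each \seg), and then has to prove separately (\Cref{lem:leontief-free-optimal-consumption-captured-by-lp}) that this substitute constraint is equivalent to the original at any optimal solution. This is the main non-obvious step on the consumption side and your proposal skips it.

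Second, the firm program. Your $\mathcal{P}_2$ has a single input good $\fin$, a single output good $\fout$, and per-segment conversion rates $c_{fk}$; that is the (S)SPLC production structure of \Cref{sec:SSPLC-markets}. A Leontief-free firm (after the splitting of \Cref{rem:one-seg-per-firm} to one \seg per firm, which your sum over $k$ suggests you are not using) chooses expenditures across \emph{multiple} raw goods $j \in \mathcal{R}_f$ with conversion rates $\alpha^f_j$, and distributes revenue across \emph{multiple} produced goods $j' \in \mathcal{P}_f$ with conversion rates $\beta^f_{j'}$, tied together by the raw-unit balance. The paper needs \emph{two} LPs — $\mathcal{P}_\text{prod1}$ choosing raw-good expenditures against the coefficients $\mathrm{rpu}^f(p)-p_j/\alpha^f_j$, and $\mathcal{P}_\text{prod2}$ choosing produced-good revenues subject to the balance constraint $\sum_{j'} r^f_{j'} = \sum_j s^f_j + B$ — and a separate verification (\Cref{lem:leontief-free-optimal-production-captured-by-lp}) that their optimal solutions satisfy all six conditions of \Cref{def:optimal-prod-leontief-free}. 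A single LP with one conversion rate per \seg cannot express ``only use cost-optimal raw goods'' or ``only produce revenue-optimal produced goods,'' which are the distinguishing constraints of the Leontief-free class. Consequently your excess-expenditure definition (which sums $s^f_k$, $r^f_k$ over segments of a single input/output good) also does not match: the paper's $e_j$ aggregates $s^f_j$ over all firms with $j \in \mathcal{R}_f$ and $r^f_j$ over all firms with $j \in \mathcal{P}_f$.

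The strong-connectivity endgame you sketch is essentially correct in outline, but the second claim (\Cref{claim:leontief-prod-claim2}) also requires an additional case split on whether $\mathrm{rpu}^{f_1} = \mathrm{cpu}^{f_1}$ or $\mathrm{rpu}^{f_1} > \mathrm{cpu}^{f_1}$, and in the latter case an argument that $L^{f_1} = \infty$, before propagating down the production chain — this is one more consequence of the multi-raw/multi-produced structure you have elided.
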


\noindent Our proof in this section provides a significant simplification over that of \citet{garg2018substitution}. To keep the section as self-contained as possible, we define these markets in detail here, rather than explain how they generalize the markets of \Cref{sec:prod-markets-linear}, but we make appropriate references to that section.  

\paragraph{Markets with Production.} In an Arrow-Debreu market with production $\mathcal{M}$, we have a set $N$ of consumers, a set $G$ of infinitely divisible goods, and a set $F$ of firms. Let $n=|N|$, $m=|G|$, and $\ell = |F|$. We will typically use index $i$ to refer to consumers, $j$ or $g$ to refer to goods and $f$ to refer to firms. Each consumer brings an endowment $w_i = (w_{i1}, \ldots, w_{im})$ to the market, with $w_{ij} \geq 0$ for all $i \in N$ and $j \in G$. We may assume without loss of generality that for every good $j$, we have $\sum_{i \in N} w_{ij} =1$, i.e., that the total endowment of each good is $1$. We will use $\mathbf{x}^i = (x_{i1}, \ldots, x_{im})$ to denote the vector of quantities of goods allocated to consumer $i \in N$ in $\mathcal{M}$, and we will call it the \emph{bundle} of consumer $i$. Let $\mathbf{x} = (\mathbf{x}^1, \ldots, \mathbf{x}^n)$ be the vector of such bundles. We will use $\mathbf{p} = (p_1, \ldots, p_m)$ to denote the vector of \emph{prices} in $\mathcal{M}$, one for each good $j \in G$. Prices are non-negative, so $p_j \geq 0$ for all $j \in G$. Given a vector of prices $\mathbf{p}$, the \emph{budget} of consumer $i \in N$ is defined as $\sum_{j \in G}w_{ij}p_j$; intuitively, this is the amount of money that the consumer acquires by selling her endowment at prices $\mathbf{p}$. 

\paragraph{Utility Functions.} Every consumer has a utility function $u_i\colon\RRnn^m\rightarrow\RRnn$ mapping a bundle 
$\mathbf{x}^i$ to a non-negative real number. In this section, we consider \emph{Leontief-free utilities}.
Such utility functions are specified by a finite list $K_i$ of \emph{\segs} $\{s^i_k\}_{k \in K_i}$.\footnote{\citet{garg2018substitution} us the term ``segments'' to refer to these ``portions'' of the utility function. We use the term ``segments'' for the parts of the SSPLC functions in \Cref{sec:SSPLC-markets} which has a different meaning, so we adopt the term ``\seg'' here instead.}
Associated 
with every \seg $s^i_k$ is a number $L^i_k\in\RRnn\cup\{\infty\}$, which is an upper bound for the total utility 
that can be accrued from this \seg, and for every $j \in G$ there is a number
$u^i_{jk}\geq 0$, which is the rate at which good $j$ provides utility for consumer $i$ on \seg $k$. That is, the
utility consumer $i$ receives from the bundle $\mathbf{x}^i$ is calculated by solving the following linear
program.
\begin{align*}
	\mbox{maximize } & \sum_{k \in K_i}\sum_{j \in G}u^i_{jk}x^i_{jk}  \nonumber\\
	\mbox{subject to }& \sum_{j \in G} u^{i}_{jk}x^i_{jk}\leq L^i_k,\;\ \ \text{ for all } k \in K_i \\
        & \sum_{k \in K_i}x^{i}_{jk}\leq x^i_{j},\;\ \ \ \text{ for all } j\in\mathcal{G} \nonumber
\end{align*}
\noindent Note that the class of linear utilities that we considered in \Cref{sec:exchange-markets-linear,sec:prod-markets-linear} is a special case of this construction where every \seg is 
unbounded and can accrue utility from only a single good. In the case of SPLC utilities which we mentioned in the beginning of \Cref{sec:markets}, the restriction
that the \segs must be unbounded is dropped.

\paragraph{Firm Shares.} Each consumer $i \in N$ has a share $\theta_{if} \in [0,1]$ of the profit of each firm $f \in F$. We assume that the profits are entirely shared among the consumers, i.e., for every firm $f \in F$, we have that $\sum_{i \in N}\theta_{if}=1$. 

\paragraph{Production functions.}

Every firm $f$ has a \emph{Leontief-free production function}. Here we will use slightly different terminology from
\Cref{sec:prod-markets-linear}, following the one used by \citet{garg2018substitution}.
We will say that each firm $f$ has a \emph{partitioning} of the set of goods $G$ into \emph{raw goods} and \emph{produced goods}, i.e., $\mathcal{G}= \mathcal{R}_f\sqcup\mathcal{P}_f$. The firm then first converts the raw goods into
\emph{raw units}, and these in turn are converted into produced goods. \medskip

\noindent More precisely,
the production function of firm $f$ is specified by a list of \segs $\{s^f_k\}_{k \in K_f}$. To each of these \segs we have the following associated quantities:
\begin{itemize}
    \item[-] a number $L^f_k\in\RRnn\cup\{\infty\}$, which is a bound on how many raw units can be produced on this \seg, for every raw good $j\in\mathcal{R}_{f,k}\subseteq\mathcal{R}_f$, 
    \item[-] a number $\alpha^f_{jk}> 0$, which denotes the rate at which good $j$ can be converted into raw units on this \seg (i.e., the raw-good-to-raw-unit conversion rate), and, 
    \item[-] for every produced good $j'\in\mathcal{P}_{f,k}\subseteq\mathcal{P}_f$, a number $\beta^f_{j'k}> 0$, which denotes the rate at which raw units can be converted into good $j'$ on this \seg (i.e,, the raw-unit-to-produced-good conversion rate).  
\end{itemize}

\noindent Let $\mathbf{y}^f = (y_j)_{j \in G}$ be the \emph{production vector} of firm $f$, i.e., the vector of amounts of raw goods and produced goods involved in its production function. 

\begin{remark}[One \seg for each firm]\label{rem:one-seg-per-firm}
Note that whether a firm produces on a \seg $k$ is independent of whether it produces on another \seg 
$k'$. Therefore, to simplify the exposition, we may assume that the production function of any firm has only one \seg. If a firm's production function has multiple \seg, then we
may simply ``split'' the firm into one firm for every \seg of the function, and allocate to the consumers shares in the new firms that are identical to the shares they had in the original firm. This change will not impact the production of the firms or the budget constraints of the consumers (see \Cref{def:leontief-free-market-equilibrium} below). Given this, we can write $\alpha_{j}^f$ and $\beta_{j'}^f$ for the conversion rate of firm $f$ from quantities of the raw good $j$ to raw units, and for the conversion rate of the firm from raw units to quantities of the produced good $j'$, respectively. Similarly, we can also write $L^f$ for the bound on the number of raw units that the firm can produce on its \seg.  
\end{remark}

\noindent Next, we define the optimal production for firms and the optimal consumption for consumers. We start from the former.

\paragraph{Optimal Production.}
Consider firm $f$, and let $p_j$ and $p_j'$ be the prices of goods $j$ and $j'$ in the market respectively. 
Using some raw good $j\in\mathcal{R}_{f}$, the firm can produce
1 raw unit on its \seg at a \emph{cost} of $p_j/\alpha^f_j$. For any produced good $j'\in\mathcal{P}_{f}$,
the firm can use 1 raw unit to produce $\beta^f_{j'}$ units of good $j'$, resulting in a \emph{revenue} of 
$p_{j'}\beta^{f}_{j'k}$. With this in mind, we define the \emph{cost-per-unit (cpu)} and \emph{revenue-per-unit (rpu)} of firm $f$ as follows:
\begin{align*}\label{eq:cpu-and-rpu}
\mbox{cpu}^f(\mathbf{p}) = \min_{j\in\mathcal{R}_{f}}\frac{p_j}{\alpha^f_{j}}\quad\mbox{ and }\quad \mbox{rpu}^f(\mathbf{p}) = \max_{j'\in\mathcal{P}_{f}}p_{j'}\beta^f_{j'}
\end{align*}
Intuitively, the cpu of a firm is the minimum cost that it incurs from converting any of its raw goods into raw units, and the rpu is the maximum revenue that it obtains by converting raw units into any of its produced goods.\medskip

\noindent The \emph{profit-per-unit (ppu)} of firm $f$ is then defined as 

$$\mbox{ppu}^f(\mathbf{p})=\mbox{rpu}^f(\mathbf{p})-\mbox{cpu}^f(\mathbf{p})$$ 

\noindent We can now define a firm's optimal production.

\begin{definition}[Optimal Production] \label{def:optimal-prod-leontief-free}
Given the production vector $\mathbf{y}^f$ of firm $f$, a production for the firm is \emph{optimal} if the following conditions are satisfied:
\begin{enumerate}\label{LF:optimal-production}
\item $\sum_{j'\in\mathcal{P}_f}y^f_{j'}/\beta^f_{j'}\leq \sum_{j\in\mathcal{R}_f}\alpha^f_j y^f_j\leq L^f$. \hfill (\emph{\textbf{feasibility}}) \label{cond:LF-1}
\item If $\mbox{ppu}^f(\mathbf{p})<0$, then $\mathbf{y}^f=\allzeros$. \hfill (\emph{\textbf{zero unprofitable production}}) \label{cond:LF-2}
\item If $\mbox{ppu}^f(\mathbf{p})>0$, then $\sum_{j\in\mathcal{R}_f}\alpha^f_j y^f_j=L^f$. \hfill (\emph{\textbf{maximum profitable production of raw units}}) \label{cond:LF-3}
\item If $\mbox{rpu}^f(\mathbf{p})>0$, then $\sum_{j'\in\mathcal{P}_f}y^f_{j'}/\beta^f_{j'} = \sum_{j\in\mathcal{R}_f}\alpha^f_j y^f_j$. \label{cond:LF-4}
\hfill (\emph{\textbf{maximum usage of raw units}})
\item For every $j\in\mathcal{R}_f$, if $y^f_j>0$, then $p_j/\alpha^f_j = \mbox{cpu}^f(\mathbf{p})$. \hfill (\emph{\textbf{only use cost-optimal raw goods}}) \label{cond:LF-5}
\item For every $j'\in\mathcal{P}_f$, if $y^f_{j'}>0$, then $p_{j'}\beta^f_{j'} = \mbox{rpu}^f(\mathbf{p})$. \hfill (\emph{\textbf{only produce revenue-optimal produced goods}}) \label{cond:LF-6}
\end{enumerate}
\end{definition}

\noindent Before we proceed, with offer the following remark with respect to Condition~\cref{cond:LF-3} above.

\begin{remark}[Maximum Profitable Production]\label{rem:L-cant-be-infinity}
In \Cref{def:optimal-prod-leontief-free}, when $L^f = \infty$, Condition~\ref{cond:LF-3} stipulates that the firm should use an infite amount of raw goods for production. As we explained in the previous section (see \Cref{rem:bounds-on-production}) and as we we will reiterate in the context of the markets of this section in \Cref{rem:leontief-bounds-on-production} below, in a competitive equilibrium (see \Cref{def:leontief-free-market-equilibrium}) there is a finite, global upper bound to how much a firm can produce. In turn, this also imposes a global upper bound on how much a consumer can consume. Looking ahead, in the linear programs that we will construct, we will use some upper bounds on production and consumption, which will be sufficiently large to not constrain the aforementioned global upper bounds. For the case of production, that bound will be represented by $L$, and then, \Cref{cond:LF-3} should be interpreted as 
\[
\text{If } \mbox{ppu}^f(\mathbf{p})>0, \text{ then } \sum_{j\in\mathcal{R}_f}\alpha^f_j y^f_j=\min\{L^f,L\},
\]
where $L$ will be guaranteed to be large enough such that $\min\{L^f,L\} = L$ only when $L^f = \infty$. 
\end{remark}

\paragraph{Optimal Consumption and bang-per-buck (BPB).}

For the consumers, the optimal consumption amounts to them utilizing \segs in order of decreasing \emph{bang-per-buck}. Given consumer $i \in N$ and prices $\mathbf{p}$, the BPB of a pair $(j,k)$ consisting of a good $j \in G$ and a \seg $k \in K_i$ is defined as $\text{BPB}_{i}(j,k)=u^i_{jk}/p_j$. In particular, $\text{BPB}_{i}(j,k)$ has a maximum BPB over pairs $(j,k) \in G \times K_i$, then consumer $i$ will buy quantities of good $j$, and ``allocate'' the good to \seg $k$ until the upper bound $L_{k}^i$ on the utility that the consumer gain from this \seg has been met. This process is repeated until the budget of the consumer is exhausted.

\paragraph{Competitive Equilibrium.} We are now ready to define the notion of a competitive equilibrium in markets with Leontief-free utilities and productions.

\begin{definition}[Competitive Equilibrium - Markets with Leontief-free Utilites and Productions]\label{def:leontief-free-market-equilibrium}
A competitive equilibrium of an Arrow-Debreu market with Leontief-free utilities and Leontief-free productions is a tuple $(\mathbf{\bar{p}},\mathbf{\bar{x}},\mathbf{\bar{y}})$ consisting of non-negative prices $\mathbf{\bar{p}}$, non-negative bundles $\mathbf{\bar{x}}^{i}$ for $i\in\mathcal{N}$, and non-negative amounts of goods $\bar{y}^{f}\in\RRnn^{|\mathcal{R}_f\cup\mathcal{P}_f|}$ for $f\in F$ satisfying the following conditions:
\begin{enumerate}
	\item For every firm $f\in F$, $\bar{y}^f$ is an optimal production vector for $f$. \hfill \textbf{\emph{(firm profit maximization)}} \label{con1-prod:firm-profit-lf}
	\item For every consumer $i\in\mathcal{N}$, $\mathbf{\bar{x}}^i$ is an optimal consumption vector for $i$
    under the budget constraint $\sum_{j \in G}\sum_{k \in K_i} \bar{p}_j \cdot \bar{x}^i_{jk}\leq\sum_{j \in G} w^i_j \bar{p}_j+\sum_{f \in F}\theta^i_f \left(\sum_{j\in\mathcal{P}_f}\bar{p}_j\cdot \bar{y}^f_j-\sum_{j\in\mathcal{R}_f}\bar{p}_j\cdot \bar{y}^f_j\right)$,  
    \hfill \textbf{\emph{(bundle optimality)}} \label{con2-prod:optimality-lf}
    
	\item $\bar{z}_j \leq 0$, and $\bar{z}_j \cdot \bar{p}_j$, where for every good $j\in\mathcal{G}$, 
 
    $\bar{z}_j = \sum_{i \in N} \sum_{k \in K_i}\bar{x}^i_{jk}+\sum_{f \in F \colon j\in\mathcal{R}_f}\bar{y}^f_j-\sum_{f \in F\colon j\in\mathcal{P}_f}\bar{y}^f_j-1$ \label{con3-prod:market-clearing-lf} \hfill \textbf{\emph{(market clearing)}} 
\end{enumerate}
\end{definition}

\noindent Condition~\ref{con1-prod:firm-profit-lf} requires that at the chosen set of prices $\mathbf{\bar{p}}$, each firm maximizes its profit, given its production functions. Condition~\ref{con2-prod:optimality-lf} requires that at the chosen set of prices $\mathbf{\bar{p}}$, each consumer maximizes her utility subject to their budget constraints, where the budget consists of the amount earned from selling all the consumer's endowments and the profit share of the consumer from the production of the firms on the different \segs. Finally, Condition~\ref{con3-prod:market-clearing-lf} is the market clearing condition, which requires that the total consumption of each good is at most the total production plus the total endowment of the consumers, and supply equals demand for all goods which are not priced at $0$. Similarly to \Cref{sec:prod-markets-linear}, and as we explain later, we may in fact assume without loss of generality that in any competitive equilibrium \emph{all the prices are positive}, and hence Condition~\ref{con3-prod:market-clearing-lf} reduces to $\bar{z}_j =0$ for all $j \in G$. Note that in Condition~\ref{con3-prod:market-clearing-lf} we have used that $\sum_{\in N}w_{ij}=1$ for each good $j \in G$.

\paragraph{Sufficiency Conditions.} A competitive equilibrium as defined above exists for every market $\mathcal{M}$, under some sufficiency conditions. Similarly to \Cref{sec:prod-markets-linear}, we will use the set of conditions used by \citet{maxfield1997general}, also used in the series of papers on market equilibria that we mentioned in the beginning of \Cref{sec:markets}. 

\begin{enumerate}
    \item For every consumer $i \in N$, there exists some good $j \in G$ that the consumer endows in a positive amount, i.e., $w_{ij} > 0$. 
    \item We will say that a consumer $i \in N$ is \emph{\textbf{nonsatiated} for good $j \in G$} if there exists some \seg $k$ such that $u_{jk}^i >0$ and $L_k^i = \infty$. Following \citet{garg2018substitution}, we will assume that for any good $j \in G$, there exists some consumer $i \in N$ that is nonsatiated with respect to $j$. Similarly, a firm is \emph{\textbf{nonsatiated} for good $j \in G$} if $j \in \mathcal{R}_f$ and $L^f = \infty$. 

    We remark that this condition naturally generalizes the condition that ``for any good there exists some consumer with positive utility for that good'', which we used in \Cref{sec:exchange-markets-linear,sec:prod-markets-linear}.

    \item Consider a graph $\mathcal{G}_F(\mathcal{M})$ in which the nodes are the goods, and an edge $(j,j')$ has weight 
    \[
    w_{jj'}=\max_{f \in F \colon j\in\mathcal{R}_f\wedge j'\in\mathcal{P}_f}\alpha_j^f \cdot b_j'^f,
    \]
    If the set is empty, the weight is defined to be $0$.

    The above weight captures the fact that $j'$ can be produced from $j$, via the intermediate raw units, at combined conversion rate $w_{jj'}$ by some firm $f \in F$. Then, for any cycle $C=(g_0, g_1), (g_1,g_2)\ldots, (g_{k-2},g_{k-1})$ of $G_F(\mathcal{M})$ the product of the weights of the edges is less than $1$, i.e., $\prod_{e \in C} \alpha_e < 1$.\label{cond3-leontief-free-no-production-out-of-nothing}

    This condition is known as the \textbf{\emph{no production out of nothing and no vacuous production}} condition. Indeed, if $\prod_{e \in C} \alpha_e > 1$, then it would be possible to increase the quantity of some good, without decreasing the quantity of any other good. The case of $\prod_{e \in C} \alpha_e = 1$ refers to the case of vacuous production, which is also disallowed in our model, similarly to the related works. e.g., see \citet{SODA:GargV14}.

    \item Consider the \emph{economy graph} $\mathcal{G}_E(\mathcal{M})$ of the market $\mathcal{M}$ in which the nodes are the consumers and the firms, and 
    \begin{itemize}
        \item[-] there is an edge $(i,i')$ between consumer-node $i$ and consumer/firm-node $i'$ if $i$ endows a good $j$ for which $i'$ is nonsatiated. 
        \item[-] there is an edge $(f,i)$ between firm-node $f$ and consumer/firm-node $i$, if $L^f = \infty$ and there exists some good $j \in \mathcal{P}_f$ for which $i$ is nonsatiated. 
    \end{itemize}
    Then, $\mathcal{G}_E(\mathcal{M})$ contains a strongly connected component containing all the consumer-nodes. This condition generalizes the corresponding strong connectivity condition that we used in \Cref{sec:prod-markets-linear}.
\end{enumerate}

\begin{remark}[Bounds on Production]\label{rem:leontief-bounds-on-production}
Very similarly to \Cref{sec:prod-markets-linear}, we remark that there are some inherent bounds on how much a firm can produce on a \seg, even if for that segment we have $L^f = \infty$. The idea here is very similar to that of \Cref{rem:bounds-on-production}; given that we start from a finite set of endowed goods, the absence of cycles of profitable production in an equilibrium restricts the production to take place in chains of length bounded by $m$. This implies a global upper bound $L^c$ on how many raw goods can be used or produced. Again, looking ahead, this will allow us to impose ``loose'' upper bounds on the production and consumption in the linear programs that we will devise, without compromising the existence of a competitive equilibrium.
\end{remark}

\noindent \emph{Bounds on the prices.} Again, similarly to \Cref{sec:prod-markets-linear}, we may assume without loss of generality that all the prices are strictly positive, i.e., $p_j > 0$ for all goods $j \in G$. The argument to achieve this is very similar to the one used in \Cref{sec:prod-markets-linear}, and has also been established in \citep{garg2018substitution}. Note that by this assumption, the quantity $\text{BPB}_i(j,k)$ is well-defined for every $j \in G$ and $k \in K_i$.  

\begin{remark}[Normalized Prices]\label{rem:leontief-markets-normalized-prices}
Given that $p_j > 0$ for all $j \in G$, we can normalize the prices to sum to $1$ without loss of generality, i.e., we may assume that for every good $j \in G$, we have that $\sum_{j \in G} p_j = 1$. 
\end{remark}

\noindent Again, in a similar manner to \Cref{sec:exchange-markets-linear,sec:prod-markets-linear}, we will use a parameter $\varepsilon$ to capture the fact that if the price $p_j$ for a good $j \in G$ is sufficiently smaller than the price $p_{j'}$ for a good $j' \in G$, then $\text{BPB}(j,k) > \text{BPB}(j',k')$, for any $k,k' \in K_i$. Specifically, we can compute $\varepsilon > 0$ such that
\[
\text{If } p_j \leq \varepsilon \cdot p_{j'} \text{ and } u^i_{jk} > 0 \text{ then } \text{BPB}_i(j,k) > \text{BPB}_i(j',k').
\]
Additionally, we can pick $\varepsilon$ to be sufficiently small such that $\varepsilon < \frac{w_{ij}}{m}$ for all $i \in N$ and all $j \in G$. Given $\varepsilon$, we will impose a stricter lower bound on the prices, which will be useful later on: in particular, we will assume that for all $j \in G$, $p_j \geq \frac{\varepsilon^m}{m}$.

\subsubsection{Preprocessing}

The approach that we will take for the proof is very similar to the one that we used in \Cref{sec:prod-markets-linear}. The main difference comes from the fact that we now need to consider pairs of (goods, \segs) rather than just goods. This mainly complicates notation, but the type of arguments that we make are very much along the same lines as in the previous section. Again, we employ a standard variable change (see \emph{Gale's Substitution} in \Cref{rem:gale}) to work with expenditures for the consumers and the firms, rather than with quantities of goods. This is to avoid multiplications of parameters which will be input to the circuit that we will construct, in particular the prices $\mathbf{p}$ and the quantities $x_{jk}^i$, and $y_{jk}^i$ in the bundle optimality condition of the competitive equilibrium in \Cref{def:leontief-free-market-equilibrium}.

\paragraph{Consumer expenditure.}
For the consumers, we will use the following standard change of variables. We will let $q^i_{jk} = x^i_{jk} \cdot p_j$ be the \emph{expenditure} of consumer $i$ on good $j$ and \seg $k$, i.e., how much money the consumer spends on the good at a given \seg given an allocation $x^i_{jk}$ and a price $p_j$.

\paragraph{Firm expenditure.} For the firms, we will use a similar substitution for the case of production. Namely, we will let 
\begin{itemize}
    \item[-] For $j \in \mathcal{R}_f$, let $s_j^f = p_j\cdot y^f_j$ be the \emph{expenditure} of firm $f$ on its raw good $j$ on all \segs , i.e., the amount of money spend on $y_f$ units of the raw good $j$, at given set of prices $\mathbf{p}$. 
    
    Let $\mathbf{s}^f = \left(s^f_j\right)_{j \in \mathcal{R}_f}$ and let $\mathbf{s} = \left(\mathbf{s}^f\right)_{f \in F}$.
    
    \item[-] For $j \in \mathcal{P}_f$, let $r_j^f = p_j\cdot y^f_j$ be the \emph{revenue} of firm $f$ from its produced good $j$ on all \segs, i.e., the amount of money earned from $y_f$ units of the produced good $j$, at a given set of prices $\mathbf{p}$.

    Let $\mathbf{r}^f = \left(r^f_j\right)_{j \in \mathcal{P}_f}$ and let $\mathbf{r} = \left(\mathbf{r}^f\right)_{f \in F}$.
\end{itemize}

\paragraph{The program $\mathcal{P}_\text{con}$ for optimal consumer expenditure.} 
\begin{figure}
\centering
    \fbox{
\begin{minipage}{0.5\textwidth}
\centering
\underline{Linear Program $\mathcal{P_\text{con}}$}
\begin{align*}
\mbox{minimize}\quad & \sum_{j\in\mathcal{G}}\sum_{k \in K_i}\frac{p_j}{u^i_{jk}}\cdot q^i_{jk}\\
\mbox{subject to}\quad & \sum_{j\in\mathcal{G}}q^i_{jk}\leq \min\{L^i_k, CL+1\}\cdot \min_{j}\frac{p_j}{u^i_{jk}},\;\ \forall k \in K_i\\
& \sum_{j\in\mathcal{G}}\sum_{k \in K_i}q^i_{jk} =\sum_{j\in\mathcal{G}}w^i_jp_j+\sum_{f\in\mathcal{F}}\theta^i_f\cdot \left(\sum_{j \in \mathcal{P}_f} r_j^f-\sum_{j \in \mathcal{R}_f} s_j^f\right)\\
& q^i_{jk}\geq 0,\;\ \forall  j\in\mathcal{G}, \ \   \forall k \in K_i
\end{align*}
\end{minipage}}
\caption{The linear program $\mathcal{P_\text{con}}$ for the optimal expenditures in consumption. Note that $CL+1$ will be chosen to be large enough such that $\min\{L_k^i, CL+1\} = CL+1$ only when $L_k^i = \infty$.}  
\label{fig:leontief-free-consumption-lp}
\end{figure}

We are now ready to write the linear program $\mathcal{P}_1$, the solution of which will give us the optimal expenditures for the consumers, see \Cref{fig:leontief-free-consumption-lp}. In the linear program, notice the quantity $\min\{L_k^i, CL+1\}$. $CL+1$ will be chosen to be large enough, so that $\min\{L_k^i, CL+1\} = CL+1$ only when $L_k^i = \infty$. This is to capture the cases in which the consumer should be allowed to spend an infinite amount of money on goods on a \seg, since there is upper bound on the utility the she can accrue from this \seg. Still, we remark again (see \Cref{rem:leontief-bounds-on-production}) that there are inherent bounds on how much a consumer can spend, which do not come from $L_k^i$ but rather from the inherent global upper bounds on production. In particular, it holds that $\sum_{j \in G} u_{jk}^ix_{jk}^i\leq L^p$, where $L^p$ is a global upper bound on the consumption resulting from the global upper bound $L^c$ on the production. 

In previous sections, it was straightforward to see that the from the optimal solution $\mathbf{q}^i$ of the corresponding linear program, one could recover the optimal bundle $\mathbf{x}^i$ of consumer $i$. Here, we prove that in a simple lemma below. Note that we only need to guarantee that the linear program computes the optimal expenditures correctly at a competitive equilibrium, and so we can use the fact that $\sum_{j \in G} u_{jk}^ix_{jk}^i\leq L^p$.

\begin{lemma}\label{lem:leontief-free-optimal-consumption-captured-by-lp}
Suppose that $\mathbf{q}^i$ is an optimal solution to linear program $\mathcal{P}_\text{con}$ of \Cref{fig:leontief-free-consumption-lp}, and let Let $x^i_{jk} = q^i_{jk}/p_j$ for all $j \in G$ and $k \in K_i$. Then, the resulting bundle $\mathbf{x}^i$ is an optimal consumption vector for consumer $i$ at a competitive equilibrium of the market.
\end{lemma}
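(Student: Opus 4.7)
The plan is to show that any optimal solution $\mathbf{q}^i$ to $\mathcal{P}_\text{con}$ implements exactly the greedy bang-per-buck strategy that characterizes consumer optimality for Leontief-free utilities, so that the induced bundle $\mathbf{x}^i$ is an optimal consumption vector at equilibrium. Two structural properties of the LP optimum will drive the argument, and I would then match them with the consumer's greedy optimal strategy.

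First, I would show that whenever $q^i_{jk}>0$ at the LP optimum we must have $p_j/u^i_{jk}=\min_{j'} p_{j'}/u^i_{j'k}$, i.e.\ only BPB-maximizing goods are used on each segment $k$. Indeed, moving a small mass $\delta>0$ from a strictly more expensive good to a cheapest good on the \emph{same} segment preserves both the segment constraint $\sum_{j'} q^i_{j'k}\leq \min\{L^i_k,CL+1\}\cdot\min_{j'} p_{j'}/u^i_{j'k}$ and the budget equality, while strictly decreasing the objective; this contradicts optimality. Converting back to bundle units via $x^i_{jk}=q^i_{jk}/p_j$, the segment constraint then says that the utility accrued on segment $k$ is at most $\min\{L^i_k,CL+1\}$, which matches the true cap $L^i_k$ when $L^i_k<\infty$, and for $L^i_k=\infty$ the relaxed cap $CL+1$ will not be binding at equilibrium by \Cref{rem:leontief-bounds-on-production}, provided $C$ and $L$ are chosen larger than the global consumption bound.

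Second, I would show that the LP allocates spending across segments in order of decreasing $\max_j \text{BPB}_i(j,k)$, equivalently increasing $m_k:=\min_j p_j/u^i_{jk}$. If some segment $k$ with small $m_k$ is not saturated while a segment $k'$ with $m_{k'}>m_k$ carries positive mass, shifting a small amount $\delta>0$ from the cheapest good on $k'$ to the cheapest good on $k$ preserves the budget equality and all segment constraints while strictly decreasing the objective by $(m_{k'}-m_k)\delta>0$, again contradicting optimality. Combined with the first property, the LP optimum thus uses only BPB-maximizing goods on each segment and fills segments in strict BPB order until the budget is exhausted; this is precisely the greedy rule characterizing consumer-optimal consumption for Leontief-free utilities (by concavity and additivity of utility across segments).

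The main subtlety I expect is reconciling the equality budget constraint in $\mathcal{P}_\text{con}$ with the $\leq$ form appearing in \Cref{def:leontief-free-market-equilibrium}, together with the handling of the relaxed cap $CL+1$ on unbounded segments. At a fixed point of $F$, the prices output by $\mathcal{Q}$ and the sufficiency conditions (a nonsatiated consumer for every good, strong connectivity of the economy graph) will force market clearing, and thus force each consumer's expenditure to equal the budget, so the equality constraint is without loss at equilibrium. The inherent global bound on production from \Cref{rem:leontief-bounds-on-production} then lets me choose $C$ and $L$ large enough that the $CL+1$ relaxation is always slack when $L^i_k=\infty$, and ties in BPB can be broken arbitrarily since the consumer is indifferent among them.
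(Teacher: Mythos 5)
Your proposal is correct and follows essentially the same approach as the paper's proof: establish feasibility via nonsatiation, argue that the LP optimum implements the greedy bang-per-buck rule, and verify the segment caps by converting back from expenditure units (using the fact that at the optimum only BPB-maximizing goods on a segment carry positive expenditure, so $\sum_j u^i_{jk} x^i_{jk} \le \min\{L^i_k, CL+1\}$). The paper actually compresses the two exchange arguments you spell out (only cheapest goods within a segment; segments filled in order of increasing $m_k$) into the single sentence ``by construction, consumer $i$ will spend money in order of [decreasing] BPB,'' so your version is if anything more explicit. One minor quibble: you attribute the legitimacy of the equality budget constraint to market clearing at a fixed point, but the cleaner (and paper-intended) justification is simply nonsatiation — since consumer $i$ has unbounded marginal utility for at least one good, her optimal consumption always exhausts the budget, independent of whether the market clears — though this does not affect the correctness of your argument.
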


\begin{proof}
    First note that the linear program is feasible, because by the strong connectivity of the economy graph $G_E(\mathcal{M})$, $i$ is non-satiated with respect to some good, which means that she can spend its entire budget on this good. By construction, consumer $i$ will spend money in order of increasing $\text{BPB}_i(j,k)$, establishing the optimality of the resulting bundle $\mathbf{x}^i$. It remains to show that the feasibility constraints are satisfied. Note that if $x^i_{jk}>0$, then $q^i_{jk}>0$, establishing that $p_j/u^i_{jk}=\min_{j'}p_{j'}/u^i_{j'k}$. Hence, for any $i \in G$ an $k \in K_i$, we have that
    \begin{align*}
        \sum_{j \in G}u^i_{jk}x^i_{jk} & = \sum_{j \in G} \left(\frac{u^i_{jk}}{p_j} \right)q^i_{jk} 
         = \frac{1}{\underset{{j' \in G}}{\min} \ \  p_{j'}/{u^i_{j'k}}}\sum_{j \in G} q^i_{jk}  \\
        &\leq \frac{1}{\underset{{j' \in G}}{\min} \ \  p_{j'}/u^i_{j'k}}\cdot \min\{L^i_k,CL+1\} \cdot \min_{j' \in G}\frac{p_{j'}}{u^i_{j'k}}
         = \min\{L^i_k,CL+1\}
    \end{align*}
This completes the proof of the lemma, since $CL+1$ will be chosen to be larger than any finite $L_k^i$ and larger than $L^p$.
\end{proof}

\paragraph{The programs $\mathcal{P}_\text{prod1}$ and $\mathcal{P}_\text{prod2}$ for optimal firm expenditure.}

In \Cref{sec:prod-markets-linear} we devised a single linear program for the optimal production of the firms. Here, since we have variables for both the expenditures on raw goods and the expenditures on produced goods, we will devise two linear programs, one for each case. Those can be seen in \Cref{fig:leontief-free-productions-lp}. Again, in \Cref{sec:prod-markets-linear} the corresponding linear program almost straightforwardly captured the optimal production of the firm which it represented; here we argue that this is the case with another simple lemma. 

\begin{figure}
\centering
    \fbox{
\begin{minipage}{0.4\textwidth}
\centering
\underline{Linear Program $\mathcal{P_\text{prod1}}$}
\begin{align*}
\mbox{maximize}\quad & \sum_{j\in\mathcal{R}_f}\left(\mbox{rpu}^f(\mathbf{p})-\frac{p_j}{\alpha^f_j}\right) s^f_j\\
\mbox{subject to}\quad & \sum_{j\in\mathcal{R}_f}s^f_j\leq \min\{L^f,L\}\cdot\mbox{cpu}^f(\mathbf{p})\\
& s^f_j \geq 0,\;\forall j\in\mathcal{R}_f
\end{align*}
\end{minipage}
\hfill\vline\hfill
\begin{minipage}{0.5\textwidth}
\centering
\underline{Linear Program $\mathcal{P_\text{prod2}}$}
\begin{align*}
\ \ \mbox{maximize}\quad & \sum_{j'\in\mathcal{P}_f}\left(p_{j'}\beta^f_{j'}\right)\cdot r^f_{j'}\\
\ \ \mbox{subject to}\quad & \sum_{j'\in\mathcal{P}_f}r^f_{j'} = \sum_{j\in\mathcal{R}_f}s^f_j+ B\\
& B = \max\{0,\mbox{rpu}^f(\mathbf{p})-\mbox{cpu}^f(\mathbf{p})\}\cdot \min\{L^f,L\}\\
& r^f_{j'}\geq 0,\;\forall j'\in\mathcal{P}_f
\end{align*}
\end{minipage}}
\caption{The linear programs $\mathcal{P_\text{prod1}}$ and $\mathcal{P_\text{prod2}}$ for the optimal expenditures in production, for the raw goods (left) and for the produced goods (right). Note that $L$ will be sufficiently large such that $\min\{L^f,L\}=L$ will only hold when $L^f =\infty$, in both programs. Also note that the variable $B$ is only used for notational convenience.}  
\label{fig:leontief-free-productions-lp}
\end{figure}

\begin{lemma}  \label{lem:leontief-free-optimal-production-captured-by-lp}
Suppose that $\mathbf{s}^f$ and $\mathbf{r}^f$ are solutions to the linear programs $\mathcal{P_\text{prod1}}$ and $\mathcal{P_\text{prod2}}$ of \Cref{fig:leontief-free-productions-lp} respectively. Let $y^f_j = s^f_j/p_j$ for $j\in\mathcal{R}_f$ and $y^f_{j'}=r^f_{j'}/p_{j'}$ for $j'\in\mathcal{P}_f$. Then $\mathbf{y}^f$ is an optimal production vector for firm $f$ at a competitive equilibrium of the market.
\end{lemma}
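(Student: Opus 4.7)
The plan is to verify, one by one, the six conditions in \Cref{def:optimal-prod-leontief-free} that define optimality of a production vector, by exploiting the structure of the two linear programs of \Cref{fig:leontief-free-productions-lp} together with the substitution $y^f_j = s^f_j/p_j$ and $y^f_{j'} = r^f_{j'}/p_{j'}$. Since the two LPs each optimize over a separate group of variables, I will first extract Conditions~\ref{cond:LF-2}, \ref{cond:LF-3}, and \ref{cond:LF-5} from $\mathcal{P}_\text{prod1}$, then Condition~\ref{cond:LF-6} from $\mathcal{P}_\text{prod2}$, and finally assemble Conditions~\ref{cond:LF-1} and \ref{cond:LF-4} by combining the two and performing a case analysis on the sign of $\mbox{ppu}^f(\mathbf{p})$.

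For $\mathcal{P}_\text{prod1}$, the key observation is that the coefficient of $s^f_j$ in the objective, namely $\mbox{rpu}^f(\mathbf{p}) - p_j/\alpha^f_j$, is maximized exactly when $p_j/\alpha^f_j = \mbox{cpu}^f(\mathbf{p})$. Hence any strictly positive $s^f_j$ at the optimum must correspond to a cost-optimal raw good, which yields Condition~\ref{cond:LF-5}. The maximum value of this coefficient equals $\mbox{ppu}^f(\mathbf{p})$, so when $\mbox{ppu} > 0$ the LP saturates its budget constraint, and after the substitution this yields $\sum_{j} \alpha^f_j y^f_j = \min\{L^f, L\}$ as required by Condition~\ref{cond:LF-3} (interpreted as in \Cref{rem:L-cant-be-infinity}), while if $\mbox{ppu} < 0$ the optimum sets every $s^f_j$ to zero, which establishes the raw-good part of Condition~\ref{cond:LF-2}.

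For $\mathcal{P}_\text{prod2}$, the equality $\sum_{j'} r^f_{j'} = \sum_j s^f_j + B$ fixes the total revenue, so the LP simply redistributes this total to maximize $\sum_{j'} (p_{j'}\beta^f_{j'})\cdot r^f_{j'}$. Since this coefficient is maximized at goods achieving $\mbox{rpu}^f(\mathbf{p})$, any $r^f_{j'} > 0$ at the optimum satisfies $p_{j'}\beta^f_{j'} = \mbox{rpu}^f(\mathbf{p})$, giving Condition~\ref{cond:LF-6}. On the support of $\mathbf{r}^f$ this gives $y^f_{j'}/\beta^f_{j'} = r^f_{j'}/\mbox{rpu}^f(\mathbf{p})$, so summing and using the equality constraint yields $\sum_{j'} y^f_{j'}/\beta^f_{j'} = (\sum_j s^f_j + B)/\mbox{rpu}^f(\mathbf{p})$, which is well-defined since all prices are strictly positive. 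Conditions~\ref{cond:LF-1} and \ref{cond:LF-4} then follow by substituting the three cases for the sign of $\mbox{ppu}$: when $\mbox{ppu} > 0$, plugging in $\sum_j s^f_j = \mbox{cpu}\cdot\min\{L^f, L\}$ and $B = \mbox{ppu}\cdot\min\{L^f, L\}$ makes the two sums telescope to $\min\{L^f, L\}$; when $\mbox{ppu} = 0$ we have $B = 0$ and $\mbox{rpu} = \mbox{cpu}$, again forcing equality; when $\mbox{ppu} < 0$ we have $\mathbf{s}^f = \mathbf{0}$ and $B = 0$, which via the equality in $\mathcal{P}_\text{prod2}$ forces $\mathbf{r}^f = \mathbf{0}$ as well, completing Condition~\ref{cond:LF-2}.

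The main conceptual obstacle is the unit-bookkeeping between the expenditure variables $s^f_j, r^f_{j'}$ and the physical quantities $y^f_j, y^f_{j'}$: the constant $B$ is carefully engineered so that the two LPs agree on the raw-unit balance without any explicit constraint linking them, and one must check that this bookkeeping is tight in each of the three $\mbox{ppu}$-sign cases. A subtler point is that the degenerate case $\mbox{ppu} = 0$ makes $\mathcal{P}_\text{prod1}$ admit multiple optima, so one must verify that \emph{any} such optimum, when fed into $\mathcal{P}_\text{prod2}$, still produces a feasible and optimal production vector; this works because the flexible total spending in $\mathcal{P}_\text{prod1}$ is exactly absorbed by the equality constraint of $\mathcal{P}_\text{prod2}$. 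Finally, one should invoke the sufficiency assumption of strictly positive prices to ensure $\mbox{rpu}^f(\mathbf{p}) > 0$ and to justify the interpretation of \Cref{rem:L-cant-be-infinity} in the case $L^f = \infty$, where $L$ is picked large enough not to bind except in this infinite-cap regime.
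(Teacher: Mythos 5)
Your proposal is correct and follows essentially the same route as the paper's proof: verify Conditions~\ref{cond:LF-5} and~\ref{cond:LF-6} directly from LP optimality, extract Conditions~\ref{cond:LF-2} and~\ref{cond:LF-3} from a case split on the sign of $\mbox{ppu}^f(\mathbf{p})$ in $\mathcal{P}_\text{prod1}$, and establish Conditions~\ref{cond:LF-1} and~\ref{cond:LF-4} by combining the two LPs via the equality constraint of $\mathcal{P}_\text{prod2}$ and the algebraic identity $\mbox{rpu} = \mbox{cpu} + \mbox{ppu}$. Your write-up is, if anything, slightly more explicit about Condition~\ref{cond:LF-4} and about the degenerate case $\mbox{ppu}^f(\mathbf{p})=0$ than the paper is, but the decomposition and key algebraic steps match.
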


\begin{proof}
We will argue that $\mathbf{y}^f$ satisfies all the conditions of \Cref{def:optimal-prod-leontief-free}. 
Since $\mathbf{y}^f$ consists of optimal solutions to the two linear programs, Condition~\ref{cond:LF-5} and Condition~\ref{cond:LF-6} are clearly satisfied. Now, suppose that $\mbox{ppu}^f(p)<0$. This implies that all the coefficients of linear program $\mathcal{P}_\text{prod1}$ are negative, meaning that $s^f_j = 0$ for all $j\in\mathcal{R}_f$. By the constraints of linear program $\mathcal{P}_\text{prod1}$, it follows that also $r^f_j=0$ for all $j\in\mathcal{P}_f$. Hence $y^f=\allzeros$, and Condition~\ref{cond:LF-2} also holds. If $\mbox{ppu}^f(p)>0$, then at least one coefficient of linear program $\mathcal{P}_\text{prod1}$ is positive, implying that $\sum_{j\in\mathcal{R}_f}s^f_j = \min\{L^f,L\}\cdot\mbox{cpu}^f(\mathbf{p})$. Also, if $s^f_j > 0$, then $p_j/\alpha^f_j = \mbox{cpu}^f(\mathbf{p}).$ Hence, we have that
\begin{align*}
    \sum_{j\in\mathcal{R}_f}\alpha^f_j y^f_j  = \sum_{j\in\mathcal{R}_f}\frac{\alpha^f_j}{p_j}\cdot s^f_j
     = \frac{1}{\mbox{cpu}^f(\mathbf{p})}\sum_{j\in\mathcal{R}_f} s^f_j
     = \frac{1}{\mbox{cpu}^f(\mathbf{p})}\cdot \min\{L^f,L\}\cdot \mbox{cpu}^f(\mathbf{p})
     = \min\{L^f,L\}
\end{align*}
and hence Condition~\ref{cond:LF-3} is satisfied (see also \Cref{rem:L-cant-be-infinity}). 
What remains is to argue is that $\mathbf{y}^f$ satisfied Condition~\Cref{cond:LF-1} of \Cref{def:optimal-prod-leontief-free}, i.e. that it is a feasible production vector. If $\mbox{ppu}^f(\mathbf{p})<0,$ then $y^f=\allzeros$ is feasible. If $\mbox{ppu}^f(\mathbf{p})>0,$ then it can be verified that $\sum_{j\in\mathcal{R}_f}\alpha^f_j y^f_j = \sum_{j\in\mathcal{P}_f}y^f_j/\beta^f_j$. Finally, if $\mbox{ppu}^f(\mathbf{p})=0$, then
\begin{align*}
    \sum_{j'\in\mathcal{P}_f}\frac{y^f_{j'}}{\beta^f_{j'}} 
     = \sum_{j'\in\mathcal{P}_f}\frac{r^f_{j'}}{p_{j'}\cdot \beta^f_{j'}}  
     = \sum_{j'\in\mathcal{P}_f}\frac{r^f_{j'}}{\mbox{rpu}^f(\mathbf{p})} 
     = \sum_{j\in\mathcal{R}_f}\frac{s^f_{j}}{\mbox{cpu}^f(\mathbf{p})} 
     = \sum_{j\in\mathcal{R}_f}\frac{\alpha^f_j s^f_{j}}{p_{j}} \
     = \sum_{j\in\mathcal{R}_f}\alpha^f_j y^f_j.
\end{align*}
This establishes all of the properties of \Cref{def:optimal-prod-leontief-free}. Finally, let $\hat{G}=\{j \in R_f \colon \mbox{rpu}(\mathbf{p}) - p_j/\alpha_j^f > 0\}$. The corresponding constraint of linear program $\mathcal{P}_\text{prod1}$ stipulates that $\sum_{j \in \hat{G}}s_j^f = \min\{L^f,L\}\cdot \mbox{cpu}^f(\mathbf{p})$. Our proof will establish that in a competitive equilibrium, if $L^f > L^c$ (the global upper bound on production), then $\hat{G}$ does not contain any goods. This is to establish that the optimality of $s_j^f$ does not impose any artificial constraints on the expenditure on raw goods due to the constraints of the linear program.
\end{proof}

\paragraph{Excess expenditure and the program $\mathcal{Q}$.} Similarly to \Cref{sec:exchange-markets-linear,sec:prod-markets-linear}, the equilibrium prices will be obtained via a feasibility program $\mathcal{Q}$. Before we define the program, we introduce the notion of excess expenditure; the definition is very similar to the one we used in previous sections. The \emph{excess expenditure} $e_j$ of a good $j \in G$ is defined as the difference between the total expenditure of all consumers $i \in N$ and firms $f \in F$ for that good and the price of the good plus the total revenue of firms $f \in F$ from that good, i.e., 
\[
e_j = \sum_{i \in N}\sum_{k \in K_i} q^i_{jk} +\sum_{f \in F \colon j \in \mathcal{R}_f}s^f_j - p_j - \sum_{f \in F \colon j \in \mathcal{P}_f}r_j^f.
\]
Then, at equilibrium prices $\mathbf{\bar{p}}$ we will have market clearing, i.e., $\bar{e}_j = 0$, for all $j \in G$. \medskip

\noindent We are now ready to define our feasibility program $\mathcal{Q}$; see \Cref{fig:leontief-free-markets-Q}. The program looks in fact identical with that which we used in the previous sections; the only difference being the updated definition of the expenditures $e_j$. The equilibrium prices will be obtained as the output $\mathbf{\bar{p}}$ of $\mathcal{Q}$. 

\begin{figure}
\centering 
\fbox{
\begin{minipage}{0.5\textwidth}
\centering
        \underline{Feasibility Program $\mathcal{Q}$}
        \begin{align*}
	& e_j<e_{j'}\Rightarrow p_j\leq \varepsilon \cdot p_{j'},\;\forall j,j' \in G \nonumber \\
	& p_j\geq \frac{\varepsilon^m}{m},\;\forall j \in G\\
	& \sum_{j \in G}p_j = 1\nonumber
        \end{align*}
\end{minipage}
}
\caption{The feasibility program $\mathcal{Q}$ used to find market-clearing prices.}
\label{fig:leontief-free-markets-Q}
\end{figure}

\subsubsection{Membership in PPAD: The proof of \crtCref{thm:Leontief-free}}

\noindent We will again develop the proof in three steps, namely (a) construction of the function $F$ and arguing that it can be represented by a \linear arithmetic circuit containing \linoptgates, (b) showing that the \linoptgate can compute all the necessary components, and (c) arguing that a competitive equilibrium can be recovered from a fixed point of $F$. 

\paragraph{The function $F$.} Given the above, we will define $F: D \rightarrow D$ with domain 
\[D = \left\{p \in \Delta^{m-1}: \forall j \in G, p_j \geq \frac{\varepsilon^m}{m}\right\}  \times_{i \in N} \left([0,C\cdot L + 1]^{m|K_i|}\right) \times [0,L]^{\ell m}.\]
An input to $F$ is a tuple $(\hat{\mathbf{p}}, \hat{\mathbf{q}}, \hat{\mathbf{s}},\hat{\mathbf{r}})$ of prices, consumer expenditures and firm expenditures for raw goods and produced goods, and the output is another such tuple $(\mathbf{\bar{p}}, \mathbf{\bar{q}}, \bar{\mathbf{s}},\bar{\mathbf{r}})$. The domain $D$ is the one above since $\sum_{j \in G} \hat{p}_j = 1$ (see \Cref{rem:leontief-markets-normalized-prices}), and by the upper bounds of $C\cdot L +1$ and $L$ imposed on the maximum expenditures for consumption and production respectively. 

\paragraph{Computation by the \linoptgate.} Next, we argue that the solutions to the linear programs $\mathcal{P}_\text{con}$ of \Cref{fig:leontief-free-consumption-lp}, $\mathcal{P}_\text{prod1}$ and $\mathcal{P}_\text{prod2}$ of \Cref{fig:leontief-free-productions-lp}, and the feasibility program $\mathcal{Q}$ of \Cref{fig:leontief-free-markets-Q} can be computed by our \linoptgate.

\begin{lemma}\label{lem:leontief-free-Markets-OPT-gate-can-compute}
Consider the linear program $\mathcal{P}_\text{con}$ of \Cref{fig:leontief-free-consumption-lp}, the linear programs $\mathcal{P}_\text{prod1}$ and $\mathcal{P}_\text{prod2}$ of \Cref{fig:leontief-free-productions-lp}, and the feasibility program $\mathcal{Q}$ of \Cref{fig:leontief-free-markets-Q}. An optimal solution to $\mathcal{P}_\text{con}$, to $\mathcal{P}_\text{prod1}$ and to $\mathcal{P}_\text{prod2}$ can be computed by the \linoptgate. $\mathcal{Q}$ is solvable, and a solution can be computed by the \linoptgate. 
\end{lemma}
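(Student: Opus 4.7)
The plan is to check, for each of the four programs in turn, exactly the list of conditions spelled out in \Cref{sec:linopt-applications}: for the three optimization programs, that the feasible domain is nonempty and bounded, that the \circparams appear only on the right-hand side of the linear constraints, and that the (sub/super)gradient of the objective can be computed by a \pseudog; and for the feasibility program, that only \circparams appear on the left-hand side of the conditional implications, that \circparams appear only on the right-hand side of the implied linear constraints, and that the program is solvable. Since all four programs are either already linear programs or become linear after flipping the sign of the objective (using \Cref{rem:subgradient-supergradient}), the gradient requirement reduces to showing that certain rational piecewise-linear functions of the \circparams (built from $\max$, $\min$, addition, and multiplication by rational constants) can be realized by an ordinary \linear arithmetic circuit, so no auxiliary fixed-point variables inside the gradient \pseudog are required.

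First I would handle $\mathcal{P}_\text{con}$. The variables are the $q^i_{jk}$, the \circparams are $\mathbf{p}$, $\mathbf{s}$, and $\mathbf{r}$, and every coefficient of the linear objective is of the form $(1/u^i_{jk})\,p_j$, i.e.\ a \circparam scaled by a rational constant. The right-hand sides of the per-\seg budget-style caps involve $\min_j (1/u^i_{jk}) p_j$, which is a $\min$ of \linear expressions in the \circparams; the overall budget right-hand side $\sum_j w^i_j p_j + \sum_f \theta^i_f\bigl(\sum_{j\in\mathcal{P}_f} r^f_j - \sum_{j\in\mathcal{R}_f} s^f_j\bigr)$ is plainly \linear in the \circparams; and the coefficient $\min\{L^i_k, CL+1\}$ is a rational constant determined when the gate is instantiated. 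The domain contains $\mathbf{q}^i = 0$ and sits inside the explicit box $[0,CL+1]^{m|K_i|}$, supplying both nonemptiness and the required bound $R$. A symmetric inspection disposes of $\mathcal{P}_\text{prod1}$ and $\mathcal{P}_\text{prod2}$ after negating the objectives: the coefficients $\mbox{rpu}^f(\mathbf{p}) - p_j/\alpha^f_j$ and $p_{j'}\beta^f_{j'}$ and the right-hand sides $\min\{L^f,L\}\cdot \mbox{cpu}^f(\mathbf{p})$ and $\sum_{j\in\mathcal{R}_f} s^f_j + B$ are all rational piecewise-linear functions of the \circparams, each domain contains $0$, and the explicit caps keep the feasible sets bounded.

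For $\mathcal{Q}$, note first that the quantity $e_j = \sum_{i,k} q^i_{jk} + \sum_{f\colon j\in\mathcal{R}_f} s^f_j - p_j - \sum_{f\colon j\in\mathcal{P}_f} r^f_j$ depends only on the \circparams — no variable of $\mathcal{Q}$ appears in it — so each antecedent ``$e_j < e_{j'}$'' is indeed a condition on \circparams alone. The implied inequalities $p_j \leq \varepsilon\cdot p_{j'}$ are linear in the variables with \circparam right-hand sides, and the unconditional constraints $p_j \geq \varepsilon^m/m$ and $\sum_j p_j = 1$ have constant right-hand sides. For solvability, the program matches the template $\mathcal{Q}_\text{app}$ with $\rho = \varepsilon$ and $h_j(\mathbf{y}) = -e_j$, so by \Cref{lem:feasibility-of-q-graph} it suffices to show acyclicity of its \qgraph; and that graph has an edge $j \to j'$ exactly when $e_j < e_{j'}$, which is a strict partial order on the goods, hence acyclic.

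The main subtlety I expect is the bookkeeping around the possibly infinite caps $L^i_k = \infty$ and $L^f = \infty$, which cannot be stored in the circuit as a literal value. The resolution is already baked into the programs: the infinities never appear on their own, only inside the truncations $\min\{L^i_k, CL+1\}$ and $\min\{L^f, L\}$, which are concrete rational constants fixed by the gate's parameters, with $C$ and $L$ chosen large enough (via the global production bound of \Cref{rem:leontief-bounds-on-production} and the arguments of \Cref{lem:leontief-free-optimal-consumption-captured-by-lp,lem:leontief-free-optimal-production-captured-by-lp}) so that the truncation is non-binding at a competitive equilibrium. Once this is in place, every coefficient and right-hand side in the four programs is an honest \linear arithmetic expression over the \circparams, the three hypotheses of the \linoptgate and the three hypotheses for the feasibility gadget are met, and the lemma follows.
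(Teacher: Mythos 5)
Your proposal is correct and follows essentially the same route as the paper's proof: verify the \linoptgate conditions (nonempty bounded domain, \circparams only on right-hand sides, sub/supergradient computable by a \pseudog) for each of $\mathcal{P}_\text{con}$, $\mathcal{P}_\text{prod1}$, $\mathcal{P}_\text{prod2}$, and identify $\mathcal{Q}$ with the template $\mathcal{Q}_\text{app}$ (taking $\rho=\varepsilon$ and $h_j=-e_j$) so that solvability follows from acyclicity of the \qgraph via \Cref{lem:feasibility-of-q-graph}. The paper's proof is terser — it defers to the near-identical argument in \Cref{lem:productionMarkets-OPT-gate-can-compute} — whereas you spell out the coefficient-by-coefficient check and you explicitly flag that the $\max$/$\min$ in $\mbox{rpu}^f$, $\mbox{cpu}^f$ make the objective coefficients piecewise-linear rather than literally linear in the \circparams and that the truncations $\min\{L^i_k,CL+1\}$, $\min\{L^f,L\}$ are how infinite caps are made into honest gate parameters; these are useful clarifications the paper leaves implicit but do not change the substance of the argument.
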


\begin{proof}
The proof is very similar to that of \Cref{lem:productionMarkets-OPT-gate-can-compute}. For all of the linear programs, the feasibly domains non-empty and bounded. The \circparams appear only on the right-hand side of the constraints: these are the prices $\hat{p}_j$, and the production expenditures $\hat{s}_j^f$ and $\hat{r}_j^f$ for linear program $\mathcal{P}_\text{con}$, the prices $\hat{p}_j$ (as part of $\mbox{cpu}^f(\mathbf{\hat{p}})$) for linear program $\mathcal{P}_\text{prod1}$ , the prices $\hat{p}_j$ (as part of $\mbox{cpu}^f(\mathbf{\hat{p}})$ and $\mbox{cpu}^f(\mathbf{\hat{p}})$) and the expenditures $s_j^f$ for linear program $\mathcal{P}_\text{prod2}$. The subgradients of the objective functions for all three linear programs are linear, and hence can be computed by a \pseudog. For the feasibility program $\mathcal{Q}$, the arguments that it is of the correct form and that it is solvable are identical to those of \Cref{lem:productionMarkets-OPT-gate-can-compute} (and hence also of \Cref{lem:ExchangeMarkets-OPT-gate-can-compute}), noting the updated definition of the expenditure $e_j$. 
\end{proof}

\paragraph{Arguing optimality and market clearing.} To conclude the proof, what is left to show is that a fixed point $(\mathbf{\bar{p}}, \mathbf{\bar{q}}, \bar{\mathbf{s}},\bar{\mathbf{r}})$ of $F$ indeed corresponds to a competitive equilibrium. We argue that in the following lemma.

\begin{lemma}\label{lem:leontief-free-markets-correctness-lemma}
Let $(\mathbf{\bar{p}}, \mathbf{\bar{q}}, \bar{\mathbf{s}},\bar{\mathbf{r}})$ be a fixed point of $F$. Then the triple $(\mathbf{\bar{p}}, \mathbf{\bar{x}}, \bar{\mathbf{y}},\bar{\mathbf{r}})$, where $\bar{x}^i_{jk} = \bar{q}^i_{jk}/\bar{p}_j$, $\bar{y}^f_j = \bar{s}^f_j/\bar{p}_j$ for $j\in\mathcal{R}_f$ and $\bar{y}^f_{j'}=\bar{r}^f_{j'}/\bar{p}_{j'}$ for $j'\in\mathcal{P}_f$ is a competitive equilibrium of $\mathcal{M}$.
\end{lemma}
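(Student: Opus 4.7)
The plan is to follow the same three-step template used in the proofs of \Cref{lem:exchange-markets-correctness-lemma} and \Cref{lem:production-markets-correctness-lemma}, adapted to the Leontief-free setting. First, bundle optimality and optimal production follow immediately from \Cref{lem:leontief-free-optimal-consumption-captured-by-lp} and \Cref{lem:leontief-free-optimal-production-captured-by-lp}, since $\mathbf{\bar{q}}$, $\mathbf{\bar{s}}$, $\mathbf{\bar{r}}$ are by construction optimal solutions to $\mathcal{P}_\text{con}$, $\mathcal{P}_\text{prod1}$ and $\mathcal{P}_\text{prod2}$ respectively at the fixed point prices $\mathbf{\bar p}$, and $\bar x^i_{jk}$ and $\bar y^f_j$ are recovered from them via the standard Gale substitution. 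Thus Conditions~\ref{con1-prod:firm-profit-lf} and \ref{con2-prod:optimality-lf} of \Cref{def:leontief-free-market-equilibrium} are satisfied, and it only remains to argue market clearing, i.e., $\bar e_j = 0$ for every $j \in G$.

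Second, I will establish $\sum_{j \in G} \bar e_j = 0$ by direct algebraic manipulation, exactly as in \Cref{lem:production-markets-correctness-lemma}. Using the budget equality constraint of $\mathcal{P}_\text{con}$ and collecting the terms $\bar s^f_j$, $\bar r^f_j$ across goods (so that the double sums over raw/produced goods collapse to sums over firms), together with $\sum_i \theta^i_f = 1$, $\sum_i w^i_j = 1$, and $\sum_j \bar p_j = 1$ (\Cref{rem:leontief-markets-normalized-prices}), every term cancels and the sum equals $0$. Given this, it suffices to prove $\bar e_j \geq 0$ for all $j$. Assume for contradiction that some $\bar e_{j_1} < 0$, and let $J \subsetneq G$ be the set of goods attaining the minimum value of $\bar e_j$; by the feasibility program $\mathcal{Q}$, every good in $J$ satisfies $\bar p_j \leq \varepsilon \cdot \bar p_{j'}$ for every $j' \in G \setminus J$.

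Third, and here lies the bulk of the work, I will derive a contradiction with the strong-connectivity sufficiency condition by proving the two analogues of \Cref{claim:prod-claim1} and \Cref{claim:prod-claim2}:
\begin{enumerate}
\item[(i)] Let $N_J := \{ i \in N : i \text{ is nonsatiated for some } j \in J\}$. Then for every $i \in N_J$ and every $j' \in G \setminus J$, $w^i_{j'}=0$.
\item[(ii)] Let $F_J := \{ f \in F : \fin := (\text{some } j \in \mathcal{R}_f \cap J) \text{ with } L^f = \infty\}$. Then for every $f \in F_J$, every produced good $j' \in \mathcal{P}_f$ with $\bar y^f_{j'}>0$ lies in $J$.
\end{enumerate}
For (i), if some $i_0 \in N_J$ had $w^{i_0}_{j'}>0$ for $j' \notin J$, then by $\bar p_j \leq \varepsilon \cdot \bar p_{j'}$ the budget of $i_0$ would exceed $\sum_{j \in J} \bar p_j$ by a factor of at least $CL+1$; because $i_0$ is nonsatiated on some $j^* \in J$ via an unbounded segment $k^*$ with $u^{i_0}_{j^*k^*}>0$, the choice of $\varepsilon$ forces $\mathrm{BPB}_{i_0}(j^*,k^*)$ to strictly dominate any BPB with $j' \notin J$, so $\bar q^{i_0}_{j^*k^*}$ saturates at $CL+1$ times $\bar p_{j^*}$ (using that $L^{i_0}_{k^*}=\infty$ so the first constraint of $\mathcal{P}_\text{con}$ does not bind, with $C$ chosen to dominate the total revenue $\sum_{f: j^*\in\mathcal{P}_f} \bar r^f_{j^*}$, which is bounded above by $L \cdot \max_f \beta^f_{j^*} \cdot \bar p_{j^*}$ times $|F|$). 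This forces $\bar e_{j^*} \geq 0$, contradicting $j^* \in J$.  For (ii), if $f \in F_J$ produced some $j' \notin J$ in positive quantity, then $\bar p_{\fin} \leq \varepsilon \cdot \bar p_{j'}$ would make $\mathrm{rpu}^f(\mathbf{\bar p}) - \bar p_{\fin}/\alpha^f_{\fin} > 0$, forcing the constraint of $\mathcal{P}_\text{prod1}$ to be tight at the $L$-level; iterating the chain argument from \Cref{claim:prod-claim2}, bounded in length by $m$ using sufficiency condition~\ref{cond3-leontief-free-no-production-out-of-nothing}, I will pick $L$ large enough to force $\bar e_{j^*} \geq 0$ for some $j^* \in J$, again contradicting the definition of $J$.

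Once (i) and (ii) are established, they imply that no edge of the economy graph leaves $N_J \cup F_J$ toward $(N \setminus N_J) \cup (F \setminus F_J)$: edges from consumers in $N_J$ go only to agents nonsatiated for endowed goods, which lie in $J$; edges from firms in $F_J$ go only to agents nonsatiated for produced goods, which by (ii) lie in $J$. Since $N_J \subsetneq N$ (because $J \subsetneq G$ and every good has some nonsatiated consumer on the complement), this contradicts the strong connectivity of the component containing all consumer-nodes. The hard part throughout will be calibrating the constants $L$ and $C$ independently of the fixed point so that the saturation arguments in (i) and the chain argument in (ii) both go through simultaneously; this mirrors, but is technically heavier than, \Cref{claim:prod-claim1} and \Cref{claim:prod-claim2}, due to the presence of segments and of general raw-to-produced-good conversion ratios.
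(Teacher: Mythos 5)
Your overall plan matches the paper's: cancellation gives $\sum_j \bar e_j = 0$, then the two claims analogous to \Cref{claim:prod-claim1} and \Cref{claim:prod-claim2} force a violation of strong connectivity. Part (i) and its sketch are sound. The problem is part (ii): you prove only that every $j' \in \mathcal{P}_f$ that is \emph{actually produced in positive quantity} ($\bar y^f_{j'} > 0$) lies in $J$, and then you invoke this in the concluding paragraph as though it controlled the economy-graph edges out of $F_J$. But the edges $(f,i)$ in $\mathcal{G}_E(\mathcal{M})$ are defined by $L^f = \infty$ together with the existence of \emph{some} $j \in \mathcal{P}_f$ for which $i$ is nonsatiated — that is, by the firm's \emph{potential} produced goods, not by which goods happen to have $\bar y^f_{j'}>0$ at this particular fixed point. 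A firm $f \in F_J$ could have $j' \in \mathcal{P}_f \setminus J$ with $\bar y^f_{j'} = 0$, and your (ii) would be silent about it while there could still be an edge from $f$ to a consumer in $N\setminus N_J$ nonsatiated for $j'$, breaking the ``no outgoing edge'' conclusion. The correct statement, and the one the paper proves (\Cref{claim:leontief-prod-claim2}), is that \emph{every} $j' \in \mathcal{P}_f$ lies in $J$ for $f \in F_J$.

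The good news is that the argument you sketch for (ii) actually proves the stronger statement with no extra work: the key inequality $\mathrm{rpu}^f(\bar{\mathbf p}) \ge \bar p_{j'}\beta^f_{j'}$ holds simply because $j' \in \mathcal{P}_f$, independent of whether $\bar y^f_{j'}>0$. So you should start the contradiction from ``there exists $j' \in \mathcal{P}_f \setminus J$'' rather than ``$f$ produces some $j' \notin J$ in positive quantity,'' and correspondingly rewrite the final paragraph. Separately, your parenthetical justification of $N_J \subsetneq N$ (``every good has some nonsatiated consumer on the complement'') is not the right reason: the correct one is that if $N_J = N$, then by (i) no consumer endows any good in $G\setminus J$ positively, contradicting that every good has total endowment $1$. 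This second point is a minor slip; the first one is a genuine gap that, as stated, breaks the final step.
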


\begin{proof}
In \Cref{lem:leontief-free-optimal-consumption-captured-by-lp} and \Cref{lem:leontief-free-optimal-production-captured-by-lp} we argued that the linear programs $\mathcal{P}_\text{con}$, $\mathcal{P}_\text{prod1}$ and $\mathcal{P}_\text{prod2}$ correctly calculate the optimal consumption vectors of the consumers and the optimal production vectors of the firms. What is left to argue is that $\mathbf{\bar{p}}$ is a vector of market-clearing prices. The proof will follow along very much the same lines as the one of \Cref{lem:production-markets-correctness-lemma}, using updated the updated terminology and notation of this section.

By the definition of $e_j$, arguing about market-clearing is equivalent to arguing that for all $j \in G$, we have that $\bar{e}_j = 0$. Using a very similar calculation to the one that we used in \Cref{lem:production-markets-correctness-lemma}, this is equivalent to arguing that $\bar{e}_j \geq 0$ for all $j \in G$. We provide the calculation below.

\begin{align*}
\sum_{j \in G} \bar{e}_j &= \sum_{j \in G} \left(\sum_{i \in N}\sum_{k \in K_i} \bar{q}^i_{jk} - \bar{p}_j +\sum_{f \in F \colon j \in \mathcal{R}_f}\bar{s}_f - \sum_{f \in F \colon j \in \mathcal{P}_f}\bar{r}_f \right) \\ 
&= \sum_{i \in N} \sum_{j \in G}\sum_{k \in K_i} \bar{q}^i_{jk} - \sum_{j \in G}\bar{p}_j + \sum_{f \in F}(\bar{s}_f-\bar{r}_f)  \\ 
&= \sum_{j \in G} \sum_{i \in N} w_{ij}\bar{p}_j - 1 + \sum_{i \in N}\sum_{f \in F}\theta_{if}\cdot (\bar{r}_f - \bar{s}_f) + \sum_{f \in F}(\bar{s}_f-\bar{r}_f) \\ 
& = \sum_{j \in G}\bar{p}_j\cdot \sum_{i \in N}w_{ij} - 1 + \sum_{f \in F}(\bar{r}_f - \bar{s}_f)\cdot \sum_{i \in N}\theta_{if}\cdot  \sum_{f \in F}(\bar{s}_f-\bar{r}_f) \\
& = \sum_{i\in N}w_{ij}-1 + \sum_{f \in F}(\bar{r}_f - \bar{s}_f) + \sum_{f \in F}(\bar{s}_f - \bar{r}_f) \\
& = 0,
\end{align*}
where in the calculations above we used:
\begin{itemize}
    \item[-] in Equation 2, that $\sum_{j \in G}\sum_{f \in F \colon j \in \mathcal{R}_f}\bar{s}_f = \sum_{f \in F}\bar{s}_j$ and that $\sum_{j \in G}\sum_{f \in F \colon j \in \mathcal{P}_f}\bar{r}_f = \sum_{f \in F}\bar{r}_f$,
    \item[-] in Equation 3, that $\sum_{i \in N} \sum_{j \in G}\sum_{k \in K_i}\bar{q}_{ij} = \sum_{i\in N}\sum_{j \in G}w_{ij}\bar{p}_j + \sum_{i \in N} \sum_{f \in F}\theta_{if}(\bar{r}_f-\bar{s}_f)$, which follows from the bundle optimality of the market equilibrium in \Cref{def:leontief-free-market-equilibrium}, which at an optimal solution is sastisfied with equality,
    \item[-] in Equations 3 and 5, that $\sum_{j \in G}\bar{p}_j =1$, which is without loss of generality, see \Cref{rem:leontief-markets-normalized-prices},
    \item[-] in Equation 5, that $\sum_{i \in N}\theta_{if} = 1$ for every $f \in F$, by definition, and
    \item[-] in Equation 6, that $\sum_{i \in N}w_{ij} = 1$, which is without loss of generality.
\end{itemize}

\noindent Assume by contradiction that there exists some $j_1 \in G$ such that $\bar{e}_{j_1} < 0$; we will obtain a contradiction to the strong connectivity of the economy graph of the market. We define the following three sets:
\begin{itemize}
    \item[-] $J = \{j \in G \colon \bar{e}_j \leq \bar{e}_{j'}, \text{ for all } j' \in G\}$. In other words, $J$ is the set of goods with minimum excess expenditure.
    \item[-] $N_J = \{i \in N \colon \text{ there exists } j \in J \text{ such that consumer } i \text{ is nonsatiated for good } j\}$. 
    \item[-] $F_J=\{f \in F \colon \text{ there exists } j \in J \text{ such that firm } f \text{ is nonsatiated for good } j\}$. 
\end{itemize}
Since $e_{j_1}^* < 0$, by the definition of $J$ we know that
\begin{equation}\label{eq:leontief-prod-min-expenditure}
\text{for all } j \in J, \text{ we have } e_j^* < 0.
\end{equation}
Also, since $e_{j_1}^* < 0$, and $\sum_{j \in G}e_j^*=0$, there must exist some other good $j_2$ such that $e_{j_2}^* > 0$. In particular, that implies that $J$ is a strict subset of $G$, i.e., $J \subsetneq G$.
We state and prove the following claim.

\begin{claim}\label{claim:leontief-prod-claim1}
For all $i \in N_J$ and all $j' \in G\setminus J$, it holds that $w_{ij'}=0$.
\end{claim}

\begin{proof}
The proof of the claim is very similar to that of \Cref{claim:prod-claim1} which we presented in \Cref{sec:prod-markets-linear}. Assume by contradiction that there exists some consumer $i_0 \in N_J$ and some good $j' \in G\setminus J$ such that $w_{i_0j'}>0$. We will show that this implies that there exists some $j_\ell \in J$ with $\bar{e}_{j_\ell} \geq 0$, contradicting Statement~\eqref{eq:leontief-prod-min-expenditure}. Since $\mathbf{\bar{p}}$ is a solution to feasibility program $\mathcal{Q}$ of \Cref{fig:production-markets-Q}, we have that $\bar{p}_j \leq \varepsilon \cdot \bar{p}_{j'}$, for every $j \in J$. Hence, by choosing $\varepsilon$ to be sufficiently small, we may lower-bound the budget of consumer $i$ by
\begin{equation*}
w_{i_0j'} \cdot p_{j'}^* \geq \frac{w_{i_0j'}}{\varepsilon \cdot m} \cdot \sum_{j \in J}\bar{p}_j \geq (C\cdot L+1)\sum_{j \in J} p_j^*
\end{equation*}
Since $i_0 \in N_J$, by the choice of $\varepsilon$, there is at least one good $g \in J$ such that $\text{BPB}_{i_0}(g,k) > \text{BPB}_{i_0}(g',k)$, for all $g' \in G\setminus J$, and for any two \segs $k,k \in K_i$, i.e., the consumer prefers to buy quantities of good $g$ rather than any good which is not in the set $J$, on any \seg. Since consumer $i_0$ is nonsatiated for some good $j \in J$ (by virtue of being in $N_J$), by the fact that $\bar{q}_{i_0j}$ is an optimal solution to linear program $\mathcal{P}_\text{con}$ of \Cref{fig:leontief-free-consumption-lp}, it should be that $\bar{q}_{i_0j} = C \cdot L +1$, which implies that $\bar{q}_{i_0j} \geq (C \cdot L+1)p_j^*$.  By substituting that into the definition of the excess expenditure $\bar{e}_j$, we obtain the following inequality:
\begin{align*}
0 > \bar{e}_j^* \geq (CL+1)\bar{p}_j - \sum_{f \in F \colon j \in \mathcal{P}_f} \bar{r}_j^f - \bar{p}_j = CL\cdot \bar{p}_j - \sum_{f \in F \colon j \in \mathcal{P}_f} \bar{r}_j^f.
\end{align*}
Therefore, it suffices to show that 
\[\sum_{f \in F \colon j \in \mathcal{P}_f} \bar{r}_j^f \leq (C\cdot L)\cdot \bar{p}_j;\] in that case we will have $\bar{e}_j \geq 0$ and we will obtain a contradiction. Now consider any firm $f \in F$. Note that if $\bar{r}^f \leq 0$, then the inequality above clearly follows. Otherwise, if $\bar{r}_f^j >0$, then $\mbox{rpu}(\mathbf{\bar{p}}) = \bar{p}_j \cdot \beta_j^f$. In this case, using the constraints of linear program $\mathcal{P}_\text{prod2}$ of \Cref{fig:leontief-free-productions-lp}, we may bound the production of good $j$ by firm $f$ as follows: 
\begin{align*}
   \bar{r}^f_j&\leq  \sum_{j\in\mathcal{R}_f}\bar{s}^f_j+\max\{0,\mbox{rpu}^f(\mathbf{\bar{p}})-\mbox{cpu}^f(\mathbf{\bar{p}})\}\cdot \min\{L^f,L\}\\
   &\leq\mbox{rpu}^f(\mathbf{\bar{p}})\cdot \min\{L^f,L\}\\
   &=(\beta^f_j \min\{L^f,L\})\cdot \bar{p}_j\\
   &\leq (\beta^f_j L)\bar{p}_j.
\end{align*}
This clearly bounds $\bar{r}_j^f$ as long as the $L^f \neq \infty$, as in that case $L^f \leq L$. $L^f = \infty$, then $\bar{r}j^f$ is upper bounded by the global upper bound $L^c$ on the production. Again, $L^c \leq L$, and the above inequality follows. 
\noindent By letting $C\geq |\mathcal{F}|\cdot\max\{\beta^f_j\colon f\in\mathcal{F}\wedge j\in\mathcal{P}_f\}$, we obtain the contradiction $0>\bar{e}_j\geq 0$. This concludes the proof of the claim. 
\end{proof}

\noindent \Cref{claim:leontief-prod-claim1} establishes that a consumer in the set $N_J$ cannot endow any good that is not in the set $J$ in a positive quantity. Since $J \subsetneq G$, this implies that $N_J \subsetneq N$, since every good is positively endowed by some consumer $i \in N$. Now consider any consumer $i \in N_J$ and any consumer $i' \in N\setminus N_J$ or any firm $f \in F\setminus F_J$. \Cref{claim:leontief-prod-claim1} implies that there cannot be an edge $(i,i')$ or an edge $(i,f)$ in the economy graph of the market. Indeed, for $(i,i')$, notice that by the claim consumer $i$ only endows positively goods that are in $J$, but consumer $i'$ has value $0$ for them by virtue of being in $N\setminus N_J$, and hence she is not nonsatiated for any good endowed positively by consumer $i$. Similarly, for $(i,f)$, we have that $\mathcal{R}_f \subseteq G\setminus J$ by definition, but consumer $i$ only positively endows goods in $J$, so clearly $f$ cannot be nonsatiated for any of the goods endowed by consumer $i$. \medskip

\noindent Next, we state and prove the following claim.

\begin{claim}\label{claim:leontief-prod-claim2}
Consider and $f \in F_J$, with $L^f = \infty$. For any good $j' \in \mathcal{P}_f$, it holds that $j' \in J$.
\end{claim}

\begin{proof}
The claim also follows a very similar idea to that of \Cref{claim:leontief-prod-claim2} which we presented in \Cref{sec:prod-markets-linear}. Let $\alpha = \max_{f',j}\alpha^{f'}_j$ and $\beta = \max_{f',j}\beta^{f'}_j$. Assume by contradiction that for firm $f$, there is some good $j'\in\mathcal{P}_f$ such that $j' \in G\setminus J$. Note that $f$ is nonsatiated with respect to some good in $J$ (by virtue of being in $F_J$ and $L^f = \infty$). Additionally, since $\mathbf{\bar{p}}$ is an optimal solution to feasibility program $\mathcal{Q}$ of \Cref{fig:leontief-free-markets-Q}, we have that $\bar{p}_j \cdot \varepsilon \bar{p}_{j'}$ for all goods $j \in J$. By the fact that $\mathbf{\bar{s}^f}$ is an optimal solution to linear program $\mathcal{P}_\text{prod1}$, it follows that firm $f$ will utilize only goods from $J$ as raw goods for its production. Hence, there exists some good $j_1\in J$ such that
    \begin{align*}
        \bar{s}^{f}_{j_1}\geq \frac{L\cdot\mbox{cpu}^{f_1}(\mathbf{p})}{m}\geq\frac{L}{m\alpha }\cdot \bar{p}_{j_1}
    \end{align*}
    Using this, we conclude that 
 \begin{align}\label{eq:leontief-free-chain-first}
        0>\bar{e}_{j_1}\geq \Big(\frac{L}{m \alpha}-1\Big)\bar{p}_{j_1}-\sum_{f'\colon j_1\in\mathcal{P}_{f_1}}\bar{r}^{f'}_{j_1}
    \end{align}
If $j_1$ were not being produced, we would reach the contradiction that $0>\bar{e}_{j_1}\geq 0$ by picking $L$ sufficiently large. If $j_1$ \emph{is} being produced, then it must be produced using some good $j_2$ from $J$. This is because by the choice of $\varepsilon$, it is only profitable for any firm to produce units of goods $j \in J$ (that have very low prices) only by using units of goods $j \in J$ as raw goods. Again, if $j_2$ is not being produced, we reach a contradiction by choosing $L$ sufficiently large, and otherwise we may repeat the argument. Since there cannot be a cycle of profitable production in the graph $G_\mathcal{F}(\mathcal{M})$, by the no production out of nothing and no vacuous production sufficiency condition, the argument will need to be repeated at most $m$ times, one for each good in the possibly longest production chain of the graph. In the end, we can pick $L$ to be large enough to ensure that we get a contradiction. 

We remark that we made the very same argument in the proof of \Cref{claim:prod-claim2} via establishing appropriate bounds inductively for every step of the chain. In the case of markets with Leontief-free productions however, calculating the resulting quantities becomes very tedious, so we elected to present the proof a bit differently.\medskip

\noindent As it stands, Inequality~\ref{eq:leontief-free-chain-first} implies that there exists a firm $f_1$ such that $r^{f_1}_{j_1}\geq \Big(\frac{L}{\ell m\alpha}-\frac{1}{\ell} \Big)\bar{p}_{j_1}$. As $f_1$ produces a good in $J$, it holds that 
\begin{enumerate}
    \item $\mbox{rpu}^{f_1}(\mathbf{p})\geq \mbox{cpu}^{f_1}(\mathbf{p})$ and \label{enum:claim2-leontief}
    \item $f_1$ produces only using goods from $J$.
\end{enumerate}
We consider two cases, depending on whether the inequality in Inequality~\ref{enum:claim2-leontief} is strict or not.

\paragraph{Case 1: $\mbox{rpu}^{f_1}(\mathbf{\bar{p}}) = \mbox{cpu}^{f_1}(\mathbf{\bar{p}})$.}
        In this case, we find that
        \begin{align*}
            \Big(\frac{L}{\ell m\alpha}-\frac{1}{\ell}\Big)\bar{p}_{j_1}\leq \bar{r}^{f_1}_{j_1}\leq\sum_{j\in\mathcal{P}_{f_1}}\bar{r}^{f_1}_j = \sum_{j\in\mathcal{R}_{f_1}} \bar{s}^{f_1}_j
        \end{align*}
        From this it follows that there exist a good $j_2$ such that $\bar{s}^{f_1}_{j_2}\geq \big(\frac{L}{\ell m^2\alpha}-\frac{1}{\ell m}\big)\bar{p}_{j_1}$. Using that $\bar{p}_{j_2}/\alpha^{f_1}_{j_2}=\mbox{cpu}^{f_1}(\mathbf{\bar{p}})=\mbox{rpu}^{f_1}(\mathbf{\bar{p}})=\beta^{f_1}_{j_1}\bar{p}_{j_1}$, this implies that $\bar{s}^{f_1}_{j_2}\geq \big(\frac{L}{\ell m^2\alpha^2\beta}-\frac{1}{\ell m\alpha\beta}\big)\bar{p}_{j_2}$. Now we may bound $\bar{e}_{j_2}$ as before and repeat the argument.

\paragraph{Case 2: $\mbox{rpu}^{f_1}(\mathbf{\bar{p}}) > \mbox{cpu}^{f_1}(\mathbf{\bar{p}})$.}
        First, we show that $L^{f_1}=\infty$. Assume that it is not. Then we have:
        \begin{align}
            \Big(\frac{L}{\ell m\alpha}-\frac{1}{\ell}\Big)\bar{p}_{j_1}\leq \bar{r}^{f_1}_{j_1}\leq\sum_{j\in\mathcal{P}_{f_1}}\bar{r}^{f_1}_j\leq \min\{L^{f_1},L\}\cdot\mbox{rpu}^{f_1}(\mathbf{\bar{p}})\leq (\min\{L^{f_1},L\}\beta)\bar{p}_{j_1} \leq (L^{f_1}\beta)\bar{p}_{j_1},
        \end{align}
        where the last inequality follows because $L$ is chosen sufficiently large for $L^f_1 \leq L$ to hold, since $L^f_1 \neq \infty$ by assumption. For $L$ sufficiently large though, this leads to a contradiction. Hence we may assume henceforth that $L^{f_1}=\infty$. \medskip
        
        \noindent Given the above, as $\mbox{rpu}^{f_1}(\bar{\mathbf{p}}) > \mbox{cpu}^{f_1}(\mathbf{\bar{p}})$, we have that $\sum_{j\in\mathcal{R}_{f_1}}\bar{s}^{f_1}_j = L\cdot\mbox{cpu}^{f_1}(\bar{\mathbf{p}})$. As $f_1$ produces a good from $J$, $f_1$ can only produce using goods from $J$ as raw goods. This is because, as we discussed earlier, the prices of goods in $J$ are sufficiently smaller that those for goods in $G\setminus J$, and hence any profitable production that uses goods in $J$ as raw goods should also involve only goods in $J$ as produced goods. Therefore, we establish that there exists a $j_2\in J$ such that $\bar{s}^{f_1}_{j_2}\geq\frac{L}{m}\mbox{cpu}^{f_1}(\mathbf{\bar{p}})\geq \frac{L}{m\alpha}\bar{p}_{j_2}$. Using this, we may again bound $\bar{e}_{j_2}$ and repeat the arguments above.
        
    \paragraph{Finding $j_k \in J$ with $0>\bar{e}_{j_k} \geq 0$.}
        As we mentioned earlier, by the \emph{no production out of nothing and no vacuous production} sufficiency condition, the above argument has to be repeated at most $m$ times, as that is the maximum possible length of any profitable production chain. In both of the two cases consider above, $L$ becomes smaller by a certain factor as a function of the input parameters. This means that $L$ can be picked large enough to guarantee that at the end of the chain, when the last good is not being produced, we obtain that $0 > \bar{e}_{j_k} \geq 0$, a contradiction.\medskip

        \noindent This concludes the proof of the claim. 
\end{proof}

\noindent \Cref{claim:leontief-prod-claim2} implies that for any firm $f \in F_J$, with $L^f = \infty$, there cannot be an edge to any consumer $i \in N \setminus N_J$ or any firm $f' \in F \setminus F_J$ in the economy graph. 
Indeed, for an edge $(f,i)$, the claim asserts that the any produced good $j'$ of $f$ will be in the set $J$, whereas consumer $i$ only values goods in the set $G \setminus J$ positively, by virtue of being in $N \setminus N_J$. Similarly, for an edge $(f,f')$, firm $f$ produces only goods in $J$, whereas the for firm $f'$ we have $\mathcal{R}_{f'} \subseteq G \setminus J$, by virtue of $f'$ being in $F\setminus F_J$. \medskip

\noindent From the two paragraphs succeeding \Cref{claim:leontief-prod-claim1} and \Cref{claim:leontief-prod-claim2}, $N_J$ and $N\setminus N_J$ are two strongly connected components in the economy graph of the market, contradicting the sufficiency condition requiring that in the economy graph there is a strongly connected component containing all the consumer-nodes. This completes the proof.
\end{proof}

\subsection{Arrow-Debreu Markets with Succinct SPLC Utilities and Production}\label{sec:SSPLC-markets}

In this section we provide our PPAD-membership result for \emph{succinct} separable piecewise-linear utilities (SSPLC). These are SPLC utilities, i.e., each consumer has a piecewise-lineaer utility function over different pieces (or segments) of each good, and her total utility for her bundle is additive over those utility functions. The ``succinct'' part comes from the fact that, in contrast with the SPLC utilities that have been studied before in the literature (e.g., see \citep{garg2015complementary,SODA:GargV14,vazirani2011market}), these functions can be accessed implicitly, given access to a boolean circuit that evaluates the function at different points. Effectively, these allows us to model functions with exponentially many pieces, while still developing a reduction which is polynomial in the size of the input, i.e., the size of those circuits. Our main theorem of the section is the following.

\begin{theorem}\label{thm:SSPLC-SPLC-PPAD}
    Computing a competitive equilibrium in an Arrow-Debreu market with SSPLC utilities and SPLC production functions is in PPAD.
\end{theorem}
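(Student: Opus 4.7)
The plan is to mimic the proof of \Cref{thm:Leontief-free}: the fixed-point function $F$ on $(\hat{\mathbf{p}},\hat{\mathbf{q}},\hat{\mathbf{s}},\hat{\mathbf{r}})$, the firm-side linear programs $\mathcal{P}_{\text{prod1}}$ and $\mathcal{P}_{\text{prod2}}$ (explicit SPLC production is a special case of Leontief-free production, so these transfer verbatim), the price feasibility program $\mathcal{Q}$, and the market-clearing argument of \Cref{lem:leontief-free-markets-correctness-lemma} all carry over. The only component that has to be rebuilt is the consumer-side program $\mathcal{P}_{\text{con}}$, whose segment-indexed variables $q^i_{jk}$ are now possibly exponentially many and cannot be used directly.

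For each consumer~$i$ I plan to keep only one aggregate expenditure variable $\bar{q}^i_j$ per good and encode optimality via pairwise BPB comparisons. The KKT conditions for SPLC consumption state that consumer~$i$ spends only on goods attaining the maximum marginal BPB $\phi^i_j(x^i_j)/p_j$, where $\phi^i_j$ is the (non-increasing, piecewise-constant) marginal-utility function of consumer~$i$ for good~$j$. Since division by $p_j$ is not available in a \linear arithmetic circuit, I would cross-multiply and work with
\[
a^i_{jj'} \;:=\; \hat{p}_{j'}\cdot \phi^i_j(\hat{q}^i_j/\hat{p}_j).
\]
\Cref{prop:implicit-correspondence} lets us compute $a^i_{jj'}$ as a \pseudog: given the Boolean circuit evaluating $U^i_j$, one derives a circuit for $\phi^i_j$ by differencing, then applies the proposition with $x=\hat{q}^i_j$, $y=\hat{p}_j$, $z=\hat{p}_{j'}$; the required condition $0<y\leq 1$ is met because prices are normalized to $[\varepsilon^m/m,1]$.

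The new consumer program is then the feasibility program with variables $\bar{q}^i_j$ and constraints
\begin{align*}
& a^i_{j'j}-a^i_{jj'}>0 \;\Rightarrow\; \bar{q}^i_j\leq 0,\quad \forall j,j'\in G,\\
& \textstyle\sum_{j\in G} \bar{q}^i_j = B^i,\qquad 0\leq \bar{q}^i_j\leq CL+1,
\end{align*}
where $B^i=\sum_j w^i_j\hat{p}_j+\sum_f\theta^i_f(\hat{r}_f-\hat{s}_f)$. The conditional left-hand sides depend only on the \circparams $\hat{\mathbf{q}},\hat{\mathbf{p}}$, so the program has the form required by \Cref{sec:lin-opt-gate}; solvability is immediate, since some good always attains the max BPB and the entire budget can be concentrated on such goods. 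Any fixed point of $F$ satisfies $\hat{\mathbf{q}}=\bar{\mathbf{q}}$, so the implications translate to: $\bar{q}^i_j>0$ implies $j$ attains the maximum of $\phi^i_{j'}(\bar{q}^i_{j'}/\bar{p}_{j'})/\bar{p}_{j'}$ over $j'\in G$, which is exactly SPLC consumer optimality.

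The main obstacle I anticipate is re-establishing the analogue of \Cref{claim:leontief-prod-claim1}, which in the Leontief-free proof drove the market-clearing contradiction by extracting the bound $\bar{q}^{i_0}_{j}\geq(CL+1)\hat{p}_j$ from the explicit segment capacity $\min\{L^i_k,CL+1\}$ appearing in $\mathcal{P}_\text{con}$---a capacity we no longer have available. However, by choosing $\varepsilon$ sufficiently small, a consumer $i_0\in N_J$ with endowment $w^{i_0}_{j'}>0$ for some $j'\in G\setminus J$ still has budget $B^{i_0}\geq (CL+1)\sum_{j\in J}\hat{p}_j$ and, since every max-BPB good lies in $J$, spends it entirely on $J$; pigeonhole then yields some $j^*\in J$ with $\bar{q}^{i_0}_{j^*}\geq(CL+1)\hat{p}_{j^*}$, which is compatible with the per-good cap $CL+1$ because $\hat{p}_{j^*}\leq 1$. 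This gives $\bar{e}_{j^*}\geq CL\,\hat{p}_{j^*}-\sum_{f:j^*\in\mathcal{P}_f}\bar{r}^f_{j^*}\geq 0$ once $C$ is chosen large in terms of the production parameters, contradicting the minimality of the excess on $J$. The analogue of \Cref{claim:leontief-prod-claim2} and the production-chain bookkeeping transfer essentially unchanged, since they only rely on explicit SPLC production and the global bound $L^c$ from \Cref{rem:leontief-bounds-on-production}.
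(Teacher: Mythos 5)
Your plan has the right skeleton — aggregate per-good expenditure variables, BPB comparisons encoded in conditional constraints, production and prices handled as before — but the central computational step, the one the implicit-function machinery of \Cref{sec:implicit} has to support, does not go through as written.

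You propose to compute $a^i_{jj'} = \hat{p}_{j'}\cdot\phi^i_j(\hat{q}^i_j/\hat{p}_j)$ by first obtaining a circuit for the marginal-utility function $\phi^i_j$ (``by differencing'' a circuit for the cumulative utility) and then invoking \Cref{prop:implicit-correspondence}. That proposition, however, only handles piecewise-constant correspondences whose values are \emph{non-negative integers}: the underlying $\bitextract$/$\bitmult$ gadgets encode the function $f: \{0,\dots,N\}\to\{0,\dots,N\}$ in binary and multiply the resulting bit-string by $z$. The slopes $\phi^i_j$ of an SPLC utility are small rationals; in the SSPLC representation the paper fixes, the input circuit $C_{ij}$ is deliberately specified to output the \emph{inverse} slopes $1/u_{ijk}\in\{1,\dots,B\}$, precisely so that the implicit correspondence is integer-valued and can be fed to \Cref{prop:implicit-correspondence}. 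Your $\phi^i_j$ cannot be extracted from $C_{ij}$ without a division inside the circuit, and there is no way to re-encode the slopes $\{1,1/2,\dots,1/B\}$ as an integer circuit without an exponential blow-up in the bit-length. Consequently the cross-multiplied quantity $a^i_{jj'}$ is not a \pseudog. The paper works instead with the \emph{inverse} BPB $b^i_j := \hat p_j\cdot C_{ij}(\hat q^i_j/\hat p_j)$, which is directly $z\cdot g(x/y)$ with integer $g$; it puts these into the objective of a \emph{minimization} LP (rather than a maximization/feasibility structure built on BPB directly) so that the consumer concentrates spending on minimum inverse-BPB goods.

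Once you switch to inverse slopes, a second issue you do not address becomes unavoidable: on any segment where $u_{ijk}=0$, the quantity $1/u_{ijk}$ is not even defined and the circuit output is meaningless, so the BPB comparison you rely on is garbage there. The paper deals with this by adding the penalty term $(B+1)\cdot \Heaviside\bigl(q_{ij}-(D_{ij}/B)Ap_j\bigr)$ to the LP coefficients, using the explicitly supplied saturation point $D_{ij}$; this forces $q_{ij}=0$ whenever the circuit output is untrustworthy (see \Cref{claim:SSPLC-consumption-claim}). Your feasibility-program variant would need an analogous saturation guard. The rest of your outline — using a per-good aggregate variable, the pigeonhole step in the analogue of \Cref{claim:leontief-prod-claim1}, and reusing the production-chain bookkeeping with the global bound $L^c$ — is sound and matches the paper's strategy, but as stated the consumer-side construction is not realizable as a \linear pseudo-circuit, which is the crux of the theorem.
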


\noindent Note that we are also incorporating (explicit) SPLC productions in our result. Whether we can generalize those to SSPLC production functions is an interesting technical open problem; we discuss the challenges and why our technique falls short in this regard in \Cref{rem:ssplc-production-challenge}.

\paragraph{Markets with Production.} In an Arrow-Debreu market with production $\mathcal{M}$, we have a set $N$ of consumers, a set $G$ of infinitely divisible goods, and a set $F$ of firms. Let $n=|N|$, $m=|G|$, and $\ell = |F|$. We will typically use index $i$ to refer to consumers, $j$ or $g$ to refer to goods and $f$ to refer to firms. Each consumer brings an endowment $w_i = (w_{i1}, \ldots, w_{im})$ to the market, with $w_{ij} \geq 0$ for all $i \in N$ and $j \in G$.
We may assume without loss of generality that for every good $j$, we have $\sum_{i \in N} w_{ij} =1$, i.e., that the total endowment of each good is $1$. We will use $\mathbf{x}_i = (x_{i1}, \ldots, x_{im})$ to denote the vector of quantities of goods allocated to consumer $i \in N$ in $\mathcal{M}$, and we will call it the \emph{bundle} of consumer $i$. Let $\mathbf{x} = (x_1, \ldots, x_n)$ be the vector of such bundles. We will use $\mathbf{p} = (p_1, \ldots, p_m)$ to denote the vector of \emph{prices} in $\mathcal{M}$, one for each good $j \in G$. Prices are non-negative, so $p_j \geq 0$ for all $j \in G$. Given a vector of prices $\mathbf{p}$, the \emph{budget} of consumer $i \in N$ is defined as $\sum_{j \in G}w_{ij}p_j$; intuitively, this is the amount of money that the consumer acquires by selling her endowment at prices $\mathbf{p}$. 

\paragraph{Utility functions.}

Every consumer has a utility function $u_i\colon \RRnn^m\rightarrow\RRnn$ mapping a bundle $\mathbf{x}_i$ to a non-negative real number. In this section, these utilities are \emph{succinct separable piecewise-linear concave (SSLPC)}.  A utility $u_i$ being SPLC means that it is separable over the goods and that the utility accrued from each good is given by a piecewise-linear concave function $u_{ij}$, that is, $u_i(\mathbf{x}_i)=\sum_{j=1}^m u_{ij}(x_{ij})$. The word \emph{succint} refers to the piecewise-linear utilities concave $u_{ij}$ being given by boolean circuits computing their slopes, thus allowing for an exponential number of changes to the value of the slope. Concretely, we assume that the utility functions are given by the following data:

\begin{enumerate}
    \item A natural number $A\in\NN$ such that the piecewise linear functions $u_{ij}$ have constant slope in the interval $(A,\infty)$.
    \item A natural number $B\in\NN$ such that the slopes of the utility functions can change only in the points $\frac{k}{B}\cdot A$ for $k\in \{0,1,\dots,B\}$. 
    \item For every $i\in N$ and $j\in G$, a boolean circuit $C_{ij}$ with $\lfloor\log_2(B)\rfloor+1$ input bits and output bits which on input $k\in \{0,1,\dots,B\}$ computes $1/u_{ijk}$, where $u_{ijk}$ is the slope of $u_{ij}$ in the interval $((k/B)A,((k+1)/B)A)$. Note that this makes sense only if the slope is strictly positive. Also, the utility functions are concave, meaning that $u_{ij0}\geq u_{ij1}\geq\cdots\geq u_{ijB}$. 
    \item For every $i\in N$ and $j\in J$, if the slope of $u_{ij}$ eventually becomes zero, then we are given a natural number $D_{ij}$ such that the slope becomes zero in the point $(D_{ij}/B)A$. 
\end{enumerate}

\begin{remark}
    The reason that we assume that the boolean circuits $C_{ij}$ compute the inverse of the slopes is that this allows us to compute bang-per-buck ratios which is required in our construction.
\end{remark}

\paragraph{Firm Shares.} Each consumer $i \in N$ has a share $\theta_{if} \in [0,1]$ of the profit of each firm $f \in F$. We assume that the profits are entirely shared among the consumers, i.e., for every firm $f \in F$, we have that $\sum_{i \in N}\theta_{if}=1$. 

\paragraph{Production functions.}

In this section we assume that the firms have \emph{separable piecewise-linear concave} production functions given explicitly. Every firm $f$ has a single \emph{output good} $\fout$ which it can produce, and for every $j\in G,$ it has a piecewise-linear concave function $P^f_j\colon\RRnn\rightarrow\RRnn$ defining the ability of $f$ to produce $\fout$ as a function of the good $j$. The overall production of firm $f$ from bundle $\mathbf{y}=(y_1,\dots,y_m)$ is then given by $P^f(\mathbf{y})=\sum_{j\in G}P^f_j(y_j)$. Because of separability, we may assume that every firm has an \emph{input good} $\fin$, which is the only good with which it can produce a strictly positive amount of $\fout$. As such the production function $P^f$ is simply a piecewise-linear concave function. We assume that this function is given to us explicitly as follows:
\begin{enumerate}
    \item The nonnegative real line $\RRnn$ is divided into a finite number $K_f$ of subsequent segments of length $L_{fk}$ for $1\leq k\leq K_f.$ The length of the final segment is infinite.
    \item To every segment is associated a \emph{conversion rate} $c_{fk}\geq 0$ which is the slope of $P_f$ in this segment, that is, $c_{fk}$ is the rate at which $f$ can convert good $\fin$ to good $\fout$ on this segment. The production function $P^f$ being concave means that $c_{f1}\geq c_{f2}\geq\cdots\geq c_{fK_f}$. 
\end{enumerate}
One may wonder why we do not assume that the slopes of the production functions are given succinctly by a boolean circuit as we did for the utility functions. We discuss the problems arising from this representation later in the section.

\paragraph{Competitive Equilibrium.} We are now ready to define the notion of a competitive equilibrium in markets with production.

\begin{definition}[Competitive Equilibrium - Markets with SSPLC Utilites and SPLC Production]\label{def:ssplc-markets-competitive-equilibrium}
A competitive equilibrium of an Arrow-Debreu market with SSPLC utilities and PLC production functions is a triple $(\mathbf{p}^*,\mathbf{x^*}, \mathbf{y^*})$ consisting of non-negative prices $\mathbf{p}^*$, non-negative bundles $\mathbf{x}^* = (\mathbf{x}_1^*,\ldots, \mathbf{x}_n^*)$ and non-negative amounts of input goods $\mathbf{y}^* = (y_1^*, \ldots, y_\ell^*)$, such that
\begin{enumerate}
    \item $p^*_{\fout} \cdot P_f(y_f^*) - p^*_{\fin}\cdot y_f^*$ is maximized, for any firm $f \in F$. \hfill \textbf{\emph{(firm profit maximization)}} \label{con1-ssplc:firm-profit}
   \item $u_i(\mathbf{x}_i)$ is maximized for every consumer $i \in N$, subject to \\
   $\sum_{j \in G}p_j^*\cdot x_{ij}\leq \sum_{j \in G}p_j^*\cdot w_{ij}+\sum_{f \in F} \theta_{if}\cdot (p^*_{\fout} \cdot P_f(y_f^*) - p^*_{\fin}\cdot y_f^*)$.\hfill	\textbf{\emph{(bundle optimality)}} \label{con2-ssplc:optimality}
   \item $z_j^*\leq 0$, and $z_j^* p_j^* =0$, where $z_j^* = \sum_{i \in N} x_{ij}^*+\sum_{f \in F \colon \fin = j}y_{f}^*-\sum_{f \in F \colon \fout = j} P_f(y_{f}^*)-1$, \\for every good $j \in G$. \hfill	\textbf{\emph{(market clearing)}} \label{con4-ssplc:market-clearing}
\end{enumerate}
\end{definition}

\noindent Condition~\ref{con1-ssplc:firm-profit} requires that at the chosen set of prices $\mathbf{p}^*$, each firm maximizes its profit, given its production functions. Condition~\ref{con2-ssplc:optimality} requires that at the chosen set of prices $\mathbf{p}^*$, each consumer maximizes her utility subject to their budget constraints, where the budget consists of the amount earned from selling all the consumer's endowments $\sum_{j \in G}p_j^*w_{ij}$ and the profit share $\sum_{f \in F} \theta_{if}\cdot (p^*_{\fout} \cdot P_f(y_f^*) - p^*_{\fin}\cdot y_f^*)$ of the consumer from the production of the firms. Finally, Condition~\ref{con4-ssplc:market-clearing} is the market clearing condition, which requires that the total consumption of each good is at most the total production plus the total endowment of the consumers, and supply equals demand for all goods which are not priced at $0$. As we detail later on in the section, we may in fact assume without loss of generality that in any competitive equilibrium \emph{all the prices are positive}, and hence Condition~\ref{con4-ssplc:market-clearing} reduces to $z_j^* =0$ for all $j \in G$. Note that in Condition~\ref{con4-ssplc:market-clearing} we have used that $\sum_{\in N}w_{ij}=1$ for each good $j \in G$.

\paragraph{Sufficiency Conditions.} A competitive equilibrium as defined above exists for every market $\mathcal{M}$, under some sufficiency conditions. Similarly to \Cref{sec:prod-markets-linear}, we will use the set of conditions used by \citet{maxfield1997general}, also used in the series of papers on market equilibria that we mentioned in the beginning of \Cref{sec:markets}. 

\begin{enumerate}
    \item Consider a graph $\mathcal{G}_F(\mathcal{M})$ in which the nodes are the goods, and an edge $(j,j')$ has weight 
    \[
    w_{jj'}=\max_{f \in F \colon j = \fin \wedge j'=\fout}c_{f1}
    \]
    If the set is empty, the weight is defined to be $0$.

    The above weight captures the fact that $j'$ can be produced from $j$ at conversion rate $w_{jj'}$ by some firm $f \in F$. Then, for any cycle $C=(g_0, g_1), (g_1,g_2)\ldots, (g_{k-2},g_{k-1})$ of $G_F(\mathcal{M})$ the product of the weights of the edges is less than $1$, i.e., $\prod_{e \in C} \alpha_e < 1$.\label{cond3-ssplc-markets-no-prod-out-of-nothing}

    This condition is known as the \textbf{\emph{no production out of nothing and no vacuous production}} condition. Indeed, if $\prod_{e \in C} \alpha_e > 1$, then it would be possible to increase the quantity of some good, without decreasing the quantity of any other good. The case of $\prod_{e \in C} \alpha_e = 1$ refers to the case of vacuous production, which is also disallowed in our model, similarly to the related works, e.g., see \citep{SODA:GargV14}.

    \item We will say that a consumer $i \in N$ is \emph{\textbf{nonsatiated} for good $j \in G$} if $C_{ij}(B)>0$. We will assume that for any good $j\in G$, there exists some consumer $i\in N$ that is nonsatiated with respect to $j$. Similarly, we say that a firm $f$ \emph{\textbf{nonsatiated}} for good $j \in G$ if $j = \fin$ and $c_{fK_f}>0$.

    \item Consider the \emph{economy graph} $\mathcal{G}_E(\mathcal{M})$ of the market $\mathcal{M}$ in which the nodes are the consumers and the firms, and there is an edge from node $a$ to node $b$ in node $a$ owns/produces a good for which node $b$ is nonsatiated. Then, $\mathcal{G}_E(\mathcal{M})$ contains a strongly connected component containing all the consumer-nodes. This condition generalizes the corresponding strong connectivity condition that we used in \Cref{sec:prod-markets-linear}.
\end{enumerate}

\begin{remark}[Bounds on Production]\label{rem:ssplc-bounds-on-production}
Condition~\ref{cond3-ssplc-markets-no-prod-out-of-nothing} above imposes a bound on the total amount of a production of any firm $f \in F$ in a competitive equilibrium. Note that the production starts from finite endowments $\sum_{i \in N}w_{ij}=1$ for all $j \in G$. Since no firm operates at a loss at an equilibrium, any cycle of production would violate Condition~\Cref{cond3-ssplc-markets-no-prod-out-of-nothing}. Since there are no such cycles, production can take place along chains. The longest such chain is obviously bounded by $m$, and the maximum production of any firm can be bounded by some sufficiently large global constant $L^c$ (e.g., some constant such that $L^c \geq m^m (\max_{f}C_f(0)+1)^n$). See also \citep{SODA:GargV14} for a very similar argument. Looking ahead, this will allow us to impose ``loose'' upper bounds on the production and consumption in the linear programs that we will devise, without compromising the existence of a competitive equilibrium.
\end{remark}

\paragraph{Optimality and bang-per-buck (BPB).}

Similarly to \Cref{sec:exchange-markets-linear}, the optimal bundles of Condition~\ref{con2-prod:optimality} in \Cref{def:prod-markets-competitive-equilibrium} are characterized by their \emph{bang-per-buck (BPB)}. Given consumer $i \in N$ and prices $\mathbf{p}$, the bang-per-buck of good good $j$ on segment $k$ is defined as $\text{BPB}_i(j,k)=\frac{u_{ijk}}{p_j}$. \medskip

\noindent \emph{Bounds on the prices.} Recall that we have assumed that for every good $j$, there is some agent $i$ that is nonsatiated with respect to $j$. Hence, using very similar arguments to \Cref{sec:prod-markets-linear} and \Cref{sec:general-markets}, we may assume that all the prices are strictly positive.

\begin{remark}[Normalized Prices]\label{rem:ssplc-markets-normalized-prices}
Given that $p_j > 0$ for all $j \in G$, we can normalize the prices to sum to $1$ without loss of generality, i.e., we may assume that for every good $j \in G$, we have that $\sum_{j \in G} p_j = 1$. 
\end{remark}

\noindent Again, in a similar manner to \Cref{sec:exchange-markets-linear}, we will use a parameter $\varepsilon$ to capture the fact that if the price $p_j$ for a good $j \in G$ is sufficiently smaller than the price $p_{j'}$ for a good $j' \in G$, then $\text{BPB}(j) > \text{BPB}(j')$. Specifically, we can compute $\varepsilon > 0$ such that
\[
\text{If } p_j \leq \varepsilon \cdot p_{j'} \text{ and } u_{ijk} > 0 \text{ then } \text{BPB}_i(j,k) > \text{BPB}_i(j',k')\text{ for all }k'.
\]
Because the inverses of the utilities are in the set $\{1,2,\dots,B\}$ by assumption, the actual utilities are in the set $\{1/B,1/(B-1),\dots,1/2,1\}$. For the above to hold, it is thus sufficient that if $p_j\leq\eps p_{j'}$, then $(1/B)/p_j\geq 1/p_{j'}$. For this implication to hold, it suffices to choose $\eps\leq 1/B.$
Additionally, we can pick $\varepsilon$ to be sufficiently small such that $\varepsilon < \frac{w_{ij}}{m}$ for all $i \in N$ and all $j \in G$ such that $w_{ij}>0$. Given $\varepsilon$, we will impose a stricter lower bound on the prices, which will be useful later on: in particular, we will assume that for all $j \in G$, $p_j \geq \frac{\varepsilon^m}{m}$.

\paragraph{Expenditures.} As in all of the previous sections, we do not work directly with the quantities $\mathbf{x}$, $\mathbf{y}$. Instead we work with the consumer and firm \emph{expenditures} respectively, which are defined similarly to those of \Cref{sec:prod-markets-linear}. Specifically, for every for agent $i$ we will have a variable $q_{ij}$ for every $j\in G$ which denotes the spending of agent $i$ on good $j$, and for every $f$ and every $1\leq k\leq K_f$, we will have a variable $s_{fk}$ which denotes the spending of firm $f$ on its input good $\fin$ on the $k$th segment.

\subsubsection{Membership in PPAD: The proof of \crtCref{thm:SSPLC-SPLC-PPAD}}

\noindent We will again develop the proof in three steps, namely (a) construction of the function $F$ and arguing that it can be represented by a \linear arithmetic circuit containing \linoptgates, (b) showing that the \linoptgate can compute all the necessary components, and (c) arguing that a competitive equilibrium can be recovered from a fixed point of $F$. 

\paragraph{The function $F$.}

We define a function $F\colon D\rightarrow D$ where
\begin{align}
    D = \{p\in\Delta^{m-1}\colon \forall j\in G,p_j\geq \frac{\eps^m}{m}\}\times [0,CL+1]^{nm}\times_{f} [0,L]^{K_f}
\end{align}
An input to $F$ is a tuple $(\mathbf{p},\mathbf{q},\mathbf{s})$ of prices, consumer expenditures and firm expenditures for raw goods, and the output is another such tuple $(\mathbf{\bar{p}}, \mathbf{\bar{q}}, \bar{\mathbf{s}})$.  \medskip

\noindent For any firm $f$, the output spending variable $\bar{\mathbf{s}}_f=(s_{fk})_{1\leq k\leq K_f}$ is set to be the output of the \linoptgate for the following linear program.

\begin{center}
\underline{\text{Linear Program } $\mathcal{P}_\text{prod}$}
\end{center}
\begin{align}\label{LP:production-SSPLC}
	\mbox{maximize} \ \ \ \ & \sum_{k=1}^{K_f}(c_{fk}p_{\fin}-p_{\fout})\cdot v_{fk}  \\
	\mbox{subject to} \ \ \ \ & 0 \leq v_{fk} \leq \min\{L_{fk},L\}\cdot p_{\fin} \nonumber
\end{align}
As in the case considered in \Cref{sec:prod-markets-linear}, we define 
\[r_{fk} = s_{fk}+\max\{0,c_{fk}\cdot p_{\fout}-p_{\fin}\}\cdot\min\{L_{fk},L\}.\]
Furthermore, we let $s_{f} = \sum_{k=1}^{K_f}s_{fk}$ and $r_f = \sum_{k=1}^{K_f}r_{fk}$. Then, for every agent $i$, we let the output expenditure $\mathbf{q}_i$ be the output of the \linoptgate of the following linear program. 
\begin{center}
\underline{\text{Linear Program } $\mathcal{P}_\text{con}$}
\end{center}
\begin{align}\label{LP:consumption-SSPLC}
	\mbox{minimize} & \sum_{j \in G} \left(p_j\cdot C_{ij}\left(\frac{q_{ij}}{p_j}\right)+(B+1)\cdot H\left(q_{ij}-\frac{D_{ij}}{B}Ap_j\right)\right)\cdot v_{ij} \nonumber \\
	\mbox{subject to}& \sum_{j \in G} v_{ij} = \sum_{j \in G}w_{ij}p_j + \sum_{f \in F} \theta_{if}\cdot (r_f-s_f) \\
        & 0 \leq v_{ij} \leq (C \cdot L + 1), \;\forall j \in G \nonumber
\end{align}
The first term $p_j\cdot C_{ij}(q_{ij}/{p_j})$ of the coefficient in front of $v_{ij}$ is the \pseudogs from \Cref{prop:implicit-correspondence} computing the product of $p_j$ and the piecewise constant correspondence represented by $(C_{ij},B)$. This term should represent the inverse bang-per-buck ratio of agent $i$ for good $j$ when her spending on the good is $q_{ij}$. The second term in the coefficient in front of $v_{ij}$ in the objective function only appears if the slope of the PLC function that agent $i$ has for good $j$ eventually becomes zero. This term is meant to ensure that in an optimal solution, an agent does not spend money on a segment where the slope is zero.\medskip

\paragraph{Excess Expenditure and the Feasibility Program for the prices.} Finally, for every $j\in G$, similarly to \Cref{sec:prod-markets-linear} and \Cref{sec:general-markets}, we define the \emph{excess expenditure} of good $j$ as 

\[e_j = \sum_{i}q_{ij}+\sum_{f\colon \fin = j}s_f-\sum_{f\colon\fout = j}r_f-p_j.\] As in previous sections, we let the output price vector $\bar{p}$ be the output of the following feasibility program.

\begin{center}
\underline{\text{Feasibility Program } $\mathcal{Q}$}
\end{center}
\begin{align}\label{LP:prices-SSPLC}
	& e_j<e_{j'}\Rightarrow p_j\leq \varepsilon \cdot p_{j'},\;\forall j,j' \in G \nonumber \\
	& p_j\geq \frac{\varepsilon^m}{m},\;\forall j \in G\\
	& \sum_{j \in G}p_j = 1\nonumber
\end{align}

\paragraph{Computation by the \linoptgate.} Next, we argue that the solutions to the linear programs and the feasibility program can be computed by our \linoptgate. The subgradients of the LPs for production and consumption can be computed by \linear circuits. The main part of the construction is the \pseudog from \Cref{prop:implicit-correspondence}, computing the correspondence $p_j\cdot C_{ij}(q_{ij}/{p_j})$. For the feasibility program, the arguments that it is of the correct form and that it is solvable are identical to those of \Cref{lem:productionMarkets-OPT-gate-can-compute} (and hence also of \Cref{lem:ExchangeMarkets-OPT-gate-can-compute}), noting the updated definition of the expenditure $e_j$.

\paragraph{Fixed points.}

Suppose that $(\mathbf{p},\mathbf{q},\mathbf{s})$ is a fixed point of $F$. Let $f\in F$ be some firm. As $\mathbf{s}_f$ is a solution to LP (\ref{LP:production-SSPLC}), we have that $s_{fk}>0$ only if $c_{fk} \cdot p_{\fout}\geq p_{\fin}$. Furthermore, if the inequality is strict, then the firm produces using all of that segment. We conclude that $\mathbf{s}$ is an optimal production vector at prices $\mathbf{p}$. Next, we prove the following:

\begin{claim}\label{claim:SSPLC-consumption-claim}
Let $i\in N$ and let $x_{ij}=q_{ij}/p_j$ for every $j\in G$. Then $\mathbf{x}_i$ is an optimal bundle for agent $i$ at prices $\mathbf{p}$. 
\end{claim}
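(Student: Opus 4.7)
The plan is to show that at a fixed point of $F$, the consumption vector $\mathbf{x}_i$ derived from the optimal LP solution $\mathbf{q}_i$ satisfies the standard characterization of optimality for SPLC utilities: the agent spends her entire budget and only on (good, segment) pairs that maximize the bang-per-buck ratio, never spending on a segment whose slope is zero. I would establish this by interpreting the coefficient of $v_{ij}$ in the objective of $\mathcal{P}_\text{con}$ as the (inverse) marginal bang-per-buck at the current expenditure $q_{ij}$, together with a large penalty when $q_{ij}$ pushes past the last positive-slope segment.

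First I would verify that the budget constraint is binding at optimality: since at least one good $j^*$ exists for which agent $i$ is nonsatiated (by strong connectivity of the economy graph, via the same argument used in \Cref{sec:general-markets}), there is always a direction of strictly negative improvement as long as the budget is not exhausted and no zero-slope segment is hit. Next, I would unpack the coefficient: by \cref{prop:implicit-correspondence}, the term $p_j \cdot C_{ij}(q_{ij}/p_j)$ equals $p_j/u_{ijk}$ whenever $q_{ij}/p_j$ lies strictly within the $k$th segment $\bigl((k/B)A, ((k+1)/B)A\bigr)$, and equals any convex combination of $p_j/u_{ij(k-1)}$ and $p_j/u_{ijk}$ at a breakpoint. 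Since $u_{ij0} \geq u_{ij1} \geq \cdots$, this is precisely the inverse of the marginal BPB$_i(j,k)$ at the expenditure level $q_{ij}$.

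I would then run the following optimality argument. Consider any pair $(j,k)$ and $(j',k')$ and suppose $q_{ij}>0$ places the agent (partially) on segment $k$ of good $j$ while $\text{BPB}_i(j',k') > \text{BPB}_i(j,k)$ with $q_{ij'}/p_{j'}$ lying on or before segment $k'$. Then shifting an infinitesimal amount of expenditure from $(i,j)$ to $(i,j')$ strictly decreases the objective, contradicting optimality of $\mathbf{q}_i$. This gives the standard ``goods bought only at maximum BPB'' characterization. For the zero-slope case, note that if $u_{ij}$ becomes zero past $(D_{ij}/B)A$, the Heaviside term $(B+1)\cdot H(q_{ij} - (D_{ij}/B)Ap_j)$ contributes a strict penalty of at least $B+1$ to the coefficient whenever $q_{ij} > (D_{ij}/B)Ap_j$, which strictly exceeds any inverse BPB appearing elsewhere (which are at most $B \cdot p_j$ by the bound on slopes); hence the optimizer never spends beyond the last positive-slope segment, so that no money is wasted.

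The main obstacle, which I anticipate to require a careful computation of bounds, is ensuring two things simultaneously: (a) the Heaviside penalty is strictly larger than every inverse BPB coefficient that can appear anywhere in the LP (justifying the factor $B+1$, given that the inverse slopes lie in $\{1,2,\dots,B\}$ and the prices are normalized so $p_j\leq 1$), and (b) the cap $CL+1$ on individual expenditures does not constrain the optimum, which follows from the global production bound $L^c$ discussed in \Cref{rem:ssplc-bounds-on-production}, picking $L$ and $C$ as in the analogous proofs of \Cref{sec:prod-markets-linear,sec:general-markets}. Once these bounds are in place, optimality of $\mathbf{x}_i$ as an SPLC-optimal bundle at prices $\mathbf{p}$ follows from the standard BPB characterization.
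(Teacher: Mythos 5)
Your proof plan is correct and follows essentially the same approach as the paper's proof: use the nonsatiated good $j'$ (whose coefficient is at most $p_{j'}\cdot B \le B$) together with the $(B+1)$-penalty Heaviside term to show that no money is spent past a zero-slope segment, then conclude from LP optimality that all goods with positive expenditure share the minimum coefficient, i.e.\ the agent buys greedily by bang-per-buck. One small remark: you state you would ``verify that the budget constraint is binding,'' but in $\mathcal{P}_\text{con}$ the budget is an equality constraint, so this is automatic and needs no separate argument; on the other hand, your concern that the cap $C L + 1$ must not bind at the optimum is a legitimate bookkeeping step that the paper's proof of the claim leaves implicit (relying on the global production bounds and the choice of $C$ and $L$ as in the earlier sections).
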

\begin{proof}
    First we show that if the slope of $u_{ij}$ becomes zero at some point, then $q_{ij}\leq\frac{D_{ij}}{B}\cdot A \cdot p_j$. If this were not the case, then the coefficient in front of $v_{ij}$ would be at least  $B+1$. However, by assumption there is a good $j'$ for which agent $i$ is nonsatiated, and the coefficient in front of $v_{ij'}$ is at most 
    \[p_{j'}\cdot C_{ij'}\left(\frac{q_{ij'}}{p_{j'}}\right)\leq p_{j'}\cdot B\leq B<B+1.\] 
    Because $\mathbf{q}_i$ is an optimal solution to LP (\ref{LP:production-SSPLC}), this would then imply that $q_{ij}=0\leq\frac{D_{ij}}{B}\cdot A \cdot p_j$. This means that any agent $i$ can only spend money on a good $j$ up until the point where the inverse of the bang-per-buck ratio is no longer well-defined.
    Now the result follows, because all the goods $j\in G$ such that $q_{ij}>0$ must have the minimal coefficient. From this it follows that agent $i$ has bought the goods in a greedy manner according to the bang-per-buck ratio, meaning that $\mathbf{q}_i$ represents an optimal spending for agent $i$. 
\end{proof}

\noindent What remains to show is that the market clears, that is, $e_j=0$ for all $j\in G$. As in previous sections, we can use a standard calculation to show that $\sum_{j\in G}e_j=0$. Hence, it suffices to argue that $e_j\geq 0$ for all $j\in G$. We define:
\begin{enumerate}
    \item[-] $J = \{j\in G\colon e_j\leq e_{j'}\text{ for all }j'\in G\}$
    \item[-] $N_J = \{i\in N\colon \text{there exists } j\in J\text{ such that } i\text{ is nonsatiated for good }j\}$
    \item[-] $F_J = \{f\in F\colon \text{there exists } j\in J\text{ such that } f\text{ is nonsatiated for good }j\}$
\end{enumerate}
Assume by contradiction that $e_j<0$ for $j\in J$. As $\sum_{j\in G}e_j=0$, this implies that there must exists a good $j'$ such that $e_{j'}>0$, meaning that $J\subsetneq G$. The contradiction will be obtain via the following two claims, very similar to \Cref{claim:prod-claim1} and \Cref{claim:prod-claim2}. Using an identical reasoning to that of \Cref{lem:production-markets-correctness-lemma}, the two claims together will establish that the economy graph $\mathcal{G}_E(\mathcal{M})$ is not strongly connected, violating the corresponding sufficiency condition.

\begin{claim}\label{claim:SSPLC-prod-claim2}
If $i\in N_J$ and $j'\in G\setminus J,$ it holds that $w_{ij'}=0.$
\end{claim}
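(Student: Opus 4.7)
The plan is to mimic the strategy used for the analogous \Cref{claim:prod-claim1} in the linear-utility production setting, adapting it to the implicit SSPLC representation. Assume for contradiction that there exist $i_0 \in N_J$ and $j' \in G \setminus J$ with $w_{i_0 j'} > 0$. The goal is to exhibit some $j^\star \in J$ with $e_{j^\star} \geq 0$, contradicting the assumption that $e_j < 0$ for all $j \in J$.

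The argument will proceed in three stages. First, the fact that $\mathbf{\bar p}$ solves $\mathcal{Q}$ together with $e_j < e_{j'}$ for all $j \in J$ gives $\bar p_j \leq \varepsilon \bar p_{j'}$, and choosing $\varepsilon$ small enough (depending on $C$, $L$, $m$, and the fixed $w_{i_0 j'}$) yields the budget lower bound
\[
w_{i_0 j'} \bar p_{j'} \;\geq\; \tfrac{w_{i_0 j'}}{\varepsilon m}\sum_{j \in J}\bar p_j \;\geq\; (CL+1)\sum_{j \in J} \bar p_j.
\]
Second, because $i_0 \in N_J$, there is some $g \in J$ on which $i_0$ is nonsatiated; combining $\varepsilon \leq 1/B$ with the fact that the slopes $u_{i_0 j k}$ output by the implicit circuit have inverses in $\{1,\dots,B\}$ shows that, for every $j'' \in G \setminus J$, the coefficient $\bar p_{j''}\cdot C_{i_0 j''}(\bar q_{i_0 j''}/\bar p_{j''})$ of $v_{i_0 j''}$ in the objective of $\mathcal{P}_\text{con}$ strictly exceeds the coefficient $\bar p_g \cdot C_{i_0 g}(\bar q_{i_0 g}/\bar p_g) \leq B \bar p_g$. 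By \Cref{claim:SSPLC-consumption-claim} the zero-slope penalty $(B+1)H(\cdot)$ is never active at optimum, and consequently all of $i_0$'s expenditure goes to goods in $J$. Since the LP constraint forces $\sum_{j \in J}\bar q_{i_0 j}$ to equal the budget, a pigeonhole argument produces some $j^\star \in J$ with $\bar q_{i_0 j^\star} \geq (CL+1)\bar p_{j^\star}$ (using $\bar p_{j^\star} \leq 1$). Third, bounding the revenues: for any firm $f$ with $\fout = j^\star$, on any segment $k$ with $r_{fk} > 0$, optimality in $\mathcal{P}_\text{prod}$ together with the definition $r_{fk} = s_{fk} + \max\{0,c_{fk}\bar p_{j^\star} - \bar p_{\fin}\}\min\{L_{fk},L\}$ gives $\bar p_{\fin} \leq c_{f1}\bar p_{j^\star}$ on that segment, hence $r_{fk} \leq 2 c_{f1} L \bar p_{j^\star}$ and $r_f \leq 2 K_f c_{f1} L \bar p_{j^\star}$. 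Choosing $C$ large enough (e.g., $C \geq 2|F|\max_f K_f c_{f1}$) yields $\sum_{f : \fout = j^\star} r_f \leq CL \bar p_{j^\star}$, whence $e_{j^\star} \geq (CL+1)\bar p_{j^\star} - \bar p_{j^\star} - CL\bar p_{j^\star} = 0$, the desired contradiction.

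The main obstacle will be the second stage: verifying that the implicit representation of the SSPLC utilities still forces \emph{all} of $i_0$'s expenditure into $J$. In the linear-utility case this was an immediate BPB comparison, but here the objective coefficient $\bar p_j\cdot C_{i_0 j}(\bar q_{i_0 j}/\bar p_j)$ depends on the (possibly correspondence-valued) output of the bit-extraction \pseudog from \cref{prop:implicit-correspondence}, and one must argue uniformly over all values this \pseudog can take at a fixed point. The key quantitative facts that make this work are the integrality of the outputs of $C_{ij}$ (inverse slopes in $\{1,\dots,B\}$ when positive) and the lower bound $\bar p_j \geq \varepsilon^m/m$ ensured by $\mathcal{Q}$, combined with the choice $\varepsilon \leq 1/B$; together these guarantee that the minimum coefficient attainable on $G \setminus J$ strictly exceeds the maximum attainable on the nonsatiated good $g \in J$, regardless of which segment of $u_{i_0 j''}$ is active.
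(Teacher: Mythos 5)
Your proposal matches the paper's proof essentially step for step: lower-bound the budget by $(CL+1)\sum_{j\in J}\bar p_j$ via the $\mathcal{Q}$-enforced price separation, argue via the bang-per-buck comparison (using integrality of the circuit outputs and $\varepsilon \lesssim 1/B$) that agent $i_0$'s full budget lands on goods in $J$, extract a $j^\star \in J$ with $\bar q_{i_0 j^\star}\geq (CL+1)\bar p_{j^\star}$, and bound $\sum_{f:\fout=j^\star}r_f \leq CL\bar p_{j^\star}$ to force $e_{j^\star}\geq 0$. The only (cosmetic) difference is that your revenue bound $r_{fk}\leq 2c_{f1}L\bar p_{j^\star}$ is looser than the paper's $r_{fk}\leq c_{fk}L\bar p_{j^\star}$ (which follows since $s_{fk}>0$ only if $p_{\fin}\leq c_{fk}p_{j^\star}$, making $r_{fk} = c_{fk}p_{j^\star}\min\{L_{fk},L\}$ exactly when production occurs), so your constant $C$ carries an unnecessary factor of $2$; this does not affect correctness.
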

\begin{proof}
    Assume for contradiction that $i\in N_J$ and $j' \in G\setminus J$ and that $w_{ij'}>0$. As in previous sections, this implies that we can lower bound the budget of agent $i$ by $(CL+1)\sum_{j\in J}p_j$ by picking $\eps\leq \frac{w_{min}}{m\cdot (CL+1)}$. By assumption, $i$ is nonsatiated with respect to some $j\in G$. As $p_j\leq\eps p_{j''}$ for all $j''\in G\setminus J$, we get by choice of $\eps$ that $i$ will spend her entire budget on goods in $J.$ Hence, there is some $j\in J$ such that $q_{ij}\geq (CL+1)p_j.$ This implies that
    \begin{align}
        e_j\geq (CL+1)p_j-\sum_{f\colon\fout = j}r_f-p_j
    \end{align}
    Now for any $f$ with $\fin  = j$, we have that $r_f =\sum_{k=1}^{K_f}r_{fk}\leq\sum_{k=1}^{K_f}Lc_{fk}p_j\leq (K_f\cdot\max_{f\in F}c_{f1})Lp_j$. By picking $C\geq |F|\cdot \max_f K_f\cdot\max_f c_{f1}$, we obtain the contradiction that $e_j\geq 0.$ We conclude that if $i\in N_J$ and $j'\in G\setminus J$, then $w_{ij'}=0.$
\end{proof}

\begin{claim}\label{claim:SSPLC-prod-claim1}
If $f \in F_J$, then $\fout\in J$.
\end{claim}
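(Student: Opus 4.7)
The plan is to mirror the structure of the analogous Claim~\ref{claim:prod-claim2} (and in the Leontief-free setting, Claim~\ref{claim:leontief-prod-claim2}), adapting the argument to the SPLC production functions and the revenue-recovery formula $r_{fk}=s_{fk}+\max\{0,c_{fk}p_\fout-p_\fin\}\cdot\min\{L_{fk},L\}$ used in this section. Assume for contradiction that $f\in F_J$ but $\fout\notin J$. Since a firm can only be nonsatiated for its input good, $f$ being nonsatiated for some good of $J$ forces $j_0:=\fin\in J$ and, moreover, $c_{fK_f}>0$ and $L_{fK_f}=\infty$. The feasibility program $\mathcal{Q}$ gives $p_{j_0}\leq\varepsilon\, p_\fout$; by choosing $\varepsilon>0$ sufficiently small (in particular $\varepsilon<1/\max_{f',k}c_{f',k}$) we get $c_{fk}p_\fout>p_{j_0}$ for every segment $k$ (using concavity, $c_{fk}\geq c_{fK_f}>0$). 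Hence in $\mathcal{P}_{\text{prod}}$ the coefficient of each $v_{fk}$ is strictly positive and optimality gives $s_{fk}=\min\{L_{fk},L\}\,p_{j_0}$; because $L_{fK_f}=\infty$ we obtain $s_f\geq L\,p_{j_0}$.

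Plugging this into the excess expenditure we get
\[
e_{j_0}\;\geq\;s_f-\!\!\sum_{f':\,g_{f'}^{\text{out}}=j_0}\!\! r_{f'}-p_{j_0}\;\geq\;(L-1)p_{j_0}-\!\!\sum_{f':\,g_{f'}^{\text{out}}=j_0}\!\! r_{f'}.
\]
If no firm produces $j_0$, then taking $L$ large forces $e_{j_0}\geq 0$, contradicting $j_0\in J$ and $e_{j_0}<0$. Otherwise, I would show that every firm $f'$ producing $j_0$ with $r_{f'}>0$ must have $g_{f'}^{\text{in}}\in J$: indeed, if $g_{f'}^{\text{in}}\in G\setminus J$, then $p_{g_{f'}^{\text{in}}}>p_{j_0}/\varepsilon>c_{f',1}p_{j_0}\geq c_{f',k}p_{j_0}$ for all $k$, so every LP coefficient is negative, $s_{f',k}=0$, and from the revenue formula $r_{f',k}=0$. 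For the remaining firms (input in $J$) the segment-wise identity yields $r_{f',k}\leq c_{f',k}p_{j_0}\min\{L_{f',k},L\}$, which then translates (via $p_{g_{f'}^{\text{in}}}\leq c_{f',k}p_{j_0}$ on any profitable segment and $p_{j_0}\leq \varepsilon p_\fout$ replaced by the corresponding edge-weight of $\mathcal{G}_F(\mathcal{M})$) into a bound on $r_{f'}$ in terms of $p_{g_{f'}^{\text{in}}}$.

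Iterating this bound produces a chain $j_0,j_1,j_2,\dots$ of goods in $J$, where $j_{t+1}$ is the input of some firm producing $j_t$. By sufficiency condition~\ref{cond3-ssplc-markets-no-prod-out-of-nothing} (no production out of nothing and no vacuous production, together with the fact that conversion along any cycle multiplies to strictly less than one), this chain is acyclic and has length at most $m$. At the last good $j_k$ in the chain there is no further producing firm, so the same inductive computation used for $j_0$ forces $e_{j_k}\geq 0$, contradicting $j_k\in J$ with $e_{j_k}<0$. Hence $\fout\in J$, proving the claim.

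The main obstacle is bookkeeping the loss factors along the chain: each step from $j_t$ to $j_{t+1}$ contributes a multiplicative factor bounded by $|F|\cdot\max_{f',k}c_{f',k}\cdot\max_{f',k}(1/c_{f',k})$ and an additive $\mathcal{O}(1)$ term arising from the extra $\max\{0,c_{fk}p_\fout-p_\fin\}\cdot\min\{L_{fk},L\}$ summand in the revenue formula. Because the chain length is capped by $m$, one can pick $L$ (defined once, globally, before the construction) large enough that the leading term $L\cdot p_{j_k}$ still dominates all the accumulated losses at the final step, delivering the desired $e_{j_k}\geq 0$ contradiction. This is exactly the role played by the choice of $L$ in the proof of Claim~\ref{claim:prod-claim2}, and the analogous constants work here by the same elementary induction on $k$.
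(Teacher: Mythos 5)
Your proof follows the same contradiction-by-production-chain argument as the paper's: force $s_f\geq L\,p_{j_0}$ via the feasibility program's price bound, lower-bound $e_{j_0}$, observe that any firm producing $j_0$ must have its input good in $J$ (else all its LP coefficients are negative, so $s_{f',k}=r_{f',k}=0$), and iterate along a chain of length at most $m$ by the no-production-out-of-nothing condition, choosing $L$ large enough for the final contradiction. The only difference is that you spell out the ``chain stays in $J$'' step more explicitly than the paper does in this claim (though the paper does spell out the identical argument in the analogous Claim~\ref{claim:prod-claim2}), so this is the same proof.
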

\begin{proof}
    Let $f\in F_J$ and let $j=\fin$. Assume for contradiction that $\fout\notin J.$ Then $p_j\leq\eps p_{\fout}$, meaning that $f$ will produce fully if $\eps>0$ is chosen sufficiently small. Hence $s_f\geq s_{fK_f}=Lp_j$, and so
    \begin{align}
        0>e_j\geq (L-1)\cdot p_j-\sum_{f'\colon g^{\text{out}}_{f'}=j}r_{f'}
    \end{align}
    If no firm produces $j$, then we reach a contradiction by choosing $L\geq 1.$ 
    The above inequality implies that there must exist a firm $f_1$ such that $g^{\text{out}}_{f_1}=j$ and $r_{f_1}\geq \frac{L-1}{\ell}p_j$. Let $j_1 = g^{\text{in}}_{f_1}$ and note that $j_1\in J$. As in previous sections, we may thus bound $\frac{L-1}{\ell}p_j\leq r_{f_1}\leq c_{f_1 1}s_{f_1}$. If any production occurs, then $c_{f_1 1}p_j\geq p_{j_1}.$ These two inequalities combine to show that $s_{f_1}\geq \frac{L-1}{\ell\cdot c_{f_1 1}^2}p_{j_1}$. Using this, we may again bound
    \begin{align}
        0>e_{j_1}\geq \Big(\frac{L-1}{\ell\cdot c_{f_1 1}^2}-1\Big)\cdot p_{j_1}-\sum_{f'\colon g^{\text{out}}_{f'}=j_1}r_{f'}
    \end{align}
    Again, if $j_1$ is not produced, then we may pick $L$ large enough to reach a contradiction. Otherwise we repeat the previous argument. This can happen at most $m$ times, because of the \emph{no assumption out of nothing} assumption. 
\end{proof}

\subsubsection*{Beyond SSPLC Utilities?}

We conclude the section with two remarks. The first one briefly discusses the challenges of using the machinery of \Cref{sec:implicit} for SSPLC productions also.

\begin{remark}\label{rem:ssplc-production-challenge}
    The reason for not allowing a succinct representation of the production functions is that this seemingly necessitates multiplying a piecewise-linear function by an input variable of the circuit. A natural idea would be to have the cumulative sum of conversion rates given succinctly by a boolean circuit $C'$. However, in order to compute the revenue of firm $f$ given spending $s_f$, one would then have to compute the piecewise-linear function $p_{\fout}\cdot C'(s_f/p_{\fin})$ which is something that is not captured by the results in \Cref{sec:implicit}.
\end{remark}

\noindent The second remark regards the possibility of defining a class of ``succinct Leontief-free functions'', similarly to the case of SSPLC. 

\begin{remark}
    One might hope to define a class which captures the case of Leontief-free utility and production functions and the case of SSPLC utility functions and SPLC production functions simultaneously. For the utility functions, one approach would be to consider utility functions $u^i$ given by a finite list $K_i$ of tranches $\{s^i_k\}_{k\in K_i}$. Associated with a tranche $s^i_k$ would again be a number $L^i_k\in\RRnn\cup\{\infty\}$ which is an upper bound for the total amount of utility that can be accrued from this tranche. Also, for every good $j$, instead of having a single number $u^i_{jk}$ which is the rate at which good $j$ provides utility for agent $i$ on tranche $k$, we would have that $j$ would provide utility for agent $i$ on segment $k$ according to a piecewise-linear concave function $U^i_{jk}$ given to us succinctly in the form of a boolean circuit. Such a class of utility functions would generalize both Leontief-free utility functions and SSPLC utility functions. However, because the rates at which goods provide utility is non-constant, it is unclear how to obtain constraints that mimic the constraint $\sum_{j\in G}U^i_{jk}(x^i_{jk})\leq L^i_k$ after transforming the variables via Gale's substitution (\Cref{rem:gale}).
\end{remark}

\section{Pacing Equilibria in Auto-Bidding Auctions}\label{sec:pacing}

In this section, we consider environments which we will broadly refer to as ``auto-bidding auctions''. In these, a set of items are sold separately to a set of buyers via parallel single-item auctions (e.g., first-price auctions or second-price auctions). To participate in these auctions, each buyer $i$ \emph{scales} her valuations for the items by the same multiplier $\alpha_i$, which is known as a \emph{pacing multiplier}, under some constraints on her expenditure, typically provided by budgets, or return-on-investment (ROI) thresholds. The objective is to find a \emph{pacing equilibrium} (see \Cref{def:PE,def:RPE}), i.e., a vector of pacing multipliers and allocations which ensure that the auction is run correctly and that it satisfies the spending constraints of all the buyers simultaneously. 

In \Cref{sec:pacing-sp-budgets} we consider the former case, i.e., that of single-item auctions with budget constraints. For the case of first-price auctions, \cite{borgs2007dynamics} showed that a pacing equilibrium always exists and can be computed in polynomial time via a convex program. Later on \citet{conitzer2022pacing} showed that in this case the pacing multipliers are unique in every pacing equilibrium, and presented an interesting connection between pacing equilibria and solutions of the well-known Eisenberg-Gale convex program for quasi-linear utilities \citep{cole2017convex}.
For second-price auctions, the existence of a pacing equilibrium was established in \citep{conitzer2022multiplicative}. \citet{chen2021complexity} later on proved that the problem of computing a pacing equilibrium in these auctions is in PPAD, which also implied that a solution described by rational numbers also exists. We provide an alternative PPAD-membership result for the same problem. Compared to the proof of \citet{chen2021complexity}, our proof makes uses of our \linoptgate and is \emph{significantly} simpler. We highlight the main ideas of our proof and how they differ from those of \citet{conitzer2022multiplicative} and \citet{chen2021complexity} in \Cref{sec:pacing-sp-original-proof} below. We remark that \citet{chen2021complexity} also showed the problem to be PPAD-hard, thus establishing its PPAD-completeness.

In \Cref{sec:pacing-sp-roi}, we also consider the case of (on average) ROI-constrained buyers, in the model introduced by \citet{li2022auto}. The existence of pacing equilibria for this case was proven by \citet{li2022auto}, following a very similar approach to that of \citet{conitzer2022multiplicative}. Our proof follows along very much the same lines as the one we present for the case of budget-constrained buyers in \Cref{sec:pacing-sp-budgets}, and is again significantly simpler. Interestingly, it turns out that our \linoptgate cannot be used to obtain PPAD-membership of the problem in this case. Still, using the \fixpoptgate developed by \cite{SICOMP:Filos-RatsikasH2023}, we prove membership of the problem in the class FIXP. This is the first membership result for this variant of the problem; we provide more details in \Cref{sec:RPE-FIXP-membership}. In \Cref{sec:RPE-irrational} we show that membership in FIXP (rather than PPAD) is necessary, because pacing equilibria in this case can be irrational. Whether the problem of computing pacing equilibria in second-price auctions for ROI-constrained buyers is actually FIXP-complete is an intriguing open problem. We note that for a notion of approximate pacing equilibria, a PPAD-hardness result is known from \citep{li2022auto}.

\subsection{Pacing Equilibria in Auctions with Budgets}\label{sec:pacing-sp-budgets}

Following \citet{chen2021complexity}, we will refer to the auction market environment as a \emph{second-price pacing game (SPPG)} $\mathcal{G}$, consisting of
\begin{itemize}
    \item[-] a set $N$ of buyers, with $n=|N|$,
    \item[-] a set $G$ of items, with $m = |G|$,
    \item[-] a set of vectors of \emph{valuations} $\mathbf{v}_{i} = (v_{i1}, \ldots v_{im})$ for every buyer $i \in N$, where $v_{ij}$ denotes the value of buyer $i \in N$ for item $j\in G$; without loss of generality we assume that for all $j \in G$, there exists some $i \in N$ such that $v_{ij}>0$,
    \item[-] a vector of \emph{budgets} $\mathbf{B}=(B_1, \ldots, B_n)$, such that $B_i >0$ for every buyer $i \in N$.
\end{itemize}
For each buyer $i \in N$, there is a \emph{pacing multiplier} $\alpha_i \in [0,1]$, and let $\alpha=(\alpha_1,\ldots,\alpha_n)$ be the vector of those multipliers. Given $\alpha$, the \emph{bid} of buyer $i$ on item $j \in G$ is given by $\alpha_i \cdot v_{ij}$, i.e., every buyer \emph{scales} their value $v_{ij}$ by the multiplier $\alpha_i$. The items are sold in parallel second-price auctions, which implies that
\begin{enumerate}
    \item the item is allocated to the buyer with the highest bid, breaking ties using some tie-breaking rule in case there are multiple such buyers. Let $h_j(\alpha) = \max_{i \in N} \alpha_i v_{ij}$ be the highest bid on item $j \in G$.\label{item:SP-1}
    \item the price of item $j \in G$ is the second-highest bid. Let $p_j(\alpha)=\min_{i \in N}\max_{i' \in N, i\neq i'}\alpha_{i'}v_{i'j}$ be that price. \label{item:SP-2}
\end{enumerate}
Another way to interpret Condition~\ref{item:SP-1} above is that each one of the (potential) winners achieves a fraction of the item, which can be interpreted as the probability that that buyer is the final winner of the auction for that item. Formally, let $W_j(\alpha)=\{i \in N: \alpha_i v_{ij} = h_j(\alpha)\}$ be the set of possible winners (those with the highest bid) for item $j \in G$. Each buyer $i \in N$ receives an allocation $x_{ij}$ of item $j \in N$, with the restriction that $x_{ij}=0$ for all $i \in N\setminus W_j(\alpha)$. 

\paragraph{Pacing Equilibria.} We are now ready to define the notion of a pacing equilibrium. Informally, a pacing equilibrium is a set of pacing multipliers and allocations that satisfy some constraints that guarantee the validity of the auction and that the buyers are in a sense meeting their maximum buying capabilities. Following \citet{conitzer2022pacing,conitzer2022multiplicative}, we formally refer to those equilibria as \emph{second-price pacing equilibria.} 

\begin{definition}[Second-price Pacing Equilibrium (SPPE)]\label{def:PE}
A pair $(\alpha, \mathbf{x})$, with $\alpha \in [0,1]^n$ and $\mathbf{x} \in [0,1]^{nm}$, is a \emph{second-price pacing equilibrium (SPPE)} of a SPPG $\mathcal{G}$, if the following conditions hold:
\begin{conditions}[label={I.\alph*.}]
    \item $\sum_{i \in N}x_{ij} \leq 1$, for all $j \in G$, \hfill \textbf{\emph{(feasible allocation)}} \label{cond-PE1}
    \item $x_{ij} > 0 \Rightarrow \alpha_i v_{ij} = h_j(\alpha)$, for all $i \in N$ and $j \in G$, \hfill \textbf{\emph{(only highest bids win)}} \label{cond-PE2}
    \item $h_j(\alpha) > 0 \Rightarrow \sum_{i \in N}x_{ij}=1$, for all $j \in G$, \hfill \textbf{\emph{(full allocation of items with positive bids)}} \label{cond-PE3}
    \item $\sum_{j \in G}x_{ij}p_j(\alpha) \leq B_i$, for all $i \in N$, \hfill \textbf{\emph{(budget constraints)}} \label{cond-PE4} 
    \item $\sum_{j \in G} x_{ij}p_j(\alpha) < B_i \Rightarrow \alpha_i = 1$, for all $i \in N$, \hfill \textbf{\emph{(maximum pacing)}} \label{cond-PE5} 
\end{conditions}
\end{definition}

\noindent We remark that \Cref{def:PE} requires the flexibility to choose the way the items are allocated to the possible winners; this is in fact necessary to guarantee the existence of a SPPE for every SPPG. As we mentioned earlier, a SPPE is always guaranteed to exist \citep{conitzer2022multiplicative,chen2021complexity}. Our PPAD-membership proof also provides an alternative, easier proof of existence. We provide a high-level overview of the proof, as well as a comparison to the existing proofs in the next section below.

\subsubsection{Features of our proof}\label{sec:pacing-sp-original-proof}

\paragraph{Previous proofs of existence and PPAD-membership.} The existence of SPPE in SPPGs was first established by \citet{conitzer2022multiplicative}. Their proof proceeds by defining an \emph{$(\varepsilon,H)$-smoothed pacing game}, for $\varepsilon > 0$ and $H >0$, which is derived from the SPPG $\mathcal{G}$ as follows:
\begin{itemize}
    \item[-] There is an artificial \emph{reserve bid} of $2\varepsilon$ on all items.
    \item[-] The set of possible winners $W_j(\alpha)$ (i.e., those buyers $i \in N$ with $x_{ij} >0$ for item $j \in G$) contains all buyers with \emph{almost} the highest bid (i.e., a bid of at least $h_j(\alpha)-\varepsilon$). Each buyer in $W_j$ receives a specific fraction of item $j$ (i.e., $x_{ij}$ is fixed for all $i \in N$ and $j \in N$, as a function of the $\alpha_i$'s.) and pays a specific price $s_{ij}$ (see \citep{conitzer2022multiplicative}, Definition 3 for the exact allocations and prices).
    \item[-] There is an artificial item $j_a$ of unlimited supply. Buyer $i \in N$ receives a quantity $\alpha_i$ of that good and values that good at $v_{ij_{a}}=2\varepsilon$.  
    \item[-] Every buyer $i \in N$ has a utility which is a step function depending on the budget $B_i$, the price $s_{ij}$ she pays for quantities of item $j$, $\alpha_i$, $\varepsilon$ and the parameter $H$ in the definition of the $(\varepsilon,H)$-smoothed pacing game (see \citep{conitzer2022multiplicative}, Definition 3 for the exact utility function).
\end{itemize}
The reason for introducing all these seemingly convoluted conditions is that they ensure that the $(\varepsilon,H)$-smoothed pacing game satisfies a set of desired properties, namely (a) compactness and convexity of the strategy space, (b) continuity of the utility function in all strategies and (c) quasi-concavity of the utility function in a buyer's own strategy. With these at hand, one can then invoke the \citeauthor{debreu1952social}-\citeauthor{fan1952fixed}-\citeauthor{glicksberg1952further} theorem for continuous games [\citeyear{debreu1952social}], which we presented in \Cref{sec:generalized-concave-games}, to establish the existence of a Nash equilibrium. 

The proof of \citet{conitzer2022multiplicative} proceeds by proving Properties (a-c) above, and then obtains the existence of a Nash equilibrium for the $(\varepsilon,H)$-smoothed pacing game. To obtain the existence of a SPPE, they consider a sequence of Nash equilibria for $(\varepsilon,H)$-smoothed pacing games, and show that as $\varepsilon$ converges to $0$ and $H$ converge to infinity, a SPPE is obtained as a limit point. 

Besides the overhead of defining $(\varepsilon,H)$-smoothed pacing games and establishing their properties, the proof uses off-the-shelf existence theorems and limit arguments. Therefore, it is not surprising that it cannot be used to obtain PPAD-membership, or even to establish that rational solutions are always possible. Both of these properties were established via an alternative proof provided by \citet{chen2021complexity}, which we describe briefly next. \medskip

\noindent Similarly in spirit to \cite{conitzer2022multiplicative}, \citet{chen2021complexity} define an \emph{approximate} version of a SPPE, called a $(\delta, \gamma)$-SPPE, with $\delta, \gamma >0$, in which 
\begin{itemize}
    \item[-] \Cref{cond-PE2} in \Cref{def:PE} has been substituted by $x_{ij} > 0 \Rightarrow a_{i}v_{ij} = (1-\delta)h_j(\alpha)$, for all $i \in N$ and all $j \in G$.
    \item[-] \Cref{cond-PE5} in \Cref{def:PE} has been substituted by $\sum_{j \in G} x_{ij}p_j(\alpha) < (1-\gamma)B_i \Rightarrow \alpha_i \geq (1-\gamma)$, for all $i \in N$.
\end{itemize}
In other words, they consider an approximate SPPE in which the set of winners $W_j(\alpha)$ contains buyers that have approximately the highest bid, and in which a buyer that does not spend almost all of her budget uses almost a maximum pacing multiplier. On top of that, they define a variant of $(\delta, \gamma)$-SPPE, which they refer to as \emph{smooth} $(\delta, \gamma)$-SPPE in which the allocation rule is specified, and is in fact similar to that used by \citet{conitzer2022multiplicative} in their definition of a $(\varepsilon,H)$-smoothed pacing game defined above. By definition, a  smooth-$(\delta, \gamma)$-SPPE is $(\delta, \gamma)$-SPPE. Their PPAD-membership proof then proceeds in three steps:
\begin{itemize}
    \item[-] They show the PPAD-membership of smooth $(\delta, \gamma)$-SPPE via a reduction to the computational version of Sperner's lemma \citep{sperner1928neuer}. This implies the existence of a $(\delta, \gamma)$-SPPE.
    \item[-] They devise an intricate rounding procedure, which converts $(\delta, \gamma)$-SPPE into $\gamma$-SPPE, i.e., $(\delta, \gamma)$-SPPE for which $\delta = 0$, and establish the correctness of this procedure via a series of arguments.
    \item[-] They employ a general technique (introduced by \cite{etessami2010complexity}) of converting approximate solutions of inversely exponential precision to exact solutions via a carefully constructed linear program. They use this to convert a $\gamma$-SPPE into a SPPE.
\end{itemize}

\paragraph{Our proof.} Our proof is markedly different from the aforementioned ones, and conceptually much simpler, while at the same time obtaining both existence and PPAD-membership, and as a result establishing the existence of SPPE described by rational numbers. Contrary to \cite{conitzer2022multiplicative} and \cite{chen2021complexity}, we do not need to define relaxations of the SPPE. Instead, we apply Gale's substitution (see \Cref{rem:gale}), i.e., the standard change of variables from allocations $x_{ij}$ to expenditures $q_{ij} = x_{ij}p_j(\alpha)$, see \Cref{sec:PE-preprocess} below.\footnote{Interestingly, \citet{conitzer2022pacing} also apply this change of variables in their Mixed-Integer Linear Programming (MILP) formulation (see Section 6.3 in their paper), but not in the existence proof.} This is in fact the very same change of variables that we used in \Cref{sec:markets} for competitive markets, and serves as a means to escape having to deal with quantities of the form $x_{ij}p_j(\alpha)$, as such multiplications cannot be handled by \linear arithmetic circuits. 

Membership in PPAD then follows from constructing a simple function $F$ whose fixed points will be the SPPE. For each buyer $i \in N$, the expenditures $q_{ij}$ will be obtained as the solutions to a linear program which can be handled by our \linoptgate, and the pacing multipliers will be the solutions to a single equation involving the inputs and the outputs of the function. Arguing that a fixed point of $F$ indeed corresponds to a SPPE reduces to establishing that \Cref{cond-PE1,cond-PE2,cond-PE3,cond-PE4,cond-PE5} of \Cref{def:PE} are satisfied, which turns out to be rather simple. 

\subsubsection{Preprocessing}\label{sec:PE-preprocess}

Before we present our proof, we will first modify and simplify the constraints of \Cref{def:PE}, to make them amenable to the use of our \linoptgate. En route to that, we will prove that without loss of generality, we may assume that in any SPPE $(\alpha,\mathbf{x})$, both the pacing multipliers $\alpha_i$ and the prices $p_j(\alpha)$ can be assumed to be bounded away from $0$, which we will use later on.

\paragraph{Lower bounds on the pacing multipliers and prices.} Below we prove simple lower bounds on the value of the pacing multipliers $\alpha_i^*$ and the prices $p_j(\alpha^*)$ in any SPPE $(\alpha^*, \mathbf{x}^*)$. 

\begin{lemma} \label{lem:alphas-lower-bound-PE}
Let $(\alpha^*, \mathbf{x}^*)$ be any SPPE. Then, for any $i \in N$, we have that $\alpha_i^* \in [\ell_i,1]$, where $\ell_i= \min\{1,B_i/\sum_{j \in G}v_{ij}\}$. 
\end{lemma}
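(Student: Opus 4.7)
The plan is to do a simple case split on the value of $\alpha_i^*$ and invoke the maximum pacing condition (Condition~\ref{cond-PE5}) when $\alpha_i^* < 1$. The key observation is that in a second-price auction, the price a buyer pays for an item she (possibly) wins is at most her own bid on that item, because the second-highest bid is bounded by the highest bid, and by Condition~\ref{cond-PE2} the items are only allocated to buyers whose bid equals the highest bid.

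More concretely, I would proceed as follows. If $\alpha_i^* = 1$, then $\alpha_i^* \geq \ell_i$ trivially, since $\ell_i \leq 1$ by definition. Otherwise $\alpha_i^* < 1$, and Condition~\ref{cond-PE5} together with the budget constraint in Condition~\ref{cond-PE4} forces buyer $i$ to spend exactly her budget, i.e.,
\[
\sum_{j \in G} x_{ij}^* \, p_j(\alpha^*) = B_i.
\]
For every item $j$ with $x_{ij}^* > 0$, Condition~\ref{cond-PE2} gives $\alpha_i^* v_{ij} = h_j(\alpha^*)$, and by definition $p_j(\alpha^*) \leq h_j(\alpha^*) = \alpha_i^* v_{ij}$. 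Plugging this bound into the previous equation and using $x_{ij}^* \leq 1$ (from Condition~\ref{cond-PE1}) gives
\[
B_i \;=\; \sum_{j \in G} x_{ij}^* \, p_j(\alpha^*) \;\leq\; \alpha_i^* \sum_{j \in G} x_{ij}^* \, v_{ij} \;\leq\; \alpha_i^* \sum_{j \in G} v_{ij},
\]
from which we conclude $\alpha_i^* \geq B_i / \sum_{j \in G} v_{ij} \geq \ell_i$.

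The proof has no real obstacle; the entire content is the observation that in a second-price auction a winner's payment is bounded by her bid, combined with the fact that ``slack in the budget forces $\alpha_i^* = 1$''. Note that the assumption made earlier in the excerpt (that for every good $j$ there exists $i$ with $v_{ij} > 0$) ensures $\sum_{j \in G} v_{ij} > 0$ whenever buyer $i$ has any valued item; if $v_{ij} = 0$ for all $j$, then the budget constraint is vacuous and $\alpha_i^*$ can be taken to be $1$ without loss of generality, so $\ell_i = 1$ and the bound still holds. The upper bound $\alpha_i^* \leq 1$ is built into the definition of a pacing multiplier.
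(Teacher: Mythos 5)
Your proof is correct and follows essentially the same argument as the paper's: both start from the contrapositive of the maximum-pacing condition to deduce that buyer $i$ exhausts her budget when $\alpha_i^* < 1$, then bound each payment $p_j(\alpha^*)$ by $h_j(\alpha^*) = \alpha_i^* v_{ij}$ via Condition~\ref{cond-PE2} and sum. The only cosmetic difference is that you retain the $x_{ij}^*$ factor one step longer before dropping it, and you add a remark handling the degenerate case $\sum_j v_{ij} = 0$; neither changes the substance.
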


\begin{proof}
If $\alpha^* = 1$ then the lemma holds trivially. Otherwise, by \Cref{cond-PE5}, we have that
\[
B_i = \sum_{j \in G}x_{ij}^*p_j(\alpha^*) \leq \sum_{j \in G: x_{ij}^* >0}h_j(\alpha^*) = \sum_{j \in G: x_{ij}^* >0}\alpha_i^*v_{ij} \leq \alpha_i^* \sum_{j \in G}v_{ij}, 
\]
where the first inequality above holds by the fact that $x_{ij}^*p_j(\alpha^*) \leq x_{ij}^*h_j(\alpha^*) \leq h_j(\alpha^*)$, since $h_j(\alpha^*) \geq p_j(\alpha^*)$ by the definition of the auction, and the second equation holds by \Cref{cond-PE2}.
\end{proof}

\begin{lemma} \label{lem:prices-lower-bound-PE}
Let $(\alpha^*, \mathbf{x}^*)$ be any SPPE. Then, for any $j \in G$, we may assume without loss of generality that $p_j(\alpha^*) > 0$.
\end{lemma}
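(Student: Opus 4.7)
The plan is to argue that \Cref{lem:prices-lower-bound-PE} is a simple preprocessing statement: by standard manipulations on the instance $\mathcal{G}$, we can remove all items whose second-highest bid is necessarily zero, without changing the set of SPPE (up to a forced allocation of the removed items). The key observation is that, thanks to \Cref{lem:alphas-lower-bound-PE}, the pacing multipliers in any SPPE are strictly positive, so the price of item $j$ is zero \emph{if and only if} fewer than two buyers have $v_{ij} > 0$.

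First I would handle degenerate buyers and items. Any buyer $i$ with $v_{ij}=0$ for all $j \in G$ produces only zero bids and wins no allocation, so we can remove such buyers without affecting the constraints of \Cref{def:PE}. Similarly, by the assumption stated at the beginning of the section, every item is valued positively by at least one buyer, so no item needs to be removed for being universally valueless.

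Next I would handle items with a unique positive bidder. Suppose item $j$ has exactly one buyer $i^\ast$ with $v_{i^\ast j} > 0$. By \Cref{lem:alphas-lower-bound-PE}, $\alpha_{i^\ast}^\ast \geq \ell_{i^\ast} > 0$ in any SPPE, so $h_j(\alpha^\ast) = \alpha_{i^\ast}^\ast v_{i^\ast j} > 0$. \Cref{cond-PE2} then forces $x_{ij}^\ast = 0$ for all $i \neq i^\ast$, while \Cref{cond-PE3} forces $x_{i^\ast j}^\ast = 1$. The second-highest bid on $j$ is $0$, so $p_j(\alpha^\ast) = 0$ and the entire expenditure of buyer $i^\ast$ on item $j$ is zero. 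Hence the allocation of $j$ is completely determined and contributes nothing to the budget of $i^\ast$, so we may preprocess $\mathcal{G}$ by deleting item $j$ and recording the forced allocation $x_{i^\ast j}^\ast = 1$ on the side; the remaining constraints of \Cref{def:PE} on the reduced instance are unaffected, and any SPPE of the reduced instance extends uniquely to an SPPE of $\mathcal{G}$.

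After this preprocessing, every remaining item $j$ has at least two buyers with $v_{ij} > 0$; combined with \Cref{lem:alphas-lower-bound-PE}, this means at least two of the bids $\alpha_i^\ast v_{ij}$ are strictly positive, and therefore $p_j(\alpha^\ast) > 0$ in any SPPE. There is no real obstacle in the argument --- it is purely a reduction to a ``non-degenerate'' instance --- but the small conceptual point worth being careful about is that the preprocessing is required to preserve both the market-clearing constraints (\Cref{cond-PE1,cond-PE3}) and the budget/pacing constraints (\Cref{cond-PE4,cond-PE5}), which holds precisely because deleted items neither consume budget nor admit any choice in how they are allocated.
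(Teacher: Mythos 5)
Your proof is correct and rests on the same core observation as the paper's: items with $p_j(\alpha^*) = 0$ contribute nothing to any buyer's expenditure, so they can be split off and allocated separately without disturbing Conditions~\ref{cond-PE4}--\ref{cond-PE5}. The paper argues this directly at the level of the equilibrium---given \emph{any} SPPE in which $p_{j'}(\alpha^*) = 0$, it allocates $j'$ to the winner set and observes that all five conditions remain satisfied---without saying in advance which items those will be. Your version adds a useful refinement: combining \Cref{lem:alphas-lower-bound-PE} with the standing assumption that every item has at least one positive bidder, you characterize the zero-price items purely in terms of the instance (they are exactly the items with a unique positive bidder), so the ``without loss of generality'' becomes an honest preprocessing step on $\mathcal{G}$ rather than a post-hoc observation about a particular equilibrium. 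This is tidier, since after preprocessing the price $p_j(\alpha)$ is strictly positive on the \emph{entire} domain $\prod_i [\ell_i,1]$, not just at a fixed point, which is in fact what one wants when constructing the circuit and later recovering $x_{ij} = q_{ij}/p_j(\alpha^*)$. A minor point worth flagging: removing a degenerate all-zero-valuation buyer can change the formal definition of $p_j$ (it alters the min/max over $N$), so to be safe one should note that such a buyer's bid is always $0$ and hence never the unique highest bid for any item (since every item has a positive bidder), so the second-highest bid is unchanged by the removal.
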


\begin{proof}
Assume that there is some item $j'$ for which $p_j'(\alpha^*)=0$. For any choice of pacing multipliers $\alpha^*$, i.e., in any PE, we may choose an arbitrary allocation of the good $j'$ for which $x_{ij'} > 0$ only for buyers $i \in W_j(\alpha)$. This way, \Cref{cond-PE2,cond-PE3} of \Cref{def:PE} are satisfied. Since $p_j'(\alpha^*) = 0$, the term for $j'$ contributes $0$ to the sum $\sum_{j \in G} x_{ij}^* p_j(\alpha^*)$ for every buyer $i \in N$, and hence \Cref{cond-PE4,cond-PE5} are unaffected. \Cref{cond-PE1} is trivially satisfied as well. Therefore, given any SPPE $(\alpha^*,\mathbf{x}^*_{-j})$ for items $j \in G\setminus\{j'\}$, we get the same SPPE, together with item $j'$ allocated as described above. 
\end{proof}

\paragraph{Change of variables.} Looking at the conditions of \Cref{def:PE}, we observe that in \Cref{cond-PE4}, the allocation $x_{ij}$ is multiplied by the price $p_j(\alpha)$. Looking ahead, this condition will appear as a constraint in a linear program $\mathcal{P}$ that we will aim to solve via our \linoptgate, as part of a \linear arithmetic circuit. This linear program would have $x_{ij}$ as the variables, but $p_j(\alpha)$ would be \circparams in the context of the \linoptgate. Hence, such a constraint could not be handled by the \linoptgate. To remedy this, we work in a very similar fashion as we did in \Cref{sec:markets}, via applying Gale's substitution (\Cref{rem:gale}). Namely, we will let $q_{ij} = x_{ij} p_j(\alpha)$ be the \emph{expenditure} of buyer $i \in N$ on item $j \in G$. Additionally, \Cref{lem:alphas-lower-bound-PE} implies that $h_j(\alpha^*) > 0$ for all items $j \in G$. In turn, this implies that \Cref{cond-PE1,cond-PE3} can be combined into a single condition, namely
\[
\sum_{i \in N} x_{ij} = 1, \text{ for all } j \in G.
\]
Putting everything together, we have the following equivalent definition of a \emph{second-price pacing expenditure equilibrium (SPPEE)}, from which a SPPE can be straightforwardly recovered. 
\begin{definition}[Second-price Pacing Expenditure Equilibrium (SPPEE)]\label{def:PEE}
A pair $(\alpha, \mathbf{q})$, with $\alpha \in [0,1]^n$ and $\mathbf{q} \in [0,1]^{nm}$, is a \emph{second-price pacing expenditure equilibrium (SPPEE)} of an SPPG $\mathcal{G}$, if the following conditions hold:
\begin{conditions}[label={II.\alph*.}]
    \item $\sum_{i \in N}q_{ij} = p_j(\alpha)$, for all $j \in G$, \hfill \textbf{\emph{(feasible expenditure)}} \label{cond-PEE1}
    \item $q_{ij} > 0 \Rightarrow \alpha_i v_{ij} = h_j(\alpha)$, for all $i \in N$ and $j \in G$, \hfill \textbf{\emph{(only highest bids spend)}} \label{cond-PEE2}
    \item $\sum_{j \in G}q_{ij}(\alpha) \leq B_i$, for all $i \in N$, \hfill \textbf{\emph{(budget constraints)}} \label{cond-PEE3} 
    \item $\sum_{j \in G} q_{ij} < B_i \Rightarrow \alpha_i = 1$, for all $i \in N$. \hfill \textbf{\emph{(maximum pacing)}} \label{cond-PEE4} 
\end{conditions}
\end{definition}

\subsubsection{Membership in PPAD}

We are now ready to prove the main result of the section.

\begin{theorem}
Computing a SPPE of any SPPG $\mathcal{G}$ is in PPAD.
\end{theorem}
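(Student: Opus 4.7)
The plan is to use \Cref{thm:linoptgate} by constructing a \linear arithmetic circuit $F$ with \linoptgates whose fixed points are SPPEEs of $\mathcal G$, which by \Cref{def:PEE} and Gale's substitution (\Cref{rem:gale}) is equivalent to computing a SPPE. The circuit acts on the bounded convex domain $D = \prod_i [\ell_i,1] \times [0,V]^{nm}$, with $\ell_i$ from \Cref{lem:alphas-lower-bound-PE} and $V = \max_{i,j} v_{ij}$. On input $(\bar\alpha, \bar{\mathbf q})$, $F$ first computes the highest bid $h_j(\bar\alpha) = \max_i \bar\alpha_i v_{ij}$ and second-highest price $p_j(\bar\alpha)$ by $\max$/$\min$ gates, together with winner indicators $w_{ij} = H(\bar\alpha_i v_{ij} - h_j(\bar\alpha))$ via the Heaviside \pseudog of \Cref{ex:Heaviside}.

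Next, for each buyer $i \in N$ the circuit invokes a \linoptgate to solve the linear program $\mathcal{P}_i$ in variables $q_{ij}$:
\begin{align*}
\text{maximize}\quad & \textstyle\sum_{j} q_{ij}\\
\text{subject to}\quad & 0 \leq q_{ij} \leq p_j(\bar\alpha)\cdot w_{ij}, \qquad \forall j \in G,\\
& q_{ij} \leq p_j(\bar\alpha) - \textstyle\sum_{i'\neq i} \bar q_{i'j}, \qquad \forall j \in G,\\
& \textstyle\sum_{j} q_{ij} \leq B_i.
\end{align*}
Using \Cref{lem:mult-by-Heaviside}, the product $p_j(\bar\alpha)\cdot w_{ij}$ is realized by a \pseudog; the remaining right-hand sides are linear in the \circparams; the feasible region contains $\mathbf{0}$; and the gradient of the linear objective is trivially computable. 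Finally, the pacing multipliers are updated by the \linear rule $\alpha_i^* = \Pi_{[\ell_i,1]}(\bar\alpha_i + B_i - \sum_j \bar q_{ij})$.

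At any fixed point $(\alpha^*, \mathbf q^*)$, Condition~\ref{cond-PEE2} holds because $q_{ij}^*>0$ forces $w_{ij}^*>0$, giving $\alpha_i^* v_{ij} \geq h_j(\alpha^*)$, which is an equality by the definition of $h_j$. Condition~\ref{cond-PEE3} is the LP budget constraint. Condition~\ref{cond-PEE4} follows from the projection rule: if $\sum_j q_{ij}^* < B_i$ then $y := \alpha_i^* + B_i - \sum_j q_{ij}^* > \alpha_i^*$, and since $\alpha_i^* = \Pi_{[\ell_i,1]}(y)$ with $\alpha_i^* \leq 1$, the only consistent case is $\alpha_i^* = 1$.

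The main obstacle is Condition~\ref{cond-PEE1} (exact market clearing $\sum_i q_{ij}^* = p_j(\alpha^*)$). The capacity constraint of $\mathcal{P}_i$ yields $\sum_i q_{ij}^* \leq p_j(\alpha^*)$ directly; for the reverse inequality, any winner of $j$ with both residual item capacity and unused budget would strictly prefer to raise $q_{ij}$, contradicting LP optimality, so residual capacity is compatible only with every winner being budget-bound. In that case, the fixed-point freedom of the Heaviside \pseudog at a tie --- where $w_{ij}^* \in [0,1]$ is unconstrained --- together with the existence of a fixed point of the circuit guaranteed by \Cref{thm:linoptgate} and \Cref{rem:linoptgate}, forces a consistent choice of Heaviside values that achieves $\sum_i q_{ij}^* = p_j(\alpha^*)$. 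Combining these observations verifies \Cref{def:PEE} and yields the PPAD-membership.
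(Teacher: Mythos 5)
Your proposal takes a genuinely different route from the paper: you build one linear program per \emph{buyer} (maximizing that buyer's total spending subject to a budget cap, item-capacity caps, and Heaviside-based winner indicators), whereas the paper uses one linear program per \emph{item} (maximizing $\sum_i (\alpha_i v_{ij}) q_{ij}$ subject only to $\sum_i q_{ij} \le p_j(\alpha)$ and $q_{ij}\ge 0$, with the budget constraint enforced entirely through the pacing-multiplier update). Unfortunately, your decomposition introduces a genuine gap precisely where you acknowledge the difficulty: Condition~\ref{cond-PEE1}, market clearing.

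The crux is that your circuit has fixed points that are \emph{not} SPPEEs, and PPAD-membership requires that \emph{every} fixed point map back to a solution, not that some fixed point does. Consider a game with two buyers and one item, $v_{11}=v_{21}=1$, $B_1=B_2=1/2$, so $\ell_1=\ell_2=1/2$. At $\bar\alpha_1=\bar\alpha_2=1$ the Heaviside inputs $\bar\alpha_i v_{i1}-h_1(\bar\alpha)$ are exactly $0$, so by \Cref{ex:Heaviside} the Heaviside \pseudog may legitimately sit at the auxiliary fixed point $w_{11}^*=w_{21}^*=0$. Your LP constraint $q_{i1}\le p_1(\bar\alpha)\cdot w_{i1}=0$ then forces $q_{11}^*=q_{21}^*=0$ (the feasible set of each $\mathcal{P}_i$ is exactly $\{0\}$, so the \linoptgate outputs $0$), and the pacing rule outputs $\alpha_i^*=\Pi_{[1/2,1]}(1+1/2-0)=1$. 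The entire configuration is a consistent fixed point of the constructed circuit $G$, yet $\sum_i q_{i1}^*=0<1=p_1(\alpha^*)$, so market clearing fails and the profile is not a SPPEE. Your claim that the ``fixed-point freedom \ldots forces a consistent choice of Heaviside values'' has the quantifier backwards: Brouwer guarantees \emph{a} fixed point exists, but a PPAD algorithm for the reduced problem may return the bad one.

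A secondary issue is that the constraint $q_{ij}\le p_j(\bar\alpha)-\sum_{i'\ne i}\bar q_{i'j}$ can have a negative right-hand side away from a fixed point, making $\mathcal{P}_i$ infeasible (contrary to your claim that $\mathbf{0}$ is always feasible); the \linoptgate requires a nonempty feasible domain, so you would need to wrap the right-hand side in $\max\{0,\cdot\}$. Even with that repair, however, the Heaviside counterexample persists. The paper's per-item LP sidesteps both problems: since every objective coefficient $\alpha_i v_{ij}$ is strictly positive, the capacity constraint $\sum_i q_{ij}\le p_j(\alpha)$ is tight at any optimum, giving market clearing automatically, and the maximization concentrates spending on the highest bidders with no Heaviside indicators at all. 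Budget feasibility (Conditions~\ref{cond-PEE3} and~\ref{cond-PEE4}) is then enforced solely via the pacing update, with the lower clamp $B_i/\sum_j v_{ij}$ crucially used in the proof to derive the contradiction when a buyer overspends.
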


\noindent Following our standard methodology, we will develop the proof in three steps, namely (a) construction of the function $F$ and arguing that it can be computed by a \linear arithmetic circuit containing \linoptgates, (b) showing that the \linoptgate can compute all the necessary components, and (c) arguing that a SPPE can be recovered from a fixed point of $F$. 

\paragraph{The function $F$.} We define the function $F: D \rightarrow D$ with domain $D = \prod_{i \in N}[l_i,1] \times \prod_{i \in N}\prod_{j \in G}[0,v_{ij}]$. Let $(\alpha,\mathbf{q})$ and $(\alpha^*, \mathbf{q}^*)$ denote inputs and outputs to $F$ respectively. For any item $j \in G$, the vector of expenditures $\mathbf{q}^*_j = (q_{1j}^*, \ldots, q_{nj}^*)$ is obtained as the output of the following linear program.

\begin{center}\underline{Linear Program $\mathcal{P}$}\end{center}
\begin{align*}
\mbox{maximize } & \sum_{i \in N}(\alpha_i v_{ij})q_{ij}  \\
\mbox{subject to }& \sum_{i \in N} q_{ij} \leq p_j(\alpha) \\
& q_{ij} \geq 0, \;\forall i \in N
\end{align*}
The pacing multipliers will be obtained via the following equation
\begin{equation}\label{eq:alpha-fixed-point-PPE}
\alpha_i^* = \max \left\{\frac{B_i}{\sum_{j \in G}v_{ij}}, \min \left\{1,\alpha_i+B_i-\sum_{j\in G}q_{ij}\right\}\right\}
\end{equation}
\Cref{eq:alpha-fixed-point-PPE} uses $\max, \min$ and addition operations on the variables, and hence can be computed by a \linear arithmetic circuit. Similarly, recall that $p_j(\alpha)=\min_{i \in N}\max_{i' \in N, i\neq i'}\alpha_{i'}v_{i'j}$, and hence the linear program $\mathcal{P}$ can also be computed by a \linear arithmetic circuit.

\paragraph{Computation by the \linoptgate.} In the next lemma, we argue that the output $\mathbf{q}_j^*$ of the linear program $\mathcal{P}$ can be computed by the \linoptgate.

\begin{lemma}
Consider the linear program $\mathcal{P}$. An optimal solution to $\mathcal{P}$ can be computed by a \linoptgate.
\end{lemma}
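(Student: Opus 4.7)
The plan is to verify the four conditions listed in \Cref{sec:linopt-applications} that suffice for the \linoptgate to compute an optimal solution to $\mathcal{P}$, and then invoke \Cref{thm:linoptgate}. In the setup of $\mathcal{P}$, the variables are $q_{ij}$ for $i \in N$ (with $j$ fixed), while the \circparams are the pacing multipliers $\alpha$ (appearing in the objective coefficients $\alpha_i v_{ij}$, and also inside the RHS through $p_j(\alpha) = \min_{i \in N}\max_{i' \in N, i'\neq i}\alpha_{i'}v_{i'j}$). The valuations $v_{ij}$ are fixed constants of the game, not \circparams.

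First, I would check that the constraint matrix is fixed and that the \circparams appear only on the right-hand side. Indeed, the capacity constraint $\sum_{i \in N} q_{ij} \leq p_j(\alpha)$ has a coefficient row of all ones (independent of $\alpha$) with the \circparam $p_j(\alpha)$ sitting only on the RHS, and the non-negativity constraints $q_{ij} \geq 0$ involve no \circparams at all. Second, I would observe that the feasible domain is non-empty (the origin $q_{ij}=0$ for all $i$ is feasible) and bounded, using the fact that the domain $D$ of the function $F$ restricts $q_{ij} \in [0, v_{ij}]$; one can therefore take $R := \max_{i,j} v_{ij}$ as a fixed bounding parameter for the \linoptgate.

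Third, since $\mathcal{P}$ is a maximization of a linear (and hence concave) function, I would apply \Cref{rem:subgradient-supergradient} and provide a \pseudog for the supergradient. The supergradient of $\sum_{i \in N}(\alpha_i v_{ij}) q_{ij}$ with respect to $q_j = (q_{ij})_{i \in N}$ is simply the vector $(\alpha_i v_{ij})_{i \in N}$, each coordinate of which is obtained from the \circparam $\alpha_i$ by a single multiplication-by-constant gate $\times v_{ij}$. No auxiliary fixed-point variables are needed, so this is in fact just a plain \linear arithmetic circuit. Combining these three observations, all of the conditions of \Cref{def:linoptgate} hold, and the lemma follows directly from \Cref{thm:linoptgate}.

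There is no real technical obstacle here; the verification is entirely routine, and the point of stating the lemma is merely to record that the formulation of $\mathcal{P}$ was chosen precisely so that the \circparams never multiply the decision variables inside the constraint matrix, which is the only thing that could have prevented the \linoptgate from applying.
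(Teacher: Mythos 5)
Your proof is correct and takes essentially the same approach as the paper's: verify the conditions from \Cref{sec:linopt-applications} — non-empty bounded feasible domain, \circparams only on the right-hand side of constraints, and a supergradient computable by a \pseudog (here trivially linear in the \circparams $\alpha_i$). Your choice of $R := \max_{i,j} v_{ij}$ is actually slightly more careful than the paper's brief reference to "$[0,1]^n$"; otherwise the arguments coincide.
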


\begin{proof}
We will argue that $\mathcal{P}$ satisfies all the properties required for the \linoptgate, as highlighted in \Cref{sec:linopt-applications}. The feasible domain $[0,1]^n$ of $\mathcal{P}$ is non-empty and bounded, and the \circparams $p_j(\alpha)$ appear only on the right-hand side of the constraints. The subgradient of the objective function is a linear function, and hence can be computed by a \pseudog.  
\end{proof}

\paragraph{Fixed points lead to SPPE.} We will show that a fixed point $(\alpha^*, \mathbf{q}^*)$ of $F$ is a SPPEE (\Cref{def:PEE}); a SPPE can then be straightforwardly recovered by setting $x_{ij} = q_{ij}/p_j(\alpha^*)$ for all $i \in N$ and $j \in G$, since all the prices can be assumed to be non-zero by \Cref{lem:prices-lower-bound-PE}. In particular, we show that $(\alpha,\hat{\mathbf{q}})=(\alpha^*, \mathbf{q}^*)$ satisfies \Cref{cond-PEE1,cond-PEE2,cond-PEE3,cond-PEE4} of \Cref{def:PEE}.
\begin{enumerate}[leftmargin=*]
    \item Let $j \in G$ be any item. Since $\mathbf{q}_j$ is an optimal solution to linear program $\mathcal{P}$, and since $\alpha_i v_{ij}$ is positive for all $i \in N$, the ``payment'' constraint of $\mathcal{P}$ must be tight, i.e., $\sum_{i \in N} q_{ij}^* = p_j(\alpha) = p_j(\alpha^*)$. Hence \Cref{cond-PEE1} is satisfied. \label{enum-peeproof-1}

    \item Consider any buyer $i \in N$ and item $j \in N$ such that $q_{ij}^*>0$. By the optimality of $\mathbf{q}^*_j$, $\alpha_iv_{ij}$ must be maximum, i.e., $\alpha_iv_{ij} = \alpha_i^* \cdot v_{ij} = h_j(\alpha^*)$. Hence \Cref{cond-PEE2} is satisfied.\label{enum:peeproof-2}

    \item Let $i \in N$ be any buyer. Assume by contradiction that \Cref{cond-PEE3} is violated, i.e., that $B_i < \sum_{i \in N}q_{ij}^*$. Since $\alpha_i = \alpha_i^*$, \Cref{eq:alpha-fixed-point-PPE} implies that $\alpha_i^* = B_i / \sum_{j \in G}v_{ij}$. However, since $\mathbf{q}_j^*$ is a feasible solution to the linear program $\mathcal{P}$, we have
    \begin{align*}
    \sum_{j \in G} q_{ij}^* &= \sum_{j \in G: q_{ij}^* > 0} q_{ij}^* = \sum_{j \in G: \alpha_{i}^{*} v_{ij} = h_j(\alpha^*)} q_{ij}^*
    \leq \sum_{j \in G: \alpha_{i}^{*} v_{ij} = h_j(\alpha^*)} p_j(\alpha^*) \\
    & \leq \sum_{j \in G: \alpha_{i}^{*} v_{ij} = h_j(\alpha^*)} h_j(\alpha^*) = \sum_{j \in G: \alpha_{i}^{*} v_{ij} = h_j(\alpha^*)} \alpha_i^*v_{ij} \leq \sum_{j \in G} \alpha_i^* v_{ij} \\
    & = \frac{B_i}{\sum_{j \in G}v_{ij}}\sum_{j \in G}v_{ij} = B_i,
    \end{align*}
    where in the calculations above:
    \begin{itemize}
        \item[-] the second equation follows from \Cref{cond-PEE2}, which we we established in \Cref{enum:peeproof-2} above,
        \item[-] the first inequality follows from the ``payment'' constraints of the linear program $\mathcal{P}$, and
        \item[-] the third inequality follows from the fact that $h_j(\alpha^*) \geq p_j(\alpha^*)$ by definition.
    \end{itemize}
    We have reached a contradiction. \label{enum-peeproof-3}
    \item Consider some buyer $i \in N$ such that $\sum_{j \in G}q_{ij}^* < B_i$. Since $\alpha_i = \alpha_i^*$, \Cref{eq:alpha-fixed-point-PPE} in that case implies that $\alpha_i^* = 1$, and hence \Cref{cond-PEE4} is satisfied. \label{enum-peeproof-4}
\end{enumerate}
This completes the proof.

\subsection{Pacing Equilibria in ROI-constrained Auctions}\label{sec:pacing-sp-roi}
We now turn our attention to market environments where the total expenditure of each bidder is not constrained by a fixed budget, but by a constraint on (average) return-on-investement (ROI), in the setting proposed by \citet{li2022auto}. For consistency with \Cref{sec:pacing-sp-budgets}, we will refer to these environments as \emph{second-price pacing ROI games (SPPRG)}. A SPPRG $\mathcal{G}$ consists of 
\begin{itemize}
    \item[-] a set $N$ of buyers, with $n=|N|$,
    \item[-] a set $G$ of items, with $m = |G|$, 
    \item[-] a set of vectors of \emph{valuations} $\mathbf{v}_{i} = (v_{i1}, \ldots v_{im})$ for every buyer $i \in N$, where $v_{ij}$ denotes the value of buyer $i \in N$ for item $j\in G$; without loss of generality we assume that for all $j \in G$, there exists some $i \in N$ such that $v_{ij}>0$.
\end{itemize}
Similarly to \Cref{sec:pacing-sp-budgets}, for each buyer $i \in N$ there is a pacing multiplier $\alpha_i$ and the buyer's bid on item $j$ is $\alpha_i \cdot v_{ij}$. For this setting to be meaningful, following \citep{li2022auto}, we will not constrain the pacing multipliers to lie in $[0,1]$, but in $[1,A]$ for some constant $A \in \mathbb{R}, A\geq 1$. Again, the items are sold in parallel second-price auctions, which implies that
\begin{enumerate}
    \item the item is allocated to the buyer with the highest bid, breaking ties using some tie-breaking rule in case there are multiple such buyers. Let $h_j(\alpha) = \max_{i \in N} \alpha_i v_{ij}$ be the highest bid on item $j \in G$.\label{item:SP-ROI-1}
    \item the price of item $j \in G$ is the second-highest bid. Let $p_j(\alpha)=\min_{i \in N}\max_{i' \in N, i\neq i'}\alpha_{i'}v_{i'j}$ be that price. \label{item:SP-ROI-2}
\end{enumerate}
Again, let $W_j(\alpha)=\{i \in N: \alpha_i v_{ij} = h_j(\alpha)\}$ be the set of possible winners (those with the highest bid) for item $j \in G$. Each buyer $i \in N$ receives an allocation $x_{ij}$ of item $j \in N$, with the restriction that $x_{ij}=0$ for all $i \in N\setminus W_j(\alpha)$.

\paragraph{ROI Pacing Equilibria.} Next, we define the notion of a second-price ROI pacing equilibrium.\footnote{\citet{li2022auto} refer to these equilibria as ``auto-bidding equilibria''; we choose the term ``second-price ROI pacing equilibria'' for consistency with the results of \Cref{sec:pacing-sp-budgets}.} Note that the only significant difference between \Cref{def:RPE} and \Cref{def:PE} of \Cref{sec:pacing-sp-budgets} is that the budget constraints have now been replaced by ROI constraints.

\begin{definition}[Second-price ROI Pacing Equilibrium (SPRPE)]\label{def:RPE}
 A pair $(\alpha, \mathbf{x})$, with $\alpha \in [1,A]^n$ and $\mathbf{x} \in [0,1]^{nm}$, is a \emph{second-price ROI pacing equilibrium (SPRPE)} of an SPPRG $\mathcal{G}$, if the following conditions hold:
\begin{conditions}[label={III.\alph*.}]
    \item $\sum_{i \in N}x_{ij} \leq 1$, for all $j \in G$, \hfill \textbf{\emph{(feasible allocation)}} \label{cond-RPE1}
    \item $x_{ij} > 0 \Rightarrow \alpha_i v_{ij} = h_j(\alpha)$, for all $i \in N$ and $j \in G$, \hfill \textbf{\emph{(only highest bids win)}} \label{cond-RPE2}
    \item $h_j(\alpha) > 0 \Rightarrow \sum_{i \in N}x_{ij}=1$, for all $j \in G$, \hfill \textbf{\emph{(full allocation of items with positive bids)}} \label{cond-RPE3}
    \item $\sum_{j \in G}x_{ij}p_j(\alpha) \leq \sum_{j \in G}x_{ij}v_{ij}$, for all $i \in N$, \hfill \textbf{\emph{(ROI constraints)}} \label{cond-RPE4} 
    \item $\sum_{j \in G} x_{ij}p_j(\alpha) < \sum_{j \in G}x_{ij}v_{ij} \Rightarrow \alpha_i = A$, for all $i \in N$. \hfill \textbf{\emph{(maximum pacing)}} \label{cond-RPE5} 
\end{conditions}
\end{definition}
\noindent \citet{li2022auto} established that a SPRPE always exists. Their proof follows closely that of \citet{conitzer2022multiplicative} for second-price pacing equilibria in auctions with budgets, which we referred to in \Cref{sec:pacing-sp-original-proof}. In particular, they also define a very similar $(\varepsilon,H)$-smoothed game that satisfies the properties required for the \citeauthor{debreu1952social}-\citeauthor{fan1952fixed}-\citeauthor{glicksberg1952further} theorem [\citeyear{debreu1952social}], and recover a SPRPE as a limit point of the sequence of Nash equilibria of this game. The proof that we present in this section follows very much along the same lines of the proof that we presented in \Cref{sec:pacing-sp-budgets} and is thus conceptually much simpler than that of \citet{li2022auto}. In addition, since it does not exhibit the discontinuous arguments of the proof of \citet{li2022auto}, it can also be used to show FIXP-membership of the problem, as we highlight in \Cref{sec:RPE-FIXP-membership}.  \medskip

\noindent Since $\alpha_i^* \geq 1$ for any $i \in N$, and since for every $j \in G$, there exists some buyer $i \in N$ with $v_{ij}>0$, we can again merge \Cref{cond-RPE1,cond-RPE3} above and obtain the following equivalent definition for a SPRPE.
\begin{definition}[Second-price ROI Pacing Equilibrium (SPRPE)-simplified]\label{def:RPE-s}
A pair $(\alpha, \mathbf{x})$, with $\alpha \in [1,A]^n$ and $\mathbf{x} \in [0,1]^{nm}$, is a \emph{second-price ROI pacing equilibrium (SPRPE)} of an SPPRG $\mathcal{G}$, if the following conditions hold:
\begin{conditions}[label={IV.\alph*.}]
    \item $\sum_{i \in N}x_{ij} = 1$, for all $j \in G$, \hfill \textbf{\emph{(feasible allocation)}} \label{cond-RPE1-s}
    \item $x_{ij} > 0 \Rightarrow \alpha_i v_{ij} = h_j(\alpha)$, for all $i \in N$ and $j \in G$, \hfill \textbf{\emph{(only highest bids win)}} \label{cond-RPE2-s}
    \item $\sum_{j \in G}x_{ij}p_j(\alpha) \leq \sum_{j \in G}x_{ij}v_{ij}$, for all $i \in N$, \hfill \textbf{\emph{(ROI constraints)}} \label{cond-RPE3-s} 
    \item $\sum_{j \in G} x_{ij}p_j(\alpha) < \sum_{j \in G}x_{ij}v_{ij} \Rightarrow \alpha_i = A$, for all $i \in N$. \hfill \textbf{\emph{(maximum pacing)}} \label{cond-RPE4-s} 
\end{conditions}
\end{definition}

\subsubsection{A new proof of existence}\label{sec:RPE-proof-existence}

We state the main theorem of this section.

\begin{theorem}\label{thm:ROI-equilibrium-exists}
A SPRPE exists for any SPPRG $\mathcal{G}$.
\end{theorem}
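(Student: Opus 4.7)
My plan is to mirror the existence/PPAD proof for SPPE from \Cref{sec:pacing-sp-budgets}, but replace the budget-driven update of the pacing multipliers by an ROI-driven update, and give up on PPAD (since the coefficient $v_{ij}/p_j(\alpha)$ will require division). As in the budget case, I would first pass to expenditures via Gale's substitution (\Cref{rem:gale}) by setting $q_{ij}=x_{ij}\,p_j(\alpha)$, and observe, by the same allocation-adjustment argument as in \Cref{lem:prices-lower-bound-PE}, that we may assume $p_j(\alpha^*)>0$ for every item~$j$ at any SPRPE. The trivial lower bound $\alpha_i^*\geq 1$ is baked into the domain, so no analogue of \Cref{lem:alphas-lower-bound-PE} is needed.

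Next I would define a continuous function $F\colon D\to D$ on the compact convex domain $D=[1,A]^n\times\prod_{i\in N,j\in G}[0,A\,v_{ij}]$. For the expenditures, for each item~$j$ I take the same linear program $\mathcal{P}$ as in \Cref{sec:pacing-sp-budgets}: maximize $\sum_{i\in N}(\alpha_i v_{ij})\,q_{ij}$ subject to $\sum_i q_{ij}\leq p_j(\alpha)$ and $q_{ij}\geq 0$. Because $p_j(\alpha)>0$ and the coefficients are positive, the capacity constraint is tight at any optimum, and $q_{ij}^*>0$ forces $\alpha_i v_{ij}=h_j(\alpha)$; this yields \Cref{cond-RPE1-s} and \Cref{cond-RPE2-s} immediately. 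For the pacing multipliers I replace Equation~(5.1) by the ROI-slack update
\[
\alpha_i^{*}=\max\Bigl\{1,\min\Bigl\{A,\ \alpha_i+\sum_{j\in G}q_{ij}\bigl(v_{ij}/p_j(\alpha)-1\bigr)\Bigr\}\Bigr\},
\]
which is the natural ``expenditure-form'' of increasing $\alpha_i$ when the accrued value $\sum_j x_{ij}v_{ij}$ exceeds the total spend $\sum_j x_{ij}p_j(\alpha)$ and decreasing it otherwise. Since $p_j(\alpha)>0$ on $D$, the map is continuous and a fixed point exists by Brouwer (or Kakutani, if one prefers to treat the LP set-valuedly).

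Finally, I would verify that any fixed point $(\alpha^*,\mathbf{q}^*)$ of $F$ corresponds to an SPRPE after setting $x_{ij}=q_{ij}^*/p_j(\alpha^*)$, by splitting on the value of $\alpha_i^*$. If $\alpha_i^*\in(1,A)$, the update rule being a fixed point forces $\sum_j x_{ij}^*(v_{ij}-p_j(\alpha^*))=0$, so both \Cref{cond-RPE3-s} and \Cref{cond-RPE4-s} hold with equality. If $\alpha_i^*=A$, the update rule gives ROI-slack $\geq 0$, so \Cref{cond-RPE3-s} holds and \Cref{cond-RPE4-s} is vacuous. The subtle case, which I expect to be the main obstacle, is $\alpha_i^*=1$: the update rule then only gives ROI-slack $\leq 0$, which by itself would violate the ROI constraint; to close the gap one must combine this with the LP's tight-bid property, namely that $q_{ij}^*>0$ implies $v_{ij}=\alpha_i^* v_{ij}=h_j(\alpha^*)\geq p_j(\alpha^*)$, so the slack is also $\geq 0$ and hence equals $0$. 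This two-sided pinching turns the boundary case into ROI equality and delivers both \Cref{cond-RPE3-s} and \Cref{cond-RPE4-s}, completing the existence proof.
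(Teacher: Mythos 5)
Your overall skeleton (an LP producing the allocation, a fixed-point update on the pacing multipliers, then verifying the equilibrium conditions by splitting on whether the max/min in the update clips) matches the paper's proof, and your two-sided pinching argument for the boundary case $\alpha_i^*=1$ is exactly the argument the paper uses in step~3 of its verification. However, the route you take to get there has a genuine gap. You carry over Gale's substitution from the budget setting, setting $q_{ij}=x_{ij}\,p_j(\alpha)$, and your update rule then contains the term $q_{ij}\bigl(v_{ij}/p_j(\alpha)-1\bigr)$. This divides by $p_j(\alpha)$, which is the \emph{second}-highest bid on item~$j$, and it is zero \emph{identically on all of $D$} whenever item~$j$ has only one interested buyer (the paper only assumes that at least one $v_{ij}>0$ per item, so this is a legitimate instance). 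The term $q_{ij}\cdot(v_{ij}/p_j(\alpha)-1)$ is then of the form $0\cdot\infty$ and $F$ is simply not a well-defined map, so Brouwer/Kakutani does not apply.

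Your proposed patch — invoking the allocation-adjustment argument of \Cref{lem:prices-lower-bound-PE} to dismiss zero-price items — does not transfer from the budget case to the ROI case. In the budget case a zero-price item contributes nothing to $\sum_j x_{ij}p_j(\alpha)$, so it can be allocated freely without disturbing \Cref{cond-PE4} or \Cref{cond-PE5}. In the ROI case the same item contributes $x_{ij}v_{ij}>0$ to the \emph{right-hand} side of the ROI constraint while contributing nothing to the left, so adding it back strictly increases the ROI slack of its winner; by \Cref{cond-RPE5} this \emph{forces} $\alpha_i=A$, and a pre-computed equilibrium of the reduced game where $\alpha_i<A$ is no longer an equilibrium of the full game. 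So the zero-price items cannot be stripped off as a harmless preprocessing step.

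The paper avoids all of this by not substituting at all: it keeps $x_{ij}$ as the LP variables, imposes the constraint $\sum_i x_{ij}=1$, and writes the update rule directly as $\alpha_i+\sum_j x_{ij}v_{ij}-\sum_j x_{ij}p_j(\alpha)$, which involves only products of bounded quantities and is therefore defined and continuous on the whole domain regardless of whether any $p_j(\alpha)$ vanishes. In the ROI setting Gale's substitution buys you nothing anyway — the ROI constraint forces a division the moment you try to recover the $v$-weighted allocation from the expenditures, which is precisely why the result is FIXP and not PPAD — so the cleanest move is to skip the substitution entirely, as the paper does.
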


\begin{proof} We define the function $F: D \rightarrow D$ with domain $D = \prod_{i \in N}[1,A] \times \prod_{i \in N}\prod_{j \in G}[0,1]$. Let $(\alpha,\mathbf{x})$ and $(\alpha^*, \mathbf{x}^*)$ denote inputs and outputs to $F$ respectively. For any item $j \in G$, the vector of allocations $\mathbf{x}^*_j = (x_{1j}^*, \ldots, x_{nj}^*)$ is obtained as the output of the following linear program.\bigskip \bigskip \bigskip \bigskip

\begin{center}\underline{Linear Program $\mathcal{P}$}\end{center}
\begin{align*}
\mbox{maximize } & \sum_{i \in N}(\alpha_i v_{ij})x_{ij}  \\
\mbox{subject to }& \sum_{i \in N} x_{ij} = 1,\;\forall i \in N \\
& x_{ij} \geq 0, \;\forall i \in N
\end{align*}
The pacing multipliers will be obtained via the following equation
\begin{equation}\label{eq:alpha-fixed-point-RPE}
\alpha_i^* = \max \left\{1, \min \left\{A,\alpha_i+\sum_{j \in G}x_{ij}v_{ij}-\sum_{j\in G}x_{ij}p_{j}(\alpha_i)\right\}\right\}
\end{equation}

\noindent We will show that a fixed point $(\alpha^*, \mathbf{x}^*)$ of $F$ is a RPE. In particular, we show that $(\alpha,\mathbf{x})=(\alpha^*, \mathbf{x}^*)$ satisfies \Cref{cond-RPE1-s,cond-RPE2-s,cond-RPE3-s,cond-RPE4-s} of \Cref{def:RPE-s}.
\begin{enumerate}[leftmargin=*]
\item Let $j \in G$ be any item. Since $\mathbf{x}_j$ is an optimal solution to linear program $\mathcal{P}$, and since $\alpha_i v_{ij}$ is positive for all $i \in N$, the allocation constraint of $\mathcal{P}$ must be tight, i.e., $\sum_{i \in N} x_{ij}^* = 1$. Hence \Cref{cond-RPE1-s} is satisfied. \label{enum-rpeproof-1}

\item Consider any buyer $i \in N$ and item $j \in N$ such that $x_{ij}^*>0$. By the optimality of $\mathbf{x}^*_j$, $\alpha_iv_{ij}$ must be maximum, i.e., $\alpha_iv_{ij} = \alpha_i^* \cdot v_{ij} = h_j(\alpha^*)$. Hence \Cref{cond-RPE2-s} is satisfied.\label{enum:rpeproof-2}

\item Let $i \in N$ be any buyer. Assume by contradiction that \Cref{cond-RPE3-s} is violated, i.e., that $\sum_{j \in G}x_{ij}^*v_{ij} < \sum_{i \in N}x_{ij}^*p_j(\alpha^*)$. Since $\alpha_i = \alpha_i^*$, \Cref{eq:alpha-fixed-point-RPE} implies that $\alpha_i^* = 1$. In turn, this implies that for any item $j \in G$, we have $v_{ij} = \alpha_i^* v_{ij}$. Consider any such an item $j' \in G$ for which $x_{ij'}^* > 0$. By \Cref{cond-RPE2-s} which was established in \Cref{enum:rpeproof-2} above, we have that $\alpha_{ij'}^* v_{ij'} = h_j'(\alpha^*)$. Combining the above two expressions, and the fact that $h_j'(\alpha^*) \geq p_j'(\alpha^*)$, we obtain that $v_{ij'} \geq p_j'(\alpha^*)$. By summing over all items $j \in G$ with $x_{ij}>0$, we obtain a contradiction. \label{enum-rpeproof-3}

\item Consider some buyer $i \in N$ such that $\sum_{j \in G}x_{ij}^*p_j(\alpha^*) < \sum_{j \in G}x_{ij}^* v_{ij}$. Since $\alpha_i = \alpha_i^*$, \Cref{eq:alpha-fixed-point-RPE} in that case implies that $\alpha_i^* = A$, and hence \Cref{cond-RPE4} is satisfied. \label{enum-rpeproof-4}
\end{enumerate}
This completes the proof.
\end{proof}

\subsubsection{Membership in the class FIXP}\label{sec:RPE-FIXP-membership}
The main technical difference between our proof in \Cref{sec:pacing-sp-budgets} and that in \Cref{sec:pacing-sp-roi} is that in the latter case, we did not perform a variable change. Indeed, the nature of the ROI constraints in \Cref{cond-RPE3-s} prevents us from doing that, as the variables $x_{ij}$ appear in both sides of the constraints, once multiplied by the constants $v_{ij}$ and once with the \circparams $p_j(\alpha)$. The problem with having expressions of the form $\sum_{j \in G}x_{ij}p_j(\alpha)$ is that we are not allowed to multiply two parameters which are both inputs to a \linear arithmetic circuit. We can do that however with (general) arithmetic circuits, see \Cref{def:arithmetic circuit}.

As we discussed in \Cref{sec:preliminaries}, computation of fixed points via (general) arithmetic circuits corresponds to the class FIXP \citep{etessami2010complexity}, which allows for computation of solutions that are described by irrational numbers. Here we can make use of the \fixpoptgate that was designed by \citet{SICOMP:Filos-RatsikasH2023}, in order to show FIXP-membership of the problem. We have the following theorem.

\begin{theorem}\label{thm:ROI-equilibrium-FIXP}
Computing a SPRPE of any SPPRG $\mathcal{G}$ is in FIXP.
\end{theorem}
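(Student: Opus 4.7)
The plan is to reuse verbatim the function $F: D \to D$ constructed in the proof of \Cref{thm:ROI-equilibrium-exists}, and simply show that $F$ can be encoded as an arithmetic circuit (in the sense of \Cref{def:arithmetic circuit}), possibly using the \fixpoptgate of \citet{SICOMP:Filos-RatsikasH2023} as a black-box component. Since the fixed points of $F$ are exactly the SPRPEs of $\mathcal{G}$, FIXP-membership will follow from \Cref{def:FIXP}.

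The function $F$ has two components. First, for each item $j$, the allocation vector $\mathbf{x}^*_j$ is the optimal solution of the linear program $\mathcal{P}$ in \Cref{sec:RPE-proof-existence}. The feasible domain of $\mathcal{P}$ is the fixed simplex $\{x \in \mathbb{R}_{\geq 0}^n : \sum_i x_{ij} = 1\}$, which is nonempty and bounded and does not depend on the circuit inputs, while the linear objective has subgradient $(\alpha_i v_{ij})_{i \in N}$, computable by a \linear arithmetic circuit. Hence $\mathcal{P}$ can be solved by a \fixpoptgate (in fact even a \linoptgate would suffice for this subroutine). Second, the pacing multipliers are updated via
\begin{equation*}
\alpha_i^* = \max\left\{1,\ \min\left\{A,\ \alpha_i + \sum_{j \in G} x_{ij} v_{ij} - \sum_{j \in G} x_{ij} p_j(\alpha)\right\}\right\},
\end{equation*}
where $p_j(\alpha) = \min_{i \in N} \max_{i' \in N, i' \neq i} \alpha_{i'} v_{i'j}$. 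The price $p_j(\alpha)$ is built from $\alpha$ using only $\min$, $\max$, and multiplication by the constants $v_{i'j}$, so it can be evaluated by a \linear arithmetic circuit. The term $\sum_j x_{ij} v_{ij}$ is likewise \linear, since the $v_{ij}$ are constants.

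The one place where we leave the \linear fragment is the product $x_{ij} \cdot p_j(\alpha)$: here both factors are circuit values produced from the inputs (the $x_{ij}$ from the \fixpoptgate above, and $p_j(\alpha)$ from the prices subcircuit). This is precisely the multiplication of two variables that is forbidden in a \linear arithmetic circuit but permitted in a general arithmetic circuit via the gate $*$ of \Cref{def:arithmetic circuit}. Exactly as discussed at the end of \Cref{sec:pacing-sp-roi}, this is the only obstacle to a Linear-FIXP construction, and it is removed as soon as we allow ourselves the richer gate basis of FIXP. All other operations in $F$ — the projection into $[1,A]$, the sums, and the computation of $p_j(\alpha)$ — fit comfortably in the FIXP circuit model. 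No divisions are needed, and no constraint qualification issues arise for $\mathcal{P}$ since its domain is a fixed simplex.

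Putting the pieces together, $F$ is computed by an arithmetic circuit (with \fixpoptgates, which are themselves implementable by arithmetic circuits per \citet{SICOMP:Filos-RatsikasH2023}), acting on the compact convex polytope $D = [1,A]^n \times [0,1]^{nm}$. By \Cref{thm:ROI-equilibrium-exists}, any fixed point of $F$ yields a SPRPE, and the recovery map is the identity (so in particular SL), which establishes FIXP-membership. The main conceptual point of the plan — and the reason we do not land in PPAD — is isolating the single bilinear term $x_{ij} p_j(\alpha)$ as the sole source of non-\linearity; the rest of the construction is a direct transcription of the existence argument of \Cref{sec:RPE-proof-existence}, which was designed precisely so as to avoid any discontinuous limit arguments.
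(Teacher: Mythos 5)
Your proposal is correct and follows essentially the same route as the paper's own proof: reuse the function $F$ from the existence argument, observe that the LP $\mathcal{P}$ meets the conditions of the \fixpoptgate, and note that the single bilinear term $x_{ij}\,p_j(\alpha)$ in the pacing-multiplier update is the sole reason one needs general multiplication and hence lands in FIXP rather than Linear-FIXP. Your write-up is somewhat more explicit than the paper's (which merely asserts the LP ``can be easily seen to satisfy all the properties''), but the structure and the key observation are the same.
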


\begin{proof}
\noindent To turn the existence proof of \Cref{thm:ROI-equilibrium-exists} to a FIXP-membership proof, we have to
\begin{itemize}
    \item[-] argue that the function $F$ that we defined in \Cref{sec:RPE-proof-existence} can be computed by an arithmetic circuit,
    \item[-] argue that the \fixpoptgate of \citep{SICOMP:Filos-RatsikasH2023} can solve the linear program $\mathcal{P}$ of \Cref{sec:RPE-proof-existence}.
\end{itemize}
The first part follows by observing that all the parameters of the linear program $\mathcal{P}$ and those of \Cref{eq:alpha-fixed-point-PPE} can be computed using the standard operations of the circuit. Note that compared to the corresponding argument in \Cref{sec:pacing-sp-budgets}, here we indeed need the capability to multiply input variables, in order to compute $\alpha_i^*$ in \Cref{eq:alpha-fixed-point-PPE}. Linear program $\mathcal{P}$ is very simple and can be easily seen to satisfy all the properties required for the \fixpoptgate to work, stated in \citep{SICOMP:Filos-RatsikasH2023}.
\end{proof}

\subsubsection{Limits of rationality: second-price ROI pacing equilibria can be irrational}\label{sec:RPE-irrational}

To round off the section, it remains to determine whether our FIXP-membership result for SPRPE, rather than a PPAD-membership result, is due to our proof technique falling short, or whether there is some inherent reason for this. As we show below, there are simple examples of SPRPGs $\mathcal{G}$ for which all SPRPE are irrational and hence a PPAD-membership result is not possible. 

\begin{example}[Example of a SPRPG $\mathcal{G}$ that only has irrational SPRPE]\label{ex:irrational-roi} Consider a SPRPG $\mathcal{G}$ with 2 buyers and 3 items, with $\mathbf{v}_1 = (1,0,1)$ and $\mathbf{v}_2 = (0,1,2)$. Let $A$ be sufficiently large, e.g., $A > 4$; the proof can easily be adapted for other values of $A$ by appropriate rescaling of the parameters.
Via a case analysis, we show that in any SPRPE, the pacing multipliers $\alpha_i$ for $i \in \{1,2\}$ are irrational. \medskip

\noindent Let $(\alpha,\mathbf{x})$ be a SPRPE. Clearly, 
\begin{itemize}
    \item[-] buyer 1 will be allocated the entire quantity of item 1, i.e., $x_{11}=1$, at price $p_1(\alpha)=0$,
    \item[-] buyer 2 will be allocated the entire quantity of item 2, i.e., $x_{22}=1$, at price $p_2(\alpha)=0$.
\end{itemize} 

\paragraph{Case 1: $\alpha_1 < 2\alpha_2$.}

In this case, buyer 2 wins the entirety of item $3$ at price $\alpha_1$, thus we have $p_3(\alpha) = \alpha_1$, $x_{13}=0$ and $x_{23} = 1$. By the ROI constraint (\Cref{cond-RPE4} of \Cref{def:RPE}) for buyer 2 it holds that 
\begin{align*}
\alpha_1 = p_3(\alpha)\cdot x_{23} \leq 1+2 \cdot x_{23} = 3
\end{align*}
However, we have that
\begin{align}
x_{11}\cdot p_1(\alpha) = 0 < 1 = v_{11}\cdot x_{11}
\end{align}
By the maximum pacing constraint (\Cref{cond-RPE5} of \Cref{def:RPE}) for buyer 1, it follows that $\alpha_1 = A$. Hence $A = \alpha_1\leq 3$, which is not possible by the choice of $A$. This implies that $(\alpha,\mathbf{x})$ is not an SPRPE, a contradiction.

\paragraph{Case 2: $\alpha_1 > 2\alpha_2$.}

In this case, buyer 1 wins the entirety of item $3$ at price $2\alpha_2$, thus we have $p_3(\alpha) = 2\alpha_2$, $x_{13} = 1$ and $x_{23}=0$. By the ROI constraint (\Cref{cond-RPE4} of \Cref{def:RPE}) for buyer 1 it holds that 
\begin{align}
2\alpha_2\cdot x_{13} = p_3(\alpha)\cdot x_{13}\leq 1+1\cdot x_{13} = 2
\end{align}
and hence $\alpha_2\leq 1$, which implies that $\alpha_2=1$, since $\alpha_2 \in [1,A]$. Therefore, we have that $\alpha_2\cdot v_{21}=1$, whereas $\sum_{j \in \{1,2,3\}} x_{2j}\cdot p_j(\alpha) = 0$. By the maximum pacing constraint (\Cref{cond-RPE5} of \Cref{def:RPE}) for buyer 2, it follows that $\alpha_2 = A \leq 1$, which is not possible since $A > 1$. This implies that $(\alpha,\mathbf{x})$ is not an SPRPE, a contradiction.

\paragraph{Case 3: $\alpha_1 = 2\alpha_2$.}

In this case, both buyers $1$ and $2$ submit the same bid, and hence they are both eligible to receive positive quantities of item 3. Since the first- and second-highest bids coincide, the price of the item is $p_3(\alpha)=\alpha_1=2\alpha_2\geq 2$, since $\alpha_i \in [1,A]$ for $i \in \{1,2\}$. By the ROI constraint (\Cref{cond-RPE4} of \Cref{def:RPE}) for both buyers, we have
\begin{align}
& p_3(\alpha)\cdot x_{13}\leq 1+x_{13}\\
& p_3(\alpha)\cdot x_{23}\leq 1+2x_{23}
\end{align}
Next, we consider whether these inequalities can be strict or not. 
First, if the second inequality is strict, then by the maximum pacing constraint (\Cref{cond-RPE5} of \Cref{def:RPE}) for buyer 2, we have that $\alpha_2 = A$. However, then $\alpha_1 = 2\alpha_2 = 2A$,
which contradicts the fact that $\alpha_1\in [1,A]$. This implies that the second inequality is in fact an equality, as otherwise $(\alpha,\mathbf{x})$ is not an SPRPE.\medskip

\noindent Now suppose that the first inequality is strict. By the maximum pacing constraint (\Cref{cond-RPE5} of \Cref{def:RPE}) for buyer 1, we have that $\alpha_1 = A$ and hence $p_3(\alpha)=A$. Now adding the two inequalities together we have
\begin{align}
A = \alpha_1 = p_3(\alpha)\cdot (x_{13}+x_{23}) < 2+x_{13}+2x_{23} = 3+x_{23}\leq 4
\end{align}
which is a contradiction since $A>4$. This implies that the first inequality is also in fact an equality, as otherwise $(\alpha,\mathbf{x})$ is not an SPRPE.\medskip

\noindent We are now left with two equations, namely that 
\[x_{12} = \frac{1}{p_3(\alpha)-1} \text{ and } x_{23} = \frac{1}{p_3(\alpha)-2}\] 

\noindent Adding these together, we have that
\begin{align*}
1 = x_{13}+x_{23} = \frac{1}{p_3(\alpha)-1}+\frac{1}{p_3(\alpha)-2}
\end{align*}
Multiplying through by the denominators, we obtain the quadratic equation $p_3(\alpha)^2-5p_3(\alpha)+5=0$ which has
only irrational solutions $(5\pm\sqrt{5})/2$. Since $p_3(\alpha)\geq 2$, this means that $p_3(\alpha)=(5+\sqrt{5})/2$ is the unique solution corresponding to a SPRPE.
We conclude that $\alpha_1 = p_3(\alpha)$ and $\alpha_2 = p_3(\alpha)/2$
are irrational.
\end{example}

\section{Fair Division}\label{sec:fair-division}

In this section we consider applications of our \linoptgate to fair division. In particular, we will mainly consider two rather fundamental settings, namely \emph{envy-free cake cutting} and \emph{rental harmony}. The former is one of the prototypical problems in fair division \citep{gamow1958puzzle}, the origins of which date back to the late 1940s and the work of \citet{steinhaus1949division}, and concerns the fair division of a single infinitely divisible resource among a set of agents with heterogeneous preferences over parts of the resource. The latter was studied famously by \citet{AMM:Su1999}, who credits its origins to the chore division problem of \citet{gardner1978aha}, and is concerned with the fair allocation or rooms to tenants, taking into account their preferences over rooms and the rent for each room. 

The PPAD-membership for \emph{approximate} envy-free cake cutting has been known since the work of \citet{OR:DengQS2012}, and has been hinted at by the existence proof of Simmons \citep{AMM:Su1999} that in fact goes via an approximate version and applies Sperner's Lemma \citep{sperner1928neuer}. \citet{SICOMP:Filos-RatsikasH2023} recently showed that finding an exact envy-free division is in FIXP in general. Still, there are cases for which exact solutions are rational (e.g., when the preferences are captured by piecewise-constant densities), and for those FIXP is not the appropriate class. Our main theorem in \Cref{sec:ef-cake-cutting} below extends the ideas of \citeauthor{SICOMP:Filos-RatsikasH2023} to show the desired PPAD-membership result for these cases. We remark that a FIXP-hardness result is also known from \citep{SICOMP:Filos-RatsikasH2023}, as well as a PPAD-hardness result from \citep{OR:DengQS2012} for approximate divisions, but those only concern very general versions of the problem and leave much room for improvement.

For rental harmony, there were no known complexity results before our work, to the best of our knowledge. It turns out that our \linoptgate allows us to obtain the PPAD-membership of finding fair partitions using very much the same approach as for the case of envy-free cake cutting, again for these cases for which partitions in rational numbers exist. For more general settings (for which all partitions might be irrational), we complement our results with a FIXP-membership proof, based on the OPT-gate for FIXP of \citep{SICOMP:Filos-RatsikasH2023}.

We remark that very recently \citet{caragiannis2023complexity} already used the \linoptgate (which we made them aware of via personal communications) to establish that computing envy-free and Pareto-optimal allocations of multiple divisible goods is in PPAD.

\subsection{Envy-free cake cutting}\label{sec:ef-cake-cutting}

We start from arguably the most fundamental fair division problem, that of \emph{envy-free cake cutting}.

\begin{definition}[(Contiguous) envy-free cake cutting]\label{def:ef-cake-cutting}
Let the interval $[0,1]$ be called a \emph{cake}, which is to be divided into $n$ subintervals (pieces) using $n-1$ cuts. Let $\mathbf{x} \in \Delta^{n-1}$ denote a division of the cake, i.e., each division is a point in the simplex, and let $x_j$ be the $j$'th coordinate (or ``the $j$'th piece''). There is a set $N$ of $n$ agents, and for each piece $j$, each agent $i \in N$ has a \emph{valuation function} $u_{ij}: \Delta^{n-1} \rightarrow \mathbb{R}_{\geq 0}$, assigning a real number to a division of the cake. Given a division $\mathbf{x}$, we will say that agent $i \in N$ \emph{prefers} the $j$-th piece if $u_{ij}(\mathbf{x}) \geq u_{ij'}(\mathbf{x})$ for any piece $j'$. A division $\mathbf{x}$ is envy-free if there exists a permutation $\pi$ of $\{1,\ldots,n\}$ such that for every $i \in N$, agent $i$ prefers piece $\pi(i)$. 
\end{definition}

\noindent We offer the following remarks regarding \Cref{def:ef-cake-cutting}:
\begin{itemize}
    \item[-] The definition considers the \emph{contiguous} version of the problem, where the pieces are single intervals. In the more general version of the problem, each piece can be a collection of possibly disconnected intervals. Since a contiguous division is clearly a division, the PPAD-membership of computing contiguous divisions is a stronger result.
    \item[-] The definition above is very general, in the sense that the valuations of the agents for the pieces do not only depend on the pieces themselves, but they could depend on the whole division $\mathbf{x}$, e.g., on how the remaining pieces have been allocated. This captures for example valuations that exhibit externalities. In contrast, several textbooks on the problem (e.g., see \citep{robertson1998cake,brams1996fair,procaccia2013cake}) often consider the problem where the valuation of an agent only depends on her own piece, and in fact these valuations are captured by additive measures over the cake (i.e., the value of an agent for the union of two intervals is the sum of her value for each interval). Again, with regard to PPAD-membership, the more general setting that we consider here makes the result stronger.
\end{itemize}

\paragraph{Sufficiency condition: Hungriness.} We will say that an agent $i \in N$ is \emph{hungry}, if she prefers a non-empty piece of cake to an empty piece. An instance of the envy-free cake cutting problem satisfies the hungriness condition if all of the agents are hungry.

\paragraph{Known results for existence and complexity.} The existence of envy-free cake cutting divisions (in the sense of \Cref{def:ef-cake-cutting}) was proven independently by \citet{woodall1980dividing} and Simmons (credited in \citep{AMM:Su1999}), under the hungriness condition. Both of these proofs go via proving existence for an approximate version of the problem (via Sperner's Lemma \citep{sperner1928neuer} or the related K-K-M Lemma \citep{FM:KnasterKM1929}) and then take limits to obtain the existence of exact divisions. An alternative proof by \citet{woodall1980dividing} uses Brouwer's fixed point theorem \citep{MA:Brouwer1911}. The first continuous proof of existence was given by \citet{SICOMP:Filos-RatsikasH2023} in the context of proving that computing an envy-free division is in the class FIXP, via the employment of their OPT-gate for FIXP:

\begin{theorem}[\citep{SICOMP:Filos-RatsikasH2023}]\label{thm:FIXPpaper-cake-cutting}
Computing an envy-free division of the cake is in FIXP.
\end{theorem}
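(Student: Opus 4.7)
My plan is to cast envy-free cake cutting as a fixed-point problem on the simplex $\Delta^{n-1}$ of divisions and compute it by an arithmetic circuit that uses \fixpoptgates. The function $F: \Delta^{n-1} \to \Delta^{n-1}$ will be designed so that any fixed point $\mathbf{x}^*$ carries an implicit assignment of pieces to agents that is envy-free. Brouwer's theorem ensures the existence of such a fixed point, and the fact that $F$ can be realized by an arithmetic circuit containing \fixpoptgates places the problem in FIXP by the gate-substitution result of \citet{SICOMP:Filos-RatsikasH2023}.

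The construction proceeds in two coupled stages. First, for each agent $i \in N$ I would use a \fixpoptgate to solve the linear program $\max_{\mathbf{y}_i \in \Delta^{n-1}} \sum_{j=1}^n u_{ij}(\mathbf{x})\, y_{ij}$, whose optimal solutions are exactly the distributions supported on agent $i$'s preferred pieces. Since each $u_{ij}$ is computable by an arithmetic circuit, its value and the supergradient of the objective with respect to $\mathbf{y}_i$, namely $(u_{ij}(\mathbf{x}))_j$, fit the structural requirements of the \fixpoptgate. Stacking these outputs gives a preference matrix $Y(\mathbf{x})$. Second, I would use another \fixpoptgate to produce a doubly stochastic matrix $Z(\mathbf{x})$ whose support is contained in that of $Y(\mathbf{x})$, by solving a convex program on the Birkhoff polytope intersected with those support constraints. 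The update $F(\mathbf{x})$ is then obtained by projecting $\mathbf{x} + \lambda\bigl(\sum_i z_{ij}(\mathbf{x}) - 1\bigr)_j$ onto $\Delta^{n-1}$ for a sufficiently small $\lambda > 0$, so that coordinates with column sums exceeding $1$ shrink and those below $1$ grow.

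At a fixed point, the update rule forces each column sum of $Z(\mathbf{x}^*)$ to equal $1$; combined with the row-sum-$1$ constraint and the support restriction inherited from $Y(\mathbf{x}^*)$, this means $Z(\mathbf{x}^*)$ is doubly stochastic and supported inside the bipartite preference graph. By Birkhoff--von Neumann it decomposes into a convex combination of permutation matrices, all within the same support, yielding a permutation $\pi$ under which every agent receives one of their preferred pieces, i.e., an envy-free division. The hungriness sufficiency condition will be used to guarantee that no row of $Y(\mathbf{x})$ is ever entirely zero, so that the matching program remains feasible along the whole domain.

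The hardest part, I expect, is ensuring that the second optimization program is feasible at every preference matrix $Y(\mathbf{x})$ produced by the first stage, and that its convex encoding can be handled by the \fixpoptgate. Feasibility amounts to a Hall-type condition on $Y(\mathbf{x})$, and preserving it along all $\mathbf{x} \in \Delta^{n-1}$ is essentially where the topological content of the Simmons/Sperner existence argument must be reproduced in a circuit-friendly form. In addition, satisfying the Slater constraint qualification required by the \fixpoptgate (emphasized in \Cref{sec:linopt-vs-opt}) for the support-restricted matching program will require careful design of the feasible region so that the hard equalities $z_{ij} = 0$ whenever $y_{ij} = 0$ do not collapse the relative interior.
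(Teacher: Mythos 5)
Your overall architecture—first an OPT-gate computing, for each agent $i$, a distribution supported on $i$'s preferred pieces, then a second gate doing a matching/flow step, then an update of the division $\mathbf{x}$ driven by the mismatch between demand and supply—is the right family of ideas and matches the skeleton of the proof in \citet{SICOMP:Filos-RatsikasH2023}. Your stage-one LP is essentially their $\mathcal{P}_1$. The difficulty is in your second stage, and it is where the proposal breaks down.

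You propose to have the second \fixpoptgate solve a convex program over the Birkhoff polytope intersected with the support constraints $z_{ij}=0$ whenever $y_{ij}=0$, i.e., to produce a \emph{doubly stochastic} matrix supported inside the preference graph. This program is typically infeasible: its feasibility is exactly Hall's condition on the preference graph, which is equivalent to the \emph{conclusion} you are trying to reach (existence of a perfect matching / envy-free assignment). The OPT-gate only promises correct behaviour when the program it is fed is feasible (indeed with Slater condition), so you cannot feed it a program whose feasibility is the very thing the topological argument must establish. You note this as ``the hardest part,'' but it is not a technical wrinkle to be smoothed over—it is a circular dependency. Moreover, even if you restricted attention to $\mathbf{x}$ where the program \emph{is} feasible, the doubly-stochastic constraint forces every column sum to equal $1$, so your update $\mathbf{x} + \lambda\bigl(\sum_i z_{ij}-1\bigr)_j$ would be identically zero and every such $\mathbf{x}$ would be a fixed point whether or not it is envy-free. (There is also a sign inconsistency: the stated intent is that over-demanded pieces shrink, but the written update makes them grow; this is secondary.)

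The paper avoids all of this by replacing your matching-feasibility program with a \emph{max-flow} LP $\mathcal{P}_2$: maximize $\sum_{i,j} y_{ij}$ subject to $0 \le y_{ij} \le c_{ij} + 1/n^3$, column sums at most $1$, and row sums at most $1$. This is always feasible (with the $1/n^3$ slack giving a strictly interior point for Slater), and it permits \emph{partial} flow. One then defines the deficits $r_k = \max\{0, 1-\sum_i y_{ik}\}$ and uses them in the multiplicative update $x_j^* = (x_j+r_j)/(1+\sum_k r_k)$. The argument that a fixed point has zero deficit is exactly where Hall's theorem and the hungriness condition enter, but now as \emph{consequences} of the fixed-point equations—not as a feasibility assumption fed into a gate. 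This is the conceptual move your proposal is missing, and it is precisely the discretization of ``the topological content of the Simmons/Sperner argument'' that you correctly identified as the crux.
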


\noindent In general, FIXP is indeed the right class for the computation of exact envy-free divisions, as there are simple examples showing that even when the valuations are given by linear density functions, all envy-free divisions might be irrational (e.g., see \citep{bei2012optimal}). We provide a simple one below for completeness:

\begin{example}[Example where all envy-free cake cutting divisions are irrational]\label{ex:cake-irrational}
Consider an instance with two identical agents and let $u$ be their common valuation function. For $u$, we will specify the densities over the cake $[0,1]$, which are both given by $f(t)=2t$. Let $z$ be the point where the cake is cut at an envy-free division. It must hold that $u([0,z]) = u([z,1])$, i.e., 
\begin{align*}
    \int_0^z 2t\; dt =\int_z^1 2t\; dt 
\end{align*}
The first integral evaluates to $z^2$ and the second integral evaluates to $1-z^2$. It follows that the only envy-free division is obtained by cutting the cake at $z=1/\sqrt{2}$, an irrational number.
\end{example}

\noindent Still, there are interesting valuation functions for which rational envy-free divisions exist, and for those cases a FIXP-membership result is unsatisfactory. For example, it is known that when agents have piecewise-constant density functions (i.e., step functions), an envy-free division in rational numbers exists (e.g., see \citep{goldberg2020contiguous}). For these cases for which rationality is possible, we would like to obtain a PPAD-membership result instead. This is achievable via the ``approximation and rounding'' approach which was discussed in \Cref{sec:Other-Approaches}, i.e., to start from an $\varepsilon$-approximately envy-free division of the cake, known to be computable in PPAD from \citep{OR:DengQS2012}, and then round it to an exact solution, as long as $\varepsilon$ is small enough \citep{goldberg2020contiguous}. We provide an alternative, conceptually much simpler proof, via the employment of our \linoptgate, which does not require approximations or rounding. We will obtain the PPAD-membership of the problem for any valuation function that is given by a \pseudog. We state the main theorem of the section.

\begin{theorem}\label{thm:cake-cutting}
Computing an envy-free division of the cake when the agents' valuation functions are given by \pseudogs is in PPAD.
\end{theorem}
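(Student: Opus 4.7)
The plan is to port the FIXP-membership argument of \citet{SICOMP:Filos-RatsikasH2023} for envy-free cake cutting to the PPAD setting, by replacing their \fixpoptgates with \linoptgates and introducing one extra auxiliary linear program in place of the single step of their construction that cannot be realized inside a \Linear arithmetic circuit. Concretely, I would build a continuous map $F \colon D \to D$ on a product of simplices whose fixed points correspond to envy-free divisions, and whose evaluation uses only \Linear gates, \pseudogs for the valuations $u_{ij}$, and \linoptgates. Once this is done, PPAD-membership follows from \Cref{thm:PPAD=linear-FIXP} and \Cref{thm:linoptgate}.

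The map $F$ is built from three ingredients. First, a preferred-piece module: for each agent $i$, a feasibility program of the form $\mathcal{Q}$ (see Program~\eqref{eq:feasibility-general}) outputs a vector $\mathbf{y}_i \in \Delta^{n-1}$ whose support is contained in the pieces that maximize $u_{ij}(\mathbf{x})$. The conditional constraints read
\begin{equation*}
u_{ij}(\mathbf{x}) < u_{ij'}(\mathbf{x}) \;\Longrightarrow\; y_{i,j} \leq \varepsilon \cdot y_{i,j'}, \qquad j, j' \in [n],
\end{equation*}
together with $\sum_j y_{i,j} = 1$ and $y_{i,j} \geq \varepsilon^n/n$. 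Since each $u_{ij}$ is computable by a \pseudog and $\mathbf{x}$ enters only as a \circparam, the program has the shape required by the \linoptgate, and the associated \qgraph is acyclic, so \Cref{lem:feasibility-of-q-graph} provides solvability. Second, an assignment module that aggregates the $\mathbf{y}_i$'s into a bistochastic matching $\mathbf{z}$ of agents to preferred pieces by solving a matching LP --- this is the direct analogue of the second OPT-gate of \citet{SICOMP:Filos-RatsikasH2023}. Third, a cut-update module --- the auxiliary linear program specific to our approach --- which takes $\mathbf{x}$ and $\mathbf{z}$ as \circparams and returns an updated division through an LP solvable by a \linoptgate. The hungriness assumption is used, as in Simmons' existence proof, to guarantee that the support supplied by stage one admits a perfect matching, so that stage two's LP is feasible at every fixed point.

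The main obstacle will be stage three: in the FIXP argument of \citeauthor{SICOMP:Filos-RatsikasH2023}, the cut adjustment is naturally written as a product of the current division by a demand-dependent quantity, which a \Linear arithmetic circuit cannot form. I would avoid the multiplication by encoding the update as the optimizer of a linear objective over linear constraints in which the current division and the matching enter only as \circparams on the right-hand side, matching the format of Program~\eqref{eq:OPT-gate-linear}. Correctness of the fixed-point characterization then follows by the same combinatorial argument as in the FIXP proof: at a fixed point, stage one places every agent on her preferred pieces, stage two supplies a perfect matching on those pieces, and stage three forces the division to be stationary under that matching, so the permutation extracted from $\mathbf{z}$ witnesses envy-freeness. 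This yields the desired membership in PPAD.
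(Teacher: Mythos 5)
Your high-level plan is exactly the paper's: port the FIXP-membership construction of Filos-Ratsikas and Hollender, keep the two linear programs for capacities and flows, and replace the one step that genuinely needs division in the circuit --- the cut update $x_j^* = (x_j + r_j)/(1 + \sum_k r_k)$ --- with an auxiliary LP. You have correctly located the single non-\Linear step. Where you fall short is that you never say what that LP is, and the choice is not innocuous: it carries the whole fixed-point argument. The paper uses
\begin{equation*}
\max \sum_{k=1}^n r_k x_k \quad \text{s.t.} \quad \sum_k x_k = 1, \ x_k \geq 0,
\end{equation*}
with $r_k = \max\{0, 1 - \sum_i y_{ik}\}$ the flow excess into piece $k$. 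The reason this works is delicate: at an optimal $\mathbf{x}^*$ of this LP, $x_j^* > 0$ implies $r_j^* = \max_k r_k^*$; combined with a separate lemma (whose proof uses hungriness) that at a fixed point \emph{some} piece $j$ with $x_j^* > 0$ has $r_j^* = 0$, one concludes $\max_k r_k^* = 0$, hence all excesses vanish, hence by Hall's theorem the preference graph admits a perfect matching and $\mathbf{x}^*$ is envy-free. Your description ``stage three forces the division to be stationary under that matching'' does not capture this; the argument runs through the excess variables, not through stationarity of any matching, and the crucial supporting lemma is not mentioned at all.

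A second, smaller issue is your stage one. You claim the feasibility program with constraints $u_{ij}(\mathbf{x}) < u_{ij'}(\mathbf{x}) \Rightarrow y_{i,j} \leq \varepsilon\, y_{i,j'}$, $\sum_j y_{i,j} = 1$, $y_{i,j} \geq \varepsilon^n/n$ ``outputs a vector whose support is contained in the pieces that maximize $u_{ij}(\mathbf{x})$.'' That is false as written: the lower bound $y_{i,j} \geq \varepsilon^n/n$ forces every coordinate to be strictly positive, so the support is all of $[n]$. The paper instead uses the LP $\max \sum_j c_{ij}\, u_{ij}(\mathbf{x})$ over the simplex, whose optima are supported \emph{exactly} on the preferred pieces --- which is precisely what feeds into the Hall's-theorem bookkeeping in the flow LP. It is conceivable your $\varepsilon$-tapered version could be salvaged by choosing $\varepsilon$ small enough and redoing the counting, but you give no such argument, and it would not be shorter than simply reusing the paper's $\mathcal{P}_1$. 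As it stands, both your stage one and your stage three need to be made precise before the proof is complete.
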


\noindent The valuation functions are the integrals of the density functions, so \Cref{thm:cake-cutting} immediately implies a PPAD-membership result for envy-free divisions with piecewise-constant densities, which we mentioned above. Using the machinery developed in \Cref{sec:implicit}, we can also capture simple interesting cases where the inputs are given as the integrals of the density functions rather than the utilities themselves. We develop the proof in \Cref{sec:cake-cutting-proof} below. 

\subsubsection{The proof of \crtCref{thm:cake-cutting} \label{sec:cake-cutting-proof}}

The proof follows very much along the same lines of that of \Cref{thm:FIXPpaper-cake-cutting} presented in \citep{SICOMP:Filos-RatsikasH2023}, with a crucial modification to make it amenable to the use of \linear arithmetic circuits. For that reason, it is instructive to first present the proof of \citep{SICOMP:Filos-RatsikasH2023}, and then explain how to obtain the proof of \Cref{thm:cake-cutting} from there.

\paragraph{Envy-free divisions as matchings in bipartite graphs.} Let $\mathbf{x} \in \Delta^{n-1}$ be a division of the cake. Let $G=(N,P,E)$ be the bipartite graph in which $N$ is the set of agents, $P$ is the set of pieces and an edge $(i,j)$, $i \in N, j \in P$ is in $E$ if and only if agent $i$ prefers piece $j$. Given this interpretation, $\mathbf{x}$ is envy-free if and only if $G$ admits a perfect matching. We will find that perfect matching via a maximum flow argument, therefore we define capacities $c_e$ on the edges of the corresponding flow network. 

\paragraph{Envy-free divisions as fixed points.} \citet{SICOMP:Filos-RatsikasH2023} construct a function $F: D \rightarrow D$ and recover envy-free divisions from its fixed points. In particular the domain of $F$ will be $D=\Delta^{n-1} \times \left(\Delta^{n-1}\right)^n \times ([0,1]^n)^n$. In $D$:
\begin{itemize}
    \item[-] a point $\mathbf{x} \in \Delta^{n-1}$ will be the division of the cake,
    \item[-] a point $\mathbf{c}_{i} \in \Delta^{n-1}$ represents the capacities of the edges from agent $i \in N$,
    \item[-] a point $\mathbf{y}_i$ in $[0,1]^n$ represents the flow along the edges from agent $i \in N$.
\end{itemize}
In other words, the input of $F$ will be a vector $(\mathbf{x}, \mathbf{c}_1,\ldots,\mathbf{c}_n,\mathbf{y}_1,\ldots,\mathbf{y}_n)$ and let $(\mathbf{x}^{*}, \mathbf{c}_1^{*},\ldots,\mathbf{c}_n^{*},\mathbf{y}_1^{*},\ldots,\mathbf{y}_n^{*})$ denote the output. Let $r_k = \max\{0,1-\sum_{i=1}^{n}y_{ik}\}$ denote the \emph{flow excess} incoming into piece $k$. The division $\mathbf{x^*}$ will be computed via the following equations:
\begin{equation}\label{eq:fixp-cake-cutting}
x_j^* = \frac{x_j + r_j}{1+\sum_{k=1}^n r_k}, \text{ for all pieces } j
\end{equation}
The capacities $\mathbf{c}_i^*$ and the flows $\mathbf{y}_i^*$ will be obtained as the outcomes of the linear programs of \Cref{fig:cake-cutting-P1-and-P2}. $\mathcal{P}_1$ is a set of linear programs (one for each agent $i \in N$), the optimal solutions of which will give the capacities $c_{ij}^*$, and $\mathcal{P}_2$ is a linear program that gives the values $y_{ij}^*$ of the flow. 

\begin{figure}
\centering 
\fbox{
\centering
    \begin{minipage}{.4\textwidth}
        \centering
        \underline{Linear Program $\mathcal{P}_1$}
        \begin{align*}%
	\mbox{maximize } & \quad \sum_{j=1}^n c_{ij} \cdot u_{ij}(\mathbf{x})  \\
	\mbox{subject to }& \quad \sum_{j=1}^n c_{ij} = 1 \\
        & c_{ij} \geq 0, \; \text{ for any piece } j 
    \end{align*}
    \end{minipage}%
    \hfill\vline\hfill
    \begin{minipage}{0.6\textwidth}
       \centering
        \underline{Linear Program $\mathcal{P}_2$}
        \begin{align*}%
	\mbox{maximize } & \quad \sum_{1 \leq i,j \leq n} y_{ij}  \\
	\mbox{subject to }& \quad 0 \leq y_{ij} \leq c_{ij} + \frac{1}{n^3}, \text{ for any agent }i \text{ and piece } j \\ 
        & \quad \sum_{i=1}^n y_{ij} \leq  1 \text{ for any piece } j\\
        & \quad \sum_{i=1}^n y_{ij} \geq 1 \; \text{ for all agents } i \in N 
        \end{align*}
    \end{minipage}
    }
\caption{The linear program $\mathcal{P}_1$ used for capacities $\mathbf{c}_i$ (left) and the linear program $\mathcal{P}_2$ used for the flow $\mathbf{y}$.}
\label{fig:cake-cutting-P1-and-P2}
\end{figure}

\begin{remark}
The $1/n^3$ term in the constraints of $\mathcal{P}_2$ ensures that there exists a \emph{strictly feasible} point (i.e., a point $\hat{\mathbf{z}}$ such that the general constraint $\mathbf{C} \cdot \mathbf{z} \leq \mathbf{d}$ is satisfied with strict inequality, i.e., $\mathbf{C} \cdot \hat{\mathbf{z}} < \mathbf{d}$, and hence satisfies the Slater condition \citep{Slater50}). This is required for the OPT-gate for FIXP of \citet{SICOMP:Filos-RatsikasH2023} but is not required for our \linoptgate. In particular, we could have the constraints be $0 \leq y_{ij} \leq c_{ij}$ instead, and in fact that would even slightly simplify the proofs. However, we elected to keep linear program $\mathcal{P}_2$ with the term $1/n^3$ in the constraints, for two reasons. First, it allows us to directly compare with the proof of \citep{SICOMP:Filos-RatsikasH2023}. Secondly, it allows us to refer to linear program $\mathcal{P}_2$ when we prove the FIXP-membership for the rental harmony problem in \Cref{sec:rental-harmony}.
\end{remark}

\noindent \citet{SICOMP:Filos-RatsikasH2023} state and prove the following simple lemma. 

\begin{lemma}[\citep{SICOMP:Filos-RatsikasH2023}]\label{lem:fixp-cake-cutting-lemma-1}
If the total flow $\sum_{i \in N} \sum_j y_{ij}^* = n$, then $\mathbf{x}^*$ is an envy-free division. 
\end{lemma}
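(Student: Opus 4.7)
}

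The plan is to interpret the flow variables $y_{ij}^*$ as a fractional matching in a suitable bipartite graph and then invoke Hall's theorem to extract a perfect matching consistent with the preference relation. The argument proceeds in three short steps.

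First, I would exploit the optimality of $\mathbf{c}_i^*$ in the linear program $\mathcal{P}_1$. Since $\mathcal{P}_1$ maximizes $\sum_j c_{ij} u_{ij}(\mathbf{x})$ over the simplex, an optimal solution places all of its mass on coordinates $j$ attaining the maximum value of $u_{ij}(\mathbf{x})$. Consequently, $c_{ij}^* > 0$ implies that piece $j$ is among the preferred pieces of agent $i$. Let $A_i \subseteq \{1,\dots,n\}$ denote the set of preferred pieces of agent $i$ at the division $\mathbf{x}$, and let $G_\mathrm{pref}$ be the bipartite ``preference'' graph on $(N,P)$ with edge set $\{(i,j) : j \in A_i\}$; our goal is to show that $G_\mathrm{pref}$ has a perfect matching.

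Second, I would exploit the fact that the total flow is $n$. Because $\sum_i y_{ij}^* \le 1$ for each piece $j$, and the sum of these $n$ constraints totals $n$, each one must hold with equality: $\sum_i y_{ij}^* = 1$ for every piece $j$. I would then bound the amount of flow that can possibly leak through ``non-preferred'' edges. For any agent $i$ and piece $j \notin A_i$ we have $c_{ij}^* = 0$, so the capacity constraint of $\mathcal{P}_2$ gives $y_{ij}^* \le 1/n^3$. Hence for any subset $S \subseteq N$ of agents, the total flow from $S$ into pieces outside $N_\mathrm{pref}(S) := \bigcup_{i\in S} A_i$ is at most $|S| \cdot n \cdot (1/n^3) = |S|/n^2$.

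Third, I would verify Hall's condition for $G_\mathrm{pref}$. Summing the equalities $\sum_i y_{ij}^* = 1$ over $j \in N_\mathrm{pref}(S)$ gives an upper bound $|N_\mathrm{pref}(S)|$ on the flow from $S$ into $N_\mathrm{pref}(S)$, while the total flow out of $S$ is exactly $|S|$ (since the per-agent constraint $\sum_j y_{ij}^* \le 1$ combined with total flow $n$ forces per-agent outflow $1$). Combining with the leakage bound yields $|S| \le |N_\mathrm{pref}(S)| + |S|/n^2$, i.e., $|N_\mathrm{pref}(S)| \ge (1 - 1/n^2)\,|S|$. Since $|N_\mathrm{pref}(S)|$ is an integer and $|S| \le n$, this forces $|N_\mathrm{pref}(S)| \ge |S|$ for all $S$ (the $n=1$ case is trivial). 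By Hall's theorem, $G_\mathrm{pref}$ admits a perfect matching $\pi$, and by construction this $\pi$ witnesses that $\mathbf{x}$ (equivalently $\mathbf{x}^*$, since we are at a fixed point) is envy-free. The only conceptually delicate step is the leakage bound, which crucially uses the slack $1/n^3$ being small enough that non-preferred edges cannot collectively violate Hall's condition; everything else is bookkeeping.
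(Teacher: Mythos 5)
Your proposal is correct and follows essentially the same argument as the paper's proof sketch: use optimality of $\mathcal{P}_1$ to show $c_{ij}^* = 0$ on non-preferred edges, use the $1/n^3$ slack in $\mathcal{P}_2$ to bound leakage flow, and invoke Hall's theorem. The only cosmetic difference is that the paper argues by contradiction (assume Hall fails, show total flow $< n$) while you verify Hall's condition directly; the key quantitative step — that the non-preferred leakage is at most $|S|/n^2 < 1$ — is identical in both.
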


\begin{proof}[Proof Sketch]
We sketch the proof here, and refer the reader to \citep{SICOMP:Filos-RatsikasH2023} for the (only slightly longer and more detailed) complete proof. By Hall's Theorem \citep{hall1935representatives}, it follows that unless $\mathbf{x}^*$ is an envy-free division indeed, there will be some set of agents $A \subseteq N$ such that $|N(A)| < |A|$, where $N(A)$ is the set of pieces preferred by agents in $A$. Linear program $\mathcal{P}_1$ dictates that in an optimal solution, $c_{ij}^* > 0$ only for pieces that agent $i$ prefers, i.e., $c_{ij}^* = 0$ for any $i \in A$ and $j \neq N(A)$. Using this in linear program $\mathcal{P}_2$, we get that $y_{ij}^* \in [0, 1/n^3]$, which allows the total flow out of $A$ to be bounded by a quantity strictly smaller than $|A|$ and consequently, the total flow in the network by a quantity strictly smaller than $n$, obtaining a contradiction.
\end{proof}

\noindent \Cref{lem:fixp-cake-cutting-lemma-1} reduces the problem of finding an envy-free division to that of finding a flow of total value $n$. In turn, by the definition of $r_j$, this is equivalent to stating that $r_j =0$ for all pieces $j$. \citet{SICOMP:Filos-RatsikasH2023} use the definition of $x_j^*$ to establish that in a fixed point of $F$, this reduces to proving the statement of \Cref{lem:fixp-cake-cutting-lemma-2} below. Indeed, the definition of $x_j^*$ establishes that in a fixed point (where $x_j^* = x_j)$, it must be the case that
\begin{equation*}\label{eq:fixp-cake-cutting-2}
x_j \cdot \sum_{k=1}^n r_k = r_j, \text{ for all pieces }j,
\end{equation*}
which implies that
\[x_j >0 \text{ and } r_j = 0 \Rightarrow \sum_{k=1}^n r_k = 0, \]
as desired.

\begin{lemma}[\citep{SICOMP:Filos-RatsikasH2023}]\label{lem:fixp-cake-cutting-lemma-2}
There exists some piece $j$ with $r_j^*=0$ and $x_j^* > 0$.
\end{lemma}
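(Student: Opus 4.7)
The plan is to extract a clean fixed point identity relating $x_j$ and $r_j$, then derive a contradiction by assuming every piece with positive mass is strictly undersaturated in flow. Concretely, substituting $x_j^* = x_j$ into equation \eqref{eq:fixp-cake-cutting} and cross-multiplying gives $x_j \cdot \sum_k r_k = r_j$ at the fixed point. This immediately yields the auxiliary observation that whenever $x_j = 0$ one has $r_j = 0$, which by the definition $r_j = \max\{0, 1 - \sum_i y_{ij}^*\}$ and the capacity constraint $\sum_i y_{ij}^* \leq 1$ of $\mathcal{P}_2$ forces $\sum_i y_{ij}^* = 1$ exactly.

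Next I would set $S := \{j : x_j > 0\}$, which is nonempty because $\mathbf{x}$ lies in the simplex $\Delta^{n-1}$, and assume for contradiction that $r_j > 0$ for every $j \in S$. For such $j$, strict positivity of $r_j$ gives $\sum_i y_{ij}^* < 1$, while for $j \notin S$ the preceding paragraph gives $\sum_i y_{ij}^* = 1$. Summing across all pieces,
\begin{equation*}
\sum_{i,j} y_{ij}^* = \sum_{j \in S}\sum_i y_{ij}^* + \sum_{j \notin S}\sum_i y_{ij}^* < |S| + (n - |S|) = n,
\end{equation*}
where the strict inequality uses $|S| \geq 1$ together with the fact that at least one of the terms indexed by $S$ is strictly less than $1$.

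The contradiction then comes directly from the agent-side constraints of $\mathcal{P}_2$: the linear program enforces $\sum_j y_{ij}^* \geq 1$ for every agent $i$, so summing over $i$ gives $\sum_{i,j} y_{ij}^* \geq n$, incompatible with the strict inequality above. Hence some $j \in S$ must satisfy $r_j = 0$, giving a piece with $x_j^* > 0$ and $r_j^* = 0$ as desired.

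The only subtle point I expect is ensuring that the output of the \linoptgate computing $\mathcal{P}_2$ genuinely satisfies all constraints of $\mathcal{P}_2$ at the fixed point in question; this reduces to verifying feasibility of $\mathcal{P}_2$ (which is precisely what the slack term $1/n^3$ in the capacity upper bound is designed to guarantee, since any capacity vector $\mathbf{c}_i \in \Delta^{n-1}$ admits the trivial flow $y_{ij} = c_{ij}$ respecting per-piece capacity up to the slack). Once feasibility is in hand, the lower-bound constraint $\sum_j y_{ij}^* \geq 1$ applies, and the argument goes through with no further calculation.
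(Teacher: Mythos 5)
Your proof hinges on the claim that $\mathcal{P}_2$ enforces $\sum_{j} y_{ij}^* \geq 1$ for every agent $i$, which you then sum over agents to conclude $\sum_{i,j}y_{ij}^* \geq n$. That lower-bound constraint is not actually part of $\mathcal{P}_2$: the printed line ``$\sum_{i=1}^n y_{ij}\geq 1$ for all agents $i\in N$'' is malformed (the index $i$ is simultaneously the summation variable and the quantifier, and $j$ is free), and the surrounding proofs make clear it is a typo for the standard \emph{supply} bound $\sum_{j=1}^n y_{ij}\leq 1$. Indeed, the proof of \Cref{lem:fixp-cake-cutting-lemma-1} explicitly considers the case where the total flow is strictly below $n$ (when Hall's condition fails), and the proof of the present lemma opens with ``the total flow in the network is less than $n$'' and ``there exists some agent $i$ such that the total flow out of $i$ is less than $1$'' --- both impossible if each agent were forced to push at least one unit. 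Moreover, a $\geq 1$ constraint would render $\mathcal{P}_2$ infeasible for some inputs (e.g.\ all $\mathbf{c}_i$ concentrated on a single piece), and your proposed witness of feasibility $y_{ij}=c_{ij}$ fails the per-piece bound $\sum_i y_{ij}\leq 1$ for exactly the same reason. With the correct $\leq$ constraint, the inequality $\sum_{i,j}y_{ij}^*\geq n$ has no basis and the contradiction evaporates.

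A further symptom is that your argument never uses hungriness, and the lemma is genuinely false without it. Take $n=2$ with both agents strictly preferring piece $1$ at \emph{every} division. At the fixed point one has $\mathbf{c}_1=\mathbf{c}_2=(1,0)$; the maximal flow for $\mathcal{P}_2$ (e.g.\ $y_{11}=7/8,\,y_{12}=y_{21}=y_{22}=1/8$) has total value $5/4<2$, giving $r_1^*=0$, $r_2^*=3/4$; and \eqref{eq:fixp-cake-cutting} then forces $\mathbf{x}^*=(0,1)$, so the only piece with positive mass has $r^*>0$. The paper's argument uses precisely the two ingredients you omit: from flow $<n$ it extracts an agent $i$ with outflow $<1$; since $\sum_{k\in N(i)}c_{ik}^*=1$, agent $i$ has slack into some preferred piece; if every preferred piece had residual capacity the flow could be augmented, contradicting optimality of $\mathbf{y}^*$; and hungriness guarantees the resulting saturated preferred piece is non-empty. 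Your initial observation that $x_j=0\Rightarrow r_j=0$ is a correct and clean consequence of the fixed-point identity, but on its own it cannot close the argument.
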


\begin{proof}
Since $\mathbf{x}$ is a valid division, there exists some $\ell$ such that $x_\ell^* > 0$. If $r_\ell^* =0$ we are done, so assume that $r_\ell^* >0$. By the definition of $r_\ell^*$, that means that piece $\ell$ has positive residual capacity, which implies that the total flow in the network is less than $n$. Hence, there exists some agent $i \in N$, such that the total flow out of $i$ is less than $1$, i.e., $\sum_{k=1}^n y_{ik}^* < 1$. The next step is to establish that there exists some piece $\ell' \in N(i)$, where $N(i)$ is the set of preferred pieces for agent $i$, with $r_{\ell'}^* = 0$, which will conclude the proof, since $x_{\ell'}>0$ follows from the hungriness condition. 
Assume by contradiction that for all $k \in N(i)$, it holds that $r_k^* >0$. Note that $\sum_{k=1}^N y_{ik}^* < 0$ as established before, and that $\sum_{k \in N(i)}c_{ik}^*=1$, which means that it is possible to send more flow from $i$ to $N(i)$. Since each piece in $N(i)$ is preferred by agent $i$, this contradicts the optimality of $\mathbf{y}^*$ as a solution to linear program $\mathcal{P}_2$. 
\end{proof}

\noindent We now move on to the proof of \Cref{thm:cake-cutting}. The proof mimics that presented above, except for one crucial part: we can no longer use \Cref{eq:fixp-cake-cutting} to retrieve the envy-free division as a fixed point of the function $F$. This is because \citet{SICOMP:Filos-RatsikasH2023} only need to argue that $F$ can be represented by an arithmetic circuit that uses OPT-gates for FIXP, whereas we need to argue that it is possible to represent $F$ by a \emph{\linear} arithmetic circuit using \linoptgates. In particular, we cannot divide by $1+\sum_{k=1}^n r_k$ in a \linear arithmetic circuit. \medskip

\noindent It turns out that there is an easy ``fix'' for this. Consider the following linear program $\mathcal{P}_0$.

\begin{center}\underline{Linear Program $\mathcal{P}_0$}\end{center}
\begin{equation*}
\begin{aligned}
\mbox{maximize}\quad & \sum_{k=1}^n r_k\cdot x_k\\
\mbox{subject to}\quad & \sum_{k=1}^n x_k = 1\\
& x_k \geq 0 \text{ for any piece } k
\end{aligned}
\end{equation*}

\noindent The division $\mathbf{x}^*$ will be obtained as the optimal solution to a set of linear programs of the form $\mathcal{P}_0$, one for each piece. We argue that a fixed point of the function $F$ is an envy-free division below. The PPAD-membership of the problem then follows by restricting the functions $u_{ij}(\mathbf{x})$ to assume an appropriate form that allows $F$ to be computable by a \linear arithmetic circuit.

\begin{proof}[Proof of \Cref{thm:cake-cutting}]
Consider a fixed point of the function $F$ and let $\mathbf{x}^*$ be the corresponding division, which is an optimal solution of linear program $\mathcal{P}_0$ by design. Note that in such an optimal solution, positive weight is only placed on variables $x_\ell$ for which $r_\ell$ is maximum, i.e., $x_\ell^* > 0 \Rightarrow r_\ell^* = \max_k r_k^*$. Hence, if there exists some piece $j$ such that $x_j^* > 0$ and $r_j^* = 0$, that means that $\max_k r_k^* = 0$. Since $r_k^* \geq 0$ for all $k$, this is equivalent to $r_k^*= 0$ for all $k$. Thus, we have established exactly what \Cref{eq:fixp-cake-cutting} does without dividing in the circuit. 

$r_k^*$ can obviously be computed by a \linear arithmetic circuit. Linear programs $\mathcal{P}_0$, $\mathcal{P}_1$ and $\mathcal{P}_2$ use only linear constraints with the \circparams appearing only on the right-hand side of the constraints. The \circparams do not appear in the objective function of $\mathcal{P}_2$, whereas the objective function of $\mathcal{P}_0$ has a linear subgradient. For $\mathcal{P}_1$, the subgradient of the objective function is a linear function in $u_{ij}(\mathbf{x})$, and since $u_{ij}(\mathbf{x})$ is given by a \pseudog, so is the subgradient of the objective function. This establishes that $F$ can be represented by a \linear arithmetic circuit, which concludes the proof.  
\end{proof}

\subsection{Rental Harmony} \label{sec:rental-harmony}

We define the \emph{rental harmony} problem, notably studied by \citet{AMM:Su1999}. 

\begin{definition}[Rental harmony]\label{def:rental-harmony}
In the rental harmony problem, the total rent of a house has to be divided among the $n$ rooms of the house, in a way that makes it possible to assign the rooms to $n$ tenants in an envy-free manner, i.e., no tenant would prefer to have someone else's room (with the respective rent) to their own. Formally, we denote the rent by the unit interval $[0,1]$. Each tenant has a valuation function $u_{ij}:\Delta^{n-1}\rightarrow \mathbb{R}_{\geq 0}$, for each room $j$, assigning a real number to each division of the rent. Given a division $\mathbf{x}$, we will say that tenant $i$ \emph{prefers} the $j$-th room if $u_{ij}(\mathbf{x}) \geq u_{ij'}(\mathbf{x})$ for any room $j'$. A division $\mathbf{x}$ is envy-free if there exists a permutation $\pi$ of $\{1,\ldots, n\}$ such that for every $i \in \{1,\ldots,n\}$, tenant $i$ prefers room $\pi(i)$.
\end{definition}

\paragraph{Known results for existence.} The existence of a solution to the rental harmony problem was proven by \citet{AMM:Su1999}, via an interesting adaptation of the idea of Simmons for proving the existence of envy-free cake cutting solutions, one which employs a ``dual Sperner labelling''. Similarly to the proofs for cake-cutting, the proof also appeals to limits of approximate solutions. To the best of our knowledge, computational complexity results about the general version of the problem did not exist before our work. \medskip

\noindent One can see the similarities between \Cref{def:rental-harmony} above and \Cref{def:ef-cake-cutting} of \Cref{sec:ef-cake-cutting}. In fact, those are even more clear if we consider the following problem, which we refer to as \emph{envy-free chore division}.

\begin{definition}[(Contiguous) envy-free chore division]\label{def:ef-chore-division}
Let the interval $[0,1]$ be called a \emph{chore}, which is to be divided into $n$ subintervals (pieces) using $n-1$ cuts. Let $\mathbf{x} \in \Delta^{n-1}$ denote a division of the chore, i.e., each division is a point in the simplex, and let $x_j$ be the $j$'th coordinate (or ``the $j$'th piece''). There is a set $N$ of $n$ agents, and for each piece $j$, each agent $i \in N$ has a \emph{valuation function} $u_{ij}: \Delta^{n-1} \rightarrow \mathbb{R}_{\geq 0}$, assigning a real number to a division of the cake. Given a division $\mathbf{x}$, we will say that agent $i \in N$ \emph{prefers} the $j$-th piece if $u_{ij}(\mathbf{x}) \geq u_{ij'}(\mathbf{x})$ for any piece $j'$. A division $\mathbf{x}$ is envy-free if there exists a permutation $\pi$ of $\{1,\ldots,n\}$ such that for every $i \in N$, agent $i$ prefers piece $\pi(i)$. 
\end{definition}

\noindent Looking at \Cref{def:ef-chore-division} above, one can observe two things:
\begin{enumerate}[label=(\alph*)]
    \item \label{enum:ef-1} It defines the same problem as the one in \Cref{def:rental-harmony}, by simply substituting the term ``room'' with ``piece'' (and the term ``tenant'' with ``agent'').
    \item \label{enum:ef-2} It defines the same problem as the one in \Cref{def:ef-cake-cutting} in \Cref{sec:ef-cake-cutting}, by simply substituting the term ``chore'' with ``cake''.   
\end{enumerate}

\noindent Observation~\ref{enum:ef-1} is because we are using a very general definition of envy-free chore-division (as we used a very general definition of envy-free cake-cutting in \Cref{sec:ef-cake-cutting}), one in which there is a different valuation function for each piece. In that generality, there is no difference between a piece and a room. For restricted valuation functions, the two problems could be different. We remark again that for proving membership results, considering these very general versions makes the results stronger.

Now looking at Observation~\ref{enum:ef-2}, it seems as if the envy-free chore division problem and the envy-free cake cutting problem are the same. Is that really the case? The difference lies in the \emph{sufficiency condition}. Recall that for cake-cutting, the sufficiency condition was hungriness, i.e., that agents always prefer a non-empty piece to any empty piece. For chore division/rental harmony, we will have the exact opposite.

\paragraph{Sufficiency condition: Miserly agents.} We will say that an agent $i$ is a \emph{miser}, if she prefers an empty piece of the chore (respectively a room with zero rent) to a non-empty piece (respectively a room with non-zero rent). An instance of the chore division/rental harmony problem satisfies the miserly agents condition if all of the agents are misers.\medskip

\noindent Given the above, in our setting envy-free chore division (\Cref{def:ef-chore-division}) and rental harmony (\Cref{def:rental-harmony}) are equivalent, so we will use the terminology of envy-free chore division in the proof that we will develop, mainly for consistency with \Cref{sec:ef-cake-cutting}. Our main theorem of the section is the following, which is reminiscent of \Cref{thm:cake-cutting} for envy-free cake cutting.

\begin{theorem}\label{thm:chore-division}
Computing an envy-free division of the chore (i.e., a rental harmony partition) when the agents' valuation functions are given by \linear arithmetic circuits in PPAD.
\end{theorem}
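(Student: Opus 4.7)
The plan is to reproduce the template of the envy-free cake cutting proof (Theorem 8.1), constructing a function $F \colon D \to D$ on $D = \Delta^{n-1} \times (\Delta^{n-1})^n \times ([0,1]^n)^n$ whose three groups of coordinates are a candidate rent division $\mathbf{x}$, bipartite-graph capacities $\mathbf{c}_i$ for each tenant $i$, and a flow $\mathbf{y}$ in the associated max-flow network. The capacity LP $\mathcal{P}_1$ and the flow LP $\mathcal{P}_2$ from Section~\ref{sec:ef-cake-cutting} can be reused verbatim, since they encode only the preferred-room bipartite graph and a max-flow subroutine, neither of which depends on whether agents are hungry or miserly; both are implementable by \linoptgates whenever each $u_{ij}(\mathbf{x})$ is given by a \pseudog. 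The essential modification concerns the LP $\mathcal{P}_0$ that updates $\mathbf{x}$: for hungry cake-cutters it maximized $\sum_k r_k x_k$ (pushing rent upward on underdemanded pieces), whereas for miserly tenants the correct dual behaviour is to push rent downward on underdemanded pieces, so I will instead \emph{minimize} $\sum_k r_k x_k$ over the simplex. The resulting LP still has linear objective, nonempty bounded feasible region, and subgradient $(r_1,\ldots,r_n)$ computable by a \linear arithmetic circuit from $\mathbf{y}$, so it is implementable by a \linoptgate; hence $F$ is realisable by a piecewise-linear circuit with \linoptgates and computing a fixed point lies in PPAD.

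The correctness analysis at a fixed point $(\mathbf{x}^*,\mathbf{c}^*,\mathbf{y}^*)$ splits on the set of empty pieces $E := \{j : x_j^* = 0\}$. The min-version of $\mathcal{P}_0$ imposes $x_j^* > 0 \Rightarrow r_j^* = \min_k r_k^*$ at the fixed point, and an envy-free partition follows from Hall's theorem (exactly as in the cake-cutting proof) once we show that all $r_k^* = 0$. If $E = \emptyset$ then the miserly hypothesis is vacuous, preferred rooms reduce to the $u_{ij}$-argmax, and the same augmenting-path argument used in Lemma 8.2 of Section~\ref{sec:ef-cake-cutting} yields some preferred room $\ell' \in N(i)$ with $r_{\ell'}^* = 0$ for any tenant $i$ with slack source flow, forcing $\min_k r_k^* = 0$ and hence $r_k^* = 0$ everywhere.

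The main obstacle, and the step that genuinely exploits the miserly condition, is to rule out $E \neq \emptyset$ at a fixed point. Suppose $E \neq \emptyset$: the miserly condition makes every empty piece strictly preferred to every non-empty one, so $N(i) \subseteq E$ for every tenant, and optimality of $\mathcal{P}_1$ gives $c_{ij}^* = 0$ for all $i$ and all $j \in E^c$. The induced edge-capacity bound $c_{ij}^* + 1/n^3 \leq 1/n^3$ into every non-empty room then yields total flow into each $j \in E^c$ of at most $1/n^2$, hence $r_j^* \geq 1 - 1/n^2 > 0$ for every $j \in E^c$. On the other hand, $\sum_i \sum_{j \in E} c_{ij}^* = n$, so by pigeonhole some $j_0 \in E$ has $\sum_i c_{i j_0}^* \geq n/|E| \geq 1$; a min-cut analysis of the $\mathcal{P}_2$ network then shows that placing $j_0$ on the sink side of any cut costs at least $1 + 1/n^2$ (total capacity of middle edges into $j_0$) while placing it on the source side costs only $1$ (its sink edge), so every min cut puts $j_0$ on the source side and $j_0$'s sink edge is saturated in every optimum, giving $r_{j_0}^* = 0$ and hence $\min_k r_k^* = 0$. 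The fixed-point condition from the min version of $\mathcal{P}_0$ now forces $r_j^* = 0$ for every $j \in E^c$, contradicting $r_j^* \geq 1 - 1/n^2 > 0$. The technical care needed to handle the interaction between the cut placement of $j_0$ and of individual tenants -- who themselves can be moved to the sink side at source-edge cost $1$ -- is the most delicate part of the argument, but the strict gap between $1$ and $1 + 1/n^2$ is what makes the comparison go through and rules out $E \neq \emptyset$ at any fixed point.
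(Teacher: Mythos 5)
Your construction (the function $F$, the domain $D$, the LPs $\mathcal{P}_1$, $\mathcal{P}_2$, and the crucial switch from $\max$ to $\min$ in $\mathcal{P}_0$) coincides exactly with the paper's proof, and your $E=\emptyset$ case -- using the augmenting-path argument together with the observation that when all $x_j^*>0$ the fixed-point condition $x_j^*>0 \Rightarrow r_j^*=\min_k r_k^*$ forces all $r_j^*$ to be equal -- is sound. The divergence, and the gap, is in the $E\neq\emptyset$ case.

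To rule out $E\neq\emptyset$, you establish (as the paper does) that $r_j^*\geq 1-1/n^2$ for all $j\in E^c$, but then you pivot to a min-cut argument: pigeonhole gives $j_0\in E$ with $\sum_i c_{ij_0}^*\geq 1$, and you want to conclude that $j_0$'s sink edge is saturated in \emph{every} optimum of $\mathcal{P}_2$, hence $r_{j_0}^*=0$, hence $\min_k r_k^*=0$, hence $r_j^*=0$ for $j\in E^c$, contradiction. The claim "every min cut places $j_0$ on the source side" is where the gap sits. Your comparison ($1$ for the sink edge versus $\geq 1+1/n^2$ for incoming middle edges) only goes through when \emph{all} tenants are on the source side of the cut; once some tenant $i$ with $c_{ij_0}^*>0$ sits on the sink side, the edge $(i,j_0)$ no longer crosses the cut and the $\geq 1+1/n^2$ lower bound fails. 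You acknowledge this interaction as "the most delicate part of the argument," but the strict gap of $1/n^2$ alone does not by itself dominate the source-edge cost of $1$ that keeps a tenant on the sink side, and the accounting you sketch is not completed. (The claim may well be provable with a careful exchange argument, but as written it is an assertion, not a proof.)

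The paper avoids this entirely with a \emph{global} flow-value bound rather than a \emph{per-piece} saturation claim: once $r_j^*\geq 1-1/n^2$ holds for a single non-empty piece $j$, the fixed-point condition $x_j^*>0 \Rightarrow r_j^*=\min_k r_k^*$ immediately lifts this bound to \emph{all} $r_k^*$, not just those with $x_k^*>0$. Summing, the total flow is at most $n\cdot(1/n^2)=1/n<1$; yet any single tenant $i$ can route a full unit of flow since $\sum_j c_{ij}^*=1$, so $\mathcal{P}_2$ admits a feasible solution of value $\geq 1$. This contradicts optimality with no structural analysis of min cuts whatsoever. You should replace your min-cut step with this simpler argument; the rest of your proposal is correct and matches the paper.
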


\noindent As in the case of \Cref{sec:ef-cake-cutting}, using the machinery developed in \Cref{sec:implicit}, we can also capture simple interesting cases where the inputs are given as the integrals of the density functions rather than the utilities themselves. \medskip

\noindent Before we proceed with the proof, we make the following remark.
\begin{remark}
One may be inclined to believe that there would be an easy reduction from envy-free chore division to envy-free cake cutting, e.g., by setting $u^\text{cake}_{ij}(\mathbf{x}) = - u^\text{chore}_{ij}(\mathbf{x})$ for all agents $i$ and pieces $j$. Indeed, this would turn a hungry agent into a miser, thus establishing the correct sufficiency condition. However, letting $\mathbf{x}^{\text{cake}} = \mathbf{x}^{\text{chore}}$ would not result in an envy-free chore division, as shown by the simple following example with $2$ agents. Let $x^\text{cake}_1=[0,1/2]$ and $x^{\text{cake}}_2=(1/2,1]$ and let $u^{\text{cake}}_{11}(x^{\text{cake}}_1,x^{\text{cake}}_2) = 1-\epsilon$ and $u^{\text{cake}}_{12}(x^{\text{cake}}_1,x^{\text{cake}}_2) = \epsilon$ for some $\epsilon < 1$. Let agent 2 be indifferent between any allocation of the cake. We have that $u^{\text{chore}}_{11}(x^{\text{chore}}_1,x^{\text{chore}}_2 )= -u^{\text{cake}}_{11}(x^{\text{cake}}_1,x^{\text{cake}}_2) = -1+\epsilon$ and $u^{\text{chore}}_{12}(x^{\text{chore}}_1,x^{\text{chore}}_2) = -u^{\text{cake}}_{12}(x^{\text{cake}}_1,x^{\text{cake}}_2) =-\epsilon$. Agent 1 obviously prefers piece $2$ rather than piece $1$, and this is not an envy-free chore division. 

In this example it is easy to remedy that, by labelling the pieces oppositely, i.e., by having $x_1^\text{cake} = x_2^\text{chore}$ and vice-versa. What happens however in more complex scenarios, with more agents and different valuations? Finding the appropriate division would essentially end up constructing the dual Sperner labelling used by \citet{AMM:Su1999}. Our results in this section provide an alternative way of finding an envy-free chore division, which is very much in line with the approach that we used in \Cref{sec:ef-cake-cutting}. In other words, our technique based on the \linoptgate provides a \emph{unified proof of existence} of solutions for both envy-free cake cutting and rental harmony.
\end{remark}

\subsection{The proof of \crtCref{thm:chore-division}}\label{sec:chore-division-proof}

Exactly in the same manner as in \Cref{sec:ef-cake-cutting}, we will consider a bipartite graph with agents $i$ on one side and pieces $j$ on the other side, with edges $(i,j)$ indicating that agent $i$ prefers pieces $j$. An envy-free chore division will correspond to a perfect matching in the graph, which we will again compute via a maximum flow argument in a fixed point.

Again, we construct a function $F: D \rightarrow D$ and recover envy-free chore divisions from its fixed points. In particular the domain of $F$ will be $D=\Delta^{n-1} \times \left(\Delta^{n-1}\right)^n \times ([0,1]^n)^n$. In $D$:
\begin{itemize}
    \item[-] a point $\mathbf{x} \in \Delta^{n-1}$ will be the division of the chore,
    \item[-] a point $\mathbf{c}_{i} \in \Delta^{n-1}$ represents the capacities of the edges from agent $i \in N$,
    \item[-] a point $\mathbf{y}_i$ in $[0,1]^n$ represents the flow along the edges from agent $i \in N$.
\end{itemize}
In other words, the input of $F$ will be a vector $(\mathbf{x}, \mathbf{c}_1,\ldots,\mathbf{c}_n,\mathbf{y}_1,\ldots,\mathbf{y}_n)$ and let $(\mathbf{x}^{*}, \mathbf{c}_1^{*},\ldots,\mathbf{c}_n^{*},\mathbf{y}_1^{*},\ldots,\mathbf{y}_n^{*})$ denote the output. As in \Cref{sec:ef-cake-cutting}, the capacities and the flow values will be obtained as optimal solutions to the linear programs $\mathcal{P}_1$ and $\mathcal{P}_2$ of \Cref{fig:cake-cutting-P1-and-P2} respectively. 

Again, we will use $r_k = \max\{0,1-\sum_{i=1}^n y_{ik}\}$ to denote the flow excess incoming into piece $k$. \Cref{lem:fixp-cake-cutting-lemma-1} holds verbatim for our setting here as well, which establishes that it suffices to argue that in our fixed point, $r_j^*  = 0$ for all pieces $j$. Now, the chore division $\mathbf{x}^*$ will be an optimal solution to the following linear program, which is a straightforward adaptation of linear program $\mathcal{P}_0$ presented in \Cref{sec:ef-cake-cutting}.

\begin{center}\underline{Linear Program $\mathcal{P}^c_0$}\end{center}
\begin{equation*}
\begin{aligned}
\mbox{minimize}\quad & \sum_{k=1}^n r_k\cdot x_k\\
\mbox{subject to}\quad & \sum_{k=1}^n x_k = 1\\
& x_k \geq 0 \text{ for any piece } k
\end{aligned}
\end{equation*}

\noindent The following lemma is analogous to \Cref{lem:fixp-cake-cutting-lemma-2}, although the proof is somewhat different.

\begin{lemma}\label{lem:chore-division-lemma-2}
In a fixed point of $F$, $r_j^{*}=0$ for all pieces $j$.
\end{lemma}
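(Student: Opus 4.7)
The plan is to mimic the two-step logic of \cref{lem:fixp-cake-cutting-lemma-1,lem:fixp-cake-cutting-lemma-2} from the cake cutting proof, but to replace the use of hungriness (which forced preferred pieces to be non-empty) with the ``dual'' use of the miserly condition (which will force preferred pieces to be empty whenever any empty piece exists). The analog of \Cref{lem:fixp-cake-cutting-lemma-1} holds verbatim, so it suffices to rule out the case that some $r_\ell^* > 0$, which is equivalent to the total flow $\sum_{i,k} y_{ik}^*$ being strictly less than $n$. I will argue this by contradiction.

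First, I would assume for contradiction that $r_\ell^* > 0$ for some piece $\ell$, so that the total flow is strictly less than $n$, and pick an agent $i$ with $\sum_k y_{ik}^* < 1$. Since $\sum_k c_{ik}^* = 1$, there exists some $k$ with $y_{ik}^* < c_{ik}^*$, and in particular $c_{ik}^* > 0$, so $k \in N(i)$ by the definition of $\mathcal{P}_1$. Max-flow optimality of $\mathbf{y}^*$ in $\mathcal{P}_2$ prevents us from augmenting $y_{ik}$, which forces $r_k^* = 0$ (piece $k$ must already be saturated). This is exactly the step used in \Cref{lem:fixp-cake-cutting-lemma-2}, and it does not depend on hungriness or miserliness.

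The crucial new step is the use of $\mathcal{P}_0^c$, which \emph{minimizes} $\sum_k r_k x_k$ over the simplex rather than maximizing as in $\mathcal{P}_0$. From the previous step $\min_m r_m^* \leq r_k^* = 0$, and $r_m^* \geq 0$ by definition, so $\min_m r_m^* = 0$. Optimality of $\mathbf{x}^*$ in $\mathcal{P}_0^c$ then implies that $x_m^* > 0 \Rightarrow r_m^* = 0$; contrapositively, every piece with $r_\ell^* > 0$ must satisfy $x_\ell^* = 0$. In particular, this establishes that at least one empty piece exists, which is the foothold needed to invoke the miserly condition.

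Now the miserly sufficiency condition kicks in: whenever any empty piece exists, each miser $i'$ strictly prefers it to every non-empty piece, so $N(i') \subseteq \{j : x_j^* = 0\}$, and $\mathcal{P}_1$ therefore yields $c_{i'j}^* = 0$ for every $j$ with $x_j^* > 0$. Since $\sum_k x_k^* = 1$, some non-empty piece $k'$ exists; but then $\sum_{i'} y_{i'k'}^* \leq \sum_{i'} c_{i'k'}^* = 0$, which gives $r_{k'}^* = 1 > 0$. This contradicts the conclusion of the previous paragraph that $x_{k'}^* > 0 \Rightarrow r_{k'}^* = 0$, completing the proof. I expect the main subtlety to be correctly flipping the direction of the argument when moving from $\mathcal{P}_0$ (max) to $\mathcal{P}_0^c$ (min), and in particular ensuring that the empty pieces extracted from $\mathcal{P}_0^c$ are used to route all capacity away from non-empty pieces via the miserly condition; the rest of the proof is a direct adaptation of the cake cutting argument.
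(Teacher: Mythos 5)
Your proof is correct, and it takes a genuinely different route from the paper's. The paper first establishes the structural fact that $x_j^* > 0$ for \emph{every} piece $j$, by a global counting argument: if some piece were empty, miserliness forces $c_{ij}^*=0$ for every non-empty piece $j$, hence $r_j^* \geq 1-1/n^2$ for non-empty $j$; combined with $x_j^* > 0 \Rightarrow r_j^* = \min_k r_k^*$ this propagates to $r_\ell^* \geq 1-1/n^2$ for \emph{all} $\ell$, bounding the total flow by $1/n$, which contradicts the fact that any single agent (whose capacity is all concentrated on the empty pieces) could by itself route one unit. Once full support is established, all $r_j^*$ coincide and a short augmenting-path argument forces the common value to be $0$. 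Your decomposition instead proceeds by direct contradiction from $r_\ell^* > 0$: you first import the cake-cutting step verbatim (find an agent with underflow, find a preferred, under-capacitated edge, conclude by $\mathcal{P}_2$-optimality that its target piece $k$ is saturated, i.e.\ $r_k^*=0$), then use $\min_m r_m^* = 0$ together with $\mathcal{P}_0^c$-optimality to conclude that every piece with positive residual is empty, and only then invoke miserliness to get a contradiction on a non-empty piece. Both proofs rely on the same ingredients; the paper's buys the extra structural fact that $\mathbf{x}^*$ is fully supported, while yours preserves a clean one-to-one correspondence with the steps of the cake-cutting \Cref{lem:fixp-cake-cutting-lemma-2}, which makes the claimed unification of the two existence proofs more visible.

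One small slip in your final step: you bound $\sum_{i'} y_{i'k'}^* \leq \sum_{i'} c_{i'k'}^* = 0$, but the constraint in $\mathcal{P}_2$ is $y_{ij} \leq c_{ij} + 1/n^3$, so the correct bound is $\sum_{i'} y_{i'k'}^* \leq n/n^3 = 1/n^2$, giving $r_{k'}^* \geq 1-1/n^2$ rather than $r_{k'}^*=1$. The contradiction with $x_{k'}^* > 0 \Rightarrow r_{k'}^* = 0$ still holds (since $1-1/n^2 > 0$ for $n \geq 2$), so this is a cosmetic inaccuracy rather than a gap, but the slack term cannot simply be dropped from the program — the paper keeps it to maintain compatibility with the Slater-constrained \fixpoptgate.
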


\begin{proof}
Consider a fixed point $(\mathbf{x}^{*}, \mathbf{c}_1^{*},\ldots,\mathbf{c}_n^{*},\mathbf{y}_1^{*},\ldots,\mathbf{y}_n^{*})$ of $F$. For any piece $j$, $x_j^* > 0 \Rightarrow r_j^* = \min_{k}r_k^*$, which follows from the fact that $x_j^*$ is an optimal solution to linear program $\mathcal{P}^c_0$, and hence only places positive weight to pieces with minimum $r_k^*$. 

We will first argue that $x_{j}^* > 0$ for all pieces $j$. Assume first that $x_{j_0}^* = 0$ for some piece $j_0$ and consider any piece $j$ such that $x_j^* > 0$; note that at least one such $j$ must exist by the fact that $\mathbf{x}^*$ is a valid division of the chore. By the miserly agents sufficiency condition, we have that any agent $i \in N$ prefers $j_0$ to $j$, i.e., $u_{ij_{0}}(\mathbf{x}^*) > u_{ij}(\mathbf{x}^*)$. Since $c_{ij}^*$ is an optimal solution to linear program $\mathcal{P}_1$, it follows that $c_{ij}^*=0$. In turn, by the corresponding constraint of linear program $\mathcal{P}_2$, it follows that $y_{ij}^* \leq 1/n^3$, and that $\sum_{i=1}^n y_{ij}^* \leq 1/n^2$. From this, we conclude that $r_j^* \geq 1-1/n^2$. 

Since $x_j^* >0$, by the discussion above this implies that $\min_k r_k^* \geq 1-1/n^2$, which of course implies that $r_\ell^* \geq 1-1/n^2$ for all pieces $\ell$. It follows that $\sum_{i=1}^n y_{i\ell}^* \leq 1/n^2$ for all pieces $\ell$, and by summing over $\ell$ we obtain that $\sum_{\ell=1}^n \sum_{i=1}^n y_{ij}^* < 1/n$. Since $\mathbf{c}_i$ is a feasible solution to linear program $\mathcal{P}_1$, there exists a set of pieces $A_\ell$ with $\sum_{i \in A_\ell}c_{ij}^*=1$ for at least one agent $i \in N$ (and those are precisely the pieces with $x_\ell^* = 0$). That implies that there is a solution to linear program $\mathcal{P}_2$ for which $\sum_{\ell=1}^n\sum_{i=1}^n y_{ij}\geq 1$. This contradicts the fact that $\mathbf{y}^*$ is an optimal solution to linear program $\mathcal{P}_2$.

Now since $x_j^*> 0$ for all pieces $j$, by the discussion in the first paragraph of the proof it follows that $r_j^* = r_{j'}^*$ for all pieces $j$ and $j'$. Let $r^*$ denote the value of $r_j$ for any $j$. We will argue that $r^*=0$, which will conclude the proof. Assume by contradiction that $r^* > 0$, or, equivalently, that $\sum_{i=1}^n y_{ij}^* < 1$ for piece $j$. This implies that in linear program $\mathcal{P}_2$ the corresponding constraint for piece $j$ is not be tight. In turn, that means that there is a feasible flow $\mathbf{y}$ of value larger than that of $\mathbf{y}^*$, contradicting the fact that $\mathbf{y}^*$ is an optimal solution to linear program $\mathcal{P}_2$.
\end{proof}

\noindent We can now conclude the proof of \Cref{thm:chore-division}.

\begin{proof}[Proof of \Cref{thm:chore-division}]
\Cref{lem:chore-division-lemma-2} establishes that a fixed point of $F$ corresponds to an envy-free chore division. It remains to show that $F$ can be computed by a \linear arithmetic circuit containing \linoptgates. This is virtually identical to the corresponding argument in the proof of \Cref{thm:cake-cutting}. 
\end{proof}

\subsubsection{FIXP-membership for (unrestricted) rental harmony}

In the envy-free cake cutting application, \Cref{thm:cake-cutting} provides a crisper complexity result than that of \Cref{thm:FIXPpaper-cake-cutting}, for the case when the valuation functions are given by \pseudogs, which admits envy-free divisions in rational numbers. For rental harmony/chore division, the situation is similar, except, as we mentioned earlier, a FIXP-membership result (or any other kind of complexity result for that matter) was not known before our work. It is almost straightforward to generalize our proof developed in \Cref{sec:chore-division-proof} to obtain a theorem analogous to \Cref{thm:FIXPpaper-cake-cutting}, for the case of fairly dividing a chore as well, when the valuation functions are represented by general arithmetic circuits (not necessarily \linear).

\begin{theorem}\label{thm:rental-harmony-fixp}
Computing an envy-free division of the chore is in FIXP.
\end{theorem}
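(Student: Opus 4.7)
The plan is to mirror the PPAD-membership proof of \Cref{thm:chore-division} verbatim, but replace every \linoptgate by the \fixpoptgate of \citet{SICOMP:Filos-RatsikasH2023}, and allow the valuation functions $u_{ij}(\mathbf{x})$ to be given by general arithmetic circuits (over the basis $\{+,-,\times,\div,\max,\min\}$) instead of by \pseudogs. Concretely, I would reuse the same domain $D = \Delta^{n-1}\times(\Delta^{n-1})^n\times([0,1]^n)^n$, the same function $F\colon D\to D$ constructed via the linear programs $\mathcal{P}_0^c$ (for the chore division $\mathbf{x}^*$), $\mathcal{P}_1$ (for the capacities $\mathbf{c}_i^*$, one per agent), and $\mathcal{P}_2$ (for the flow $\mathbf{y}^*$), and the same excess-flow quantity $r_k=\max\{0,1-\sum_i y_{ik}\}$.

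Correctness at a fixed point is already established in \Cref{sec:chore-division-proof}: \Cref{lem:fixp-cake-cutting-lemma-1} reduces envy-freeness to the total flow being $n$, and \Cref{lem:chore-division-lemma-2} shows that at any fixed point of $F$ all excesses $r_j^*$ vanish (the argument uses only the miserly-agents sufficiency condition and the optimality of $\mathbf{c}^*$, $\mathbf{y}^*$, $\mathbf{x}^*$ with respect to the respective programs, without any assumption on how $u_{ij}$ is represented). So I just have to argue that $F$ can be implemented by an arithmetic circuit using \fixpoptgates. Each of $\mathcal{P}_0^c$, $\mathcal{P}_1$, $\mathcal{P}_2$ has a linear objective and linear constraints; their feasible regions are nonempty convex polytopes; the constraint matrices are fixed and the right-hand sides (and in the case of $\mathcal{P}_1$, the objective coefficients $u_{ij}(\mathbf{x})$) are computed by an arithmetic circuit from the inputs to the gate. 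The subgradients of the objectives are therefore given by arithmetic circuits. This places each program squarely in the form that the \fixpoptgate of \citet{SICOMP:Filos-RatsikasH2023} handles.

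The one technical point to check is the Slater condition required by the \fixpoptgate: the feasible region must contain a strictly feasible point (with all non-trivial inequality constraints strict). This is precisely why the $1/n^3$ slack appears in the capacity constraints of $\mathcal{P}_2$ in \Cref{fig:cake-cutting-P1-and-P2}; a strictly feasible flow is obtained by taking $y_{ij}$ slightly below $c_{ij}+1/n^3$ and distributing it so that the source and sink constraints are strict. For $\mathcal{P}_1$ the interior of the simplex is strictly feasible, and for $\mathcal{P}_0^c$ the uniform distribution is strictly feasible. The constant $1/n^3$ also plays its role in the proof of \Cref{lem:fixp-cake-cutting-lemma-1}, so nothing has to change there.

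Assembling the circuit, $F$ computes $\mathbf{c}_i^*$ via one \fixpoptgate per agent (solving $\mathcal{P}_1$, whose objective subgradient $(u_{i1}(\mathbf{x}),\ldots,u_{in}(\mathbf{x}))$ is an arithmetic circuit in $\mathbf{x}$), then computes $\mathbf{y}^*$ via a \fixpoptgate solving $\mathcal{P}_2$, then the excesses $r_k$ by $\max$ and $-$ gates, and finally $\mathbf{x}^*$ via a \fixpoptgate solving $\mathcal{P}_0^c$. By the reduction of \citet{SICOMP:Filos-RatsikasH2023}, the overall construction is an arithmetic circuit whose fixed points, by \Cref{lem:chore-division-lemma-2} and \Cref{lem:fixp-cake-cutting-lemma-1}, are envy-free chore divisions; hence the problem lies in \FIXP. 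The main obstacle, and the only place the argument requires care rather than direct transcription, is verifying the Slater/strict-feasibility hypothesis for the \fixpoptgate on all three programs across the entire domain $D$; everything else is a direct specialization of the PPAD proof already developed in \Cref{sec:chore-division-proof}.
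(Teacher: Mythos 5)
Your proposal matches the paper's own proof: the paper likewise reuses the function $F$ and the three linear programs $\mathcal{P}_0^c$, $\mathcal{P}_1$, $\mathcal{P}_2$ from the PPAD proof, observes that the correctness arguments carry over unchanged, and notes that the programs can be computed by the \fixpoptgate of \citet{SICOMP:Filos-RatsikasH2023} once the valuations $u_{ij}(\mathbf{x})$ are given by general arithmetic circuits. Your explicit check of the Slater condition (and that the $1/n^3$ slack in $\mathcal{P}_2$ provides it) is a slightly more careful spelling-out of a point the paper dispatches by noting $\mathcal{P}_1$ and $\mathcal{P}_2$ are taken verbatim from \citet{SICOMP:Filos-RatsikasH2023}, but the route is the same.
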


\begin{proof}
Construct a function $F$ exactly as in \Cref{sec:chore-division-proof} and consider a fixed point $(\mathbf{x}^{*}, \mathbf{c}_1^{*},\ldots,\mathbf{c}_n^{*},\mathbf{y}_1^{*},\ldots,\mathbf{y}_n^{*})$ of $F$. The arguments on why $\mathbf{x^*}$ constitutes an envy-free division are identical to those developed in the previous section. What we need to establish is that $F$ can be computed by an arithmetic circuit that contains instances of the OPT-gate for FIXP developed by \citet{SICOMP:Filos-RatsikasH2023}. In reality, this has practically already been established in \citep{SICOMP:Filos-RatsikasH2023}: linear programs $\mathcal{P}_1$ and $\mathcal{P}_2$ are identical to those used in \citep{SICOMP:Filos-RatsikasH2023}, and linear program $\mathcal{P}_0$ is also amenable to the use of their OPT-gate for FIXP. We note that it is possible to multiply two inputs in arithmetic circuits, and hence the valuations $u_{ij}(\mathbf{x})$ in the objective function of linear program $\mathcal{P}_1$ are not restricted as in \Cref{thm:chore-division}. 
\end{proof}

\noindent We conclude the section with an example showing that if one goes beyond linear valuations, there are cases where all envy-free divisions are irrational, and hence the FIXP-membership result of \Cref{thm:rental-harmony-fixp} (rather than a PPAD-membership result) is justified.

\begin{example}[Example where all envy-free chore divisions are irrational]\label{ex:irrational-rental-harmony}
Consider an instance of the rental harmony/envy-free chore division problem with two agents. A division of the chore is represented by a point $(x,1-x)\in\Delta^1$. The common utility functions of the two agents are given by $u_1(x,1-x)=2(1-x)^2$ and $u_2(x,1-x)=x^2$. Note that if $x=0$ the agents will prefer piece 1, and if $x=1$ the agents will prefer piece 2. Therefore, the agents are indeed misers. For $(x,1-x)$ to be an envy-free chore division, then it must hold that $2(1-x)^2=x^2$. As $x\in [0,1],$ this implies that $x=2-\sqrt{2}$, so the only solution is irrational.
\end{example}

\section{Conclusion and Future Work}\label{sec:conclusion}

In this work, we developed the \linoptgate, a powerful general-purpose tool for showing the PPAD-membership of problems that have \emph{exact rational solutions}. We demonstrated its strength by applying it to a plethora of domains related to game theory, competitive markets, auto-bidding auctions and fair division. For those applications, we obtain new results and generalizations of the state-of-the-art complexity results, as well as \emph{significant} simplifications in terms of the proof techniques. 

There are some interesting open directions related to our work, mainly in the domain of competitive markets. First, it will be very interesting to see whether one could extend our machinery in \Cref{sec:implicit} to also capture markets with SSPLC production sets; we discussed the challenges of this task in \Cref{rem:ssplc-production-challenge}. Similarly, it would be interesting to try to design a class of succinct utility and production functions that subsumes all the known classes for which rational solutions are known to exist, i.e., one that would generalize the Leontief-free class of functions. Finally, one application that we did not study in our work is that of \emph{competitive markets for mixed manna}, where there are goods but also bads to be allocated to the consumers. \citet{chaudhury2022complementary} studied these markets for SPLC utility functions and provided a PPAD-membership result. There does not seem to be any technical obstacle to applying our technique on those markets as well (and also possibly incorporating production functions as well); the details are still to be worked out.

Looking at the big picture, our \linoptgate complements and refines the \fixpoptgate of \citet{SICOMP:Filos-RatsikasH2023} as a tool to proving computational membership of \emph{exact} problems in the appropriate complexity classes. One interesting question is whether one could hope to develop a similar gate for \emph{approximate} problems, i.e., an optimization gate that could be used in a very similar manner to those other two gates to establish PPAD-membership of more general problems (with irrational solutions), for their approximate versions. This certainly introduces new challenges and intriguing questions. One would have to work with approximate rather than exact fixed points. How should the gate be constructed to be useful in this regard? Should the gate work approximately as well? Applications domains like competitive markets, where the approximation in the competitive equilibrium notion comes from relaxing the clearing condition rather than the bundle optimality of the consumers, seem to suggest otherwise.

\subsubsection*{Acknowledgments}
Kristoffer Arnsfelt Hansen and Kasper Høgh were supported by the Independent Research Fund Denmark under grant no.~9040-00433B. Alexandros Hollender was supported by the Swiss State Secretariat for Education, Research and Innovation (SERI) under contract number MB22.00026.

\bibliographystyle{plainnat}
\bibliography{PPAD-OPT}

\begin{thebibliography}{91}
\providecommand{\natexlab}[1]{#1}
\providecommand{\url}[1]{\texttt{#1}}
\expandafter\ifx\csname urlstyle\endcsname\relax
  \providecommand{\doi}[1]{doi: #1}\else
  \providecommand{\doi}{doi: \begingroup \urlstyle{rm}\Url}\fi

\bibitem[Ackermann et~al.(2008)Ackermann, R{\"o}glin, and
  V{\"o}cking]{ackermann2008impact}
Heiner Ackermann, Heiko R{\"o}glin, and Berthold V{\"o}cking.
\newblock On the impact of combinatorial structure on congestion games.
\newblock \emph{Journal of the ACM (JACM)}, 55\penalty0 (6):\penalty0 1--22,
  2008.

\bibitem[Arrow and Debreu(1954)]{arrow1954existence}
Kenneth~J. Arrow and Gerard Debreu.
\newblock Existence of an equilibrium for a competitive economy.
\newblock \emph{Econometrica: Journal of the Econometric Society}, pages
  265--290, 1954.

\bibitem[Aumann(1974)]{aumann1974subjectivity}
Robert~J Aumann.
\newblock Subjectivity and correlation in randomized strategies.
\newblock \emph{Journal of mathematical Economics}, 1\penalty0 (1):\penalty0
  67--96, 1974.

\bibitem[Babaioff et~al.(2009)Babaioff, Kleinberg, and
  Papadimitriou]{babaioff2009congestion}
Moshe Babaioff, Robert Kleinberg, and Christos~H Papadimitriou.
\newblock Congestion games with malicious players.
\newblock \emph{Games and Economic Behavior}, 1\penalty0 (67):\penalty0 22--35,
  2009.

\bibitem[Balseiro et~al.(2021{\natexlab{a}})Balseiro, Deng, Mao, Mirrokni, and
  Zuo]{balseiro2021robust}
Santiago Balseiro, Yuan Deng, Jieming Mao, Vahab Mirrokni, and Song Zuo.
\newblock Robust auction design in the auto-bidding world.
\newblock \emph{Advances in Neural Information Processing Systems},
  34:\penalty0 17777--17788, 2021{\natexlab{a}}.

\bibitem[Balseiro et~al.(2021{\natexlab{b}})Balseiro, Kim, Mahdian, and
  Mirrokni]{balseiro2021budget}
Santiago Balseiro, Anthony Kim, Mohammad Mahdian, and Vahab Mirrokni.
\newblock Budget-management strategies in repeated auctions.
\newblock \emph{Operations research}, 69\penalty0 (3):\penalty0 859--876,
  2021{\natexlab{b}}.

\bibitem[Balseiro and Gur(2019)]{balseiro2019learning}
Santiago~R. Balseiro and Yonatan Gur.
\newblock Learning in repeated auctions with budgets: Regret minimization and
  equilibrium.
\newblock \emph{Management Science}, 65\penalty0 (9):\penalty0 3952--3968,
  2019.

\bibitem[Balseiro et~al.(2021{\natexlab{c}})Balseiro, Deng, Mao, Mirrokni, and
  Zuo]{balseiro2021landscape}
Santiago~R. Balseiro, Yuan Deng, Jieming Mao, Vahab~S. Mirrokni, and Song Zuo.
\newblock The landscape of auto-bidding auctions: Value versus utility
  maximization.
\newblock In \emph{Proceedings of the 22nd ACM Conference on Economics and
  Computation (EC)}, pages 132--133, 2021{\natexlab{c}}.

\bibitem[Bei et~al.(2012)Bei, Chen, Hua, Tao, and Yang]{bei2012optimal}
Xiaohui Bei, Ning Chen, Xia Hua, Biaoshuai Tao, and Endong Yang.
\newblock Optimal proportional cake cutting with connected pieces.
\newblock In \emph{Proceedings of the 26th AAAI Conference on Artificial
  Intelligence (AAAI)}, pages 1263--1269, 2012.

\bibitem[Borgs et~al.(2007)Borgs, Chayes, Immorlica, Jain, Etesami, and
  Mahdian]{borgs2007dynamics}
Christian Borgs, Jennifer Chayes, Nicole Immorlica, Kamal Jain, Omid Etesami,
  and Mohammad Mahdian.
\newblock Dynamics of bid optimization in online advertisement auctions.
\newblock In \emph{Proceedings of the 16th international conference on World
  Wide Web (WWW)}, pages 531--540, 2007.

\bibitem[Boyd and Vandenberghe(2004)]{Boyd2004convex}
Stephen~P. Boyd and Lieven Vandenberghe.
\newblock \emph{Convex Optimization}.
\newblock Cambridge University Press, 2004.

\bibitem[Brams and Taylor(1996)]{brams1996fair}
Steven~J. Brams and Alan~D. Taylor.
\newblock \emph{Fair Division: From cake-cutting to dispute resolution}.
\newblock Cambridge University Press, 1996.

\bibitem[Brouwer(1911)]{MA:Brouwer1911}
L.~E.~J. Brouwer.
\newblock {\"U}ber {A}bbildung von {M}annigfaltigkeiten.
\newblock \emph{Mathematische Annalen}, 71:\penalty0 97--115, 1911.

\bibitem[Caragiannis et~al.(2023)Caragiannis, Hansen, and
  Rathi]{caragiannis2023complexity}
Ioannis Caragiannis, Kristoffer~Arnsfelt Hansen, and Nidhi Rathi.
\newblock On the complexity of pareto-optimal and envy-free lotteries.
\newblock \emph{arXiv preprint arXiv:2307.12605}, 2023.

\bibitem[Chaudhury et~al.(2022)Chaudhury, Garg, McGlaughlin, and
  Mehta]{chaudhury2022complementary}
Bhaskar~Ray Chaudhury, Jugal Garg, Peter McGlaughlin, and Ruta Mehta.
\newblock A complementary pivot algorithm for competitive allocation of a mixed
  manna.
\newblock \emph{Mathematics of Operations Research}, 48\penalty0 (3):\penalty0
  1630--1656, 2022.

\bibitem[Chen et~al.(2022)Chen, Chen, Peng, and
  Yannakakis]{chen2022computational}
Thomas Chen, Xi~Chen, Binghui Peng, and Mihalis Yannakakis.
\newblock Computational hardness of the {Hylland}-{Zeckhauser} scheme.
\newblock In \emph{Proceedings of the 33rd ACM-SIAM Symposium on Discrete
  Algorithms (SODA)}, pages 2253--2268, 2022.

\bibitem[Chen et~al.(2009)Chen, Deng, and Teng]{chen2009settling}
Xi~Chen, Xiaotie Deng, and Shang-Hua Teng.
\newblock Settling the complexity of computing two-player {Nash} equilibria.
\newblock \emph{Journal of the ACM (JACM)}, 56\penalty0 (3):\penalty0 1--57,
  2009.

\bibitem[Chen et~al.(2017)Chen, Paparas, and
  Yannakakis]{ChenPY17-non-monotone-markets}
Xi~Chen, Dimitris Paparas, and Mihalis Yannakakis.
\newblock The complexity of non-monotone markets.
\newblock \emph{Journal of the ACM}, 64\penalty0 (3):\penalty0 20:1--20:56,
  2017.

\bibitem[Chen et~al.(2021{\natexlab{a}})Chen, Kroer, and
  Kumar]{chen2021complexity}
Xi~Chen, Christian Kroer, and Rachitesh Kumar.
\newblock The complexity of pacing for second-price auctions.
\newblock In \emph{Proceedings of the 22nd ACM Conference on Economics and
  Computation (EC)}, page 318, 2021{\natexlab{a}}.

\bibitem[Chen et~al.(2021{\natexlab{b}})Chen, Kroer, and
  Kumar]{chen2021throttling}
Xi~Chen, Christian Kroer, and Rachitesh Kumar.
\newblock Throttling equilibria in auction markets.
\newblock In \emph{Proceedings of the 17th International Conference on Web and
  Internet Economics (WINE)}, page 551, 2021{\natexlab{b}}.

\bibitem[Cole et~al.(2017)Cole, Devanur, Gkatzelis, Jain, Mai, Vazirani, and
  Yazdanbod]{cole2017convex}
Richard Cole, Nikhil Devanur, Vasilis Gkatzelis, Kamal Jain, Tung Mai, Vijay~V.
  Vazirani, and Sadra Yazdanbod.
\newblock Convex program duality, {Fisher} markets, and {Nash} social welfare.
\newblock In \emph{Proceedings of the 18th ACM Conference on Economics and
  Computation (EC)}, pages 459--460, 2017.

\bibitem[Conitzer et~al.(2022{\natexlab{a}})Conitzer, Kroer, Panigrahi,
  Schrijvers, Stier-Moses, Sodomka, and Wilkens]{conitzer2022pacing}
Vincent Conitzer, Christian Kroer, Debmalya Panigrahi, Okke Schrijvers,
  Nicolas~E. Stier-Moses, Eric Sodomka, and Christopher~A. Wilkens.
\newblock Pacing equilibrium in first price auction markets.
\newblock \emph{Management Science}, 68\penalty0 (12):\penalty0 8515--8535,
  2022{\natexlab{a}}.

\bibitem[Conitzer et~al.(2022{\natexlab{b}})Conitzer, Kroer, Sodomka, and
  Stier-Moses]{conitzer2022multiplicative}
Vincent Conitzer, Christian Kroer, Eric Sodomka, and Nicolas~E. Stier-Moses.
\newblock Multiplicative pacing equilibria in auction markets.
\newblock \emph{Operations Research}, 70\penalty0 (2):\penalty0 963--989,
  2022{\natexlab{b}}.

\bibitem[Cottle and Dantzig(1968)]{cottle1968complementarity}
Richard~W. Cottle and George~B. Dantzig.
\newblock Complementary pivot theory of mathematical programming.
\newblock \emph{Linear Algebra and its Applications}, 1\penalty0 (1):\penalty0
  103--125, 1968.

\bibitem[Cottle et~al.(2009)Cottle, Pang, and Stone]{cottle2009linear}
Richard~W. Cottle, Jong-Shi Pang, and Richard~E. Stone.
\newblock \emph{The linear complementarity problem}.
\newblock SIAM, 2009.

\bibitem[Dasgupta and Maskin(2015)]{dasgupta2015debreu}
Partha~Sarathi Dasgupta and Eric~S. Maskin.
\newblock Debreu’s social equilibrium existence theorem.
\newblock \emph{Proceedings of the National Academy of Sciences}, 112\penalty0
  (52):\penalty0 15769--15770, 2015.

\bibitem[Daskalakis et~al.(2006)Daskalakis, Fabrikant, and
  Papadimitriou]{daskalakis2006game}
Constantinos Daskalakis, Alex Fabrikant, and Christos~H. Papadimitriou.
\newblock The game world is flat: The complexity of {N}ash equilibria in
  succinct games.
\newblock In \emph{Proceedings of the 33rd International Colloquium on
  Automata, Languages, and Programming (ICALP)}, pages 513--524, 2006.

\bibitem[Daskalakis et~al.(2009)Daskalakis, Goldberg, and
  Papadimitriou]{SICOMP:DaskalakisGP2009}
Constantinos Daskalakis, Paul~W. Goldberg, and Christos~H. Papadimitriou.
\newblock The complexity of computing a {Nash} equilibrium.
\newblock \emph{SIAM Journal on Computing}, 39\penalty0 (1):\penalty0 195--259,
  2009.

\bibitem[Debreu(1952)]{debreu1952social}
Gerard Debreu.
\newblock A social equilibrium existence theorem.
\newblock \emph{Proceedings of the National Academy of Sciences}, 38\penalty0
  (10):\penalty0 886--893, 1952.

\bibitem[Deligkas et~al.(2021)Deligkas, Fearnley, Melissourgos, and
  Spirakis]{deligkas2021BU}
Argyrios Deligkas, John Fearnley, Themistoklis Melissourgos, and Paul~G.
  Spirakis.
\newblock Computing exact solutions of consensus halving and the
  {Borsuk}-{Ulam} theorem.
\newblock \emph{Journal of Computer and System Sciences}, 117:\penalty0 75--98,
  2021.

\bibitem[Deng et~al.(2012)Deng, Qi, and Saberi]{OR:DengQS2012}
Xiaotie Deng, Qi~Qi, and Amin Saberi.
\newblock Algorithmic solutions for envy-free cake cutting.
\newblock \emph{Operations Research}, 60\penalty0 (6):\penalty0 1461--1476,
  2012.

\bibitem[Eaves(1976)]{eaves1976finite}
B~Curtis Eaves.
\newblock A finite algorithm for the linear exchange model.
\newblock \emph{Journal of Mathematical Economics}, 3\penalty0 (2):\penalty0
  197--203, 1976.

\bibitem[Etessami and Yannakakis(2010)]{etessami2010complexity}
Kousha Etessami and Mihalis Yannakakis.
\newblock On the complexity of {Nash} equilibria and other fixed points.
\newblock \emph{SIAM Journal on Computing}, 39\penalty0 (6):\penalty0
  2531--2597, 2010.

\bibitem[Etessami et~al.(2014)Etessami, Hansen, Miltersen, and
  Sørensen]{etessami2014complexity}
Kousha Etessami, Kristoffer~Arnsfelt Hansen, Peter~Bro Miltersen, and
  Troels~Bjerre Sørensen.
\newblock The complexity of approximating a trembling hand perfect equilibrium
  of a multi-player game in strategic form.
\newblock In \emph{Proceedings of the 7th International Symposium on
  Algorithmic Game Theory (SAGT)}, pages 231--243, 2014.

\bibitem[Fan(1952)]{fan1952fixed}
Ky~Fan.
\newblock Fixed-point and minimax theorems in locally convex topological linear
  spaces.
\newblock \emph{Proceedings of the National Academy of Sciences}, 38\penalty0
  (2):\penalty0 121--126, 1952.

\bibitem[Fearnley et~al.(2022)Fearnley, Goldberg, Hollender, and
  Savani]{FearnleyGHS22-gradient}
John Fearnley, Paul Goldberg, Alexandros Hollender, and Rahul Savani.
\newblock The complexity of gradient descent: {CLS} = {PPAD} $\cap$ {PLS}.
\newblock \emph{Journal of the ACM}, 70\penalty0 (1):\penalty0 1--74, 2022.

\bibitem[Fearnley et~al.(2023)Fearnley, Goldberg, Hollender, and
  Savani]{FearnleyGHS23-quadratic}
John Fearnley, Paul~W. Goldberg, Alexandros Hollender, and Rahul Savani.
\newblock The complexity of computing {KKT} solutions of quadratic programs.
\newblock \emph{arXiv preprint arXiv:2311.13738}, 2023.

\bibitem[Filos-Ratsikas et~al.(2023)Filos-Ratsikas, Hansen, H{\o}gh, and
  Hollender]{SICOMP:Filos-RatsikasH2023}
Aris Filos-Ratsikas, Kristoffer~A. Hansen, Kasper H{\o}gh, and Alexandros
  Hollender.
\newblock {FIXP}-membership via convex optimization: Games, cakes, and markets.
\newblock \emph{SIAM Journal on Computing}, pages FOCS21--30--FOCS21--84, 2023.

\bibitem[Gamow and Stern(1958)]{gamow1958puzzle}
George Gamow and Marvin Stern.
\newblock \emph{Puzzle-math}.
\newblock Viking Press, 1958.

\bibitem[Gardner(1978)]{gardner1978aha}
Martin Gardner.
\newblock \emph{aha! Insight}.
\newblock Scientific American, W.~H.~Freeman and Company, 1978.

\bibitem[Garg(2017)]{garg2017market}
Jugal Garg.
\newblock Market equilibrium under piecewise {Leontief} concave utilities.
\newblock \emph{Theoretical Computer Science}, 703:\penalty0 55--65, 2017.

\bibitem[Garg and Vazirani(2014)]{SODA:GargV14}
Jugal Garg and Vijay~V. Vazirani.
\newblock On computability of equilibria in markets with production.
\newblock In \emph{Proceedings of the 25th ACM-SIAM Symposium on Discrete
  Algorithms (SODA)}, pages 1329--1340, 2014.

\bibitem[Garg et~al.(2011)Garg, Jiang, and Mehta]{garg2011bilinear}
Jugal Garg, Albert~Xin Jiang, and Ruta Mehta.
\newblock Bilinear games: Polynomial time algorithms for rank based subclasses.
\newblock In \emph{Proceedings of the 7th International Workshop on Internet
  and Network Economics (WINE)}, pages 399--407, 2011.

\bibitem[Garg et~al.(2015)Garg, Mehta, Sohoni, and
  Vazirani]{garg2015complementary}
Jugal Garg, Ruta Mehta, Milind Sohoni, and Vijay~V. Vazirani.
\newblock A complementary pivot algorithm for market equilibrium under
  separable, piecewise-linear concave utilities.
\newblock \emph{SIAM Journal on Computing}, 44\penalty0 (6):\penalty0
  1820--1847, 2015.

\bibitem[Garg et~al.(2018)Garg, Mehta, and Vazirani]{garg2018substitution}
Jugal Garg, Ruta Mehta, and Vijay~V. Vazirani.
\newblock Substitution with satiation: A new class of utility functions and a
  complementary pivot algorithm.
\newblock \emph{Mathematics of Operations Research}, 43\penalty0 (3):\penalty0
  996--1024, 2018.

\bibitem[Glicksberg(1952)]{glicksberg1952further}
Irving~L. Glicksberg.
\newblock A further generalization of the {Kakutani} fixed theorem, with
  application to {Nash} equilibrium points.
\newblock \emph{Proceedings of the American Mathematical Society}, 3\penalty0
  (1):\penalty0 170--174, 1952.

\bibitem[Goemans(2015)]{goemans2015smallest}
Michel~X. Goemans.
\newblock Smallest compact formulation for the permutahedron.
\newblock \emph{Mathematical Programming}, 153\penalty0 (1):\penalty0 5--11,
  2015.

\bibitem[Goldberg et~al.(2020)Goldberg, Hollender, and
  Suksompong]{goldberg2020contiguous}
Paul Goldberg, Alexandros Hollender, and Warut Suksompong.
\newblock Contiguous cake cutting: Hardness results and approximation
  algorithms.
\newblock \emph{Journal of Artificial Intelligence Research}, 69:\penalty0
  109--141, 2020.

\bibitem[Hall(1935)]{hall1935representatives}
P.~Hall.
\newblock On representatives of subsets.
\newblock \emph{Journal of the London Mathematical Society}, s1-10\penalty0
  (1):\penalty0 26--30, 1935.

\bibitem[Hansen and Lund(2018)]{hansen2018computational}
Kristoffer~Arnsfelt Hansen and Troels~Bjerre Lund.
\newblock Computational complexity of proper equilibrium.
\newblock In \emph{Proceedings of the 19th ACM Conference on Economics and
  Computation (EC)}, pages 113--130, 2018.

\bibitem[Howson(1972)]{howson1972equilibria}
Joseph~T. Howson.
\newblock Equilibria of polymatrix games.
\newblock \emph{Management Science}, 18\penalty0 (5-part-1):\penalty0 312--318,
  1972.

\bibitem[Jain(2007)]{Jain2007polynomial}
Kamal Jain.
\newblock A polynomial time algorithm for computing an {A}rrow-{D}ebreu market
  equilibrium for linear utilities.
\newblock \emph{SIAM Journal on Computing}, 37\penalty0 (1):\penalty0 303--318,
  2007.

\bibitem[Janovskaja(1968)]{janovskaja1968equilibrium}
Elena Janovskaja.
\newblock Equilibrium points in polymatrix games.
\newblock \emph{Lithuanian Mathematical Journal}, 8\penalty0 (2):\penalty0
  381--384, 1968.

\bibitem[Johnson et~al.(1988)Johnson, Papadimitriou, and
  Yannakakis]{johnson1988easy}
David~S. Johnson, Christos~H. Papadimitriou, and Mihalis Yannakakis.
\newblock How easy is local search?
\newblock \emph{Journal of computer and system sciences}, 37\penalty0
  (1):\penalty0 79--100, 1988.

\bibitem[Kakutani(1941)]{kakutani1941generalization}
Shizuo Kakutani.
\newblock A generalization of {Brouwer}’s fixed point theorem.
\newblock \emph{Duke Mathematical Journal}, 8\penalty0 (3):\penalty0 457--459,
  1941.

\bibitem[Karakostas and Viglas(2007)]{karakostas2007equilibria}
George Karakostas and Anastasios Viglas.
\newblock Equilibria for networks with malicious users.
\newblock \emph{Mathematical Programming}, 110\penalty0 (3):\penalty0 591--613,
  2007.

\bibitem[Kintali et~al.(2013)Kintali, Poplawski, Rajaraman, Sundaram, and
  Teng]{SICOMP:KintaliPRST13}
Shiva Kintali, Laura~J. Poplawski, Rajmohan Rajaraman, Ravi Sundaram, and
  Shang{-}Hua Teng.
\newblock Reducibility among fractional stability problems.
\newblock \emph{SIAM Journal on Computing}, 42\penalty0 (6):\penalty0
  2063--2113, 2013.

\bibitem[Klimm and Stahlberg(2023)]{klimm2023complexity}
Max Klimm and Maximilian~J. Stahlberg.
\newblock Complexity of equilibria in binary public goods games on undirected
  graphs.
\newblock In \emph{Proceedings of the 24th ACM Conference on Economics and
  Computation (EC)}, pages 938--955, 2023.

\bibitem[Klimm and Warode(2020)]{klimm2020complexity}
Max Klimm and Philipp Warode.
\newblock Complexity and parametric computation of equilibria in atomic
  splittable congestion games via weighted block {Laplacians}.
\newblock In \emph{Proceedings of the 14th ACM-SIAM Symposium on Discrete
  Algorithms (SODA)}, pages 2728--2747, 2020.

\bibitem[Knaster et~al.(1929)Knaster, Kuratowski, and
  Mazurkiewicz]{FM:KnasterKM1929}
Bronisław Knaster, Casimir Kuratowski, and Stefan Mazurkiewicz.
\newblock Ein {Beweis} des {Fixpunktsatzes} für $n$-dimensionale {Simplexe}.
\newblock \emph{Fundamenta Mathematicae}, 14:\penalty0 132--137, 1929.

\bibitem[Koller et~al.(1996)Koller, Megiddo, and
  Von~Stengel]{koller1996efficient}
Daphne Koller, Nimrod Megiddo, and Bernhard Von~Stengel.
\newblock Efficient computation of equilibria for extensive two-person games.
\newblock \emph{Games and Economic Behavior}, 14\penalty0 (2):\penalty0
  247--259, 1996.

\bibitem[Lemke(1965)]{lemke1965bimatrix}
Carlton~E. Lemke.
\newblock Bimatrix equilibrium points and mathematical programming.
\newblock \emph{Management science}, 11\penalty0 (7):\penalty0 681--689, 1965.

\bibitem[Lemke and Howson(1964)]{lemke1964equilibrium}
Carlton~E. Lemke and Joseph~T. Howson, Jr.
\newblock Equilibrium points of bimatrix games.
\newblock \emph{Journal of the Society for industrial and Applied Mathematics},
  12\penalty0 (2):\penalty0 413--423, 1964.

\bibitem[Li and Tang(2023)]{li2022auto}
Juncheng Li and Pingzhong Tang.
\newblock Auto-bidding equilibrium in {ROI}-constrained online advertising
  markets.
\newblock \emph{arXiv preprint arXiv:2210.06107}, 2023.

\bibitem[Maxfield(1997)]{maxfield1997general}
Robert~R. Maxfield.
\newblock General equilibrium and the theory of directed graphs.
\newblock \emph{Journal of Mathematical Economics}, 27\penalty0 (1):\penalty0
  23--51, 1997.

\bibitem[Megiddo and Papadimitriou(1991)]{megiddo1991total}
Nimrod Megiddo and Christos~H. Papadimitriou.
\newblock On total functions, existence theorems and computational complexity.
\newblock \emph{Theoretical Computer Science}, 81\penalty0 (2):\penalty0
  317--324, 1991.

\bibitem[Meunier and Pradeau(2019)]{meunier2013lemke}
Fr{\'e}d{\'e}ric Meunier and Thomas Pradeau.
\newblock Computing solutions of the multiclass network equilibrium problem
  with affine cost functions.
\newblock \emph{Annals of Operations Research}, 274:\penalty0 447--469, 2019.

\bibitem[Milchtaich(2000)]{milchtaich2000generic}
Igal Milchtaich.
\newblock Generic uniqueness of equilibrium in large crowding games.
\newblock \emph{Mathematics of Operations Research}, 25\penalty0 (3):\penalty0
  349--364, 2000.

\bibitem[Myerson(1978)]{IJGT:Myerson78}
Roger~B. Myerson.
\newblock Refinements of the {Nash} equilibrium concept.
\newblock \emph{International Journal of Game Theory}, 7\penalty0 (2):\penalty0
  73--80, 1978.

\bibitem[Nash(1950)]{Nash50}
John~F. Nash.
\newblock Equilibrium points in $n$-person games.
\newblock \emph{Proceedings of the National Academy of Sciences}, 36\penalty0
  (1):\penalty0 48--49, 1950.

\bibitem[Papadimitriou and Peng(2021)]{papadimitriou2021public}
Christos Papadimitriou and Binghui Peng.
\newblock Public goods games in directed networks.
\newblock In \emph{Proceedings of the 22nd ACM Conference on Economics and
  Computation (EC)}, pages 745--762, 2021.

\bibitem[Papadimitriou et~al.(2023)Papadimitriou, Vlatakis-Gkaragkounis, and
  Zampetakis]{PapadimitriouVZ23-kakutani}
Christos Papadimitriou, Emmanouil-Vasileios Vlatakis-Gkaragkounis, and Manolis
  Zampetakis.
\newblock The computational complexity of multi-player concave games and
  {Kakutani} fixed points.
\newblock In \emph{Proceedings of the 24th ACM Conference on Economics and
  Computation (EC)}, page 1045, 2023.

\bibitem[Papadimitriou(1994)]{JCSS:Papadimitriou1994}
Christos~H. Papadimitriou.
\newblock On the complexity of the parity argument and other inefficient proofs
  of existence.
\newblock \emph{Journal of Computer and System Sciences}, 48\penalty0
  (3):\penalty0 498--532, 1994.

\bibitem[Papadimitriou and Roughgarden(2008)]{papadimitriou2008computing}
Christos~H. Papadimitriou and Tim Roughgarden.
\newblock Computing correlated equilibria in multi-player games.
\newblock \emph{Journal of the ACM}, 55\penalty0 (3):\penalty0 1--29, 2008.

\bibitem[Procaccia(2013)]{procaccia2013cake}
Ariel~D. Procaccia.
\newblock Cake cutting: Not just child's play.
\newblock \emph{Communications of the ACM}, 56\penalty0 (7):\penalty0 78--87,
  2013.

\bibitem[Robertson and Webb(1998)]{robertson1998cake}
Jack Robertson and William Webb.
\newblock \emph{Cake-cutting algorithms: Be fair if you can}.
\newblock CRC Press, 1998.

\bibitem[Rosen(1965)]{rosen1965concave}
J.~B. Rosen.
\newblock Existence and uniqueness of equilibrium points for concave $n$-person
  games.
\newblock \emph{Econometrica}, 33\penalty0 (3):\penalty0 520--534, 1965.

\bibitem[Rosenthal(1973)]{Rosenthal1973}
Robert~W. Rosenthal.
\newblock A class of games possessing pure-strategy {Nash} equilibria.
\newblock \emph{International Journal of Game Theory}, 2\penalty0 (1):\penalty0
  65--67, 1973.

\bibitem[Savani(2006)]{savani2006finding}
Rahul Savani.
\newblock \emph{Finding Nash equilibria of bimatrix games}.
\newblock PhD thesis, London School of Economics and Political Science, 2006.

\bibitem[Schmeidler(1973)]{schmeidler1973equilibrium}
David Schmeidler.
\newblock Equilibrium points of nonatomic games.
\newblock \emph{Journal of statistical Physics}, 7:\penalty0 295--300, 1973.

\bibitem[Selten(1975)]{selten1975reexamination}
Reinhard Selten.
\newblock Reexamination of the perfectness concept for equilibrium points in
  extensive games.
\newblock \emph{International Journal of Game Theory}, 4:\penalty0 25--55,
  1975.

\bibitem[Slater(1950)]{Slater50}
Morton Slater.
\newblock Lagrange multipliers revisited.
\newblock Cowles Commission Discussion Paper: Mathematics 403, Cowles
  Foundation for Research in Economics, Yale University, 1950.
\newblock URL
  \url{https://elischolar.library.yale.edu/cowles-discussion-paper-series/304}.

\bibitem[S{\o}rensen(2012)]{EC:Sorensen12}
Troels~Bjerre S{\o}rensen.
\newblock Computing a proper equilibrium of a bimatrix game.
\newblock In \emph{Proceedings of the 13th ACM Conference on Electronic
  Commerce (EC)}, pages 916--928, 2012.

\bibitem[Sperner(1928)]{sperner1928neuer}
Emanuel Sperner.
\newblock Neuer {Beweis} f{\"u}r die {Invarianz} der {Dimensionszahl} und des
  {Gebietes}.
\newblock \emph{Abhandlungen aus dem Mathematischen Seminar der Universit{\"a}t
  Hamburg}, 6:\penalty0 265--272, 1928.

\bibitem[Steinhaus(1949)]{steinhaus1949division}
Hugo Steinhaus.
\newblock Sur la division pragmatique.
\newblock \emph{Econometrica}, pages 315--319, 1949.

\bibitem[Stromquist(1980)]{AMM:Stromquist1980}
Walter Stromquist.
\newblock How to cut a cake fairly.
\newblock \emph{The American Mathematical Monthly}, 87\penalty0 (8):\penalty0
  640--644, 1980.

\bibitem[Su(1999)]{AMM:Su1999}
Francis~Edward Su.
\newblock Rental harmony: {Sperner}'s lemma in fair division.
\newblock \emph{The American Mathematical Monthly}, 106\penalty0 (10):\penalty0
  930--942, 1999.

\bibitem[Todd(1976)]{todd1976orientation}
Michael~J. Todd.
\newblock Orientation in complementary pivot algorithms.
\newblock \emph{Mathematics of Operations Research}, 1\penalty0 (1):\penalty0
  54--66, 1976.

\bibitem[Vazirani and Yannakakis(2011)]{vazirani2011market}
Vijay~V. Vazirani and Mihalis Yannakakis.
\newblock Market equilibrium under separable, piecewise-linear, concave
  utilities.
\newblock \emph{Journal of the ACM}, 58\penalty0 (3):\penalty0 1--25, 2011.

\bibitem[Wardrop(1952)]{Wardrop1952}
J.~G. Wardrop.
\newblock Some theoretical aspects of road traffic research.
\newblock \emph{Proceedings of the Institution of Civil Engineers}, 1\penalty0
  (3):\penalty0 325--362, 1952.

\bibitem[Woodall(1980)]{woodall1980dividing}
Douglas~R. Woodall.
\newblock Dividing a cake fairly.
\newblock \emph{Journal of Mathematical Analysis and Applications}, 78\penalty0
  (1):\penalty0 233--247, 1980.

\end{thebibliography}

\end{document}